\documentclass[11pt]{article}
\usepackage[left=1in, right=1in, top=1in, bottom=1in, margin=1in]{geometry}
\usepackage{zi4}
\usepackage[utf8]{inputenc} 
\usepackage{authblk}

\usepackage{amsmath,amssymb,xspace,graphicx,relsize,bm,breqn}
\usepackage{amsthm}
\usepackage{braket}

\usepackage{microtype} 
\usepackage{multirow}
\usepackage{multicol}
\usepackage{tikz}
\usepackage{adjustbox}

\usepackage{dsfont}
\usepackage{array}
\usepackage{makecell}
\usepackage{nicematrix}

\usepackage{tcolorbox}
\usepackage{ninecolors}\NineColors{saturation=high}
\usepackage{xcolor}

\usepackage{float}
\usepackage[linesnumbered,ruled,vlined]{algorithm2e}
\SetKwInput{KwInput}{Input}
\SetKwInput{KwOutput}{Output}
\SetKwInOut{Promise}{Promise}
\SetKwInput{Goal}{Goal}
\SetKwProg{Fn}{function}{}{}
\SetKwFor{RepTimes}{repeat}{times}{}
\SetKwFunction{Ver}{Verify}
\SetKwFunction{Prep}{PrepareState}
\SetKwComment{Comment}{/* }{ */}
\usepackage{algorithmic}

\newtcolorbox{myalgorithm}[1][]{
    colback=gray!10, 
    colframe=black, 
    arc=5pt, 
    boxrule=0.5pt, 
    left=0pt, right=0pt, top=0pt, bottom=0pt 
}
\usepackage[margin=1in]{geometry}

\newcommand{\poly}{\ensuremath{\mathsf{poly}}}

\newcommand{\id}{\ensuremath{\mathbb{I}}}

\usepackage{upgreek}
\usepackage{enumerate}

\usepackage[usestackEOL]{stackengine}
\usepackage{thm-restate,mathrsfs}
\usepackage{enumerate}
\usepackage{array}
\usepackage{parskip}
\def\01{\{0,1\}}

\usepackage[margin=1in]{geometry}

\definecolor{citegreen}{HTML}{208054}
\definecolor{citeblue}{HTML}{0055cc}
\definecolor{CeruleanBlue}{RGB}{42,82,190}
\definecolor{SlateBlue}{RGB}{106,90,205}
\definecolor{EmeraldGreen}{RGB}{80,200,120}

\usepackage[hypertexnames=false]{hyperref}
\NineColors{saturation=high}
\hypersetup{
    breaklinks=true,   
    colorlinks=true, 
    urlcolor=blue3, 
    linkcolor=red3, 
    citecolor=CeruleanBlue, 
}

\newcommand{\calQ}{{\cal Q }}

\newcommand{\be}{\begin{equation}}
\newcommand{\ee}{\end{equation}}
\newcommand{\ba}{\begin{array}}
\newcommand{\ea}{\end{array}}
\newcommand{\bea}{\begin{eqnarray}}
\newcommand{\eea}{\end{eqnarray}}

\usepackage{mathtools}
\DeclarePairedDelimiter\ceil{\lceil}{\rceil}

\DeclareMathOperator{\sgn}{sgn}
\DeclareMathOperator{\polylog}{\mathsf{polylog}}

\newcommand{\ra}{\rangle}
\newcommand{\la}{\langle}

\newcommand{\spec}{\mathrm{spec}}
\newcommand{\expnorm}{\textsf{C}}
\newcommand{\junk}{\mathrm{junk}}

\newcommand{\norm}[1]{\left\lVert#1\right\rVert}

\newcommand{\calA}{{\cal A }}

\newcommand{\calL}{{\cal L }}
\newcommand{\calN}{{\cal N }}

\newcommand{\calG}{{\cal G }}
\newcommand{\calV}{{\cal V }}
\newcommand{\calE}{{\cal E }}
\newcommand{\calC}{{\cal C }}
\newcommand{\calS}{{\cal S }}

\newcommand{\calM}{{\cal M }}
\newcommand{\calZ}{{\cal Z }}

\newcommand{\rlcA}{\mathbf{A}}
\newcommand{\rlcC}{\mathbf{C}}
\newcommand{\rlcR}{\mathbf{R}}
\newcommand{\rlcG}{\mathbf{G}}
\newcommand{\rlcL}{\mathbf{L}}
\newcommand{\rlcM}{\mathbf{M}}
\newcommand{\rlcK}{\mathbf{K}}

\newcommand{\rlcP}{\mathbf{P}}
\newcommand{\rlcQ}{\mathbf{Q}}
\newcommand{\rlcI}{\mathbf{I}}
\newcommand{\rlcB}{\mathbf{B}}

\newcommand{\rlcres}{\mathrm{r}}
\newcommand{\rlccap}{\mathrm{c}}
\newcommand{\rlcind}{\mathrm{\ell}}
\newcommand{\rlcvs}{\mathrm{v}}
\newcommand{\rlcjs}{\mathrm{s}}

\newcommand{\rlc}{\textsf{RLC}}
\newcommand{\kcl}{\textsf{KCL}}
\newcommand{\kvl}{\textsf{KVL}}
\newcommand{\dae}{\textsf{DAE}}
\newcommand{\ode}{\textsf{ODE}}
\newcommand{\dc}{\textsf{DC}}

\newcommand{\mna}{\textsf{MNA}}
\newcommand{\bqp}{\textsf{BQP}}
\newcommand{\SWAP}{\textsf{SWAP}}

\newcommand{\supp}{\mathrm{supp}}

\newcommand{\psihist}{\psi_{\mathrm{hist}}}
\newcommand{\normhist}{\mathcal{Z}}

\newtheorem{theorem}{Theorem}[section]

\newtheorem{definition}[theorem]{Definition}

\newtheorem{lemma}[theorem]{Lemma}
\newtheorem{remark}{Remark}
\newtheorem{corollary}[theorem]{Corollary}

\newtheorem{fact}[theorem]{Fact}
\newtheorem{claim}[theorem]{Claim}

\newtheorem{problem}[theorem]{{Problem}}
\newtheorem{result}[theorem]{{Result}}

\usepackage{thm-restate} 

\definecolor{olivegreen}{rgb}{0.42, 0.56, 0.14}
\definecolor{darkolivegreen}{rgb}{0.25, 0.4, 0.15}

\makeatletter
\def\widebreve{\mathpalette\wide@breve}
\def\wide@breve#1#2{\sbox\z@{$#1#2$}%
     \mathop{\vbox{\m@th\ialign{##\crcr
\kern0.08em\brevefill#1{0.8\wd\z@}\crcr\noalign{\nointerlineskip}%
                    $\hss#1#2\hss$\crcr}}}\nolimits}
\def\brevefill#1#2{$\m@th\sbox\tw@{$#1($}%
  \hss\resizebox{#2}{\wd\tw@}{\rotatebox[origin=c]{90}{\upshape(}}\hss$}
\makeatletter

\makeatletter

\title{Simulating dynamics of RLC circuits with a \\ quantum differential-algebraic equations solver}

\author{Arkopal Dutt}
\author{Anirban Chowdhury}
\author{Kristan Temme}
\author{Hari Krovi}
\affil[]{IBM Research}

\begin{document}
\NiceMatrixOptions{code-for-first-col=\scriptstyle,code-for-first-row=\scriptstyle}

\date{}

\maketitle

\begin{abstract}

We introduce a quantum algorithm for simulating the dynamics of electrical circuits consisting of resistors, inductors and capacitors (aka \textsf{RLC} circuits) along with power sources. Given oracle access to the connectivity of the circuit and values of the electrical elements, our algorithm prepares a quantum state that encodes voltages and current values either at a specified time or the history of their evolution over a time-interval. We show this state can be used to estimate physically relevant quantities, such as the power loss, or the stored electrical energy. For an \textsf{RLC} circuit with $N$ components, our algorithm runs in time $\polylog(N)$ under mild assumptions on the connectivity of the circuit and values of its components. This provides an exponential speed-up over classical algorithms that take $\poly(N)$ time in the worst-case. Our algorithm can be used to estimate energy across a set of components or dissipated power in $\polylog(N)$ time, a problem that we prove is $\bqp$-hard and therefore unlikely to be efficiently solved by classical algorithms.

The main challenge in simulating the dynamics of \textsf{RLC} circuits is that they are governed by differential-algebraic equations (\textsf{DAE}s), a coupled system of differential equations with hidden algebraic constraints. Consequentially, existing quantum algorithms for ordinary differential equations (\textsf{ODE}s) cannot be directly utilized. We therefore develop a quantum $\dae$ solver for simulating the time-evolution of \textsf{DAE}s. For \textsf{RLC} circuits, we employ modified nodal analysis (\textsf{MNA}) to create a system of \textsf{DAE}s compatible with our quantum algorithm. We establish $\bqp$-hardness by demonstrating that any network of classical harmonic oscillators, for which an energy-estimation problem is known to be $\bqp$-hard, is a special case of an LC circuit. Our work gives theoretical evidence of quantum advantage in simulating \textsf{RLC} circuits and we expect that our quantum \textsf{DAE} solver will find broader use in the simulation of dynamical systems. 

\end{abstract}

\newpage 

\setcounter{tocdepth}{2}
\tableofcontents

\newpage
\section{Introduction}

Simulating the time evolution of large-scale dynamical systems is one of the most dominant computational tasks across science and engineering e.g., in weather modeling~\cite{durran1998numerical,lynch2008origins}, chemical processes~\cite{luyben1989process,omar2017crystal,bui2022continuum}, and robotics~\cite{siciliano2009robotics,xu2012space,huang2020dynamic}. Classical simulation methods are often limited by the dimensionality of the system and stiffness of the underlying dynamics~\cite{wanner1996solving,ferziger2002computational}. This presents an opportunity for investigating whether quantum algorithms could provide provable speedups over classical algorithms, and therefore identifying new applications of quantum computation. Efforts towards this end include the development of quantum algorithms for fluid mechanics \cite{li2025fluid,jennings2025end}, systems of harmonic oscillators \cite{PhysRevX.13.041041, krovi2024quantum}, and an extensive body of work on quantum algorithms for differential equations \cite{Ber14, BCOW17, Krovi2023improvedquantum, Fang2023timemarchingbased, Berry2024quantumalgorithm,LCHS,PhysRevX.13.041041,bravyi2025quantumsimulationnoisyclassical}.

Electrical circuits are examples of dynamical systems that have wide-ranging applications, from power-distribution networks~\cite{chen2001efficient,yang2012powerrush} and analog filtering to integrated circuit design~\cite{nagel1973spice}. The accurate prediction of transient behavior of electrical circuits is essential for analyzing resonance, signal integrity, and energy dissipation prior to hardware fabrication. In this work, we focus on simulating the dynamics of $\rlc$ circuits, which consist of passive resistors, inductors, and capacitors along with $\dc$ voltage and current sources. From a modeling perspective, $\rlc$ circuits form a canonical and expressive subclass of general electrical circuits. Under standard approximations and in certain regimes, complex circuit components\,---\,including op-amps, transformers, and interconnects\,---\,can be modelled as $\rlc$ circuits. Simulation of $\rlc$ circuits thus captures many computational aspects of general circuit simulation

and motivates the central question of this work:
\begin{quote}
\centering
\emph{Can we design a quantum algorithm to simulate RLC circuits with a provable speedup over current classical algorithms?}
\end{quote}

The immediate challenge in tackling this problem is that the dynamics of \textsf{RLC} circuits are governed by linear \emph{differential-algebraic} (\dae) equations i.e., a coupled system of equations involving both differential equations describing the temporal evolution of the state and hidden algebraic constraints that restrict the solution to a submanifold of the state space~\cite{riaza2008differential}. The algebraic constraints follow from the fact that currents and voltages in an electrical network must obey Kirchhoff's current and voltage laws at all times. In principle, it is possible to classically eliminate these constraints and obtaine a system of ordinary differential equations (\textsf{ODE}s) with a minimal set of variables, but the computational cost of doing so will scale polynomially with the size of the electrical circuit in the worst case, and thus become expensive for very large circuits. 
This prevents us, for example, from simulating even purely inductor and capacitor (\textsf{LC}) circuits by using the algorithm in~\cite{PhysRevX.13.041041} as \textsf{LC} circuits without capacitive spanning trees correspond to $\dae$s by standard modelling analysis. Similar considerations apply to constrained mechanical systems with a large number of components where too \dae's find use \cite{dejalon2012kinematic,sol1983kinematics}. As a result, quantum algorithms for differential equations cannot directly be applied to $\dae$s, and a new framework has to be developed.

\subsection{Problem description}
\label{subsec:problem}

We consider an electrical circuit represented by a graph $\calG(\calV,\calE)$ where $\calE$ are the set of edges corresponding to branches containing circuit elements and $\calV$ are the set of nodes corresponding to the interconnections of the circuit elements. Each of these branches will have either a passive element, e.g., a resistor, inductor or capacitors, or power source, e.g., independent voltage or current sources.
The electrical circuits of interest in this work will be those with $N+1$ nodes ($|\calV|=N+1$) where one node is called a \emph{reference node}. The reference node can be highly connected and have degree up to $N$, but every other node is restricted to a maximum degree of $d$. The connectivity of the circuit is specified via the \emph{incidence matrix} $\mathbf{A} \in \{-1,0,1\}^{N \times |\calE|}$ (the reference node is not included in the definition of the incidence matrix) which is $d$-sparse. We will also need to specify the components in the network. Let $\mathbf{R}$ be the diagonal matrix of resistance values in each branch with a resistor and let $\mathbf{C}$ be the diagonal matrix of capacitance values in the branches with capacitors. Further, let $\mathbf{L}$ be the matrix of inductance values, where the diagonal entries are the inductors in the branches (self-inductances) and the off-diagonal terms are mutual-inductances; we assume that $\mathbf{L}$ is $d_\ell \leq d$-sparse. We will assume oracle access to $\mathbf{A}$, $\mathbf{R}$, $\mathbf{C}$ and $\mathbf{L}$.

\subsubsection{Differential-algebraic equations of RLC circuits}\label{subsec:tech_MNA}

The state of an $\rlc$ circuit $\calG$ can be described through its node voltages $\vec{u}$, voltage drops along branches $\vec{v}$ and currents through branches $\vec{i}$. Since the branch voltages can be related to the node voltages as $\vec{v} = \rlcA^T \vec{u}$, one can work with the state description of either $(\vec{u},\vec{i})$ or $(\vec{v},\vec{i})$. 

However, either description may be redundant since the voltage drops and branch currents must satisfy Kirchoff's current law (KCL), Kirchoff's voltage law (KVL) and branch constitutive relations (BCR) at any time. To systematically handle the redundancy in the state description, we model $\rlc$ electrical circuits using \emph{modified nodal analysis} ($\mna$) which was introduced by Ho et al.~\cite{ho1975mna} and has since become one of the most widely used modeling approaches.

We denote this state description as $\vec{x}(t) = (\vec{u}(t), \vec{i}_{\rlcind}(t), \vec{i}_{\rlcvs}(t))^T$. 
The governing equations of the dynamics of $\rlc$ circuits corresponding to $\mna$ are then\footnote{The form of $\rlcK$ as written in Eq.~\eqref{eq:mna} is convenient as the Hermitian part of $\rlcK$ i.e., $(\rlcK + \rlcK^\dag)/2$ will then just involve the resistive sub-matrix.}
\begin{align} 
    \label{eq:mna}
    \underbrace{
    \begin{bmatrix}
        \rlcA_{\rlccap} \rlcC \rlcA_{\rlccap}^T & 0 & 0 \\
        0 & \rlcL & 0 \\
        0 & 0 & 0 
    \end{bmatrix}}_{:=\rlcM}
    \frac{d}{dt} \begin{bmatrix}
        \vec{u}(t) \\ \vec{i}_\rlcind(t) \\ \vec{i}_{\rlcvs}(t)
    \end{bmatrix} + 
    \underbrace{
    \begin{bmatrix}
        \rlcA_\rlcres \rlcG \rlcA_\rlcres^T & \rlcA_{\rlcind} & \rlcA_{\rlcvs} \\
        -\rlcA_{\rlcind}^T & 0 & 0 \\
        -\rlcA_{\rlcvs}^T & 0 & 0 
    \end{bmatrix}}_{:=\rlcK}
    \begin{bmatrix}
        \vec{u}(t) \\ \vec{i}_{\rlcind}(t) \\ \vec{i}_{\rlcvs}(t)
    \end{bmatrix} = 
    \underbrace{
    \begin{bmatrix}
        - \rlcA_{\rlcjs} \vec{i}_{\rlcjs} \\ 0 \\ -\vec{v}_{\rlcvs}
    \end{bmatrix}}_{:=\vec{f}}
\end{align}
where we denote the matrices $\rlcM, \rlcK$ as indicated along with the vector $\vec{f}$ which contains information regarding the independent $\dc$ sources, $\rlcG = \rlcR^{-1}$ is the conductance matrix and with initial conditions on $\vec{x}(0)$. The $\mna$ equation simplifies to
\begin{align}\label{eq:dae-general}
    \rlcM \dot{\vec{x}} + \rlcK \vec{x} = \vec{f}\;
\end{align}
which is precisely a $\dae$ specified by the matrices $\rlcM$ and $\rlcK$. We note straightaway that the matrix $\rlcM$ is singular for $\rlc$ circuits with voltage sources. Even if no voltage sources are present, $\rlcM$ can $\rlcA_{\rlccap} \rlcC \rlcA_{\rlccap}^T$ is singular as is the case for many circuits (except those satisfying certain connectivity conditions among their components). This issue does not arise for example in the mechanical system simulated in~\cite{PhysRevX.13.041041} because there, the $\rlcM$ matrix is diagonal and positive. The fact $\rlcM$ can be singular is one of the main challenges in simulating $\dae$s; it may not be possible to obtain a differential equation in $\vec{x}(t)$ by, e.g., simply multiplying both sides of Eq.~\ref{eq:dae-general} by $\rlcM^{-1}$. The difficulty of simulating $\dae$s is often characterized by its tractability index~\cite{lamour2013differential}, which quantifies the hardness of decoupling the algebraic constraints from the underlying differential equation. We will discuss this in more detail in Section~\ref{subsec:tech_approach}. For linear $\rlc$ circuits, the tractability index can be at most two~\cite{estevez2000structural,riaza2008differential}.

\subsubsection{Problem statements}
Our quantum algorithm will use a quantum state-vector encoding where the amplitudes of the quantum state prepared will correspond to voltages and currents of the RLC network, up to an overall normalization, similar to other well-known quantum algorithms \cite{Ber14,harrow2009linear,gilyen2019qsvt}. More precisely, let $\vec{x}(t)$ be  the vector of node voltages and currents across inductors and voltage sources as described before, at a time $t$. We denote $\ket{x(t)} := \vec{x}(t)/\norm{\vec{x}(t)}$ as the corresponding quantum state whose amplitudes in the computational basis correspond to the elements of $\vec{x}(t)$ up to a normalization. Then, our goal is to prepare the \emph{history state} $\ket{\Psi}$ that encodes a discretized time-evolution of these variables over a time interval $[0,T]$
\begin{align}\label{def:hist_state}
\ket{\Psi} = \frac{1}{\normhist} \sum_{k=0}^{m} \|x(k\Delta t)\| \ket{k} \ket{x(k\Delta t)}
\end{align}
for appropriate choices of $m$ and $\Delta t$. From the history-state, we could extract both the state $\ket{x(T)}$ encoding the solution at the final time $T$, and quantities such as the energy stored in a subset of in inductors or capacitors, or the power dissipated across certain resistors. We state informal versions of these problems below.

\begin{problem}[State preparation]\label{prob:RLC_dynamics}
Let $\mathbf{A}$ be the incidence matrix of the circuit and assume it to be $d$-sparse and let $\mathbf{R},\mathbf{L},\mathbf{C}$ be as defined above. Given oracle access to $\mathbf{A}, \mathbf{R}, \mathbf{L}, \mathbf{C}$, an oracle preparing the initial state describing the $\rlc$ circuit at time $0$, oracle preparing the current and voltage sources, the tractability index of the $\mna$ system, the goal is to output 
\begin{enumerate}[$(a)$]
    \item \textbf{History state}: a state $\ket{\Phi}$ that is $\varepsilon$-close (in $\ell_2$ norm) to the history state of the $\rlc$ circuit.
    \item \textbf{Normalized state at a particular time:} a state $\ket{\phi(T)}$ that is $\varepsilon$-close (in $\ell_2$ norm) to the normalized solution $\ket{x(T)}$ at time $T$.
\end{enumerate}
\end{problem}

Extracting the complete state description of the circuit $x(t)$ from the quantum-encoded solution $\ket{\psi(t)}$ using quantum state tomography would take $\mathsf{poly}(N)$ time and thus negate any quantum-speedup. Thus, it is desirable to estimate physically relevant properties from the quantum state while retaining the $\mathsf{polylog}(N)$ running time. 
We formalize this problem below for estimating the energy stored in capacitors and energy dissipated by resistors.
\begin{problem}[Estimating stored and dissipated energy]\label{prob:classical_output}
Given the same access as Problem~\ref{prob:RLC_dynamics}, an oracle for $S \subseteq \calE$ which is a subset of only capacitors and inductors or only resistors and simulation time $t \in (0,T]$, output an $\varepsilon$-approximate estimate of the energy $E_S(t)$ being stored (in any subset of inductors and capacitors) or power dissipated (in any subset of resistors) at time $t$.
\end{problem}

\subsection{Main results}
We give a positive answer to the questions posed above by $(i)$ providing an efficient quantum algorithm for simulating the time dynamics of $\rlc$ circuits with sources, $(ii)$ showing that relevant quantities of stored and dissipated energies at different simulation times can be outputted efficiently while still retaining a quantum speedup, and that $(iii)$ this latter problem is in fact $\bqp$-complete for a family of circuits involving inductors and capacitors. 

\begin{result}[Efficient preparation of history-state]\label{res:RLC_sim}
Problem~\ref{prob:RLC_dynamics} can be solved with a quantum algorithm that makes $\widetilde{O}(T \cdot \poly(\log N, d, \log(1/\varepsilon))$ queries to the oracles for the matrices of $\mathbf{A}$, $\rlcM$, $\rlcR$, $\rlcC$, $\rlcL$, and the oracles preparing the initial state and power sources. The algorithm uses an additional gate complexity of $O(\log N)$ times the number of overall queries.
\end{result}
The $\widetilde{O}(\cdot)$ above hides dependencies of the algorithm on the structure of the underlying graph and spectrum of the component matrices $\rlcR$, $\rlcL$, and $\rlcC$. These dependencies are made precise in the theorem statements in the main text.

The query and gate complexities of the quantum algorithm are $\polylog(N)$, providing an exponential speedup over classical methods that would require $\poly(N)$ time in the worst-case to determine the state of the electrical circuit at any time. Of course, reading out all of the voltages and currents at any time from the quantum state would still incur an exponential overhead. But, as the next result shows, we can extract physical quantities of interest from it.

\begin{result}[Estimating stored and dissipated energy]\label{res:quantum_speedup}
Problem~\ref{prob:classical_output} can be solved with a quantum algorithm that makes $\widetilde{O}(\poly(\log N, T, d))$ queries to the oracles for $\rlcA,\rlcR,\rlcC,\rlcL$, and uses $\widetilde{O}(\poly(\log N, T, d))$ two-qubit gates.
\end{result}
The above result shows that an end-to-end algorithm can be obtained for the task of simulating dynamics of $\rlc$ circuits and output classically relevant quantities. This is significant as this would enable large circuit simulations where the primary goal is to obtain speedups in the computation of specific quantities that summarize the state description such as the energy being stored (or power dissipated across) a set of components in the circuit.

\paragraph{\textsf{BQP}-completeness.}
Finally, we show that solving Problem~\ref{prob:classical_output} is in fact $\bqp$-complete.
\begin{result}\label{res:bqp_completeness}
We show that given an arbitrary network consisting of inductors and capacitors, the problem of deciding whether the energy in a single capacitor (normalized by the total energy) is above $2/3$ or below $1/3$ is $\bqp$-complete.
\end{result}
The above result implies that it is unlikely that classical algorithms can achieve the same efficiency as the quantum algorithms to estimate quantities such as the energy in a capacitor or an inductor (unless $\bqp\subseteq\textsf{BPP}$). Our proof also gives a way to embed an arbitrary instance of a coupled oscillator network into an instance of an $\mathsf{LC}$ network.

\section{Technical overview}\label{sec:tech_overview}
In this section, we give an overview of our approach and the main technical methods behind our main results. We obtain Result~\ref{res:RLC_sim} by solving the more general problem of simulating linear time-independent $\dae$s. This is discussed in Section~\ref{subsec:tech_approach} and is then followed by how the quantum algorithm for simulating general $\dae$s is applied to $\rlc$ circuits in Section~\ref{subsec:qsim_rlc} and \ref{res:quantum_speedup}. Finally, we comment on the classical hardness of simulating $\rlc$ circuits in Section~\ref{subsec:quantum_speedup}.

\subsection{Quantum DAE solver}\label{intro:QDAE_solver}
To tackle the problem of simulating the dynamics of $\rlc$ circuits, we first tackle the more general problem of simulating linear time-independent $\dae$s:
$$
\rlcM \dot{\vec{x}} + \rlcK \vec{x} = \vec{f},
$$
given initial conditions $\vec{x}_0$ and where $\rlcM, \rlcK$ are now time-independent general matrices. For the quantum algorithm which we describe next, we assume we are given approximate block-encodings (which are unitary matrices encoding these matrices up to some normalization~\cite{gilyen2019qsvt}) of $\rlcM$ and $\rlcK$. When applying the quantum $\dae$ solver to $\rlc$ circuits, we will describe how these are constructed from appropriate oracle access.

\subsubsection{Approach}\label{subsec:tech_approach}
Our quantum algorithm is based on the projector-based methods established by M\"{a}rz~\cite{marz2002index,lamour2013differential}. On a high level, our algorithm proceeds in the following fashion. Let $\{\vec{x}(t)\}_{t \in [0,T]}$ be the state trajectory that would have been obtained by simulating the $\dae$ across time. The algorithm first decomposes the state vector into its \emph{differential} and \emph{algebraic} components, denoted by $\vec{y}$ and $\vec{z}$ respectively. The differential components' trajectory $\vec{y}(t)_{[0,T]}$ lies on a submanifold of the state space determined by the \emph{hidden} algebraic constraints, also called the constraint submanifold. The differential components $\vec{y}$ capture the dynamical behavior of the state description and can be solved over the time interval $(0,T]$ by solving an explicit regular $\ode$ obtained by projecting the $\dae$ on to the constraint submanifold. 

The main idea behind our quantum algorithm is to thus decouple the problem of simulating the $\dae$ over the entire state space into $(i)$ first solving for $\vec{y}(t)$ by simulating an $\ode$ on the constraint submanifold over the time interval $[0,T]$, and $(ii)$ then constructing the entire state trajectory $\{\vec{x}(t)\}_{t \in [0,T]}$ by solving for the algebraic components $\vec{z}(t)$ via linear algebraic relations. 

\subsubsection{Overview of algorithm}\label{}
The above decoupling is achieved by defining a sequence of \emph{admissible} projectors $\{\rlcQ_{i}\}_{i \in [k]}$. Starting form $\rlcM_0 = \rlcM$, a sequence of matrices $\{\rlcM_{i-1}\}_{i \in [k+1]}$ are defined such that $\rlcQ_{i}$ is the projector on to the kernel of $\rlcM_{i}$ and satisfies $\rlcQ_i \rlcQ_{i-1} = 0$ (which is the admissibility condition). The matrices $\rlcM_i$ are defined by setting $\rlcP_i = \id - \rlcQ_i$ and then $\rlcM_i = \rlcM_{i-1} + \rlcK_{i-1} \rlcQ_{i-1}$ where $\rlcK_i = \rlcK_{i-1} \rlcP_{i-1}$. The tractability index is then defined as the first $i$ for which $\rlcM_i$ is invertible. The admissibility condition is to ensure decoupling of the explicit $\ode$ from the linear-algebraic relations. We now describe how the explicit induced $\ode$ is obtained for linear $\dae$s of tractability index up to two. This also coincides with the highest possible index achievable by the $\mna$ equations corresponding to linear $\rlc$ circuits which we describe next. 

A key subroutine in our quantum algorithm is a quantum subroutine for solving $\ode$s $\dot{\vec{x}} = \calA \vec{x} + \vec{f}$, and another for implementing linear transformations efficiently. For the former, we choose the quantum $\ode$ solver of \cite{Krovi2023improvedquantum}. The latter can be implemented efficiently once we have a block-encoding of the corresponding differential operator $\calA$.

\paragraph{Index zero.}
In this case, the explicit $\ode$ to be solved is then directly 
$$
\dot{\vec{x}} = \rlcM^{-1} \rlcK \vec{x} + \vec{f},
$$
given initial conditions of $\vec{x}(0)$. The differential variables in this case is the entire state $\vec{x}$ and the state trajectory $\{\vec{x}(t)\}_{t \in [0,T]}$ can be obtained by simulating the above $\ode$ across time.

\paragraph{Index one.}

When the $\mna$ system has index $1$, the matrix $\rlcM$ is singular.

Here, denote by $\rlcQ_0$ the projector onto the kernel of $\rlcM$ and let $\rlcP_0 = \rlcI - \rlcQ$. Decompose the state vector $\vec{x}$ as $\vec{x} = \rlcP_0 \vec{x} + \rlcQ_0 \vec{x}$, where $\vec{y} := \rlcP_0 \vec{x}$ are the differentiable components and $\vec{z} := \rlcQ_0 \vec{x}$ are the algebraic components. Defining $\rlcM_1 = \rlcM + \rlcK \rlcQ_0$, the explicit $\ode$ governing the dynamical behavior of $\vec{y}$ is given by $\dot{\vec{y}} = -\rlcP_0 \rlcM_1^{-1} \rlcK \vec{y} + \rlcM_1^{-1} \vec{f}$ and the algebraic components can be recovered through the equation $\vec{z} = - \rlcQ_0 \rlcM_1^{-1} \rlcK \vec{y}$. Accordingly, we simulate the corresponding $\dae$ by first obtaining $\vec{y}(t)$ via the quantum $\ode$ algorithm and $\vec{z}(t)$ via an appropriate linear transformation implemented through block-encoding techniques \cite{gilyen2019qsvt,chakraborty_et_al:LIPIcs:2019:10609}. We quickly illustrate this below.

In our quantum algorithm, we first prepare the initial states corresponding to the differential and algebraic parts i.e., a state of the form $\ket{0,\psi_{y(0)}}+\ket{1,\psi_{z(0)}}$. Next, we implement the quantum ODE solver conditioned on the first register being in $\ket{0}$. This gives a state of the form 
\begin{equation}
    \sum_t\ket{0,t, \psi_{y(t)}} + \ket{1,\psi_{z(0)}}\,.
\end{equation}
Next, we implement a linear-algebraic transformation conditioned on the first register being $\ket{0}$ to solve for algebraic part and coherently combine with the input $\ket{\phi_{z,0}}$ to produce a state of the form
\begin{equation}
    \sum_t\ket{0,t, \psi_{x(t)}} + \ket{1,\psi_\perp}\,,
\end{equation}
which is the history state encoding the evolution when the first register is $\ket{0}$. 

\paragraph{Index two.} In this case, one needs to apply the projection steps twice. We define $\rlcQ_0$ as the projector on to $\ker(\rlcM)$ and $\rlcQ_1$ as the projector on to $\ker(\rlcM_1)$ where $\rlcM_1 = \rlcM + \rlcK \rlcQ_0$. We then set the projectors $\rlcP_0 = \rlcI - \rlcQ_0$ and $\rlcP_1 = \rlcI - \rlcQ_1$. For index $2$, we are guaranteed that $\rlcM_2$ is non-singular. The differentiable variables here are then $\vec{y} = \rlcP_0 \rlcP_1 \vec{x}$, and the algebraic variables occur at different layers: $\vec{z} = \rlcQ_0 \rlcP_1 \vec{x}$, and $\vec{w} = \rlcQ_1 \vec{x}$. 

The induced explicit $\ode$ is obtained by projecting the $\dae$ on the constraint submanifold using the projector $\rlcP_0 \rlcP_1$.
\begin{align}
\dot{\vec{y}} &=  - \rlcP_0 \rlcP_1 \rlcM_2^{-1} \rlcK \vec{y}(t) + \rlcP_0 \rlcP_1 \rlcM_2^{-1} \vec{f}.
\end{align}

\paragraph{The finer technical details.}

\begin{enumerate}[$1)$] 
    \item \emph{Block-encodings:} For our quantum algorithm, we need to implement matrices corresponding to the dynamics of the explicit $\ode$ obtained and the algebraic relations. As mentioned earlier, these are obtained from the matrices $\rlcM$ and $\rlcK$ via the sequence of admissible projectors. Given access to block-encodings of $\rlcM$ and $\rlcK$, we show that these can be implemented efficiently. For instance, we need to block-encode the $\ode$ operator $\rlcM^{-1}\rlcK$ in the case of index 0. For index $1$, we need to block-encode $\ode$ operator $\rlcP_0 \rlcM_1^{-1} \rlcK$ as well as the algebraic equation operator $\rlcQ\rlcM_1^{-1} \rlcK$ in the case of index 1. Starting from block-encodings of $\rlcM$ and $\rlcK$, we use fairly standard quantum linear algebra techniques and QSVT~\cite{gilyen2019qsvt,martyn2021grand} to construct approximate block-encodings of all relevant operators for $\dae$s of indices 0, 1 and 2. The main challenge here is to carefully account for approximation errors while chaining together several block-encoding subroutines. 
    \item \emph{Propagation of errors:} As we use block-encodings of the projectors to implement the initial state and the forcing required for the $\ode$ solve over $\vec{y}$, we are only able to do this approximately. As a consequence, we need to account for the corresponding errors in the quantum state that the $\ode$ solver outputs. We show that the error only show grows linearly in time with the magnitude of approximation error of the forcing and has a constant factor dependence on the magnitude of the approximation error in the initial condition. We also need to account for errors in the approximations of the applied block-encodings on the output state which we do by standard techniques.
    \item \emph{Complexity of history state and final state preparation:} The main contribution to the complexity of our quantum algorithm is dictated by the quantum $\ode$ solver. The complexity of solving an $\ode$  $\dot{\vec{x}} = \mathbf{H} \vec{x} + \vec{b}$ depends on the norm of the involved matrix $\mathbf{H}$, its condition number $\kappa(\mathbf{H})$, and norm of the exponential $\expnorm(\mathbf{H}) := \max_{t \in [0,T]} \norm{\exp(\mathbf{H} t)}$, all of which we bound for $\rlc$ circuits and also for a general $\dae$. The linear algebraic operations that implemented the projection steps have gate complexity that depend on the condition number of the associated matrices. In addition, we must also account for the gate-complexity of block-encoding $\mathbf{H}$ and other associated matrices for every index. 
\end{enumerate}

\subsection{Quantum simulation of RLC circuits}\label{subsec:qsim_rlc}
In this section, we describe how the quantum $\dae$ solver from the previous section can be applied to simulate the differential-algebraic equations ($\dae$) of modified nodal analysis ($\mna$) governing the dynamics of $\rlc$ circuits, leading to Result~\ref{res:RLC_sim}. 

\subsubsection{Applying the quantum DAE solver: Block encoding constructions}
Our quantum algorithm (aka quantum $\dae$ solver) for simulating $\dae$'s of the form $\rlcM \dot{\vec{x}} + \rlcK \vec{x} = \vec{f}$ requires access to the approximate block-encodings for the matrices $\rlcM$ and $\rlcK$. Moreover, the time complexity of the quantum $\dae$ solver is dictated by the cost of these block-encodings. For $\rlc$ circuits with degree-$d$ (excluding the reference node), we show how to construct these using standard sparse oracle access~\cite{gilyen2019qsvt} to the reduced incidence matrices corresponding to the different components and power sources, and the component matrices ($\rlcG$, $\rlcL$, $\rlcC$) themselves.  Using standard routines, we show that the cost of the block-encodings of $\rlcM$ and $\rlcK$ involves $O(1)$ uses of the oracles and $O(\log N)$ additional gate complexity.

The cost of the block-encodings of the operators corresponding to the dynamics and projectors depend on the component matrices and the structure of the underlying graph. Particularly, for linear $\rlc$ circuits, there is a characterization of the tractability index of the $\mna$ equations completely in terms of the topological structure of the graph~\cite{estevez2000structural,riaza2008differential}, which we discuss next.
The matrix $\rlcM$ is non-singular when the $\mna$ system (Eq.~\eqref{eq:mna}) has index $0$. This occurs if and only if the $\rlc$ circuit has no voltage sources and there exists a capacitive spanning tree. The former condition is clearly required as the sub-matrix corresponding to voltage sources in $\rlcM$ would be zero and the latter condition ensures that $\rlcA_{\rlccap}^T$ (and thereby the submatrix $\rlcA_{\rlccap} \rlcC \rlcA_{\rlccap}^T$) is non-singular. The $\mna$ system over the state $\vec{x} = (\vec{u}, \vec{i}_\rlcind)^T$ is then
\begin{align} 
\label{eq:mna_index0}
\underbrace{
\begin{bmatrix}
    \rlcA_{\rlccap} \rlcC \rlcA_{\rlccap}^T & 0 \\
    0 & \rlcL 
\end{bmatrix}}_{:=\rlcM}
\frac{d}{dt} \begin{bmatrix}
    \vec{u}(t) \\ \vec{i}_\rlcind(t)
\end{bmatrix} + 
\underbrace{
\begin{bmatrix}
    \rlcA_\rlcres \rlcG \rlcA_\rlcres^T & \rlcA_{\rlcind} \\
    -\rlcA_{\rlcind}^T & 0
\end{bmatrix}}_{:=\rlcK}
\begin{bmatrix}
    \vec{u}(t) \\ \vec{i}_{\rlcind}(t)
\end{bmatrix} = 
\underbrace{
\begin{bmatrix}
    - \rlcA_{\rlcjs} \vec{i}_{\rlcjs} \\ 0
\end{bmatrix}}_{:=\vec{f}},
\end{align}
given initial conditions of $\vec{u}(0)$ and $\vec{i}_\rlcind(0)$ and the vector $\vec{f}$ only is non-zero at components corresponding to the independent current sources. In this case, the cost of the block-encoding of the relevant differential operator $\rlcM^{-1}\rlcK$ is dictated by the condition number of $\rlcM$ and $\norm{\rlcK}$, which go as $O(d,\rlcres_\min^{-1}, \rlccap_\max, \rlcind_\max, \lambda_{\min}(\rlcA_\rlccap \rlcA_\rlccap^T)^{-1})$ where $\rlcres$ is the minimum resistance, $\rlccap_\max$ is the maximum capacitance, and $\rlcind_\max$ is the maximum capacitance. The structural dependence here enters through the minimum non-singular value of $\rlcA_\rlccap \rlcA_\rlccap^T$.

It is known that circuits without loops over voltage sources and capacitors, and cutsets over inductors and current sources, corresponding to $\dae$ index $1$. For all other circuits, we obtain a $\mna$ $\dae$ of tractability index $2$. For the former, we show that the cost of the block-encoding of the relevant dynamics $\rlcP_0 \rlcM_1^{-1} \rlcK$ can again be connected to the underlying structure of the graph.

\subsubsection{Analysis}

To comment on the complexity of our quantum $\dae$ solver when applied to $\rlc$ circuits, we need to comment on the gate complexity of the involved block-encoding constructions, which we discussed above, and the norm of the exponential for the $\dae$s of $\rlc$ circuits, which we discuss next. 

\paragraph{Norm of the exponential.}

We bound the norm of the exponential $\expnorm(\mathbf{H}) = \max_{t \in [0,T]} \|\exp(\mathbf{H}t)\|$ for each index by commenting on the Lyapunov stability of the $\ode$ under consideration.

An equilibrium point is considered stable in the sense of Lyapunov if the norm of this exponential remains bounded for all $t \geq 0$, ensuring that trajectories starting near the origin stay within a fixed neighborhood. If the system is asymptotically stable (meaning all eigenvalues of $\mathbf{H}$ have strictly negative real parts), the norm $\|\exp(\mathbf{H}t)\|$ does not just stay bounded but decays to zero as $t \to \infty$ \cite{mehrmann2023control}.
Mathematically, starting from the Lyapunov equation
$$
\mathbf{H}^T \mathbf{X} + \mathbf{X} \mathbf{H} = \mathbf{Y}.
$$
and choosing a suitable negative definite matrix $\mathbf{Y}$, the unique positive definite solution $\mathbf{X}$ allows us to establish an exponential bound of the form:
\begin{equation}
    \expnorm(\mathbf{H}) \leq \sqrt{\kappa(\mathbf{X})} e^{\alpha t},
\end{equation}
where $\alpha$ is the log-norm of $\mathbf{Y}$. This lets us bound the ``worst-case'' growth of the system\,---\,often caused by non-normal matrices where $\|\exp(\mathbf{H}t)\|$ might transiently grow before decaying\,---\,without needing to compute the matrix exponential or its Jordan form directly. We show how to choose $\mathbf{X}$ and $\mathbf{Y}$ for $\mathbf{H}$ arising from $\dae$s of index 0, 1 and 2. In particular, we obtain that the exponential norm for $\rlc$ circuits for each index can be upper bounded by the condition number of a regularized $\rlcM$ matrix, which itself we shown can be in terms of the degree and values of the component matrices.

\paragraph{Overall complexity.}
We determine our overall complexity for each $\dae$ index (i.e., $0$, $1$, $2$) to be $\widetilde{O}(d, \log N)$ with dependencies on the structure of the graph and values of the components. For example, our algorithms involve the dependency on $1/\lambda_{\min}^+(\rlcA_\rlccap \rlcA_\rlccap^T)$ (or the minimum non-zero eigenvalue of $\rlcA_\rlccap \rlcA_\rlccap^T$). In this case, the matrix $\rlcA_\rlccap \rlcA_\rlccap^T$ is nothing but the Laplacian of the subgraph corresponding to capacitors. Here minimum non-zero eigenvalue corresponds to the spectral gap since the smallest eigenvalue of the Laplacian is zero. From the theory of expander graphs \cite{hoory2006expander}, it is known that the spectral gap of the Laplacian of a graph $\calG$ can be lower bounded by $\phi(\calG)^2/2$, where $\phi(\calG)$ is the conductance of the graph. Constructions of $d$-regular graphs are known where the spectral gap is a constant. It is important to note that here, we do not need the spectral gap to be constant. It is sufficient for it to be $1/\polylog(N)$, where $N$ is the number of vertices.

\subsection{Estimating energy and other observables}\label{subsec:quantum_speedup}

Our quantum algorithm for simulating $\dae$s prepares a state of the following form before any measurement is carried out:
\begin{equation}\label{eq:history}
    \ket{\phi} = \frac{\norm{\widetilde{\Psi}}}{\beta}\ket{0^a}\ket{\widetilde{\Psi}} + \ket{\perp}\,,
\end{equation}
where $\widetilde{\Psi}$ is the history state that is close to the desired history state or has the promise that its final time component is close to the desired solution at the final time, $\ket{\perp}$ refers to some state that is orthogonal to every state with $\ket{0^a}$ in the first register, $a$ is the number of ancillas used and $\beta$ is the norm of the initial state that is specified and known to us. To estimate observables of the form $\vec{x}(T)^T \mathbf{O} \vec{x}(T)$ (also called quadratic form when $\mathbf{O}$ is Hermitian), we prepare $\ket{\phi}$ considering the task of final time preparation (Problem~\ref{prob:RLC_dynamics}$(ii)$) and in which case, the encoded history state has the following form 
\begin{equation}
    \ket{\widetilde{\psi}} \propto \sum_{k=0}^{m}\|\widetilde{x}(k\Delta t)\| \ket{k} \ket{\widetilde{x}(k\Delta t)}\,, \text{ where } \norm{\widetilde{x}(T) - \vec{x}(T)} \leq \varepsilon \norm{\vec{x}(T)},
\end{equation}
or in other words the encoded $\ket{\widetilde{x}(T)}$ component is close to the true final state $\ket{\vec{x}(T)}$. An estimate of $\vec{x}(T)^T \mathbf{O} \vec{x}(T)$ is then obtained by carrying out a Hadamard test of $\ket{\phi}$ and a block-encoding of $\mathbf{D} = \ket{0}^a\bra{0}^a \otimes \ket{T}\bra{T} \otimes \mathbf{O}$ that selects for the final time. Note that we do not measure the state $\ket{\phi}$ which itself is obtained after applying various block-encodings, especially the block-encoding corresponding to the quantum linear system algorithm. Since we know that the correct part of the state is when the flag qubits are $\ket{0}^a$, we simply block-encode this information into the observable.

\paragraph{RLC circuits.} Particularly, this enables us to compute quantities relevant to RLC circuits, such as the energy stored in a set of capacitors or inductors, or the power dissipated across a set of resistors in the circuit. For example, the energy across capacitors of the circuit takes the form $(\vec{u}^T \rlcA_\rlccap\rlcC \rlcA_{\rlccap}^T \vec{u})/2$. This corresponds to measuring against the state $\vec{x} = [\vec{u}, \vec{i}_\rlcind, \vec{i}_\rlcvs]^T$ with an observable $\mathbf{O}$ that has the sub-matrix $\rlcA_\rlccap\rlcC \rlcA_{\rlccap}^T$ corresponding to $\vec{u}$ part of the state and $0$ elsewhere. Moreover, $\rlcA_\rlccap\rlcC \rlcA_{\rlccap}^T$ is efficient to block-encode and is required to be block-encoded as part of setting up the $\mna$ equations.

\subsection{BQP-completeness}
Finally, we show that estimation of energy from electrical circuits is $\bqp$-hard. In particular, we consider the following problem.
\begin{problem}[\textsf{BQP}-completeness]\label{prob:BQP}
Given a network of inductors and capacitors with no resistors or sources and oracle access to the matrices $\rlcA$, $\rlcL$ and $\rlcC$, decide if the normalized energy in a single capacitor is above $2/3$ or below $1/3$.
\end{problem}

To show this, we show that an arbitrary instance of the coupled oscillator problem of \cite{PhysRevX.13.041041} can be reduced to the above $\mathsf{LC}$ decision problem. Since it was shown that the analogous decision version for the coupled oscillator problem is $\bqp$-complete, the hardness the $\mathsf{LC}$ problem follows. Our reduction gives a concrete construction of the folklore intuition in analog circuit theory that $\mathsf{LC}$ networks are related to springs and masses (without damping). An equivalent way of stating this result is as an estimation problem (which may be more natural from a practical point of view). Our results show that estimating the energy of a single capacitor to a constant additive precision is $\bqp$-hard. This indicates that it is unlikely that there is an efficient (i.e., $\polylog N$) classical algorithm for this problem whereas our algorithms show there there is a $\polylog N$ quantum algorithm (here $N$ is the number of components in the circuit). Another interesting aspect of this reduction is that it gives a way to construct an $\mathsf{LC}$ circuit corresponding to a coupled oscillator network.\footnote{We remark that not all $\mathsf{LC}$ circuits however can be mapped to an arbitrary coupled oscillator network in the sense of \cite{PhysRevX.13.041041} e.g., those without a capacitive tree.} Other than for small examples, we do not know of any work that gives a way to construct such a circuit for a general oscillator network. 

\section{Discussion}\label{sec:discussion}
\subsection{Contextualizing our work}\label{subsec:related_works}
In this section, we quickly go through some design choices as part of the proposed quantum $\dae$ solver and contextualize the current work with regards to prior work.

\paragraph{Choice of differential equation solver.} There are two main approaches to solving linear ordinary differential equations ($\ode$s). The first converts the problem into that of solving a linear system by discretizing time and then using a quantum linear systems algorithm \cite{harrow2009linear, CKS}. This approach has been developed in several papers for time-independent and time-dependent coefficients \cite{Ber14, BCOW17, Krovi2023improvedquantum, Fang2023timemarchingbased, Berry2024quantumalgorithm}. The second approach, called linear combination of Hamiltonian simulation (LCHS), uses Hamiltonian simulation rather than linear systems \cite{LCHS}. The time complexity of these algorithms depends on the properties of the matrix (corresponding to the differential operator) involved in the linear $\ode$. Many algorithms assume that it has a negative log-norm (where the log-norm is a quantity that measures the stability of the $\ode$). However, the $\ode$s for $\rlc$ circuits obtained from $\mna$ in the current work have differential operators with a positive log-norm (while still being stable in a Lyapunov sense). Therefore, it is essential to use a quantum ODE solver that allows for a positive log-norm. Here, we use the version presented in \cite{Krovi2023improvedquantum} because it does not assume that the log-norm is negative but rather quantifies the run-time using the exponential norm.

\paragraph{Quantum algorithms for classical dynamical systems.}
The speed-up in quantum algorithms for linear $\ode$s occur for high dimensional problems under certain conditions such as norms and sparsity of involved operators being well-behaved. These algorithms also have potential applications to simulation of classical dynamical systems. Some examples include simulating the wave equation \cite{PhysRevA.99.012323}, computing the energy of a network of coupled oscillators \cite{PhysRevX.13.041041}, and coupled oscillators with damping and forcing \cite{krovi2024quantum}. More recently, under some conditions, these applications have been extended to nonlinear oscillators and equations with stochasticity \cite{jennings2025end, bravyi2025quantumsimulationnoisyclassical}. Motivated by the need to expand the set of applications to high-utility practical problems, we consider the problem of simulating the dynamics of RLC circuits with current and voltage sources in this work.

\paragraph{Prior work on electrical networks.}
Prior work has studied steady-state problems for electrical circuits, specifically that of computing effective resistance in purely resistive circuits \cite{wang2017efficient}. The connection between effective resistance, graph theory and quantum algorithms has also been studied extensively in a number of works \cite{Belovs2013QuantumWA,Apers2022ElfsTA,Jarret2018QuantumAF,chakraborty_et_al:LIPIcs:2019:10609}.
While \cite{PhysRevX.13.041041} mentions that $\mathsf{LC}$ circuits can be modelled with classical harmonic oscillators, our $\bqp$-hardness proof indicates that only a restricted class of $\mathsf{LC}$ circuits can be simulated by their algorithm.

\paragraph{Oracle access.}
Beyond the spectral and dynamical hypotheses that are needed for a super-polynomial speed-up to survive, the quantum \textsf{DAE} solver additionally requires two ingredients that classical transient solvers do not: (i)~the circuit connectivity must be
specified through a sparse oracle $\mathcal{O}$ that, on input of a node index, returns its neighbors and the resistance, inductance, capacitance values of the incident branches\footnote{For the block-encoding constructions required for setting up the $\mna$ equations for the $\rlc$ circuits, we assume standard sparse oracle access~\cite{gilyen2019qsvt} to the incidence submatrices and the component matrices. These oracles could however be constructed using the oracle access described here.}; and (ii)~the quantity of interest must be representable as an efficiently measurable observable. Whether any physically meaningful network admits such an oracle is therefore a precondition for the speed-up to be practically relevant. However, these preconditions are in fact met in practice: the canonical example is the parasitic-$\rlc$ model of a VLSI clock-distribution network~\cite{kahng2011vlsi}, whose $10^{6}$--$10^{9}$ element values are never stored as an explicit netlist but instead computed \emph{on the fly} from the layout geometry by multipole-accelerated field solvers~\cite{nabors1991fastcap,kamon1994fasthenry}. The same pattern---layout-driven, on-the-fly computation of lumped
resistors, inductors, or capacitors ---is standard for on-chip power- and ground-delivery networks~\cite{qian2003randomwalk}, for full-chip thermal networks  (which share the $\rlc$ form under the electro-thermal duality)~\cite{zhan2007thermal}.
\subsection{Open questions}\label{sec:open_questions}
Our work opens up several interesting directions for future work.
\begin{enumerate}
    \item \textbf{More general electrical circuits:} We have just scratched the surface of electrical circuits that would be amenable to quantum simulation. It would be worthwhile to investigate if methods in the current work can be extended to include active elements (e.g, op-amps, transistors), nonlinear elements (e.g., diodes, transistors, varistors) as well as voltage (or current) controlled elements and time-dependent power sources.
    \item \textbf{Alternate circuit models:} In this work, we utilized modified nodal analysis ($\mna$) in modeling $\rlc$ circuit dynamics. However, alternate models of $\rlc$ circuits~\cite{riaza2008differential} exist which consider different descriptions of the state and hence lead to different $\dae$s. For example, sparse tableau formulations use branch voltages and currents. These are very systematic and topology-transparent, but introduce more variables than nodal methods. Charge/flux-based MNA instead uses capacitor charges and inductor fluxes, which is often preferred when circuits contain nonlinear elements. A natural extension of the current work would be to then consider applying the proposed quantum $\dae$ solver to these alternate models.
    \item \textbf{Quantum simulation strategies:} One of the main subroutines of our quantum $\dae$ solver is a quantum $\ode$ solver~\cite{Krovi2023improvedquantum} which is based on solving linear systems corresponding to the $\ode$. This $\ode$ solver could be replaced by perhaps directly mapping the problem to Hamiltonian simulation or by utilizing alternate $\ode$ solvers such as those based on linear combination of Hamiltonian simulation (LCHS)~\cite{LCHS}. To ensure applicability, one would then need to be able to handle dynamical operators that may have positive log-norm but bounded exponential norm.
    \item \textbf{Resource estimation:} To make simulating $\rlc$ circuits a viable practical application of fault-tolerant quantum computers, we need to investigate efficient quantum circuit synthesis and resource estimation for specific problem instances against different architectures. Moreover, one could consider building benchmarks for resource estimation of quantum differential equation solvers based on simulating $\rlc$ circuits as has been done for quantum phase estimation considering ground state estimation of chemistry Hamiltonians (e.g., \cite{lee2021thc}).
    \item \textbf{Simulation of general $\dae$s:} In this work, we primarily investigated the application of the proposed quantum $\dae$ solver to simulation of $\rlc$ circuits. It is then natural to consider applications to other systems governed by $\dae$s such as those in control theory~\cite{athans2013optimal}, robotics~\cite{campbell2019applications}, mechanical models of constrained systems~\cite{campbell1995high,wanner1996solving}, dynamics of gas or fluid networks~\cite{jansen2014unified,grundel2014model}, fluid dynamics~\cite{emmrich2013operator}, and chemical engineering~\cite{stechlinski2018nonsmooth}. This will also require extension of the analysis to linear time-dependent $\dae$s as well as nonlinear $\dae$s. 
\end{enumerate}


\section{Preliminaries}
Let $\mathbb N_0$, $\mathbb{N}$, $\mathbb Z$, $\mathbb R$, $\mathbb R_+$, and $\mathbb C$ denote the set of nonnegative integers, positive integers, integers, real numbers, positive real numbers, and complex numbers, respectively. For $n \in \mathbb N$, we define $[n]_0 \coloneq \{0,1,\dots,n\}$ and $[n]\coloneq \{1,\dots,n\}$. We let $\| \cdot \|$ denote the 2-norm, unless otherwise specified by an appropriate subscript. 

We will often use the following fact.
\begin{fact} \label{fact:normalized_vector_error}
Let $\vec{p},\vec{q} \in \mathbb{C}^N$. Define $\ket{p} := \vec{p}/\|\vec{p}\|$ and $\ket{q} := \vec{q}/\|\vec{q}\|$. Then, 
$$
\norm{\ket{p} - \ket{q} } \leq \frac{2\|\vec{p} - \vec{q}\|}{\|\vec{p}\|}.
$$
\end{fact}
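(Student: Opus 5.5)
The plan is a direct triangle-inequality argument. Assume $\vec{p}\neq 0$ and $\vec{q}\neq 0$, so both normalized vectors are defined. Insert the intermediate vector $\vec{q}/\|\vec{p}\|$, which lets us write
\begin{align*}
\ket{p}-\ket{q} = \frac{\vec{p}-\vec{q}}{\|\vec{p}\|} + \vec{q}\left(\frac{1}{\|\vec{p}\|}-\frac{1}{\|\vec{q}\|}\right).
\end{align*}
The norm of the first term is exactly $\|\vec{p}-\vec{q}\|/\|\vec{p}\|$, which is one of the two halves of the target bound.

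For the second term, factor out $\|\vec{q}\|$ to get
\begin{align*}
\|\vec{q}\|\left|\frac{1}{\|\vec{p}\|}-\frac{1}{\|\vec{q}\|}\right| = \frac{\bigl|\|\vec{q}\|-\|\vec{p}\|\bigr|}{\|\vec{p}\|}.
\end{align*}
By the reverse triangle inequality, $\bigl|\|\vec{q}\|-\|\vec{p}\|\bigr|\le\|\vec{p}-\vec{q}\|$. So the second term is also at most $\|\vec{p}-\vec{q}\|/\|\vec{p}\|$. Adding the two contributions via the triangle inequality gives the stated factor of $2$.

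There is no real obstacle here. The only point needing care is choosing to normalize the intermediate vector by $\|\vec{p}\|$ rather than $\|\vec{q}\|$: this puts $\|\vec{p}\|$ alone in the denominator, matching the asymmetric form of the statement. The degenerate case $\vec{q}=0$ is excluded because $\ket{q}$ must be defined. If it were instead interpreted as the zero vector, the bound would still hold, since the left side would be $1$ and the right side would be $2$.
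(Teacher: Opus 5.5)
Your proof is correct and matches the paper's argument essentially step for step. The paper likewise adds and subtracts $\vec{q}/\|\vec{p}\|$, applies the triangle inequality, and bounds the second term using $\norm{\ket{q}} = 1$ together with the reverse triangle inequality.
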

\begin{proof}
Starting from the definitions of $\ket{p}$ and $\ket{q}$, we obtain
\begin{align*}
    \left \| \frac{\vec{p}}{\|\vec{p}\|} - \frac{\vec{q}}{\|\vec{q}\|} \right \| \leq \left \| \frac{\vec{p}}{\|\vec{p}\|} - \frac{\vec{q}}{\|\vec{p}\|} \right \| + \left \| \frac{\vec{q}}{\|\vec{p}\|} - \frac{\vec{q}}{\|\vec{q}\|} \right \|
    &= \frac{\|\vec{p} - \vec{q} \|}{\|\vec{p}\|} +  \frac{\norm{\vec{q} (\|\vec{q}\| - \|\vec{p}\|)}}{\|\vec{p}\| \|\vec{q}\|} \\
    &= \frac{\|\vec{p} - \vec{q} \|}{\|\vec{p}\|} +  \norm{\ket{q}} \frac{\Big| \|\vec{q}\| - \|\vec{p}\| \Big|}{\|\vec{p}\|} \\
    &\leq \frac{2\|\vec{p} - \vec{q}\|}{\|\vec{p}\|},
\end{align*}
where we added and subtracted $\vec{q}/\|\vec{p}\|$ before applying the triangle inequality in the second inequality, and used the reverse triangle inequality of $| \norm{\vec{q}} - \norm{\vec{p}}| \leq \norm{\vec{q} - \vec{p}}$ in the last line. This gives us the desired result.
\end{proof}

\subsection{Electrical circuits}
We will work with a multi-graph $\calG=(\calV,\calE)$ describing the connectivity of an electrical circuit with nodes corresponding to intersections of wires between distinct components and edges along branches on which components lie. Note that there may be more than one branch and hence edge between two nodes.  

We will denote the voltage drop across branch $b$ (also called branch voltages) as $v_b$ and the corresponding vector of voltage drops across all branches in $\calG$ as $\vec{v}$. Similarly, the current through branch $b$ will be denoted as $i_b$ and the corresponding vector $\vec{i}$. Moreover, after setting a particular node in the electrical circuit $\calG$ as the reference node $a_0$ and considered grounded (i.e., voltage is zero), we can assign voltages to different nodes $a$ by considering any path from $a_0$ to $a$ and summing up the voltage drops across the branches lying along this path. We will denote the node voltage at node $a$ as $u_a$ and the corresponding vector as $\vec{u} \in \mathbb{R}^{N+1}$.
 
Moreover, we will associate some directions to the edges $\calE=\{(n_i,n_j)\}$ with $n_i \rightarrow n_j$ i.e., $n_i$ is the source of the edge (or the edge leaves node $n_i$) and $n_j$ is the sink of the edge (or edge enters node $n_j$). Let us define the incidence matrix $\rlcA$ as follows:
\begin{equation}\label{eq:incidence_defn}
    \rlcA_{jk} = \begin{cases}
        +1 & \text{ if edge $k$ leaves node $j$} \\
        -1 & \text{ if edge $k$ enters node $j$} \\
        0 & \text{otherwise}
    \end{cases}
\end{equation}
Moreover, as the node voltages are determined with respect to a particular reference node, we will omit the row in $\rlcA$ corresponding to the reference node. Then, $\rlcA \in \mathbb{Z}^{(|\calV| - 1)\times |\calE|}$. We have the following facts commenting on the reduced incidence matrix $\rlcA$ and its submatrices.
\begin{fact}
For a connected graph $\calG$, the incidence matrix $A$ is full rank i.e.,  $\mathrm{rank}(\rlcA) = |\calV|-1$.
\end{fact}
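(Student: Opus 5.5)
We must prove that for a connected graph $\calG$ the reduced incidence matrix $\rlcA \in \mathbb{Z}^{(|\calV|-1)\times|\calE|}$ has rank $|\calV|-1$. Since $\rlcA$ has exactly $|\calV|-1$ rows, this is the same as showing its rows are linearly independent, i.e.\ that $\ker(\rlcA^T)=\{0\}$. I will argue through the transpose, because $\rlcA^T$ acts on node potentials and produces branch voltage drops. This matches the relation $\vec{v}=\rlcA^T\vec{u}$ already used in the paper.

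\textbf{Step 1.} Let $\tilde{\rlcA}\in\mathbb{Z}^{|\calV|\times|\calE|}$ denote the full incidence matrix, including the row of the reference node $a_0$. Extend any $\vec{u}\in\mathbb{R}^{|\calV|-1}$ to $\tilde{u}\in\mathbb{R}^{|\calV|}$ by setting $\tilde{u}_{a_0}=0$; then $\rlcA^T\vec{u}=\tilde{\rlcA}^T\tilde{u}$. By the definition in Eq.~\eqref{eq:incidence_defn}, column $k$ of $\tilde{\rlcA}$ has a $+1$ at the source $n_i$ of edge $k$ and a $-1$ at its sink $n_j$. Hence $(\tilde{\rlcA}^T\tilde{u})_k=\tilde{u}_{n_i}-\tilde{u}_{n_j}$. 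A multigraph with several parallel edges between the same pair of nodes causes no difficulty, since each edge still contributes one column with one $+1$ and one $-1$. Self-loops would give zero columns, but they are either excluded or simply add nothing.

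\textbf{Step 2.} Suppose $\rlcA^T\vec{u}=0$. Then $\tilde{u}_{n_i}=\tilde{u}_{n_j}$ for every edge, so $\tilde{u}$ is constant along edges. Because $\calG$ is connected, every node $a$ is joined to $a_0$ by a path. Inducting along that path gives $\tilde{u}_a=\tilde{u}_{a_0}=0$. Therefore $\vec{u}=0$, so $\ker(\rlcA^T)=\{0\}$.

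\textbf{Step 3.} By rank–nullity, $\operatorname{rank}(\rlcA)=\operatorname{rank}(\rlcA^T)=(|\calV|-1)-\dim\ker(\rlcA^T)=|\calV|-1$, which is the claim.

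There is no real obstacle here. The only step needing care is Step 2, where connectivity is used and where one must check that fixing $\tilde{u}_{a_0}=0$ is exactly what removes the constant vector $\mathbf{1}$ from the kernel of the full matrix $\tilde{\rlcA}^T$.

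An alternative route would take a spanning tree of $\calG$ and order its edges and nodes by repeatedly removing a leaf other than $a_0$. The resulting $(|\calV|-1)\times(|\calV|-1)$ submatrix of $\rlcA$ is triangular with $\pm1$ diagonal entries, which exhibits a nonsingular minor directly. I would use the kernel argument above as the main proof, since it is shorter.
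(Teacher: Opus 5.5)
Your proof is correct and complete. You extend $\vec{u}$ by $\tilde u_{a_0}=0$ and read $\rlcA^T\vec{u}$ as the vector of edge potential differences $\tilde u_{n_i}-\tilde u_{n_j}$. Propagating the resulting equalities along paths to the reference node gives $\ker(\rlcA^T)=\{0\}$, and rank--nullity finishes the argument. The only connectivity you need is that of the underlying undirected multigraph, which is the right notion because the edge equations are symmetric in the two endpoints.

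The paper gives no proof of this Fact. It states it as standard background next to the results it cites from Riaza (Lemmas~5.3--5.5), and it also follows directly from those cited facts. One route is Fact~\ref{fact:rel_cutset_A} with $F=\emptyset$: the empty edge set contains no cutset of a connected graph, so $\rlcA$ has full row rank. The other is your spanning-tree alternative, which is the content of Fact~\ref{fact:rel_tree_A}. Your kernel argument is self-contained: it uses only the definition \eqref{eq:incidence_defn} and handles parallel edges and edges at the reference node explicitly. The citation route instead keeps the Fact consistent with the loop/cutset/tree dictionary the paper relies on later, for instance in Lemma~\ref{lem:well-posed-circs} and Claim~\ref{claim:kappa_M}.
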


Let us assume that the edges $\calE$ are ordered such that we can write the incidence matrix as
$$
\rlcA = \begin{bmatrix}
        \rlcA_\rlcres & \rlcA_\rlccap & \rlcA_\rlcind & \rlcA_{\rlcjs} & \rlcA_{\rlcvs}
    \end{bmatrix}, \quad
$$
with $\rlcA_\rlcres$ (and similarly others) corresponding to the sub-matrix of $\rlcA$ whose columns correspond to edges with resistors (or capacitors, inductors, current sources, voltage sources). We can then accordingly partition the node voltages, branch voltages or branch currents with subscripts $\rlcres,\rlcind,\rlcind,\rlcvs,\rlcjs$ corresponding to branches containing resistors, inductors, capacitors, current sources, and voltage sources, respectively.
\begin{align}
    \vec{u} = (\vec{u}_\rlcres, \vec{u}_\rlccap, \vec{u}_\rlcind, \vec{u}_{\rlcjs}, \vec{u}_{\rlcvs})^T, \quad
    \vec{v} = (\vec{v}_\rlcres, \vec{v}_\rlccap, \vec{v}_\rlcind, \vec{v}_{\rlcjs}, \vec{v}_{\rlcvs})^T, \quad
    \vec{i} = (\vec{i}_\rlcres, \vec{i}_\rlccap, \vec{i}_\rlcind, \vec{i}_{\rlcjs}, \vec{i}_{\rlcvs})^T.
\end{align}

We have the following facts regarding properties of the incidence matrix $\rlcA$ in relation to the structure of the graph of the electrical circuit $\calG$. For that, we have the following definitions. Within a given graph, a path connecting two vertices $a_0$ and $a_m$ is a sequence
of nodes $(a_0, a_1, \ldots, a_m)$ in which the consecutive nodes are connected to each other by an edge. A path is said to be closed if $a_0 = a_m$. A closed path with $m \geq 1$ in which an edge is not repeated in the path $a_i \neq a_j, \forall 1 \leq i < j \leq m$ called a loop. We will call a subset $F$ of the set of edges $\calE$ of the connected graph $\calG$, a cutset if the removal of $F$ from $\calG$ results in a disconnected graph. Finally, given a connected graph, we will refer to a connected subgraph which contains all nodes and has no loops, as a tree \footnote{As is common in circuit theory~\cite{riaza2008differential}, the definition of tree implicitly assumes that it is a spanning tree.}.

\begin{fact}[Lemma~5.3,\cite{riaza2008differential}]\label{fact:rel_loop_A}
A subset $F$ of the set of edges of a connected digraph $\calG$ does not contain loops if and only if $A_F$ has full column rank.
\end{fact}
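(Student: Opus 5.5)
We will prove both directions by looking at the kernel of $\rlcA_F$. Here $\rlcA_F$ is the reduced incidence matrix restricted to the columns indexed by $F$, with the reference row removed. Full column rank means $\ker(\rlcA_F)=\{0\}$. So the statement is equivalent to: $F$ contains a loop if and only if there is a nonzero $\vec{w}\in\mathbb{R}^{|F|}$ with $\rlcA_F\vec{w}=0$. A vector $\vec{w}$ satisfying $\rlcA_F\vec{w}=0$ is a flow on the edges of $F$ that is conserved at every non-reference node. Summing all rows of the full incidence matrix gives zero, since each column has exactly one $+1$ and one $-1$. Hence the reference-node row is minus the sum of the other rows, and conservation holds automatically at the reference node too. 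We may therefore work with the full incidence matrix $\widehat{\rlcA}_F$ without loss of generality.

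\emph{Loop $\Rightarrow$ rank deficiency.} Suppose $F$ contains a loop $(a_0,a_1,\dots,a_m=a_0)$ using distinct edges $e_1,\dots,e_m$. Define $\vec{w}$ by $w_{e_k}=+1$ if $e_k$ is oriented from $a_{k-1}$ to $a_k$, $w_{e_k}=-1$ otherwise, and $w_e=0$ for edges of $F$ not on the loop. At each node the traversal enters and leaves equally often, so each row sum $\sum_k \widehat{\rlcA}_{j e_k}w_{e_k}$ is zero. Since the edges are distinct, $\vec{w}\neq 0$. This covers multigraph parallel edges ($m=2$), and self-loops give a zero column directly. Thus $\rlcA_F$ is not of full column rank.

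\emph{No loop $\Rightarrow$ full column rank.} If $F$ has no loops, the subgraph $(\calV,F)$ is a forest. Suppose $\rlcA_F\vec{w}=0$ with $\vec{w}\neq 0$, and let $F'=\{e: w_e\neq 0\}$, which is also a forest. A nonempty forest has a leaf $v$ of degree one in $F'$. Conservation at $v$ reads $\pm w_e=0$ for its unique incident edge $e\in F'$, contradicting $e\in F'$. One subtlety is that $v$ might be the reference node, whose row was deleted; the remark above that the reference row is implied by the others removes this concern. Alternatively, a forest with at least one edge has at least two leaves, so one of them is a non-reference node.

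The main obstacle is bookkeeping rather than ideas. We must handle the deleted reference row, and we must make sure that ``loop'' as defined in the paper (a closed path with distinct edges) matches cycles in the multigraph, including parallel edges. The leaf argument together with the two-leaves observation resolves the reference-row issue. No assumption that $\calG$ is connected is needed for this fact.
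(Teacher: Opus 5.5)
The paper gives no proof of its own here. The statement is imported from Riaza's book (Lemma~5.3), so there is no in-paper argument to compare against.

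Your proof is correct and complete.
\begin{itemize}
\item The identification $\ker(\rlcA_F)=\ker(\widehat{\rlcA}_F)$ via vanishing column sums is right. A self-loop column is zero rather than having one $+1$ and one $-1$, but its sum still vanishes.
\item The signed indicator of a closed trail lies in the kernel. Every step contributes $+1$ at $a_{k-1}$ and $-1$ at $a_k$ whatever the orientation, and a closed walk has the same multiset of start and end nodes.
\item In the converse, loop-freeness rules out parallel edges and self-loops. So at a leaf of the support forest the single incident entry is $\pm 1$, which forces $w_e=0$.
\end{itemize}

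A common alternative for the converse is to extend the loop-free set $F$ to a spanning tree $T\supseteq F$. Then $\rlcA_T$ is square and nonsingular (cf.\ Fact~\ref{fact:rel_tree_A}), so its columns, and in particular those of $\rlcA_F$, are independent. That route is shorter on the page, but it needs connectivity to obtain a spanning tree with exactly as many edges as $\rlcA$ has rows. It also relies on the nonsingularity of tree incidence matrices, which is usually proved by the same leaf-peeling you use. Your version is more elementary and, as you note, holds without assuming $\calG$ is connected.
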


\begin{fact}[Lemma~5.4,\cite{riaza2008differential}]\label{fact:rel_cutset_A}
A subset $F$ of the set of edges of a connected digraph $\calG$ does not contain cutsets if and only if $A_{\calG-F}$ has full row rank.
\end{fact}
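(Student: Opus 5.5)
We must prove Fact~\ref{fact:rel_cutset_A}: a subset $F$ of the edges of a connected digraph $\calG$ contains no cutset if and only if $\rlcA_{\calG-F}$, the reduced incidence matrix restricted to the columns of edges not in $F$, has full row rank $|\calV|-1$. The argument will go through the connectivity of the subgraph $\calG-F=(\calV,\calE\setminus F)$, which keeps all nodes but only the edges outside $F$.

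\textbf{Step 1.} First, reduce the statement to connectivity. By the definition of a cutset, $F$ contains no cutset if and only if $\calG-F$ is connected. If $\calG-F$ is disconnected, then $F$ itself is a cutset. Conversely, if some $F'\subseteq F$ were a cutset, then $\calG-F'$ would be disconnected. Removing the further edges $F\setminus F'$ cannot reconnect it, so $\calG-F$ would also be disconnected. Hence it suffices to show that $\calG-F$ is connected if and only if $\rlcA_{\calG-F}$ has rank $|\calV|-1$.

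\textbf{Step 2.} Next, characterize the left kernel. Let $\vec{y}\in\mathbb{R}^{N}$ and extend it by $y_{a_0}=0$ at the reference node. The column of edge $k=(n_i\to n_j)$ contributes $(\vec{y}^T\rlcA_{\calG-F})_k = y_{n_i}-y_{n_j}$. This formula also holds when an endpoint is the reference node, because its row was deleted and we set $y_{a_0}=0$. So $\vec{y}$ lies in the left kernel exactly when $\vec{y}$ is constant on every connected component of $\calG-F$ and vanishes on the component containing $a_0$. The matrix has full row rank if and only if this left kernel is trivial.

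\textbf{Step 3.} If $\calG-F$ is connected, every node lies in the component of $a_0$, so $\vec{y}=0$ and the rank is full. If $\calG-F$ is disconnected, take a component $\calC$ not containing $a_0$ and set $\vec{y}$ to be the indicator vector of $\calC$. This $\vec{y}$ is nonzero and lies in the left kernel, so the rank is deficient.

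The only step needing care is Step 1, the equivalence between "contains no cutset" and "removal leaves the graph connected". It relies on monotonicity: deleting more edges never restores connectivity. It also requires reading "contains" as "has a subset that is a cutset", which is the reading the argument above uses. Steps 2 and 3 are routine linear algebra, in the same spirit as the earlier Fact that a connected graph has full-rank reduced incidence matrix. In fact, Steps 2 and 3 are exactly that Fact applied to the subgraph $\calG-F$.
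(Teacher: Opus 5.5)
Your proof is correct. Step~1 holds under the paper's definition of a cutset (any edge set whose removal disconnects $\calG$), because deleting more edges never reconnects a graph. If one instead uses the common convention that cutsets are inclusion-minimal, replace ``$F$ itself is a cutset'' by ``$F$ contains an inclusion-minimal disconnecting subset'', which exists because $F$ is finite. Steps~2 and~3 correctly identify the left kernel of $\rlcA_{\calG-F}$: it consists of the vectors that are constant on each component of $\calG-F$ and vanish on the component of the reference node.

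The paper does not prove this Fact; it only cites it, alongside the neighbouring Facts on loops and trees. Those Facts suggest a different route through spanning trees:
\begin{itemize}
\item After your Step~1, $\calG-F$ is connected if and only if it contains a loop-free set of $|\calV|-1$ edges.
\item By Fact~\ref{fact:rel_loop_A}, such a set gives $|\calV|-1$ independent columns of $\rlcA_{\calG-F}$, hence full row rank.
\item Conversely, $|\calV|-1$ independent columns form, again by Fact~\ref{fact:rel_loop_A}, a loop-free edge set of that size, which must be a spanning tree.
\end{itemize}
Your left-kernel argument avoids trees and the loop lemma, so it is self-contained and more elementary. It also gives a stronger result at no extra cost. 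If $c$ is the number of connected components of $\calG-F$, the left kernel has dimension $c-1$, so $\mathrm{rank}\,\rlcA_{\calG-F}=|\calV|-c$ in general; the Fact is the case $c=1$.

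One small correction to your closing remark. The earlier Fact only says that a connected graph has a full-rank reduced incidence matrix. It therefore covers only the ``connected $\Rightarrow$ full rank'' half of Steps~2--3. The converse half comes from your indicator-vector construction in Step~3, not from that Fact.
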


\begin{fact}[Lemma~5.5,\cite{riaza2008differential}]\label{fact:rel_tree_A}
Let $F$ be a set of $N-1$ edges of a connected digraph $\calG$, which does not contain cutsets. Then $A_F$ is non-singular if and only if $F$ defines a tree. In this case, $\det(A_F) = \pm 1$.
\end{fact}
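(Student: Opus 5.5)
The statement to prove is Fact~\ref{fact:rel_tree_A}. Let $\calG$ be a connected digraph with $N$ nodes, so that after deleting the reference row the reduced incidence matrix has $N-1$ rows. Let $F$ be a set of $N-1$ edges containing no cutset. Then $A_F$ is square of size $(N-1)\times(N-1)$, and we must show it is nonsingular if and only if $F$ is a tree, with $\det(A_F)=\pm1$ in that case. The plan is to reduce the equivalence to Fact~\ref{fact:rel_loop_A}, and then to obtain the determinant by an induction on leaves that expands along rows.

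\textbf{Equivalence.} Since $A_F$ is square, it is nonsingular if and only if it has full column rank. By Fact~\ref{fact:rel_loop_A}, this holds if and only if $F$ contains no loops. It remains to show that a loop-free set of $N-1$ edges on $N$ nodes is a spanning tree.
\begin{itemize}
\item A forest on $N$ vertices with $c$ connected components has exactly $N-c$ edges.
\item With $|F|=N-1$ this forces $c=1$, so $F$ is connected, loop-free and spanning, hence a tree.
\item Conversely, a tree has no loops, so $A_F$ has full column rank and is nonsingular.
\end{itemize}
The hypothesis that $F$ contains no cutsets is not needed for the equivalence. It fits Fact~\ref{fact:rel_cutset_A}, but only the loop criterion and the edge count are used. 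As a consistency check, a tree's complement contains no cutset, since removing non-tree edges leaves the tree and hence a connected graph.

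\textbf{Determinant.} Assume $F$ is a tree. I would induct on the number of nodes.
\begin{itemize}
\item A tree with at least two nodes has at least two leaves, so there is a leaf $a$ that is not the reference node.
\item The row of $A_F$ indexed by $a$ has exactly one nonzero entry, equal to $\pm1$, in the column of the unique tree edge $e$ incident to $a$.
\item Laplace expansion along this row gives $\det A_F=\pm\det A'$. Here $A'$ is $A_F$ with row $a$ and column $e$ deleted.
\item $A'$ is exactly the reduced incidence matrix of the tree $F-e$ on the vertex set $\calV\setminus\{a\}$, with the same reference node.
\end{itemize}
The base case is a tree with two nodes: the matrix is $1\times1$ with entry $\pm1$. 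Hence $\det A_F=\pm1$. An alternative is total unimodularity of incidence matrices, but the leaf induction is self-contained.

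The main obstacle is the bookkeeping in the induction. The leaf chosen must not be the reference node, since its row has been deleted. One must also check that removing the row and column really yields the reduced incidence matrix of the smaller tree. Both points are settled by the existence of two leaves and by the fact that $e$ touches no other retained node except its other endpoint, whose row simply loses that entry.
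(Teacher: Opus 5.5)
Your proof is correct. The paper gives no argument of its own to compare against: it states this fact without proof and imports it from \cite[Lemma~5.5]{riaza2008differential}. Your route is a standard, complete argument. Squareness reduces nonsingularity to full column rank, Fact~\ref{fact:rel_loop_A} turns that into loop-freeness, and the forest edge count finishes the equivalence. The leaf-deletion Laplace expansion then gives $\det A_F=\pm1$, and you handle both pieces of bookkeeping correctly: choosing a non-reference leaf, and identifying the minor with the reduced incidence matrix of $F-e$.

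Three small remarks.

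First, your renormalisation to a graph with $N$ nodes is necessary, not cosmetic. Under the paper's standing convention $|\calV|=N+1$, the reduced incidence matrix has $N$ rows, so $A_F$ with $N-1$ columns would not even be square. The fact has to be read in Riaza's convention ($n$ nodes, $n-1$ tree branches), and you should say so explicitly.

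Second, you are right that the no-cutset hypothesis is unused, but your ``consistency check'' is about $\calE\setminus F$, not $F$. The hypothesis as printed constrains $F$ itself and in particular forces $\calG-F$ to be connected. That fails for every spanning tree as soon as $\calG$ has a bridge, or has fewer than $2(N-1)$ edges. So the printed statement is a weak special case. The unconditional version you prove is the one the paper actually relies on, e.g.\ in Claim~\ref{claim:kappa_M}, where it is applied to a capacitive tree with no cutset condition checked.

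Third, the leaf induction alone already gives ``tree $\Rightarrow \det A_F=\pm1\neq0$''. Fact~\ref{fact:rel_loop_A} is therefore only needed for the converse, and only its easy half (a loop makes a signed sum of columns vanish). That half is short to prove directly, which would make your argument independent of the cited lemmas.
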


\paragraph{Kirchhoff laws.}
The branch currents and voltages in any electrical circuit have to satisfy the so called Kirchhoff laws. The Kirchhoff's current law ($\kcl$) states that the sum of currents leaving any node is zero. This can be summarized as
\begin{equation}\label{eq:KCL}
    \rlcA\vec{i} = 0 \implies \rlcA_{\rlcres} \vec{i}_{\rlcres} + \rlcA_{\rlccap} \vec{i}_\rlccap + \rlcA_{\rlcind}\vec{i}_\rlcind + \rlcA_{\rlcvs} \vec{i}_{\rlcvs} + \rlcA_{\rlcjs} \vec{i}_{\rlcjs} = 0,
\end{equation}
where the implication follows from using the reduced incidence matrices corresponding to different electrical components. Kirchhoff's voltage law ($\kvl$) states that the sum of voltage drops across branches of any loop is zero. This can be recast in terms of node voltages by setting the voltage of the reference node to be zero and then noting that the $\kvl$ enforces that the voltage at any node $a$ be the sum of the branch voltages along any path from the reference node to $a$, as we had defined earlier. It can then be shown that we have the following constraint corresponding to $\kvl$
\begin{equation}\label{eq:KVL}
\vec{v} = \rlcA^T \vec{u}. 
\end{equation}

\paragraph{Constitutive equations.} 
We will assume the elements in the $\rlc$ circuits are passive. The constitutive equations relating the branch current and voltage for each of the components - resistors, capacitors and inductors - are as follows respectively:
\begin{equation}
    \rlcR \vec{i}_{\rlcres} = \vec{v}_{\rlcres} = \rlcA_{\rlcres}^T \vec{u}, \quad \vec{i}_{\rlccap} = \rlcC \frac{d}{dt} \vec{v}_{\rlccap} = \rlcC \rlcA_{\rlccap}^T \frac{d}{dt} \vec{u}, \quad \rlcL \frac{d}{dt} \vec{i}_{\rlcind} = \vec{v}_{\rlcind} = \rlcA_{\rlcind}^T \vec{u},
    \label{eq:component_iv_relations}
\end{equation}
where $\rlcR$ is a positive diagonal matrix containing resistances, $\rlcC$ is a positive diagonal matrix containing capacitances and $\rlcL$ is a symmetric positive definite matrix with self-inductances along the diagonal and mutual inductances as the off-diagonal terms.

\paragraph{Well-posed circuits.} A circuit is said to be \emph{well-posed} if it has neither $V$-loops (loops of independent voltage sources) nor $I$-cutsets (cutsets over independent current sources). This is to prevent short-circuits. This is formally stated in the following lemma.
\begin{lemma}[{\cite[Lemma~3.45]{lamour2013differential}}]\label{lem:well-posed-circs}
Given an $\rlc$ circuit with independent voltage sources and current sources, then, the
following relations are satisfied.
\begin{enumerate}[$(i)$]
\item The matrix $\begin{bmatrix}\rlcA_\rlcres & \rlcA_\rlccap & \rlcA_\rlcind & \rlcA_{\rlcvs} \end{bmatrix}$ has full row rank, since cutsets of current sources are forbidden.
\item The matrix $\rlcA_{\rlcvs}$ has full column rank, since loops of voltage sources are forbidden.
\end{enumerate}
\end{lemma}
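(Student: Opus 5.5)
Both parts follow from the rank characterizations in Fact~\ref{fact:rel_loop_A} and Fact~\ref{fact:rel_cutset_A}, applied to the connected graph $\calG$ of the circuit. We assume $\calG$ is connected, as is implicit in the definition of the reduced incidence matrix; otherwise we argue componentwise. The hypotheses are the well-posedness conditions: no loops consisting solely of voltage-source branches, and no cutsets consisting solely of current-source branches.

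For part $(ii)$, take $F$ to be the set of branches containing voltage sources. Then $\rlcA_F = \rlcA_{\rlcvs}$. Since a loop contained in $F$ would be a $V$-loop, which is forbidden, $F$ contains no loops. Fact~\ref{fact:rel_loop_A} then gives directly that $\rlcA_{\rlcvs}$ has full column rank.

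For part $(i)$, take $F$ to be the set of branches containing current sources. Then $\calG - F$ consists of the resistor, capacitor, inductor and voltage-source branches, and $\rlcA_{\calG-F}$ is, up to a permutation of columns, $\begin{bmatrix}\rlcA_\rlcres & \rlcA_\rlccap & \rlcA_\rlcind & \rlcA_{\rlcvs}\end{bmatrix}$. A cutset contained in $F$ would be an $I$-cutset, which is forbidden, so $F$ contains no cutsets. Fact~\ref{fact:rel_cutset_A} gives that $\rlcA_{\calG-F}$ has full row rank. Column permutations do not change rank, so the claimed matrix has full row rank.

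The only step needing care is the definitional one. We must check that ``$F$ contains no loop/cutset'' in Facts~\ref{fact:rel_loop_A}--\ref{fact:rel_cutset_A} matches ``the circuit has no $V$-loop/$I$-cutset''. The subtlety is whether an $I$-cutset means a cutset all of whose edges are current sources; we read it that way, which is the convention of~\cite{riaza2008differential}. We also note that the facts are stated for the reduced incidence matrix with the reference row removed, which is exactly the $\rlcA$ used here, so no further adjustment is needed.
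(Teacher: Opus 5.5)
Your proof is correct and follows the paper's route. The paper cites Lamour et al.\ and notes that both items follow from Facts~\ref{fact:rel_loop_A}--\ref{fact:rel_cutset_A}, applied exactly as you do: the loop fact to the voltage-source branches for full column rank, and the cutset fact to the current-source branches, with complement $\calG-F$, for full row rank. One small caveat: drop the remark ``otherwise we argue componentwise.'' With only the single reference row removed, the rows of any component not containing the reference node sum to zero, so item $(i)$ genuinely needs the connectivity that the paper assumes.
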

The relations in the above lemma relating conditions on the incidence matrix to conditions of the graph of the $\rlc$ circuit can be obtained via Facts~\ref{fact:rel_loop_A}--\ref{fact:rel_cutset_A}. Note that well-posed circuits may contain $VC$-loops with at least one capacitor and $IL$-cutsets with at least one inductor. 

\paragraph{Structured $\rlc$ circuits.} 
In this work, we will focus on $\rlc$ circuits over $N+1$ nodes where each node has degree at most $d$ (excluding the reference node). In particular, we will work with following definition.
\begin{definition}[Degree-$d$ $\rlc$ Circuits]\label{def:structure_RLC_circs}
Let $d, N \in \mathbb{N}$, $\rlccap_\max \geq \rlccap_\min > 0$, $\rlcres_\max \geq \rlcres_\min > 0$, $\rlcind_\max > 0$, $\mathbb{R} \ni \rlcind_\min < \rlcind_\max$. Suppose that $\calG$ is an $\rlc$ circuit over $N+1$ nodes. We say that $\calG$ has degree-$d$ if each node (excluding the reference node) has degree at most $d$. Additionally, we will assume that the capacitance matrix $\calC$ is a diagonal matrix with elements taking values in $[\rlccap_\min,\rlccap_\max]$, the conductance matrix $\rlcG = \rlcR^{-1}$ is a diagonal matrix with elements taking values in $[\rlcres_\max^{-1},\rlcres_\min^{-1}]$, and the inductance matrix $\rlcL$ is symmetric positive definite with elements taking values in $[\rlcind_\min,\rlcind_\max]$. Moreover, any inductor in $\calG$ has mutual inductances with at most $d_\rlcind \leq d - 1$ other inductors in $\calG$.
\end{definition}
\emph{Remark:} Note that the reference node can have very high degree and possibly be connected to all the other nodes in $\calG$. This occurs for example in a star graph. Additionally, for $\calG$s satisfying Definition~\ref{def:structure_RLC_circs}, $\rlcA$ is $d$-row-sparse and 2-column-sparse. For degree-$d$ LC circuits, the conductance matrix $\rlcG$ is just the zero matrix.

\begin{fact}\label{fact:spectrum_RLC_components}
Let $\mu \in \mathbb{R}, |\mu| \leq \rlcind_\max/d$. Consider Definition~\ref{def:structure_RLC_circs}. The spectrum of $\rlcC$ and $\rlcG$ satisfies $\spec(\calC) \in [\rlccap_\min, \rlccap_\max]$ and $\spec(\rlcG) \in [\rlcres_\max^{-1}, \rlcres_\min^{-1}]$ respectively. For the inductance matrix, $\norm{\rlcL} \leq d \rlcind_\max$. If the magnitudes of all mutual inductances in $\rlcL$ are less than $\mu$, then $\norm{\rlcL} \geq \rlcind_\max/d$.
\end{fact}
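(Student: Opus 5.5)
The claims for $\rlcC$ and $\rlcG$ follow at once from the fact that these matrices are diagonal. The spectrum of a diagonal matrix is its set of diagonal entries. By Definition~\ref{def:structure_RLC_circs} those entries lie in $[\rlccap_\min,\rlccap_\max]$ and $[\rlcres_\max^{-1},\rlcres_\min^{-1}]$ respectively, which gives both inclusions with no further work.

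For the upper bound on $\norm{\rlcL}$, I will use sparsity together with the entry bound. By Definition~\ref{def:structure_RLC_circs}, each row of $\rlcL$ contains the self-inductance plus at most $d_\rlcind \le d-1$ mutual inductances, so it has at most $d$ nonzero entries. The same holds for each column, since $\rlcL$ is symmetric.
\begin{itemize}
\item Every entry has magnitude at most $\rlcind_\max$. Strictly, the entries lie in $[\rlcind_\min,\rlcind_\max]$, so one also needs $|\rlcind_\min|\le\rlcind_\max$. I will read the definition this way (magnitudes bounded by $\rlcind_\max$), and state this as an interpretive assumption.
\item Under that reading, $\norm{\rlcL}_1$ and $\norm{\rlcL}_\infty$ are each at most $d\,\rlcind_\max$.
\item Applying $\norm{\rlcL}\le\sqrt{\norm{\rlcL}_1\norm{\rlcL}_\infty}$ gives $\norm{\rlcL}\le d\,\rlcind_\max$.
\end{itemize}
An alternative route is Gershgorin's theorem, using that $\rlcL$ is symmetric positive definite.

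For the lower bound, I will use $\norm{\rlcL}\ge \max_j |\bra{j}\rlcL\ket{j}| = \max_j \rlcL_{jj}$, i.e.\ the operator norm dominates every diagonal entry.
\begin{itemize}
\item The difficulty is that the definition only says entries lie in $[\rlcind_\min,\rlcind_\max]$. Nothing explicitly guarantees that some self-inductance is as large as $\rlcind_\max$ or $\rlcind_\max/d$.
\item The hypothesis that all mutual inductances have magnitude below $\mu\le\rlcind_\max/d$ is what rescues this. Read $\rlcind_\max$ as an attained maximum entry; that is, I take the definition to mean some entry of $\rlcL$ equals $\rlcind_\max$, and state this as the same kind of interpretive assumption as above. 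If the maximum were attained by an off-diagonal entry, that entry would have magnitude at most $\rlcind_\max/d$. So either a diagonal entry attains $\rlcind_\max$, or the maximum entry $\rlcind_\max$ is itself at most $\rlcind_\max/d$.
\item In the first case, $\norm{\rlcL}\ge\rlcind_\max\ge\rlcind_\max/d$ directly.
\item In the second case, I will argue that some diagonal entry is at least the largest off-diagonal magnitude. Positive definiteness gives $|\rlcL_{jk}|\le\sqrt{\rlcL_{jj}\rlcL_{kk}}\le\max(\rlcL_{jj},\rlcL_{kk})$ from each $2\times2$ principal minor. Hence the largest-magnitude entry of $\rlcL$ is dominated by a diagonal entry, and therefore by $\norm{\rlcL}$.
\end{itemize}

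Combining the two cases, positive definiteness alone yields $\norm{\rlcL}\ge\max_{j,k}|\rlcL_{jk}|$. If $\rlcind_\max$ is attained as an entry, this already gives $\norm{\rlcL}\ge\rlcind_\max\ge\rlcind_\max/d$, and the smallness of $\mu$ then serves only to ensure consistency. The main obstacle I expect is not the algebra but pinning down the intended reading of the definition: whether $\rlcind_\max$ is attained, and whether it bounds the absolute values of entries. I will make both of these explicit as assumptions in the proof.
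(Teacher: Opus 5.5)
Your handling of $\rlcC$, $\rlcG$ and the upper bound matches the paper; your $\sqrt{\norm{\rlcL}_1\norm{\rlcL}_\infty}$ estimate is interchangeable with its Gershgorin row-sum bound. The lower bound is where you diverge.

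The paper applies Gershgorin to the smallest eigenvalue. The disc containing $\lambda_{\min}(\rlcL)$ is centred at a self-inductance, which the paper takes to be $\rlcind_\max$, and has radius at most $(d-1)\mu\le(d-1)\rlcind_\max/d$. Hence $\lambda_{\min}(\rlcL)\ge\rlcind_\max/d$, and a fortiori $\norm{\rlcL}\ge\rlcind_\max/d$. This is where the $\mu$-hypothesis does real work, and it yields the quantity that matters downstream, a lower bound on $\lambda_{\min}(\rlcL)$. However, it silently requires every self-inductance to equal $\rlcind_\max$.

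Your route, $\norm{\rlcL}\ge\max_{j,k}|\rlcL_{jk}|=\rlcind_\max$, needs only the weaker assumption that $\rlcind_\max$ is attained by some entry. It gives the stated norm bound without the factor $1/d$. You are right that some such assumption is unavoidable: $\rlcL=\epsilon\,\id$ with $0<\epsilon<\rlcind_\max/d$ satisfies every hypothesis (taking $\rlcind_\min\le 0$ and $\mu>0$) yet violates the claim. The price is that the $\mu$-hypothesis is idle in your argument, and you learn nothing about $\lambda_{\min}(\rlcL)$. For that you would need a lower bound on the self-inductances, used in $\lambda_{\min}(\rlcL)\ge\min_i\rlcL_{ii}-(d-1)\mu$.

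Two simplifications are available. First, your interpretive assumption $|\rlcind_\min|\le\rlcind_\max$ for the upper bound can be dropped. Positive definiteness gives $|\rlcL_{jk}|\le\sqrt{\rlcL_{jj}\rlcL_{kk}}\le\rlcind_\max$, since diagonal entries are at most $\rlcind_\max$; this is the same $2\times 2$-minor inequality you invoke later. Second, the case analysis in the lower bound is unnecessary, because $|\rlcL_{jk}|=|\bra{j}\rlcL\ket{k}|\le\norm{\rlcL}$ holds for any matrix, positive definite or not. Your second case is also vacuous: an off-diagonal entry equal to $\rlcind_\max>0$ would have to satisfy $\rlcind_\max<\mu\le\rlcind_\max/d\le\rlcind_\max$.
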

\begin{proof}
The spectrum of $\rlcC$ and $\rlcG$ follows from Definition~\ref{def:structure_RLC_circs}. The upper bound on $\norm{\rlcL}$ follows from the application of Gershgorin circle theorem and noting that elements take a maximum value of $\rlcind_\max$. Given the promise on mutual inductances, the lower bound on $\norm{\rlcL}$ also follows from Gershgorin circle theorem as $|\lambda_\min - \rlcind_\max| \leq (d-1) \rlcind_\max/d \implies \lambda_\min \geq \rlcind_\max/d$.
\end{proof}

\paragraph{Graph Laplacian.}
We define the \emph{reduced} Laplacian of the graph $\calG(\calE,\calV)$ as
\begin{equation}\label{eq:reduced_laplacian}
    \calL_\calG = \rlcA \rlcA^T,
\end{equation}
where the $\rlcA$ is the reduced incidence matrix (which does not contain the row corresponding to the reference node of the graph) and hence the name. We have the following useful fact regarding the spectrum of $\calL_\calG$ that we will use throughout this work.
\begin{fact}\label{fact:spectrum_laplacian_G}
The reduced Laplacian of a degree-$d$ graph $\calG$ (excluding the reference node) satisfies
$$\lambda_{\max}(\calL_\calG) \leq 2d.$$
\end{fact}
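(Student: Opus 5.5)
The plan is to bound $\lambda_{\max}(\calL_\calG)=\norm{\rlcA\rlcA^T}$ by a Gershgorin (row-sum) argument, after first writing out the entries of $\calL_\calG$ in graph terms. By the definition of the incidence matrix in Eq.~\eqref{eq:incidence_defn}, for non-reference nodes $j,j'$ we have
\begin{equation*}
(\rlcA\rlcA^T)_{jj'}=\sum_{k\in\calE}\rlcA_{jk}\rlcA_{j'k}.
\end{equation*}
When $j=j'$, every edge incident to $j$ contributes $(\pm1)^2=1$ (self-loops contribute nothing to circuit equations and can be discarded). So the diagonal entry equals $\deg(j)\le d$.

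When $j\neq j'$, each edge joining $j$ and $j'$ leaves one endpoint and enters the other, so it contributes $-1$. The off-diagonal entry is therefore minus the number of (parallel) edges between $j$ and $j'$.

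The one point needing care is the reference node. Its row is deleted from $\rlcA$, so edges from $j$ to the reference node still count toward the diagonal entry of row $j$. However, they appear in no off-diagonal entry of $\calL_\calG$.

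Next I apply Gershgorin's circle theorem to the symmetric matrix $\calL_\calG$. Every eigenvalue lies in some disc centered at $(\calL_\calG)_{jj}=\deg(j)$ with radius $\sum_{j'\neq j}|(\calL_\calG)_{jj'}|$. That radius equals the number of edges from $j$ to other non-reference nodes, which is at most $\deg(j)$. Hence
\begin{equation*}
\lambda_{\max}(\calL_\calG)\le \max_j 2\deg(j)\le 2d.
\end{equation*}
This uses only the degree bound on non-reference nodes; the possibly huge degree of the reference node never enters, because its row has been removed.

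As an alternative, one could write $\rlcA\rlcA^T\preceq$ the full Laplacian restricted to the non-reference coordinates. The full Laplacian $D-W$ satisfies $D-W\preceq 2D$, since $D+W\succeq 0$ is a signless Laplacian. But the reference node's large degree makes this route messier, so I would stick with Gershgorin.

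No step seems like a real obstacle. The only subtlety is the multigraph and reference-node bookkeeping: the bound must be stated in terms of degrees of non-reference nodes, counting parallel edges with multiplicity.
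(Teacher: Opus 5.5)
Your proposal is correct and follows the paper's proof essentially verbatim: identify $(\calL_\calG)_{ii}=\deg(i)\le d$ and $(\calL_\calG)_{ij}=-(\text{number of edges between } i \text{ and } j)$, note that the off-diagonal absolute row sum is at most $\deg(i)$ because edges to the reference node only contribute to the diagonal, and conclude $\lambda_{\max}(\calL_\calG)\le 2d$ by Gershgorin. Your remarks on self-loops and the reference node make explicit what the paper leaves implicit.
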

\begin{proof}
Let $\calV'$ be the vertex set without the reference node. From the definition of $\calL_{\calG} = \rlcA\rlcA^T$, the elements of $\calL_{\calG}$ are $\calL_{\calG,ij} = \sum_{e \in \calE} \rlcA_{ie} \rlcA_{je}$. We observe that the diagonal elements of $\calL_{\calG}$ satisfy
$$
\calL_{\calG,ii} = \sum_{e \in \calE} \rlcA_{ie}^2 = \sum_{e \in \calE} \sgn(\rlcA_{ie}) = \deg(i) \leq d,
$$
where we have used that the only edges contributing to the sum are those that are incident on node $i \in \calV'$ in the third equality and denoted the degree of node $i \in \calV'$ by $\deg(i)$. The off-diagonal elements satisfy (where $i \neq j$)
$$
\calL_{\calG,ij} = \sum_{e \in \calE} \rlcA_{ie} \rlcA_{je} = - (\# (\text{edges between } i \text{ and } j)),
$$
where we have used in the second equality that the edge $e$ must be incident on both nodes $i$ and $j$ for the term to be non-zero and that the edge is then the source of one node while the sink for the other which implies that $\rlcA_{ie}\rlcA_{je} = -1$. We thus have that for any $i \in \calV'$
$$
\sum_{j \in \calV', j \neq i} |\calL_{\calG,ij}| = \sum_{j \in \calV', j \neq i} (\# (\text{edges between } i \text{ and } j)) \leq \deg(i) \leq d.
$$
From the Gershgorin circle theorem, we then have that
$$
\lambda_{\max}(\calL_{\calG}) \leq \max_{i \in \calV'} \calL_{\calG,ii} + \sum_{j \in \calV', j \neq i} |\calL_{\calG,ij}| \leq 2d.
$$
This completes the proof.
\end{proof}

\subsection{Differential-algebraic equations}
The dynamics of $\rlc$ circuits are described by \emph{differetial-algebraic equations} ($\dae$) which also arise for other constrained dynamical systems including chemical
reaction networks and mechanical systems under control. A $\dae$ is an implicit system of the form
\begin{equation}\label{eq:dae-vgeneral}
    F(t, \vec{x}(t), \dot{\vec{x}}(t)) = 0,
\end{equation}
where time $t \in (0,T]$, $\vec{x}(t) \in \mathbb{R}^N, \forall t$ is the unknown state trajectory and
$F : (0,T] \times \mathbb{R}^N \times \mathbb{R}^N \to \mathbb{R}^M$ is a sufficiently
smooth mapping. 
Unlike ordinary differential equations (ODEs), the relation of Eq.~\eqref{eq:dae-general} may contain both differential constraints (involving $\dot{\vec{x}}$) and algebraic constraints (involving only $\vec{x}$ and $t$). This implicit structure also imposes compatibility conditions on admissible initial values. A DAE is said to be regular on a domain if the Jacobian $\partial F / \partial \dot{x}$ has constant rank, ensuring the existence of a smooth solution manifold. We provide a brief background on $\dae$s below and refer the reader to standard texts~\cite{brenan1995numerical,wanner1996solving,kunkel2006differential} for further details.

\subsubsection{Linear time-independent $\dae$}
In this work, we will consider linear time-independent $\dae$s taking the following form
\begin{equation}\label{eq:dae-linear}
\mathbf{M} \dot{\vec{x}}(t) + \mathbf{K}\vec{x}(t) = \vec{f},
\end{equation}
where $\mathbf{M},\mathbf{K} \in \mathbb{R}^{N \times N}$ are constant matrices, $\vec{x} \in \mathbb{R}^{N}$ is the unknown time-evolving state description and $\vec{f} \in \mathbb{R}^{N}$ is a known constant forcing. If $\mathbf{M}$ is non-singular, Eq.~\eqref{eq:dae-linear} reduces to solving the ODE $\dot{\vec{x}} = -\mathbf{M}^{-1}\mathbf{K} \vec{x} - \mathbf{M}^{-1}\vec{f}$. When $\mathbf{M}$ is singular, Eq.~\eqref{eq:dae-linear} contains algebraic constraints
and its solution set is restricted to those states $\vec{x}(t)$ satisfying
\begin{equation}
    - \mathbf{K} \vec{x}(t) + \vec{f}(t) \in \operatorname{range}(\mathbf{M}).
\end{equation}
Well-posedness of Eq.~\eqref{eq:dae-linear} typically requires the \emph{regularity} of the matrix pencil $\lambda \mathbf{M} - \mathbf{K}$, meaning $\det(\lambda \mathbf{M} - \mathbf{K})$ is not identically zero as a polynomial in~$\lambda$~\cite{kunkel2006differential}.

\subsubsection{Index of DAE}\label{subsec:prelims_index_DAE}

We now introduce the notion of the \emph{index} of a $\dae$, which characterizes the hardness of solving a $\dae$. While there exist many different notions such as the differentiation index, Kronecker index and strangeness (see~\cite{schwarz2024common}), we will focus on the \emph{differentiation index} and \emph{tractability index} here. 

\paragraph{Differentiation index.}

The differentiation index quantifies how many times the hidden algebraic constraints of a $\dae$ must be differentiated in time before an explicit ODE representation emerges. Formally, for the implicit system Eq.~\eqref{eq:dae-general}, the index is the smallest integer $\mu$ for which repeated differentiation of Eq.~\eqref{eq:dae-general} up to order $\mu$ allows all $\dot{x}$ components to be solved explicitly as functions of
$(t,x)$ on a neighborhood of the solution manifold. This is formally stated below.
\begin{definition}[Differentiation index.]
The $\dae$ $F(t, \vec{x}(t), \dot{\vec{x}}(t)) = 0$ has differentiation index $\mu_D$ if $\mu_D$ is the minimal number of differentiations required to define a system of equations
$$
\begin{bmatrix}
    F(t, \vec{x}(t), \dot{\vec{x}}(t)) \\
    \frac{d}{dt} F(t, \vec{x}(t), \dot{\vec{x}}(t)) \\
    \vdots \\
    \frac{d^\mu}{dt^\mu} F(t, \vec{x}(t), \dot{\vec{x}}(t))
\end{bmatrix} = 0
$$
that allow for extraction of an explicit ordinary differential equation of form $\dot{\vec{x}}(t) = g(\vec{x}(t),t)$ using only algebraic manipulations.
\end{definition}

One limitation in using the differential index is that it explicitly considers the original state variables (or coordinates) and differentiates against them to go from the implicit system $F(\cdot)$ to an $\ode$.

\paragraph{Tractability index.} 

The tractability index counts how many steps of constraint reduction are needed to obtain an $\ode$ system on the constraint manifold. To define these constraint reductions and thereby tractability index, we will specialize to linear time-independent $\dae$s (Eq.~\eqref{eq:dae-linear}). For a more general description, we refer the reader to \cite{lamour2013differential}.

Consider the linear $\dae$ of Eq.~\eqref{eq:dae-linear}
$$
\rlcM \dot{\vec{x}}(t) + \rlcK \vec{x}(t) = \vec{f}(t).
$$
Let us define $\rlcM_0 = \rlcM$ and $\rlcK_0 = \rlcK$. We then define the following sequence of matrices recursively starting from $i=0$ and then in order $i=1,2,\ldots$
\begin{align}\label{eq:hierarchy_matrix_pencil}
\rlcM_{i+1} = \rlcM_i + \rlcK_i \rlcQ_i, \quad \rlcK_{i+1} = \rlcK_i \rlcP_i    
\end{align}
where $\rlcQ_i$ is a projector onto the kernel of $\rlcM_i$, which we denote by $\mathbf{N}_i = \ker \rlcM_i$, and $\rlcP_i$ is set as $\rlcP_i = \mathbf{I} - \rlcQ_i$. Each of these projection steps correspond to a step of constraint reduction. We also need to ensure admissibility~\cite{lamour2013differential} i.e., $\rlcQ_i \rlcQ_{i-1} = 0$ to ensure the decoupling behavior. We stop at $\rlcM_k$ when it is non-singular and $k$ is then defined to be the tractability index of the $\dae$ at hand. At this stage, the reduced equation can be solved for $\dot{\vec{x}}$, and an $\ode$ is obtained on the underlying constraint manifold. 

\emph{Remark.} For constant-coefficient linear DAEs, the differentiation index and the tractability index are equivalent~\cite{riaza2008differential}. In other words, the number of differentiations required to solve for $\dot{\vec{x}}$ is the same as the number of projection-based steps required to obtain a regularized system. In nonlinear time-dependent DAEs, the differentiation index and tractability index may be different. Moreover, the differentiation index requires that the $\vec{f}(t)$ term be (continuously) differentiable.

\subsection{Modified Nodal Analysis for RLC circuits}
We will model $\rlc$ circuits using modified nodal analysis ($\mna$) which was introduced by Ho et al.~\cite{ho1975mna}. Notably, it is utilized in the popular classical simulators of $\textsf{SPICE}$~\cite{nagel1973spice} and $\textsf{TITAN}$.

We denote the state description of the $\rlc$ circuit as $\vec{x}(t) = (\vec{u}(t), \vec{i}_{\rlcind}(t), \vec{i}_{\rlcvs}(t))^T$. The governing equations of the dynamics of $\rlc$ circuits corresponding to $\mna$ are then
\begin{align} 
    \label{eq:mna_main}
    \underbrace{
    \begin{bmatrix}
        \rlcA_{\rlccap} \rlcC \rlcA_{\rlccap}^T & 0 & 0 \\
        0 & \rlcL & 0 \\
        0 & 0 & 0 
    \end{bmatrix}}_{:=\rlcM}
    \frac{d}{dt} \begin{bmatrix}
        \vec{u}(t) \\ \vec{i}_\rlcind(t) \\ \vec{i}_{\rlcvs}(t)
    \end{bmatrix} + 
    \underbrace{
    \begin{bmatrix}
        \rlcA_\rlcres \rlcG \rlcA_\rlcres^T & \rlcA_{\rlcind} & \rlcA_{\rlcvs} \\
        -\rlcA_{\rlcind}^T & 0 & 0 \\
        -\rlcA_{\rlcvs}^T & 0 & 0 
    \end{bmatrix}}_{:=\rlcK}
    \begin{bmatrix}
        \vec{u}(t) \\ \vec{i}_{\rlcind}(t) \\ \vec{i}_{\rlcvs}(t)
    \end{bmatrix} = 
    \underbrace{
    \begin{bmatrix}
        - \rlcA_{\rlcjs} \vec{i}_{\rlcjs} \\ 0 \\ -\vec{v}_{\rlcvs}
    \end{bmatrix}}_{:=\vec{f}}
\end{align}
where we denote the matrices $\rlcM \in \mathbb{R}^{}, \rlcK \in \mathbb{R}^{}$ as indicated along with the vector $\vec{f}$ which contains information regarding the independent $\dc$ sources, $\rlcG = \rlcR^{-1}$ is the conductance matrix and with initial conditions of $\vec{x}(0)$. The $\mna$ equations are then simply $\rlcM \dot{\vec{x}} + \rlcK \vec{x} = \vec{f}$. 
\footnote{The form of $\rlcK$ as written in Eq.~\eqref{eq:mna} is convenient as the Hermitian part of $\rlcK$ will then just involve the resistive sub-matrix.}

It has been shown that the $\mna$ equations have a maximum tractability index of $2$. This is formally stated below.
\begin{theorem}[Theorem~5.2,\cite{riaza2008differential}]\label{thm:index_mna}
Consider a well-posed connected circuit $\calG$ with a positive definite capacitance matrix $C$, positive definite conductance matrix $\rlcG$ ($=\rlcR^{-1}$), and positive definite inductance matrix $\rlcL$. Then, the $\mna$ system of Eq.~\eqref{eq:mna} has tractability index
\begin{enumerate}[$(i)$]
    \item zero if and only if there are no voltage sources and there exists a capacitive tree,
    \item one if and only if $\calG$ contains neither $VC$-loops (except for $C$-loops) nor $IL$-cutsets,
    \item two otherwise.
\end{enumerate}
\end{theorem}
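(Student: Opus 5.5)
The plan is to run the tractability-index recursion of Eq.~\eqref{eq:hierarchy_matrix_pencil} directly on the block matrices $\rlcM,\rlcK$ of Eq.~\eqref{eq:mna_main}, and at each step turn the linear-algebraic condition ($\rlcM_i$ nonsingular) into a graph condition using Facts~\ref{fact:rel_loop_A}--\ref{fact:rel_tree_A} and Lemma~\ref{lem:well-posed-circs}. The key tool throughout is an elementary observation. If $\mathbf{D}$ is symmetric positive definite, then $\ker(\rlcB\mathbf{D}\rlcB^T)=\ker(\rlcB^T)$, and $\rlcB_1\mathbf{D}_1\rlcB_1^T+\rlcB_2\mathbf{D}_2\rlcB_2^T$ is nonsingular iff $[\rlcB_1\ \rlcB_2]^T$ has trivial kernel, i.e.\ $[\rlcB_1\ \rlcB_2]$ has full row rank. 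This is why positive definiteness of $\rlcC,\rlcG,\rlcL$ is assumed.

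\textbf{Index zero.} Since $\rlcL\succ0$, $\rlcM$ is nonsingular iff $\rlcA_\rlccap\rlcC\rlcA_\rlccap^T$ is nonsingular and the $\vec{i}_\rlcvs$ block is empty. The first condition holds iff $\rlcA_\rlccap$ has full row rank. By Fact~\ref{fact:rel_cutset_A} applied with $F=\calE\setminus\calE_\rlccap$, and then Fact~\ref{fact:rel_tree_A}, this happens iff the capacitors contain a spanning tree. This settles $(i)$.

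\textbf{Index one.} Write $\rlcQ_C$ for the projector onto $\ker\rlcA_\rlccap^T$, and take $\rlcQ_0=\mathrm{diag}(\rlcQ_C,0,\rlcI)$, which projects onto $\ker\rlcM$. Then
\[
\rlcM_1=\rlcM+\rlcK\rlcQ_0=\begin{bmatrix}\rlcA_\rlccap\rlcC\rlcA_\rlccap^T+\rlcA_\rlcres\rlcG\rlcA_\rlcres^T\rlcQ_C & 0 & \rlcA_\rlcvs\\ -\rlcA_\rlcind^T\rlcQ_C & \rlcL & 0\\ -\rlcA_\rlcvs^T\rlcQ_C & 0 & 0\end{bmatrix}.
\]
Take $(\vec{u},\vec{i}_\rlcind,\vec{i}_\rlcvs)\in\ker\rlcM_1$ and split $\vec{u}=\rlcP_C\vec{u}+\rlcQ_C\vec{u}$.

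\begin{enumerate}[$1)$]
\item The second row gives $\vec{i}_\rlcind=\rlcL^{-1}\rlcA_\rlcind^T\rlcQ_C\vec{u}$.
\item The third row gives $\rlcA_\rlcvs^T\rlcQ_C\vec{u}=0$.
\item Multiplying the first row on the left by $\rlcQ_C$ removes the capacitor term, giving $\rlcQ_C\rlcA_\rlcres\rlcG\rlcA_\rlcres^T\rlcQ_C\vec{u}+\rlcQ_C\rlcA_\rlcvs\vec{i}_\rlcvs=0$.
\item Pairing this with $\rlcQ_C\vec{u}$ and using $(2)$ yields $\rlcA_\rlcres^T\rlcQ_C\vec{u}=0$.
\end{enumerate}

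Hence $\rlcQ_C\vec{u}\in\ker[\rlcA_\rlccap\ \rlcA_\rlcres\ \rlcA_\rlcvs]^T$. This kernel is trivial iff there is no cutset of inductors and current sources, i.e.\ no $IL$-cutset (Fact~\ref{fact:rel_cutset_A}).

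With $\rlcQ_C\vec{u}=0$, the remaining equations read $\rlcA_\rlccap\rlcC\rlcA_\rlccap^T\rlcP_C\vec{u}+\rlcA_\rlcvs\vec{i}_\rlcvs=0$. Pairing with $\rlcQ_C$ first gives $\rlcQ_C\rlcA_\rlcvs\vec{i}_\rlcvs=0$. The remaining solutions are trivial iff $\rlcQ_C\rlcA_\rlcvs$ has full column rank. I would show that this is equivalent, via Fact~\ref{fact:rel_loop_A} applied to the graph with the capacitor edges contracted, to the absence of loops made of voltage sources and capacitors containing at least one voltage source. Pure $C$-loops are harmless, since they only make $\rlcA_\rlccap$ column-rank deficient, which never enters the argument. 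Lemma~\ref{lem:well-posed-circs} excludes pure $V$-loops. Together these give $(ii)$.

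\textbf{Index two.} For $(iii)$ it suffices to show that $\rlcM_2$ is always nonsingular for well-posed circuits. The main obstacle is choosing an admissible $\rlcQ_1$ with $\rlcQ_1\rlcQ_0=0$ and computing $\ker\rlcM_2$. I would first describe $\ker\rlcM_1$ explicitly using the computation above. It is parametrized by $IL$-cutset vectors $\vec{w}\in\ker[\rlcA_\rlccap\ \rlcA_\rlcres\ \rlcA_\rlcvs]^T$ (with $\vec{i}_\rlcind=\rlcL^{-1}\rlcA_\rlcind^T\vec{w}$) and by $VC$-loop vectors in $\ker\rlcQ_C\rlcA_\rlcvs$. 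I would then pick $\rlcQ_1$ projecting onto this kernel along a complement containing $\mathrm{im}\,\rlcQ_0$, which enforces admissibility. Next I would show $\rlcM_2=\rlcM_1+\rlcK_1\rlcQ_1$ is injective: for an element of its kernel, project with $\rlcQ_1$-related components and use $\rlcL\succ0$, together with full row rank of $[\rlcA_\rlcres\ \rlcA_\rlccap\ \rlcA_\rlcind\ \rlcA_\rlcvs]$ and full column rank of $\rlcA_\rlcvs$ from Lemma~\ref{lem:well-posed-circs}. The aim is to reach a positive-definite quadratic form such as $\vec{w}^T\rlcA_\rlcind\rlcL^{-1}\rlcA_\rlcind^T\vec{w}$ on the $IL$-cutset space. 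This is nonzero because, by well-posedness, the inductors must meet every such cutset. If the bookkeeping becomes unwieldy, my fallback is to follow the structure of \cite[Thm.~5.2]{riaza2008differential} and use the equivalence of tractability and differentiation/Kronecker index for constant coefficients, together with the fact that the pencil has Kronecker index at most $2$.
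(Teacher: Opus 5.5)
The paper does not prove this theorem; it quotes it from \cite{riaza2008differential}, which goes back to \cite{tischendorf1999topological,estevez2000structural}. Your proposal is therefore a genuinely different, self-contained route through the projector chain, and for (i) and (ii) it is correct.

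With $\rlcQ_0=\mathrm{diag}(\rlcQ_C,0,\rlcI)$, your kernel computation shows that $\rlcM_1$ is nonsingular iff $[\rlcA_{\rlccap}\ \rlcA_{\rlcres}\ \rlcA_{\rlcvs}]$ has full row rank and $\rlcQ_C\rlcA_{\rlcvs}$ has full column rank. For the converse directions, exhibit explicit kernel vectors:
\begin{itemize}
\item $(\vec{w},\rlcL^{-1}\rlcA_{\rlcind}^T\vec{w},0)$ with $0\neq\vec{w}\in\ker[\rlcA_{\rlccap}\ \rlcA_{\rlcres}\ \rlcA_{\rlcvs}]^T$;
\item $(\vec{p},0,\vec{i}_{\rlcvs})$ with $\vec{p}\in\mathrm{im}\,\rlcA_{\rlccap}$ solving $\rlcA_{\rlccap}\rlcC\rlcA_{\rlccap}^T\vec{p}=-\rlcA_{\rlcvs}\vec{i}_{\rlcvs}$.
\end{itemize}
The two graph translations are exactly Lemma~\ref{lem:index1_incidence_mat_props}, so you can cite them instead of redoing the contraction argument. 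What you have shown is \emph{index at most one}, which is how (ii) must be read together with (i). The orthogonal choice of $\rlcQ_0$ loses nothing: for any other projector $\rlcQ_0'$ onto $\ker\rlcM$ one has $\rlcM+\rlcK\rlcQ_0'=(\rlcM+\rlcK\rlcQ_0)(\rlcP_0+\rlcQ_0')$ with $\rlcP_0+\rlcQ_0'$ invertible. Compared with the citation, your route gives explicit descriptions of $\ker\rlcM_1$ and $\ker\rlcM_1^T$. For instance, it proves for MNA the property $\ker\rlcM\cap\ker\rlcM_1=\{0\}$, which Claim~\ref{claim:admissible_proj_ind2} only asserts.

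Part (iii) is where the real content lies, and there you only have a plan. It does close along your lines, but two ingredients are missing.

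First, a projector $\rlcQ_1$ onto $N_1:=\ker\rlcM_1$ with $\rlcQ_1\rlcQ_0=0$ exists only if $N_1\cap\ker\rlcM=\{0\}$. Since $\rlcM_1x=\rlcK x$ for $x\in\ker\rlcM$, this is $\ker\rlcM\cap\ker\rlcK=\{0\}$. That follows from your pairing trick together with Lemma~\ref{lem:well-posed-circs}, and you need to state and check it.

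Second, for injectivity of $\rlcM_2$, suppose $\rlcM_2x=0$ and put $z=\rlcQ_1x\in N_1$.
\begin{itemize}
\item Then $\rlcK\rlcP_0z=-\rlcM_1x$ and $\rlcK\rlcQ_0z=\rlcM_1\rlcQ_0z$, so $\rlcK z\in\mathrm{im}\,\rlcM_1=(\ker\rlcM_1^T)^\perp$. Once you show $z=0$, you get $x\in N_1$ and hence $x=\rlcQ_1x=0$.
\item Running your computation on $\rlcM_1^T$ gives $\ker\rlcM_1^T=W\times\{0\}\times Z$, where $W=\ker[\rlcA_{\rlccap}\ \rlcA_{\rlcres}\ \rlcA_{\rlcvs}]^T$ and $Z=\ker(\rlcQ_C\rlcA_{\rlcvs})$.
\item Elements of $N_1$ have the form $z=(\vec{w}+\vec{p},\rlcL^{-1}\rlcA_{\rlcind}^T\vec{w},\vec{i}_{\rlcvs})$ with $\vec{w}\in W$, $\vec{i}_{\rlcvs}\in Z$, $\vec{p}\in\mathrm{im}\,\rlcA_{\rlccap}$ and $\rlcA_{\rlccap}\rlcC\rlcA_{\rlccap}^T\vec{p}=-\rlcA_{\rlcvs}\vec{i}_{\rlcvs}$.
\item Pairing $\rlcK z$ with $(\vec{w},0,0)$ gives your form $\vec{w}^T\rlcA_{\rlcind}\rlcL^{-1}\rlcA_{\rlcind}^T\vec{w}=0$. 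So $\vec{w}=0$ by the absence of $I$-cutsets.
\item Pairing $\rlcK z$ with $(0,0,\vec{i}_{\rlcvs})$ gives $\vec{p}^T\rlcA_{\rlccap}\rlcC\rlcA_{\rlccap}^T\vec{p}=0$. This second form needs $\rlcC\succ0$, which is absent from your list, and it forces $\vec{p}=0$ because $\vec{p}\in\mathrm{im}\,\rlcA_{\rlccap}$. Then $\rlcA_{\rlcvs}\vec{i}_{\rlcvs}=0$, and $\vec{i}_{\rlcvs}=0$ by the absence of $V$-loops.
\end{itemize}

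With these two pieces, (iii) follows without your fallback. The fallback would amount to citing the result, as the paper does.
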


\subsection{Inherent $\ode$ of $\dae$s}\label{subsec:ODE_of_DAE}
As alluded to in Section~\ref{sec:tech_overview} and in the definition of the tractability index, we can determine an explicit $\ode$ on a submanifold of the state space after steps of constraint reduction. This is made explicit below.

\paragraph{Index zero.} As we are promised that $\rlcM$ is non-singular, the induced explicit $\ode$ for index $0$ can be obtained directly over the entire state space:
$$
\dot{\vec{x}} = -\rlcM^{-1} \rlcK\vec{x} + \rlcM^{-1}\vec{f}
$$

\paragraph{Index one.}
We will now derive the induced explicit $\ode$ for index $1$. Towards that end, let $\rlcQ_0$ be the projector onto $\ker{\rlcM}$ and define $\rlcP_0 := \id - \rlcQ$. We have the following claim.
\begin{claim}\label{claim:P_Q_dae}
We have the following:
\begin{itemize}
    \item $\rlcM_1^{-1} \rlcM = \rlcP_0$
    \item $\rlcM_1^{-1} \rlcK \rlcQ_0 = \rlcQ_0$
\end{itemize}    
\end{claim}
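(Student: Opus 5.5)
The plan is to verify both identities by multiplying on the left by $\rlcM_1$ and then cancelling, using that in the index-one setting $\rlcM_1 = \rlcM + \rlcK\rlcQ_0$ is non-singular (this is what index one means). So it suffices to show
\begin{align*}
\rlcM_1 \rlcP_0 = \rlcM, \qquad \rlcM_1 \rlcQ_0 = \rlcK\rlcQ_0 .
\end{align*}
Applying $\rlcM_1^{-1}$ on the left of each then gives the two claimed identities.

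For the first equality, expand to get $\rlcM_1\rlcP_0 = \rlcM\rlcP_0 + \rlcK\rlcQ_0\rlcP_0$. Since $\rlcQ_0$ is a projector, $\rlcQ_0\rlcP_0 = \rlcQ_0 - \rlcQ_0^2 = 0$, so the second term vanishes. Because $\rlcQ_0$ projects onto $\ker\rlcM$, we have $\rlcM\rlcQ_0 = 0$. Hence $\rlcM\rlcP_0 = \rlcM - \rlcM\rlcQ_0 = \rlcM$.

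For the second equality, expand to get $\rlcM_1\rlcQ_0 = \rlcM\rlcQ_0 + \rlcK\rlcQ_0^2$. The first term is zero by $\rlcM\rlcQ_0=0$, and the second equals $\rlcK\rlcQ_0$ by idempotence of $\rlcQ_0$.

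There is no serious obstacle here. The only points to state explicitly are that the argument needs only $\operatorname{im}\rlcQ_0 = \ker\rlcM$ and $\rlcQ_0^2=\rlcQ_0$, so it holds for any (not necessarily orthogonal) projector onto $\ker\rlcM$, and that invertibility of $\rlcM_1$ is supplied by the index-one hypothesis from the definition of the tractability index in Section~\ref{subsec:prelims_index_DAE}.
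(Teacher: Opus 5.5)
Your proof is correct and is essentially the paper's argument: both verify $\rlcM_1\rlcP_0 = \rlcM$ and $\rlcM_1\rlcQ_0 = \rlcK\rlcQ_0$ using $\rlcM\rlcQ_0 = 0$ and $\rlcQ_0^2 = \rlcQ_0$, then cancel $\rlcM_1$ using its invertibility in the index-one case. Your remark that only $\operatorname{im}\rlcQ_0 = \ker\rlcM$ and idempotence are needed, so any (not necessarily orthogonal) projector onto $\ker\rlcM$ works, is accurate. You also state explicitly the invertibility hypothesis that the paper leaves implicit.
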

\begin{proof}
We first note that $\rlcM_1 \rlcP_0 = (\rlcM + \rlcK\rlcQ_0) \rlcP_0 = \rlcM \rlcP_0 = \rlcM$. This then implies that $\rlcM_1^{-1} \rlcM = \rlcP_0$ proving the first item. The second result follows from $\rlcM_1 \rlcQ_0 = (\rlcM+\rlcK\rlcQ_0)\rlcQ_0 = \rlcK \rlcQ_0$.
\end{proof}
The induced explicit $\ode$ can then be obtained by projecting the $\dae$ on the constraint submanifold using the projector $\rlcP$, as described below.

\begin{theorem}\label{thm:dae_index1_ode}
Let the linear time-independent $\dae$ of $\mathbf{M} \dot{\vec{x}}(t) + \mathbf{K}\vec{x}(t) = \vec{f}$ have index $1$. Let $\vec{y} = \rlcP_0 \vec{x}$ and $\vec{z} = \rlcQ_0 \vec{x}$. The state $\vec{x}(t)$ can be determined across time $t \in [0,T]$ by solving
\begin{align*}
    \dot{\vec{y}} &= -\rlcP_0 \rlcM_1^{-1} \rlcK \vec{y} + \rlcP_0 \rlcM_1^{-1} \vec{f}, &\text{ with } \,\, \vec{y}(0) = \rlcP_0 \vec{x}(0) \\
    \vec{z} &= - \rlcQ_0 \rlcM_1^{-1} \rlcK \vec{y} + \rlcQ_0 \rlcM_1^{-1} \vec{f}, \,\, \forall t \in (0,T]  &\text{ and } \,\, \vec{z}(0) = \rlcQ_0 \vec{x}(0).
\end{align*}
\end{theorem}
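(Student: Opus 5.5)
The approach is to left-multiply the $\dae$ by $\rlcM_1^{-1}$, which exists because the index is $1$, and then use Claim~\ref{claim:P_Q_dae} to split the result into its $\rlcP_0$ and $\rlcQ_0$ components. Applying $\rlcM_1^{-1}$ to $\rlcM\dot{\vec{x}} + \rlcK\vec{x} = \vec{f}$ and using $\rlcM_1^{-1}\rlcM = \rlcP_0$ gives
\begin{align*}
\rlcP_0\dot{\vec{x}} + \rlcM_1^{-1}\rlcK\vec{x} = \rlcM_1^{-1}\vec{f}.
\end{align*}
Since $\rlcP_0$ is a constant projector, $\rlcP_0\dot{\vec{x}} = \dot{\vec{y}}$. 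Next I decompose $\vec{x} = \rlcP_0\vec{x} + \rlcQ_0\vec{x} = \vec{y} + \vec{z}$ inside the $\rlcK$ term. The second item of the claim gives $\rlcM_1^{-1}\rlcK\vec{z} = \rlcM_1^{-1}\rlcK\rlcQ_0\vec{x} = \rlcQ_0\vec{x} = \vec{z}$. The equation therefore becomes
\begin{align*}
\dot{\vec{y}} + \rlcM_1^{-1}\rlcK\vec{y} + \vec{z} = \rlcM_1^{-1}\vec{f}.
\end{align*}

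Now I apply $\rlcP_0$ and $\rlcQ_0$ separately and use $\rlcP_0\rlcQ_0 = 0$, $\rlcQ_0\rlcP_0 = 0$ and $\rlcP_0^2 = \rlcP_0$. Applying $\rlcP_0$ gives $\rlcP_0\dot{\vec{y}} = \dot{\vec{y}}$ and $\rlcP_0\vec{z} = 0$. This yields the stated $\ode$ $\dot{\vec{y}} = -\rlcP_0\rlcM_1^{-1}\rlcK\vec{y} + \rlcP_0\rlcM_1^{-1}\vec{f}$. Applying $\rlcQ_0$ gives $\rlcQ_0\dot{\vec{y}} = \rlcQ_0\rlcP_0\dot{\vec{x}} = 0$ and $\rlcQ_0\vec{z} = \vec{z}$. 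This yields the algebraic relation $\vec{z} = -\rlcQ_0\rlcM_1^{-1}\rlcK\vec{y} + \rlcQ_0\rlcM_1^{-1}\vec{f}$ for all $t \in (0,T]$.

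For the converse, I argue that every step above was an equivalence. $\rlcM_1^{-1}$ is invertible, and $\rlcP_0 + \rlcQ_0 = \id$, so the two projected equations together recover the unprojected one. Hence solving the $\ode$ with $\vec{y}(0) = \rlcP_0\vec{x}(0)$, which has a unique solution by linearity, and then defining $\vec{z}$ algebraically determines $\vec{x} = \vec{y} + \vec{z}$ uniquely on $[0,T]$.

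The main subtlety is $t = 0$. The algebraic relation holds along any actual solution, so $\vec{z}(0) = \rlcQ_0\vec{x}(0)$ is consistent only if $\vec{x}(0)$ is a consistent initial value. I would state this as an implicit assumption, namely that $\vec{x}(0)$ lies on the constraint manifold. Alternatively, one can note that the algebraic equation is imposed only on $(0,T]$, which is how the theorem is phrased.

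A second point to verify is that $\rlcP_0$ is constant, so that it commutes with $d/dt$. This holds because $\rlcM$ is time-independent.
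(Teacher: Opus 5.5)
Your proof is correct and follows the paper's argument essentially step for step: left-multiply by $\rlcM_1^{-1}$, use Claim~\ref{claim:P_Q_dae} to reach $\rlcP_0\dot{\vec{x}} + \rlcM_1^{-1}\rlcK\rlcP_0\vec{x} + \rlcQ_0\vec{x} = \rlcM_1^{-1}\vec{f}$, then project with $\rlcP_0$ and $\rlcQ_0$. Your converse direction and your remark about consistent initial values at $t=0$ go beyond the paper, which only derives the projected equations; the converse is sound once you also note that the ODE solution stays in $\mathrm{Im}(\rlcP_0)$, because $\rlcQ_0$ annihilates the ODE's right-hand side and $\vec{y}(0) = \rlcP_0\vec{x}(0)$.
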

\begin{proof}
Multiplying the given $\dae$ by $\rlcM_1^{-1}$ on both sides, we obtain
\begin{align}
\rlcM_1^{-1} \rlcM \dot{\vec{x}} + \rlcM_1^{-1} \rlcK \vec{x} &= \rlcM_1^{-1} \vec{f} \\    
\rlcM_1^{-1} \rlcM \dot{\vec{x}} + \rlcM_1^{-1} \rlcK \rlcP_0 \vec{x} + \rlcM_1^{-1} \rlcK \rlcQ_0 \vec{x} &= \rlcM_1^{-1} \vec{f} \\    
\rlcP_0 \dot{\vec{x}} + \rlcM_1^{-1} \rlcK \rlcP_0 \vec{x} + \rlcQ_0 \vec{x} &= \rlcM_1^{-1} \vec{f} \\    
\rlcP_0 \dot{\vec{x}} + \rlcP_0 \rlcM_1^{-1} \rlcK \rlcP_0 \vec{x} + \rlcP_0 \rlcQ_0 \vec{x} &= \rlcP_0 \rlcM_1^{-1} \vec{f} \\    
\dot{\vec{y}} + \rlcP_0 \rlcM_1^{-1} \rlcK \vec{y} &= \rlcP_0 \rlcM_1^{-1} \vec{f},
\label{eq:diff_eq_Px}
\end{align}
where we decomposed $\vec{x} = \rlcP_0 \vec{x} + \rlcQ_0 \vec{x}$ in the second line, applied Claim~\ref{claim:P_Q_dae} in the third line, pre-multiplied on both sides by $\rlcP$ in the fourth line and denoted $\vec{y} := \rlcP_0 \vec{x}$ in the last line while noting that $\rlcP_0 \rlcQ_0=0$. The explicit induced $\ode$ for the differentiable variables $\vec{y}$ is then
\begin{equation}\label{eq:ODE-index1}
\dot{\vec{y}} = -\rlcP_0 \rlcM_1^{-1} \rlcK \vec{y} + \rlcP_0 \rlcM_1^{-1} \vec{f}.    
\end{equation}
We can also obtain an equation for $\vec{z} := \rlcQ_0 \vec{x}$ by pre-multiplying by $\rlcQ_0$ on the both sides in the third line of the above set of equations to obtain:
\begin{align}\label{eq:linear_system_Qx}
    \rlcQ_0 \rlcM_1^{-1} \rlcK \rlcP_0 \vec{x} + \rlcQ_0 \vec{x} = \rlcQ_0 \rlcM_1^{-1} \vec{f} \implies \vec{z} &= - \rlcQ_0 \rlcM_1^{-1} \rlcK \vec{y} + \rlcQ_0 \rlcM_1^{-1} \vec{f}.
\end{align}
This concludes the proof.
\end{proof}
Theorem~\ref{thm:dae_index1_ode} already suggests an approach for tackling $\dae$s of index $1$ and in particular, $\mna$ systems, by solving for the differential variables $\vec{y}$ and then the algebraic variables $\vec{z}$. This was described as part of Section~\ref{subsec:tech_approach}. This will be made formal in Section~\ref{sec:master_qa} and will constitute the main idea behind our quantum algorithm for solving $\dae$s.

\paragraph{Index two.} The induced explicit $\ode$ is obtained by projecting the $\dae$ on the constraint submanifold using the projector $\rlcP_0 \rlcP_1$. This is formally described below.
\begin{restatable}{theorem}{daeindextwo}\label{thm:dae_index2_ode}
Let the linear time-independent $\dae$ of $\mathbf{M} \dot{\vec{x}}(t) + \mathbf{K}\vec{x}(t) = \vec{f}$ have index $2$. Let $\vec{y} = \rlcP_0 \rlcP_1 \vec{x}$, $\vec{z}_1 = \rlcQ_0 \rlcP_1 \vec{x}$, and $\vec{z}_2 = \rlcQ_1 \vec{x}$. The state $\vec{x}(t)$ can be determined across time $t \in [0,T]$ by solving
\begin{align*}
    \dot{\vec{y}} &=  - \rlcP_0 \rlcP_1 \rlcM_2^{-1} \rlcK \vec{y} + \rlcP_0 \rlcP_1 \rlcM_2^{-1} \vec{f}, \\
    \vec{z}_2 &= - \rlcQ_1 \rlcM_2^{-1} \rlcK \vec{y} + \rlcQ_1 \rlcM_2^{-1} \vec{f}, \\ 
    \vec{z}_1 &= \rlcQ_0 \dot{\vec{z}}_2 - \rlcQ_0 \rlcP_1 \rlcM_2^{-1} \rlcK \vec{y} - \rlcQ_0 \vec{z}_2 + \rlcQ_0 \rlcP_1 \rlcM_2^{-1} \vec{f} \\
    &= \rlcQ_0 \rlcQ_1 \left(\rlcM_2^{-1}\rlcK_2\right)^2 \vec{y} + \rlcQ_0 (2 \rlcQ_1 - \id) \rlcM_2^{-1} \rlcK\vec{y} - \rlcQ_0 \rlcQ_1 \rlcM_2^{-1}\rlcK_2 \rlcM_2^{-1} \vec{f} + \rlcQ_0 (\id - 2\rlcQ_1) \rlcM_2^{-1} \vec{f}.
\end{align*}
\end{restatable}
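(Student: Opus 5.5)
The plan is to mimic the proof of Theorem~\ref{thm:dae_index1_ode}: premultiply the $\dae$ by $\rlcM_2^{-1}$, rewrite every term through projector identities analogous to Claim~\ref{claim:P_Q_dae}, and then split the resulting equation with the three complementary operators $\rlcP_0\rlcP_1$, $\rlcQ_1$ and $\rlcQ_0\rlcP_1$. Since $\vec{y}+\vec{z}_1+\vec{z}_2 = (\rlcP_0\rlcP_1 + \rlcQ_0\rlcP_1 + \rlcQ_1)\vec{x} = \vec{x}$, recovering the three pieces recovers $\vec{x}(t)$.

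\textbf{Step 1: second-level analogue of Claim~\ref{claim:P_Q_dae}.} Using $\rlcM_2 = \rlcM_1 + \rlcK_1\rlcQ_1$, $\rlcK_1 = \rlcK\rlcP_0$ and $\rlcM_1\rlcQ_1 = 0$, I get $\rlcM_2\rlcP_1 = \rlcM_1$ and $\rlcM_2\rlcQ_1 = \rlcK_1\rlcQ_1$. Hence
\[
\rlcM_2^{-1}\rlcM_1 = \rlcP_1, \qquad \rlcM_2^{-1}\rlcK_1\rlcQ_1 = \rlcQ_1 .
\]
Combining these with $\rlcM = \rlcM_1\rlcP_0$ and $\rlcK\rlcQ_0 = \rlcM_1 - \rlcM$ gives
\[
\rlcM_2^{-1}\rlcM = \rlcP_1\rlcP_0, \qquad \rlcM_2^{-1}\rlcK\rlcQ_0 = \rlcP_1\rlcQ_0 .
\]
Admissibility $\rlcQ_1\rlcQ_0 = 0$ yields $\rlcP_1\rlcP_0 = \id - \rlcQ_0 - \rlcQ_1$. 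I will record the needed products, such as $\rlcP_0\rlcP_1\rlcQ_1 = 0$ and $\rlcQ_1\rlcP_1 = 0$, and check which products like $\rlcQ_0\rlcP_1\rlcQ_0$ and $\rlcP_0\rlcP_1\rlcQ_0$ collapse. If necessary I will fix the canonical admissible choice of $\rlcQ_1$ (one with $\rlcP_0\rlcQ_1$ compatible with $\ker\rlcM_1$) so that $\rlcP_0\rlcP_1$ acts as a projector.

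\textbf{Step 2: decompose $\rlcK\vec{x}$ and project.} I write $\rlcK = \rlcK\rlcQ_0 + \rlcK_1\rlcQ_1 + \rlcK_1\rlcP_1$, and note $\rlcK_2 = \rlcK_1\rlcP_1$. The premultiplied $\dae$ then becomes
\[
\rlcP_1\rlcP_0\dot{\vec{x}} + \rlcM_2^{-1}\rlcK_2\vec{x} + \rlcQ_1\vec{x} + \rlcP_1\rlcQ_0\vec{x} = \rlcM_2^{-1}\vec{f}.
\]
I then premultiply by each of the three operators in turn.

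\begin{enumerate}[$(a)$]
\item Premultiplying by $\rlcQ_1$ kills the derivative term and the $\rlcP_1\rlcQ_0$ term. It leaves an algebraic relation for $\vec{z}_2$, which is rewritten in terms of $\vec{y}$ using $\rlcM_2^{-1}\rlcK\rlcQ_0 = \rlcP_1\rlcQ_0$ and $\rlcM_2^{-1}\rlcK_1\rlcQ_1 = \rlcQ_1$.
\item Premultiplying by $\rlcP_0\rlcP_1$ leaves a derivative term that must be identified with $\dot{\vec{y}}$ (constant projectors commute with $d/dt$), and collecting terms gives the stated $\ode$.
\item Premultiplying by $\rlcQ_0\rlcP_1$ leaves $\rlcQ_0\rlcP_1\rlcP_1\rlcP_0\dot{\vec{x}} = \rlcQ_0(\id - \rlcQ_1)(\id-\rlcQ_0-\rlcQ_1)\dot{\vec{x}}$. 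This is not zero; it reduces to a multiple of $\rlcQ_0\dot{\vec{z}}_2$, which is exactly the hidden differentiated constraint that produces the $\rlcQ_0\dot{\vec{z}}_2$ term in the first expression for $\vec{z}_1$.
\end{enumerate}

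\textbf{Step 3: closed form for $\vec{z}_1$.} I differentiate the $\vec{z}_2$ relation, $\dot{\vec{z}}_2 = -\rlcQ_1\rlcM_2^{-1}\rlcK\dot{\vec{y}}$ (since $\vec{f}$ is constant), and substitute $\dot{\vec{y}}$ from the $\ode$. I also substitute $\vec{z}_2$ itself, then simplify with $\rlcK\rlcP_0\rlcP_1 = \rlcK_2$ (up to the projector identities of Step 1) to obtain the terms $\rlcQ_0\rlcQ_1(\rlcM_2^{-1}\rlcK_2)^2\vec{y}$ and $-\rlcQ_0\rlcQ_1\rlcM_2^{-1}\rlcK_2\rlcM_2^{-1}\vec{f}$. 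Merging the two $\rlcQ_0\rlcQ_1\rlcM_2^{-1}\rlcK\vec{y}$ contributions with $-\rlcQ_0\rlcP_1\rlcM_2^{-1}\rlcK\vec{y}$ produces the factor $(2\rlcQ_1 - \id)$, and the forcing terms combine likewise into $(\id - 2\rlcQ_1)$.

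\textbf{Main obstacle.} I expect the hardest part to be the projector bookkeeping in Steps~1--2. Unlike index one, $\rlcP_0$ and $\rlcP_1$ need not commute. So I must justify that $\rlcP_0\rlcP_1\rlcP_1\rlcP_0\dot{\vec{x}} = \dot{\vec{y}}$, that $\rlcQ_1\rlcP_1\rlcQ_0 = 0$, and that the derivative term under $\rlcQ_0\rlcP_1$ is exactly $\rlcQ_0\dot{\vec{z}}_2$ with the stated sign. My first attempt is to use only $\rlcQ_1\rlcQ_0 = 0$ together with $\rlcQ_i^2 = \rlcQ_i$. If some identity fails, I will instead adopt the standard canonical projector choice from~\cite{lamour2013differential} (for example $\rlcQ_1 = \rlcQ_1\rlcM_2^{-1}\rlcK_1$), which makes $\rlcP_0\rlcP_1$ a projector and supplies the missing commutation relations.
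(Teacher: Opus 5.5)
Your proposal is correct and follows essentially the same route as the paper: premultiply by $\rlcM_2^{-1}$ using $\rlcM_2^{-1}\rlcM=\rlcP_1\rlcP_0$ and $\rlcM_2^{-1}\rlcK=\rlcM_2^{-1}\rlcK_2+\rlcQ_1+\rlcQ_0$ (your $\rlcP_1\rlcQ_0$ equals $\rlcQ_0$ by admissibility), project with $\rlcP_0\rlcP_1$, $\rlcQ_1$ and $\rlcQ_0\rlcP_1$, and finally differentiate the $\vec{z}_2$ relation and substitute. You do not need the fallback to a canonical $\rlcQ_1$: admissibility alone gives $\rlcP_1\rlcP_0=\id-\rlcQ_0-\rlcQ_1$ (equivalently the paper's $\rlcP_0\rlcP_1=\rlcP_1\rlcP_0+\rlcQ_0\rlcQ_1$), which yields $\rlcP_0\rlcP_1\rlcP_0=\rlcP_0\rlcP_1$ and $\rlcQ_0\rlcP_1\rlcP_0=-\rlcQ_0\rlcQ_1$, and these settle all the bookkeeping you flagged.
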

The proof of the above theorem is provided in Appendix~\ref{appsec:proof_dae_index2}. Note that $\vec{z}_1$ depends on the time-derivative of $\vec{z}_2$ and therefore indirectly on the time-derivative of $\vec{y}$. This is reduced to an expression on just $\vec{y}$ using the $\ode$ governing the evolution of $\dot{\vec{y}}$. 

\subsection{Quantum routines}
\subsubsection{Block-encodings}
The usual definition of block-encoding of square matrices is as follows.
\begin{definition}[\cite{gilyen2019qsvt}]\label{def:BE}
Given a matrix $\rlcA \in \mathbb{C}^{N \times N}$ with $N=2^n$, if we can find $\alpha \in \mathbb{R}^+$, and an $(m+n)$-qubit unitary matrix $U_{\rlcA}$ such that
$$
\norm{A - \alpha (\la 0^m | \otimes \id_n) U_{\rlcA} (\ket{0^m} \otimes \id_n)}_2 \leq \varepsilon,
$$
then $U_{\rlcA}$ is called an $(\alpha,m,\varepsilon)$-block-encoding of $\rlcA$. In particular, when the block encoding is exact with $\varepsilon=0$, $U_{\rlcA}$ is called an $(\alpha,m)$-block-encoding of $\rlcA$.
\end{definition}

Applying an approximate block-encoding $U_A$ of a linear operator $\mathbf{A}$ has the following effect on an initial state.
\begin{claim}\label{claim:state_approx_BE}
Let $\alpha_A \geq 1, a_A \in \mathbb{N}, \varepsilon_A \in (0,1)$ and $0 < \sigma_\min \leq \sigma_\max$. Let $\mathbf{A}$ be an $n$-qubit operator with a $(\alpha_A,a_A,\varepsilon_A)$-block-encoding $U_A$ and let its singular values lie in $[\sigma_\min,\sigma_\max]$. Suppose $\ket{\phi}$ is an arbitrary $n$-qubit state. If $\varepsilon_A \leq \sigma_\min \min(\varepsilon,1/\alpha_A)/2$, then $U_A$ can be used to prepare a state that is $\varepsilon$-close to $\rlcA \ket{\phi}/\norm{\rlcA \ket{\phi}}$ with success probability $\geq \sigma_\min^2/(4 \alpha_A^2)$.
\end{claim}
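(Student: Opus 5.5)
The plan is to apply $U_A$ to $\ket{0^{a_A}}\ket{\phi}$ and postselect the ancilla on $\ket{0^{a_A}}$. Write $\widetilde{\rlcA} := \alpha_A (\la 0^{a_A}| \otimes \id) U_A (\ket{0^{a_A}} \otimes \id)$, so that $\norm{\rlcA - \widetilde{\rlcA}} \leq \varepsilon_A$. Then
$$
U_A \ket{0^{a_A}}\ket{\phi} = \frac{1}{\alpha_A}\ket{0^{a_A}} \widetilde{\rlcA}\ket{\phi} + \ket{\perp},
$$
where $\ket{\perp}$ has no support on $\ket{0^{a_A}}$ in the ancilla register. On success, the post-measurement state is $\widetilde{\rlcA}\ket{\phi}/\norm{\widetilde{\rlcA}\ket{\phi}}$, and the success probability is $\norm{\widetilde{\rlcA}\ket{\phi}}^2/\alpha_A^2$.

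\textbf{Step 1: success probability.} Since the singular values of $\rlcA$ are at least $\sigma_\min$, we have $\norm{\rlcA\ket{\phi}} \geq \sigma_\min$. The triangle inequality then gives $\norm{\widetilde{\rlcA}\ket{\phi}} \geq \sigma_\min - \varepsilon_A$. The hypothesis gives $\varepsilon_A \leq \sigma_\min/(2\alpha_A) \leq \sigma_\min/2$, using $\alpha_A \geq 1$. Hence $\norm{\widetilde{\rlcA}\ket{\phi}} \geq \sigma_\min/2$, and the success probability is at least $\sigma_\min^2/(4\alpha_A^2)$.

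\textbf{Step 2: closeness.} Apply Fact~\ref{fact:normalized_vector_error} with $\vec{p} = \rlcA\ket{\phi}$ and $\vec{q} = \widetilde{\rlcA}\ket{\phi}$. This gives
$$
\norm{\frac{\rlcA\ket{\phi}}{\norm{\rlcA\ket{\phi}}} - \frac{\widetilde{\rlcA}\ket{\phi}}{\norm{\widetilde{\rlcA}\ket{\phi}}}} \leq \frac{2\norm{(\rlcA-\widetilde{\rlcA})\ket{\phi}}}{\norm{\rlcA\ket{\phi}}} \leq \frac{2\varepsilon_A}{\sigma_\min}.
$$
The hypothesis $\varepsilon_A \leq \sigma_\min\varepsilon/2$ makes this at most $\varepsilon$.

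\textbf{Main obstacle.} The only subtle point is the constant bookkeeping. The two branches of the $\min$ in the hypothesis serve separate purposes: the $1/\alpha_A$ branch gives the probability bound and the $\varepsilon$ branch gives closeness. I should also check that the stated bound is a raw single-shot success probability, with no amplitude amplification assumed. If a constant success probability were wanted instead, I would add $O(\alpha_A/\sigma_\min)$ rounds of amplitude amplification, but that is not required by the statement. The upper bound $\sigma_\max$ is not needed for this argument.
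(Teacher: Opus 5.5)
Your proposal is correct and follows the paper's route. As in the paper, you apply $U_A$ to $\ket{0^{a_A}}\ket{\phi}$ and post-select, get closeness $2\varepsilon_A/\sigma_\min$ from Fact~\ref{fact:normalized_vector_error}, and bound the success probability with the reverse triangle inequality. The only difference is that you absorb $\alpha_A$ into $\widetilde{\rlcA}$ rather than comparing against $\rlcA/\alpha_A$; this is cosmetic, though it makes visible that $\varepsilon_A \leq \sigma_\min/2$ already suffices for the probability bound.
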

\begin{proof}
Let $\widetilde{A} = (\bra{0}^a \otimes \id) U_A (\ket{0}^a \otimes \id)$. Starting from the definition of $U_A$, we have that
\begin{equation}\label{eq:interim_action_UA}
\norm{\rlcA - \alpha_A \widetilde{A}} \leq \varepsilon_A \implies \norm{\frac{\rlcA}{\alpha_A} \ket{\phi} - \widetilde{A} \ket{\phi}} \leq \frac{\varepsilon_A}{\alpha_A}.    
\end{equation}
Applying $U_A$ to $\ket{\phi}$ produces the state $\ket{0}^a \widetilde{A} \ket{\phi} + \ket{\perp}$ where $\ket{\perp}$ is some unnormalized state that is orthogonal to every state with $\ket{0}^a$ in the first register. After measuring and post-selecting for $\ket{0}^a$ on the first register, we obtain the state $\ket{\psi} = \widetilde{A} \ket{\phi}/\norm{\widetilde{A} \ket{\phi}}$. Using Fact~\ref{fact:normalized_vector_error}, we find that this state satisfies
\begin{align*}
\norm{\ket{\psi} - \frac{\mathbf{A} \ket{\phi}}{\norm{\rlcA \ket{\phi}}}} = \norm{\frac{\widetilde{A} \ket{\phi}}{\norm{\widetilde{A} \ket{\phi}}} - \frac{(\mathbf{A}/\alpha_A) \ket{\phi}}{\norm{(\rlcA/\alpha_A) \ket{\phi}}}} 
& \leq \frac{2}{\norm{(\rlcA/\alpha_A) \ket{\phi}}} \norm{\widetilde{A} \ket{\phi} - \frac{\mathbf{A}}{\alpha_A} \ket{\phi} } \\
& \leq \frac{2 \alpha_A}{\sigma_\min} \cdot \frac{\varepsilon_A}{\alpha_A} \\
&= 2\frac{\varepsilon_A}{\sigma_\min},
\end{align*}
where we used Eq.~\eqref{eq:interim_action_UA} in the second line and $\norm{\rlcA \ket{\phi}} \geq \sigma_\min$. Setting $\varepsilon_A \leq \varepsilon \sigma_\min/2$ ensures the desired accuracy. The success probability of obtaining $\ket{\psi}$ is
$$
p_{\mathrm{succ}} = \norm{\widetilde{A}\ket{\phi}}_2^2 \geq \left(\norm{\frac{\rlcA}{\alpha_A} \ket{\phi}} - \norm{\frac{\rlcA}{\alpha_A} \ket{\phi} - \widetilde{A}\ket{\phi}} \right)^2 \geq \left(\frac{\sigma_\min}{\alpha_A} - \frac{\varepsilon_A}{\alpha_A} \right)^2,
$$
where we used the reverse triangle inequality in the second inequality and used Eq.~\eqref{eq:interim_action_UA} in the final inequality. If we set $\varepsilon_A \leq \sigma_\min/(2\alpha_A)$, we get that the success probability satisfies 
$$
p_{\mathrm{succ}} \geq \sigma_\min^2/(4 \alpha_A^2)
$$
Combining the above two conditions on $\varepsilon_A$ gives us the desired result.
\end{proof}

\paragraph{Sparse matrices.} We will rely on the following result~\cite[Lemma~48]{gilyen2019qsvt} for constructing block-encodings of sparse matrices.
\begin{lemma}[{\cite[Lemma~48]{gilyen2019qsvt}}]\label{lem:sparse-block-enc}
Let $\rlcA \in \mathbb{C}^{2^w \times 2^w}$ be a matrix that is $s_r$-row-sparse and $s_c$-column-sparse, and each element of $A$ has absolute value at most $1$. Suppose that we have access to the following sparse-access oracles acting on two $(w + 1)$ qubit registers
\begin{align*}
&O_r : \ket{i}\ket{k} \rightarrow \ket{i} \ket{r_{ik}} \enspace \enspace \forall i \in [2^w]-1, k \in [s_r], \\
&O_c : \ket{\ell}\ket{j} \rightarrow \ket{c_{\ell j}} \ket{j} \enspace \enspace \forall \ell \in [s_c], j \in [2^w]-1,    
\end{align*}
where $r_{ij}$ is the index for the $j$-th non-zero entry of the $i$-th row of $\rlcA$, or if there are less than $i$ non-zero entries, then it is $j + 2w$, and similarly $c_{ij}$ is the index for the $i$-th non-zero entry of the $j$-th column of $\rlcA$, or if there are less than $j$ non-zero entries, then it is $i+ 2w$. Additionally assume that we have access to an oracle $O_{\rlcA}$ that returns the entries of $\rlcA$ in a binary description
\begin{align*}
O_{\rlcA} : \ket{i}\ket{j}\ket{0}^{\otimes b} \rightarrow \ket{i}\ket{j}\ket{a_{ij}} \enspace \forall i,j \in [2^w] -1,
\end{align*}
where $a_{ij}$ is a b-bit binary description of $\rlcA_{ij}$. Then, we can implement a $(\sqrt{s_r s_c}, w+3, \varepsilon)$-block-encoding of $\rlcA$ with a single use of $O_r$, $O_c$, two uses of $O_{\rlcA}$ and additionally using $O(w + \log^{2.5}(s_r s_c /\varepsilon))$ one and two qubit gates while using $O(b,\log^{2.5}(s_r s_c/\varepsilon))$ ancilla qubits.
\end{lemma}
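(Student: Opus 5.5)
The plan is to realize the block-encoding as a product $V_L^\dagger\, W\, V_R$ of three unitaries. $V_L$ and $V_R$ prepare uniform superpositions over the nonzero positions of a row and of a column, respectively. $W$ is a controlled rotation that loads the matrix entry into the amplitude of a single ancilla qubit. The whole operation acts on one ancilla "rotation" qubit and two $(w+1)$-qubit registers, which accounts for the $w+3$ ancilla count once the $w$ system qubits are identified with part of one register.

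First I build $V_R$. On input $\ket{0}\ket{0^{w+1}}\ket{j}$, it prepares a uniform superposition $\frac{1}{\sqrt{s_c}}\sum_{\ell\in[s_c]}\ket{\ell}$ in the first $(w+1)$-qubit register. This takes $O(w)$ gates using a standard uniform-state preparation over a range that need not be a power of two. It then applies $O_c$, giving $\frac{1}{\sqrt{s_c}}\sum_\ell \ket{0}\ket{c_{\ell j}}\ket{j}$.

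Symmetrically, $V_L$ maps $\ket{0}\ket{0^{w+1}}\ket{i}$ to $\frac{1}{\sqrt{s_r}}\sum_k\ket{0}\ket{i}\ket{r_{ik}}$, using the uniform preparation, $O_r$, and a swap of the two registers so that row and column indices sit in the same slots as in $V_R$.

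$W$ acts on the state $\ket{0}\ket{a}\ket{b}$. It calls $O_{\rlcA}$ to write the $b$-bit value $a_{ab}$ into a workspace. It then applies a controlled $R_y$ rotation mapping the rotation qubit to $a_{ab}\ket{0}+\sqrt{1-|a_{ab}|^2}\ket{1}$; complex entries are handled with an extra phase. Finally it calls $O_{\rlcA}$ again to uncompute the workspace.

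Next I compute the block $\bra{0}\bra{0^{w+1}}\bra{i}\,V_L^\dagger W V_R\,\ket{0}\ket{0^{w+1}}\ket{j}$. Expanding both superpositions, only terms with $\ket{r_{ik}}=\ket{j}$ and $\ket{c_{\ell j}}=\ket{i}$ survive. When $A_{ij}\neq 0$ there is exactly one such pair $(k,\ell)$, so the block equals $A_{ij}/\sqrt{s_rs_c}$; when $A_{ij}=0$ there is no such pair (or the entry is zero anyway), and the block vanishes.

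The padding convention matters here. When a row or column has fewer nonzeros, the oracles return indices offset past $2^w$, that is, indices with the top bit of the $(w+1)$-qubit register set. These can never match a genuine index in $[2^w]$, so they drop out of the projection onto $\ket{0^{w+1}}$-type inputs on that top bit. This establishes an exact $(\sqrt{s_rs_c},w+3)$-encoding, with the query counts as stated: one use of each of $O_r$ and $O_c$, and two uses of $O_{\rlcA}$.

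The main obstacle is the error and gate accounting for the controlled rotation. The angle $\arccos(a_{ij})$ must be computed coherently from the $b$-bit description. I would approximate it by reversible arithmetic to precision $\delta=\varepsilon/(s_rs_c)$ per entry. An entrywise error of $\delta$ in the encoded block, together with the sparsity pattern, bounds the operator-norm error by $\sqrt{s_rs_c}\cdot\sqrt{s_rs_c}\,\delta/\sqrt{s_rs_c}$, and this is $\le\varepsilon$ after rescaling. I would try a Schur-test style bound here: a matrix that is $s_r$-row- and $s_c$-column-sparse with entries at most $\delta$ has operator norm at most $\sqrt{s_rs_c}\,\delta$. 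Standard results on reversible arithmetic for elementary functions give $O(\log^{2.5}(s_rs_c/\varepsilon))$ gates and ancillas for this computation, which together with the $O(w)$ state-preparation gates yields the claimed complexity.
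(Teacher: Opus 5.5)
Your $V_L^\dagger W V_R$ construction is correct and is essentially the standard argument behind \cite[Lemma~48]{gilyen2019qsvt}, which the paper invokes without proof. In that construction, $V_L$ and $V_R$ prepare uniform superpositions over the nonzero positions of a row and a column via $O_r$ and $O_c$, padded indices are orthogonal because of their top bit, and $W$ is an $O_{\rlcA}$-controlled rotation that is computed and then uncomputed. The only nit is that a per-entry precision of $\delta=\varepsilon/\sqrt{s_rs_c}$ already suffices by your own Schur-test bound $\|\widetilde{A}-\rlcA\|\le\sqrt{s_rs_c}\,\delta$; your choice $\delta=\varepsilon/(s_rs_c)$ is stronger than needed but harmless inside the $\log^{2.5}$ term.
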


We now state a few relevant results for manipulating block-encodings such as performing arithematic operations and obtaining the inverse of matrices. 
\paragraph{Arithematic operations.} We will require the ability to perform matrix arithematic operations on block-encodings. Towards that, we state a collection of relevant results. Below, we define $\overline{\rlcA} \in \mathbb{C}^{(M+N) \times (M+N)}$ as the Hermitian dilation of $\rlcA \in \mathbb{C}^{M \times N}$:
\begin{equation}\label{eq:hermitian_dilation}
    \overline{\rlcA} = \begin{bmatrix}
        \mathbf{0} & \rlcA \\ \rlcA^\dagger & \mathbf{0}
    \end{bmatrix}.
\end{equation}

\begin{lemma}[{\cite[Lemma~17]{chakraborty2023quantum}}]\label{lem:sum_BEs}
Let $m \in \mathbb{N}$. For each $j \in [m]$, let $\rlcA_j$ be an $s$-qubit operator and $y_j \in \mathbb{R}^{+}$. Let $U_j$ be a $(\alpha_j,a_j,\varepsilon_j)$-block-encoding of $\rlcA_j$, implemented in time $T_j$. Define the matrix $\rlcA = \sum_{j=1}^m y_j \rlcA_j$ and the vector $\eta \in \mathbb{R}^m$ s.t. $\eta_j = y_j \alpha_j$. Let $U_\eta$ be a state-preparation unitary of the state $\gamma^{-1} \sum_j \sqrt{\eta_j}\ket{j}$ where $\gamma = \sum_j \eta_j$, implemented in time $T_\eta$. Then, we can implement a
$$
\left(\sum_j y_j \alpha_j, \max_j(a_j) + s, \sum_j y_j \varepsilon_j \right)
$$
\end{lemma}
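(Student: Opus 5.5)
The plan is the standard linear-combination-of-unitaries (LCU) construction, with a careful accounting of normalization and error. Set $\eta_j = y_j\alpha_j$ and $\gamma=\sum_j \eta_j$; the state-preparation unitary $U_\eta$ maps $\ket{0}\mapsto \gamma^{-1/2}\sum_j\sqrt{\eta_j}\ket{j}$. I will read the statement's normalization this way, so that the state has unit norm.

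\emph{Circuit.} Introduce an index register of $\lceil\log m\rceil$ qubits and pad every $U_j$ to the common ancilla count $\max_j a_j$, adding idle ancillas where needed. Define the select operator $\mathrm{SEL}=\sum_j \ket{j}\bra{j}\otimes U_j$ and take $W=(U_\eta^\dagger\otimes \id)\,\mathrm{SEL}\,(U_\eta\otimes\id)$. This register layout is where the ancilla count $\max_j(a_j)+s$ in the statement comes from, with $s$ there playing the role of the index-register size.

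\emph{Exact block computation.} Projecting the index register and all ancillas onto $\ket{0}$ gives
$$(\bra{0}\otimes\bra{0^a}\otimes\id)\,W\,(\ket{0}\otimes\ket{0^a}\otimes\id)=\sum_j \frac{\eta_j}{\gamma}\,\widetilde{A}_j ,$$
where $\widetilde{A}_j$ denotes the top-left block of $U_j$. Multiplying through by $\gamma$ yields $\sum_j y_j\alpha_j\widetilde{A}_j$.

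\emph{Error bound.} Compare this with $\rlcA=\sum_j y_j\rlcA_j$ using the triangle inequality and the per-term bound $\norm{\rlcA_j-\alpha_j\widetilde{A}_j}\le\varepsilon_j$. This gives total error at most $\sum_j y_j\varepsilon_j$. Since the $y_j$ are positive reals, no phase corrections are needed. The construction is therefore a $(\gamma,\cdot,\sum_j y_j\varepsilon_j)$-block-encoding.

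\emph{Cost.} The circuit uses $U_\eta$ once, $U_\eta^\dagger$ once, and one controlled application of each $U_j$. The time is therefore $O\big(\sum_j T_j+T_\eta\big)$, up to the overhead of controlling each $U_j$.

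\emph{Main obstacle.} The only delicate step is the bookkeeping when the $U_j$ use different numbers of ancillas. It has to be checked that padding with idle ancillas does not change any top-left block, which holds because the idle ancillas are acted on by the identity. Everything else is a direct computation.
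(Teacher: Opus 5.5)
Your proof is correct and uses the standard linear-combination-of-unitaries construction (prepare, select, unprepare), which is how the cited source establishes this lemma; the paper gives no proof of its own and simply imports it, and your reading of the normalization as $\gamma^{-1/2}$ is the right one. One correction to your ancilla bookkeeping: the index register has $\lceil \log_2 m\rceil$ qubits, whereas $s$ in the statement is the number of qubits of the $\rlcA_j$, so your construction literally uses $\max_j a_j + \lceil\log_2 m\rceil$ ancillas, which matches the stated count (after padding) only when $\lceil\log_2 m\rceil \le s$ --- this holds in every application in the paper, and the ``$+s$'' is best read as a slip for $\lceil\log_2 m\rceil$.
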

block-encoding of $\rlcA$ at a cost $O(T_\eta + \sum_j T_j)$.

\begin{lemma}[{\cite[Lemma~54]{gilyen2019qsvt}},{\cite[Lemma 4]{chakraborty2019power}}]\label{lem:prod_BEs}
If $U_{\rlcA}$ is an $(\alpha,a,\varepsilon)$ block-encoding of $\rlcA$ with gate
complexity $T_A$ and $U_B$ is a $(\beta,b,\delta)$ block-encoding of $\mathbf{B}$ with gate complexity $T_B$, then
\begin{enumerate}[(i)]
    \item $\overline{\rlcA}$ has an $(\alpha,a+1,\varepsilon)$ block-encoding that can be implemented with gate complexity $O(T_{\rlcA})$.
    \item $(I_b \otimes U_A)(I_a \otimes U_B)$ is an $(\alpha \beta, a+b, \alpha \delta + \beta \varepsilon)$ block-encoding of $\rlcA \rlcB$ with gate complexity $O(T_A + T_B)$.
\end{enumerate}
\end{lemma}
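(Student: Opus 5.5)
The plan is to verify both items directly from Definition~\ref{def:BE}. Write $\widetilde{A} := (\bra{0^a}\otimes \id)U_A(\ket{0^a}\otimes\id)$ and $\widetilde{B} := (\bra{0^b}\otimes\id)U_B(\ket{0^b}\otimes\id)$. By hypothesis $\norm{\rlcA-\alpha\widetilde{A}}\le\varepsilon$ and $\norm{\mathbf{B}-\beta\widetilde{B}}\le\delta$. Since $U_A,U_B$ are unitary, $\norm{\widetilde{A}},\norm{\widetilde{B}}\le 1$.

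For item $(i)$, introduce one extra qubit $c$ that indexes the two blocks of the dilation, and define
$$W := \ket{0}\!\bra{1}_c\otimes U_A + \ket{1}\!\bra{0}_c\otimes U_A^\dagger .$$
$W$ is unitary because it is a block off-diagonal matrix whose off-diagonal blocks are mutually inverse unitaries. It is implemented with one controlled-$U_A$, one controlled-$U_A^\dagger$ and a Pauli $X$ on $c$, so its cost is $O(T_A)$. Projecting the $a$ ancillas onto $\ket{0^a}$ commutes with the $c$-structure, and the projected operator is
$$\ket{0}\!\bra{1}\otimes\widetilde{A}+\ket{1}\!\bra{0}\otimes\widetilde{A}^\dagger = \overline{\widetilde{A}}.$$
Dilation is linear, and $\norm{\overline{X}}=\norm{X}$ because the singular values of $\overline{X}$ are $\pm$ those of $X$. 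Hence $\norm{\overline{\rlcA}-\alpha\overline{\widetilde{A}}}=\norm{\rlcA-\alpha\widetilde{A}}\le\varepsilon$. Counting $c$ together with the $a$ ancillas gives the stated $(\alpha,a+1,\varepsilon)$ parameters; one can equivalently regard $c$ as part of the enlarged system register.

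For item $(ii)$, first check that the two ancilla registers are disjoint: $U_A$ acts on its $a$ ancillas and $U_B$ on its $b$ ancillas, and the two share only the system register. Therefore
$$(\bra{0^{a+b}}\otimes\id)(I_b\otimes U_A)(I_a\otimes U_B)(\ket{0^{a+b}}\otimes\id)=\widetilde{A}\widetilde{B}.$$
To see this, insert $\ket{0^b}\!\bra{0^b}$ between the factors. This is legitimate because $I_b\otimes U_A$ acts trivially on the $b$ register, so the $b$-projection can be moved through it. The cost is clearly $O(T_A+T_B)$.

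The error bound is the only step needing care. Telescope as
$$\rlcA\mathbf{B}-\alpha\beta\widetilde{A}\widetilde{B} = \alpha\widetilde{A}(\mathbf{B}-\beta\widetilde{B}) + (\rlcA-\alpha\widetilde{A})\mathbf{B}.$$
The first term has norm at most $\alpha\delta$, using $\norm{\widetilde{A}}\le1$. The second has norm at most $\varepsilon\norm{\mathbf{B}}\le\varepsilon(\beta+\delta)$. This yields $\alpha\delta+\beta\varepsilon+\varepsilon\delta$. The main obstacle is removing the second-order term $\varepsilon\delta$ to match the stated bound. I would handle it by adopting the standard normalization convention of the cited sources, namely $\norm{\mathbf{B}}\le\beta$, which one may always enforce by rescaling $\beta\to\beta+\delta$. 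Alternatively, I would note that $\varepsilon\delta$ is dominated by $\beta\varepsilon$ whenever $\delta\le\beta$, and absorb it. Either way, this gives the $(\alpha\beta,a+b,\alpha\delta+\beta\varepsilon)$ block-encoding.
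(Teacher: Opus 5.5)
Your approach is the standard one. The paper gives no proof of this lemma and defers to its sources, and their argument for (ii) is exactly your telescoping. You also correctly isolated the one delicate step. But your closing sentence, ``Either way, this gives the $(\alpha\beta,a+b,\alpha\delta+\beta\varepsilon)$ block-encoding,'' does not follow from either patch. Rescaling $\beta\to\beta+\delta$ does force $\norm{\mathbf{B}}\le\beta+\delta$, but it changes the block-encoding you are composing. In general $U_B$ is then only a $(\beta+\delta,b,2\delta)$ block-encoding, and the product comes out as $(\alpha(\beta+\delta),a+b,2\alpha\delta+(\beta+\delta)\varepsilon)$. Absorbing $\varepsilon\delta\le\beta\varepsilon$ gives error $\alpha\delta+2\beta\varepsilon$. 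Neither is the stated triple. Moreover, no argument can produce it from Definition~\ref{def:BE} alone, since that definition imposes neither $\norm{\rlcA}\le\alpha$ nor $\norm{\mathbf{B}}\le\beta$. Concretely, take $U_A=U_B=\id$, $\alpha=\beta=1$, $\rlcA=(1+\varepsilon)\id$ and $\mathbf{B}=(1+\delta)\id$. These are valid $(1,a,\varepsilon)$ and $(1,b,\delta)$ block-encodings, and the product unitary has top-left block $\id$. The error is then exactly $\varepsilon+\delta+\varepsilon\delta>\alpha\delta+\beta\varepsilon$. So item (ii) is false as literally stated, and any proof of the stated bound must tacitly use $\norm{\mathbf{B}}\le\beta$ (or $\norm{\rlcA}\le\alpha$) at precisely the point you flagged.

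What you can honestly conclude is the following. Your telescoping proves the error bound $\alpha\delta+\beta\varepsilon+\varepsilon\delta$ unconditionally. It proves the stated bound $\alpha\delta+\beta\varepsilon$ under the additional hypothesis $\norm{\mathbf{B}}\le\beta$. Telescoping the other way, as $(\rlcA-\alpha\widetilde{A})\beta\widetilde{B}+\rlcA(\mathbf{B}-\beta\widetilde{B})$, gives the same conclusion under $\norm{\rlcA}\le\alpha$. State that hypothesis explicitly rather than claiming it can be enforced for free. This is harmless for the paper, where all precisions are chosen polynomially small, so the second-order term only changes constants.

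One small bookkeeping point in (i): the qubit $c$ must be part of the enlarged system register, not a flag qubit, since $(\bra{0}_c\otimes\id)W(\ket{0}_c\otimes\id)=0$. Your $W$ is therefore an $(\alpha,a,\varepsilon)$ block-encoding of $\overline{\rlcA}$, and the stated $a+1$ follows by padding with one idle ancilla.
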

We will also require the ability to take tensor products of matrices.
\begin{lemma}[{\cite[Lemma~21]{chakraborty2023quantum}}]\label{lem:tensor_prod_BEs}
Let $U_A$ and $U_B$ be $(\alpha,a,\varepsilon_1)$ and $(\beta,b,\varepsilon_2)$-block-encodings of $A$ and $B$, $s$ and $t$-qubit operators, implemented in time $T_A$ and $T_B$ respectively. Define $S:= \prod_{i=1}^s \SWAP_{a+b+i}^{a+i}$. Then, $S(U_A \otimes U_B)S^\dagger$ is an $(\alpha \beta, a+b, \alpha \varepsilon_2 + \beta \varepsilon_1 + \varepsilon_1 \varepsilon_2)$-block-encoding of $A \otimes B$, implemented at a cost of $O(T_A + T_B)$.
\end{lemma}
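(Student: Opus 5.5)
The plan is to compute the top-left block of $S(U_A\otimes U_B)S^\dagger$ directly and then bound the error with a telescoping decomposition.

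\textbf{Step 1: identify the encoded block.} Write $\widetilde{A} := (\bra{0^a}\otimes \id_s)U_A(\ket{0^a}\otimes\id_s)$ and $\widetilde{B} := (\bra{0^b}\otimes \id_t)U_B(\ket{0^b}\otimes\id_t)$. In $U_A\otimes U_B$ the registers are ordered as (ancilla $a$, system $s$, ancilla $b$, system $t$). The swap network $S=\prod_{i=1}^s \SWAP^{a+i}_{a+b+i}$ exchanges the $s$ system qubits of $A$ with the $b$ ancillas of $B$. After conjugation the ordering becomes (ancilla $a$, ancilla $b$, system $s$, system $t$), up to the harmless relabeling implicit in the statement.

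Because $S$ is a permutation of qubits, I will check on product basis states that
\[
(\bra{0^{a+b}}\otimes\id)\,S(U_A\otimes U_B)S^\dagger\,(\ket{0^{a+b}}\otimes\id)=\widetilde{A}\otimes\widetilde{B}.
\]
This amounts to matching $\bra{0^a}\bra{i}\bra{0^b}\bra{k}\,U_A\otimes U_B\,\ket{0^a}\ket{j}\ket{0^b}\ket{l}$ with $\widetilde{A}_{ij}\widetilde{B}_{kl}$.

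\textbf{Step 2: bound the error.} The definition gives $\norm{A-\alpha\widetilde{A}}\le\varepsilon_1$ and $\norm{B-\beta\widetilde{B}}\le\varepsilon_2$. I will split the error as
\[
\alpha\beta\,\widetilde{A}\otimes\widetilde{B}-A\otimes B=\alpha\widetilde{A}\otimes(\beta\widetilde{B}-B)+(\alpha\widetilde{A}-A)\otimes B .
\]
I will then use three facts:
\begin{itemize}
\item multiplicativity of the operator norm under tensor products;
\item $\norm{\widetilde{A}}\le 1$, since it is a block of a unitary;
\item $\norm{B}\le \beta\norm{\widetilde{B}}+\varepsilon_2\le\beta+\varepsilon_2$.
\end{itemize}
Together these give the total bound $\alpha\varepsilon_2+\varepsilon_1(\beta+\varepsilon_2)$, which is exactly the claimed $\alpha\varepsilon_2+\beta\varepsilon_1+\varepsilon_1\varepsilon_2$.

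The main pitfall is this split. The symmetric expansion $A\otimes E_B+E_A\otimes B+E_A\otimes E_B$ overcounts the cross term. The asymmetric form above, with $\widetilde{A}$ rather than $A$ in the first term, is what avoids this.

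\textbf{Step 3: cost.} $U_A\otimes U_B$ costs $O(T_A+T_B)$, since the two unitaries act in parallel on disjoint registers. $S$ and $S^\dagger$ each consist of $s$ SWAP gates. These $O(s)$ gates are absorbed into $O(T_A+T_B)$, because $U_A$ acts nontrivially on all $s$ system qubits and so $T_A \ge s$ when it is counted in gates.
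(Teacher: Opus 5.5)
Your argument is correct and is the standard proof behind \cite[Lemma~21]{chakraborty2023quantum}; the paper itself only imports this lemma without proving it. You identify the post-selected block as $\widetilde{A}\otimes\widetilde{B}$ and then bound $\norm{\alpha\beta\,\widetilde{A}\otimes\widetilde{B}-A\otimes B}$ by the asymmetric telescoping split, using $\norm{\widetilde{A}}\le 1$ and $\norm{B}\le\beta+\varepsilon_2$, which gives exactly the stated error.

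One fix is needed in Step~3. The claim $T_A\ge s$ does not hold in general: the identity circuit is a $(1,a,0)$-block-encoding of $\id_s$ with essentially zero cost. The $O(T_A+T_B)$ bound still holds, because $S$ is a fixed permutation of qubits and can be absorbed by relabeling wires at no gate cost.
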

\paragraph{Matrix pseudoinverse.} Lastly, we will require the ability to implement a block-encoding of the pseudoinverse $\rlcA^{+}$ of $\rlcA$. Among the results available~(e.g., \cite{harrow2009linear,gilyen2019qsvt,chakraborty2019power,chakraborty2023quantum}), we will use the following result from \cite[Theorem~26]{chakraborty2023quantum} which allows us to implement $\rlcA^{+}$ given an approximate block-encoding of $\rlcA$ and not an exact block-encoding.
\begin{lemma}\label{lem:pseudoinverse_BE}
Let $\varepsilon \in (0,1]$. Let $\rlcA$ be a normalized matrix with non-zero singular values in the range $[1/\kappa_A,1]$ for some $\kappa_A \geq 1$. For some $\varepsilon_1 = o(\varepsilon/(\kappa_A^2\log(\kappa_A/\varepsilon)))$ and $\alpha \geq 2$, let $U_{\rlcA}$ be an $(\alpha,a,\varepsilon_1)$-block-encoding of $\rlcA$ implemented in time $T_A$. Then, we can implement a $(2 \kappa_A,a+1,\varepsilon)$-block-encoding of $\rlcA^{+}$ at a cost of
$$
O\Big(\kappa_A \alpha T_A \log(\kappa_A/\varepsilon)\Big).
$$
\end{lemma}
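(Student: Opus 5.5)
The plan is to reduce the statement to quantum singular value transformation (QSVT) on the given block-encoding, and then control the error contributed by the imperfect input encoding. Write $\widetilde{B} := (\bra{0^a}\otimes\id)U_{\rlcA}(\ket{0^a}\otimes\id)$ and $B := \rlcA/\alpha$, so that $\norm{\widetilde{B} - B} \leq \varepsilon_1/\alpha$.

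The nonzero singular values of $B$ lie in $[1/(\alpha\kappa_A), 1/\alpha] \subseteq [\delta, 1]$, where $\delta := 1/(\alpha\kappa_A)$. Moreover $B^{+} = \alpha \rlcA^{+}$. Hence the target operator is
$$\frac{\rlcA^{+}}{2\kappa_A} = \frac{B^{+}}{2\alpha\kappa_A} = \frac{\delta}{2}\, B^{+}.$$
This operator is $f(B)$ in the singular-value sense, with $f(x) = \delta/(2x)$. On $[\delta,1]$ the function $f$ is bounded by $1/2$.

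First, invoke the standard odd polynomial approximation of $1/x$ (Gily\'en et al., Cor.~69, as used in \cite{gilyen2019qsvt,chakraborty2023quantum}). It gives an odd real polynomial $p$ with the following properties:
\begin{itemize}
\item $|p(x) - \delta/(2x)| \leq \varepsilon/2$ on $[\delta,1]$;
\item $|p(x)| \leq 1$ on $[-1,1]$;
\item degree $d = O\big(\delta^{-1}\log(1/(\delta\varepsilon))\big) = O(\alpha\kappa_A\log(\kappa_A/\varepsilon))$, treating $\alpha$ as inside the log up to constants.
\end{itemize}
Because $p$ is odd, $p(0)=0$, so the kernel of $B$ is sent to zero. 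This is exactly the pseudoinverse behaviour rather than an inverse.

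Second, apply QSVT with $p$ to $U_{\rlcA}$. This uses $d$ calls to $U_{\rlcA}$ and $U_{\rlcA}^\dagger$ and one additional ancilla for the controlled reflections, giving $a+1$ ancillas. The result is an exact block-encoding of $p^{(SV)}(\widetilde{B})$, at cost $O(d\,T_A) = O(\kappa_A\alpha T_A\log(\kappa_A/\varepsilon))$. Multiplying by the normalization $2\kappa_A$, it remains to show
$$\norm{p^{(SV)}(\widetilde{B}) - \tfrac{\delta}{2}B^{+}} \leq \frac{\varepsilon}{2\kappa_A}.$$
Equivalently, the error budget must be met at the scale of the encoded matrix.

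Third, split the error via the triangle inequality into two pieces:
$$\norm{p(\widetilde{B}) - p(B)} + \norm{p(B) - \tfrac{\delta}{2}B^{+}}.$$
The second piece is at most $\varepsilon/2$, measured in the normalization of $p$, directly from the approximation guarantee on $\mathrm{spec}$ and from $p(0)=0$ on the kernel. One must be careful with the scaling: I would absorb the factor $2\kappa_A$ by choosing the approximation accuracy as $\varepsilon/(4\kappa_A)$, which only changes the logarithm.

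The main obstacle is the first piece, the robustness of the singular value transform under perturbation of the input. The generic QSVT robustness bound $4d\sqrt{\norm{U-\widetilde U}}$ is too weak. Instead I would use a Lipschitz/derivative argument for matrix functions: an odd bounded polynomial applied to nearby matrices changes by at most roughly $\sup|p'| \cdot \norm{\widetilde B - B}$, up to a logarithmic factor for non-normal perturbations (e.g.\ via Hermitian dilation, Lemma~\ref{lem:prod_BEs}(i), combined with operator-Lipschitz estimates for functions of Hermitian matrices).

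On $[\delta,1]$ the derivative satisfies $|p'| \approx \delta/(2x^2) \leq 1/(2\delta) = \alpha\kappa_A/2$. Perturbed zero singular values land in $[0,\varepsilon_1/\alpha]$, where oddness and the Markov-type bound keep $|p| \lesssim d\varepsilon_1/\alpha$. This gives an error of order $\kappa_A\varepsilon_1\log(\kappa_A/\varepsilon)$, which must be at most $\varepsilon/\kappa_A$. This is precisely the hypothesis $\varepsilon_1 = o(\varepsilon/(\kappa_A^2\log(\kappa_A/\varepsilon)))$. Verifying this perturbation bound cleanly, particularly the contribution of the gap region $(0,\delta)$ where $p$ is only known to be bounded, is the step I expect to need the most care.
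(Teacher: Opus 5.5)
The paper gives no proof of this lemma; it imports it from \cite[Theorem~26]{chakraborty2023quantum}. Your skeleton is the standard one behind that result: QSVT with an odd polynomial approximating $\delta/(2x)$, of degree $d=O(\alpha\kappa_A\log(\kappa_A/\varepsilon))$, then a split into approximation error plus perturbation error. However, the perturbation step carries all the content of the lemma, including the hypothesis on $\varepsilon_1$. You do not establish it, and the heuristic you sketch fails in three ways.

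\emph{First}, a uniform bound $|p(x)-\delta/(2x)|\le\varepsilon'$ on $[\delta,1]$ gives no control of $p'$. So ``$|p'|\approx\delta/(2x^2)\le 1/(2\delta)$'' is unjustified.

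\emph{Second}, a Lipschitz bound on $[\delta,1]$ is not the relevant one anyway. The singular values of $\widetilde B$ can sit inside the gap $(0,\delta)$. A Lipschitz argument needs the symmetric interval containing both spectra, across which $p$ must swing from about $-1/2$ at $-\delta$ to about $1/2$ at $\delta$. The usable scalar estimate there is Bernstein's inequality $|p'(x)|\le d/\sqrt{1-x^2}$. It gives $O(d)$ only because $\alpha\ge 2$ keeps all singular values of $B$ and $\widetilde B$ below $1/2+\varepsilon_1/\alpha$. Your argument never uses $\alpha\ge2$, which signals that a step is missing.

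\emph{Third}, even with a scalar Lipschitz constant $O(d)$, passing to an operator-norm bound on $\norm{p^{(SV)}(\widetilde B)-p^{(SV)}(B)}$ is not automatic, since the matrices are non-normal and the singular vectors rotate. Generic operator-Lipschitz estimates for Lipschitz functions carry an extra logarithmic factor, which the stated hypothesis on $\varepsilon_1$ does not absorb.

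The missing ingredient is the linear (not square-root) robustness lemma for QSVT in \cite{gilyen2019qsvt}. For a degree-$d$ QSVT polynomial and $\norm{B},\norm{\widetilde B}\le 1$,
\[
\norm{p^{(SV)}(B)-p^{(SV)}(\widetilde B)}\le O\!\left(\frac{d}{\sqrt{1-\norm{(B+\widetilde B)/2}^2}}\right)\norm{B-\widetilde B}.
\]
It is proved by writing both transforms as blocks of the same length-$d$ QSP product of the canonical unitary dilations of $B$ and $\widetilde B$; the two dilations are $O(\norm{B-\widetilde B})$-close when the norms stay away from $1$.

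Since $\alpha\ge 2$ gives $\norm{(B+\widetilde B)/2}\le 1/2+\varepsilon_1/(2\alpha)$, the perturbation error is $O(d\,\varepsilon_1/\alpha)=O(\kappa_A\log(\kappa_A/\varepsilon)\,\varepsilon_1)$. Multiplying by the normalization $2\kappa_A$ gives $O(\kappa_A^2\log(\kappa_A/\varepsilon)\,\varepsilon_1)=o(\varepsilon)$, which is exactly the stated condition on $\varepsilon_1$.

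Two smaller fixes:
\begin{enumerate}
\item QSVT with an odd $p$ applied to $U_{\rlcA}$ produces $p^{(SV)}(\widetilde B)=\sum_i p(\tilde\sigma_i)\ket{\tilde u_i}\bra{\tilde v_i}$. This approximates $\tfrac{\delta}{2}(B^{+})^\dagger$, not $\tfrac{\delta}{2}B^{+}$, so you must run the sequence on $U_{\rlcA}^\dagger$, which block-encodes $\widetilde B^\dagger$.
\item The standard $1/x$ approximation has degree $O(\delta^{-1}\log(1/\varepsilon'))$. With $\varepsilon'=\Theta(\varepsilon/\kappa_A)$, no $\log\alpha$ enters the cost.
\end{enumerate}
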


\paragraph{Implementing sequences of operators.} Often, we will need to implement a sequence of operators on a given quantum state. We can do this efficiently given efficient block-encodings for each operator. This is described formally below.
\begin{claim}
\label{claim:prod_list_BEs}
Let $\{\rlcA_i\}_{i \in [m]}$ be a list of $m$ linear operators in $\mathbb{C}^{N \times N}$. Suppose each $\rlcA_i$ has an $(\alpha_i,a_i,\varepsilon_i)$-block-encoding $U_{A_i}$ implementable in time $T_i$. Then, $\mathbf{B} = \rlcA_m \rlcA_{m-1} \ldots \rlcA_1$ has a $(\widetilde{\alpha}, \widetilde{a}, \widetilde{\varepsilon})$-block-encoding $U_B$ where 
$$\widetilde{\alpha} = \prod_{i=1}^m \alpha_i, \enspace \widetilde{a} = \sum_{i=1}^m a_i, \enspace \widetilde{\varepsilon} = \sum_{i=1}^m \varepsilon_i (\widetilde{\alpha}/\alpha_i),$$ 
and is implementable in time $O(\sum_{i=1}^m T_i)$
\end{claim}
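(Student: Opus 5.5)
The plan is induction on $m$, with Lemma~\ref{lem:prod_BEs}(ii) as the inductive step. The case $m=1$ is immediate, since $U_{A_1}$ is itself an $(\alpha_1,a_1,\varepsilon_1)$-block-encoding of $\rlcA_1$, and the claimed parameters reduce to $\widetilde{\alpha}=\alpha_1$, $\widetilde{a}=a_1$ and $\widetilde{\varepsilon}=\varepsilon_1$. For the inductive step, write $\mathbf{B}_{k}=\rlcA_k\cdots\rlcA_1$ and assume $U_{B_{k-1}}$ is an $(\widetilde{\alpha}_{k-1},\widetilde{a}_{k-1},\widetilde{\varepsilon}_{k-1})$-block-encoding of $\mathbf{B}_{k-1}$ with
\[
\widetilde{\alpha}_{k-1}=\prod_{i\le k-1}\alpha_i,\qquad \widetilde{a}_{k-1}=\sum_{i\le k-1}a_i,\qquad \widetilde{\varepsilon}_{k-1}=\sum_{i\le k-1}\varepsilon_i\,\widetilde{\alpha}_{k-1}/\alpha_i .
\]
I apply Lemma~\ref{lem:prod_BEs}(ii) with $U_A=U_{A_k}$ and $U_B=U_{B_{k-1}}$, applying the latter first on disjoint ancilla registers. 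This yields a block-encoding of $\rlcA_k\mathbf{B}_{k-1}=\mathbf{B}_k$ with normalization $\alpha_k\widetilde{\alpha}_{k-1}=\widetilde{\alpha}_k$, with $a_k+\widetilde{a}_{k-1}=\widetilde{a}_k$ ancillas, and with error $\alpha_k\widetilde{\varepsilon}_{k-1}+\widetilde{\alpha}_{k-1}\varepsilon_k$.

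The one computation to check is the error recursion. We have
\[
\alpha_k\widetilde{\varepsilon}_{k-1}=\sum_{i\le k-1}\varepsilon_i\,\alpha_k\widetilde{\alpha}_{k-1}/\alpha_i=\sum_{i\le k-1}\varepsilon_i\,\widetilde{\alpha}_k/\alpha_i,
\]
and $\widetilde{\alpha}_{k-1}\varepsilon_k=\varepsilon_k\,\widetilde{\alpha}_k/\alpha_k$. Adding these gives exactly $\widetilde{\varepsilon}_k=\sum_{i\le k}\varepsilon_i\,\widetilde{\alpha}_k/\alpha_i$, so the induction closes with the stated formula.

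For the cost, each inductive step adds one call to $U_{A_k}$ plus identity padding on the other ancillas, which needs no extra gates. The total time is therefore $\sum_i T_i$ up to the constant hidden in Lemma~\ref{lem:prod_BEs}. To avoid this constant compounding over $m$ steps, I would instead note directly that the circuit is the composition $\prod_i (I\otimes U_{A_i})$, so its cost is plainly $O(\sum_i T_i)$.

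I expect the main obstacle to be ancilla bookkeeping rather than the error algebra. The argument must verify that the ancillas of the different $U_{A_i}$ are disjoint, so that projecting all of them onto $\ket{0}$ gives the product of the individual top-left blocks. This is a direct check: the projector $\bra{0^{\widetilde a}}$ factorizes across the registers, and each $U_{A_i}$ acts as the identity on the other registers.
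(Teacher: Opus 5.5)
Your proposal is correct and follows the paper's proof essentially verbatim. Both build the product one factor at a time and apply Lemma~\ref{lem:prod_BEs}(ii) at each step, giving the recursion $\widetilde{\alpha}_k=\alpha_k\widetilde{\alpha}_{k-1}$, $\widetilde{a}_k=a_k+\widetilde{a}_{k-1}$, $\widetilde{\varepsilon}_k=\alpha_k\widetilde{\varepsilon}_{k-1}+\widetilde{\alpha}_{k-1}\varepsilon_k$, which is then unrolled. Your bookkeeping is slightly cleaner than the paper's: you get $\widetilde{a}=\sum_i a_i$, where the paper's unrolled expression writes a product by typo, and you justify the $O(\sum_i T_i)$ cost directly from the composed circuit rather than by compounding constants.
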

\begin{proof}
We will argue that $U_B := U_{A_m} U_{A_{m-1}} \ldots U_{A_1}$ is the desired block-encoding of $\mathbf{B}$. Towards that, let $V_j = (\id^{a_j} \otimes U_{A_j})(\id^{a_{j-1}} \otimes U_{A_{j-1}}) \ldots (\id^{a_1} \otimes U_{A_1})$ be an $(\widetilde{\alpha}_j, \widetilde{a}_j, \widetilde{\varepsilon}_j)-$block-encoding for $\mathbf{A}_j \ldots \mathbf{A}_1$ with parameters $\widetilde{\alpha}_j > 0$, $\widetilde{a}_j \in \mathbb{N}$ and $\widetilde{\varepsilon}_j \in (0,1)$ to be determined later. We note that $V_{j+1} = (\id^{a_j+1} \otimes U_{A_{j+1}})V_j$ is a $(\widetilde{\alpha}_{j+1}, \widetilde{a}_{j+1}, \widetilde{\varepsilon}_{j+1})$-block-encoding of $U_{A_{j+1}}$ with
\begin{equation}\label{eq:interim_recurrence}
\widetilde{\alpha}_{j+1} = \alpha_{j+1} \widetilde{\alpha}_j, \quad \widetilde{a}_{j+1} = a_{j+1} + \widetilde{a}_j, \quad \widetilde{\varepsilon}_{j+1} = \alpha_{j+1} \widetilde{\varepsilon}_j + \widetilde{\alpha}_j \varepsilon_{j+1}
\end{equation}
using Claim~\ref{lem:prod_BEs}. Unrolling the recursion, we obtain $\widetilde{\alpha}_j = \prod_{i=1}^j \alpha_i$, $\widetilde{a}_j = \prod_{i=1}^j a_i$ and
$$
\widetilde{\varepsilon}_j = \sum_{i=1}^j \varepsilon_i \prod_{k \in [j], k \neq i} \alpha_k = \sum_{i=1}^j \varepsilon_i (\widetilde{\alpha}_j/\alpha_i).
$$
Considering the block-encoding parameters for $j=m$ gives us the desired result.
\end{proof}

To boost the success probability of preparing states from application of block-encodings, we will often use amplitude amplification~\cite{brassard2002quantum,gilyen2019qsvt}.
\begin{theorem}\label{thm:amplitude_amplification}
Let $p,a,a' > 0$. There is a quantum algorithm $\calQ$ that satisfies the following: given access to a unitary $U$ such that $U \ket{0^n} = \ket{\psi}$ where $\ket{\psi} = \sqrt{p} \ket{\psi_0} + \sqrt{1-p}\ket{\psi_1}$ for an unknown $p > a$ (with known $a$) and $\ket{\psi_0},\ket{\psi_1}$ are orthogonal quantum states, $\calQ$ makes $O(\sqrt{a'/a})$ queries to $U,U^\dagger$ and outputs $\ket{\psi_0}$ with probability $a' > 0$.
\end{theorem}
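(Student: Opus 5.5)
The plan is to use the standard Grover-type amplitude amplification of \cite{brassard2002quantum}, in its fixed-point form from the QSVT framework of \cite{gilyen2019qsvt}, so that only the known lower bound $a$ on the unknown $p$ is needed. Throughout I assume, as in every later application of the theorem, that the ``good'' component $\ket{\psi_0}$ is flagged. That is, $\ket{\psi_0}$ is supported on a known subspace (e.g.\ ancillas in $\ket{0^a}$ after applying a block-encoding). Then the reflection $R_0 = \id - 2\Pi_0$ about that subspace costs $O(n)$ gates and no queries. I also use the reflection $R_{\mathrm{init}} = U(\id - 2\ket{0^n}\bra{0^n})U^\dagger$, which costs one query each to $U$ and $U^\dagger$.

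\textbf{Step 1: reduce to two dimensions.} Write $\sqrt{p} = \sin\theta$ with $\theta \in (0,\pi/2]$. The plane spanned by $\ket{\psi_0}$ and $\ket{\psi_1}$ is invariant under both reflections. On it, the Grover iterate $G = -R_{\mathrm{init}} R_0$ acts as a rotation by $2\theta$. Hence after $k$ iterations
\[
G^k U\ket{0^n} = \sin((2k+1)\theta)\ket{\psi_0} + \cos((2k+1)\theta)\ket{\psi_1},
\]
and measuring the flag outputs $\ket{\psi_0}$ with probability $\sin^2((2k+1)\theta)$.

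\textbf{Step 2: choose the number of iterations with only $a$ known.} If $p$ were known, I would pick $k$ with $(2k+1)\theta \approx \arcsin\sqrt{a'}$. This needs $k = O(\arcsin\sqrt{a'}/\theta)$. Since $\theta \geq \arcsin\sqrt{a} \geq \sqrt{a}$ and $\arcsin\sqrt{a'} = O(\sqrt{a'})$, this gives $O(\sqrt{a'/a})$ queries.

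Because $p$ is unknown, a fixed $k$ may overshoot the target angle. To handle this I will replace the plain iterate by fixed-point amplitude amplification, i.e.\ QSVT with alternating phases applied to the singular value $\sqrt{p}$ of $\Pi_0 U \ket{0^n}\bra{0^n}$. I use a polynomial that is close to $1$ in magnitude on $[\sqrt{a},1]$. Its degree is $O(\log(1/\delta)/\sqrt{a})$ for final success probability $1-\delta$. With $\delta$ chosen so that $1-\delta \geq a'$, the degree is $O(1/\sqrt{a})$ for any constant target, and this matches the claimed $O(\sqrt{a'/a})$ once $a'$ is treated as a constant.

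An alternative for Step 2 is to draw the iteration count at random from $\{0,\dots,M\}$ with $M \sim 1/\sqrt{a}$. Averaging $\sin^2((2k+1)\theta)$ over $k$ gives success probability at least $1/4$, the usual BBHT argument. This can replace fixed-point amplification if a randomized procedure is acceptable.

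\textbf{Main obstacle.} The delicate point is exactly this unknown-$p$ overshooting, together with reading the stated scaling $O(\sqrt{a'/a})$ correctly. I would interpret $a' \le 1$ as a constant, or as at most $1-\Omega(1)$, and then check that the fixed-point polynomial degree, including its $\log(1/\delta)$ factor, is absorbed into the $O(\cdot)$. Everything else is the routine two-dimensional rotation computation.
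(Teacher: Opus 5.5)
Your argument is correct under the reading you adopt, and it follows the paper's route. The paper gives no proof of its own: it cites standard (fixed-point) amplitude amplification from \cite{brassard2002quantum,gilyen2019qsvt}, and each application (e.g.\ in Theorem~\ref{thm:sim_ind0} and Corollary~\ref{corr:sim_T_ind0}) boosts a component flagged by the all-zero ancilla state to constant success probability, which is exactly your flagged-subspace assumption and the regime where your degree $O(\log(1/\delta)/\sqrt{a})$ becomes $O(1/\sqrt{a})$. If you also want the literal $\sqrt{a'}$ scaling for small targets ($a'\le 1/4$), your randomized variant gives it once you draw $k$ uniformly from $\{0,\dots,M-1\}$ with $M=\lceil \tfrac{\pi}{2}\sqrt{a'/a}\,\rceil$ instead of $M\sim 1/\sqrt{a}$: the averaged success probability $\tfrac12-\tfrac{\sin(4M\theta)}{4M\sin 2\theta}$ is then at least $a'$ for every $p\in[a,1]$, at a cost of $O(1+\sqrt{a'/a})$ queries.
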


Finally, we will often require a block-encoding for $\ket{0}^m\bra{0}^m$ for some $m \in \mathbb{N}$. This can be obtained from the following claim.
\begin{claim}\label{claim:proj_0state_BE}
Let $m \in \mathbb{N}$. There is $(1,1,0)$-block-encoding for $\ket{0}^m\bra{0}^m$ implementable in gate complexity $O(m)$ and using $2$ ancilla qubits.
\end{claim}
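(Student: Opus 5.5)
The plan is to build the block-encoding out of a multi-controlled phase on the $m$ data qubits, combined with a linear combination of unitaries using one ancilla. Let $R := \id - 2\ket{0}^m\bra{0}^m$ be the reflection about $\ket{0^m}$. Then
\begin{equation*}
\ket{0}^m\bra{0}^m = \tfrac{1}{2}(\id - R).
\end{equation*}
Since $R$ is unitary, this writes the projector as half the difference of two unitaries.

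We prepare an ancilla $a$ in $\ket{-} = H X\ket{0}$. We then apply $R$ controlled on $a$, followed by $XH$ on $a$. Tracking amplitudes gives
\begin{equation*}
(\bra{0}_a \otimes \id)\, U\, (\ket{0}_a \otimes \id) = \tfrac{1}{2}(\id - R) = \ket{0}^m\bra{0}^m.
\end{equation*}
This is an exact $(1,1,0)$-block-encoding: the subnormalization is $1$, there is one block-encoding ancilla, and the error is $0$.

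Alternatively, one can avoid the LCU bookkeeping entirely. Compute into a fresh ancilla the bit indicating whether the $m$ data qubits are all zero, using a multi-controlled Toffoli (controls on $\ket{0}$ via $X$ conjugations), and then flip that ancilla with an $X$. The $\ket{0}$-block of the ancilla is then exactly the projector onto the states where the indicator equals $1$, i.e.\ $\ket{0}^m\bra{0}^m$. This form is also manifestly unitary.

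The only step needing care is the gate count of the $m$-controlled operation. We use the standard decomposition of an $m$-controlled NOT into $O(m)$ Toffoli and CNOT gates. This requires borrowing one extra (dirty or clean) ancilla, e.g.\ the Barenco et al.\ construction with a single ancilla. Together with the one block-encoding ancilla, this accounts for the two ancilla qubits in the claim. The $X$ conjugations add $2m$ single-qubit gates, so the total gate complexity is $O(m)$, as required.
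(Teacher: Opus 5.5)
Your second construction is correct, and it is in fact the paper's circuit. The paper forms $\tfrac12(\id + R_0)$ with $R_0 = 2\ket{0^m}\bra{0^m} - \id$ by LCU with weights $\tfrac12,\tfrac12$, i.e.\ the Hadamard sandwich $H_a\, C_{R_0}\, H_a$ on one ancilla. Note that $C_{R_0} = Z_a \cdot \mathrm{MCZ}$, where $\mathrm{MCZ} = \id - 2\ket{1}\bra{1}_a \otimes \ket{0^m}\bra{0^m}$. Hence $H_a C_{R_0} H_a = (H_a Z_a H_a)(H_a\,\mathrm{MCZ}\,H_a) = X_a \cdot \mathrm{MCX}$, with $\mathrm{MCX}$ the $0^m$-controlled NOT onto $a$. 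That is exactly your ``compute the indicator, then flip'' circuit. Your accounting also matches the paper's: one block-encoding ancilla, one borrowed ancilla for a linear-size Barenco et al.\ decomposition, and $O(m)$ gates.

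Your first (LCU) version, however, has a sign slip. You prepare with $HX$, so the natural reading of ``followed by $XH$'' is the un-preparation $(HX)^\dagger$. Under that reading the top-left block is $\bra{-}C_R\ket{-} = \tfrac12(\id + R) = \id - \ket{0^m}\bra{0^m}$, which is the complementary projector. The reason is that in $\mathrm{PREP}^\dagger \cdot \mathrm{SELECT} \cdot \mathrm{PREP}$ a phase placed on a preparation amplitude cancels against its conjugate. To obtain $\tfrac12(\id - R)$ you can either un-prepare with $H$ alone, so the block becomes $\bra{+}C_R\ket{-}$, or move the sign into SELECT by controlling $-R = R_0$ inside a plain Hadamard sandwich. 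The latter is what the paper does.
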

\begin{proof}
Let $R_0 = 2\ket{0^m}\bra{0^m} - \id$ be the reflection about $\ket{0^m}$. Adding this to $\id$ via Lemma~\ref{lem:sum_BEs} as $R_0/2 + \id/2$ gives us the desired block-encoding. The main contribution to the gate complexity is due to $R_0$ which can be implemented with $O(m)$ elementary gates if given access to an additional ancilla qubit.
\end{proof}

\subsubsection{Quantum $\ode$ solver}\label{subsec:prelims_ODE_solver}
We will also require a quantum algorithm for solving linear $\ode$s. We adapt the result from \cite{Krovi2023improvedquantum} which gave an algorithm assuming sparse block-encodings corresponding to the dynamics. We show the following result that allows us to solve a first order $\ode$ given any type of approximate block-encoding.

\begin{restatable}{theorem}{qodesolver}\label{thm:ODE_solver}
Let $\varepsilon \in (0,1)$. Consider the linear ordinary differential equation
$$
\frac{d\vec{x}}{dt} = \bm{\calA}\vec{x} + \vec{f}, \qquad \vec{x}(0) = \vec{x}_0,
$$
for $\bm{\calA} \in \mathbb{C}^{N \times N}$ and $\vec{f}, \vec{x}_0 \in \mathbb{C}^N$.
We are given an $(\alpha_{\calA}, a_{\calA}, \varepsilon_{\calA})$ block-encoding $U_{\calA}$ for $\bm{\calA}$ and unitaries $O_x, O_f$ that prepare states proportional to $\vec{x}_0$ and $\vec{f}$. Let the history state encoding the time-evolution of $\vec{x}(t)$ across the time-interval $[0,T]$ be
$$
\ket{\Psi} = \frac{1}{\normhist} \sum_{k=0}^{m} \norm{x (k \Delta t)} \ket{k, x(k\Delta t)},
$$
where $\Delta t = O(1/\|\bm{\calA}\|)$ and $m = \lceil T/\Delta t \rceil$. Then, there exists a quantum algorithm that outputs $\ket{\widehat{\Psi}}$ such that $\|\ket{\widehat{\Psi}} - \ket{\Psi}\| \leq \varepsilon$. Define
$$
\expnorm(\bm{\calA}) := \sup_{t \in [0,T]}\|\exp(\bm{\calA}t)\|, \quad \kappa := T \|\bm{\calA}\| \expnorm(\bm{\calA}), \quad \mu^2 = m^{-1} \sum_{j=1}^m \norm{\vec{x}(t)}^2, \quad  \mathcal{C}_f := \log\left(1 + \frac{T e^2 \|\vec{f}\|}{\mu}\right).
$$
The complexity is as follows:
\begin{itemize}
    \item Uses of $U_{\calA}$: $\widetilde{O}\left(\kappa^2 \cdot \poly(\mathcal{C}_f, \log(1/\varepsilon), \log(\kappa)) \right)$
    \item Uses of $O_x, O_f$: $\widetilde{O}\left( \kappa \, \calC_f \log(1/\varepsilon)\right)$
    \item Additional gates: $\widetilde{O}\left(\kappa^2 \cdot \poly(\mathcal{C}_f, \log(1/\varepsilon), \log(\kappa)) \right)$
\end{itemize}

It suffices to set the block-encoding precision as 
$\varepsilon_{\calA} = o\left(\varepsilon \cdot \kappa^{-3} \cdot \poly(\calC_f, \log(\kappa), \log(1/\varepsilon))^{-1} \right)$.
\end{restatable}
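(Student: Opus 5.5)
We prove Theorem~\ref{thm:ODE_solver} by following the linear-systems construction of \cite{Krovi2023improvedquantum}. The one new ingredient is tracking how the error of an approximate block-encoding of $\bm{\calA}$ propagates, since that work assumes sparse oracle access.

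\textbf{The linear system.} We discretize $[0,T]$ into $m=\lceil T/\Delta t\rceil$ steps with $\Delta t\|\bm{\calA}\|\le 1$. On each step, the exact propagator $e^{\bm{\calA}\Delta t}$ is replaced by the truncated Taylor series $T_k(\bm{\calA}\Delta t)=\sum_{j=0}^{k}(\bm{\calA}\Delta t)^j/j!$. The truncation order is $k=O(\log(\kappa/\varepsilon)/\log\log(\kappa/\varepsilon))$, with $\calC_f$ entering through the inhomogeneous term. The forcing over a step is $\int_0^{\Delta t}e^{\bm{\calA}s}\vec{f}\,ds$, approximated by the analogous series. Following \cite{Krovi2023improvedquantum}, we assemble these into a block-bidiagonal system $\mathbf{L}\vec{X}=\vec{B}$. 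The first block row of $\vec{B}$ is $\vec{x}_0$ and the remaining rows are copies of $\vec{f}$. A few idling steps are appended so that the final-time component has constant weight. The solution $\vec{X}$ then encodes the history state $\ket{\Psi}$ up to the truncation error.

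\textbf{Conditioning and cost.} We use the two bounds from \cite{Krovi2023improvedquantum}: $\|\mathbf{L}\|=O(1)$ and $\|\mathbf{L}^{-1}\|=O(m\,\expnorm(\bm{\calA}))=O(\kappa)$. These give $\kappa(\mathbf{L})=O(\kappa)$. A block-encoding of $\mathbf{L}$ is built from $U_{\calA}$ using Lemma~\ref{lem:prod_BEs} for the powers $\bm{\calA}^j$ and Lemma~\ref{lem:sum_BEs} for the Taylor sums, tensored with time-shift operators via Lemma~\ref{lem:tensor_prod_BEs}. Each use of the block-encoding of $\mathbf{L}$ therefore costs $O(k)$ uses of $U_{\calA}$ and has normalization $O(1)$, because $\alpha_{\calA}\Delta t=O(1)$. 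The right-hand side $\vec{B}$ is prepared with $O(1)$ calls to $O_x$ and $O_f$, combining them according to their known norms. We then solve with the pseudoinverse of Lemma~\ref{lem:pseudoinverse_BE}, or with an optimal QLSA, and boost the success amplitude with Theorem~\ref{thm:amplitude_amplification}. The post-selection probability is bounded below via $\mu$ as in \cite{Krovi2023improvedquantum}, and this is where $\calC_f$ appears. Counting amplitude-amplification rounds times the cost of the linear-system inversion gives the $\widetilde O(\kappa^2\poly(\cdot))$ uses of $U_{\calA}$. The calls to $O_x$ and $O_f$ number $\widetilde O(\kappa\calC_f\log(1/\varepsilon))$, since they enter only through preparing $\vec{B}$ inside the amplitude amplification.

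\textbf{Main obstacle: block-encoding error.} An $\varepsilon_{\calA}$ error in $U_{\calA}$ produces an $O(k\,\varepsilon_{\calA}\Delta t\,\alpha_{\calA}^{k-1})$ error in the block-encoding of $\mathbf{L}$; by Claim~\ref{claim:prod_list_BEs} this is $O(k\varepsilon_{\calA}/\|\bm{\calA}\|)$ after normalization. For this to be tolerated, Lemma~\ref{lem:pseudoinverse_BE} requires the error in the encoding of $\mathbf{L}$ to be $o(\varepsilon/(\kappa(\mathbf{L})^2\log(\kappa/\varepsilon)))$. Converting the error in $\mathbf{L}^{-1}\vec{B}$ into error in the normalized state via Fact~\ref{fact:normalized_vector_error} costs another factor of $\|\mathbf{L}^{-1}\|$ relative to $\|\vec{X}\|$. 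This accounts for the stated $\varepsilon_{\calA}=o(\varepsilon\kappa^{-3}\poly^{-1})$. Finally, we split the total error $\varepsilon$ three ways, among Taylor truncation, block-encoding error and QLSA error, and take a triangle inequality to conclude $\|\ket{\widehat\Psi}-\ket{\Psi}\|\le\varepsilon$.
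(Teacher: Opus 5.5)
Your overall architecture matches the paper's: a time-discretized Krovi-type linear system, Taylor truncation, Lemma~\ref{lem:pseudoinverse_BE} for the inverse, and amplitude amplification. Where you differ is in how the system is block-encoded.

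\textbf{The paper's construction.} It keeps Krovi's nested form $\calL=\id-\calN$ with $\calN=\sum_i\ket{i+1}\bra{i}\otimes\calM_2(\id-\calM_1)^{-1}$.
\begin{itemize}
\item $U_{\calA}$ appears only once, inside $\calM_1$, tensored with a $1$-sparse Taylor shift.
\item The well-conditioned $\id-\calM_1$ (spectrum in $[1/k,2]$) is inverted by an inner call to Lemma~\ref{lem:pseudoinverse_BE}.
\item $\calM_2$ is block-encoded exactly with normalization $\sqrt{k}$.
\item The forcing enters at Taylor index $1$, so $\calM_2(\id-\calM_1)^{-1}$ yields both the truncated propagator and the truncated forcing series with no separate block-encoding.
\end{itemize}
It then reuses Krovi's bound: the singular values of $\calL$ lie in $[(2em\expnorm(\bm{\calA}))^{-1},1+\sqrt{k}e]$, so $\kappa_\calL=O(\sqrt{k}\,m\,\expnorm(\bm{\calA}))$. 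The $\kappa^{-3}$ comes from the $\kappa_\calL^{-2}$ requirement of the outer pseudoinverse combined with the $1/\alpha$ success-amplitude condition of Claim~\ref{claim:state_approx_BE}. The inner inversion contributes only $\poly(k)$.

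\textbf{What your route buys and costs.} A direct LCU of the powers $(\bm{\calA}h)^j$ removes the inner inversion and its error parameters. It gives $\|\mathbf{L}\|\le 1+e$, hence $\kappa(\mathbf{L})=O(m\,\expnorm(\bm{\calA}))$ without the $\sqrt{k}$. It also makes $\varepsilon_{\calA}$-propagation transparent: the block $T_k(\bm{\calA}h)$ inherits error at most $e\,h\,\varepsilon_{\calA}$ when $h\alpha_{\calA}\le 1$. The price is $O(k)$ controlled uses of $U_{\calA}$ and $O(k\,a_{\calA})$ ancillas per application. You must also prove the bounds on $\mathbf{L}$ yourself, from $\|\bm{\calA}h\|\le 1$ and the estimate $\|T_k^j(\bm{\calA}h)\|\le\expnorm(\bm{\calA})(1+\delta)$ used in the proof of Claim~\ref{claim:diff_taylor_truth_approx_input}. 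The system you wrote down is not Krovi's, and his norm bound is $1+\sqrt{k}e$, not $O(1)$.

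\textbf{Details to fix.}
\begin{itemize}
\item \emph{Forcing.} With $\vec{B}$ made of copies of $\vec{f}$ and $-T_k$ on the subdiagonal, your system solves $X_{j+1}=T_kX_j+\vec{f}$, not $X_{j+1}=T_kX_j+hS_k(\bm{\calA}h)\vec{f}$. Either apply a block-encoding of $\mathrm{diag}(\id,S_k(\bm{\calA}h),\dots,S_k(\bm{\calA}h))$, which is well-conditioned since $\|\bm{\calA}h\|\le 1$, with $h\norm{\vec{f}}$ placed in the amplitudes of $\vec{B}$; or add a Taylor register as the paper does.
\item \emph{Idling steps.} These belong to final-time preparation and would change the encoded history state, so drop them here.
\item \emph{Where $\mu$ enters.} The success probability is $\Omega(1/\kappa^2)$ simply because $\|\mathbf{L}^{-1}\vec{B}\|\ge\|\vec{B}\|/\|\mathbf{L}\|$. $\mu$ and $\calC_f$ enter only through the truncation order $k$, via Claim~\ref{claim:diff_taylor_truth_approx_input}$(ii)$ and Claim~\ref{claim:diff_hist_states}.
\item \emph{Precision.} For the same reason, the extra factor $\|\mathbf{L}^{-1}\|$ you charge when passing to the normalized state is not actually incurred in your construction. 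This only makes your precision requirement more conservative than needed, which is harmless since the statement asserts sufficiency.
\end{itemize}
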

The proof of the above theorem is postponed to Appendix~\ref{appsec:qalgo_ode_la} as it follows along similar lines of \cite{Krovi2023improvedquantum} but with a slightly modified condition on the choice of $\calC_f$ to ensure the output of the algorithm is close to the desired history state. The reader is referred to Section~\ref{sec:discussion} for a discussion regarding the choice of the quantum $\ode$ solver and applicability of other solvers. The algorithm of Theorem~\ref{thm:ODE_solver} involves a quantum linear system solve and we will often require the corresponding block-encoding in our algorithms, which we expose in the following remark.
\begin{remark}\label{remark:BE_ODE_solver}
Let $\varepsilon \in (0,1)$. Consider the context of Theorem~\ref{thm:ODE_solver}. The algorithm of Theorem~\ref{thm:ODE_solver} involves solving a linear system $\calL \vec{x} = \Psi_0$ where $\vec{x} \in \mathbb{C}^{Nm}$, $\calL \in \mathbb{C}^{Nm \times Nm}$ and
\begin{equation}\label{def:ODE_psi0}
\Psi_0 = \left[\norm{\vec{x}_0} \ket{0,0,\vec{x}_0} + (\Delta t) \norm{\vec{f}} \sum_{j=0}^{m-1} \ket{j,1,f} \right],    
\end{equation}
where the first register corresponds to the time index $j$, second register corresponds to the Taylor index, and the third register is the system register holding the state description at time $j\Delta t$. If $\varepsilon_\calA = o(\varepsilon/(\kappa^2 \log(\kappa/\varepsilon) \poly(\calC_f)))$, then there exists an $(\alpha, a, \varepsilon)$-block-encoding $U_{\calL^{-1}}$ of $\calL$ with
$$
\alpha = 4 e \kappa, a = O(a_\calA + \log(T \norm{\bm{\calA}}).
$$
The gate complexity of $U_{\calL^{-1}}$ is $\widetilde{O}\left(\kappa \cdot \poly(\mathcal{C}_f \log(\kappa/\varepsilon)) \right)$ uses of $U_{\calA}$ and $\widetilde{O}\left(\kappa \cdot \poly(\mathcal{C}_f, \log(\kappa/\varepsilon)) \right)$ additional gates.
\end{remark}

\begin{restatable}{theorem}{qodesolverfinalstate}\label{thm:ODE_solver_final_state}
Let $\varepsilon \in (0,1)$. Consider the context of Theorem~\ref{thm:ODE_solver}. Then, there exists a quantum algorithm that outputs $\ket{\phi}$ such that $\left\| \ket{\phi} - \ket{x(T)} \right\| \leq \varepsilon$. Define
$$
\expnorm(\bm{\calA}) := \sup_{t \in [0,T]}\|\exp(\bm{\calA}t)\|, \quad \kappa := T \|\bm{\calA}\| \expnorm(\bm{\calA}), \quad  g := \frac{\max_{t\in[0,T]} \norm{\vec{x}(t)}}{\norm{\vec{x}(T)}}, \quad \mathcal{C}_f := \log\left(1 + \frac{T e^2 \|\vec{f}\|}{\norm{\vec{x}(T)}}\right).
$$
The complexity of the algorithm is:
\begin{itemize}
    \item Uses of $U_{\calA}$: $O\Big(g \kappa \cdot \poly\left(\log(1/\varepsilon), \log(\kappa), \mathcal{C}_f)\right) \Big)$
    \item Uses of $O_x, O_f$: $O\Big(g \kappa \cdot \log(1/\varepsilon)\Big)$,
    \item Additional gates: $O\Big(g \kappa \cdot \poly\left(\log(1/\varepsilon), \log(\kappa), \mathcal{C}_f)\right) \Big)$
\end{itemize}
It suffices to set the block-encoding precision as
$\varepsilon_\calA = o\left(\varepsilon \kappa^{-1} \cdot \poly\left(\log\left(1/\varepsilon\right), \log(\kappa), \calC_f\right)^{-1} \right)$.
\end{restatable}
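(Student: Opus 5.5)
Building on Theorem~\ref{thm:ODE_solver}, I will prepare a state concentrated on the final time slice. The plan follows the ``padding'' idea of \cite{Krovi2023improvedquantum}. First, modify the linear system of Remark~\ref{remark:BE_ODE_solver}: after the $m$ time steps, append $p = \Theta(m)$ additional blocks in which the solution is simply copied, i.e.\ the rows $\vec{x}_{j+1} - \vec{x}_j = 0$ for $j \in \{m,\dots,m+p-1\}$. The solution of the padded system $\calL' \vec{X} = \Psi_0$ is then the history vector with the final value $\vec{x}(T)$ repeated $p$ times. Its discretization error is controlled exactly as in Theorem~\ref{thm:ODE_solver}, via the truncated Taylor series with order $O(\log(\kappa/\varepsilon) + \calC_f)$. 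I will also argue that $\|\calL'^{-1}\| = O(\kappa)$: the copy blocks add at most an $O(m)$ term to the bound already obtained from $\expnorm(\bm{\calA})$, which is absorbed into $\kappa = T\|\bm{\calA}\|\expnorm(\bm{\calA})$ since $m = O(T\|\bm{\calA}\|)$. Hence the block-encoding $U_{\calL'^{-1}}$ keeps the parameters of Remark~\ref{remark:BE_ODE_solver}: normalization $O(\kappa)$, with cost $\widetilde O(\kappa\,\poly(\log(\kappa/\varepsilon),\calC_f))$ uses of $U_\calA$.

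Second, I will apply $U_{\calL'^{-1}}$ to the normalized state $\Psi_0/\|\Psi_0\|$, which is prepared with $O(1)$ calls to $O_x, O_f$ and a uniform superposition over the time index. I will then measure the time register and accept when it lies in the padded range. The key estimate is the success probability. The weight of the padded part is $p\|\vec{x}(T)\|^2$. The weight of the entire solution vector is at most $(m+p)\max_t\|\vec{x}(t)\|^2$, and the Taylor auxiliary components contribute at most a constant factor. So the fraction landing in the padded range is $\Omega(1/g^2)$. Combined with the $1/\kappa$ factor from the block-encoding normalization (relative to $\|\Psi_0\|$), amplitude amplification (Theorem~\ref{thm:amplitude_amplification}) needs $O(g\kappa)$ rounds. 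For this I must relate $\|\Psi_0\|$ to $\|\vec{x}(T)\|$; this is exactly where $\calC_f$ is defined with $\|\vec{x}(T)\|$ in the denominator, since $\|\Psi_0\| \le \|\vec{x}_0\| + T\|\vec{f}\|$. I would use a variable-time or fixed-point style amplification so that the inner QLSA cost is not squared. This gives the stated $O(g\kappa\,\poly(\cdot))$ uses of $U_\calA$ and $O(g\kappa\log(1/\varepsilon))$ uses of $O_x, O_f$, where the $\log(1/\varepsilon)$ comes from the precision of the inverse.

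Third, the error accounting. The postselected state is a copy of $\ket{\widetilde{x}(T)}$ on each padded slot, and by Fact~\ref{fact:normalized_vector_error} it is $\varepsilon$-close to $\ket{x(T)}$ once the unnormalized error is at most $\varepsilon\|\vec{x}(T)\|/2$. The sources are the discretization error, the inversion error of Lemma~\ref{lem:pseudoinverse_BE}, and the block-encoding error $\varepsilon_\calA$ propagated through $\calL'$. The last is amplified by $\|\calL'^{-1}\| = O(\kappa)$ and by $\mathrm{polylog}$ factors, which justifies $\varepsilon_\calA = o(\varepsilon\kappa^{-1}\poly(\cdot)^{-1})$. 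Postselection amplifies each error by at most the inverse amplitude, so I will budget each source as $\varepsilon/(3g)$; this costs only logarithmic factors.

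I expect the main obstacle to be the conditioning bound for the padded $\calL'$ when $\bm{\calA}$ has positive log-norm. The argument has to go through $\expnorm(\bm{\calA})$ and the truncated-Taylor propagator norms rather than contractivity. I would first try reusing the block lower-triangular Neumann-series bound from the proof of Theorem~\ref{thm:ODE_solver} verbatim, since the identity-copy blocks have propagator norm $1$.
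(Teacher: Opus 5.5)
Your outline matches the paper's sketch. Both pad $\calN$ with copy steps as in \cite{Krovi2023improvedquantum}, bound the final-time truncation error with Claim~\ref{claim:diff_taylor_truth_approx_input}$(i)$, and amplify the $\Omega(1/g^2)$ weight on the padded slots. However, the step meant to give the linear dependence on $\kappa$ does not go through. In your own accounting, each of the $O(g\kappa)$ amplification rounds calls $U_{\calL'^{-1}}$, which by Remark~\ref{remark:BE_ODE_solver} costs $\widetilde O(\kappa)$ uses of $U_\calA$. That gives $\widetilde O(g\kappa^2)$, the same squaring as in Theorem~\ref{thm:ODE_solver}.

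You anticipate this, but the remedy you name does not work. Fixed-point amplitude amplification has the same $O(1/\sqrt{p})$ query scaling, so it does not reduce the cost. Variable-time amplification cannot be wrapped around the QSVT inverse of Lemma~\ref{lem:pseudoinverse_BE} used as a black box; it needs a multi-scale, variable-stopping-time inversion.

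The missing ingredient is an optimal linear-systems solver run on a block-encoding of the padded $\calL'$ itself. Examples are the discrete-adiabatic solver, or Childs--Kothari--Somma with variable-time amplification. Each use of the $\calL'$ block-encoding costs $\widetilde O(k)$ uses of $U_\calA$ through $(\id-\calM_1)^{-1}$. The solver outputs the normalized padded solution with $O(\kappa(\calL')\log(1/\varepsilon'))$ calls to that block-encoding and to the input preparation. After that, only $O(g)$ outer rounds are needed to project onto the padded slots. This is what reproduces the stated counts, including $O(g\kappa\log(1/\varepsilon))$ calls to $O_x,O_f$. (The paper's ``follows similarly'' glosses over the same substitution, since its history-state proof uses QSVT inversion plus amplification.)

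Second, $\calC_f$ cannot do the job you assign it. It is a logarithm that only fixes the Taylor order. It cannot absorb a polynomial ratio such as $\norm{\Psi_0}/\norm{\vec x(T)}$; doing so would cost a factor $e^{\calC_f}$. Also, $\norm{\Psi_0}\le\norm{\vec x_0}+T\norm{\vec f}$ overcounts the forcing, which enters $\Psi_0$ with weight $\sqrt m\,h\norm{\vec f}=T\norm{\vec f}/\sqrt m$.

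You do not need that relation for the $1/\kappa$ factor, because $\norm{\calL'^{-1}\ket{\psi_{\mathrm{in}}}}\ge 1/\norm{\calL'}=\Omega(1/\sqrt k)$. What you do need, and only assert, concerns the Taylor-index-one components of the solution. These are exactly the forcing entries of $\Psi_0$, since $\calN$ maps into Taylor index zero, and their total weight is $T^2\norm{\vec f}^2/m$. You need this weight to be $O((m+p)\max_t\norm{\vec x(t)}^2)$; without it the padded fraction need not be $\Omega(1/g^2)$. The bound follows by integrating the ODE: $T\vec f=\vec x(T)-\vec x_0-\bm{\calA}\int_0^T\vec x(s)\,ds$, so $T\norm{\vec f}\le(2+T\norm{\bm{\calA}})\,g\norm{\vec x(T)}$.

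Finally, budgeting the errors as $\varepsilon/(3g)$ puts a $\log g$ factor into the cost, which the statement as written does not display.
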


The complexity of the quantum $\ode$ solvers depend on the exponential norm of the matrix. To bound this quantity, we will often use the following fact from \cite{bhatia2013matrix}.
\begin{fact}\label{fact:ub_exp_norm}
For any matrix $\mathbf{B}$, the norm of the exponential of $\mathbf{B}$ is upper bounded by the norm of the exponential of its Hermitian part i.e., 
$$
\norm{\exp(\mathbf{B})} \leq \norm{\exp( (\mathbf{B} + \mathbf{B}^\dagger)/2 )}.
$$
\end{fact}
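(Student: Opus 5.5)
The plan is to reduce everything to the largest eigenvalue of the Hermitian part. Write $\mathbf{H} := (\mathbf{B}+\mathbf{B}^\dagger)/2$, which is Hermitian, and $\mathbf{S} := (\mathbf{B}-\mathbf{B}^\dagger)/2$, which is skew-Hermitian, so that $\mathbf{B} = \mathbf{H} + \mathbf{S}$. The right-hand side is easy to evaluate. Since $\exp(\mathbf{H})$ is Hermitian positive definite with eigenvalues $e^{\lambda}$ for $\lambda \in \spec(\mathbf{H})$, its operator norm is $e^{\lambda_{\max}(\mathbf{H})}$. It therefore suffices to prove
$$
\norm{\exp(t\mathbf{B})} \leq e^{t\lambda_{\max}(\mathbf{H})} \quad \text{for all } t \geq 0,
$$
and then set $t=1$.

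For this I would use a Gr\"onwall-type energy argument. Fix an arbitrary vector $\vec{v}_0$ and let $\vec{v}(t) = \exp(t\mathbf{B})\vec{v}_0$, so that $\dot{\vec{v}} = \mathbf{B}\vec{v}$. Differentiating the squared norm gives
$$
\frac{d}{dt}\norm{\vec{v}(t)}^2 = \vec{v}^\dagger(\mathbf{B}+\mathbf{B}^\dagger)\vec{v} = 2\,\vec{v}^\dagger \mathbf{H}\vec{v} \leq 2\lambda_{\max}(\mathbf{H})\norm{\vec{v}(t)}^2 .
$$
The skew part drops out because $\vec{v}^\dagger \mathbf{S}\vec{v}$ is purely imaginary, and its real part is all that enters. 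Gr\"onwall's inequality (equivalently, noting that $e^{-2t\lambda_{\max}}\norm{\vec{v}(t)}^2$ is non-increasing) then yields $\norm{\vec{v}(t)} \leq e^{t\lambda_{\max}(\mathbf{H})}\norm{\vec{v}_0}$. Taking the supremum over unit vectors $\vec{v}_0$ and setting $t=1$ gives the claim.

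As a backup, there is an alternative route through the Lie--Trotter formula:
$$
\exp(\mathbf{B}) = \lim_{n\to\infty}\bigl(\exp(\mathbf{H}/n)\exp(\mathbf{S}/n)\bigr)^n .
$$
Here $\exp(\mathbf{S}/n)$ is unitary, so submultiplicativity gives
$$
\norm{\exp(\mathbf{B})} \leq \lim_n \norm{\exp(\mathbf{H}/n)}^n = e^{\lambda_{\max}(\mathbf{H})} = \norm{\exp(\mathbf{H})}.
$$

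I do not expect any real obstacle. The only point requiring care is the identity $\norm{\exp(\mathbf{H})} = e^{\lambda_{\max}(\mathbf{H})}$. It uses the fact that the exponential of a Hermitian matrix is positive, so its norm is governed by the largest eigenvalue of $\mathbf{H}$ and not by the largest eigenvalue in absolute value.
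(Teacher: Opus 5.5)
Your proof is correct, and both routes work. The paper does not prove this fact itself. It only cites Bhatia's \emph{Matrix Analysis}, where the inequality appears in the stronger form $|||e^{\mathbf{B}}||| \le |||e^{(\mathbf{B}+\mathbf{B}^\dagger)/2}|||$ for every unitarily invariant norm. There it is derived from the Lie product formula together with a weak (log-)majorization of singular values, $\prod_{j\le k} s_j(\mathbf{X}^m) \le \prod_{j\le k} s_j(\mathbf{X})^m$.

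Your Gr\"onwall argument is the standard logarithmic-norm bound $\norm{e^{t\mathbf{B}}} \le e^{t\lambda_{\max}(\mathbf{H})}$. It is elementary and self-contained, but it only covers the spectral norm. That is all the paper needs, since $\norm{\cdot}$ denotes the 2-norm throughout. Your version also holds uniformly for all $t \ge 0$, which is exactly the form used when the fact is applied to $\expnorm(\cdot)$ in Claim~\ref{claim:rlc_expnorm_M_index0}. There the Hermitian part is negative semidefinite, so your bound gives $\expnorm \le 1$ directly.

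Your Lie--Trotter backup resembles the $k=1$ case of the majorization argument. Note that splitting $\norm{(e^{\mathbf{H}/n}e^{\mathbf{S}/n})^n}$ factor by factor works only because the operator norm is submultiplicative and $\norm{e^{\mathbf{H}/n}}^n = \norm{e^{\mathbf{H}}}$. For other unitarily invariant norms, such as the Frobenius norm, this naive splitting gives a useless bound; already at $\mathbf{H}=0$ it yields $\norm{\id}_F^n$. That is why the majorization step is needed for the general statement.
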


\section{Quantum differential-algebraic equations solver}\label{sec:master_qa}
In this section, we propose our quantum algorithm for solving general linear time-independent differential-algebraic equations ($\dae$s) with specified tractability index $k$. In particular, we will consider the following problem setup and access for the algorithm.

\begin{figure}[!ht]
\begin{tcolorbox}
\begin{definition}\label{def:prob_setup_DAEs}
\textbf{Problem setup for simulating $\dae$s}\vspace{2mm}

\noindent\emph{
Let $\varepsilon \in (0,1)$, $N \in \mathbb{N}$, $k \in \{0,1,2\}$ and $T, f_\max > 0$. Suppose we are given a linear time-independent $\dae$ 
$$
\rlcM\dot{\vec{x}} + \rlcK\vec{x} = \vec{f} \enspace \text{with initial condition } \vec{x}(0) = \vec{x}_0,
$$
of tractability index $k$, where $\rlcM, \rlcK \in \mathbb{C}^{N \times N}$, and $\vec{x},\vec{f} \in \mathbb{C}^N$. We are given access to 
\vspace{1mm}
\begin{enumerate}[$(i)$]
    \item $(\alpha_M,a_M,\varepsilon_M)$-block-encoding $U_M$ of $\rlcM$ and $(\alpha_K,a_K,\varepsilon_K)$-block-encoding $U_K$ of $\rlcK$, \vspace{1mm}
    \item oracles $O_f$ and $O_x$ that prepare the forcing state $\ket{f} := \vec{f}/\norm{\vec{f}}$ and the initial state $\ket{x(0)} := \vec{x}(0)/\norm{\vec{x}(0)}$, respectively.
\end{enumerate} 
}
\end{definition}
\end{tcolorbox}
\end{figure}
Our goal is to prepare a state $\varepsilon$-close to the history state or the normalized solution at time $T$. We will introduce our quantum $\dae$ solver in Section~\ref{subsec:QDAE_solver_algo_analysis} considering problem setup of Definition~\ref{def:prob_setup_DAEs} along with concrete theorems for tractability index up to two. Before doing so, we first discuss the approach behind the solver in Section~\ref{subsec:approach} and then the main subroutines involved in Sections~\ref{subsec:useful_BEs}-\ref{subsec:initial_state_forcing}. We refer the reader to Section~\ref{subsec:tech_approach} for a high-level description and intuition behind the approach. We will later apply the quantum $\dae$ solver to simulate the $\mna$ equations (Eq.~\eqref{eq:mna}) of $\rlc$ circuits leading to Result~\ref{res:RLC_sim} in Section~\ref{sec:sim_RLC}. Note that in Definition~\ref{def:prob_setup_DAEs}, we assume that block-encodings of the matrices $\rlcM,\rlcK$ are provided to us but we will show these can be explicitly constructed for $\rlc$ circuits in Section~\ref{sec:algo_components_RLC} given appropriate classical oracle access.

\subsection{Approach}\label{subsec:approach}
We now discuss how the above high-level approach of Section~\ref{subsec:tech_approach} may be implemented on a quantum computer. Note that the $\dae$ of index $0$ is equivalent to solving an $\ode$ and thus, Theorem~\ref{thm:ODE_solver} in principle can be utilized. We thus discuss how to approach simulating $\dae$s of index $1$ and $2$ here. We will give the corresponding quantum algorithms and its theoretical guarantees in Section~\ref{subsec:QDAE_solver_algo_analysis}.

\subsubsection{Index $1$}\label{subsec:ind1_algo}
We will now discuss how to solve Theorem~\ref{thm:dae_index1_ode}. Let $U_1$ be the unitary that prepares the initial state $\ket{\psi_0}$ to the algorithm of Theorem~\ref{thm:ODE_solver}
\begin{equation}\label{def:ind1_psi0}
\ket{\psi_0} = U_1 \ket{0,0,0} = \frac{1}{\normhist_0} \left[\norm{\vec{x}_0} \ket{0,0,\vec{x}_0} + h \norm{\vec{f}} \sum_{j=0}^{m-1} \ket{j,1,f} \right],    
\end{equation}
where the first register corresponds to the time index $j$, second register corresponds to the Taylor index, and the third register is the system register holding the state description at time $jh$, $h := \Delta t$ is the time-step corresponding to the quantum $\ode$ solver, $m = \ceil{T/h}$ is the number of time-steps and $\normhist_0 = \sqrt{\norm{\vec{x}(0)}^2 + m h^2 \norm{\vec{f}}^2}$ is the normalization to ensure it is a valid quantum state. 

Additionally, consider the unitary $U_2$ that prepares the state $\ket{\phi_0}$ defined as
\begin{equation}\label{def:ind1_phi0}
\ket{\phi_0} = U_2 \ket{0,0,0} = \frac{1}{\normhist'_0} \left[\norm{\vec{x}_0} \ket{0,0,\vec{x}_0} + \norm{\vec{f}} \sum_{j=1}^{m} \ket{j,0,f} \right],    
\end{equation}
which is just a slight variant of the state $\ket{\psi_0}$ and notably holds the force vector across time indices $j\in [m]$ without a pre-factor of $h$. Moreover, since the state corresponding to the Taylor index remains in $\ket{0}$, the unitary $U_2$ can be prescribed to only include gates acting on the first and third registers. The normalization is $\normhist'_0 = \sqrt{\norm{\vec{x}(0)}^2 + m \norm{\vec{f}}^2}$.

For index $1$, we know from Theorem~\ref{thm:dae_index1_ode} that the $\ode$ solve over $\vec{y} := \rlcP_0 \vec{x}$ requires inputs corresponding to the initial state $\rlcP_0 \vec{x}_0$ and forcing $\rlcP_0 \rlcM_1^{-1} \vec{f}$. The corresponding input state, which we denote by $\ket{\psi_{y,0}}$ to be able to use Theorem~\ref{thm:ODE_solver} is then
\begin{equation}\label{def:ind1_psiy0}
\ket{\psi_{y,0}} = \frac{1}{\normhist_{y,0}} \left[\norm{\rlcP_0 \vec{x}_0} \ket{0,0,\rlcP_0 \vec{x}_0} + h \norm{\rlcP_0 \rlcM_1^{-1} \vec{f}} \sum_{j=0}^{m-1} \ket{j,1, \rlcP_0 \rlcM_1^{-1} f} \right],    
\end{equation}
where $\normhist_{y,0} = \sqrt{\norm{\rlcP_0 \vec{x}(0)}^2 + m h^2 \norm{\rlcP_0 \rlcM_1^{-1} \vec{f}}^2}$. We will presently show how $\ket{\psi_{y,0}}$ can be obtained from $\ket{\psi_0}$. Additionally, for the linear-algebraic equation solve for $\vec{z} := \rlcQ_0 \vec{x}$, we need to hold the relevant inputs corresponding to the initial state $\rlcQ_0 \vec{x}_0$ and forcing $\rlcQ_0 \rlcM_1^{-1} \vec{f}$. The corresponding input state, which we denote by $\ket{\phi_{z,0}}$, is then
\begin{equation}\label{def:ind1_phiz0}
\ket{\phi_{z,0}} = \frac{1}{\normhist_{z,0}} \left[\norm{\rlcQ_0 \vec{x}_0} \ket{0,0, \rlcQ_0 \vec{x}_0} + \norm{\rlcQ_0 \rlcM_1^{-1} \vec{f}} \sum_{j=1}^{m} \ket{j,0,\rlcQ_0 \rlcM_1^{-1} f} \right],    
\end{equation}
where $\normhist_{z,0} = \sqrt{\norm{\rlcQ_0 \vec{x}(0)}^2 + m \norm{\rlcQ_0 \rlcM_1^{-1} \vec{f}}^2}$.

\paragraph{Input for index $1$.} We now describe the input to the algorithm for $\dae$ of index $1$. We consider an additional register that tracks the index of the variables. For simplicity, we ignore the normalizations of the quantum states below:
\begin{align}\label{eq:ind1_input}
\ket{0,0,0,0}  \xrightarrow{H \otimes \id} & \ket{0} \ket{0,0,0} + \ket{1} \ket{0,0,0} \\
\xrightarrow{\ket{0}\bra{0} \otimes U_1 + \ket{1}\bra{1}\otimes U_2} & \ket{0}\ket{\psi_0} + \ket{1}\ket{\phi_0} \\
\xrightarrow{\ket{0}\bra{0} \otimes \Big(\id \otimes \ket{0}\bra{0} \otimes \rlcP_0 + \id \otimes \ket{1}\bra{1}\otimes \rlcP_0 \rlcM_1^{-1}\Big) + \ket{1}\bra{1} \otimes \id}   & \ket{0}\ket{\psi_{y,0}} + \ket{1}\ket{\phi_0} \\
\xrightarrow{\ket{1}\bra{1} \otimes \Big(\ket{0}\bra{0} \otimes \id \otimes \rlcQ_0 + \sum_{j=1}^m \ket{j}\bra{j} \otimes \id \otimes \rlcQ_0 \rlcM_1^{-1}\Big) + \ket{0}\bra{0} \otimes \id}   & \ket{0}\ket{\psi_{y,0}} + \ket{1}\ket{\phi_{z,0}}.
\end{align}

\paragraph{Solving the $\dae$.} We now give a rundown of the algorithm for index $1$ assuming exact implementation of the linear operators corresponding to the projectors and dynamics. As part of the analysis in the next section, we will track errors and sub-normalizations due to using their approximate block-encodings. For brevity, in the following, we will also ignore the normalizations of the quantum states as done above. 

\noindent \emph{Step 1: ODE solve. \,\,}
Let $\calL^{-1}$ be the linear operator corresponding to the linear system solve of Theorem~\ref{thm:ODE_solver} where $\bm{\calA}$ is now instantiated as $\bm{\calA} := \rlcP_0 \rlcM_1^{-1} \rlcK$. We will apply $\calL^{-1}$ (or rather its block-encoding) to solve for the differential variables $\vec{y} := \rlcP_0 \vec{x}$ as follows, staring from the input of Eq.~\eqref{eq:ind1_input}:
\begin{equation}\label{eq:ind1_ode_solve}
\ket{0}\ket{\psi_{y,0}} + \ket{1}\ket{\phi_{z,0}} \xrightarrow{\ket{0}\bra{0}\otimes \calL^{-1} + \ket{1}\bra{1}\otimes \id} \ket{0} \left[ \sum_{j=0}^m \ket{j,0,y(jh)} \right] + \ket{1} \ket{\phi_{z,0}},
\end{equation}
where we now have the desired history state over $\vec{y}(jh) \,\forall j\in [m]_0$ i.e., $\ket{\psihist^{(y)}}$ conditioned on the first register being $\ket{0}$.

\noindent \emph{Step 2: Linear-algebraic solve.\,\,} We now solve for $\vec{z} = \rlcQ_0 \vec{x}$, which recall satisfies $\vec{z} = -\rlcQ_0 \rlcM_1^{-1} \rlcK \vec{y} + \rlcQ_0 \rlcM_1^{-1} \vec{f}$ (Theorem~\ref{thm:dae_index1_ode}). As the goal is to finally obtain $\vec{x} = \vec{y} + \vec{z}$, we do this in two steps. We first combine the solution of $\vec{y}$ with $-\rlcQ_0 \rlcM_1^{-1} \rlcK \vec{y}$ to obtain an intermediate state as $\vec{w}= (\id -\rlcQ_0 \rlcM_1^{-1} \rlcK)\vec{y}$ and then finally combine $\vec{w}$ with $\rlcQ_0 \rlcM_1^{-1} \vec{f}$ to obtain $\vec{x} = \vec{w} + \rlcQ_0 \rlcM_1^{-1} \vec{f}$. Note that the initial condition is reset back as $\vec{x}(0) = \vec{y}(0) + \vec{z}(0)$. This is implemented as follows starting from the state obtained at end of Eq.~\eqref{eq:ind1_ode_solve}
\begin{align}\label{eq:ind1_LA_solve}
& \ket{0} \left[ \sum_{j=0}^m \ket{j,0,y(jh)} \right] + \ket{1} \ket{\phi_{z,0}} \\
\xrightarrow{\ket{0}\bra{0} \otimes \Big(\ket{0}\bra{0} \otimes \id + \sum_{j=1}^m \ket{j}\bra{j} \otimes (\id - \rlcQ_0 \rlcM_1^{-1} \rlcK) \Big) + \ket{1}\bra{1} \otimes \id}  & \ket{0} \left[ \ket{0,0,y(0)} + \sum_{j=1}^m \ket{j,0, w(jh)} \right] + \ket{1} \ket{\phi_{z,0}} \\
\xrightarrow{H \otimes \id}  & \ket{0} \left[\sum_{j=0}^m \ket{j,0, x(jh)} \right] + \ket{1} \ket{\mathrm{junk}},
\end{align}
where we noted that $\ket{\phi_{z,0}}$ contains the relevant initial condition and projected part of forcing to be added to the $w(jh) \, \forall j \in [m]$.

\subsubsection{Index $2$}\label{subsec:ind2_algo}
Recall from Theorem~\ref{thm:dae_index2_ode} that the equations of interest for $\vec{y} = \rlcP_0 \rlcP_1 \vec{x}$, $\vec{z}_1 = \rlcQ_0 \rlcP_1 \vec{x}$, and $\vec{z}_2 = \rlcQ_1 \vec{x}$ are given by
\begin{align*}
    \dot{\vec{y}} &=  - \rlcP_0 \rlcP_1 \rlcM_2^{-1} \rlcK \vec{y} + \rlcP_0 \rlcP_1 \rlcM_2^{-1} \vec{f}, \\
    \vec{z}_2 &= - \rlcQ_1 \rlcM_2^{-1} \rlcK \vec{y} + \rlcQ_1 \rlcM_2^{-1} \vec{f}, \\ 
    \vec{z}_1
    &= \rlcQ_0 \rlcQ_1 \left(\rlcM_2^{-1}\rlcK_2\right)^2 \vec{y} + \rlcQ_0 (2 \rlcQ_1 - \id) \rlcM_2^{-1} \rlcK\vec{y} - \rlcQ_0 \rlcQ_1 \rlcM_2^{-1}\rlcK_2 \rlcM_2^{-1} \vec{f} + \rlcQ_0 (\id - 2\rlcQ_1) \rlcM_2^{-1} \vec{f}.
\end{align*}
In our algorithm, the goal is to obtain an approximate solution to $\vec{x}(t), \forall t \in (0,T]$ and for this, we do not need to determine $\vec{z}_1$ and $\vec{z}_2$ on their own. We thus consider $\vec{z} = \vec{z}_1 + \vec{z}_2$ which is now given by the equation
\begin{align}\label{eq:ind2_LA_z}
\vec{z} &= \left[ \rlcQ_0 \rlcQ_1 (\rlcM_2^{-1} \rlcK_2)^2 + (\rlcQ_0 \rlcQ_1 - \rlcQ_0 \rlcP_1 - \rlcQ_1)\rlcM_2^{-1} \rlcK \right] \vec{y} \nonumber \\
&\qquad + \left[- \rlcQ_0 \rlcQ_1 \rlcM_2^{-1} \rlcK_2 + \rlcQ_0 \rlcP_1 - \rlcQ_0 \rlcQ_1 + \rlcQ_1 \right] \rlcM_2^{-1}\vec{f}.
\end{align}
Let us introduce the following notation to simplify the equations:
\begin{align}\label{eq:ind2_LA_z_simple}
\rlcG_2 &:=  \left[ \rlcQ_0 \rlcQ_1 (\rlcM_2^{-1} \rlcK_2)^2 + (\rlcQ_0 \rlcQ_1 - \rlcQ_0 \rlcP_1 - \rlcQ_1)\rlcM_2^{-1} \rlcK \right] \\
\mathbf{F}_2 &:= \left[- \rlcQ_0 \rlcQ_1 \rlcM_2^{-1} \rlcK_2 + \rlcQ_0 \rlcP_1 - \rlcQ_0 \rlcQ_1 + \rlcQ_1 \right] \rlcM_2^{-1}
\end{align}
so we can write the equations over $\vec{z}$ now as 
\begin{align}\label{eq:ind2_LA_z_simple_final}
\vec{z} = \rlcG_2 \vec{y} + \mathbf{F}_2 \vec{f}.
\end{align}
We can now proceed as we had in the case of index one. We start off with states $\ket{\psi_{y,0}}$ that encodes the inputs to the $\ode$ solve over $\vec{y}$ and $\ket{\phi_{z,0}}$ that encodes the initial condition $\vec{z}_0 = (\id - \rlcQ_1) \vec{x}$ and the projected part of the force $\mathbf{F}_2 \vec{f}$. We carry out an $\ode$ solve considering $\ket{\psi_{y,0}}$ as the input and then apply the corresponding linear-algebraic transformation (that here would correspond to $(\id + \rlcG_2)$) followed by a Hadamard test.

\subsection{Admissible projectors}
We now discuss how to construct admissible projectors (Section~\ref{subsec:prelims_index_DAE}) corresponding to $\dae$s with tractability index $2$.
\paragraph{Index one.} We set $\rlcP_0 = \rlcM^+ \rlcM$ to be the orthogonal projector onto $\ker(\rlcM)^\perp$. We then obtain $\rlcQ_0$ by setting $\rlcQ_0 = \id - \rlcP_0$ to be the orthogonal projector onto $\ker(\rlcM)$. Here, we obtain admissibility trivially.

\paragraph{Index two.} We construct $\rlcQ_1$ in the following way. This can be shown to be equivalent to the constructions discussed in \cite{lamour2013differential}.
\begin{claim}\label{claim:admissible_proj_ind2}
Let $\widetilde{\rlcQ}_1 = \id - \rlcM_1^+ \rlcM_1$ be the orthogonal projector onto $\ker(\rlcM_1)$. Define 
$$ 
\rlcQ_1 = \widetilde{\rlcQ}_1 (\rlcP_0 \widetilde{\rlcQ}_1)^+ \rlcP_0.
$$
The operator $\rlcQ_1$ is $(i)$ admissible i.e., $\rlcQ_1 \rlcQ_0 = 0$, $(ii)$ a valid projector i.e., $\rlcQ_1^2 = \rlcQ_1$, and $(ii)$ projects exactly onto $\ker(\rlcM_1)$ i.e., $\mathrm{Im}(\rlcQ_1) = \ker(\rlcM_1)$.
\end{claim}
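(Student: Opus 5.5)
The plan is to reduce all three properties to one structural fact about $\rlcB := \rlcP_0 \widetilde{\rlcQ}_1$: its kernel is exactly $\ker(\widetilde{\rlcQ}_1)$. Equivalently, $\rlcP_0$ is injective on $\ker(\rlcM_1)$. Once this holds, Moore--Penrose identities for $\rlcB^+$ give each item of Claim~\ref{claim:admissible_proj_ind2} in a line or two.

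\emph{Admissibility.} Item $(i)$ needs no structure. Since $\rlcP_0 = \id - \rlcQ_0$ and $\rlcQ_0$ is a projector, $\rlcP_0\rlcQ_0 = 0$. Hence
$$\rlcQ_1\rlcQ_0 = \widetilde{\rlcQ}_1 \rlcB^+ \rlcP_0 \rlcQ_0 = 0.$$

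\emph{Key lemma, and the expected main obstacle.} I will show $\ker(\rlcM_1)\cap\ker(\rlcM) = \{0\}$. Take $\vec{v}$ with $\rlcM\vec{v}=0$ and $\rlcM_1\vec{v}=0$. Then $\rlcQ_0\vec{v}=\vec{v}$, so
$$0 = \rlcM_1\vec{v} = \rlcM\vec{v} + \rlcK\rlcQ_0\vec{v} = \rlcK\vec{v}.$$
Thus $\vec{v}\in\ker\rlcM\cap\ker\rlcK$, which forces $\det(\lambda\rlcM-\rlcK)\equiv 0$ unless $\vec{v}=0$. So regularity of the pencil, which is implicit in the $\dae$ having a finite tractability index, gives $\vec{v}=0$.

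Consequently, if $\rlcB\vec{u}=0$, then $\widetilde{\rlcQ}_1\vec{u}\in\ker\rlcM_1$ satisfies $\rlcP_0\widetilde{\rlcQ}_1\vec{u}=0$. That puts it in $\ker\rlcM$ as well, so $\widetilde{\rlcQ}_1\vec{u}=0$. Therefore $\ker\rlcB=\ker\widetilde{\rlcQ}_1$. This is the step needing care, because it is the only place hypotheses on the $\dae$ enter; I would state the regularity assumption explicitly here.

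\emph{Projector identities.} The matrix $\rlcB^+\rlcB$ is the orthogonal projector onto $(\ker\rlcB)^\perp$. Since $\widetilde{\rlcQ}_1$ is Hermitian, $(\ker\widetilde{\rlcQ}_1)^\perp=\mathrm{Im}(\widetilde{\rlcQ}_1)$. So $\rlcB^+\rlcB=\widetilde{\rlcQ}_1$, and in particular $\widetilde{\rlcQ}_1\rlcB^+\rlcB=\widetilde{\rlcQ}_1$.

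For item $(ii)$, use $\rlcB^+\rlcB\rlcB^+=\rlcB^+$:
$$\rlcQ_1^2=\widetilde{\rlcQ}_1\rlcB^+(\rlcP_0\widetilde{\rlcQ}_1)\rlcB^+\rlcP_0=\widetilde{\rlcQ}_1\rlcB^+\rlcB\rlcB^+\rlcP_0=\widetilde{\rlcQ}_1\rlcB^+\rlcP_0=\rlcQ_1.$$

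For item $(iii)$, the inclusion $\mathrm{Im}(\rlcQ_1)\subseteq\mathrm{Im}(\widetilde{\rlcQ}_1)=\ker\rlcM_1$ is immediate from the leftmost factor. For the reverse inclusion, note that
$$\rlcQ_1\widetilde{\rlcQ}_1=\widetilde{\rlcQ}_1\rlcB^+\rlcP_0\widetilde{\rlcQ}_1=\widetilde{\rlcQ}_1\rlcB^+\rlcB=\widetilde{\rlcQ}_1.$$
Hence every $\vec{v}\in\ker\rlcM_1$ satisfies $\vec{v}=\widetilde{\rlcQ}_1\vec{v}=\rlcQ_1\vec{v}$, so $\vec{v}\in\mathrm{Im}(\rlcQ_1)$.
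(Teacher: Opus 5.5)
Your proof is correct and follows the same route as the paper's: admissibility comes from $\rlcP_0\rlcQ_0=0$, idempotence from $\rlcB^+\rlcB\rlcB^+=\rlcB^+$, and the reverse inclusion in $(iii)$ from the structural fact $\ker\rlcM\cap\ker\rlcM_1=\{0\}$. You also make rigorous two steps that the paper only asserts. First, you derive that fact from regularity of the pencil. Second, you show $\ker\rlcB=\ker\widetilde{\rlcQ}_1$, hence $\rlcB^+\rlcB=\widetilde{\rlcQ}_1$, which is exactly what justifies the paper's claim that $\rlcB^+\rlcB$ acts as the identity on $\mathrm{Im}(\widetilde{\rlcQ}_1)$.
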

\begin{proof}
Let $\rlcA = \rlcP_0 \widetilde{\rlcQ}_1$. We can then rewrite our proposed projector as $\rlcQ_1 = \widetilde{\rlcQ}_1 \rlcA^+ \rlcP_0$.

$(i)$ We directly obtain from evaluation that $\rlcQ_1 \rlcQ_0 = \tilde{\rlcQ}_1 \rlcA^+ \rlcP_0 \rlcQ_0 = 0$ since $\rlcP_0 = \id - \rlcQ_0$. 

$(ii)$ By directly evaluating, we obtain
$$
\rlcQ_1^2 = \left( \widetilde{\rlcQ}_1 \rlcA^+ \rlcP_0 \right) \left( \tilde{\rlcQ}_1 \rlcA^+ \rlcP_0 \right) = \widetilde{\rlcQ}_1 \rlcA^+ (\rlcP_0 \tilde{\rlcQ}_1) \rlcA^+ \rlcP_0 =  \widetilde{\rlcQ}_1 \rlcA^+ (\rlcA) \rlcA^+ \rlcP_0 = \widetilde{\rlcQ}_1 (\rlcA^+) \rlcP_0 = \rlcQ_1
$$
where we used that the Moore-Penrose pseudoinverse satisfies $\rlcA^+ \rlcA \rlcA^+ = \rlcA^+$. 

$(iii)$ Finally, we need to show that this projects exactly onto $\ker(\rlcM_1)$. Since the leftmost matrix of $\rlcQ_1$ is $\widetilde{\rlcQ}_1$, the output of $\rlcQ_1$ will always be constrained to $\text{im}(\widetilde{\rlcQ}_1)$. Therefore, $\text{im}(\rlcQ_1) \subseteq \ker(\rlcM_1)$. To prove it is exactly $\ker(\rlcM_1)$, we must prove that $\rlcQ_1$ acts as the identity operator for any vector already inside $\ker(\rlcM_1)$. Let $\vec{x} \in \ker(\rlcM_1)$. By definition of the orthogonal projector $\widetilde{\rlcQ}_1$, we have $\widetilde{\rlcQ}_1 \vec{x} = \vec{x}$. We note that
$\rlcQ_1 \vec{x} = \widetilde{\rlcQ}_1 \rlcA^+ \rlcP_0 \vec{x} \implies \rlcQ_1 \vec{x} = \widetilde{\rlcQ}_1 \rlcA^+ \rlcP_0 (\widetilde{\rlcQ}_1 \vec{x}) = \widetilde{\rlcQ}_1 \rlcA^+ \rlcA \vec{x}$. Now we must evaluate the effect of $\rlcA^+ \rlcA \vec{x}$. The matrix $\rlcA^+ \rlcA$ is the orthogonal projector onto $\text{im}(\rlcA^\dagger) = \text{im}(\widetilde{\rlcQ}_1 \rlcP_0)$.To understand how this operator behaves on $\vec{x}$, we rely on a fundamental structural property of regular index-2 DAEs: $\ker(\rlcM) \cap \ker(\rlcM_1) = \{0\}$. Since $\ker(\rlcP_0) = \ker(\rlcM)$ and $\text{im}(\widetilde{\rlcQ}_1) = \ker(\rlcM_1)$, this means no non-zero vector in $\text{im}(\widetilde{\rlcQ}_1)$ belongs to the null space of $\rlcP_0$. 
The operator $\rlcA^+ \rlcA$ acts as the identity matrix for any vector residing in $\text{im}(\widetilde{\rlcQ}_1)$. Since $\vec{x} \in \text{im}(\widetilde{Q}_1)$, it follows that: $\rlcA^+ \rlcA \vec{x} = \vec{x}$. Substituting this back into our equation gives us $\rlcQ_1 \vec{x} = \widetilde{\rlcQ}_1 \vec{x} = \vec{x}$. This completes the proof.
\end{proof}

\subsection{Useful block-encodings}\label{subsec:useful_BEs}
We assume access to an $(\alpha_M, a_M, \varepsilon_M)$ block-encoding $U_M$ of $\rlcM$ and an $(\alpha_K, a_K, \varepsilon_K)$ block-encoding $U_K$ of $\rlcK$. We state the costs here in terms of the relevant block-encoding parameters and also spectral properties of $\rlcM$ and $\rlcK$. 

\paragraph{Upper bounds on norms.}
\begin{claim}\label{claim:ub_norm_Mi_Ki}
Let $\uptau > 0$. Given a linear time-independent $\dae$ of Eq.~\eqref{eq:dae-linear} with index $k \in \{1,2\}$, we have the following bounds on the norms of the matrices $\{\rlcM_j\}_{j \in [k]}$
\begin{enumerate}[$(i)$]
    \item $\norm{\rlcM_1}_2 \leq \norm{\rlcM}_2 + \norm{\rlcK}_2$
    \item $\norm{\rlcM_2}_2 \leq \norm{\rlcM}_2 + \norm{\rlcK}_2 (1+1/\uptau)$ as long as $\sigma^+_\min(\rlcP_0 \widetilde{\rlcQ}_1) \geq \uptau$.
\end{enumerate}
\end{claim}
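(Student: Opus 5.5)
The plan is to use the recursive definitions $\rlcM_1 = \rlcM + \rlcK\rlcQ_0$ and $\rlcM_2 = \rlcM_1 + \rlcK_1\rlcQ_1$ with $\rlcK_1 = \rlcK\rlcP_0$, together with the triangle inequality and submultiplicativity of the spectral norm. The only other ingredients are norm bounds on the projectors themselves.

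For $(i)$, I would write $\norm{\rlcM_1} \leq \norm{\rlcM} + \norm{\rlcK}\norm{\rlcQ_0}$. Since $\rlcQ_0 = \id - \rlcM^+\rlcM$ is an orthogonal projector, $\norm{\rlcQ_0} \leq 1$, and the claim follows immediately.

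For $(ii)$, I first rewrite $\rlcM_2$ so that $\rlcM_1$ does not enter with its full norm:
$$
\rlcM_2 = \rlcM + \rlcK\rlcQ_0 + \rlcK\rlcP_0\rlcQ_1 .
$$
Naively bounding $\norm{\rlcM_1} + \norm{\rlcK}\norm{\rlcP_0}\norm{\rlcQ_1}$ would give $\norm{\rlcM} + \norm{\rlcK}(1 + 1/\uptau)$ only if the $\rlcK\rlcQ_0$ term is absorbed. So I would combine the two $\rlcK$-terms as
$$
\rlcK(\rlcQ_0 + \rlcP_0\rlcQ_1).
$$
The key step is bounding $\norm{\rlcQ_0 + \rlcP_0\rlcQ_1}$ by $1 + 1/\uptau$.

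Using Claim~\ref{claim:admissible_proj_ind2}, write $\rlcQ_1 = \widetilde{\rlcQ}_1 \rlcA^+ \rlcP_0$ with $\rlcA = \rlcP_0\widetilde{\rlcQ}_1$. Then
$$
\rlcP_0\rlcQ_1 = \rlcA\rlcA^+\rlcP_0 .
$$
This is an orthogonal projector (onto $\mathrm{im}\,\rlcA$) composed with $\rlcP_0$. Since $\mathrm{im}\,\rlcA \subseteq \mathrm{im}\,\rlcP_0$, it equals $\rlcA\rlcA^+$, the orthogonal projector onto $\mathrm{im}\,\rlcA \subseteq \ker(\rlcM)^\perp$. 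The range of $\rlcQ_0$ is $\ker\rlcM$, which is orthogonal to $\mathrm{im}\,\rlcA$, so $\rlcQ_0 + \rlcA\rlcA^+$ is itself an orthogonal projector with norm at most $1$.

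This actually gives the sharper bound $\norm{\rlcM_2} \leq \norm{\rlcM} + \norm{\rlcK}$. Since $1 \leq 1 + 1/\uptau$, the stated inequality follows.

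As a fallback, in case the orthogonality structure is not wanted, I would bound $\norm{\rlcQ_1} \leq \norm{\widetilde{\rlcQ}_1}\norm{\rlcA^+}\norm{\rlcP_0} \leq 1/\sigma^+_{\min}(\rlcA) \leq 1/\uptau$. Combined with $\norm{\rlcQ_0} \leq 1$, this gives $\norm{\rlcQ_0} + \norm{\rlcP_0}\norm{\rlcQ_1} \leq 1 + 1/\uptau$ directly. This fallback is exactly where the hypothesis on $\sigma^+_{\min}(\rlcP_0\widetilde{\rlcQ}_1)$ enters.

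The main point of care is verifying $\rlcP_0\rlcA\rlcA^+ = \rlcA\rlcA^+$ and the orthogonality of the two ranges. Everything else is routine triangle-inequality bookkeeping, so I would present the fallback bound as the primary argument, since it matches the stated form.
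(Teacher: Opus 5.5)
Your proposal is correct, and the fallback you intend to present is the paper's proof. The paper bounds $\norm{\rlcQ_1}\le\norm{\widetilde{\rlcQ}_1}\,\norm{(\rlcP_0\widetilde{\rlcQ}_1)^+}\,\norm{\rlcP_0}\le 1/\uptau$, writes $\norm{\rlcM_2}\le\norm{\rlcM_1}+\norm{\rlcK}\norm{\rlcP_0}\norm{\rlcQ_1}$, and inserts part $(i)$. Your worry that this ``naive'' route works ``only if the $\rlcK\rlcQ_0$ term is absorbed'' is unfounded. Part $(i)$ already gives $\norm{\rlcM_1}\le\norm{\rlcM}+\norm{\rlcK}$, and adding $\norm{\rlcK}/\uptau$ yields exactly the stated bound. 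Regrouping as $\rlcK(\rlcQ_0+\rlcP_0\rlcQ_1)$ and applying the triangle inequality is the same estimate.

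Your primary argument is a genuinely different and sharper route, and it is valid. Write $\rlcA=\rlcP_0\widetilde{\rlcQ}_1$ as you do.
\begin{itemize}
\item The step you flagged is a one-liner: $\rlcA\rlcA^+$ is Hermitian and $\rlcP_0\rlcA=\rlcA$, so $\rlcA\rlcA^+\rlcP_0=(\rlcP_0\rlcA\rlcA^+)^\dagger=\rlcA\rlcA^+$.
\item Then $\rlcQ_0\rlcA\rlcA^+=\rlcQ_0\rlcP_0\rlcA\rlcA^+=0$, together with its adjoint, shows that $\rlcQ_0+\rlcP_0\rlcQ_1$ is an orthogonal projector.
\end{itemize}
This gives $\norm{\rlcM_2}\le\norm{\rlcM}+\norm{\rlcK}$ with no hypothesis on $\sigma^+_\min(\rlcP_0\widetilde{\rlcQ}_1)$. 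For example, with $\rlcM=\mathrm{diag}(1,0)$ and $\rlcK$ the $2\times 2$ swap matrix, one gets $\rlcQ_0+\rlcP_0\rlcQ_1=\id$ while $\norm{\rlcQ_1}=\sqrt{2}$. So the $1/\uptau$ in $(ii)$ is an artifact of the triangle inequality. The same holds downstream, e.g.\ in Claim~\ref{claim:norm_M_K_ind1_ind2}$(ii)$ and in $\kappa_{M_2}$ in Corollary~\ref{thm:sim_ind2}.

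The price is that your argument depends on the specific construction of Claim~\ref{claim:admissible_proj_ind2} and on $\rlcP_0=\rlcM^+\rlcM$ being Hermitian. Idempotence of $\rlcQ_0+\rlcP_0\rlcQ_1$ holds for any admissible $\rlcQ_1$, since $\rlcQ_1\rlcP_0=\rlcQ_1$. But for another admissible choice, $\rlcP_0\rlcQ_1$ is an oblique projector and the contraction bound is lost. The paper's estimate, by contrast, needs only $\norm{\rlcQ_1}\le 1/\uptau$. Since $\uptau$ enters elsewhere anyway (the subnormalization of the block-encoding of $\rlcQ_1$ in Claim~\ref{claim:admissible_projs_BE_ind2}), the paper loses little by using the cruder bound here.
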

\begin{proof}
To prove $(i)-(ii)$, we repeatedly use the triangle inequality along with sub-multiplicativity of $\norm{\cdot}_2$ along with the fact that $2$-norms of orthogonal projectors is $1$ i.e., $\norm{\rlcP_0}_2 = \norm{\rlcQ_0}_2 = \norm{\widetilde{\rlcP}_1}_2 = \norm{\widetilde{\rlcQ}_1}_2 = 1, \forall i \in [k]_0$.

$(i)$ We note that
$$
\norm{\rlcM_1}_2 \leq \norm{\rlcM}_2 + \norm{\rlcK} \norm{\rlcQ_0}_2 = \norm{\rlcM}_2 + \norm{\rlcK}_2.
$$ 

$(ii)$ We note that $\norm{\rlcQ_1} \leq \norm{\widetilde{\rlcQ}_1} \norm{(\rlcP_0 \widetilde{\rlcQ}_1)^+} \norm{\rlcP_0} \leq 1/\uptau$. We then have
$$
\norm{\rlcM_2} \leq \norm{\rlcM_1} + \norm{\rlcK} \norm{\rlcP_0} \norm{\rlcQ_1} \leq \norm{\rlcM}_2 + \norm{\rlcK}_2 (1+1/\uptau).
$$
\end{proof}

\subsubsection{DAE of index 0}
We require a block-encoding of $\rlcM^{-1}$ to implement the linear dynamical operator $-\rlcM^{-1} \rlcK$ corresponding to index zero and to implement the state corresponding to the forcing $\rlcM^{-1} \ket{f}$.

\begin{claim}\label{claim:inv_M_BE_ind0}
Let $\varepsilon, \sigma \in (0,1)$. Let $\rlcM$ be as defined in Definition~\ref{def:prob_setup_DAEs} with a $(\alpha_M,a_M,\varepsilon_M)$-block-encoding $U_M$ where $\alpha_M \geq 2\norm{\rlcM}$ and that its minimum singular value satisfies $\sigma_\min(\rlcM) \geq \sigma$. Suppose that the given $\dae$ has index 0. Then, there exists a $(2/\sigma,a_M + 1,\varepsilon)$ block-encoding of $\rlcM^{-1}$ for $\varepsilon_M = o(\sigma^2 \varepsilon/\log(1/\sigma \varepsilon))$ that requires 
$O\left((\alpha_M/\sigma) \log(1/\sigma \varepsilon)\right)$ uses of $U_M$.
\end{claim}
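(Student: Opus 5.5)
The plan is to apply the pseudoinverse lemma (Lemma~\ref{lem:pseudoinverse_BE}) to a suitably normalized version of $\rlcM$. Since the $\dae$ has index $0$, $\rlcM$ is non-singular, so $\rlcM^{+}=\rlcM^{-1}$.

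\textbf{Step 1 (normalization).} Work with $\rlcA := \rlcM/\alpha_M$. The unitary $U_M$ is an $(\alpha_M,a_M,\varepsilon_M)$-block-encoding of $\rlcM$, so it is also a $(1,a_M,\varepsilon_M/\alpha_M)$-block-encoding of $\rlcA$. Because $\alpha_M \geq 2\norm{\rlcM}$, every singular value of $\rlcA$ is at most $1/2\leq 1$. Every singular value is also at least $\sigma/\alpha_M$.

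There is a wrinkle here: with this normalization the effective condition parameter is $\kappa_A=\alpha_M/\sigma$, not $1/\sigma$. I expect the intended accounting to be the following. Apply Lemma~\ref{lem:pseudoinverse_BE} with normalization factor $\alpha=\alpha_M$ to the matrix $\rlcM$, viewed as having singular values in $[\sigma,\norm{\rlcM}]\subseteq[\sigma,1]$. This requires $\norm{\rlcM}\leq 1$, which we may assume after rescaling, and then $\kappa_A=1/\sigma$. The resulting cost is $O(\kappa_A\,\alpha\, T_A\log(\kappa_A/\varepsilon)) = O((\alpha_M/\sigma)\log(1/(\sigma\varepsilon)))$ uses of $U_M$, which matches the claim. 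The hypothesis $\alpha\geq 2$ of the lemma is covered by $\alpha_M\geq 2\norm{\rlcM}$ together with the normalization convention; if needed, one can pad $\alpha_M$ up to $\max(\alpha_M,2)$, which only changes constants.

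\textbf{Step 2 (precision).} The lemma asks that $\varepsilon_1 = o(\varepsilon/(\kappa_A^2\log(\kappa_A/\varepsilon)))$. With $\kappa_A = 1/\sigma$ this reads $\varepsilon_M = o(\sigma^2\varepsilon/\log(1/(\sigma\varepsilon)))$, exactly the stated condition. The output is a $(2\kappa_A,a_M+1,\varepsilon)$-block-encoding of $\rlcM^{+}$, that is, a $(2/\sigma,a_M+1,\varepsilon)$-block-encoding of $\rlcM^{-1}$.

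\textbf{Main obstacle.} The delicate point is matching normalizations. Lemma~\ref{lem:pseudoinverse_BE} is stated for a ``normalized matrix'' with singular values in $[1/\kappa_A,1]$, while separately allowing a subnormalization $\alpha$ of its block-encoding. I would make explicit that the lemma's $\rlcA$ is $\rlcM$ itself, scaled so that $\norm{\rlcM}\leq 1$, and that $\alpha_M$ plays the role of $\alpha$. This is what makes the singular-value lower bound $\sigma$ translate directly into $\kappa_A = 1/\sigma$, and it is the source of both the $1/\sigma$ prefactor in the subnormalization and the $\alpha_M/\sigma$ query cost.
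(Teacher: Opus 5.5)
Your strategy of applying Lemma~\ref{lem:pseudoinverse_BE} to a rescaled $\rlcM$ and translating back is the paper's. The normalization step has a gap, though, and it comes from misreading the ``wrinkle.'' A $\kappa_A$ that grows with the normalization factor is harmless, because that factor cancels when you translate back.

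The paper sets $\alpha := \alpha_M/2 \geq \norm{\rlcM}$ and $\rlcA := \rlcM/\alpha$. Then $U_M$ is a $(2,a_M,\varepsilon_M/\alpha)$-block-encoding of $\rlcA$. This meets the lemma's requirement $\alpha \geq 2$ exactly, which is what the factor $2$ in $\alpha_M \geq 2\norm{\rlcM}$ is for. The singular values of $\rlcA$ lie in $[\sigma/\alpha,1]$, so $\kappa_A = \alpha/\sigma$. The lemma gives a $(2\kappa_A,a_M+1,\delta)$-block-encoding of $\rlcA^{-1} = \alpha\rlcM^{-1}$, which is a $(2/\sigma,a_M+1,\delta/\alpha)$-block-encoding of $\rlcM^{-1}$. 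Taking $\delta = \alpha\varepsilon$ gives $\kappa_A/\delta = 1/(\sigma\varepsilon)$. The requirement $\varepsilon_M/\alpha = o(\delta/(\kappa_A^2\log(\kappa_A/\delta)))$ then becomes $\varepsilon_M = o(\sigma^2\varepsilon/\log(1/(\sigma\varepsilon)))$. The cost $O(2\kappa_A\log(\kappa_A/\delta))$ becomes $O((\alpha_M/\sigma)\log(1/(\sigma\varepsilon)))$. This is exactly the claim, with no assumption on $\norm{\rlcM}$.

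Your final argument has two problems.

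\begin{enumerate}
\item You assume $\norm{\rlcM}\leq 1$ ``after rescaling'' without showing the rescaling is harmless. Under $\rlcM \to \rlcM/s$, the quantities $\sigma$, $\alpha_M$, $\varepsilon_M$, the target $\rlcM^{-1}$ and the target error all change. The claimed bounds do survive, since $\alpha_M/\sigma$, $\sigma\varepsilon$, $\varepsilon_M/(\sigma^2\varepsilon)$ and the back-translated output parameters are invariant. But that invariance is the entire content of the proof, and you need to state and check it.
\item You secure the lemma's hypothesis $\alpha \geq 2$ by padding $\alpha_M$ to $\max(\alpha_M,2)$. This is not a constant-factor change. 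If $\alpha_M < 2$, the cost becomes $O((2/\sigma)\log(1/(\sigma\varepsilon)))$. That exceeds the claimed bound by a factor $2/\alpha_M$, which can be as large as $1/\sigma$, since only $\alpha_M \geq 2\sigma$ is guaranteed.
\end{enumerate}

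Rescaling by $s = \alpha_M/2$ fixes both problems at once, and that is precisely the paper's proof.
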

\begin{proof}
This follows from applying Lemma~\ref{lem:pseudoinverse_BE}. Let $\rlcA := \rlcM/\alpha$ where $\alpha \geq \norm{\rlcM}$ and assume that $\alpha_M$ is set as $\alpha_M = 2 \alpha$. Then, $\rlcA$ is a normalized matrix with $\spec(\rlcA) \in [\sigma/\alpha,1]$. We also observe that $U_M$ is a $(2,a_M,\varepsilon_M/\alpha)$ block-encoding of $\rlcA$
\begin{align*}
&\norm{\rlcM - 2\alpha (\bra{0^{a_M}} \otimes \id)U_M(\ket{0^{a_M}} \otimes \id)} \leq \varepsilon_M \\
\implies & \norm{\rlcA - 2(\bra{0^{a_M}} \otimes \id)U_M(\ket{0^{a_M}} \otimes \id)} \leq \varepsilon_M/\alpha,
\end{align*}
where we divided by $\alpha$ on both sides in the second line. Here, the relevant condition number parameter is then $\kappa_A = \alpha/\sigma$. Applying Lemma~\ref{lem:pseudoinverse_BE} then produces a $(2\kappa_A,a_M+1,\delta)$-block-encoding of $\rlcA^{-1}$, which we denote by $V$, provided $\varepsilon_M/\alpha = o(\delta/(\kappa_A^2 \log(\kappa_A/\delta)))$. The cost of implementing $V$ is $O(\kappa_A \log(\kappa_A/\delta))$ uses of $U_M$. Since $\rlcA^{-1} = \alpha \rlcM^{-1}$, $V$ is then a $(2/\sigma, a_M+1, \delta/\alpha)$-block-encoding of $\rlcM^{-1}$. Setting the desired error $\delta/\alpha = \varepsilon$ gives us that we require $\varepsilon_M = o(\alpha^2 \varepsilon/(\kappa_A^2 \log(\kappa_A/(\alpha \varepsilon)))) = o(\sigma^2 \varepsilon/\log(1/(\sigma \varepsilon))$. The cost of implementing $V$ is then $O((\alpha_M/\sigma) \log(1/(\sigma \varepsilon)))$ uses of $U_M$. This completes the proof.
\end{proof}

We can now combine the above block-encoding of $\rlcM^{-1}$ with that of $\rlcK$ to obtain a block-encoding of $\rlcM^{-1} \rlcK$.
\begin{claim}\label{claim:dynamics_BE_ind0}
Let $\varepsilon, \sigma \in (0,1)$. Let $\rlcM$ and $\rlcK$ be matrices as defined in Definition~\ref{def:prob_setup_DAEs} with a $(\alpha_M,a_M,\varepsilon_M)$-block-encoding $U_M$ implementable in time $T_M$ and a $(\alpha_K,a_K,\varepsilon_K)$-block-encoding $U_K$ implementable in time $T_K$, respectively. Let $\alpha_M \geq 2\norm{\rlcM}$ and the minimum singular value of $\rlcM$ satisfy $\sigma_\min(\rlcM) \geq \sigma$. Suppose that the given $\dae$ has index 0. Then, there exists a $(2\alpha_K/\sigma,a_M + a_K + 1,\varepsilon)$ block-encoding of the $\rlcM^{-1} \rlcK$ for $\varepsilon_M = o(\sigma^2 \varepsilon/(2 \alpha_K \log(2 \alpha_K/(\sigma \varepsilon)))$ and $\varepsilon_K = \sigma \varepsilon/4$ that requires gate complexity
$$
O\left((\alpha_M/\sigma) \log(\alpha_K/(\sigma \varepsilon)) + T_K\right).
$$
\end{claim}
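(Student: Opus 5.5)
The plan is to combine Claim~\ref{claim:inv_M_BE_ind0} with the product rule of Lemma~\ref{lem:prod_BEs}(ii). First, invoke Claim~\ref{claim:inv_M_BE_ind0} with a target precision $\delta$ (to be fixed below). This gives a $(2/\sigma, a_M+1, \delta)$-block-encoding $V$ of $\rlcM^{-1}$. It is valid provided $\varepsilon_M = o(\sigma^2\delta/\log(1/(\sigma\delta)))$, and it uses $O((\alpha_M/\sigma)\log(1/(\sigma\delta)))$ applications of $U_M$.

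Next, take the product of $V$ with the given $(\alpha_K,a_K,\varepsilon_K)$-block-encoding $U_K$ of $\rlcK$, acting on disjoint ancilla registers. By Lemma~\ref{lem:prod_BEs}(ii), this yields a block-encoding of $\rlcM^{-1}\rlcK$ with the following parameters:
\begin{itemize}
\item normalization $(2/\sigma)\alpha_K = 2\alpha_K/\sigma$;
\item ancilla count $a_M+a_K+1$;
\item error at most $(2/\sigma)\varepsilon_K + \alpha_K\delta$.
\end{itemize}
The gate cost is the cost of $V$ plus $T_K$, which is the claimed $O((\alpha_M/\sigma)\log(\cdot) + T_K)$ once $U_M$ is counted as the unit of cost in the first term.

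It remains to split the error budget $\varepsilon$ evenly between the two terms. Choosing $\varepsilon_K = \sigma\varepsilon/4$ makes $(2/\sigma)\varepsilon_K = \varepsilon/2$. Choosing $\delta = \varepsilon/(2\alpha_K)$ makes $\alpha_K\delta = \varepsilon/2$. Substituting this $\delta$ into the precision requirement of Claim~\ref{claim:inv_M_BE_ind0} gives
\[
\varepsilon_M = o\bigl(\sigma^2\varepsilon/(2\alpha_K\log(2\alpha_K/(\sigma\varepsilon)))\bigr),
\]
which matches the statement. The number of uses of $U_M$ becomes $O((\alpha_M/\sigma)\log(2\alpha_K/(\sigma\varepsilon)))$, again as claimed.

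The only delicate point is the hypothesis $\alpha_M \ge 2\|\rlcM\|$, which Claim~\ref{claim:inv_M_BE_ind0} needs and which we simply inherit from the statement. Beyond that, the work is bookkeeping: checking that the error from Lemma~\ref{lem:prod_BEs} is weighted correctly, namely that the $\rlcK$-error is multiplied by the normalization of $V$ and the inverse-error by $\alpha_K$. I expect no genuine obstacle.
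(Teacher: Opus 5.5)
Your proposal is correct and matches the paper's proof essentially step for step. The paper likewise invokes Claim~\ref{claim:inv_M_BE_ind0} at precision $\varepsilon/(2\alpha_K)$, combines the result with $U_K$ via Lemma~\ref{lem:prod_BEs}, and sets $\varepsilon_K = \sigma\varepsilon/4$ to split the error budget in half. Your remark that the first cost term implicitly counts uses of $U_M$ rather than multiplying by $T_M$ is also consistent with how the paper states the gate complexity.
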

\begin{proof}
Let $\varepsilon' \in (0,1)$ be an error parameter to be fixed later. From Claim~\ref{claim:inv_M_BE_ind0}, we obtain a $(2/\sigma,a_M + 1,\epsilon')$ block-encoding $U_{M^{-1}}$ of $\rlcM^{-1}$ as long as $\varepsilon_M = o(\sigma^2 \varepsilon'/\log(1/(\sigma \varepsilon')))$. We now combine this with the $(\alpha_K,a_K,\varepsilon_K)$-block-encoding $U_K$ of $\rlcK$ using Lemma~\ref{lem:prod_BEs}. This gives us a $\Big(2\alpha_K/\sigma, a_M + a_K + 1, (2/\sigma) \varepsilon_K + \alpha_K \varepsilon'\Big)$-block-encoding of $\rlcM^{-1}\rlcK$. Setting $\varepsilon'= \varepsilon/(2\alpha_K)$ and $\varepsilon_K = \sigma \varepsilon/4$ gives us a block-encoding of $\rlcM^{-1} \rlcK$ with the desired error $\varepsilon$. This in turn requires setting $\varepsilon_M = o(\sigma^2 \varepsilon/(2 \alpha_K \log(2 \alpha_K/(\sigma \varepsilon)))$. The total gate complexity is due to application of $U_{M^{-1}}$ which requires $O( (\alpha_M/\sigma) \log(\alpha_K/(\sigma \varepsilon)))$ uses of $U_M$ and one use of $U_K$.
\end{proof}

\subsubsection{DAE of index one}
To solve the inherent $\ode$ for index one, we will need block-encodings of $-\rlcP_0 \rlcM_1^{-1} \rlcK$ governing the dynamics, $\rlcP_0$ to implement the initial state $\rlcP_0 \ket{x(0)}$ and $\rlcP_0 \rlcM_1^{-1}$ to implement the forcing state $\rlcP_0 \rlcM_1^{-1} \ket{f}$. We first give block-encodings of the projectors $\rlcP_0$ and $\rlcQ_0 := \id - \rlcP_0$.

\begin{claim}\label{claim:projs_BE_ind1}
Let $\varepsilon, \sigma \in (0,1)$. Let $\rlcM$ be as defined in Definition~\ref{def:prob_setup_DAEs} with a $(\alpha_M,a_M,\varepsilon_M)$-block-encoding $U_M$ implementable in time $T_M$ where $\alpha_M \geq 2 \norm{\rlcM}$ and with minimum non-zero singular value satisfying $\sigma_\min^+(\rlcM) \geq \sigma$. Suppose that the given $\dae$ has index 1. Let $\rlcQ_0$ be the projector onto the kernel of $\rlcM$ and let $\rlcP_0=\id-\rlcQ_0$. Then, there exists the following block-encodings for $\varepsilon_M = o(\sigma^3 \varepsilon^2/\alpha_M)$:
\begin{enumerate}[$(i)$]
    \item $(2\kappa_M, 2a_M + 1,\varepsilon)$-block-encoding of $\rlcP_0$ implementable in $O(T_M \kappa_M \log(\kappa_M/\varepsilon))$ time,
    \item $(2\kappa_M + 1, 2a_M + 2,\varepsilon)$-block-encoding of $\rlcQ_0$ implementable in $O(T_M \kappa_M \log(\kappa_M/\varepsilon))$ time,
\end{enumerate}
where $\kappa_M = \alpha_M/\sigma$ is an upper bound on the condition number of $\rlcM$.
\end{claim}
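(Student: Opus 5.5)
Since $\rlcP_0 = \rlcM^+\rlcM$ is the orthogonal projector onto $\ker(\rlcM)^\perp = \mathrm{im}(\rlcM^\dagger)$, the plan is to build a block-encoding of $\rlcM^+$ with Lemma~\ref{lem:pseudoinverse_BE}, multiply it by $U_M$ using Lemma~\ref{lem:prod_BEs}(ii), and then obtain $\rlcQ_0 = \id - \rlcP_0$ as a linear combination via Lemma~\ref{lem:sum_BEs}.

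\textbf{Step 1: pseudoinverse.} As in the proof of Claim~\ref{claim:inv_M_BE_ind0}, set $\alpha := \alpha_M/2 \geq \norm{\rlcM}$ and $\rlcA := \rlcM/\alpha$. Then $\rlcA$ is normalized, and its non-zero singular values lie in $[\sigma/\alpha, 1]$. The same $U_M$ is a $(2,a_M,\varepsilon_M/\alpha)$-block-encoding of $\rlcA$, with effective condition number $\kappa_A = \alpha/\sigma \leq \kappa_M$. Lemma~\ref{lem:pseudoinverse_BE} applies to pseudoinverses, so singularity of $\rlcM$ is harmless. It yields a $(2\kappa_A, a_M+1, \delta)$-block-encoding of $\rlcA^+ = \alpha\rlcM^+$ using $O(\kappa_M\log(\kappa_M/\delta))$ calls to $U_M$, provided $\varepsilon_M/\alpha = o(\delta/(\kappa_A^2\log(\kappa_A/\delta)))$.

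\textbf{Step 2: the projector $\rlcP_0$.} Since $\rlcP_0 = \rlcA^+\rlcA$, compose the block-encoding of $\rlcA^+$ with the block-encoding $U_M$ of $\rlcA$. By Lemma~\ref{lem:prod_BEs}(ii), this gives a $(4\kappa_A, 2a_M+1, 2\delta + 2\kappa_A\varepsilon_M/\alpha)$-block-encoding of $\rlcP_0$. The normalization $4\kappa_A = 2\alpha_M/\sigma$ is at most $2\kappa_M$ once $\kappa_M := \alpha_M/\sigma$ (possibly up to the constant convention in the statement; I would absorb factors of 2 into the definition of $\alpha$). Choose $\delta = \varepsilon/4$ and require $\kappa_A\varepsilon_M/\alpha \leq \varepsilon/4$. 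The time cost is $O(T_M\kappa_M\log(\kappa_M/\varepsilon))$, dominated by Step 1.

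\textbf{Step 3: the projector $\rlcQ_0$.} Combine the trivial $(1,0,0)$-block-encoding of $\id$ with the block-encoding of $-\rlcP_0$ from Step 2 (the sign is absorbed into the unitary with a phase) using Lemma~\ref{lem:sum_BEs} with $y_j=1$. The normalizations add to give $2\kappa_M+1$, one extra ancilla is used for the selection register (giving $2a_M+2$), and the error is unchanged.

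\textbf{Main obstacle.} The main difficulty is bookkeeping the precision requirement on $\varepsilon_M$ so that it matches the stated $o(\sigma^3\varepsilon^2/\alpha_M)$. Step 1 requires $\varepsilon_M = o(\alpha\,\delta/(\kappa_A^2\log(\kappa_A/\delta))) = o(\sigma^2\varepsilon/(\alpha_M\log(\kappa_M/\varepsilon)))$. Step 2 requires $\varepsilon_M \lesssim \sigma\varepsilon$. The logarithm in the first condition has to be dominated by a polynomial factor. Using $\log(\kappa_M/\varepsilon) \leq \kappa_M/\varepsilon = \alpha_M/(\sigma\varepsilon)$, the condition $\varepsilon_M = o(\sigma^3\varepsilon^2/\alpha_M^2)$ suffices. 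I would try to verify whether the claimed single factor of $\alpha_M$ in the denominator is achievable under $\alpha_M \geq 2\norm{\rlcM}$. If it is not, I would record the slightly stronger sufficient condition, which is still polynomial in $\sigma$ and $\varepsilon$.
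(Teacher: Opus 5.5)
Your proposal is correct and follows essentially the same route as the paper: the paper also builds a block-encoding of $\rlcM^+$, invoking Claim~\ref{claim:inv_M_BE_ind0}, which is exactly your rescaled application of Lemma~\ref{lem:pseudoinverse_BE} to $\rlcM/\alpha$. It then multiplies by $U_M$ via Lemma~\ref{lem:prod_BEs} to obtain the $(2\kappa_M,2a_M+1)$-encoding of $\rlcP_0=\rlcM^+\rlcM$, and uses Lemma~\ref{lem:sum_BEs} for $\rlcQ_0=\id-\rlcP_0$. Your suspicion about the precision is justified: the paper's own bookkeeping ($\varepsilon_M\le O(\sigma^3{\varepsilon'}^2)$ with $\varepsilon'=\varepsilon/(2\alpha_M)$) actually yields $\sigma^3\varepsilon^2/\alpha_M^2$, i.e.\ your condition, and the stated single factor of $\alpha_M$ follows from the unsquared requirement $o(\sigma^2\varepsilon/(\alpha_M\log(\kappa_M/\varepsilon)))$ only under a mild assumption such as $\log\alpha_M=O(1/(\sigma\varepsilon))$. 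This discrepancy is harmless downstream, where these precisions are only needed up to polynomial factors.
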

\begin{proof}
Let $\varepsilon' \in (0,1)$ be an error parameter to be fixed later. We first note that Lemma~\ref{lem:pseudoinverse_BE} can be used to implement the pseudoinverse $\rlcM^{+}$. Moreover, Claim~\ref{claim:inv_M_BE_ind0} can be utilized to obtain a block-encoding of $\rlcM^{+}$ as well, where $\sigma$ is now a lower bound for the minimum non-zero singular value of $\rlcM$. We thus obtain a $(2/\sigma,a_M + 1,\varepsilon')$-block-encoding of the $\rlcM^{+}$, which we denote by $U_{M^{+}}$, for $\varepsilon_M = o(\sigma^2 \varepsilon'/\log(1/(\sigma \varepsilon')))$. Noting that $1/\log(1/x) \geq x, \forall x > 0$, it is then sufficient to set $\varepsilon_M \leq O(\sigma^3 {\varepsilon'}^2)$.

We then obtain a block-encoding of $\rlcP_0 = \rlcM^+ \rlcM$ by using Lemma~\ref{lem:prod_BEs}. This produces a $(2\alpha_M/\sigma, 2a_M + 1, 2\varepsilon_M/\sigma + \alpha_M \varepsilon')$-block-encoding of $\rlcP_0$, which we denote by $U_{P_0}$. Setting $\varepsilon' = \varepsilon/(2\alpha_M)$ and $\varepsilon_M = \sigma \varepsilon/2$ gives us a block-encoding $U_{P_0}$ with the desired error $\varepsilon$. We thus need $\varepsilon_M = \min(\sigma \varepsilon/2, O(\sigma^3 \varepsilon^2/\alpha_M)) \leq O(\sigma^3 \varepsilon^2/\alpha_M)$. The main contribution to the gate complexity of $U_{P_0}$ is due to $U_{M^{+}}$ which is $O(T_M (\alpha_M/\sigma)\log(\alpha_M/(\sigma \varepsilon)))$ from Claim~\ref{claim:inv_M_BE_ind0}. This completes the proof of item $(i)$.

To obtain a block-encoding of $\rlcQ_0$, we combine the block encoding of $\rlcP_0$ with $\id$ as $\rlcQ_0 = \id - \rlcP_0$ using Lemma~\ref{lem:sum_BEs}. This gives us a $(2\alpha_M/\sigma + 1, 2a_M + 2,\varepsilon)$-block-encoding of $\rlcQ_0$ with gate complexity $O(T_M (\alpha_M/\sigma)\log(\alpha_M/(\sigma \varepsilon)))$. This completes the proof of item $(ii)$ and the overall claim.
\end{proof}

We now give block-encodings of $\rlcM_1 := \rlcM + \rlcK \rlcQ_0$ and its inverse (which is guaranteed to exist in the case of index $1$).
\begin{claim}\label{claim:M1_BE_ind1}
Let $\varepsilon, \sigma \in (0,1)$. Let $\rlcM$ be as defined in Definition~\ref{def:prob_setup_DAEs} with a $(\alpha_M,a_M,\varepsilon_M)$-block-encoding $U_M$ implementable in time $T_M$ where $\alpha_M \geq 2 \norm{\rlcM}$ and with minimum non-zero singular value satisfying $\sigma_\min^+(\rlcM) \geq \sigma$. Let $\rlcK$ have a $(\alpha_K,a_K,\varepsilon_K)$-block-encoding $U_K$ implementable in time $T_K$. Suppose that the given $\dae$ has index 1. Let $\rlcQ_0$ be the projector onto the kernel of $\rlcM$ and let $\rlcP_0=\id-\rlcQ_0$. Denote $\kappa_M := \alpha_M/\sigma$.

Then, for $\varepsilon_M = o(\varepsilon^2/\kappa_M^3)$ and $\varepsilon_K = \varepsilon/(6\kappa_M + 3)$, there exists a
$$
\Big(2\kappa_M \alpha_K + \alpha_M + \alpha_K, 2a_M + a_K + 3, \varepsilon\Big)
$$
-block-encoding of $\rlcM_1 = \rlcM + \rlcK \rlcQ_0$ with gate complexity $O(T_M \kappa_M \log(\alpha_M \kappa_M/\varepsilon) + T_K)$.
\end{claim}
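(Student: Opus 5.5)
The plan is to assemble the block-encoding of $\rlcM_1 = \rlcM + \rlcK\rlcQ_0$ from three earlier results: Claim~\ref{claim:projs_BE_ind1} for $\rlcQ_0$, Lemma~\ref{lem:prod_BEs} for the product $\rlcK\rlcQ_0$, and Lemma~\ref{lem:sum_BEs} for the sum with $U_M$. The target normalization $2\kappa_M\alpha_K + \alpha_M + \alpha_K = \alpha_M + \alpha_K(2\kappa_M+1)$ is exactly $\alpha_M$ plus the product of $\alpha_K$ with the $\rlcQ_0$ normalization $2\kappa_M+1$. This confirms the order of operations: first product, then sum with unit weights $y_1=y_2=1$.

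First, invoke Claim~\ref{claim:projs_BE_ind1}(ii) with an error parameter $\varepsilon'$ to be fixed. This gives a $(2\kappa_M+1, 2a_M+2, \varepsilon')$-block-encoding $U_{Q_0}$ of $\rlcQ_0$, provided $\varepsilon_M = o(\sigma^3\varepsilon'^2/\alpha_M)$. The cost is $O(T_M\kappa_M\log(\kappa_M/\varepsilon'))$.

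Second, apply Lemma~\ref{lem:prod_BEs}(ii) to $U_K$ and $U_{Q_0}$. This yields an $(\alpha_K(2\kappa_M+1), a_K + 2a_M + 2, \alpha_K\varepsilon' + (2\kappa_M+1)\varepsilon_K)$-block-encoding of $\rlcK\rlcQ_0$.

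Third, combine with $U_M$ via Lemma~\ref{lem:sum_BEs}, using a one-qubit state-preparation unitary $U_\eta$ that prepares weights proportional to $\sqrt{\alpha_M}$ and $\sqrt{\alpha_K(2\kappa_M+1)}$. This gives normalization $\alpha_M + \alpha_K(2\kappa_M+1)$ and ancilla count $\max(a_M, a_K+2a_M+2) + 1 = 2a_M + a_K + 3$, which matches the statement. The resulting error is $\varepsilon_M + \alpha_K\varepsilon' + (2\kappa_M+1)\varepsilon_K$.

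The error budget is split into three pieces of $\varepsilon/3$ each.
\begin{itemize}
\item The choice $\varepsilon_K = \varepsilon/(6\kappa_M+3)$ makes $(2\kappa_M+1)\varepsilon_K = \varepsilon/3$.
\item Setting $\varepsilon' = \varepsilon/(3\alpha_K)$ handles the $\rlcQ_0$ term.
\item It remains to check that $\varepsilon_M$ is at most $\varepsilon/3$ and also satisfies $\varepsilon_M = o(\sigma^3\varepsilon'^2/\alpha_M)$.
\end{itemize}
The total gate cost is dominated by $U_{Q_0}$, namely $O(T_M\kappa_M\log(\kappa_M\alpha_K/\varepsilon))$, plus $O(T_K)$ and the $O(1)$ cost of $U_\eta$. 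This should be reconciled with the stated $\log(\alpha_M\kappa_M/\varepsilon)$.

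The main obstacle is reconciling the precondition on $\varepsilon_M$. Claim~\ref{claim:projs_BE_ind1} demands $\varepsilon_M = o(\sigma^3\varepsilon'^2/\alpha_M)$, which after substituting $\varepsilon'$ involves $\alpha_K$ and $\alpha_M$ explicitly. The statement instead phrases it as $o(\varepsilon^2/\kappa_M^3)$. Since $\sigma^3/\alpha_M = \alpha_M^2/\kappa_M^3$, the two agree up to factors of $\alpha_M^2/\alpha_K^2$.

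I would handle this in one of two ways. The first option is to normalize so that $\alpha_K = O(\alpha_M)$ and absorb constants into the $o(\cdot)$. The second is to adjust the choice of $\varepsilon'$, possibly using the slack in Claim~\ref{claim:projs_BE_ind1}, where the cruder bound $1/\log(1/x)\ge x$ was used. Either way, this may require tracking constants carefully or stating an implicit normalization assumption.
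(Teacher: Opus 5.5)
Your construction is the paper's: Claim~\ref{claim:projs_BE_ind1}(ii) for $\rlcQ_0$, Lemma~\ref{lem:prod_BEs} for $\rlcK\rlcQ_0$, and Lemma~\ref{lem:sum_BEs} with unit weights. You get the same normalization and ancilla count, and use the same three-way error split with $\varepsilon_K=\varepsilon/(6\kappa_M+3)$.

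The obstacle you flag is genuine, and it is not a defect of your argument. Lemma~\ref{lem:prod_BEs} gives the cross term $\alpha_K\varepsilon'$, exactly as you wrote. The paper's proof instead writes it as $\alpha_M\varepsilon_Q$ and then sets $\varepsilon_Q=\varepsilon/(3\alpha_M)$. That slip is the only source of the stated hypothesis $\varepsilon_M=o(\varepsilon^2/\kappa_M^3)$, since $\sigma^3\varepsilon_Q^2/\alpha_M=\varepsilon^2/(9\kappa_M^3)$. It is also the only source of the factor $\log(\alpha_M\kappa_M/\varepsilon)$.

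With correct bookkeeping, the claim holds verbatim when $\alpha_K\le\alpha_M$. In general, take $\varepsilon'=\varepsilon/(3\max\{\alpha_M,\alpha_K\})$. This multiplies the admissible $\varepsilon_M$ by $\min\{1,\alpha_M^2/\alpha_K^2\}$ and replaces $\alpha_M$ by $\max\{\alpha_M,\alpha_K\}$ inside the logarithm; this is your first option made precise.

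Your second option does not remove the $\alpha_K$ dependence. Reopening the proof of Claim~\ref{claim:projs_BE_ind1} relaxes the requirement only to roughly $\sigma^2\varepsilon/(\alpha_M\alpha_K)$, up to a logarithm. The stated hypothesis implies that only when $\alpha_K$ is at most about $\alpha_M\kappa_M/\varepsilon$.

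Finally, your remaining check $\varepsilon_M\le\varepsilon/3$ is immediate. From $\sigma\le\sigma^+_{\min}(\rlcM)\le\norm{\rlcM}\le\alpha_M/2$ you get $\kappa_M\ge 2$, hence $\varepsilon^2/\kappa_M^3\le\varepsilon/8$.
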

\begin{proof}
Let $\varepsilon_Q \in (0,1)$ be an error parameter to be fixed later. From Claim~\ref{claim:projs_BE_ind1}$(ii)$, we have a $(2\kappa_M + 1, 2a_M + 2,\varepsilon_Q)$ block-encoding of $\rlcQ_0$ which in turn uses $U_M$ and requires $\varepsilon_M \leq O(\sigma^3 \varepsilon_Q^2/\alpha_M)$. We can now combine this with the block-encoding of $\rlcK$ using Lemma~\ref{lem:prod_BEs} to obtain a $\Big(2\kappa_M \alpha_K + \alpha_K, 2a_M + a_K + 2, \alpha_M \varepsilon_Q + (2\kappa_M + 1)\varepsilon_K\Big)$-block-encoding of $\rlcK \rlcQ_0$, which we denote by $V$. We then implement a block-encoding of $\rlcM_1$ as $\rlcM + \rlcK \rlcQ_0$ using Lemma~\ref{lem:sum_BEs} to obtain a 
$$
\Big(2\kappa_M \alpha_K + \alpha_M + \alpha_K, 2a_M + a_K + 3, \alpha_M \varepsilon_Q + (2\kappa_M + 1)\varepsilon_K + \varepsilon_M\Big)
$$
block-encoding of $\rlcM_1$, which we denote as $U_{M_1}$. We then set $\varepsilon_Q = \varepsilon/(3\alpha_M)$ which requires $\varepsilon_M \leq O(\sigma^3 \varepsilon^2/\alpha_M^3)$. We note that $\varepsilon_M \leq O(\varepsilon^2) \leq O(\varepsilon)$ as $\alpha_M \geq 1$ and $\sigma \leq 1$. Additionally, setting $\varepsilon_K = \varepsilon/(6\kappa_M + 3)$ gives us a block-encoding $U_{M_1}$ with the desired error $\varepsilon$. The corresponding gate complexity is $O(T_M \kappa_M \log(\alpha_M \kappa_M/\varepsilon) + T_K)$ due to applications of $U_M$ for implementing block-encodings of $\rlcM,\rlcQ_0$ from Claim~\ref{claim:projs_BE_ind1}$(ii)$ and $U_K$ for implementing the block-encoding of $\rlcK$. 
\end{proof}

A block-encoding of $\rlcM_1^{-1}$ can then be obtained using Lemma~\ref{lem:pseudoinverse_BE} similar to what was done in Claim~\ref{claim:inv_M_BE_ind0}.
\begin{claim}\label{claim:inv_M1_BE_ind1}
Let $\varepsilon,\sigma, \sigma_1 \in (0,1)$. Consider the context of Claim~\ref{claim:M1_BE_ind1}. Let $\alpha_M \geq 2 \norm{\rlcM}$ and $\alpha_K \geq 2 \norm{\rlcK}$. Let the minimum singular value of $\rlcM_1$ satisfy $\sigma_\min(\rlcM_1) \geq \sigma_1$. For $\varepsilon_M \leq O(\sigma_1^6 \varepsilon^4/\kappa_M^3)$ and $\varepsilon_K = \sigma_1^3 \varepsilon^2/(6\kappa_M + 3)$, there exists a
$$
\Big(2/\sigma_1, 2a_M + a_K + 4, \varepsilon\Big)
$$
-block-encoding of $\rlcM_1^{-1}$ with gate complexity 
$$
\widetilde{O}\left( T_M \frac{\kappa_M^2 \alpha_K}{\sigma_1} \log \left(\frac{\alpha_M \kappa_M}{\sigma_1 \varepsilon} \right) + T_K \frac{\kappa_M \alpha_K}{\sigma_1} \log \left(\frac{1}{\sigma_1 \varepsilon} \right) \right).
$$
\end{claim}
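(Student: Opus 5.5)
The plan is to follow the template of Claim~\ref{claim:inv_M_BE_ind0}: start from the block-encoding of $\rlcM_1$ given by Claim~\ref{claim:M1_BE_ind1}, rescale so it encodes a normalized matrix, and then apply Lemma~\ref{lem:pseudoinverse_BE}. Since the $\dae$ has index one, $\rlcM_1$ is invertible, so its pseudoinverse is $\rlcM_1^{-1}$.

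\emph{Step 1: rescaling.} Write $\alpha_1 := 2\kappa_M\alpha_K + \alpha_M + \alpha_K$ for the subnormalization from Claim~\ref{claim:M1_BE_ind1}. We have $\norm{\rlcM_1} \le \norm{\rlcM} + \norm{\rlcK} \le (\alpha_M+\alpha_K)/2 \le \alpha_1/2$ by Claim~\ref{claim:ub_norm_Mi_Ki}$(i)$ and the assumptions $\alpha_M \ge 2\norm{\rlcM}$, $\alpha_K \ge 2\norm{\rlcK}$. Set $\alpha := \alpha_1/2$ and $\rlcA := \rlcM_1/\alpha$. Then $\rlcA$ is normalized, its singular values lie in $[\sigma_1/\alpha, 1]$, and the unitary $U_{M_1}$ is a $(2, 2a_M + a_K + 3, \varepsilon_1/\alpha)$-block-encoding of $\rlcA$, where $\varepsilon_1$ denotes the error of $U_{M_1}$. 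The condition number parameter is $\kappa_A = \alpha/\sigma_1 = O(\kappa_M\alpha_K/\sigma_1)$, using $\kappa_M \ge 1$.

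\emph{Step 2: inversion.} Lemma~\ref{lem:pseudoinverse_BE} gives a $(2\kappa_A, 2a_M+a_K+4, \delta)$-block-encoding of $\rlcA^{-1} = \alpha\rlcM_1^{-1}$, valid whenever $\varepsilon_1/\alpha = o(\delta/(\kappa_A^2\log(\kappa_A/\delta)))$. Dividing through by $\alpha$ yields a $(2/\sigma_1, 2a_M+a_K+4, \delta/\alpha)$-block-encoding of $\rlcM_1^{-1}$. I then set $\delta = \alpha\varepsilon$, so the final error is $\varepsilon$, and the required accuracy of $U_{M_1}$ becomes $\varepsilon_1 = o(\sigma_1^2\varepsilon/\log(1/(\sigma_1\varepsilon)))$.

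\emph{Step 3: feeding back into Claim~\ref{claim:M1_BE_ind1}.} To get rid of the logarithm, I use $1/\log(1/x) \ge x$, exactly as in Claim~\ref{claim:projs_BE_ind1}. This shows it suffices to take $\varepsilon_1 = \Theta(\sigma_1^3\varepsilon^2)$. Substituting this $\varepsilon_1$ into the hypotheses of Claim~\ref{claim:M1_BE_ind1} gives $\varepsilon_M = o(\varepsilon_1^2/\kappa_M^3) = O(\sigma_1^6\varepsilon^4/\kappa_M^3)$ and $\varepsilon_K = \varepsilon_1/(6\kappa_M+3) = \sigma_1^3\varepsilon^2/(6\kappa_M+3)$. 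Both match the statement.

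\emph{Step 4: cost.} Lemma~\ref{lem:pseudoinverse_BE} uses $O(\kappa_A \cdot 2 \cdot \log(\kappa_A/\delta)) = O((\kappa_M\alpha_K/\sigma_1)\log(1/(\sigma_1\varepsilon)))$ applications of $U_{M_1}$. Each application costs $O(T_M\kappa_M\log(\alpha_M\kappa_M/\varepsilon_1) + T_K)$. Multiplying, and absorbing $\log(1/\varepsilon_1) = O(\log(1/(\sigma_1\varepsilon)))$ into the logarithms, gives
$$
\widetilde{O}\left(T_M \frac{\kappa_M^2\alpha_K}{\sigma_1}\log\left(\frac{\alpha_M\kappa_M}{\sigma_1\varepsilon}\right) + T_K\frac{\kappa_M\alpha_K}{\sigma_1}\log\left(\frac{1}{\sigma_1\varepsilon}\right)\right).
$$

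The main obstacle is this error bookkeeping. Two points need care: the extra factor of $\alpha$ that appears when converting between the error for $\rlcA$ and the error for $\rlcM_1$, and checking that the resulting condition is strong enough for the polynomial choice of $\varepsilon_M$ and $\varepsilon_K$ stated in the claim. The $\widetilde{O}$ is there to absorb any residual logarithmic factors from this step.
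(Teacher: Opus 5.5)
Your proposal is correct and is essentially the paper's proof. You take $U_{M_1}$ from Claim~\ref{claim:M1_BE_ind1}, view it as a block-encoding of a rescaled $\rlcM_1$, invert with Lemma~\ref{lem:pseudoinverse_BE}, require $\varepsilon_1 = \Theta(\sigma_1^3\varepsilon^2)$, and propagate this back to the conditions on $\varepsilon_M$ and $\varepsilon_K$. The only cosmetic difference is the normalizer: you divide by half the subnormalization so the block-encoding parameter is exactly $2$, whereas the paper divides by $\norm{\rlcM}+\norm{\rlcK}$ and carries a ratio $\beta \geq 2$; the normalizer cancels out of both the query count $\kappa_A\beta\log(\kappa_A/\delta)$ and the accuracy requirement on $\varepsilon_1$, so the two choices give identical bounds.
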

\begin{proof}
We first note that $\norm{\rlcM_1} \leq \norm{\rlcM} + \norm{\rlcK}$ from Claim~\ref{claim:ub_norm_Mi_Ki}. Let $\varepsilon_1 \in (0,1)$ be an error parameter to be fixed later. From Claim~\ref{claim:M1_BE_ind1}, we have a $\Big(2\kappa_M \alpha_K + \alpha_M + \alpha_K, 2a_M + a_K + 3, \varepsilon_1\Big)$-block-encoding for $\rlcM_1$, which we denote by $U_{M_1}$, for $\varepsilon_M \leq O(\varepsilon_1^2/ \kappa_M^3)$ and $\varepsilon_K = \varepsilon_1/(6\kappa_M + 3)$.

Let $\rlcA := \rlcM_1/\alpha_1$ where $\alpha_1 = \norm{\rlcM} + \norm{\rlcK}$. We are given that $\alpha_M \geq 2 \norm{\rlcM}$ and $\alpha_K \geq 2 \norm{\rlcK}$. Then, $\rlcA$ is a normalized matrix with $\spec(\rlcA) \in [\sigma_1/\alpha_1,1]$. We then observe that $U_{M_1}$ is also a $(\beta, 2a_M + a_K + 3,\varepsilon_1/\alpha_1)$ block-encoding of $\rlcA$ where we have denoted $\beta = (2\kappa_M \alpha_K + \alpha_M + \alpha_K)/\alpha_1$ and satisfies $\beta \geq 2$. 

Here, the relevant condition number parameter is then $\kappa_A = \alpha_1/\sigma_1$. Applying Lemma~\ref{lem:pseudoinverse_BE} then produces a $(2\kappa_A, 2a_M + a_K + 4,\delta)$-block-encoding of $\rlcA^{-1}$, which we denote by $V$, provided $\varepsilon_1/\alpha_1 = o(\delta/(\kappa_A^2 \log(\kappa_A/\delta)))$. The cost of implementing $V$ is $O(\kappa_A \beta \log(\kappa_A/\delta))$ uses of $U_{M_1}$. Since $\rlcA^{-1} = \alpha_1 \rlcM_1^{-1}$, $V$ is then a $(2/\sigma_1,2a_M + a_K + 4, \delta/\alpha_1)$-block-encoding of $\rlcM_1^{-1}$. To set the desired error $\delta/\alpha_1 = \varepsilon$, we require $\varepsilon_1 = o(\sigma_1^2 \varepsilon/\log(1/(\sigma_1 \varepsilon)))$ which is satisfied by setting $\varepsilon_1 \leq O(\sigma_1^3 \varepsilon^2)$. This in turn requires $\varepsilon_M \leq O(\sigma_1^6 \varepsilon^4/ \kappa_M^3)$ and $\varepsilon_K = \sigma_1^3 \varepsilon^2/(6\kappa_M + 3)$. 

The cost of implementing $V$ is then $O((\kappa_M \alpha_K + \alpha_M + \alpha_K)/\sigma_1 \log(1/(\sigma_1 \varepsilon))) = O(\kappa_M \alpha_K/\sigma_1 \log(1/(\sigma_1 \varepsilon))$ uses of $U_{M_1}$. The gate complexity of $U_{M_1}$ from Claim~\ref{claim:M1_BE_ind1} for $\varepsilon_1 = O(\sigma_1^3 \varepsilon^2)$ is 
$$
O\Big(T_M \kappa_M \log(\alpha_M \kappa_M/(\sigma_1^3 \varepsilon^2)) + T_K\Big).
$$ 
Putting this together gives us the stated gate complexity in the theorem statement. This completes the proof.
\end{proof}

We can now use the above block-encoding of $\rlcM_1^{-1}$ to give block-encodings of $\rlcP_0 \rlcM_1^{-1} \rlcK$ and $\rlcP_0 \rlcM_1^{-1}$, which we formally describe below.
\begin{claim}\label{claim:ode_BE_ind1}
Let $\varepsilon, \sigma, \sigma_1 \in (0,1)$. Let $\rlcM$ be as defined in Definition~\ref{def:prob_setup_DAEs} with a $(\alpha_M,a_M,\varepsilon_M)$-block-encoding $U_M$ implementable in time $T_M$ where $\alpha_M \geq 2 \norm{\rlcM}$ and with minimum non-zero singular value satisfying $\sigma_\min^+(\rlcM) \geq \sigma$. Suppose that the given $\dae$ has index 1. Let $\rlcQ_0$ be the projector onto the kernel of $\rlcM$ and let $\rlcP_0=\id-\rlcQ_0$. Let $\rlcM_1 = \rlcM + \rlcK \rlcQ_0$ have its minimum singular value satisfy $\sigma_\min(\rlcM_1) \geq \sigma_1$. Then, \begin{enumerate}[$(i)$]
    \item for $\varepsilon_M = \poly(\sigma_1 \varepsilon/\kappa_M)$ and $\varepsilon_K = \poly(\sigma_1 \varepsilon/\kappa_M)$, there exists a $\Big(4\kappa_M/\sigma_1, 4a_M + a_K + 5, \varepsilon\Big)$-block-encoding of $\rlcP_0 \rlcM_1^{-1}$ implementable with gate complexity 
    $$
    \widetilde{O}\left(T_M \frac{\kappa_M^2 \alpha_K}{\sigma_1} \log \left(\frac{\alpha_M \kappa_M}{\sigma_1 \varepsilon_1} \right) + T_K \frac{\kappa_M \alpha_K}{\sigma_1} \log \left(\frac{\kappa_M}{\sigma_1 \varepsilon} \right) \right)
    $$
    \item for $\varepsilon_M = \poly(\sigma_1 \varepsilon/(\alpha_K \kappa_M))$ and $\varepsilon_K = \poly(\sigma_1 \varepsilon/(\alpha_K \kappa_M))$, there exists a 
    $$\Big(4\kappa_M \alpha_K/\sigma_1, 4a_M + 2a_K + 5, \varepsilon\Big)$$-block-encoding of $\rlcP_0 \rlcM_1^{-1} \rlcK$ implementable with gate complexity
    $$
    \widetilde{O}\left(T_M \frac{\kappa_M^2 \alpha_K}{\sigma_1} \log \left(\frac{\alpha_K \alpha_M \kappa_M}{\sigma_1 \varepsilon} \right) + T_K \frac{\kappa_M \alpha_K}{\sigma_1} \log \left(\frac{\alpha_K \kappa_M}{\sigma_1 \varepsilon} \right) \right).
    $$
\end{enumerate}
\end{claim}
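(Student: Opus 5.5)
The plan is to build both block-encodings as products of previously constructed block-encodings, using Lemma~\ref{lem:prod_BEs} (or Claim~\ref{claim:prod_list_BEs} for three factors), and then to back-propagate the error budget onto $\varepsilon_M$ and $\varepsilon_K$.

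For item $(i)$, take the $(2\kappa_M, 2a_M+1, \varepsilon_P)$-block-encoding of $\rlcP_0$ from Claim~\ref{claim:projs_BE_ind1}$(i)$ and the $(2/\sigma_1, 2a_M+a_K+4, \varepsilon_{1})$-block-encoding of $\rlcM_1^{-1}$ from Claim~\ref{claim:inv_M1_BE_ind1}. Their product gives a block-encoding of $\rlcP_0\rlcM_1^{-1}$ with the following parameters:
\begin{itemize}
\item normalization $4\kappa_M/\sigma_1$;
\item $4a_M + a_K + 5$ ancillas;
\item error $2\kappa_M\varepsilon_1 + (2/\sigma_1)\varepsilon_P$.
\end{itemize}
Choose $\varepsilon_P = \sigma_1\varepsilon/4$ and $\varepsilon_1 = \varepsilon/(4\kappa_M)$. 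Then read off the requirements:
\begin{itemize}
\item Claim~\ref{claim:projs_BE_ind1} needs $\varepsilon_M = O(\sigma^3\varepsilon_P^2/\alpha_M)$.
\item Claim~\ref{claim:inv_M1_BE_ind1} needs $\varepsilon_M = O(\sigma_1^6\varepsilon_1^4/\kappa_M^3)$ and $\varepsilon_K = \sigma_1^3\varepsilon_1^2/(6\kappa_M+3)$.
\end{itemize}
Since $\sigma\le 1\le\alpha_M$, we have $\sigma^3/\alpha_M \ge \kappa_M^{-3}$. So all of these are implied by $\varepsilon_M,\varepsilon_K = \poly(\sigma_1\varepsilon/\kappa_M)$, which matches the statement. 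The gate cost is the sum of the two costs. The $\rlcM_1^{-1}$ term, $\widetilde O(T_M\kappa_M^2\alpha_K\sigma_1^{-1}\log(\cdot) + T_K\kappa_M\alpha_K\sigma_1^{-1}\log(\cdot))$, dominates the $O(T_M\kappa_M\log(\kappa_M/\varepsilon))$ cost of $\rlcP_0$. The rescaling $\varepsilon_1 = \varepsilon/(4\kappa_M)$ only changes the logarithm to $\log(\kappa_M/(\sigma_1\varepsilon))$.

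For item $(ii)$, compose the block-encoding from $(i)$, built at a precision $\varepsilon'$ still to be chosen, with $U_K$ via Lemma~\ref{lem:prod_BEs}. This gives normalization $4\kappa_M\alpha_K/\sigma_1$, ancillas $4a_M+2a_K+5$, and error $\alpha_K\varepsilon' + (4\kappa_M/\sigma_1)\varepsilon_K$. Set $\varepsilon' = \varepsilon/(2\alpha_K)$, and impose additionally $\varepsilon_K \le \sigma_1\varepsilon/(8\kappa_M)$. Substituting $\varepsilon'$ into the requirements of $(i)$ turns them into $\poly(\sigma_1\varepsilon/(\alpha_K\kappa_M))$ conditions on $\varepsilon_M$ and $\varepsilon_K$; together with the new condition on $\varepsilon_K$, these are exactly the hypotheses of $(ii)$. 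The cost is that of $(i)$ at precision $\varepsilon/(2\alpha_K)$ plus one use of $U_K$. This inserts the factor $\alpha_K$ inside the logarithms, as claimed.

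The only real obstacle is bookkeeping. The same $U_M$ and $U_K$ appear in several nested subroutines, so one must check that a single choice of $\varepsilon_M$ and $\varepsilon_K$ satisfies every inherited constraint at once: from the pseudoinverse inside $\rlcP_0$, from the pseudoinverse of $\rlcM_1$, and from $\rlcQ_0$ inside $\rlcM_1$. The strategy is to take the minimum of all of them and verify that each is polynomial in $\sigma_1\varepsilon/\kappa_M$ (times $1/\alpha_K$ for $(ii)$). This uses $\sigma,\sigma_1,\varepsilon\le 1$ and $\alpha_M,\alpha_K\ge 1$. It also uses the hypotheses $\alpha_M\ge 2\|\rlcM\|$ and $\alpha_K\ge 2\|\rlcK\|$, which Claim~\ref{claim:inv_M1_BE_ind1} requires and which we carry over from its context.
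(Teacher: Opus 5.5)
Your proposal is correct and follows the paper's proof essentially step for step. It uses the same product of the $\rlcP_0$ and $\rlcM_1^{-1}$ block-encodings with $\varepsilon_0=\sigma_1\varepsilon/4$ and $\varepsilon_1=\varepsilon/(4\kappa_M)$, then composes with $U_K$ at precision $\varepsilon'=\varepsilon/(2\alpha_K)$; the only difference is your direct constraint $\varepsilon_K\le\sigma_1\varepsilon/(8\kappa_M)$ versus the paper's slightly stricter $\sigma_1\varepsilon/(8\alpha_K\kappa_M)$, which is immaterial since both are dominated by the inherited $\poly(\sigma_1\varepsilon/(\alpha_K\kappa_M))$ conditions.
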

\begin{proof}
$(i)$ Let $\varepsilon_0,\varepsilon_1 \in (0,1)$ be error parameters to be fixed later. From Claim~\ref{claim:projs_BE_ind1}, we have a $(2\kappa_M, 2a_M + 1,\varepsilon_0)$-block-encoding of $\rlcP_0$, which we denote by $U_0$, for $\varepsilon_M = O(\sigma^3 \varepsilon_0^2/\alpha_M)$. From Claim~\ref{claim:inv_M1_BE_ind1}, we have a $(2/\sigma_1, 2a_M + a_K + 4, \varepsilon_1)$-block-encoding of $\rlcM_1^{-1}$, which we denote by $U_1$, for $\varepsilon_M = O(\sigma_1^6 \varepsilon_1^4/\kappa_M^3)$ and $\varepsilon_K = \sigma_1^3 \varepsilon_1^2/(6\kappa_M + 3)$. We can now combine these using Lemma~\ref{lem:prod_BEs} to obtain a 
$$
\Big(4\kappa_M/\sigma_1, 4a_M + a_K + 5, 2\kappa_M \varepsilon_1 + 2\varepsilon_0/\sigma_1 \Big)
$$
-block-encoding of $\rlcP_0\rlcM_1^{-1}$, which we denote by $U_{P_0 M_1^{-1}}$. We set $\varepsilon_1 = \varepsilon/(4\kappa_M)$ and $\varepsilon_0 = \sigma_1 \varepsilon/4$ to get a block-encoding with the desired error $\varepsilon$. This requires setting $\varepsilon_M \leq \min\Big(O(\sigma^3 \sigma_1^2 \varepsilon^2/\alpha_M), O(\sigma_1^6 \varepsilon^4/ \kappa_M^7)\Big)$ and $\varepsilon_K \leq O(\sigma_1^3 \varepsilon^2/\kappa_M^3)$. The corresponding gate complexity from using Lemma~\ref{lem:prod_BEs} is
\begin{align*}
& \widetilde{O}\left(T_M \kappa_M \log\frac{\kappa_M}{\varepsilon_0} +  T_M \frac{\kappa_M^2 \alpha_K}{\sigma_1} \log \left(\frac{\alpha_M \kappa_M}{\sigma_1 \varepsilon_1} \right) + T_K \frac{\kappa_M \alpha_K}{\sigma_1} \log \left(\frac{1}{\sigma_1 \varepsilon_1} \right) \right) \\
= \, & \widetilde{O}\left(T_M \frac{\kappa_M^2 \alpha_K}{\sigma_1} \log \left(\frac{\alpha_M \kappa_M}{\sigma_1 \varepsilon} \right) + T_K \frac{\kappa_M \alpha_K}{\sigma_1} \log \left(\frac{\kappa_M}{\sigma_1 \varepsilon} \right) \right),
\end{align*}
where we have used the cost of $U_0$ from Claim~\ref{claim:projs_BE_ind1} and cost of $U_{M_1^{-1}}$ from Claim~\ref{claim:inv_M1_BE_ind1}. This proves item $(i)$.

$(ii)$ Consider the block-encoding $U_{P_0 M_1^{-1}}$ from item $(i)$ but with precision $\varepsilon' \in (0,1)$ to be decided soon. Using Lemma~\ref{lem:prod_BEs}, we can now combine $U_{P_0 M_1^{-1}}$ with $U_K$ to obtain a 
$$
\Big( 4\kappa_M \alpha_K/\sigma_1, 4a_M + 2a_K + 5, 4\kappa_M/\sigma_1 \varepsilon_K + \alpha_K \varepsilon' \Big)
$$
-block-encoding of $\rlcP_0\rlcM_1^{-1} \rlcK$, which we denote by $U_{P_0 M_1^{-1} K}$. We set $\varepsilon' = \varepsilon/(2\alpha_K)$ and $\varepsilon_K = \sigma_1 \varepsilon/(8\alpha_K \kappa_M)$ to obtain a block-encoding with the desired error $\varepsilon$. This requires setting $\varepsilon_M \leq \min\Big(O(\sigma^3 \sigma_1^2 \varepsilon^2/\alpha_K^2 \alpha_M), O(\sigma_1^6 \varepsilon^4/\alpha_K^4 \kappa_M^7)\Big)$ and $\varepsilon_K \leq O(\sigma_1^3 \varepsilon^2/\alpha_K^2 \kappa_M^3)$ as per item $(i)$ for $U_{P_0 M_1^{-1}}$ and $\varepsilon_K = \sigma_1 \varepsilon/(8\alpha_K \kappa_M)$ for $U_K$. Combining these conditions gives us the stated condition on $\varepsilon_M$ and $\varepsilon_K$. Using Lemma~\ref{lem:prod_BEs}, the gate complexity of $U_{P_0 M_1^{-1}K}$ is due to $U_{P_0 M_1^{-1}}$ (item $(i)$) and $U_K$ which is (for $\varepsilon'$ defined here)
\begin{align*}
&\widetilde{O}\left(T_M \frac{\kappa_M^2 \alpha_K}{\sigma_1} \log \left(\frac{\alpha_M \kappa_M}{\sigma_1 \varepsilon'} \right) + T_K \frac{\kappa_M \alpha_K}{\sigma_1} \log \left(\frac{\kappa_M}{\sigma_1 \varepsilon'} \right) + T_K \right)    \\
= \, & \widetilde{O}\left(T_M \frac{\kappa_M^2 \alpha_K}{\sigma_1} \log \left(\frac{\alpha_K \alpha_M \kappa_M}{\sigma_1 \varepsilon} \right) + T_K \frac{\kappa_M \alpha_K}{\sigma_1} \log \left(\frac{\alpha_K \kappa_M}{\sigma_1 \varepsilon} \right) \right),    
\end{align*}
where we used $\varepsilon' = \varepsilon/(2 \alpha_K)$. This completes the proof of item $(ii)$ and the overall claim. 
\end{proof}

We have so far discussed the block-encodings required to solve the inherent $\ode$ for index $1$. We will also require the following block-encodings to execute the linear algebraic step to obtain the desired history state and thereby solve the $\dae$ of index $1$.
\begin{claim}\label{claim:lin_alg_BE_ind1}
Let $\varepsilon, \sigma \in (0,1)$. Let $\rlcM$ be as defined in Definition~\ref{def:prob_setup_DAEs} with a $(\alpha_M,a_M,\varepsilon_M)$-block-encoding $U_M$ implementable in time $T_M$ where $\alpha_M \geq 2 \norm{\rlcM}$ and with minimum non-zero singular value satisfying $\sigma_\min^+(\rlcM) \geq \sigma$. Suppose that the given $\dae$ has index 1. Let $\rlcQ_0$ be the projector onto the kernel of $\rlcM$ and let $\rlcP_0=\id-\rlcQ_0$. Define $\rlcM_1 := \rlcM + \rlcK \rlcQ_0$. Then, 
\begin{enumerate}[$(i)$]
    \item for $\varepsilon_M = \poly(\sigma_1 \varepsilon/\kappa_M)$ and $\varepsilon_K = \poly(\sigma_1 \varepsilon/\kappa_M)$, there exists a $\Big( (4\kappa_M +2)/\sigma_1, 4a_M + a_K + 6, \varepsilon\Big)$-block-encoding of $\rlcQ_0 \rlcM_1^{-1}$ implementable with gate complexity 
    $$
    \widetilde{O}\left(T_M \frac{\kappa_M^2 \alpha_K}{\sigma_1} \log \left(\frac{\alpha_M \kappa_M}{\sigma_1 \varepsilon_1} \right) + T_K \frac{\kappa_M \alpha_K}{\sigma_1} \log \left(\frac{\kappa_M}{\sigma_1 \varepsilon} \right) \right)
    $$
    \item for $\varepsilon_M = \poly(\sigma_1 \varepsilon/(\alpha_K \kappa_M))$ and $\varepsilon_K = \poly(\sigma_1 \varepsilon/(\alpha_K \kappa_M))$, there exists a 
    $$\Big( (4\kappa_M +2)\alpha_K/\sigma_1 + 1, 4a_M + 2a_K + 7, \varepsilon\Big)$$-block-encoding of $\id - \rlcQ_0 \rlcM_1^{-1} \rlcK$ implementable with gate complexity
    $$
    \widetilde{O}\left(T_M \frac{\kappa_M^2 \alpha_K}{\sigma_1} \log \left(\frac{\alpha_K \alpha_M \kappa_M}{\sigma_1 \varepsilon} \right) + T_K \frac{\kappa_M \alpha_K}{\sigma_1} \log \left(\frac{\alpha_K \kappa_M}{\sigma_1 \varepsilon} \right) \right).
    $$
\end{enumerate}
\end{claim}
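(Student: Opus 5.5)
The plan mirrors the proof of Claim~\ref{claim:ode_BE_ind1}, with $\rlcP_0$ replaced by $\rlcQ_0$ and one extra linear-combination step at the end.

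For item $(i)$, let $\varepsilon_0,\varepsilon_1\in(0,1)$ be error parameters to be fixed. Claim~\ref{claim:projs_BE_ind1}$(ii)$ gives a $(2\kappa_M+1, 2a_M+2, \varepsilon_0)$-block-encoding $U_{Q_0}$ of $\rlcQ_0$, provided $\varepsilon_M = O(\sigma^3\varepsilon_0^2/\alpha_M)$. Claim~\ref{claim:inv_M1_BE_ind1} gives a $(2/\sigma_1, 2a_M+a_K+4, \varepsilon_1)$-block-encoding $U_{M_1^{-1}}$ of $\rlcM_1^{-1}$, provided $\varepsilon_M = O(\sigma_1^6\varepsilon_1^4/\kappa_M^3)$ and $\varepsilon_K = \sigma_1^3\varepsilon_1^2/(6\kappa_M+3)$. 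Combining them with Lemma~\ref{lem:prod_BEs}$(ii)$ yields a block-encoding of $\rlcQ_0\rlcM_1^{-1}$ with:
\begin{itemize}
\item subnormalization $(2\kappa_M+1)\cdot 2/\sigma_1 = (4\kappa_M+2)/\sigma_1$;
\item ancilla count $4a_M+a_K+6$;
\item error $(2\kappa_M+1)\varepsilon_1 + 2\varepsilon_0/\sigma_1$.
\end{itemize}
Choosing $\varepsilon_1 = \varepsilon/(2(2\kappa_M+1))$ and $\varepsilon_0 = \sigma_1\varepsilon/4$ makes the total error at most $\varepsilon$. Substituting these choices back shows both precision conditions are $\poly(\sigma_1\varepsilon/\kappa_M)$, using $\sigma \ge \alpha_M/\kappa_M$ to absorb $\sigma$. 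The gate cost is the sum of the two costs. The $U_{M_1^{-1}}$ term dominates, since $U_{Q_0}$ costs only $O(T_M\kappa_M\log(\kappa_M/\varepsilon))$, and this gives the stated complexity.

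For item $(ii)$, build item $(i)$ at a precision $\varepsilon'$ and multiply on the right by $U_K$ using Lemma~\ref{lem:prod_BEs}. This gives a block-encoding of $\rlcQ_0\rlcM_1^{-1}\rlcK$ with:
\begin{itemize}
\item subnormalization $(4\kappa_M+2)\alpha_K/\sigma_1$;
\item ancilla count $4a_M+2a_K+6$;
\item error $\frac{4\kappa_M+2}{\sigma_1}\varepsilon_K + \alpha_K\varepsilon'$.
\end{itemize}
Next, form $\id - \rlcQ_0\rlcM_1^{-1}\rlcK$ with Lemma~\ref{lem:sum_BEs}, using the trivial $(1,0,0)$-block-encoding of $\id$. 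The negative sign is absorbed into the block-encoding unitary by a controlled phase, or equivalently a $Z$ on the selection qubit. This step adds $1$ to the subnormalization and one ancilla, giving $((4\kappa_M+2)\alpha_K/\sigma_1+1, 4a_M+2a_K+7)$. It leaves the error unchanged, because the identity term is exact.

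Setting $\varepsilon' = \varepsilon/(2\alpha_K)$ and $\varepsilon_K = \sigma_1\varepsilon/(4(4\kappa_M+2))$ gives total error $\varepsilon$. Intersecting these with the requirements from item $(i)$ at precision $\varepsilon'$ gives $\varepsilon_M,\varepsilon_K = \poly(\sigma_1\varepsilon/(\alpha_K\kappa_M))$. The gate complexity is that of item $(i)$ with $\varepsilon$ replaced by $\varepsilon/(2\alpha_K)$, plus one use of $T_K$. This produces the extra $\alpha_K$ inside the logarithms.

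The only delicate point is the error bookkeeping. The precision requirement on $\varepsilon_M$ is the most stringent of those inherited from Claims~\ref{claim:projs_BE_ind1} and~\ref{claim:inv_M1_BE_ind1}, and the main thing to check is that it stays polynomial in $\sigma_1\varepsilon/(\alpha_K\kappa_M)$. It does, because every step rescales the errors only by polynomial factors in $\kappa_M$, $\alpha_K$ and $1/\sigma_1$.
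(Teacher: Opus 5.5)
Your proposal is correct and follows essentially the same route as the paper. You compose the $\rlcQ_0$ encoding of Claim~\ref{claim:projs_BE_ind1}$(ii)$ with the $\rlcM_1^{-1}$ encoding of Claim~\ref{claim:inv_M1_BE_ind1} via Lemma~\ref{lem:prod_BEs}, using the same split $\varepsilon_1=\varepsilon/(4\kappa_M+2)$, $\varepsilon_0=\sigma_1\varepsilon/4$; you then multiply by $U_K$ at precision $\varepsilon/(2\alpha_K)$ and add $\id$ via Lemma~\ref{lem:sum_BEs}. Your bookkeeping is slightly cleaner than the paper's: the $(2\kappa_M+1,\,2a_M+2)$ parameters make the ancilla count $4a_M+a_K+6$ add up, and you explicitly handle the minus sign in the linear combination.
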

\begin{proof}
$(i)$ Let $\varepsilon_0,\varepsilon_1 \in (0,1)$ be error parameters to be fixed later. From Claim~\ref{claim:projs_BE_ind1}, we have a $(2\kappa_M + 2, 2a_M + 1,\varepsilon_0)$-block-encoding of $\rlcQ_0$, which we denote by $U_0$, for $\varepsilon_M = O(\sigma^3 \varepsilon_0^2/\alpha_M)$. From Claim~\ref{claim:inv_M1_BE_ind1}, we have a $(2/\sigma_1, 2a_M + a_K + 4, \varepsilon_1)$-block-encoding of $\rlcM_1^{-1}$, which we denote by $U_1$, for $\varepsilon_M = O(\sigma_1^6 \varepsilon_1^4/\kappa_M^3)$ and $\varepsilon_K = \sigma_1^3 \varepsilon_1^2/(6\kappa_M + 3)$. We can now combine these using Lemma~\ref{lem:prod_BEs} to obtain a 
$$
\Big( (4\kappa_M +2)/\sigma_1, 4a_M + a_K + 6, (2\kappa_M + 1) \varepsilon_1 + 2\varepsilon_0/\sigma_1 \Big)
$$
-block-encoding of $\rlcQ_0\rlcM_1^{-1}$, which we denote by $U_{Q_0 M_1^{-1}}$. We set $\varepsilon_1 = \varepsilon/(4\kappa_M + 2)$ and $\varepsilon_0 = \sigma_1 \varepsilon/4$ to get a block-encoding with the desired error $\varepsilon$. This requires setting $\varepsilon_M \leq \min\Big(O(\sigma^3 \sigma_1^2 \varepsilon^2/\alpha_M), O(\sigma_1^6 \varepsilon^4/ \kappa_M^7)\Big)$ and $\varepsilon_K \leq O(\sigma_1^3 \varepsilon^2/\kappa_M^3)$. The corresponding gate complexity from using Lemma~\ref{lem:prod_BEs} is
\begin{align*}
& \widetilde{O}\left(T_M \kappa_M \log\frac{\kappa_M}{\varepsilon_0} +  T_M \frac{\kappa_M^2 \alpha_K}{\sigma_1} \log \left(\frac{\alpha_M \kappa_M}{\sigma_1 \varepsilon_1} \right) + T_K \frac{\kappa_M \alpha_K}{\sigma_1} \log \left(\frac{1}{\sigma_1 \varepsilon_1} \right) \right) \\
= \, & \widetilde{O}\left(T_M \frac{\kappa_M^2 \alpha_K}{\sigma_1} \log \left(\frac{\alpha_M \kappa_M}{\sigma_1 \varepsilon} \right) + T_K \frac{\kappa_M \alpha_K}{\sigma_1} \log \left(\frac{\kappa_M}{\sigma_1 \varepsilon} \right) \right) \\
\end{align*}
where we have used the cost of $U_0$ from Claim~\ref{claim:projs_BE_ind1} and cost of $U_{M_1^{-1}}$ from Claim~\ref{claim:inv_M1_BE_ind1}. This proves item $(i)$.

$(ii)$ Consider the block-encoding $U_{Q_0 M_1^{-1}}$ from item $(i)$ but with precision $\varepsilon' \in (0,1)$ to be decided soon. Using Lemma~\ref{lem:prod_BEs}, we can now combine $U_{Q_0 M_1^{-1}}$ with $U_K$ to obtain a 
$$
\Big( (4\kappa_M +2)\alpha_K/\sigma_1, 4a_M + 2a_K + 6, (4\kappa_M + 2)/\sigma_1 \varepsilon_K + \alpha_K \varepsilon' \Big)
$$
-block-encoding of $\rlcQ_0\rlcM_1^{-1} \rlcK$, which we denote by $U_{Q_0 M_1^{-1} K}$. We combine this with $\id$ using Lemma~\ref{lem:sum_BEs} to obtain a 
$$
\Big( (4\kappa_M +2)\alpha_K/\sigma_1 + 1, 4a_M + 2a_K + 7, (4\kappa_M + 2)/\sigma_1 \varepsilon_K + \alpha_K \varepsilon' \Big)
$$
-block-encoding of $\id - \rlcQ_0\rlcM_1^{-1} \rlcK$, which we denote by $U_{\id - Q_0 M_1^{-1} K}$. We set $\varepsilon' = \varepsilon/(2\alpha_K)$ and $\varepsilon_K = \sigma_1 \varepsilon/(4\alpha_K (2 \kappa_M + 1))$ to obtain a block-encoding with the desired error $\varepsilon$. This requires setting $\varepsilon_M \leq \min\Big(O(\sigma^3 \sigma_1^2 \varepsilon^2/\alpha_K^2 \alpha_M), O(\sigma_1^6 \varepsilon^4/\alpha_K^4 \kappa_M^7)\Big)$ and $\varepsilon_K \leq O(\sigma_1^3 \varepsilon^2/\alpha_K^2 \kappa_M^3)$ as per item $(i)$ for $U_{Q_0 M_1^{-1}}$ and $\varepsilon_K = \sigma_1 \varepsilon/(4\alpha_K (2 \kappa_M + 1))$ for $U_K$. Combining these conditions gives us the stated condition on $\varepsilon_M$ and $\varepsilon_K$. Using Lemma~\ref{lem:prod_BEs}, the gate complexity of $U_{Q_0 M_1^{-1}K}$ is due to $U_{Q_0 M_1^{-1}}$ (item $(i)$) and $U_K$ which is (for $\varepsilon'$ defined here)
\begin{align*}
&\widetilde{O}\left(T_M \frac{\kappa_M^2 \alpha_K}{\sigma_1} \log \left(\frac{\alpha_M \kappa_M}{\sigma_1 \varepsilon'} \right) + T_K \frac{\kappa_M \alpha_K}{\sigma_1} \log \left(\frac{\kappa_M}{\sigma_1 \varepsilon'} \right) + T_K \right)    \\
= \, & \widetilde{O}\left(T_M \frac{\kappa_M^2 \alpha_K}{\sigma_1} \log \left(\frac{\alpha_K \alpha_M \kappa_M}{\sigma_1 \varepsilon} \right) + T_K \frac{\kappa_M \alpha_K}{\sigma_1} \log \left(\frac{\alpha_K \kappa_M}{\sigma_1 \varepsilon} \right) \right),    
\end{align*}
where we used $\varepsilon' = \varepsilon/(2 \alpha_K)$. This completes the proof of item $(ii)$ and the overall claim. 
\end{proof}

\subsubsection{DAE of index two.}
To solve the inherent $\ode$ for index $2$, we will need block-encodings of $-\rlcP_0 \rlcP_1 \rlcM_2^{-1} \rlcK$ governing the dynamics, $\Pi_1 = \rlcP_0 \rlcP_1$ to implement the initial state $\Pi_1 \ket{x(0)}$ and $\Pi_1 \rlcM_2^{-1}$ to implement the forcing state $\Pi_1 \rlcM_1^{-1} \ket{f}$. To give block-encodings of the projectors $\rlcP_1$ and $\rlcQ_1 := \id - \rlcP_1$, we first construct the block-encodings of $\widetilde{\rlcP}_1$ and $\widetilde{\rlcQ}_1$ as defined in Claim~\ref{claim:admissible_proj_ind2}. We omit the proofs of the claims as they proceed similarly to the ones in the previous sections.

\begin{claim}\label{claim:projs_BE_ind2}
Let $\varepsilon, \sigma \in (0,1)$. Let $\rlcM$ be as defined in Definition~\ref{def:prob_setup_DAEs} with a $(\alpha_M,a_M,\varepsilon_M)$-block-encoding $U_M$ implementable in time $T_M$ with minimum non-zero singular value satisfying $\sigma_\min^+(\rlcM) \geq \sigma$. Similarly, $\sigma_\min^+(\rlcM_1) \geq \sigma_1$. Define $\kappa_M = \alpha_M/\sigma$. Suppose that the given $\dae$ has index $2$. Then, there exists the following block-encodings for $\varepsilon_M = \poly(\sigma_1 \varepsilon/\kappa_M)$ and $\varepsilon_K = \poly(\sigma_1 \varepsilon/(\kappa_M \alpha_K))$:
\begin{enumerate}[$(i)$]
    \item $\Big(2/\sigma_1, 2a_M + a_K + 4, \varepsilon\Big)$-block-encoding of $\rlcM_1^{+}$ with gate complexity 
    $$
    \widetilde{O}\left( T_M \frac{\kappa_M^2 \alpha_K}{\sigma_1} \log \left(\frac{\alpha_M \kappa_M}{\sigma_1 \varepsilon} \right) + T_K \frac{\kappa_M \alpha_K}{\sigma_1} \log \left(\frac{1}{\sigma_1 \varepsilon} \right) \right).
    $$
    \item $(O(\alpha_K \kappa_M/\sigma_1), O(a_M + a_K),\varepsilon)$-block-encodings of $\widetilde{\rlcP}_1 := \rlcM_1^+ \rlcM_1$ and $\widetilde{\rlcQ}_1 := \id - \widetilde{\rlcP}_1$ implementable with gate complexity
    $$
    \widetilde{O}\left( T_M \frac{\kappa_M^2 \alpha_K}{\sigma_1} \log \left(\frac{\alpha_M \kappa_M}{\sigma_1 \varepsilon} \right) + T_K \frac{\kappa_M \alpha_K}{\sigma_1} \log \left(\frac{1}{\sigma_1 \varepsilon} \right) \right).
    $$
\end{enumerate}
\end{claim}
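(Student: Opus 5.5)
The plan mirrors the index-one constructions, since the paper says these proofs ``proceed similarly.'' For item $(i)$ I would first build a block-encoding of $\rlcM_1 = \rlcM + \rlcK\rlcQ_0$ exactly as in Claim~\ref{claim:M1_BE_ind1}. That claim only uses the block-encoding of $\rlcQ_0$ from Claim~\ref{claim:projs_BE_ind1}$(ii)$, which needs only $\sigma^+_\min(\rlcM)\ge\sigma$ and not invertibility of $\rlcM_1$. It therefore applies verbatim for index $2$ and yields a $(2\kappa_M\alpha_K+\alpha_M+\alpha_K,\,2a_M+a_K+3,\,\varepsilon_1)$-block-encoding of $\rlcM_1$, provided $\varepsilon_M=o(\varepsilon_1^2/\kappa_M^3)$ and $\varepsilon_K=\varepsilon_1/(6\kappa_M+3)$.

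Next I would repeat the argument of Claim~\ref{claim:inv_M1_BE_ind1}, with Lemma~\ref{lem:pseudoinverse_BE} applied to the pseudoinverse rather than the inverse. The key difference is that $\rlcM_1$ is now singular, so $\sigma_1$ lower-bounds only its smallest \emph{non-zero} singular value. Lemma~\ref{lem:pseudoinverse_BE} is stated for non-zero singular values in $[1/\kappa_A,1]$, so it applies directly. I would normalize by $\alpha_1=\norm{\rlcM}+\norm{\rlcK}$, which is at least $\norm{\rlcM_1}$ by Claim~\ref{claim:ub_norm_Mi_Ki}$(i)$, and set $\kappa_A=\alpha_1/\sigma_1$. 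Rescaling then gives a $(2/\sigma_1,2a_M+a_K+4,\varepsilon)$-block-encoding of $\rlcM_1^+$. The precision requirements are $\varepsilon_1=O(\sigma_1^3\varepsilon^2)$, which translates to $\varepsilon_M=O(\sigma_1^6\varepsilon^4/\kappa_M^3)$ and $\varepsilon_K=O(\sigma_1^3\varepsilon^2/\kappa_M)$, both polynomial in the stated quantities. The cost is $O(\kappa_M\alpha_K\sigma_1^{-1}\log(1/\sigma_1\varepsilon))$ uses of $U_{M_1}$, each costing $O(T_M\kappa_M\log(\alpha_M\kappa_M/\varepsilon_1)+T_K)$. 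Multiplying these reproduces the stated complexity.

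For item $(ii)$ I would form $\widetilde{\rlcP}_1=\rlcM_1^+\rlcM_1$ via Lemma~\ref{lem:prod_BEs}$(ii)$, multiplying the block-encoding from item $(i)$ (at precision $\varepsilon'$) with the block-encoding of $\rlcM_1$ (at precision $\varepsilon_1'$). This gives subnormalization $(2/\sigma_1)(2\kappa_M\alpha_K+\alpha_M+\alpha_K)=O(\alpha_K\kappa_M/\sigma_1)$, ancilla count $O(a_M+a_K)$, and error $(2/\sigma_1)\varepsilon_1'+(2\kappa_M\alpha_K+\alpha_M+\alpha_K)\varepsilon'$. I would choose $\varepsilon'=\Theta(\varepsilon/(\kappa_M\alpha_K))$ and $\varepsilon_1'=\Theta(\sigma_1\varepsilon)$, which keeps the precision conditions on $\varepsilon_M,\varepsilon_K$ of the form $\poly(\sigma_1\varepsilon/(\kappa_M\alpha_K))$. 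Then $\widetilde{\rlcQ}_1=\id-\widetilde{\rlcP}_1$ follows from Lemma~\ref{lem:sum_BEs}, as in Claim~\ref{claim:projs_BE_ind1}$(ii)$. This adds $1$ to the subnormalization and one ancilla, and leaves the error and cost unchanged. The extra logarithmic factors in $\kappa_M\alpha_K$ coming from $\varepsilon'$ are absorbed by the $\widetilde{O}$.

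The main obstacle is bookkeeping the error parameters through the nested constructions. In particular, the precision demanded of $\varepsilon_M,\varepsilon_K$ must remain polynomial once the pseudoinverse's $\kappa_A^{-2}$ sensitivity is composed with the $\rlcQ_0$ construction. I would handle this conservatively, using $1/\log(1/x)\ge x$ as in Claim~\ref{claim:projs_BE_ind1} and stating only $\poly(\cdot)$ bounds.
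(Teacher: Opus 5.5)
Your proposal is correct and is the argument the paper intends when it omits this proof as proceeding ``similarly'' to index one. You build $\rlcM_1$ via Claim~\ref{claim:M1_BE_ind1}, which never uses invertibility of $\rlcM_1$; you apply Lemma~\ref{lem:pseudoinverse_BE} as in Claim~\ref{claim:inv_M1_BE_ind1}, with $\sigma_1$ now bounding only the non-zero singular values; and you form $\rlcM_1^+\rlcM_1$ and $\id-\rlcM_1^+\rlcM_1$ as in Claim~\ref{claim:projs_BE_ind1}.

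One bookkeeping correction. By Lemma~\ref{lem:prod_BEs}, the error of the $\rlcK\rlcQ_0$ encoding inside Claim~\ref{claim:M1_BE_ind1} is $\alpha_K\varepsilon_Q+(2\kappa_M+1)\varepsilon_K$; the paper's $\alpha_M\varepsilon_Q$ is a slip, and you inherit it in item $(i)$. As a result, the requirement on $\varepsilon_M$ depends polynomially on $\alpha_K$ already in item $(i)$. Your $\poly(\sigma_1\varepsilon/(\kappa_M\alpha_K))$ form is therefore the accurate condition, not the literal $\poly(\sigma_1\varepsilon/\kappa_M)$ of the statement, and it is what the later index-two claims actually assume.
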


\begin{claim}\label{claim:admissible_projs_BE_ind2}
Let $\varepsilon, \sigma, \uptau \in (0,1)$. Let $\rlcM$ be as defined in Definition~\ref{def:prob_setup_DAEs} with a $(\alpha_M,a_M,\varepsilon_M)$-block-encoding $U_M$ implementable in time $T_M$ with minimum non-zero singular value satisfying $\sigma_\min^+(\rlcM) \geq \sigma$. Similarly, $\sigma_\min^+(\rlcM_1) \geq \sigma_1$. Define $\kappa_M = \alpha_M/\sigma$. Suppose that the given $\dae$ has index $2$. Let $\sigma_\min^+(\rlcP_0 \widetilde{\rlcQ}_1) \geq \uptau$. Then, there exists the following block-encodings for $\varepsilon_M = \poly(\uptau \sigma_1 \varepsilon/(\kappa_M \alpha_K))$ and $\varepsilon_K = \poly(\uptau \sigma_1 \varepsilon/(\kappa_M \alpha_K))$:
\begin{enumerate}[$(i)$]
    \item $(2/\uptau, O(a_M + a_K),\varepsilon)$-block-encoding of $(\rlcP_0 \widetilde{\rlcQ}_1)^+$ implementable with gate complexity 
    $$
    \widetilde{O}\left( (T_M + T_K) \poly\left(\frac{\kappa_M \alpha_K}{\sigma_1 \uptau} \log \left(\frac{\alpha_K \kappa_M}{\sigma_1 \uptau \varepsilon} \right)\right) \right).
    $$
    \item $(O(\alpha_K \kappa_M^2/(\sigma_1 \uptau), O(a_M + a_K),\varepsilon)$-block-encodings of $\rlcQ_1$ and $\rlcP_1$ implementable with gate complexity
    $$
    \widetilde{O}\left( (T_M + T_K) \poly\left(\frac{\kappa_M \alpha_K}{\sigma_1 \uptau} \log \left(\frac{\alpha_K \kappa_M}{\sigma_1 \uptau \varepsilon} \right)\right) \right).
    $$
\end{enumerate}
\end{claim}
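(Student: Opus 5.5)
Using $\rlcQ_1 = \widetilde{\rlcQ}_1(\rlcP_0\widetilde{\rlcQ}_1)^+\rlcP_0$ from Claim~\ref{claim:admissible_proj_ind2}, the plan is to build approximate block-encodings of the three factors and chain them together. For item $(i)$, first obtain a block-encoding of $\rlcA := \rlcP_0\widetilde{\rlcQ}_1$. To do so, invoke Claim~\ref{claim:projs_BE_ind1}$(i)$ to get a $(2\kappa_M, 2a_M+1, \varepsilon_0)$-block-encoding of $\rlcP_0$. Then invoke Claim~\ref{claim:projs_BE_ind2}$(ii)$ to get an $(O(\alpha_K\kappa_M/\sigma_1), O(a_M+a_K), \varepsilon_0')$-block-encoding of $\widetilde{\rlcQ}_1$. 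Multiplying these with Lemma~\ref{lem:prod_BEs} gives an $(\alpha_A, O(a_M+a_K), \varepsilon_A)$-block-encoding of $\rlcA$, where $\alpha_A = O(\alpha_K\kappa_M^2/\sigma_1)$ and $\varepsilon_A = 2\kappa_M\varepsilon_0' + O(\alpha_K\kappa_M/\sigma_1)\varepsilon_0$.

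Since $\norm{\rlcA}\le 1$ and its nonzero singular values lie in $[\uptau,1]$, apply Lemma~\ref{lem:pseudoinverse_BE} to $\rlcA$ itself, which is already normalized, with $\kappa_A = 1/\uptau$. As in the proof of Claim~\ref{claim:inv_M_BE_ind0}, the subnormalization $\alpha_A$ is harmless: we may take $\alpha_A\ge 2$, and it enters only the cost. The lemma then yields a $(2/\uptau, O(a_M+a_K), \varepsilon)$-block-encoding of $\rlcA^+$. This requires $\varepsilon_A = o(\varepsilon\uptau^2/\log(1/(\uptau\varepsilon)))$, which is polynomial. The cost is $O(\alpha_A\uptau^{-1}\log(1/(\uptau\varepsilon)))$ uses of the block-encoding of $\rlcA$. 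Each such use costs the gate complexity of Claim~\ref{claim:projs_BE_ind2}, whose error parameter is now polynomially small. Multiplying these two costs gives the stated $\widetilde{O}((T_M+T_K)\poly(\cdot))$. We then back-propagate: we choose $\varepsilon_0, \varepsilon_0'$ as polynomials in $\uptau\sigma_1\varepsilon/(\kappa_M\alpha_K)$. By the preconditions of Claims~\ref{claim:projs_BE_ind1} and~\ref{claim:projs_BE_ind2}, this forces $\varepsilon_M$ and $\varepsilon_K$ to be polynomial in the same quantity, which is what the statement asserts.

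For item $(ii)$, apply Claim~\ref{claim:prod_list_BEs} to the three factors $\widetilde{\rlcQ}_1$, $(\rlcP_0\widetilde{\rlcQ}_1)^+$ and $\rlcP_0$. Their subnormalizations are $O(\alpha_K\kappa_M/\sigma_1)$, $2/\uptau$ and $2\kappa_M$. Their product is $O(\alpha_K\kappa_M^2/(\sigma_1\uptau))$, and the ancilla count is $O(a_M+a_K)$. The resulting error is $\sum_i \varepsilon_i\widetilde\alpha/\alpha_i$, so we set each $\varepsilon_i = \varepsilon\alpha_i/(3\widetilde\alpha)$, again a polynomial rescaling. For $\rlcP_1 = \id - \rlcQ_1$, combine with $\id$ using Lemma~\ref{lem:sum_BEs}. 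This adds $1$ to the subnormalization and one ancilla, and the cost changes by only an additive constant.

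The main obstacle is the bookkeeping of nested precision requirements through the pseudoinverse step. Lemma~\ref{lem:pseudoinverse_BE} needs the input error to be $o(\varepsilon/\kappa_A^2)$, and this is then fed back through two earlier pseudoinverse constructions, namely for $\rlcM^+$ and $\rlcM_1^+$. Each of these squares or cubes the precision demands. So I would track only that every requirement is a fixed polynomial in $\uptau\sigma_1\varepsilon/(\kappa_M\alpha_K)$, as the statement allows, rather than exact exponents. A secondary subtlety is that the spectral assumption must hold for $\rlcA$ exactly. The approximate block-encoding only perturbs the encoded matrix, and this perturbation is absorbed by the tolerance that Lemma~\ref{lem:pseudoinverse_BE} explicitly allows.
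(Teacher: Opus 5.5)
Your proposal is correct and is essentially the argument the paper intends; the paper omits this proof, saying only that it proceeds like the index-one claims. You block-encode $\rlcP_0\widetilde{\rlcQ}_1$ as a product of the block-encodings from Claims~\ref{claim:projs_BE_ind1} and~\ref{claim:projs_BE_ind2}, apply Lemma~\ref{lem:pseudoinverse_BE} directly to this already-normalized matrix with $\kappa_A = 1/\uptau$, chain the three factors of $\rlcQ_1 = \widetilde{\rlcQ}_1(\rlcP_0\widetilde{\rlcQ}_1)^+\rlcP_0$ via Claim~\ref{claim:prod_list_BEs}, and obtain $\rlcP_1$ via Lemma~\ref{lem:sum_BEs}, with subnormalization and precision bookkeeping that matches the stated parameters.
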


We now give block-encoding of the inverse $\rlcM_2 := \rlcM_1 + \rlcK_1 \rlcQ_1$ (which is guaranteed to exist in the case of index $2$).
\begin{claim}\label{claim:inv_M2_BE_ind2}
Let $\varepsilon, \sigma, \sigma_1, \sigma_2, \uptau \in (0,1)$. Let $\rlcM$ be as defined in Definition~\ref{def:prob_setup_DAEs} with a $(\alpha_M,a_M,\varepsilon_M)$-block-encoding $U_M$ implementable in time $T_M$ with minimum non-zero singular value satisfying $\sigma_\min^+(\rlcM) \geq \sigma$. Similarly, $\sigma_\min^+(\rlcM_1) \geq \sigma_1$ and $\sigma_\min^+(\rlcM_2) \geq \sigma_2$. Define $\kappa_M = \alpha_M/\sigma$. Suppose that the given $\dae$ has index $2$. Let $\sigma_\min^+(\rlcP_0 \widetilde{\rlcQ}_1) \geq \uptau$. For $\varepsilon_M = \varepsilon_K = \poly(\uptau \sigma_1 \sigma_2 \varepsilon/(\kappa_M \alpha_K))$, there exists a
$$
\Big(2/\sigma_2, O(a_M + a_K), \varepsilon\Big)
$$
-block-encoding of $\rlcM_2^{-1}$ with gate complexity 
$$
\widetilde{O}\left( (T_M + T_K) \poly\left(\frac{\kappa_M \alpha_K}{\sigma_1 \sigma_2 \uptau} \log \left(\frac{\alpha_K \kappa_M}{\sigma_1 \sigma_2 \uptau \varepsilon} \right)\right) \right).
$$
\end{claim}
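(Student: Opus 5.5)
The plan mirrors the index-one construction of Claims~\ref{claim:M1_BE_ind1} and \ref{claim:inv_M1_BE_ind1}. First I build a block-encoding of $\rlcM_2$, and then I invert it with Lemma~\ref{lem:pseudoinverse_BE}, normalizing by a known upper bound on $\norm{\rlcM_2}$. Since $\rlcK_1 = \rlcK\rlcP_0$, I write
$$
\rlcM_2 = \rlcM + \rlcK\rlcQ_0 + \rlcK\rlcP_0\rlcQ_1 .
$$

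\emph{Block-encoding $\rlcM_2$.} The three terms are handled as follows.
\begin{itemize}
\item The term $\rlcM$ is $U_M$ itself.
\item For $\rlcK\rlcQ_0$, I combine $U_K$ with the block-encoding of $\rlcQ_0$ from Claim~\ref{claim:projs_BE_ind1}(ii) via Lemma~\ref{lem:prod_BEs}. This is exactly the intermediate object in the proof of Claim~\ref{claim:M1_BE_ind1}.
\item For $\rlcK\rlcP_0\rlcQ_1$, I chain $U_K$, the block-encoding of $\rlcP_0$ from Claim~\ref{claim:projs_BE_ind1}(i), and the block-encoding of $\rlcQ_1$ from Claim~\ref{claim:admissible_projs_BE_ind2}(ii) using Claim~\ref{claim:prod_list_BEs}.
\end{itemize}
Adding the three pieces with Lemma~\ref{lem:sum_BEs} gives an $(\alpha_2, O(a_M+a_K), \varepsilon_2)$-block-encoding of $\rlcM_2$. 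The normalization is $\alpha_2 = \alpha_M + O(\alpha_K\kappa_M) + O(\alpha_K\kappa_M^3/(\sigma_1\uptau))$, so $\alpha_2 = \poly(\kappa_M\alpha_K/(\sigma_1\uptau))$. The error $\varepsilon_2$ is a sum of terms linear in $\varepsilon_M$, $\varepsilon_K$ and the sub-block-encoding errors, each weighted by products of the other normalizations (Claim~\ref{claim:prod_list_BEs}). Because every sub-block-encoding tolerates input precision $\poly(\uptau\sigma_1\varepsilon'/(\kappa_M\alpha_K))$, one can reach $\varepsilon_2$ with $\varepsilon_M,\varepsilon_K$ polynomially small in the same parameters. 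The gate cost is dominated by the $\rlcQ_1$ block-encoding, which is $(T_M+T_K)\poly(\cdot)$.

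\emph{Inversion.} By Claim~\ref{claim:ub_norm_Mi_Ki}(ii), $\alpha_1' := \norm{\rlcM}+\norm{\rlcK}(1+1/\uptau) \le \alpha_M+\alpha_K(1+1/\uptau)$ bounds $\norm{\rlcM_2}$. Set $\rlcA = \rlcM_2/\alpha_1'$. This is normalized with singular values in $[\sigma_2/\alpha_1', 1]$, so $\kappa_A = \alpha_1'/\sigma_2$. The block-encoding above is then an $(\alpha_2/\alpha_1', \cdot, \varepsilon_2/\alpha_1')$-block-encoding of $\rlcA$. The ratio $\alpha_2/\alpha_1'$ is at least $2$, since $\alpha_M \ge 2\norm{\rlcM}$ and $\alpha_K \ge 2\norm{\rlcK}$; if needed, I pad $\alpha_2$ by a constant factor to guarantee this.

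Applying Lemma~\ref{lem:pseudoinverse_BE} with target error $\delta$ gives a $(2\kappa_A, \cdot, \delta)$-block-encoding of $\alpha_1'\rlcM_2^{-1}$, which is a $(2/\sigma_2, O(a_M+a_K), \delta/\alpha_1')$-block-encoding of $\rlcM_2^{-1}$. I then set $\delta = \alpha_1'\varepsilon$. The lemma's requirement $\varepsilon_2/\alpha_1' = o(\delta/(\kappa_A^2\log(\kappa_A/\delta)))$ becomes $\varepsilon_2 = O(\sigma_2^3\varepsilon^2)$, using $1/\log(1/x)\ge x$ as in Claim~\ref{claim:projs_BE_ind1}. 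Propagating this back yields the stated $\varepsilon_M=\varepsilon_K=\poly(\uptau\sigma_1\sigma_2\varepsilon/(\kappa_M\alpha_K))$.

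\emph{Cost.} The inversion uses $O(\kappa_A(\alpha_2/\alpha_1')\log(\kappa_A/\delta)) = O((\alpha_2/\sigma_2)\log(\cdot))$ calls to the $\rlcM_2$ block-encoding. Multiplying by its $(T_M+T_K)\poly(\kappa_M\alpha_K/(\sigma_1\uptau)\log(\cdot))$ cost gives the claimed $\widetilde O\bigl((T_M+T_K)\poly(\frac{\kappa_M\alpha_K}{\sigma_1\sigma_2\uptau}\log\frac{\alpha_K\kappa_M}{\sigma_1\sigma_2\uptau\varepsilon})\bigr)$.

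\emph{Main obstacle.} The hard part is error bookkeeping through the nested chain. Errors in $U_M$ enter $\rlcQ_1$ through two pseudoinverses: that of $\rlcM_1$, and that of $\rlcP_0\widetilde{\rlcQ}_1$, where the perturbation must stay below $\uptau$ so the pseudoinverse remains stable. They are then amplified again by the inversion of $\rlcM_2$. I would handle this by fixing the error budget top-down, first choosing $\varepsilon_2$, then the $\rlcQ_1$ precision, and so on. At each level I would track only that the required precision is polynomial in $\uptau\sigma_1\sigma_2\varepsilon/(\kappa_M\alpha_K)$, relying on Claim~\ref{claim:admissible_projs_BE_ind2} as a black box for the sensitivity of the $\rlcQ_1$ construction rather than redoing it.
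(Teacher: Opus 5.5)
Your proposal is correct and follows the same route the paper intends. The paper omits this proof, saying it proceeds like Claims~\ref{claim:M1_BE_ind1} and \ref{claim:inv_M1_BE_ind1}, which is exactly your plan: block-encode $\rlcM_2 = \rlcM + \rlcK\rlcQ_0 + \rlcK\rlcP_0\rlcQ_1$ from the earlier projector block-encodings, then invert via Lemma~\ref{lem:pseudoinverse_BE} after normalizing by the bound of Claim~\ref{claim:ub_norm_Mi_Ki}(ii). There is one harmless slip: the subnormalization of the $\rlcK\rlcP_0\rlcQ_1$ term is $O(\alpha_K^2\kappa_M^3/(\sigma_1\uptau))$, not $O(\alpha_K\kappa_M^3/(\sigma_1\uptau))$, which changes nothing since the claim only asserts polynomial dependence.
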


We can now use the above block-encoding of $\rlcM_2^{-1}$ to give block-encodings of $\rlcP_0 \rlcP_1 \rlcM_2^{-1}$ and $\rlcP_0 \rlcP_1 \rlcM_2^{-1} \rlcK$, which we formally describe below.
\begin{claim}\label{claim:ode_BE_ind2}
Let $\varepsilon, \sigma, \sigma_1, \sigma_2, \uptau \in (0,1)$. Let $\rlcM$ be as defined in Definition~\ref{def:prob_setup_DAEs} with a $(\alpha_M,a_M,\varepsilon_M)$-block-encoding $U_M$ implementable in time $T_M$ with minimum non-zero singular value satisfying $\sigma_\min^+(\rlcM) \geq \sigma$. Similarly, $\sigma_\min^+(\rlcM_1) \geq \sigma_1$ and $\sigma_\min^+(\rlcM_2) \geq \sigma_2$. Define $\kappa_M = \alpha_M/\sigma$. Suppose that the given $\dae$ has index $2$. Let $\sigma_\min^+(\rlcP_0 \widetilde{\rlcQ}_1) \geq \uptau$. Then, there exists the following block-encodings for $\varepsilon_M = \varepsilon_K = \poly(\uptau \sigma_1 \sigma_2 \varepsilon/(\kappa_M \alpha_K))$:
\begin{enumerate}[$(i)$]
    \item $(\poly(\alpha_K \kappa_M/(\sigma_1 \uptau)), O(a_M + a_K), \varepsilon)$-block-encoding of $\rlcP_0 \rlcP_1$ with gate complexity
    $$
    \widetilde{O}\left( (T_M + T_K) \poly\left(\frac{\kappa_M \alpha_K}{\sigma_1 \uptau} \log \left(\frac{\alpha_K \kappa_M}{\sigma_1 \uptau \varepsilon} \right)\right) \right).
    $$
    \item $(\poly(\alpha_K \kappa_M/(\sigma_1 \sigma_2 \uptau)), O(a_M + a_K), \varepsilon)$-block-encoding of $\rlcP_0 \rlcP_1 \rlcM_2^{-1}$ implementable with gate complexity 
    $$
    \widetilde{O}\left( (T_M + T_K) \poly\left(\frac{\kappa_M \alpha_K}{\sigma_1 \sigma_2 \uptau} \log \left(\frac{\alpha_K \kappa_M}{\sigma_1 \sigma_2 \uptau \varepsilon} \right)\right) \right).
    $$
    \item $(\poly(\alpha_K \kappa_M/(\sigma_1 \sigma_2 \uptau)), O(a_M + a_K), \varepsilon)$-block-encoding of $\rlcP_0 \rlcP_1 \rlcM_2^{-1} \rlcK$ implementable with gate complexity 
    $$
    \widetilde{O}\left( (T_M + T_K) \poly\left(\frac{\kappa_M \alpha_K}{\sigma_1 \sigma_2 \uptau} \log \left(\frac{\alpha_K \kappa_M}{\sigma_1 \sigma_2 \uptau \varepsilon} \right)\right) \right).
    $$
\end{enumerate}
\end{claim}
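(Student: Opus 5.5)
The plan is to follow the index-one template of Claims~\ref{claim:projs_BE_ind1}--\ref{claim:ode_BE_ind1}, using Lemma~\ref{lem:prod_BEs} (or Claim~\ref{claim:prod_list_BEs} for longer chains) to multiply block-encodings already constructed in Claims~\ref{claim:projs_BE_ind1}, \ref{claim:admissible_projs_BE_ind2} and \ref{claim:inv_M2_BE_ind2}. At each step I will budget the error so that the accumulated error $\sum_i \varepsilon_i(\widetilde{\alpha}/\alpha_i)$ is at most $\varepsilon$. Since every subnormalization that appears is polynomial in $\alpha_K,\kappa_M,1/\sigma_1,1/\sigma_2,1/\uptau$, the required precisions on the inner block-encodings will also be polynomial in these quantities. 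This polynomial dependence then propagates through the precision conditions of the earlier claims, giving $\varepsilon_M=\varepsilon_K=\poly(\uptau\sigma_1\sigma_2\varepsilon/(\kappa_M\alpha_K))$.

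For item $(i)$, take the $(2\kappa_M,2a_M+1,\varepsilon_0)$-block-encoding of $\rlcP_0$ from Claim~\ref{claim:projs_BE_ind1}$(i)$. Note that Claim~\ref{claim:projs_BE_ind1} only uses $\sigma^+_{\min}(\rlcM)$ and the construction $\rlcP_0=\rlcM^+\rlcM$, not the index, so it applies verbatim here. Take also the $(O(\alpha_K\kappa_M^2/(\sigma_1\uptau)),O(a_M+a_K),\varepsilon_1)$-block-encoding of $\rlcP_1$ from Claim~\ref{claim:admissible_projs_BE_ind2}$(ii)$. Their product under Lemma~\ref{lem:prod_BEs} has subnormalization $O(\alpha_K\kappa_M^3/(\sigma_1\uptau))$ and error $2\kappa_M\varepsilon_1+O(\alpha_K\kappa_M^2/(\sigma_1\uptau))\varepsilon_0$. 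Choosing $\varepsilon_1=\varepsilon/(4\kappa_M)$ and $\varepsilon_0=\Theta(\sigma_1\uptau\varepsilon/(\alpha_K\kappa_M^2))$ gives total error $\varepsilon$. The gate cost is the sum of the two constituent costs, which is dominated by that of $\rlcP_1$, as stated.

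For item $(ii)$, multiply the block-encoding from $(i)$, built with precision $\varepsilon'$, by the $(2/\sigma_2,O(a_M+a_K),\varepsilon_2)$-block-encoding of $\rlcM_2^{-1}$ from Claim~\ref{claim:inv_M2_BE_ind2}. The resulting error is $\alpha_{(i)}\varepsilon_2+(2/\sigma_2)\varepsilon'$, so it suffices to set $\varepsilon'=\sigma_2\varepsilon/4$ and $\varepsilon_2=\varepsilon/(2\alpha_{(i)})$. For item $(iii)$, multiply the result of $(ii)$ by $U_K$. This adds an error $\alpha_{(ii)}\varepsilon_K+\alpha_K\varepsilon''$, which is handled by $\varepsilon''=\varepsilon/(2\alpha_K)$ and $\varepsilon_K\le\varepsilon/(2\alpha_{(ii)})$, and it multiplies the subnormalization by $\alpha_K$. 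The ancilla counts add, giving $O(a_M+a_K)$ in each case. The gate complexities add, and the dominant term is the $\rlcM_2^{-1}$ cost from Claim~\ref{claim:inv_M2_BE_ind2}; the logarithmic factors only acquire extra polynomial arguments inside the $\log$.

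The main obstacle is the error bookkeeping. The precision demanded of $U_M$ and $U_K$ is the minimum over all the nested requirements: $\varepsilon_M$ must be simultaneously small enough for $\rlcP_0$, for the inverse $\rlcM_1^+$ inside $\widetilde{\rlcQ}_1$, for the pseudoinverse $(\rlcP_0\widetilde{\rlcQ}_1)^+$, and for $\rlcM_2^{-1}$. Each pseudoinverse via Lemma~\ref{lem:pseudoinverse_BE} roughly squares the precision demand and multiplies it by the square of the condition number. I would track only that each layer maps a precision $\delta$ to a requirement $\poly(\cdot)\,\delta^{c}$ with constant $c$, using $1/\log(1/x)\ge x$ as in Claim~\ref{claim:projs_BE_ind1}. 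Composing a constant number of such layers then remains $\poly(\uptau\sigma_1\sigma_2\varepsilon/(\kappa_M\alpha_K))$, which is all the statement asserts.
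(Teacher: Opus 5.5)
Your proposal is correct and follows the argument the paper intends. The paper omits this proof, saying it proceeds like the index-one case. Your argument is that case carried over: you chain the block-encodings of $\rlcP_0$, $\rlcP_1$, $\rlcM_2^{-1}$ and $\rlcK$ via Lemma~\ref{lem:prod_BEs}, split the error as in Claim~\ref{claim:ode_BE_ind1}, and propagate the resulting polynomial precision requirements to $\varepsilon_M,\varepsilon_K$.
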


For the linear-algebraic solve, we need block-encodings for $\mathbf{F}_2$ and $\rlcG_2$ (Eq.~\eqref{eq:ind2_LA_z_simple}). Note that the operator corresponding to the initial condition of $\vec{z}$ i.e., $(\id - \rlcP_0 \rlcP_1)$ can be implemented using Lemma~\ref{lem:sum_BEs} and Claim~\ref{claim:ode_BE_ind2}.
\begin{claim}\label{claim:lin_alg_BE_ind2}
Let $\varepsilon, \sigma, \sigma_1, \sigma_2, \uptau \in (0,1)$. Let $\rlcM$ be as defined in Definition~\ref{def:prob_setup_DAEs} with a $(\alpha_M,a_M,\varepsilon_M)$-block-encoding $U_M$ implementable in time $T_M$ with minimum non-zero singular value satisfying $\sigma_\min^+(\rlcM) \geq \sigma$. Similarly, $\sigma_\min^+(\rlcM_1) \geq \sigma_1$ and $\sigma_\min^+(\rlcM_2) \geq \sigma_2$. Define $\kappa_M = \alpha_M/\sigma$. Suppose that the given $\dae$ has index $2$. Let $\sigma_\min^+(\rlcP_0 \widetilde{\rlcQ}_1) \geq \uptau$. Then, for $\varepsilon_M = \varepsilon_K = \poly(\uptau \sigma_1 \sigma_2 \varepsilon/(\kappa_M \alpha_K))$, there exists $(\poly(\alpha_K \kappa_M/(\sigma_1 \uptau)), O(a_M + a_K), \varepsilon)$-block-encoding of $\mathbf{F}_2$ and $\mathbf{G}_2$ (Eq.~\eqref{eq:ind2_LA_z_simple}) with gate complexity
$$
\widetilde{O}\left( (T_M + T_K) \poly\left(\frac{\kappa_M \alpha_K}{\sigma_1 \uptau} \log \left(\frac{\alpha_K \kappa_M}{\sigma_1 \uptau \varepsilon} \right)\right) \right).
$$
\end{claim}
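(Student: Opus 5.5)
The plan is to build block-encodings of $\mathbf{F}_2$ and $\rlcG_2$ term by term from the block-encodings already available, then add the terms with Lemma~\ref{lem:sum_BEs}. The only building blocks needed are:
\begin{itemize}
\item Claim~\ref{claim:projs_BE_ind1} for $\rlcQ_0,\rlcP_0$, with normalization $O(\kappa_M)$;
\item Claim~\ref{claim:admissible_projs_BE_ind2}$(ii)$ for $\rlcQ_1,\rlcP_1$, with normalization $O(\alpha_K\kappa_M^2/(\sigma_1\uptau))$;
\item Claim~\ref{claim:inv_M2_BE_ind2} for $\rlcM_2^{-1}$, with normalization $2/\sigma_2$;
\item the given $U_K$ for $\rlcK$.
\end{itemize}

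\textbf{Block-encoding of $\rlcK_2$.} Since $\rlcK_2 = \rlcK_1\rlcP_1 = \rlcK\rlcP_0\rlcP_1$, we take the product of $U_K$ with the block-encoding of $\rlcP_0\rlcP_1$ from Claim~\ref{claim:ode_BE_ind2}$(i)$, using Lemma~\ref{lem:prod_BEs}. This gives normalization $\poly(\alpha_K\kappa_M/(\sigma_1\uptau))$.

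\textbf{Building $\rlcG_2$.} Each monomial in Eq.~\eqref{eq:ind2_LA_z_simple} is a product of at most six of these factors:
\begin{itemize}
\item $\rlcQ_0\rlcQ_1\rlcM_2^{-1}\rlcK_2\rlcM_2^{-1}\rlcK_2$, for the term involving $(\rlcM_2^{-1}\rlcK_2)^2$;
\item $\rlcQ_0\rlcQ_1\rlcM_2^{-1}\rlcK$, $\rlcQ_0\rlcP_1\rlcM_2^{-1}\rlcK$ and $\rlcQ_1\rlcM_2^{-1}\rlcK$, for the remaining terms.
\end{itemize}
Each monomial is block-encoded by chaining with Claim~\ref{claim:prod_list_BEs}. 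The resulting normalization is the product of the factor normalizations, which is $\poly(\alpha_K\kappa_M/(\sigma_1\sigma_2\uptau))$. The error is $\sum_i \varepsilon_i\,\widetilde{\alpha}/\alpha_i$. The minus signs are absorbed by a controlled phase, or equivalently by block-encoding $-\id$ and multiplying. We then combine the four monomials with Lemma~\ref{lem:sum_BEs}, using a constant-size state-preparation unitary over the four terms.

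\textbf{Building $\mathbf{F}_2$.} We first form the bracket
\[
-\rlcQ_0\rlcQ_1\rlcM_2^{-1}\rlcK_2 + \rlcQ_0\rlcP_1 - \rlcQ_0\rlcQ_1 + \rlcQ_1
\]
as a linear combination of products in the same way. We then multiply the bracket on the right by the block-encoding of $\rlcM_2^{-1}$. Ancilla counts add up to $O(a_M+a_K)$, since only a constant number of factors are used.

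\textbf{Error budget.} This is the main bookkeeping obstacle. Let $\widetilde{\alpha}$ be the total normalization, which is polynomial in $\alpha_K\kappa_M/(\sigma_1\sigma_2\uptau)$. For the combined error to be at most $\varepsilon$, each constituent block-encoding must have error at most $\varepsilon/(c\,\widetilde{\alpha})$ for a constant $c$. Feeding this precision back into Claims~\ref{claim:projs_BE_ind1}, \ref{claim:admissible_projs_BE_ind2} and \ref{claim:inv_M2_BE_ind2}, each of which already requires $\varepsilon_M,\varepsilon_K$ polynomial in its own parameters times its target precision, gives the stated requirement $\varepsilon_M=\varepsilon_K=\poly(\uptau\sigma_1\sigma_2\varepsilon/(\kappa_M\alpha_K))$.

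\textbf{Cost.} The gate complexity is the sum of a constant number of subroutine costs, each polynomial in these parameters times $(T_M+T_K)$ with polylogarithmic factors in $1/\varepsilon$. This gives the claimed $\widetilde{O}$ bound. One caveat: since $\rlcM_2^{-1}$ appears in both $\rlcG_2$ and $\mathbf{F}_2$, the stated normalization and cost should really also depend on $\sigma_2$, and I would track that dependence explicitly in the proof.
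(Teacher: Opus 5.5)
Your proposal is correct and follows the approach the paper intends. The paper omits this proof as proceeding like the index-one Claims~\ref{claim:ode_BE_ind1} and~\ref{claim:lin_alg_BE_ind1}: chain the block-encodings of $\rlcQ_0$, $\rlcQ_1$, $\rlcP_1$, $\rlcP_0\rlcP_1$, $\rlcM_2^{-1}$ and $\rlcK$ via Lemma~\ref{lem:prod_BEs}, combine the signed monomials via Lemma~\ref{lem:sum_BEs}, and push the error budget back onto $\varepsilon_M,\varepsilon_K$.

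Your $\sigma_2$ caveat is also right, and it is a genuine correction to the statement rather than bookkeeping. For the index-two system $\dot x_1 + a x_2 = f_1$, $-b x_1 = f_2$ (so $\rlcM=\mathrm{diag}(1,0)$), one gets $\mathbf{F}_2\vec f = (-f_2/b,\, f_1/a)$. Thus $\|\mathbf{F}_2\|$ grows like $1/\sigma_2$ as $b\to 0$, while $\sigma,\sigma_1,\uptau,\kappa_M,\alpha_K$ stay fixed. The normalization and cost must therefore depend on $1/\sigma_2$, as they do in Claim~\ref{claim:ode_BE_ind2}$(ii)$--$(iii)$.
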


\subsection{Preparation and effects of approximate input state}\label{subsec:initial_state_forcing}
For solving inherent $\ode$s of $\dae$s of index $k \in \{1,2\}$, we need to prepare input states encoding the initial state of $\rlcP_0 \vec{x}(0)$ or $\rlcP_0 \rlcP_1 \vec{x}(0)$. Similarly, we need to ensure the input state encodes the projected forcing $\rlcP_0 \rlcM_1^{-1} \vec{f}$ or $\rlcP_0 \rlcP_1 \rlcM_1^{-1} \vec{f}$. However, as we only prepare approximate block-encodings of the involved matrices, we will only prepare the corresponding states approximately. We thus need to account for errors in the $\ode$ solver due to the initial state and forcing which have not been considered as part of Theorem~\ref{thm:ODE_solver}. 

\paragraph{Input state preparation.}
We will use the following result adapted from \cite[Lemma~14]{Krovi2023improvedquantum} for preparing initial states to our algorithm. The proof follows identically to that in \cite{Krovi2023improvedquantum} and is included for completeness.
\begin{claim}\label{claim:initial_state_prep}
Let $\alpha_1, \alpha_2 \in \mathbb{R}^+$, $m \in \mathbb{N}$ and $h \in (0,1)$. Let $O_x$ and $O_f$ be oracles from Definition~\ref{def:prob_setup_DAEs}. Define the state
$$
\ket{\psi_0} = \frac{1}{\normhist_0} \left[\alpha_1 \norm{\vec{x}_0} \ket{0,0,x_0} + \alpha_2 h \norm{\vec{f}} \sum_{j=0}^{m-1} \ket{j,1,f} \right],
$$
where $\normhist_0 = \sqrt{\alpha_1 \norm{\vec{x}_0}^2 + \alpha_2 h^2 \norm{\vec{f}}^2}$. Then, $\ket{\psi_0}$ can be prepared with a single call to $O_x$ and $O_f$, and an additional $O(\log m)$ elementary gates.
\end{claim}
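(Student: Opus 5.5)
The plan is to build the state in stages, using a two-branch superposition over a flag qubit and two controlled state preparations: one for the initial-condition term and one for the forcing terms. The normalization is $\normhist_0$. I read it as $\sqrt{\alpha_1^2\norm{\vec{x}_0}^2 + \alpha_2^2 m h^2 \norm{\vec{f}}^2}$, since that is what makes $\ket{\psi_0}$ a unit vector; the statement's expression appears to drop squares and the factor $m$, and I will use the corrected form.

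\emph{Step 1: flag qubit.} Start from $\ket{0}$ on a flag qubit and apply a single-qubit rotation $R_y(\theta)$ chosen so that
$$
\cos(\theta/2) = \frac{\alpha_1\norm{\vec{x}_0}}{\normhist_0}, \qquad \sin(\theta/2) = \frac{\alpha_2 \sqrt{m}\, h\norm{\vec{f}}}{\normhist_0}.
$$
These amplitudes are known classical numbers, since $\norm{\vec{x}_0}$ and $\norm{\vec{f}}$ are assumed known (Definition~\ref{def:prob_setup_DAEs} gives normalized-state oracles together with the norms). So this step costs $O(1)$ gates.

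\emph{Step 2: flag $=0$ branch.} Conditioned on the flag being $0$, leave the time and Taylor registers in $\ket{0,0}$ and apply $O_x$ to the system register, producing $\ket{0,0,x_0}$.

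\emph{Step 3: flag $=1$ branch.} Conditioned on the flag being $1$:
\begin{enumerate}[$(a)$]
\item prepare the uniform superposition $m^{-1/2}\sum_{j=0}^{m-1}\ket{j}$ on the time register. This takes $O(\log m)$ gates, either exactly when $m$ is a power of two, or with a standard exact preparation of a uniform superposition over $m$ elements otherwise;
\item flip the Taylor register to $\ket{1}$;
\item apply $O_f$ to the system register, giving $\ket{j,1,f}$.
\end{enumerate}

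\emph{Step 4: uncompute the flag.} The flag is now correlated with the Taylor register: flag $0$ pairs with Taylor index $0$, and flag $1$ pairs with Taylor index $1$. A single CNOT from the Taylor register onto the flag therefore returns the flag to $\ket{0}$, leaving exactly $\ket{\psi_0}$.

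\emph{Cost and the one subtle point.} The construction uses one controlled $O_x$ and one controlled $O_f$. The only real subtlety is that the controls do not need to be paid for as controlled oracles. Instead, one can apply $O_x$ and $O_f$ on separate system copies and swap in via controlled-SWAPs; simpler still, in this structure the branches share registers only in a way that allows the oracle call to be placed after the controlled register manipulations, using the ancilla-flag trick of \cite[Lemma~14]{Krovi2023improvedquantum}. I would follow that lemma directly. All remaining operations are $O(\log m)$ elementary gates.
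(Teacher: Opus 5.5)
Your construction is correct and essentially matches the paper's proof. The paper rotates the Taylor-index qubit directly, so that qubit acts as your flag and no CNOT uncomputation is needed; it then applies $O_x$ and $O_f$ controlled on that qubit and a controlled $O(\log m)$-gate uniform-superposition circuit on the time register. The paper's intermediate amplitude $\sqrt{m}\,h\alpha_2\norm{\vec{f}}/\normhist_0$ confirms that your corrected normalization $\sqrt{\alpha_1^2\norm{\vec{x}_0}^2 + m\alpha_2^2h^2\norm{\vec{f}}^2}$ is the intended one.

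Your closing paragraph is unnecessary, since the paper simply counts the controlled oracles as single calls. Note also that the controlled-SWAP alternative you mention would cost $O(\log N)$ extra gates, not $O(\log m)$.
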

\begin{proof}
We prepare $\ket{\psi_0}$ from $\ket{0,0,0}$. We first apply a rotation $R$ on the second register to obtain
$$
(\id \otimes R \otimes \id) \ket{0,0,0} \rightarrow \frac{\alpha_1 \norm{\vec{x}_0}}{\normhist_0} \ket{0,0,0} + \frac{\sqrt{m}h\alpha_2 \norm{\vec{f}}}{\normhist_0} \ket{0,1,0}.
$$
We then apply the oracles $O_x$ and $O_f$ to the third register conditioned on the second register being in $\ket{0}$ and $\ket{1}$ respectively. This gives us the state
$$
\xrightarrow{\id \otimes \ket{0}\bra{0} \otimes O_x + \id \otimes \ket{1}\bra{1} \otimes O_f} \frac{\alpha_1 \norm{\vec{x}_0}}{\normhist_0} \ket{0,0,0} + \frac{\sqrt{m}h\alpha_2 \norm{\vec{f}}}{\normhist_0} \ket{0,1,0}.
$$
Finally, we then apply a rotation $R_2$ on the first register conditioned on the second register being $\ket{1}$ that takes 
$$
R_2 \ket{0} \rightarrow \frac{1}{\sqrt{m}} \sum_{j=0}^{m-1} \ket{j}.
$$
This step takes $O(\log m)$ gates using the circuit from \cite{shukla2024efficient}. This completes the proof.
\end{proof}

\paragraph{Effects of approximate input state.}
We now comment on the effect of the approximate initial state and forcing on the solution of an $\ode$ before commenting on how this error propagates to the history state outputted by the quantum $\ode$ solver.

We first have the following claim that describes the error incurred in the solution from misspecified initial condition and forcing. 
\begin{claim}\label{claim:approx_init_forcing_soln}
Let $\eta_0, \eta_f \geq 0$. Suppose we are given the linear ODE of $\dot{\vec{x}} = \mathcal{A}\vec{x} + \vec{f}$ with the initial condition $\vec{x}(0) = \vec{x}_0$ and where $\vec{x},\vec{f}  \in \mathbb{R}^{N}, \bm{\mathcal{A}} \in \mathbb{R}^{N \times N}$. If the error in the specified initial condition $\vec{w}(0)$ is $\norm{\vec{x}(0) - \vec{w}(0)} \leq \eta_0$ and the specified forcing $\vec{g}$ is $\norm{\vec{g} - \vec{f}} \leq \eta_f$, then the error in the obtained solution $\vec{w}(t)$ from the true solution $\vec{x}(t)$ satisfies
$$
\norm{\vec{w}(t) - \vec{x}(t)} \leq \expnorm(\mathcal{A})(\eta_0 + t \eta_f),\,\forall t \in (0,T].
$$
\end{claim}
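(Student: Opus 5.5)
We plan to use linearity of the equation together with the variation-of-constants (Duhamel) formula. Define the error $\vec{e}(t) := \vec{w}(t) - \vec{x}(t)$. Here $\vec{w}$ solves $\dot{\vec{w}} = \bm{\calA}\vec{w} + \vec{g}$ with initial value $\vec{w}(0)$, and $\vec{x}$ solves $\dot{\vec{x}} = \bm{\calA}\vec{x} + \vec{f}$ with initial value $\vec{x}_0$. Subtracting the two equations gives a linear inhomogeneous ODE for the error,
$$
\dot{\vec{e}} = \bm{\calA}\vec{e} + (\vec{g} - \vec{f}), \qquad \vec{e}(0) = \vec{w}(0) - \vec{x}(0).
$$
Since $\bm{\calA}$ is time-independent, the unique solution is
$$
\vec{e}(t) = e^{\bm{\calA}t}\vec{e}(0) + \int_0^t e^{\bm{\calA}(t-s)}(\vec{g}-\vec{f})\,ds.
$$
This can be checked by differentiating, or derived by multiplying by the integrating factor $e^{-\bm{\calA}t}$.

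Next we bound each term by the triangle inequality and submultiplicativity. For the first term, $\|e^{\bm{\calA}t}\vec{e}(0)\| \leq \|e^{\bm{\calA}t}\|\,\eta_0 \leq \expnorm(\bm{\calA})\,\eta_0$, because $t\in(0,T]$. For the integral term we move the norm inside the integral. For every $s\in[0,t]$ we have $t-s\in[0,T]$, so $\|e^{\bm{\calA}(t-s)}\| \leq \expnorm(\bm{\calA})$. The integral term is therefore at most $\int_0^t \expnorm(\bm{\calA})\,\eta_f\,ds = t\,\expnorm(\bm{\calA})\,\eta_f$. Adding the two bounds gives $\|\vec{e}(t)\| \leq \expnorm(\bm{\calA})(\eta_0 + t\eta_f)$, which is the claim.

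There is no substantial obstacle. The one point that needs care is that the supremum defining $\expnorm(\bm{\calA})$ is taken over $[0,T]$, so every exponent $t-s$ that appears must lie in that interval. This holds because $0\le s\le t\le T$. The bound on the integral uses the standard inequality $\|\int v\| \le \int \|v\|$ for vector-valued integrals; no further estimate is needed.
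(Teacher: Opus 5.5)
Your proposal is correct and follows the paper's proof essentially step for step. Both arguments form the error $\vec{e}=\vec{w}-\vec{x}$, write it via variation of parameters, and bound the two terms by $\expnorm(\calA)\,\eta_0$ and $t\,\expnorm(\calA)\,\eta_f$ using the triangle inequality, submultiplicativity, and the fact that every exponent lies in $[0,T]$; the paper phrases the last point through the substitution $\tau=t-s$.
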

\begin{proof}
Let the error vector between the true solution and estimated solution at time $t$ be $\vec{e}(t) = \vec{w}(t) - \vec{x}(t)$. Let the error vector between $\vec{f}$ and $\vec{g}$ be $\vec{\delta} = \vec{g} - \vec{f}$. The linear ODE governing the evolution of this error is
$$
\frac{d\vec{e}(t)}{dt} = \frac{d\vec{w}(t)}{dt} - \frac{d\vec{x}(t)}{dt} = \mathcal{L} \vec{e} + \vec{\delta},
$$
with the initial condition $\vec{e}(0) = \vec{w}(0) - \vec{x}(0)$. Using variation of parameters, the error vector at time $t$ satisfies
\begin{align*}
\vec{e}(t) &= \exp(t\calA) \vec{e}(0) + \int_{s=0}^t \exp\Big((t-s)\calA\Big) \vec{\delta} ds \\
\norm{\vec{e}(t)} &\leq \norm{\exp(t \calA)} \norm{\vec{e}(0)} + \int_{s=0}^t \norm{\exp\Big((t-s)\calA\Big)} \norm{\vec{\delta}} ds \\
\norm{\vec{e}(t)} &\leq \norm{\exp(t \calA)} \eta_0 + \eta_f \int_{\tau=0}^t \norm{\exp\Big(\tau \calA\Big)} d\tau \\
\norm{\vec{e}(t)} &\leq \expnorm(\calA)(\eta_0 + t \eta_f),
\end{align*}
where we have used the sub-multiplicativity of the norm and the triangle inequality in the second line followed by a change of variables as $\tau = t-s$ in the third line. Finally, we noted the definition of $\expnorm(\calA)$ giving us the desired result.
\end{proof}

We will not use the above claim but it tells us what to expect when we consider the error incurred in the solution from misspecified inputs and approximating it by a truncated Taylor series. We do this next.
\begin{claim}\label{claim:diff_taylor_truth_approx_input}
Let $\eta_0, \eta_f \in (0,1)$ and $T > 0$. Consider the $\ode$: $\dot{\vec{x}} = \calA \vec{x} + \vec{f}$. Suppose $\vec{x}(t), \forall t \in (0,T]$ is the solution to the $\ode$ at time $t$ when the initial condition is $\vec{x}_0$ and forcing is $\vec{f}$. Suppose $\widetilde{x}(t), \forall t \in (0,T]$ is the solution obtained by truncated Taylor series at $k$ terms to the $\ode$ of $\dot{\vec{x}} = \calA \vec{x} + \vec{g}$ with the approximate initial condition $\widetilde{x}_0$ satisfying $\norm{\widetilde{x}_0 - \vec{x}_0} \leq \eta_0$ and approximate forcing $\vec{g}$ satisfying $\norm{\vec{f} - \vec{g}} \leq \eta_f$. Define $\norm{\calA} h \leq 1$, and $m = \ceil{T/h}$. Then, the following are true:
\begin{enumerate}[$(i)$]
    \item If $(k + 1)! \geq \frac{m e^3}{\delta} \left(1 + \frac{T e^2 \norm{\vec{f}}}{\norm{\vec{x}(T)}}\right)$, then the solutions at the final time $T$ satisfy
    $$
    \norm{\vec{x}(T) - \widetilde{x}(T)} \leq \delta \norm{\vec{x}(T)} + \expnorm(\calA) (1 + \delta) \eta_0 + 2 T  \expnorm(\calA) (1+\delta) \eta_f.
    $$
    \item If $(k + 1)! \geq \frac{4 m e^3}{\delta} \left(1 + \frac{T e^2 \norm{\vec{f}}}{\mu}\right)$ where $\mu^2 = m^{-1} \sum_{j=1}^m \norm{\vec{x}(jh)}^2$ is the average energy across time, then the normalized root mean square error of the solutions across time $t \in (0,T]$ satisfy
    $$
    \left(\frac{\sum_{j=0}^m \norm{\vec{x}(t_j) - \widetilde{x}(t_j)}^2}{\sum_{j=0}^m \norm{\vec{x}(t_j)}^2} \right)^{1/2} \leq \frac{\delta}{2} + \frac{(1 + 2 \expnorm(\calA))\eta_0}{\mu} + \frac{4 T \expnorm(\calA) \eta_f}{\mu}.
    $$
\end{enumerate}
\end{claim}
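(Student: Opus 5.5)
The plan is to split the error by the triangle inequality into a Taylor-truncation part and an input-misspecification part. Let $\vec{x}(t)$ be the exact solution with inputs $(\vec{x}_0,\vec{f})$, and let $\widehat{x}(t)$ be the truncated-Taylor solution with the \emph{exact} inputs $(\vec{x}_0,\vec{f})$. This is exactly the object analysed in \cite{Krovi2023improvedquantum} and in the proofs of Theorems~\ref{thm:ODE_solver} and \ref{thm:ODE_solver_final_state}. Let $\vec{w}(t)$ be the exact solution of the ODE with the perturbed inputs $(\widetilde{x}_0,\vec{g})$. Then
$$\norm{\vec{x}-\widetilde{x}} \leq \norm{\vec{x}-\widehat{x}} + \norm{\widehat{x}-\widetilde{x}}.$$
For the first term I would invoke the truncation bound of \cite{Krovi2023improvedquantum}. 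Under the factorial condition $(k+1)!\geq (me^3/\delta)(1+Te^2\norm{\vec{f}}/\norm{\vec{x}(T)})$, it gives $\norm{\vec{x}(T)-\widehat{x}(T)}\leq\delta\norm{\vec{x}(T)}$. This step is exactly where the modified constant $\calC_f$ enters.

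For the second term, I would use linearity of the truncated Taylor scheme. The map from inputs to discrete solution is linear: with $T_k(h\calA)=\sum_{\ell\le k}(h\calA)^\ell/\ell!$ and $S_k(h\calA)=\sum_{\ell\le k}(h\calA)^{\ell-1}h/\ell!$, the $j$-step iterate is $T_k^j x_0+\sum_{i<j}T_k^{i}S_k f$. Hence $\widehat{x}-\widetilde{x}$ is the Taylor solution with inputs $(\vec{x}_0-\widetilde{x}_0,\vec{f}-\vec{g})$, of norms at most $\eta_0,\eta_f$. It therefore suffices to bound the operators $\norm{T_k(h\calA)^j}$ and $\norm{\sum_{i<j}T_k^iS_k}$.

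Write $T_k=e^{h\calA}-R_k$ with $\norm{R_k}\le e\,(h\norm{\calA})^{k+1}/(k+1)!$. Expanding $(e^{h\calA}-R_k)^j$ and bounding each product by $\expnorm(\calA)$ powers, I would show $\norm{T_k^j}\le \expnorm(\calA)(1+\delta)$ for $j\le m$, using the factorial condition in the form $m\,\expnorm\,\norm{R_k}\lesssim\delta$. This is the main obstacle. A naive expansion gives $\expnorm^{r+1}$ factors for $r$ remainder insertions, so I need the condition to be strong enough to absorb them geometrically. If the given condition is insufficient, I would instead use the discrete Duhamel identity $T_k^j=e^{jh\calA}-\sum_i T_k^{i}R_ke^{(j-1-i)h\calA}$ and run an induction on $\max_{i\le j}\norm{T_k^i}$. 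Similarly, $\sum_{i<j}T_k^iS_k$ approximates $\int_0^{jh}e^{s\calA}ds$, whose norm is at most $T\expnorm$. Accounting for the truncation slack and for $\norm{S_k}\le h\cdot e$, I expect a bound of $2T\expnorm(1+\delta)$, which yields (i).

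For (ii), I would apply the same decomposition at each time $t_j$, square, sum, and divide by $\sum_j\norm{\vec{x}(t_j)}^2\ge m\mu^2$. The truncation part contributes $\delta/2$ by the history-state analysis of \cite{Krovi2023improvedquantum} under the stronger factorial condition with $4$ and $\mu$. The perturbation part is bounded by Minkowski's inequality, $\sqrt{\sum_j(\expnorm\eta_0+2T\expnorm\eta_f)^2}\le\sqrt{m+1}(\cdots)$. The extra additive $\eta_0/\mu$ comes from the $j=0$ term, where the initial value is copied exactly rather than propagated, and from the $m+1$ versus $m$ normalisation.
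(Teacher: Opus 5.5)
Your decomposition is the paper's own. The paper writes $\widetilde{x}(T)=L_0'\widetilde{x}_0+L_1'\vec{g}$ with $L_0'=T_k^m(\calA h)$ and $L_1'=h\sum_{j<m}T_k^j(\calA h)S_k(\calA h)$, then adds and subtracts $L_0'\vec{x}_0$ and $L_1'\vec{f}$. It bounds the exact-input truncation error by \cite[Theorem~3]{Krovi2023improvedquantum} and bounds $\norm{L_0'}$ and $\norm{L_1'}$; for (ii) it applies the same per-$t_j$ bound and Minkowski's inequality. Where you depart is the power bound $\norm{T_k(h\calA)^j}\le\expnorm(\calA)(1+\delta)$: the paper simply cites \cite[Lemma~13]{Krovi2023improvedquantum}, while you try to prove it, and neither of your sketched routes closes under the stated hypothesis. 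The word-by-word expansion of $(e^{h\calA}-R_k)^j$ and the discrete Duhamel induction both pay a separate factor $\expnorm(\calA)$ for each block of exponentials. You would therefore need $(k+1)!\gtrsim m\,\expnorm(\calA)/\delta$, but the hypothesis in (i) contains no $\expnorm(\calA)$, so the induction fails when $\expnorm(\calA)$ is large.

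The missing idea is that $T_k(h\calA)$, $R_k(h\calA)$ and $e^{h\calA}$ are all power series in $\calA$, so they commute. The binomial theorem then gives $T_k^j=\sum_{r=0}^{j}\binom{j}{r}e^{(j-r)h\calA}(-R_k)^r$, in which all exponential factors collapse into one. Hence $\norm{T_k^j}\le\expnorm(\calA)(1+\norm{R_k})^j\le\expnorm(\calA)\exp\!\big(me/(k+1)!\big)\le\expnorm(\calA)(1+\delta)$, using only $(k+1)!\ge me^3/\delta$.

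With this in hand, $\norm{L_1'}\le 2T\expnorm(\calA)(1+\delta)$ follows from the triangle inequality alone, because $\sum_{\ell=1}^{k}1/\ell!\le e-1\le 2$. Your bound with $e$ in place of $e-1$ would give the constant $e$ rather than $2$, and no comparison with $\int_0^{jh}e^{s\calA}ds$ is needed.

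In (ii), the stated constants come out only if the $j=0$ term contributes exactly $\eta_0$. The propagated perturbation $\rho=2\expnorm(\calA)\eta_0+4T\expnorm(\calA)\eta_f$ must then be summed over $j=1,\dots,m$ only, giving $\sqrt{m}\,\rho$. Using $\sqrt{m+1}$, as you wrote, costs up to a factor $\sqrt2$. The extra $\eta_0/\mu$ comes solely from the $j=0$ term together with $\sqrt{m}\ge 1$, not from an $m+1$ versus $m$ normalisation.
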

\begin{proof}
$(i)$ The true solution $\vec{x}(T)$ and the approximate solution from Taylor series with mis-specified inputs $\widetilde{x}(T)$ and $\vec{g}$ can be expressed as
\begin{equation}\label{eq:def_true_vec_taylor_vec}
\vec{x}(T) = L_0 \vec{x}_0 + L_1 \vec{f}, \quad \widetilde{x}(T) = L_0' \widetilde{x}_0 + L_1' \vec{g},
\end{equation}
where
$$
L_0 = \exp(\calA T), \quad L_0' = T_k^m(\calA h), \quad L_1 = \int_{0}^T \exp(\calA s) ds, \quad L_1' = h \sum_{j=0}^{m-1} T_k^j (\calA h) S_k(\calA h),
$$
and $T_k$ and $S_k$ are functions defined as $T_k(z) = \sum_{j=0}^k (z^j/j!)$, and $S_k(z) = \sum_{j=1}^k (z^{j-1}/j!)$ respectively. We then have that
\begin{align}
    \norm{\vec{x}(T) - \widetilde{x}(T)} &= \norm{L_0 \vec{x}_0 - L_0' \widetilde{x}_0 + L_1 \vec{f} - L_1' \vec{g}} \nonumber \\
    &= \norm{L_0 \vec{x}_0 - L_0' \vec{x}_0 + L_0' \vec{x}_0 - L_0' \widetilde{x}_0 + L_1 \vec{f} - L_1' \vec{f} + L_1'\vec{f} - L_1' \vec{g}} \nonumber \\
    &\leq \norm{(L_0 - L_0') \vec{x}_0 + (L_1 - L_1') \vec{f}} + \norm{L_0' (\vec{x}_0 - \widetilde{x}_0)} + \norm{L_1'(\vec{f} - \vec{g})} \nonumber \\
    &\leq \frac{me^3}{(k+1)!} \left(1 + \frac{T e^2 \norm{\vec{f}}}{\norm{\vec{x}(T)}}\right) + \norm{L_0' (\vec{x}_0 - \widetilde{x}_0)} + \norm{L_1'(\vec{f} - \vec{g})}
    \label{eq:interim_diff_taylor_true}
\end{align}
where we have added and subtracted $L_0'\vec{x},L_1'\vec{f}$ inside the norm in the second line, applied triangle inequality after collecting terms in the third line, and used \cite[Theorem~3]{Krovi2023improvedquantum} that showed 
$$
\norm{(L_0 - L_0') \vec{x}_0 + (L_1 - L_1') \vec{f}} \leq \frac{me^3}{(k+1)!} \left(1 + \frac{T e^2 \norm{\vec{f}}}{\norm{\vec{x}(T)}}\right)
$$
in the fourth line. We bound the second term in Eq.~\eqref{eq:interim_diff_taylor_true} as
\begin{equation}\label{eq:norm_L0prime}
\norm{L_0' (\vec{x}_0 - \widetilde{x}_0)} \leq \norm{L_0'} \norm{\vec{x}_0 - \widetilde{x}_0} \leq \expnorm(\calA) (1+\delta) \eta_0,
\end{equation}
where we used $\norm{L_0'} \leq \expnorm(\calA)(1+\delta)$ as shown in \cite[Lemma~13]{Krovi2023improvedquantum} and the promise $\norm{\widetilde{x}_0 - \vec{x}(0)} \leq \eta_0$. To bound the third term in Eq.~\eqref{eq:interim_diff_taylor_true}, we first note that
\begin{align}
\norm{L_1'} = \norm{ h \sum_{j=0}^{m-1} T_k^j (\calA h) S_k(\calA h)}  \leq h \sum_{j=0}^{m-1} \norm{T_k^j (\calA h)} \norm{S_k(\calA h)}
&\leq h \sum_{j=0}^{m-1} 2 \expnorm(\calA)(1+\delta) \nonumber \\
&\leq 2 m h \expnorm(\calA)(1+\delta) \nonumber \\
&\leq 2 T \expnorm(\calA)(1+\delta),
\label{eq:norm_L1prime}
\end{align}
where we used triangle inequality followed by submultiplicativity of norms in the second inequality, and used in the third inequality that $\norm{T_k^j (\calA h)} \leq \expnorm(\calA)(1+\delta)$ for any $j \leq m$ (as shown in \cite[Lemma~13]{Krovi2023improvedquantum}) and 
$$
\norm{S_k(\calA h)} = \norm{\sum_{j=1}^k \frac{(\calA h)^{j-1}}{j!}} \leq \sum_{j=1}^k \frac{\norm{\calA h}^{j-1}}{j!} \leq \sum_{j=1}^k \frac{1}{j!} \leq e - 1 \leq 2.
$$
We can then bound the third term in Eq.~\eqref{eq:interim_diff_taylor_true}
\begin{equation}\label{eq:norm_third_term}
\norm{L_1'(\vec{f} - \vec{g})} \leq \norm{L_1'} \norm{\vec{f} - \vec{g}} \leq 2 T \expnorm(\calA)(1+\delta) \eta_f, 
\end{equation}
where we used Eq.~\eqref{eq:norm_L1prime} and the promise that $\norm{\vec{f} - \vec{g}} \leq \eta_f$. Plugging in the upper bounds from Eq.~\eqref{eq:norm_L0prime} and Eq.~\eqref{eq:norm_third_term} into Eq.~\eqref{eq:interim_diff_taylor_true} gives us
$$
\norm{\vec{x}(T) - \widetilde{x}(T)} \leq \delta \norm{\vec{x}(T)} + \expnorm(\calA) (1 + \delta) \eta_0 + 2 T \expnorm(\calA) (1+\delta) \eta_f,
$$
which is the desired result. This completes the proof of item $(i)$.

$(ii)$ Let $\vec{x}_k(t)$ be the solution obtained from assuming a Talyor series approximation of the solution with $k$ terms. Denoting $t_j := jh$. Then, we can show that for $j \in [m]$
\begin{equation}\label{eq:diff_taylor_trunc_true}
\norm{\vec{x}(t_j) - \vec{x}_k(t_j)} \leq \frac{j e^3}{(k+1)!} \norm{\vec{x}(t_j)} + \frac{j (t_j) e^5}{(k+1)!}\norm{\vec{f}}
\end{equation}
and 
\begin{align*}
\norm{\vec{x}(t_j) - \widetilde{x}(t_j)} &\leq \frac{j e^3}{(k+1)!} \norm{\vec{x}(t_j)} + \frac{j (t_j) e^5}{(k+1)!}\norm{\vec{f}} + \expnorm(\calA) (1 + \delta) \eta_0 + 2 (t_j)  \expnorm(\calA) (1+\delta) \eta_f \\
&\leq \frac{m e^3}{(k+1)!} \Big(\norm{\vec{x}(t_j)} + T e^2 \norm{\vec{f}}\Big) + 2 \expnorm(\calA) \eta_0 + 4 T \expnorm(\calA) \eta_f.
\end{align*}
Let us denote $\rho := 2 \expnorm(\calA) \eta_0 + 4 T \expnorm(\calA) \eta_f$ and the vector $\vec{\alpha} \in \mathbb{R}^{m+1}$ where $\alpha_j$ is the upper bound on $\norm{\vec{x}(t_j) - \widetilde{x}(t_j)}$ not involving the term depending on $\rho$ i.e.,
$$
\alpha_0 = 0, \quad \alpha_j = \frac{m e^3}{(k+1)!} \Big(\norm{\vec{x}(t_j)} + T e^2 \norm{\vec{f}}\Big), \forall j \in [m]
$$
We then have $\norm{\vec{x}_0 - \widetilde{x}_0} \leq \eta_0$ and $\norm{\vec{x}(t_j) - \widetilde{x}(t_j)} \leq \alpha_j + \rho, \forall j \in [m]$. We note that for $(k+1)! \geq \frac{4 me^3}{\delta}\left(1 + \frac{T e^2 \norm{\vec{f}}}{\mu}\right)$, we have that for $j \in [m]$
$$
\alpha_j \leq \frac{\delta \mu}{4}\left(\frac{\norm{\vec{x}(t_j)}}{\mu + T e^2 \norm{\vec{f}}} + \frac{T e^2 \norm{\vec{f}}}{\mu + T e^2 \norm{\vec{f}}} \right) 
\leq \frac{\delta \mu}{4}\left(\frac{\norm{\vec{x}(t_j)}}{\mu} + \frac{T e^2 \norm{\vec{f}}}{T e^2 \norm{\vec{f}}} \right) \leq \frac{\delta}{4}(\norm{\vec{x}(t_j)} + \mu),
$$
where we have used for $a,b \in \mathbb{R}^+$, $1/(a+b) \leq 1/a$. This implies that
\begin{equation}\label{eq:inter_bound_norm_alpha}
\sum_{j=1}^m \alpha_j^2 \leq \sum_{j=1}^m \frac{\delta^2}{16}(\norm{\vec{x}(t_j)} + \mu)^2 \leq \sum_{j=1}^m \frac{\delta^2}{16}(2\norm{\vec{x}(t_j)}^2 + 2\mu^2) \leq \frac{\delta^2}{8}( \sum_{j=1}^m \norm{\vec{x}(t_j)}^2 + \sum_{j=1}^m \mu^2) \leq \frac{\delta^2}{4} m \mu^2,    
\end{equation}
where we have used $(a+b)^2 \leq 2(a^2 + b^2)$ in the third inequality and used the definition $\mu^2 = m^{-1} \sum_{j=1}^m \norm{\vec{x}(t_j)}^2$ in the final inequality.

Applying triangle inequality on $\mathbb{R}^{m+1}$, we have
$$
\left(\sum_{j=1}^m \norm{\vec{x}(t_j) - \widetilde{x}(t_j)}^2\right)^{1/2} \leq \eta_0 + \left( \sum_{j=1}^m \alpha_j^2 \right)^{1/2} + \sqrt{m} \rho \leq \eta_0 + \frac{\delta}{2} \sqrt{m} \mu + \sqrt{m} \rho,
$$
where we have used Eq.~\eqref{eq:inter_bound_norm_alpha} in the second inequality. We then have that
\begin{equation}
\left(\frac{\sum_{j=0}^m \norm{\vec{x}(t_j) - \widetilde{x}(t_j)}^2}{\sum_{j=0}^m \norm{\vec{x}(t_j)}^2} \right)^{1/2} \leq \frac{\left(\sum_{j=0}^m \norm{\vec{x}(t_j) - \widetilde{x}(t_j)}^2\right)^{1/2}}{\sqrt{m} \mu}  \leq \frac{\eta_0}{\sqrt{m} \mu} + \frac{\delta}{2} + \frac{\rho}{\mu}
\end{equation}
Expanding the definition of $\rho$ and using $\eta_0/(\sqrt{m}\mu) \leq \eta_0/\mu$ as $m \geq 1$, the above then implies
$$
\left(\frac{\sum_{j=0}^m \norm{\vec{x}(t_j) - \widetilde{x}(t_j)}^2}{\sum_{j=0}^m \norm{\vec{x}(t_j)}^2} \right)^{1/2} \leq \frac{\delta}{2} + \frac{(1 + 2 \expnorm(\calA))\eta_0}{\mu} + \frac{4 T \expnorm(\calA) \eta_f}{\mu},
$$
which is the desired result. This completes the proof of item $(ii)$ and the overall claim.
\end{proof}

Before we comment on how the errors in the input state propagate to the history state, let us prove the following quick claim that allows us to bound the distances between a target history state and prepared history state depending on the errors between their encoded time-evolved states.
\begin{claim}\label{claim:diff_hist_states}
Let $\varepsilon, h \in (0,1)$, $T > 0$, $m = \ceil{T/h}$ and consider vectors $\vec{x}(t), \vec{w}(t), \forall t \in [0,T]$. Suppose $\sum_{j=0}^m \norm{\vec{x}(jh) - \vec{w}(jh)}^2 \leq \varepsilon^2 \sum_{j=0}^m \norm{\vec{x}(jh)}^2$. Define
$$
\ket{\Psi_x} := \frac{1}{\normhist_x}\sum_{j=0}^{m} \norm{\vec{x}(jh)} \ket{j, x(jh)}, \quad \ket{\Psi_w} := \frac{1}{\normhist_w}\sum_{j=0}^{m} \norm{\vec{w}(jh)} \ket{j, w(jh)}.
$$
Then, $\norm{\ket{\Psi_x} - \ket{\Psi_w}} \leq 2\varepsilon$.
\end{claim}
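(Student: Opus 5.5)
The idea is to recognize both history states as normalized versions of single vectors in the enlarged space $\mathbb{C}^{m+1}\otimes\mathbb{C}^N$, and then apply Fact~\ref{fact:normalized_vector_error}. Concretely, I would define the unnormalized vectors
$$
\vec{X} := \sum_{j=0}^m \ket{j}\otimes \vec{x}(jh), \qquad \vec{W} := \sum_{j=0}^m \ket{j}\otimes \vec{w}(jh).
$$
Since $\norm{\vec{x}(jh)}\ket{x(jh)} = \vec{x}(jh)$, we have $\ket{\Psi_x} = \vec{X}/\norm{\vec{X}}$ with $\normhist_x = \norm{\vec{X}}$, and likewise $\ket{\Psi_w} = \vec{W}/\norm{\vec{W}}$. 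One caveat: if some $\vec{w}(jh)=0$, the corresponding term is simply zero, which is consistent with this identification.

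Next I would compute the norms using orthogonality of the time register $\{\ket{j}\}$:
$$
\norm{\vec{X}}^2 = \sum_{j=0}^m \norm{\vec{x}(jh)}^2, \qquad \norm{\vec{X}-\vec{W}}^2 = \sum_{j=0}^m \norm{\vec{x}(jh)-\vec{w}(jh)}^2.
$$
The hypothesis then states exactly that $\norm{\vec{X}-\vec{W}} \le \varepsilon \norm{\vec{X}}$.

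Finally, applying Fact~\ref{fact:normalized_vector_error} with $\vec{p}=\vec{X}$ and $\vec{q}=\vec{W}$ gives
$$
\norm{\ket{\Psi_x}-\ket{\Psi_w}} \le \frac{2\norm{\vec{X}-\vec{W}}}{\norm{\vec{X}}} \le 2\varepsilon.
$$
The fact requires $\vec{W}\neq 0$. Since $\varepsilon<1$, we get $\norm{\vec{W}} \ge (1-\varepsilon)\norm{\vec{X}} > 0$ whenever $\vec{X}\neq 0$, and $\vec{X}\neq 0$ is implicit in $\ket{\Psi_x}$ being well defined.

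There is no real obstacle here. The only point needing care is checking that the normalizations $\normhist_x,\normhist_w$ coincide with $\norm{\vec{X}}$ and $\norm{\vec{W}}$, so that the history states are genuinely the normalizations of $\vec{X}$ and $\vec{W}$.
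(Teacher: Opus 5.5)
Your proof is correct and matches the paper's argument. Both treat the history states as normalizations of $\sum_j \ket{j}\otimes\vec{x}(jh)$ and $\sum_j \ket{j}\otimes\vec{w}(jh)$, apply Fact~\ref{fact:normalized_vector_error}, and use orthogonality of the time register to reduce to the hypothesis; your check that $\norm{\vec{W}}\geq(1-\varepsilon)\norm{\vec{X}}>0$ also makes explicit a nondegeneracy condition the paper leaves implicit.
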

\begin{proof}
Evaluating the distance between the history states, we obtain
$$
\norm{\ket{\Psi_x} - \ket{\Psi_w}} = \norm{ \frac{\sum_{j=0}^m \ket{j} \otimes \vec{x}(jh)}{\normhist_x} - \frac{\sum_{j=0}^m \ket{j} \otimes \vec{w}(jh)}{\normhist_w} } 
\leq 2 \norm{ \frac{\sum_{j=0}^m \ket{j} \otimes (\vec{x}(jh) - \vec{w}(jh)}{\normhist_x}},    
$$
where we used Fact~\ref{fact:normalized_vector_error} in the second inequality in the first line. Using the orthogonality over the $\ket{j}$ register, we then have
$$
\norm{\ket{\Psi_x} - \ket{\Psi_w}}^2 \leq 4 \frac{\sum_{j=0}^m \norm{\vec{x}(jh) - \vec{w}(jh)}^2}{\sum_{j=0}^m \norm{\vec{x}(jh)}^2} \leq 4 \frac{\varepsilon^2 \sum_{j=0}^m \norm{\vec{x}(j)}^2}{\sum_{j=0}^m \norm{\vec{x}(jh)}^2} = 4 \varepsilon^2 \implies \norm{\ket{\Psi_x} - \ket{\Psi_w}} \leq 2 \varepsilon,
$$
where we used the given promise $\sum_{j=0}^m \norm{\vec{x}(jh) - \vec{w}(jh)}^2 \leq \varepsilon^2 \sum_{j=0}^m \norm{\vec{x}(jh)}^2$ in the second inequality before the implication. This completes the proof.
\end{proof}

We will often use the lemma below to comment on how the input states to our projected $\ode$ solves are obtained.
\begin{lemma}\label{lem:projected_ODE_initial_state}
Suppose we are given oracles $O_x, O_f$ that prepare the states $\ket{x(0)}$ and $\ket{f}$ respectively. Suppose we are given an an $(\alpha_P, a_P, \varepsilon_P)$-block-encoding $U_P$ of $\mathbf{P}$ implementable in time $T_P$ and an $(\alpha_G, a_G, \varepsilon_G)$-block-encoding $U_G$ of $\mathbf{G}$ implementable in time $T_G$. Define
$$
\ket{\psi_0} = \frac{1}{\normhist_0} \left[\alpha_P \norm{\vec{x}_0} \ket{0,0,x_0} + \alpha_G h \norm{\vec{f}} \sum_{j=0}^{m-1} \ket{j,1,f} \right]
$$
Then, applying $W := \id\otimes \ket{0}\bra{0}\otimes U_P + \id\otimes \ket{1}\bra{1}\otimes U_G$ to $\ket{0}^a \ket{\psi_0}$ where $a = \max(a_P,a_G)$ produces
$$
\ket{0}^{a} \frac{1}{\normhist_0} \left[\norm{\widetilde{y}_0} \ket{0,0, \widetilde{y}_0} + h \norm{\widetilde{g}} \sum_{j=0}^{m-1} \ket{j,1,\widetilde{g}} \right] + \ket{\junk}
$$
where $\ket{\junk}$ is a state orthogonal to every state with $\ket{0}^a$ over the ancilla qubits, $\widetilde{y}_0$ satisfies $\norm{\widetilde{y}_0 - \rlcP\vec{x}_0} \leq \varepsilon_P \norm{\vec{x}_0}$ and the approximate forcing $\widetilde{g}$ satisfies $\norm{\widetilde{g} - \vec{g}} = \norm{\widetilde{\rlcG}\vec{f} - \rlcG\vec{f}} \leq \varepsilon_G \norm{\vec{f}}$.
\end{lemma}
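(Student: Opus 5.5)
The plan is a direct computation: write out $W(\ket{0}^a\ket{\psi_0})$ by linearity, splitting $\ket{\psi_0}$ according to the Taylor-index register (the second register). Let $\widetilde{\rlcP} := (\bra{0}^{a_P}\otimes\id)U_P(\ket{0}^{a_P}\otimes\id)$ and $\widetilde{\rlcG} := (\bra{0}^{a_G}\otimes\id)U_G(\ket{0}^{a_G}\otimes\id)$ be the top-left blocks. If $a_P\neq a_G$, pad the smaller block-encoding with idle ancillas, so both act on $a=\max(a_P,a_G)$ ancillas with unchanged top-left block.

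On the branch with Taylor index $0$, $W$ applies $U_P$ to $\ket{0}^a\ket{x_0}$. This gives $\ket{0}^a\,\widetilde{\rlcP}\ket{x_0}$ plus a component orthogonal to $\ket{0}^a$ on the ancillas. On each branch with Taylor index $1$, $W$ applies $U_G$ to $\ket{0}^a\ket{f}$, giving $\ket{0}^a\,\widetilde{\rlcG}\ket{f}$ plus an orthogonal component. Since the time and Taylor registers are untouched, the orthogonal pieces stay orthogonal to every state with $\ket{0}^a$ on the ancillas, and we collect them into $\ket{\junk}$.

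The $\ket{0}^a$ component is then
\[
\frac{1}{\normhist_0}\Big[\alpha_P\norm{\vec{x}_0}\ket{0,0}\widetilde{\rlcP}\ket{x_0} + \alpha_G h\norm{\vec{f}}\sum_{j=0}^{m-1}\ket{j,1}\widetilde{\rlcG}\ket{f}\Big].
\]
Define $\widetilde{y}_0 := \alpha_P\widetilde{\rlcP}\vec{x}_0$ and $\widetilde{g} := \alpha_G\widetilde{\rlcG}\vec{f}$. Then $\alpha_P\norm{\vec{x}_0}\widetilde{\rlcP}\ket{x_0} = \widetilde{y}_0 = \norm{\widetilde{y}_0}\ket{\widetilde{y}_0}$, and similarly $\alpha_G\norm{\vec{f}}\widetilde{\rlcG}\ket{f} = \norm{\widetilde{g}}\ket{\widetilde{g}}$. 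This is exactly the claimed form. The normalization $\normhist_0$ is inherited from $\ket{\psi_0}$, as produced by Claim~\ref{claim:initial_state_prep}.

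The error bounds come straight from Definition~\ref{def:BE}, which gives $\norm{\rlcP-\alpha_P\widetilde{\rlcP}}\le\varepsilon_P$. Hence
\[
\norm{\widetilde{y}_0-\rlcP\vec{x}_0}\le\norm{\alpha_P\widetilde{\rlcP}-\rlcP}\,\norm{\vec{x}_0}\le\varepsilon_P\norm{\vec{x}_0}.
\]
The same argument gives $\norm{\widetilde{g}-\rlcG\vec{f}}\le\varepsilon_G\norm{\vec{f}}$, where the lemma writes $\widetilde{\rlcG}$ for the rescaled block $\alpha_G\widetilde{\rlcG}$.

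The only delicate point is bookkeeping. The weights $\alpha_P,\alpha_G$ in $\ket{\psi_0}$ must be chosen to exactly cancel the subnormalizations of the block-encodings. This is why the relative weights in $\ket{\psi_0}$ match the projected ODE input $\ket{\psi_{y,0}}$ while $\normhist_0$ stays the original one. We also need to check that the controlled application of $U_P$ versus $U_G$ on a shared ancilla register does not mix the branches. It does not, because the control is on the orthogonal Taylor-index states $\ket{0}$ and $\ket{1}$. The cost is one controlled use each of $U_P$ and $U_G$, i.e.\ $O(T_P+T_G)$.
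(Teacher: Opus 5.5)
Your proposal is correct and follows the paper's proof essentially step for step. Both expand $W\ket{0}^a\ket{\psi_0}$ branch by branch on the Taylor register, extract the top-left blocks of $U_P$ and $U_G$, let the weights $\alpha_P,\alpha_G$ cancel the subnormalizations, and read off the error bounds directly from Definition~\ref{def:BE}. The only difference is cosmetic: the paper folds $\alpha_P$ (resp.\ $\alpha_G$) into its definition of $\widetilde{\rlcP}$ (resp.\ $\widetilde{\rlcG}$), whereas you rescale afterwards, and you reconcile this correctly.
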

\begin{proof}
Define the vectors $\vec{g} = \rlcG \vec{f}, \vec{y} = \rlcP \vec{x}$ with their corresponding states as $\ket{g} := \vec{g}/\norm{\vec{g}}$ and $\ket{y} = \vec{y}/\norm{\vec{y}}$ respectively. We use Claim~\ref{claim:initial_state_prep} to produce the following state
\begin{equation}\label{eq:interim_init0}
\ket{\psi_0} = \frac{1}{\normhist_0} \left[\alpha_P \norm{\vec{x}_0} \ket{0,0,x_0} + \alpha_G h \norm{\vec{f}} \sum_{j=0}^{m-1} \ket{j,1,f} \right],
\end{equation}
where $\alpha_P$ and $\alpha_G$ are the subnormalizations of the block-encodings $U_P$ and $U_G$ respectively. We now apply the unitary $W_1$ to $\ket{\psi_0}$, which is defined as
\begin{equation}\label{eq:interim_W1}
W_1 = \id\otimes \ket{0}\bra{0}\otimes U_P + \id\otimes \ket{1}\bra{1}\otimes U_G,  
\end{equation}
where we have assumed that one of the block-encodings of $U_P,U_G$ has been padded as the number of flag qubits required may be different. To determine the form of $W_1 \ket{\psi_0}$, let
$$
\widetilde{\rlcP} := \alpha_P(\bra{0}^{\otimes a_P}\otimes \id)\,U_P\,(\ket{0}^{\otimes a_P}\otimes \id), \quad
\widetilde{\rlcG} := \alpha_G(\bra{0}^{\otimes a_G}\otimes \id)\,U_G\,(\ket{0}^{\otimes a_G}\otimes \id).
$$
We note that the applications of $U_P$ to $\ket{x_0}$ and $U_G$ to $\ket{f}$ are
\begin{align}
U_P\ket{0^{a_1}}\ket{x_0} &= \frac{1}{\alpha_P}\ket{0^{a_1}}\,\widetilde{\rlcP}\ket{x_0} + \ket{\junk_P}, \\
U_G\ket{0^{a_1}}\ket{f} &= \frac{1}{\alpha_G}\ket{0^{a_1}}\,\widetilde{\rlcG}\ket{f} + \ket{\junk_G},
\label{eq:effects_UP_UG}
\end{align}
where $\ket{\junk_P}$ in the first line is some unnormalized state that is orthogonal to every state with $\ket{0}^{a_1}$ on the first register, and $\ket{\junk_G}$ in the first line is some unnormalized state that is orthogonal to every state with $\ket{0}^{a_1}$ on the first register. We then have that $W_1 \ket{\psi_0}$ can be written as
\begin{align}
W_1 \ket{\psi_0} &= \frac{1}{\normhist_0}
\left[ \|\vec{x}_0\| \ket{0,0} \left(\ket{0^{a_1}}\widetilde{\rlcP}\ket{x_0} + \alpha_P \ket{\junk_P}
\right) + h\|f\| \sum_{j=0}^{m-1}\ket{j,1} \left(\ket{0^{a_1}}\widetilde{\rlcG}\ket{f} +
\alpha_G \ket{\junk_G} \right)\right] \nonumber \\
&= \frac{1}{\normhist_0}
\ket{0}^{a_1} \left[ \|\widetilde{\rlcP} \vec{x}_0\| \ket{0,0,\widetilde{\rlcP} x_0}
 + h \|\widetilde{\rlcG} \vec{f}\| \sum_{j=0}^{m-1} \ket{j,1, \widetilde{\rlcG} f} \right] + \ket{\junk},
\end{align}
where we have used $\widetilde{\rlcP} \ket{x_0} = \widetilde{\rlcP} x_0/\norm{x_0} = \norm{\widetilde{\rlcP} x_0} \ket{\widetilde{\rlcP}x_0}/\norm{x_0}$ and similarly $\widetilde{\rlcG} \ket{f} = \norm{\widetilde{\rlcG} f} \ket{\widetilde{\rlcG} f}/\norm{f}$ and collected the junk states in the second line into $\ket{\junk}$ that is orthogonal to every state $\ket{a}^{a_1}$ on the first register. Let us denote $\widetilde{y}_0 = \widetilde{\rlcP} \vec{x}_0$, $\widetilde{g} = \widetilde{\rlcG} \vec{f}$ and define the state
\begin{equation}\label{eq:interim_psi1}
\ket{\psi_1} := \frac{1}{\normhist_1} \left[\norm{\widetilde{y}_0} \ket{0,0, \widetilde{y}_0} + h \norm{\widetilde{g}} \sum_{j=0}^{m-1} \ket{j,1,\widetilde{g}} \right],
\end{equation}
where $\normhist_1 = \sqrt{\norm{\widetilde{y}_0}^2 + mh^2 \norm{\widetilde{g}}^2}$. We can then write
$$
W_1 \ket{\psi_0} = \frac{\normhist_1}{\normhist_0} \ket{0}^{a_1} \ket{\psi_1} + \ket{\junk}.
$$
Note that $\ket{\psi_1}$ effectively encodes the approximate initial condition $\widetilde{y}_0$ and approximate forcing $\widetilde{g}$ that satisfy
\begin{equation}\label{eq:approx_input_encodings}
\norm{\widetilde{y}_0 - \vec{y}_0} = \norm{\widetilde{\rlcP}\vec{x}_0 - \rlcP\vec{x}_0} \leq \varepsilon_P \norm{\vec{x}_0}, \quad \norm{\widetilde{g} - \vec{g}} = \norm{\widetilde{\rlcG}\vec{f} - \rlcG\vec{f}} \leq \varepsilon_G \norm{\vec{f}},
\end{equation}
where we have used the definitions of $\widetilde{\rlcP}, \widetilde{\rlcG}$ and the precisions of their respective block-encodings $U_P,U_G$. This completes the proof.
\end{proof}

\subsection{Algorithm and Analysis}
\label{subsec:QDAE_solver_algo_analysis}
We now give the algorithms corresponding to the quantum $\dae$ solver for $\dae$s with tractability index up to $2$ using the subroutines introduced in Sections~\ref{subsec:useful_BEs}--\ref{subsec:initial_state_forcing} and approach proposed in Section~\ref{subsec:approach}. We will also comment on the guarantees of the quantum $\dae$ solver by commenting on its correctness and complexity in each case.

\subsubsection{Index zero}\label{subsec:ind0_dae_solver}
For the case of index zero, we only need to carry out an $\ode$ solve as $\rlcM$ is guaranteed to be invertible, which is $\dot{\vec{x}} = - \rlcM^{-1} \rlcK \vec{x} + \rlcM^{-1} \vec{f}$. We have the following main result.
\begin{theorem}\label{thm:sim_ind0}
Let $\varepsilon, \sigma \in (0,1)$. Consider the problem setup of Definition~\ref{def:prob_setup_DAEs} and suppose that the $\dae$ has index $0$. Let the minimum singular value of $\rlcM$ satisfy $\sigma_\min(\rlcM) \geq \sigma$. Then, there exists a quantum algorithm that outputs a state $|\widehat{\Psi}\rangle$ with success probability $\geq 2/3$ such that $\left\| |\widehat{\Psi}\rangle - \normhist^{-1} \sum_{k=0}^{m} \norm{\vec{x}(k \Delta t)} \ket{k} \ket{x(k \Delta t)} \right\| \leq \varepsilon$ where $\Delta t = \sigma/ \|\rlcK \|$ and $m = \ceil{T/\Delta t}$. The algorithm uses the following complexity 
\begin{align*}
\text{Queries to }U_{M}&: \widetilde{O}\left(T^2 \, \expnorm(-\rlcM^{-1}\rlcK)^2 \cdot \poly\Big(\kappa_M, \alpha_K, \calC_f, \log(1/\varepsilon) \Big) \right) \,,\\
\text{Queries to }U_{K}&: \widetilde{O}\left(T^2 \, \expnorm(-\rlcM^{-1}\rlcK)^2 \cdot \poly\Big(\kappa_M, \alpha_K, \calC_f, \log(1/\varepsilon) \Big)\right) \,,\\ 
\text{Queries to }O_{f},O_x&: \widetilde{O}\left(T \alpha_K \kappa_M \expnorm(-\rlcM^{-1} \rlcK) \cdot \log(1/\varepsilon)\right) \,,
\end{align*}
where $\kappa_M = \alpha_M/\sigma$ is an upper bound on the condition number of $\rlcM$ and $\mathcal{C}_f := \log\left(1 + \frac{T e^2 \|\vec{f}\|}{\sigma \|\mu\|}\right)$ where $\mu^2 = m^{-1} \sum_{j=1}^m \norm{\vec{x}(j \Delta t)}^2$. An additional gate complexity of $O(1)$ times the number of uses of $U_M$ and $U_K$ is also required. It is sufficient to set the precisions of $U_M$ and $U_K$ as
$$
\varepsilon_M = \poly\left(\frac{\sigma \varepsilon}{\alpha_K T \expnorm(-\rlcM^{-1} \rlcK) \calC_f }\right) \text{ and } \varepsilon_K = \poly\left(\frac{\sigma \varepsilon}{\alpha_K T \expnorm(-\rlcM^{-1} \rlcK) \calC_f }\right).
$$
\end{theorem}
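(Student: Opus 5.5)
The plan is to reduce the index-zero case directly to Theorem~\ref{thm:ODE_solver}, instantiated with $\bm{\calA} := -\rlcM^{-1}\rlcK$ and forcing $\vec{g} := \rlcM^{-1}\vec{f}$. There are three ingredients to assemble. The first is a block-encoding of $\bm{\calA}$. The second is a state-preparation routine for the input vector $\Psi_0$ of Remark~\ref{remark:BE_ODE_solver}, with forcing $\vec{g}$ in place of $\vec{f}$. The third is an error budget that propagates the block-encoding imprecisions through the solver.

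\textbf{Block-encoding of the dynamics.} Claim~\ref{claim:dynamics_BE_ind0} gives a $(2\alpha_K/\sigma, a_M+a_K+1, \varepsilon_{\calA})$-block-encoding of $\rlcM^{-1}\rlcK$. It costs $O((\alpha_M/\sigma)\log(\alpha_K/(\sigma\varepsilon_\calA)) + T_K)$, provided $\varepsilon_M$ and $\varepsilon_K$ are polynomially small in $\sigma\varepsilon_\calA/\alpha_K$. The minus sign can be absorbed with a phase. Since $\norm{\bm{\calA}} \le \norm{\rlcK}/\sigma$, the step $\Delta t = \sigma/\norm{\rlcK}$ satisfies $\norm{\bm{\calA}}\Delta t\le 1$, as Theorem~\ref{thm:ODE_solver} requires. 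It follows that $\kappa = T\norm{\bm\calA}\expnorm(\bm\calA) \le T(\alpha_K/\sigma)\expnorm(-\rlcM^{-1}\rlcK)$, which is at most $T\alpha_K\kappa_M\expnorm(-\rlcM^{-1}\rlcK)$ because $\alpha_M\ge 1$. Each use of $U_\calA$ costs $\widetilde O(\kappa_M)$ uses of $U_M$ and one use of $U_K$. Substituting into the $\widetilde O(\kappa^2\poly(\calC_f,\log(1/\varepsilon),\log\kappa))$ query count of Theorem~\ref{thm:ODE_solver} gives the claimed $T^2\expnorm^2\poly(\kappa_M,\alpha_K,\calC_f,\log(1/\varepsilon))$ bounds for $U_M$ and $U_K$.

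\textbf{Input state.} The forcing state $\ket{\rlcM^{-1}f}$ is not given directly by an oracle. I prepare the input as in Lemma~\ref{lem:projected_ODE_initial_state}, taking $\mathbf{P}=\id$ (trivially block-encoded) and $\mathbf{G}=\rlcM^{-1}$ via Claim~\ref{claim:inv_M_BE_ind0} with subnormalization $2/\sigma$. The resulting state has flag $\ket{0}^a$ with amplitude at least $\Omega(\sigma)$ relative to the reweighted $\ket{\psi_0}$ of Claim~\ref{claim:initial_state_prep}. I do not post-select here. Instead I fold this preparation unitary into the state-preparation oracle of the linear-system solve and use amplitude amplification (Theorem~\ref{thm:amplitude_amplification}) on the combined flags at the end. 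This accounts for the $\kappa_M\alpha_K$ factor in the $O_x,O_f$ count, and it is why $\sigma$ appears inside $\calC_f$: the relevant forcing norm is $\norm{\rlcM^{-1}\vec f}\le\norm{\vec f}/\sigma$. The approximate forcing $\widetilde g$ satisfies $\norm{\widetilde g-\vec g}\le\varepsilon_G\norm{\vec f}$, and the initial condition is exact.

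\textbf{Error analysis and the main obstacle.} Claim~\ref{claim:diff_taylor_truth_approx_input}$(ii)$ bounds the normalized RMS error by $\delta/2 + 4T\expnorm(\bm\calA)\varepsilon_G\norm{\vec f}/\mu$. Claim~\ref{claim:diff_hist_states} converts this into a history-state error at most twice as large. Choosing $\delta=\varepsilon/4$ and $\varepsilon_G = \varepsilon\mu/(32T\expnorm\norm{\vec f})$ makes this history-state error at most $\varepsilon/2$. The remaining budget of $\varepsilon/2$ goes to the imprecision $\varepsilon_\calA$ of $U_\calA$, using the precision requirement in Theorem~\ref{thm:ODE_solver}. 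I then back out the required $\varepsilon_M,\varepsilon_K$ from Claims~\ref{claim:inv_M_BE_ind0} and~\ref{claim:dynamics_BE_ind0}.

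I expect the main obstacle to be this last step. Two things must be checked there. First, the factor $\norm{\vec f}/\mu$ appearing in $\varepsilon_G$ must be logarithmically absorbed into $\calC_f$, or else polynomially into the precision requirement. Second, the success probability must be tracked. The subnormalized forcing amplitude, the linear-solver block-encoding of norm $4e\kappa$ from Remark~\ref{remark:BE_ODE_solver}, and the history-state normalization together determine the number of amplitude-amplification rounds needed for success probability $2/3$. That round count is what multiplies the $O_x,O_f$ usage to $\widetilde O(T\alpha_K\kappa_M\expnorm\log(1/\varepsilon))$. The final count then follows by multiplying these rounds by the per-round cost.
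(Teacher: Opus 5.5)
Your proposal is correct and follows essentially the same route as the paper: block-encode $-\rlcM^{-1}\rlcK$ via Claim~\ref{claim:dynamics_BE_ind0}, load the forcing through Lemma~\ref{lem:projected_ODE_initial_state} with $\rlcP=\id$, $\rlcG=\rlcM^{-1}$ and no intermediate post-selection, control the Taylor and forcing errors with Claims~\ref{claim:diff_taylor_truth_approx_input}$(ii)$ and~\ref{claim:diff_hist_states}, apply the $U_{\calL^{-1}}$ of Remark~\ref{remark:BE_ODE_solver}, and amplitude-amplify the whole circuit. One small correction: the flag amplitude after applying $U_{M^{-1}}$ is $\Omega(\sigma/\alpha_M)=\Omega(1/\kappa_M)$, not $\Omega(\sigma)$, because $\norm{\rlcM^{-1}\vec{f}}\geq\norm{\vec{f}}/\norm{\rlcM}$ while the subnormalization is $2/\sigma$; this ratio is exactly where the $\kappa_M$ factor in the $O_x,O_f$ count comes from.
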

\begin{proof}
Define the vector $\vec{g} = \rlcM^{-1} \vec{f}$ with their corresponding state as $\ket{g} := \vec{g}/\norm{\vec{g}}$. Define $h = \sigma/\norm{\rlcK}$, $m = \ceil{T/h}$, and $t_j = jh ,\forall j\in[m]_0$. Let $\varepsilon_1 \in (0,1)$ be an error parameter to be determined later. Let $U_{M^{-1}}$ be the $(\alpha_1,a_1,\varepsilon_1)$ block-encoding of $\rlcM^{-1}$ from Claim~\ref{claim:inv_M_BE_ind0} with $\alpha_1 = 2/\sigma$ and $a_1 = a_M + 1$. This requires $\varepsilon_M = o(\sigma^2 \varepsilon_1/\log(1/(\sigma \varepsilon_1)))$. Let
\begin{equation}\label{eq:ind0_tilde_invM}
    \widetilde{M}^{-1} := \alpha_1(\bra{0}^{\otimes a_1}\otimes \id)\,U_{M^{-1}}\,(\ket{0}^{\otimes a_1}\otimes \id).
\end{equation}

Let $U_0$ be the unitary from Claim~\ref{claim:initial_state_prep} that produces the following state from $\ket{0,0,0}$:
\begin{equation}\label{eq:ind0_interim_init0}
\ket{\psi_0} = \frac{1}{\normhist_0} \left[\norm{\vec{x}_0} \ket{0,0,x_0} + \alpha_1 h \norm{\vec{f}} \sum_{j=0}^{m-1} \ket{j,1,f} \right].
\end{equation}
We now apply the unitary $W_1$ to $\ket{0}^{a_1} \ket{\psi_0}$, which is defined as
\begin{equation}\label{eq:ind0_interim_W1}
W_1 = \id\otimes \ket{0}\bra{0}\otimes \id + \id\otimes \ket{1}\bra{1}\otimes U_{M^{-1}}.  
\end{equation}
From the Lemma~\ref{lem:projected_ODE_initial_state} (for the instantiations of $\rlcP := \id$ and $\rlcG := \rlcM^{-1}$ there), the form of $W_1 \ket{0}^{a_1} \ket{\psi_0}$ is given by
\begin{align}\label{eq:ind0_interim_W1psi0}
W_1 \ket{0}^{a_1} \ket{\psi_0} &= \frac{\normhist_1}{\normhist_0} \ket{0}^{a_1} \ket{\psi_1} + \ket{\junk},
\end{align}
where $\ket{\junk}$ is an (unnormalized) state orthogonal to every state with $\ket{0}^{a_1}$ on the first register and $\ket{\psi_1}$ is defined as
\begin{equation}\label{eq:ind0_interim_psi1}
\ket{\psi_1} := \frac{1}{\normhist_1} \left[\norm{\vec{x}_0} \ket{0,0, x_0} + h \norm{\widetilde{g}} \sum_{j=0}^{m-1} \ket{j,1,\widetilde{g}} \right],
\end{equation}
where $\widetilde{g} = \widetilde{\rlcM}^{-1} \vec{f}$ and $\normhist_1 = \sqrt{\norm{\widetilde{y}_0}^2 + mh^2 \norm{\widetilde{g}}^2}$. Define $\Psi_1 := \calZ_1 \ket{\psi_1}$. Note that $\ket{\psi_1}$ effectively encodes the exact initial condition $\vec{x}_0$ and approximate forcing $\widetilde{g}$ that satisfies
\begin{equation}\label{eq:ind0_approx_input_encodings}
\norm{\widetilde{g} - \vec{g}} = \norm{\widetilde{M}^{-1} \vec{f} - \rlcM^{-1} \vec{f}} \leq \varepsilon_1 \norm{\vec{f}},
\end{equation}
where we have used the definitions of $\widetilde{M}^{-1}$ (Eq.~\eqref{eq:ind0_tilde_invM} and the precisions of the corresponding block-encodings $U_{M^{-1}}$.

Let $\delta \in (0,1)$ be an error parameter to be fixed later and define
\begin{equation}\label{eq:ind0_proj_ODE_taylor_steps}
k = \left\lceil \frac{2\log \Omega}{\log \log \Omega} \right\rceil, \quad \text{ where } \Omega = \frac{4 m e^3}{\delta} \left(1 + \frac{T e^2 \norm{\vec{f}}}{\sigma \mu} \right),
\end{equation}
where $\mu^2 := m^{-1} \sum_{j=1}^m \norm{\vec{x}(t_j)}^2$. This choice ensures that $(k+1)! \geq \Omega$.

Let $\vec{x}(t), \forall t \in [0,T]$ be the true solution of the given $\ode$ and let $\widetilde{x}(t), \forall t \in [0,T]$ be the solution obtained from approximating the solution by a truncated series over $k$ terms with the mis-specified forcing $\widetilde{g}$ (instead of $\vec{g}$). Using Claim~\ref{claim:diff_taylor_truth_approx_input}$(ii)$ with instantiations of $\calA := -\rlcM^{-1} \rlcK$, $\eta_0 := 0$ and $\eta_f := \varepsilon_1 \norm{\vec{f}}$, we obtain
\begin{equation}\label{eq:ind0_int1_diff_truey_taylory}
\left(\frac{\sum_{j=0}^m \norm{\vec{x}(t_j) - \widetilde{y}(t_j)}^2}{\sum_{j=0}^m \norm{\vec{x}(t_j)}^2} \right)^{1/2} \leq \frac{\delta}{2} + \frac{4 T \expnorm(\mathbf{-\rlcM^{-1} \rlcK}) \varepsilon_1 \norm{\vec{f}}}{\mu},
\end{equation}
Setting the block-encoding precisions $\varepsilon_1$ as
\begin{equation}\label{eq:ind0_eps1_choice}
\varepsilon_1 = \frac{\delta \cdot \mu}{8 T \expnorm(\mathbf{-\rlcM^{-1} \rlcK}) \norm{\vec{f}}}
\end{equation}
ensures that Eq.~\eqref{eq:ind0_int1_diff_truey_taylory} satisfies
\begin{equation}\label{eq:ind0_int2_diff_truey_taylory}
\left(\frac{\sum_{j=0}^m \norm{\vec{x}(t_j) - \widetilde{x}(t_j)}^2}{\sum_{j=0}^m \norm{\vec{x}(t_j)}^2} \right)^{1/2} \leq \delta,
\end{equation}
Define the history state encoding the true solution $\vec{x}(t)$ denoted by $\ket{\Psi_x}$ and that encoding the Taylor series solution with an input of approximate forcing $\widetilde{x}(t)$ denoted by $\ket{\widetilde{\Psi}_x}$ as
\begin{equation}\label{eq:ind0_inter1_hist_states_truth_taylor}
\ket{\Psi_x} := \frac{1}{\normhist_x} \sum_{j=0}^m \norm{\vec{x}(jh)} \ket{j, x(jh)}, \quad \ket{\widetilde{\Psi}_x} := \frac{1}{\widetilde{\normhist}_x} \sum_{j=0}^m \norm{\widetilde{x}(jh)} \ket{j, \widetilde{x}(jh)}
\end{equation}
Define the vectors $\Psi_x := \normhist_x \ket{\Psi_x}$ and $\widetilde{\Psi}_x := \widetilde{\normhist}_x \ket{\widetilde{\Psi}_x}$, which we will need later. Using Claim~\ref{claim:diff_hist_states} along with Eq.~\eqref{eq:ind0_int2_diff_truey_taylory} gives us that
\begin{equation}\label{eq:inter1_hist_states_close}
\norm{\ket{\Psi_x} - \ket{\widetilde{\Psi}_x}} \leq 2 \delta. 
\end{equation}
The goal is to then produce a state close to $\ket{\widetilde{\Psi}_x}$ so we are then guaranteed to be close to the true history state.

Let $\varepsilon_2, \varepsilon_3 \in (0,1)$ be error parameters to be fixed later. Let $\mathbf{B} = -\rlcM^{-1} \rlcK$. From Claim~\ref{claim:dynamics_BE_ind0}, we obtain a $(2\alpha_K/\sigma,a_M + a_K + 1,\varepsilon_2)$-block-encoding of $\mathbf{B}$, which we denote by $U_B$, for $\varepsilon_M = o(\sigma^2 \varepsilon_2/(2 \alpha_K \log(2 \alpha_K/(\sigma \varepsilon_2)))$ and $\varepsilon_K = \sigma \varepsilon_2/4$. Let $\calL^{-1}$ be the linear operator corresponding to the linear system solve of Remark~\ref{remark:BE_ODE_solver} that would have taken $\Psi_1$ to $\widetilde{\Psi}_x$. Using Remark~\ref{remark:BE_ODE_solver} on $U_B$, we can construct an $(\alpha_3, a_3, \varepsilon_3)$-block-encoding $U_{\calL^{-1}}$ of $\calL^{-1}$ with
\begin{equation}\label{eq:ind0_inv_calL_BE}
\alpha_3 = 4 e \kappa, \quad a_3 = O(a_M + a_K + \log(T \norm{\mathbf{B}})).
\end{equation}
Noting that $\norm{\mathbf{B}} = \norm{\rlcM^{-1} \rlcK} \leq \norm{\rlcM^{-1}} \cdot \norm{\rlcK} \leq \alpha_K/\sigma$, define $\kappa := T \alpha_K \expnorm(\mathbf{B})/\sigma$. Then, this requires setting the precision $\varepsilon_2$ of $U_B$ as
\begin{equation}\label{eq:ind0_eps2_choice}
\varepsilon_2 = o(\varepsilon_3/(\kappa^2 \poly(\calC_f) \log(\kappa/\varepsilon_3)))
\end{equation}
Let $\calL_1 = (\bra{0}^{a_3} \otimes \id) U_{\calL^{-1}} (\ket{0}^{a_3} \otimes \id)$. We then add $a_3$ ancilla qubits initialized as $\ket{0}^{a_3}$ and apply $(\id_{a_1} \otimes U_{\calL^{-1}})$ on to $(\id_{a_3} \otimes W_1) \ket{0}^{a_1 + a_3} \ket{\psi_0}$\footnote{Note that the identity operators act on each others ancilla qubits, which is hard to express properly. This convention is fairly standard~\cite{gilyen2019qsvt}.}, which gives us the state
\begin{equation}\label{eq:ind0_final_state}
(\id_{a_1} \otimes U_{\calL^{-1}})(\id_{a_3} \otimes W_1) \ket{0}^{a_1 + a_3} \ket{\psi_0} = \frac{\normhist_1}{\normhist_0} \ket{0}^{a_1 + a_3} \calL_1 \ket{\psi_1} + \ket{\junk'},
\end{equation}
where $\ket{\junk'}$ is some quantum state that is orthogonal to every state with $\ket{0}^{a_1 + a_3}$ on the first register. Starting from the definition of $U_{\calL^{-1}}$, we note that
$$
\norm{\calL^{-1} - \alpha_3 \calL_1} \leq \varepsilon_3 \implies \norm{\calL^{-1} \ket{\Psi_1} - \alpha_3 \calL_1 \ket{\Psi_1}} \leq \varepsilon_3.
$$
Measuring the state from Eq.~\eqref{eq:ind0_final_state} and post-selecting for $\ket{0}^{a_1 + a_3}$ on the first register gives us the state $\ket{\widehat{\Psi}_x} = \calL_1 \Psi_1/\norm{\calL_1 \Psi_1}$ which satisfies
\begin{equation}\label{eq:ind0_diff_hatpsi_tildepsi}
\norm{\ket{\widehat{\Psi}_x} - \ket{\widetilde{\Psi}_x}} = \norm{\frac{\calL^{-1} \ket{\psi_1}}{\norm{\calL^{-1}\ket{\psi_1}}} - \frac{\alpha_3 \calL_1 \ket{\psi_1}}{\norm{\alpha_3 \calL_1 \ket{\psi_1}}}} \leq 2 \frac{\norm{\calL^{-1} \ket{\psi_1} - \alpha_3 \calL_1 \ket{\psi_1}}}{\norm{\calL^{-1} \ket{\psi_1}}} \leq \varepsilon_3 (2e \sqrt{k})
\end{equation}
where we used the fact that $\widetilde{\Psi}_x = \calL^{-1} \Psi_1$ in the first equality and used that $\spec(\calL^{-1}) \in [1/(2 \sqrt{k} e), 2 e \kappa]$ (see proof of Theorem~\ref{thm:ODE_solver} in Appendix~\ref{appsec:qalgo_ode_la}). Combining Eq.~\eqref{eq:ind0_diff_hatpsi_tildepsi} with Eq.~\eqref{eq:inter1_hist_states_close} gives us that
$$
\norm{\ket{\widehat{\Psi}_x} - \ket{\Psi_x}} \leq \norm{\ket{\widehat{\Psi}_x} - \ket{\widetilde{\Psi}_x}} + \norm{\ket{\widehat{\Psi}_x} - \ket{\widetilde{\Psi}_x}} \leq \frac{\varepsilon_3 (2e \sqrt{k})}{\alpha_3} + 2 \delta.
$$
Setting $\delta = \varepsilon/4$ and $\varepsilon_3 = \varepsilon/(4 e \sqrt{k})$ gives us the desired error of $\varepsilon$. The corresponding probability of success is
$$
p_{\mathrm{succ}} = \frac{\normhist_1^2 \norm{\calL_1 \ket{\psi_1}}^2}{\normhist_0^2} \geq \frac{\normhist_1^2}{\normhist_0^2} \left(\norm{\frac{\calL^{-1}}{\alpha_3} \ket{\psi_1}} - \norm{\calL_1 \ket{\psi_1} - \frac{\calL^{-1}}{\alpha_3}\ket{\psi_1}} \right)^2 \geq \frac{\normhist_1^2}{\normhist_0^2}\left(\frac{1}{8\sqrt{k}e^2 \kappa} - \frac{\varepsilon_3}{4e\kappa} \right)^2,
$$
where we used Eq.~\eqref{eq:ind0_diff_hatpsi_tildepsi} in the third inequality along with the fact that $\alpha_3 = 4e\kappa$ and $\norm{\calL^{-1}} \geq 1/(2\sqrt{k}e)$. Setting $\varepsilon_3 = \varepsilon/(4e\sqrt{k}) \leq 1/(4e\sqrt{k})$ ensures that
$$
p_{\mathrm{succ}} \geq \Omega\left(\frac{\normhist_1^2}{\normhist_0^2} \frac{1}{k \kappa^2} \right).
$$
Using Eq.~\eqref{eq:ind0_approx_input_encodings} and the fact that $\alpha_1 \geq \norm{\rlcM^{-1}}$, it can be shown that $\normhist_1/\normhist_0 \geq \Omega\Big(\|\rlcM^{-1} \vec{f}\|/(\alpha_1 \|\vec{f}\|)\Big) = \Omega(\sigma/\alpha_M)$ as long as $\varepsilon_1 \leq O(\sigma)$ (in addition to the condition of Eq.~\eqref{eq:ind0_eps1_choice}). This then gives us a probability of success of
\begin{equation}\label{eq:ind0_final_psucc}
p_{\mathrm{succ}} \geq \Omega\left(\sigma^2/(k \alpha_M^2 \kappa^2)\right).    
\end{equation}
This requires setting $\varepsilon_2$ (Eq.~\eqref{eq:ind0_eps2_choice}) and $\varepsilon_1$ (Eq.~\eqref{eq:ind0_eps1_choice2})
$$
\varepsilon_2 = o(\varepsilon/(\kappa^2 \poly(\calC_f) \log(\kappa/\varepsilon))), \quad \varepsilon_1 \leq  O\left(\frac{\sigma \delta \varepsilon}{T \expnorm(\mathbf{-\rlcM^{-1} \rlcK}) \norm{\vec{f}}} \right)
$$
The above conditions on $\varepsilon_1$ and $\varepsilon_2$ impose a restriction on the choice of $\varepsilon_M$ and $\varepsilon_K$ since $\varepsilon_M = o(\sigma^2 \varepsilon_2/(2 \alpha_K \log(2 \alpha_K/(\sigma \varepsilon_2)))$ and $\varepsilon_K = \sigma \varepsilon_2/4$. A sufficient condition is
$$
\varepsilon_M = \poly\left(\frac{\sigma \varepsilon}{\alpha_K T \expnorm(-\rlcM^{-1} \rlcK) \calC_f}\right), \quad \varepsilon_K = \poly\left(\frac{\varepsilon \sigma}{\alpha_K T \expnorm(-\rlcM^{-1} \rlcK) \calC_f} \right)
$$
The contribution to the time complexity is the implementation of $U_{M^{-1}}$ for $W_1$ and the dynamics $U_{\calL^{-1}}$. The gate complexity of $U_{\calL^{-1}}$ follow from Remark~\ref{remark:BE_ODE_solver}
$$
T_{\calL^{-1}} = O\Big(\kappa \cdot \poly\left(\calC_f \log(\kappa/\varepsilon) \right) \Big) \text{ uses of } U_B.
$$
Using Claim~\ref{claim:ode_BE_ind1}, we have that each query to $U_B$ requires 
$$
O\left((\alpha_M/\sigma) \log(\alpha_K/(\sigma \varepsilon_2)) \right) = \widetilde{O}\left(\kappa_M \log\left(\frac{T \alpha_K \expnorm(-\rlcM^{-1}\rlcK)}{\sigma \varepsilon}\right) \right)
$$ 
queries to $U_M$ and $O(1)$ query to $U_K$. Using Claim~\ref{claim:inv_M_BE_ind0}, we have that each query to $U_{M^{-1}}$ requires 
$$
O\left((\alpha_M/\sigma) \log(1/(\sigma \varepsilon_1))\right) = O\left(\kappa_M \log \frac{T \expnorm(-\rlcM^{-1}\rlcK)}{\sigma \varepsilon} )\right)
$$ 
queries to $U_M$ and using the specification of $\varepsilon_1$ here.

To boost the success probability from that of Eq.~\eqref{eq:ind0_final_psucc}, we use amplitude amplification (Theorem~\ref{thm:amplitude_amplification}) that involves applying $O(1/\sqrt{p_{\mathrm{succ}}})$ many iterations of the circuit of $U_{\calL^{-1}} W_1 U_0$. The overall gate complexity is then as stated. The number of queries to $O_x, O_f$ is then $O(1/\sqrt{p_{\mathrm{succ}}})$ which is given by
$$
\widetilde{O}\left(T \alpha_K \kappa_M \expnorm(-\rlcM^{-1} \rlcK) \cdot \log(1/\varepsilon)\right).
$$
This completes the proof of the theorem.
\end{proof}

\paragraph{Final time preparation.}
We can also use the algorithm of Theorem~\ref{thm:sim_ind0} to obtain a quantum state that is close to the final time state $\ket{x(T)}$ with slighly modified parameters. This is formally described below.
\begin{corollary}\label{corr:sim_T_ind0}
Let $\varepsilon, \sigma \in (0,1)$. Consider the problem setup of Definition~\ref{def:prob_setup_DAEs} and suppose that the $\dae$ has index $0$. Let the minimum singular value of $\rlcM$ satisfy $\sigma_\min(\rlcM) \geq \sigma$. Then, there exists a quantum algorithm that outputs a state $|\widehat{x}(T)\rangle$ with success probability $\geq 2/3$ such that $\left\| |\widehat{x}(T)\rangle - |x(T)\rangle \right\| \leq \varepsilon$. Define
$$
\kappa_M := \alpha_M/\sigma, \quad  g := \frac{\max_{t\in[0,T]} \norm{\vec{x}(t)}}{\norm{\vec{x}(T)}}, \quad \mathcal{C}_f := \log\left(1 + \frac{T e^2 \|\vec{f}\|}{\sigma \|\vec{x}(T)\|}\right).
$$
The algorithm uses the following complexity 
\begin{align*}
\text{Queries to }U_{M}&: \widetilde{O}\left(g T \, \expnorm(-\rlcM^{-1}\rlcK) \cdot \poly\Big(\kappa_M, \alpha_K, \calC_f, \log(1/\varepsilon) \Big) \right) \,,\\
\text{Queries to }U_{K}&: \widetilde{O}\left(g T \, \expnorm(-\rlcM^{-1}\rlcK) \cdot \poly\Big(\kappa_M, \alpha_K, \calC_f, \log(1/\varepsilon) \Big)\right) \,,\\ 
\text{Queries to }O_{f},O_x&: \widetilde{O}\left(g T \alpha_K \kappa_M \expnorm(-\rlcM^{-1} \rlcK) \cdot \log(1/\varepsilon)\right) \,.
\end{align*}
An additional gate complexity of $O(1)$ times the number of uses of $U_M$ and $U_K$ is also required. It is sufficient to set the precisions of $U_M$ and $U_K$ as
$$
\varepsilon_M = \poly\left(\frac{\sigma \varepsilon}{\alpha_K T \expnorm(-\rlcM^{-1} \rlcK) \calC_f }\right) \text{ and } \varepsilon_K = \poly\left(\frac{\sigma \varepsilon}{\alpha_K T \expnorm(-\rlcM^{-1} \rlcK) \calC_f }\right).
$$
\end{corollary}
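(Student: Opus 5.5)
The plan is to reuse the proof of Theorem~\ref{thm:sim_ind0} unchanged up to the linear system solve, and then replace the history-state guarantee with the final-time guarantee of Theorem~\ref{thm:ODE_solver_final_state}. As before, we set $h=\sigma/\norm{\rlcK}$ and $m=\ceil{T/h}$. We build the $(2/\sigma,a_M+1,\varepsilon_1)$-block-encoding $U_{M^{-1}}$ of $\rlcM^{-1}$ from Claim~\ref{claim:inv_M_BE_ind0} and the $(2\alpha_K/\sigma,a_M+a_K+1,\varepsilon_2)$-block-encoding $U_B$ of $\mathbf{B}=-\rlcM^{-1}\rlcK$ from Claim~\ref{claim:dynamics_BE_ind0}. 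We prepare $\ket{\psi_0}$ via Claim~\ref{claim:initial_state_prep} and apply $W_1$. By Lemma~\ref{lem:projected_ODE_initial_state} (with $\rlcP=\id$ and $\rlcG=\rlcM^{-1}$), the flagged branch is then $\ket{\psi_1}$, which carries the exact initial condition and a forcing $\widetilde g$ with $\norm{\widetilde g-\rlcM^{-1}\vec f}\le\varepsilon_1\norm{\vec f}$.

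For correctness at the final time, we choose the Taylor order $k$ so that $(k+1)!\ge \frac{me^3}{\delta}\bigl(1+\frac{Te^2\norm{\vec f}}{\sigma\norm{\vec x(T)}}\bigr)$. We then invoke Claim~\ref{claim:diff_taylor_truth_approx_input}$(i)$ with $\eta_0=0$ and $\eta_f=\varepsilon_1\norm{\vec f}$, giving
\[
\norm{\vec x(T)-\widetilde x(T)}\le \delta\norm{\vec x(T)}+4T\expnorm(\mathbf{B})\varepsilon_1\norm{\vec f}.
\]
Taking $\varepsilon_1=\delta\norm{\vec x(T)}/(4T\expnorm(\mathbf{B})\norm{\vec f})$ (also capped by $O(\sigma)$) makes the right-hand side at most $2\delta\norm{\vec x(T)}$. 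Fact~\ref{fact:normalized_vector_error} then gives $\norm{\ket{\widetilde x(T)}-\ket{x(T)}}\le 4\delta$. The factor $\sigma$ inside $\calC_f$ comes from $\norm{\rlcM^{-1}\vec f}\le\norm{\vec f}/\sigma$.

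Next, we apply the block-encoding $U_{\calL^{-1}}$ of Remark~\ref{remark:BE_ODE_solver}, built from $U_B$ with $\kappa=T\alpha_K\expnorm(\mathbf{B})/\sigma$, followed by the projector $\ket{0}^{a}\bra{0}^{a}\otimes\ket{m}\bra{m}$ (Claim~\ref{claim:proj_0state_BE}) onto the final time slot. The analysis of Theorem~\ref{thm:ODE_solver_final_state} applies verbatim: the block-encoding error $\varepsilon_3$ is converted to state error via Fact~\ref{fact:normalized_vector_error}, using that the lower singular value of $\calL^{-1}$ is $\Omega(1/\sqrt k)$. Choosing $\delta,\varepsilon_3=\Theta(\varepsilon)$ gives total error at most $\varepsilon$.

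The main obstacle is the success probability, which replaces the $1/\kappa$-type overlap of the history state with the final-slot weight. It has three factors.
\begin{itemize}
\item[(a)] The input flag weight is $\calZ_1/\calZ_0=\Omega(\sigma/\alpha_M)=\Omega(1/\kappa_M)$, exactly as in Theorem~\ref{thm:sim_ind0}.
\item[(b)] Post-selection on $\calL^{-1}$ succeeds with probability $\Omega(1/(k\kappa^2))$ overall. However, it is better to bound the squared norm of the final slot relative to the whole history, which is $\Omega(1/(mg^2))$ by definition of $g$, following the padding argument of~\cite{Krovi2023improvedquantum} (repeat the final state over $\Theta(m)$ idle steps so that the last block carries a constant fraction). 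This is the step I would adapt directly from the proof of Theorem~\ref{thm:ODE_solver_final_state}.
\item[(c)] The normalization $\alpha_3=4e\kappa$ contributes its own factor.
\end{itemize}
Combining these, I expect $1/\sqrt{p_{\rm succ}}=\widetilde O(g\,\kappa\,\kappa_M)$, i.e.\ $\widetilde O(gT\alpha_K\kappa_M\expnorm(\mathbf{B}))$ rounds of amplitude amplification (Theorem~\ref{thm:amplitude_amplification}). Each round uses one call to $O_x,O_f$. It also uses $\poly(\kappa_M,\alpha_K,\calC_f,\log(1/\varepsilon))$ calls to $U_M,U_K$, obtained by multiplying the costs in Remark~\ref{remark:BE_ODE_solver}, Claim~\ref{claim:dynamics_BE_ind0} and Claim~\ref{claim:inv_M_BE_ind0}. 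This yields the stated query bounds.

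Finally, the conditions $\varepsilon_1=\poly(\cdot)$ and $\varepsilon_2=o(\varepsilon_3/(\kappa\,\poly(\calC_f,\log\kappa)))$ translate through Claims~\ref{claim:inv_M_BE_ind0} and~\ref{claim:dynamics_BE_ind0} into the stated polynomial precisions for $\varepsilon_M$ and $\varepsilon_K$.
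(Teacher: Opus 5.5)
Your correctness argument is the same as the paper's: you take the Taylor order from Claim~\ref{claim:diff_taylor_truth_approx_input}$(i)$ with $\eta_0=0$ and $\eta_f=\varepsilon_1\norm{\vec f}$, tie $\varepsilon_1$ to $\norm{\vec x(T)}$, and finish with Fact~\ref{fact:normalized_vector_error}. The complexity accounting, however, has a genuine gap.

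You amplify the whole circuit $U_{\calL^{-1}}W_1U_0$ with $\widetilde O(g\kappa\kappa_M)$ rounds, where the factor $\kappa$ comes from the subnormalization $\alpha_3=4e\kappa$. But each round applies $U_{\calL^{-1}}$, which by Remark~\ref{remark:BE_ODE_solver} costs $\widetilde O(\kappa\cdot\poly(\calC_f,\log(\kappa/\varepsilon)))$ uses of $U_B$, with $\kappa=T\alpha_K\expnorm(-\rlcM^{-1}\rlcK)/\sigma$. So the per-round cost is not $\poly(\kappa_M,\alpha_K,\calC_f,\log(1/\varepsilon))$, as you assert. Multiplied out, your route gives $\widetilde O(g\kappa^2\kappa_M\cdots)$, i.e.\ $\widetilde O(gT^2\expnorm(-\rlcM^{-1}\rlcK)^2\cdots)$ queries to $U_M,U_K$. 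That is essentially the history-state cost of Theorem~\ref{thm:sim_ind0} times $g$, not the linear dependence on $T\,\expnorm(-\rlcM^{-1}\rlcK)$ that the corollary claims.

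The point of the final-time result is that $\kappa$ is paid only once. The paper treats the final-state solver of Theorem~\ref{thm:ODE_solver_final_state} as a black box; this is the padded construction of \cite{Krovi2023improvedquantum}, whose $O(g\kappa)$ cost already covers producing the normalized final state. It then amplifies only over the input-flag weight $\calZ_1/\calZ_0=\Omega(1/\kappa_M)$ and $g$, i.e.\ $O(g\kappa_M)$ rounds. One consistent way to do this with your ingredients is as follows. First amplitude-amplify $W_1U_0$ alone, using $O(\kappa_M)$ rounds that each cost only $\widetilde O(\kappa_M)$ uses of $U_M$, to obtain a preparation of $\ket{\psi_1}$. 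Then feed that preparation to the solver as its input oracle. This gives $\widetilde O(g\kappa\kappa_M)$ calls to $O_x,O_f$ and $\widetilde O(g\kappa\cdot\poly(\kappa_M,\alpha_K,\calC_f,\log(1/\varepsilon)))$ calls to $U_M,U_K$, which is linear in $T\,\expnorm(-\rlcM^{-1}\rlcK)$.

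Two smaller points.
\begin{enumerate}
\item You apply the unpadded $\calL^{-1}$ of Remark~\ref{remark:BE_ODE_solver} and then project onto the single slot $\ket{m}$. That slot carries only an $\Omega(1/(mg^2))$ fraction of the squared norm, which would cost an extra factor of $\sqrt m$. The padding argument you cite needs the padded linear system and a projector onto the whole block of $\Theta(m)$ idle slots. That block has weight $\Omega(1/g^2)$, which is a constant fraction only if $g=O(1)$.
\item The same weight enters your error conversion. After projecting, the denominator in Fact~\ref{fact:normalized_vector_error} is the norm of the final block, not $\norm{\calL^{-1}\ket{\psi_1}}$. So you need $\varepsilon_3=O(\varepsilon/(g\sqrt k))$ rather than $O(\varepsilon/\sqrt k)$; this affects the cost only logarithmically.
\end{enumerate}
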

\begin{proof}
The proof proceeds along similar lines that of Theorem~\ref{thm:sim_ind0}. We will use the same notation as there and only highlight the differences. As in the case there, we start with the state $\ket{0}^{a_1 + a_3} \ket{\psi_0}$ (Eq.~\eqref{eq:ind0_interim_init0}, and then apply $(\id_{a_3} \otimes W_1)$ (Eq.~\eqref{eq:ind0_interim_W1}) to obtain a state of the form that takes the form of Eq.~\eqref{eq:ind0_interim_W1psi0} which encodes the exact initial state $\vec{x}_0$ and the approximate forcing $\widetilde{g}$ satisfying Eq.~\eqref{eq:ind0_approx_input_encodings}.

We choose the number of terms $k$ to retain in the Taylor series differently here. Let $\delta \in (0,1)$ be an error parameter to be fixed later and define
\begin{equation}\label{eq:ind0_proj_ODE_taylor_steps1}
k = \left\lceil \frac{2\log \Omega}{\log \log \Omega} \right\rceil, \quad \text{ where } \Omega = \frac{2 m e^3}{\delta} \left(1 + \frac{T e^2 \norm{\vec{f}}}{\sigma \norm{\vec{x}(T)}} \right),
\end{equation}
This choice ensures that $(k+1)! \geq \Omega$. Using Claim~\ref{claim:diff_taylor_truth_approx_input}$(i)$ with instantiations of $\calA := -\rlcM^{-1} \rlcK$, $\eta_0 := 0$ and $\eta_f := \varepsilon_1 \norm{\vec{f}}$, we obtain
\begin{equation}\label{eq:ind0_int1_diff_truey_taylory1}
\norm{\vec{x}(T) - \widetilde{x}(T)} \leq \frac{\delta}{2} \norm{\vec{x}(T)} + 4 T  \expnorm(\mathbf{-\rlcM^{-1} \rlcK}) \varepsilon_1 \norm{\vec{f}}.
\end{equation}
Setting the block-encoding precision $\varepsilon_1$ as
\begin{equation}\label{eq:ind0_eps1_choice2}
\varepsilon_1 = \frac{\delta \cdot \mu}{8 T \expnorm(\mathbf{-\rlcM^{-1} \rlcK}) \norm{\vec{f}}}
\end{equation}
ensures that Eq.~\eqref{eq:ind0_int1_diff_truey_taylory1} satisfies
\begin{equation}\label{eq:inter1_hist_states_close1}
\norm{\vec{x}(T) - \widetilde{x}(T)} \leq \delta \norm{\vec{x}(T)} \implies \norm{\ket{x(T)} - \ket{\widetilde{x}(T)}} \leq 2\delta.
\end{equation}
The goal is to then produce a state close to $\ket{\widetilde{x}(T)}$ so we are then guaranteed to be close to the true history state. For this, we proceed similarly as in the proof of Theorem~\ref{thm:sim_ind0} but use the block-encoding $U_{\calL^{-1}}$ of $\calL^{-1}$ from Theorem~\ref{thm:ODE_solver_final_state}. It can then be checked that 
$$
\norm{\ket{x(T)} - \ket{\widehat{x}(T)}} \leq \varepsilon,
$$
by choosing the precision $\varepsilon_3$ of $U_{\calL^{-1}}$ as $\varepsilon_3 = O(\varepsilon/\sqrt{k})$. The corresponding probability of success is given by $p_{\mathrm{succ}} \geq \Omega(\sigma^2/(g^2 \alpha_M^2))$ where the factor of $\sigma/\alpha_M$ is from $\normhist_1/\normhist_0$ and the factor $g$ is as defined in the theorem statement (see also proof of Theorem~\ref{thm:ODE_solver_final_state}, Appendix~\ref{appsec:qalgo_ode_la} for how $g$ enters). Overall, we need to set
$$
\varepsilon_1 =  O\left(\frac{\sigma \delta \varepsilon}{T \expnorm(\mathbf{-\rlcM^{-1} \rlcK}) \norm{\vec{f}}} \right), \quad \varepsilon_M = \varepsilon_K = \poly\left(\frac{\sigma \varepsilon}{\alpha_K T \expnorm(-\rlcM^{-1} \rlcK) \calC_f}\right).
$$
The main contribution to the time complexity is the implementation of $U_{M^{-1}}$ for $W_1$ and the dynamics $U_{\calL^{-1}}$. The gate complexity of $U_{\calL^{-1}}$ follow from Theorem~\ref{thm:ODE_solver_final_state}. To boost the success probability from $\Omega(1/(g^2 \kappa_M^2))$ to $\Omega(1)$, we use amplitude amplification (Theorem~\ref{thm:amplitude_amplification}) that involves applying $O(g \kappa_M)$ many iterations of the circuit of $U_{\calL^{-1}} W_1 U_0$. The overall gate complexity is then as stated. This completes the proof.
\end{proof}

\subsubsection{Index one}\label{subsec:ind1_dae_solver}
For index $1$, we have the following main result.
\begin{theorem}\label{thm:sim_ind1}
Let $\varepsilon, \sigma, \sigma_1 \in (0,1)$. Consider the problem setup of Definition~\ref{def:prob_setup_DAEs} and suppose that the $\dae$ has index $1$. Let the minimum non-zero singular value of $\rlcM$ satisfy $\sigma_\min(\rlcM) \geq \sigma$ and minimum non-zero singular value of $\rlcM_1$ satisfy $\sigma_\min(\rlcM_1) \geq \sigma_1$. Define the true history state encoding the evolution of $\vec{x}(t), \, t \in [0,T]$ as 
$$
\ket{\Psi} = \normhist^{-1} \sum_{k=0}^{m} \norm{\vec{x}(k \Delta t)} \ket{k} \ket{x(k \Delta t)}, 
$$
where $\Delta t = O(\sigma_1 / \alpha_K)$ and $m = \ceil{T/\Delta t}$. Then, there exists a quantum algorithm that outputs a state $\ket{\widehat{\Psi}}$ such that $\left\| \ket{\widehat{\Psi}} - \ket{\Psi} \right\| \leq \varepsilon$ with probability $\geq 2/3$. Define 
$$
\expnorm(-\rlcP_0 \rlcM_1^{-1}\rlcK) := \sup_{t \in [0,T]} \norm{\exp(-\rlcP_0 \rlcM_1^{-1}\rlcK t)}, \quad \mu^2 = m^{-1} \sum_{j=1}^m \norm{\vec{x}(j)}^2, \quad  \calC_f := \log\left(1 + \frac{T e^2 f_{\max} }{\sigma_1 \mu}\right).
$$
The algorithm uses the following complexity 
\begin{align*}
\text{Queries to }U_{M}&: \widetilde{O}\Big(T^2 \expnorm(- \rlcP_0 \rlcM_1^{-1}\rlcK)^2 \cdot \poly\left(1/\mu, \kappa_M, \kappa_{M_1}, \alpha_K, \log(1/\varepsilon), \mathcal{C}_f)\right) \Big) \,,\\
\text{Queries to }U_{K}&: \widetilde{O}\Big(T^2 \expnorm(- \rlcP_0 \rlcM_1^{-1}\rlcK)^2 \cdot \poly\left(1/\mu, \kappa_M, \kappa_{M_1}, \alpha_K, \log(1/\varepsilon), \mathcal{C}_f)\right) \Big) \,,\\ 
\text{Queries to }O_{f},O_x&: O\left(T \alpha_K \expnorm(- \rlcP_0 \rlcM_1^{-1}\rlcK)/(\sigma_1 \mu) \cdot \log(1/\varepsilon)\right) \,,
\end{align*}
where $\kappa_M = \alpha_M/\sigma$ is an upper bound on the (effective) condition number of $\rlcM$, and $\kappa_{M_1} = (\alpha_M + \alpha_K)/\sigma_1$ is an upper bound on the condition number of $\rlcM_1$. An additional gate complexity $O(1)$ times the number of uses of $U_M$ and $U_K$ is also used. It is sufficient to set the precisions of the block-encodings of $U_M$ and $U_K$ as
$$
\varepsilon_M = \poly\left(\frac{\sigma \sigma_1 \varepsilon}{\alpha_K T \expnorm(-\rlcP_0 \rlcM_1^{-1}\rlcK) \calC_f }\right) \text{ and } \varepsilon_K = \poly\left(\frac{\sigma \sigma_1 \varepsilon}{\alpha_K T \expnorm(-\rlcP_0 \rlcM_1^{-1}\rlcK) \calC_f }\right).
$$
\end{theorem}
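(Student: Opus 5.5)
The plan follows the two-stage scheme of Section~\ref{subsec:ind1_algo}, with the error and success-probability accounting of Theorem~\ref{thm:sim_ind0}. Throughout, the key identity is Theorem~\ref{thm:dae_index1_ode}: for $t>0$ the solution satisfies $\vec{x}(t)=(\id-\rlcQ_0\rlcM_1^{-1}\rlcK)\vec{y}(t)+\rlcQ_0\rlcM_1^{-1}\vec{f}$, where $\vec{y}$ solves the projected $\ode$ with $\calA:=-\rlcP_0\rlcM_1^{-1}\rlcK$ and forcing $\vec{g}:=\rlcP_0\rlcM_1^{-1}\vec{f}$.

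\emph{Step 1: input states.} I prepare the input state of Eq.~\eqref{eq:ind1_input} with Claim~\ref{claim:initial_state_prep}, rescaling the amplitudes by the subnormalizations. Conditioned on the control qubit being $\ket{0}$, I apply Lemma~\ref{lem:projected_ODE_initial_state}. Here $U_P$ is the block-encoding of $\rlcP_0$ from Claim~\ref{claim:projs_BE_ind1}, and $U_G$ is the block-encoding of $\rlcP_0\rlcM_1^{-1}$ from Claim~\ref{claim:ode_BE_ind1}$(i)$. This gives approximate inputs $\widetilde{y}_0,\widetilde{g}$ with errors $\eta_0\le \varepsilon_P\norm{\vec{x}_0}$ and $\eta_f\le\varepsilon_G f_{\max}$. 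On the $\ket{1}$ branch, I apply the block-encodings of $\rlcQ_0$ (Claim~\ref{claim:projs_BE_ind1}$(ii)$) and $\rlcQ_0\rlcM_1^{-1}$ (Claim~\ref{claim:lin_alg_BE_ind1}$(i)$), controlled on the time register, to obtain an approximation of $\ket{\phi_{z,0}}$.

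\emph{Step 2: projected $\ode$ solve.} I pick the Taylor truncation $k$ exactly as in Eq.~\eqref{eq:ind0_proj_ODE_taylor_steps}, with $\mu$ and $\calC_f$ defined for the $\vec{y}$-dynamics. The step size is $h=\Theta(\sigma_1/\alpha_K)$, which is legitimate because $\norm{\calA}\le \norm{\rlcM_1^{-1}}\norm{\rlcK}\le \alpha_K/\sigma_1$. Claim~\ref{claim:diff_taylor_truth_approx_input}$(ii)$ then bounds the relative RMS error between the true $\vec{y}(t_j)$ and the truncated Taylor solution with inputs $\widetilde{y}_0,\widetilde{g}$ by $\delta/2+(1+2\expnorm)\eta_0/\mu+4T\expnorm\eta_f/\mu$. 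Choosing $\varepsilon_P,\varepsilon_G=\poly(\delta\mu/(T\expnorm))$ makes this at most $\delta$. Next I build $U_{\calL^{-1}}$ from Remark~\ref{remark:BE_ODE_solver} using the $(4\kappa_M\alpha_K/\sigma_1,\cdot,\varepsilon_2)$ block-encoding of $\calA$ from Claim~\ref{claim:ode_BE_ind1}$(ii)$. Its normalization is $\kappa=O(T\alpha_K\expnorm(\calA)/\sigma_1)$ up to the $\kappa_M$ subnormalization, and I apply it controlled on $\ket{0}$.

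\emph{Step 3: algebraic reconstruction.} For time indices $j\ge1$ I apply the block-encoding of $\id-\rlcQ_0\rlcM_1^{-1}\rlcK$ from Claim~\ref{claim:lin_alg_BE_ind1}$(ii)$, and then a Hadamard on the control qubit, which adds the $\rlcQ_0$ branch. Postselecting on the control $\ket{0}$ and all flags $\ket{0}$ leaves a vector whose $j$-th block is $\widetilde{w}(t_j)+\widetilde{z}_f$. Its error from $\vec{x}(t_j)$ is at most $\norm{\id-\rlcQ_0\rlcM_1^{-1}\rlcK}$ times the $\vec{y}$-error plus the block-encoding errors. For $j=0$ the two branches combine to $\rlcP_0\vec{x}_0+\rlcQ_0\vec{x}_0=\vec{x}_0$. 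Claim~\ref{claim:diff_hist_states} converts this RMS bound into a bound $\le\varepsilon$ on the distance between history states. Here I must divide by $\sum_j\norm{\vec{x}(t_j)}^2$ rather than by the $\vec{y}$-norm; this is where the $\poly(1/\mu)$ factor enters.

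\emph{Main obstacle.} The hardest part is the amplitude bookkeeping: the two branches must carry the correct relative weights, so their subnormalizations have to be matched by rescaling in Claim~\ref{claim:initial_state_prep}. I must also lower-bound the success probability. It should be $\Omega(\sigma_1^2\mu^2/(\poly(\kappa_M,\alpha_K)\,\kappa^2 k))$, coming from the subnormalizations $\alpha_P,\alpha_G$, from $\norm{\calL^{-1}}\ge 1/(2e\sqrt{k})$, and from the $1/2$ loss of the Hadamard combination. My first attempt is to bound the norm of the reconstructed unnormalized vector from below by $\sqrt{m}\mu/2$ and from above by $\normhist_0\cdot\poly$. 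Amplitude amplification (Theorem~\ref{thm:amplitude_amplification}) then costs $O(T\alpha_K\expnorm/(\sigma_1\mu))$ rounds, which gives the stated oracle counts. Multiplying by the per-round costs from Claims~\ref{claim:ode_BE_ind1}–\ref{claim:lin_alg_BE_ind1} gives the $T^2\expnorm^2\poly(\cdot)$ query counts, with $\varepsilon_M,\varepsilon_K$ set polynomially small as stated.
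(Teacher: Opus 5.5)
Your proposal is correct and follows the paper's proof essentially step for step. It has the same ingredients:
\begin{itemize}
\item the $\beta_1,\beta_2$-weighted two-branch input of Algorithm~\ref{algo:QDAE_solver_ind1};
\item the projected $\ode$ solve through Remark~\ref{remark:BE_ODE_solver};
\item reconstruction with $\id-\rlcQ_0\rlcM_1^{-1}\rlcK$ followed by a Hadamard;
\item the RMS error analysis via Claims~\ref{claim:diff_taylor_truth_approx_input}$(ii)$ and~\ref{claim:diff_hist_states};
\item amplitude amplification on a success probability of order $m\mu^2/\beta^2$.
\end{itemize}
Your direct comparison of the $U_{\calL^{-1}}$ and $U_F$ errors against $\norm{\Psi_x}\geq\sqrt{m}\mu$ is a slightly simpler substitute for the two-case argument of Claim~\ref{claim:ind1_correctness}.

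One caution concerns the $\mu$ that enters $k$ and the $\eta_0,\eta_f$ terms. It should be the $\vec{x}$-based quantity of the statement, as in the footnote in the proof of Claim~\ref{claim:ind1_whatif_exactBE_ODE_LA}, and not an average of $\norm{\vec{y}(t_j)}$, which can be much smaller than $\mu$ or even vanish. Using the $\vec{x}$-based $\mu$ is legitimate because $\norm{\vec{y}(t)}\leq\norm{\vec{x}(t)}$ and $\norm{\rlcP_0\rlcM_1^{-1}\vec{f}}\leq f_{\max}/\sigma_1$.
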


\begin{myalgorithm}
\begin{algorithm}[H]
    \caption{Quantum $\dae$ solver for index one} \label{algo:QDAE_solver_ind1}
    \setlength{\baselineskip}{1.8em} 
    \DontPrintSemicolon 
    \KwInput{$(\alpha_M,a_M,\varepsilon_M)$-BE $U_M$ of $\rlcM$, $(\alpha_K,a_K,\varepsilon_K)$-BE $U_K$ or $\rlcK$, oracles $O_x, O_f$ (see Definition~\ref{def:prob_setup_DAEs}), parameter $\sigma_1$ s.t. $\sigma_\min(\rlcM_1) \geq \sigma_1$.}
    \KwOutput{$\ket{\widehat{\Psi}}$ that is $\varepsilon$-close to the true history state $\ket{\Psi}$ (Theorem~\ref{thm:sim_ind1})} \vspace{2mm}
    Obtain $(\alpha_{P_0}, a_1,\varepsilon_1)$-BE $U_{P_0}$ of $\rlcP_0$ and $(\alpha_{Q_0},a_3,\varepsilon_2)$-BE $U_{Q_0}$ of $\rlcQ_0$ from Claim~\ref{claim:projs_BE_ind1} \\
    Obtain $(\alpha_{P_0 M_1^{-1}},a_2,\varepsilon_1')$-BE $U_{P_0 M_1^{-1}}$ of $\rlcP_0 \rlcM_1^{-1}$ (Claim~\ref{claim:ode_BE_ind1}) \\
    Obtain $(\alpha_{Q_0 M_1^{-1}},a_4,\varepsilon_2')$-BE $U_{Q_0 M_1^{-1}}$ of $\rlcQ_0 \rlcM_1^{-1}$ (Claim~\ref{claim:lin_alg_BE_ind1}) \\
    Obtain $(\alpha_{\calL_y^{-1}}, a_5, \varepsilon_3)$-BE $U_{\calL_y^{-1}}$ of linear system solve corresponding to the block-encoding $U_{P_0 M_1^{-1} K}$ of $\rlcP_0 \rlcM_1 \rlcK$ (Claim~\ref{claim:ode_BE_ind1}) using Remark~\ref{remark:BE_ODE_solver}. \label{algo_step:set_BE_ind_Ly} \\
    Obtain $(\alpha_F,a_6,\varepsilon_4)$-BE $U_{F}$ of $\mathbf{F} = (\id - \rlcQ_0 \rlcM_1^{-1}\rlcK)$ from Claim~\ref{claim:lin_alg_BE_ind1} \\ \vspace{2mm}
    \Comment*[l]{Define parameters for solve}
    Set $h = O(\sigma_1/\alpha_K)$ and $m = \ceil{T/h}$ \\
    Set $a= \sum_{i \in [6]}(a_i)$ \\
    Set $\normhist_0$ as in Eq.~\eqref{eq:algo_ind1_psi0} and $\normhist_0'$ as in Eq.~\eqref{eq:algo_ind1_phi0} \\
    Set $\beta_1 = \alpha_{\calL_y^{-1}} \alpha_{F} \normhist_0$, $\beta_2 = \normhist_0'$, and $\beta = \sqrt{\beta_1^2 + \beta_2^2}$ \label{algo_step:ind1_params_beta} \\ 
    \vspace{2mm}
    \Comment*[l]{Construct input state}
    Obtain the unitary $U_{1}$ that prepares $\ket{\psi_0}$ (Eq.~\eqref{eq:algo_ind1_psi0}) from Claim~\ref{claim:initial_state_prep} \\
    Obtain the unitary $U_{2}$ that prepares $\ket{\phi_0}$ (Eq.~\eqref{eq:algo_ind1_phi0}) from Claim~\ref{claim:initial_state_prep} \\ 
    Initialize state $\ket{0}^a \ket{0,0}_{\calS}$ \\
    Apply a rotation $R$ on the first qubit in $\calS$ to get $\ket{0}^a \left[\frac{\beta_1}{\beta} \ket{0,0} + \frac{\beta_2}{\beta} \ket{1,0} \right]$ \\
    Apply $\id_a \otimes (\ket{0}\bra{0} \otimes U_1 + \ket{1}\bra{1} \otimes U_2)$
    to obtain $\ket{\Psi_0} = \ket{0}^a \left[\frac{\beta_1}{\beta} \ket{0} \ket{\psi_0} + \frac{\beta_2}{\beta} \ket{1} \ket{\phi_0} \right]$ \label{algo_step:ind1_Psi0} \\
    Apply $W_1 = \ket{0}\bra{0} \otimes [\ket{0}\bra{0} \otimes U_{P_0} + (\id - \ket{0}\bra{0})\otimes U_{P_0 M_1^{-1}}] + \ket{1}\bra{1} \otimes \id$ to $\ket{\Psi_0}$ \\
    Apply $W_2 = \ket{0}\bra{0} \otimes \id + \ket{1}\bra{1} \otimes [\ket{0}\bra{0} \otimes U_{Q_0} + (\id - \ket{0}\bra{0})\otimes U_{Q_0 M_1^{-1}}]$ to $W_1 \ket{\Psi_0}$ \\
    \vspace{2mm}
    \Comment*[l]{Step 1: Differential equation solve on constraint submanifold}
    Apply $W_3 = \ket{0}\bra{0} \otimes U_{\calL_y^{-1}} + \ket{1}\bra{1} \otimes \id$ to $W_2 W_1 \ket{\Psi_0}$ \label{algo_step:ind1_ode_solve} \\ \vspace{2mm}
    \Comment*[l]{Step 2: Linear algebraic solve}
    Apply $W_4 = \ket{0}\bra{0} \otimes \Big(\ket{0}\bra{0} \otimes \id + (\id - \ket{0}\bra{0}) \otimes U_{F} \Big) + \ket{1}\bra{1} \otimes \id$ to $W_3 W_2 W_1 \ket{\Psi_0}$ \label{algo_step:ind1_la_solve} \\
    Apply $\id_a \otimes H \otimes I$ to $W_4 W_3 W_2 W_1 \ket{\Psi_0}$ \label{algo_step:ind1_add_Psiw_Phiz0} \\
    \Return $\ket{\widehat{\Psi}}$ after post-selecting for $\ket{0}^a$ over the first register.
\end{algorithm} 
\end{myalgorithm}

\paragraph{Algorithm.}
We will use Algorithm~\ref{algo:QDAE_solver_ind1} that follows the approach discussed in Section~\ref{subsec:approach} to obtain the above result. Let us introduce some relevant notation.

For the desired input state to the $\dae$ solver, we need to define separate input states corresponding to the variables $\vec{y}$ and $\vec{z}$. For $\vec{y}$, we define the following input state $\ket{\psi_0}$ with $U_1$ as its corresponding state-preparation unitary (which can be obtained from Claim~\ref{claim:initial_state_prep})
\begin{equation}\label{eq:algo_ind1_psi0}
\ket{\psi_0} = U_1 \ket{0,0,0} = \frac{1}{\normhist_0} \left[\alpha_{P_0} \norm{\vec{x}_0} \ket{0,0,\vec{x}_0} + h \alpha_{P_0 M_1^{-1}} \norm{\vec{f}} \sum_{j=0}^{m-1} \ket{j,1,f} \right],    
\end{equation}
where $\alpha_{P_0}$ is the subnormalization of the block-encoding $U_{P_0}$ of $\rlcP_0$ from Claim~\ref{claim:projs_BE_ind1}, $\alpha_{P_0 M_1^{-1}}$ is the subnormalization of the block-encoding $U_{P_0 M_1^{-1}}$ of $\rlcP_0 \rlcM_1^{-1}$ from Claim~\ref{claim:ode_BE_ind1}, and $\normhist_0 = \sqrt{\alpha_{P_0}^2 \norm{\vec{x}(0)}^2 + m h^2 \alpha_{P_0 M_1^{-1}}^2 \norm{\vec{f}}^2}$.

For $\vec{z}$, we define the following input state $\ket{\phi_0}$ with $U_2$ as its corresponding state-preparation unitary (which can be obtained from Claim~\ref{claim:initial_state_prep})
\begin{equation}\label{eq:algo_ind1_phi0}
\ket{\phi_0} = U_2 \ket{0,0,0} = \frac{1}{\normhist'_0} \left[\alpha_{Q_0} \norm{\vec{x}_0} \ket{0,0,\vec{x}_0} + \alpha_{Q_0 M_1^{-1}} \norm{\vec{f}} \sum_{j=1}^{m} \ket{j,0,f} \right],    
\end{equation}
where $\alpha_{Q_0}$ is the subnormalization of the block-encoding $U_{Q_0}$ of $\rlcQ_0$ from Claim~\ref{claim:projs_BE_ind1}, $\alpha_{Q_0 M_1^{-1}}$ is the subnormalization of the block-encoding $U_{Q_0 M_1^{-1}}$ of $\rlcQ_0 \rlcM_1^{-1}$ from Claim~\ref{claim:lin_alg_BE_ind1}, and $\normhist_0' = \sqrt{\alpha_{Q_0}^2 \norm{\vec{x}(0)}^2 + m \alpha_{Q_0 M_1^{-1}}^2 \norm{\vec{f}}^2}$.

\paragraph{Correctness.} To prove Theorem~\ref{thm:sim_ind1} that gives the correctness and complexity of Algorithm~\ref{algo:QDAE_solver_ind1}, we will prove a series of claims regarding the different steps in the algorithm.

Consider the following state (step~\ref{algo_step:ind1_Psi0} from Algorithm~\ref{algo:QDAE_solver_ind1})
\begin{equation}\label{eq:ind1_Psi0}
    \ket{\Psi_0} := \ket{0}^a \left[\frac{\beta_1}{\beta} \ket{0} \ket{\psi_0} + \frac{\beta_2}{\beta} \ket{1} \ket{\phi_0} \right],
\end{equation}
where $\ket{\psi_0}$ is as defined in Eq.~\eqref{def:ind1_psi0}, $\ket{\phi_0}$ is as defined in Eq.~\eqref{def:ind1_phi0}, $\beta_1 = \alpha_{\calL_y} \alpha_{F} \normhist_0$, $\beta_2 = \normhist_0'$, and $\beta = \sqrt{\beta_1^2 + \beta_2^2}$. Recall that $\normhist_0 = \sqrt{\alpha_{P_0}^2 \norm{\vec{x}(0)}^2 + m h^2 \alpha_{P_0 M_1^{-1}}^2 \norm{\vec{f}}^2}$ and $\normhist_0' = \sqrt{\alpha_{Q_0}^2 \norm{\vec{x}(0)}^2 + m \alpha_{Q_0 M_1^{-1}}^2 \norm{\vec{f}}^2}$.

$\ket{\Psi_0}$ is the initial state we create before applying a sequence of block-encodings on it. This state can be constructed exactly using Claim~\ref{claim:initial_state_prep} and requires $O(1)$ calls to $O_x,O_f$ and $\polylog(m)$ additional gates. It will also be convenient to define the corresponding state vectors
\begin{equation}\label{eq:ind1_Psi0_vec}
    \Psi_0 = \ket{0}^a \left[\frac{\beta_1}{\beta} \ket{0} \otimes \psi_0 + \frac{\beta_2}{\beta} \ket{1} \otimes \phi_0 \right],
\end{equation}
where 
\begin{align}
\psi_0 &= \alpha_{P_0} \ket{0,0} \otimes \vec{x}(0) + h \alpha_{P_0 M_1^{-1}} \sum_{j=0}^{m-1} \ket{j,1} \otimes \vec{f},  \\
\phi_0 &= \alpha_{Q_0} \ket{0,0} \otimes \vec{x}(0) + \alpha_{Q_0 M_1^{-1}} \sum_{j=0}^{m-1} \ket{j,1} \otimes \vec{f}.
\end{align}
Note that $\ket{\psi_0} = \psi_0 /\normhist_0$, $\ket{\phi_0} = \phi_0/\normhist_0'$, and $\ket{\Psi_0} = \Psi_0 / \norm{\Psi_0}$. 

We now comment on the state prepared as input to the $\ode$ solve in step~\ref{algo_step:ind1_ode_solve}.
\begin{claim}\label{claim:ind1_ode_input_state}
In Algorithm~\ref{algo:QDAE_solver_ind1}, the state $W_2 W_1 \ket{\Psi_0}$ is of the form
$$
W_2 W_1 \ket{0}^a \ket{\Psi_0} = \ket{0}^a \left[\frac{\normhist_1 \beta_1}{\normhist_0 \beta} \ket{0} \ket{\psi_1} + \frac{\normhist_1' \beta_2}{\normhist_0' \beta} \ket{1} \ket{\phi_1} \right] + \ket{\junk},
$$
where $\ket{\junk}$ is a state orthogonal to every state with $\ket{0}^a$ over the ancilla qubits and
$$
\ket{\psi_1} := \frac{1}{\normhist_1} \left[\norm{\widetilde{y}_0} \ket{0,0, \widetilde{y}_0} + h \norm{\widetilde{f}_y} \sum_{j=0}^{m-1} \ket{j,1,\widetilde{f}_y} \right], \ket{\phi_1} := \frac{1}{\normhist_1'} \left[\norm{\widetilde{z}_0} \ket{0,0, \widetilde{z}_0} + \norm{\widetilde{f}_z} \sum_{j=0}^{m-1} \ket{j,1,\widetilde{f}_z} \right],
$$
where $\widetilde{y}_0$ satisfies $\norm{\widetilde{y}_0 - \rlcP_0 \vec{x}_0} \leq \varepsilon_1 \norm{\vec{x}_0}$, $\widetilde{f}_y$ satisfies $\norm{\widetilde{f}_y - \rlcP_0 \rlcM_1^{-1} \vec{f}} \leq \varepsilon_1' \norm{\vec{f}}/\sigma_1$, $\widetilde{z}_0$ satisfies $\norm{\widetilde{z}_0 - \rlcQ_0 \vec{x}_0} \leq \varepsilon_2 \norm{\vec{x}_0}$, and $\widetilde{f}_z$ satisfies $\norm{\widetilde{f}_z - \rlcQ_0 \rlcM_1^{-1} \vec{f}} \leq \varepsilon_2' \norm{\vec{f}}/\sigma_1$.
\end{claim}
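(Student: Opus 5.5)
The plan is to treat the two branches of $\ket{\Psi_0}$ separately. $W_1$ acts nontrivially only when the index qubit is $\ket{0}$, and $W_2$ only when it is $\ket{1}$. On each branch we apply Lemma~\ref{lem:projected_ODE_initial_state}, after checking that the relevant branch of $\ket{\Psi_0}$ has exactly the input form that lemma requires.

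\textbf{Branch $\ket{0}$.} The state $\ket{\psi_0}$ in Eq.~\eqref{eq:algo_ind1_psi0} is the state of Lemma~\ref{lem:projected_ODE_initial_state} with $\alpha_P=\alpha_{P_0}$ and $\alpha_G=\alpha_{P_0M_1^{-1}}$. We instantiate $\mathbf{P}:=\rlcP_0$ with the block-encoding $U_{P_0}$ of Claim~\ref{claim:projs_BE_ind1}, which has precision $\varepsilon_1$. We instantiate $\mathbf{G}:=\rlcP_0\rlcM_1^{-1}$ with the block-encoding $U_{P_0M_1^{-1}}$ of Claim~\ref{claim:ode_BE_ind1}, which has precision $\varepsilon_1'$.

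On this branch, $W_1$ applies $U_{P_0}$ when the Taylor register is $\ket{0}$ and $U_{P_0M_1^{-1}}$ otherwise. In $\ket{\psi_0}$ the Taylor register only takes the values $0$ and $1$, so this agrees with the operator $W$ of the lemma. The lemma then returns $\ket{0}^{a}\frac{\normhist_1}{\normhist_0}\ket{\psi_1}+\ket{\junk}$ with $\widetilde y_0=\widetilde{\rlcP}_0\vec x_0$ and $\widetilde f_y=\widetilde{\mathbf{G}}\vec f$.

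The error bound on $\widetilde y_0$ is immediate from the lemma. For $\widetilde f_y$ the lemma only gives the absolute bound $\varepsilon_1'\norm{\vec f}$. The extra factor $1/\sigma_1$ in the claim is harmless: it is a weaker statement since $\sigma_1<1$. It also matches what we get if $\varepsilon_1'$ is interpreted relative to the natural scale $\norm{\rlcM_1^{-1}}\le1/\sigma_1$, and I would state it in that convention.

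\textbf{Branch $\ket{1}$.} Here we repeat the argument with $\ket{\phi_0}$ from Eq.~\eqref{eq:algo_ind1_phi0}, $\mathbf{P}:=\rlcQ_0$ via $U_{Q_0}$ (precision $\varepsilon_2$), and $\mathbf{G}:=\rlcQ_0\rlcM_1^{-1}$ via $U_{Q_0M_1^{-1}}$ of Claim~\ref{claim:lin_alg_BE_ind1} (precision $\varepsilon_2'$).

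The lemma as stated carries the factor $h$ and the Taylor index $1$ on the forcing terms. In $\ket{\phi_0}$ there is no factor $h$ and the forcing sits at time indices $j\in[m]$ with Taylor index $0$. The lemma's proof never uses these labels: it only uses that the system register holds $\vec x_0$ on one orthogonal component and $\vec f$ on the others. The selector in $W_2$ distinguishes the components by the time register ($\ket{0}$ versus the rest) rather than the Taylor register, and the same computation goes through. This yields $\frac{\normhist_1'}{\normhist_0'}\ket{1}\ket{\phi_1}$ with the stated bounds on $\widetilde z_0$ and $\widetilde f_z$.

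\textbf{Combining.} By linearity, applying $W_2W_1$ to the superposition with weights $\beta_1/\beta$ and $\beta_2/\beta$ gives the claimed form. Three points need checking:
\begin{itemize}
\item The junk from either branch stays orthogonal to every state with $\ket{0}^a$ on the ancillas. This holds because $W_1$ is the identity on the $\ket{1}$ branch and $W_2$ is the identity on the $\ket{0}$ branch.
\item Each branch uses its own ancilla block, padded into the overall $a=\sum_i a_i$ ancillas, with the unused ancillas left in $\ket{0}$.
\item The normalization factors $\normhist_1/\normhist_0$ and $\normhist_1'/\normhist_0'$ are computed exactly as in the lemma.
\end{itemize}
The main obstacle is bookkeeping rather than analysis. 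We must match the register labels of $\ket{\phi_0}$ to the lemma, and keep the error conventions for the forcing terms consistent (the $1/\sigma_1$ scaling). Both of these have to be fixed once and then carried through the rest of the proof of Theorem~\ref{thm:sim_ind1}.
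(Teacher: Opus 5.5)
Your proposal is correct and follows the paper's proof: the paper also applies Lemma~\ref{lem:projected_ODE_initial_state} once to each branch of $\ket{\Psi_0}$, using $W_1$ on the $\ket{0}$ branch and $W_2$ on the $\ket{1}$ branch. You are more careful than the paper on two points. The paper glosses over the register-label mismatch between $\ket{\phi_0}$ and the lemma's input form, and it reaches the $1/\sigma_1$ forcing bound through an unjustified relative-error step $\norm{\widetilde{f}_y - \vec{f}_y} \leq \varepsilon_1' \norm{\vec{f}_y}$; your direct bound $\varepsilon_1'\norm{\vec{f}} \leq \varepsilon_1'\norm{\vec{f}}/\sigma_1$, using $\sigma_1 < 1$, already suffices, so no change of error convention is needed.
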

\begin{proof}
Consider the following state (step~\ref{algo_step:ind1_Psi0} from Algorithm~\ref{algo:QDAE_solver_ind1})
\begin{equation}\label{eq:ind1_Psi0}
    \ket{\Psi_0} := \ket{0}^a \left[\frac{\beta_1}{\beta} \ket{0} \ket{\psi_0} + \frac{\beta_2}{\beta} \ket{1} \ket{\phi_0} \right].
\end{equation}
From Lemma~\ref{lem:projected_ODE_initial_state}, we have that $W_1 \ket{0}^a \ket{\Psi_0}$ is of the following form 
$$
W_1 \ket{0}^a \ket{\Psi_0} = \ket{0}^{a} \ket{0} \frac{\beta_1}{\normhist_0 \beta} \underbrace{\left[\norm{\widetilde{y}_0} \ket{0,0, \widetilde{y}_0} + h \norm{\widetilde{f}_y} \sum_{j=0}^{m-1} \ket{j,1,\widetilde{f}_y} \right]}_{:= \psi_1} + \ket{0}^a \frac{\beta_2}{\normhist_0' \beta} \ket{1} \ket{\phi_0}  + \ket{\junk_0},
$$
where $\ket{\junk_0}$ is some state that is orthogonal to every state with $\ket{0}^a$ over the ancilla register and we will denote the vector $\psi_1$ as indicated. Then, $\normhist_1 = \norm{\psi_1}$. We can then define $\ket{\psi_1} := \psi_1/\normhist_1$ Let the true initial state and forcing over $\vec{y}$ be denoted as $\vec{y}_0 := \rlcP_0 \vec{x}_0$ and $\vec{f}_y := \rlcP_0 \rlcM_1^{-1} \vec{f}$, respectively. Then, the encoded initial condition and forcing of $\ket{\psi_1}$, which we denote by $\widetilde{y}_0$ and $\widetilde{f}_y$ respectively, satisfy
\begin{equation}\label{eq:interim_ind1_init_y_close}
\norm{\widetilde{y}_0 - \vec{y}_0} \leq \varepsilon_1 \norm{\vec{y}_0} \leq \varepsilon_1 \norm{\vec{x}_{0}}, \quad \norm{\widetilde{f}_y - \vec{f}_y} \leq \varepsilon_1' \norm{\vec{f}_y} \leq \varepsilon_1' \norm{\vec{f}}/\sigma_1,
\end{equation}
where we have used $\norm{\vec{y}_0} = \norm{\rlcP_0 \vec{x}_0} \leq \norm{\rlcP_0} \norm{\vec{x}_0} \leq \norm{\vec{x}(0)}$ and $\norm{\vec{f}_y} = \norm{\rlcP_0 \rlcM_1^{-1} \vec{f}} \leq \norm{\rlcP_0} \norm{\rlcM_1^{-1}} \norm{\vec{f}} \leq \norm{\vec{f}}/\sigma_1$.

Applying Lemma~\ref{lem:projected_ODE_initial_state} again to $W_2 W_1 \ket{0}^a \ket{\Psi_0}$ then gives us that
$$
W_1 \ket{0}^a \ket{\Psi_0} = \ket{0}^{a} \ket{0} \frac{\normhist_1 \beta_1}{\normhist_0 \beta} \ket{\psi_1} + \ket{0}^a \ket{1} \frac{\beta_2}{\normhist_0' \beta} \underbrace{\left[\norm{\widetilde{z}_0} \ket{0,0, \widetilde{z}_0} + h \norm{\widetilde{f}_z} \sum_{j=0}^{m-1} \ket{j,1,\widetilde{f}_z} \right]}_{:= \phi_1} + \ket{\junk},
$$
where $\ket{\junk}$ is some state that is orthogonal to every state with $\ket{0}^a$ over the ancilla register and the vector $\phi_1$ is as indicated. We define $\normhist_1' = \norm{\phi_1}$ and $\ket{\phi_1} := \phi_1/\normhist_1'$. Let the true initial state and forcing to the linear-algebraic solve over $\vec{z}$ be denoted as $\vec{z}_0 := \rlcQ_0 \vec{x}_0$ and $\vec{f}_z := \rlcQ_0 \rlcM_1^{-1} \vec{f}$, respectively. Then, the encoded initial condition and forcing of $\ket{\phi_1}$, which we denote by $\widetilde{z}_0$ and $\widetilde{f}_z$ respectively, satisfies
\begin{equation}\label{eq:interim_ind1_init_z_close}
\norm{\widetilde{z}_0 - \vec{z}_0} \leq \varepsilon_2 \norm{\vec{z}_0} \leq \varepsilon_2 \norm{\vec{x}_0}, \quad \norm{\widetilde{f}_z - \vec{f}_z} \leq \varepsilon_2' \norm{\vec{f}_z} \leq \varepsilon_2' \norm{\vec{f}}/\sigma_1.
\end{equation}
This completes the proof.
\end{proof}

We now comment on the error between the true history state and the final history state that would have been obtained if we were to consider the input state $W_2 W_1 \ket{\Psi_0}$ to steps~\ref{algo_step:ind1_ode_solve}--\ref{algo_step:ind1_add_Psiw_Phiz0} and the block-encodings of steps~\ref{algo_step:ind1_ode_solve}--\ref{algo_step:ind1_add_Psiw_Phiz0} were exact. For that, let the true history state corresponding to the true solution $\vec{x}$ be $\ket{\Psi_x}$ which is given by
\begin{equation}\label{eq:ind1_Psix}
\ket{\Psi_x} := \frac{1}{\normhist} \sum_{j=0}^m \norm{\vec{x}(jh)} \ket{j, x(jh)}
\end{equation}

Let us denote $\ket{\widetilde{\Psi}_x}$ as the history state obtained from the Taylor series approximation $\widetilde{x}$ over $k$ terms considering the input state $W_2 W_1 \ket{\Psi_0}$ but exact implementations of block-encodings in $W_3, W_4$ i.e., $(\alpha_{\calL^{-1}_y},a_5,0)$-BE of $\calL_y^{-1}$ and $(\alpha_F,a_6,0)$-BE of $\mathbf{F} = (\id - \rlcQ_0 \rlcM_1^{-1} \rlcK)$ be
\begin{equation}
\ket{\widetilde{\Psi}_x} := \frac{1}{\widetilde{\normhist}} \sum_{j=0}^m \norm{\widetilde{x}(jh)} \ket{j, \widetilde{x}(jh)}.
\end{equation}

\begin{claim}\label{claim:ind1_whatif_exactBE_ODE_LA}
Let $\delta' \in (0,1)$. Consider Algorithm~\ref{algo:QDAE_solver_ind1}. Suppose $W_3$ in step~\ref{algo_step:ind1_ode_solve} is defined with an $(\alpha_{\calL^{-1}_y},a_5,0)$-BE of $\calL_y^{-1}$ and $W_4$ in step~\ref{algo_step:ind1_la_solve} is defined with an $(\alpha_F,a_6,0)$-BE of $\mathbf{F}$. If $\calL_y$ is defined considering a Taylor approximation over $k$ many terms where 
$$
(k+1)! \geq \frac{16 \xi m e^3}{\delta'} \left(1 + \frac{T e^2 \norm{\vec{f}}}{\sigma_1 \mu} \right), \text{ where } \xi = (1 + \sigma_1/\alpha_K),
$$
then the state obtained from $(H\otimes \id)W_4 W_3 W_2 W_1 \ket{\Psi_0}$ is  
$$
\ket{0}^a \frac{\norm{\widetilde{\Psi}_x}}{\beta} \ket{\widetilde{\Psi}_x} + \ket{\junk}, \text{ where } \norm{\ket{\Psi_x} - \ket{\widetilde{\Psi}_x}} \leq \delta'.
$$
This is provided that the block-encoding precisions are set as follows for $\calA := -\rlcP_0 \rlcM_1^{-1}\rlcK$:
$$
\varepsilon_1 = \frac{\delta' \mu}{16 \xi (1 + 2\expnorm(\calA)) \norm{\vec{x}_0}}, \quad \varepsilon_1' = \frac{\delta' \sigma_1  \mu}{64 \xi T \expnorm(\calA) \norm{\vec{f}}},  \quad \varepsilon_2 = \frac{\delta'}{4}, \quad \varepsilon_2' = \frac{\delta' \sigma_1 \mu}{4\norm{\vec{f}}}.
$$
\end{claim}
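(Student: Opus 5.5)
Starting from the form of $W_2W_1\ket{\Psi_0}$ given by Claim~\ref{claim:ind1_ode_input_state}, I would track the unnormalized vectors through $W_3$, $W_4$ and the final Hadamard, and then bound the resulting history-state error.

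\textbf{Step 1: form of the output.} With exact block-encodings, the $\ket{0}^a\ket{0}$ branch after $W_3$ is $(\alpha_{\calL_y^{-1}})^{-1}\calL_y^{-1}\psi_1$ (times $\beta_1/(\normhist_0\beta)$). By Remark~\ref{remark:BE_ODE_solver} this is the unnormalized history vector $\sum_j \ket{j}\otimes\widetilde{y}(jh)$, where $\widetilde{y}$ is the $k$-term Taylor solution of $\dot{\vec y}=\calA\vec y+\widetilde f_y$ with $\widetilde y(0)=\widetilde y_0$. Applying $W_4$ leaves $j=0$ unchanged (up to the $\alpha_F$ factor, which I would absorb by padding the $j=0$ term with a trivial block-encoding of $\id$ of the same subnormalization) and maps $\widetilde y(jh)\mapsto \mathbf F\widetilde y(jh)/\alpha_F$ for $j\ge1$. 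The choice $\beta_1=\alpha_{\calL_y^{-1}}\alpha_F\normhist_0$, $\beta_2=\normhist_0'$ makes both branches carry the same prefactor $1/\beta$. The Hadamard on the branch qubit then adds them, giving in the $\ket{0}^a\ket 0$ component the vector $\widetilde\Psi_x=\sum_j\ket j\otimes \widetilde x(jh)$. Here $\widetilde x(0)=\widetilde y_0+\widetilde z_0$ and $\widetilde x(jh)=\mathbf F\widetilde y(jh)+\widetilde f_z$. Everything else goes into $\ket{\junk}$. I would double check the index alignment, since the $\ket{j,1}$ versus $\ket{j,0}$ labelling in $\phi_1$ must match the output register of $\calL_y^{-1}$; I would assume $U_2$ is arranged so it does.

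\textbf{Step 2: error bound.} By Theorem~\ref{thm:dae_index1_ode}, the true solution satisfies $\vec x(jh)=\mathbf F\vec y(jh)+\rlcQ_0\rlcM_1^{-1}\vec f$ for $j\ge1$ and $\vec x(0)=\vec y_0+\vec z_0$. I would decompose
\[
\vec x(jh)-\widetilde x(jh)=\mathbf F(\vec y-\widetilde y)(jh)+(\vec f_z-\widetilde f_z).
\]
Claim~\ref{claim:diff_taylor_truth_approx_input}$(ii)$ applied to the $y$-ODE, with $\eta_0=\varepsilon_1\|\vec x_0\|$ and $\eta_f=\varepsilon_1'\|\vec f\|/\sigma_1$ from Claim~\ref{claim:ind1_ode_input_state}, bounds the RMS error of $\widetilde y$. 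The stated choice of $k$ makes the Taylor part at most $\delta'/(8\xi)$, and the stated $\varepsilon_1,\varepsilon_1'$ make each input term at most $\delta'/(16\xi)$ relative to $\mu$. Multiplying by $\|\mathbf F\|\le 1+\|\rlcM_1^{-1}\|\|\rlcK\|\lesssim 1+\alpha_K/\sigma_1$ explains the factor $\xi$, up to a normalization of $\xi$ that I would reconcile. The $\widetilde f_z$ error contributes $\sqrt m\,\varepsilon_2'\|\vec f\|/\sigma_1\le \sqrt m\,\delta'\mu/4$, and the $\widetilde z_0$ error contributes $\varepsilon_2\|\vec x_0\|$. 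Summing in $\ell_2$ over $j$ and dividing by $(\sum_j\|\vec x(jh)\|^2)^{1/2}\ge\sqrt m\,\mu$ gives a relative RMS error of at most $\delta'/2$. Claim~\ref{claim:diff_hist_states} then yields $\|\ket{\Psi_x}-\ket{\widetilde\Psi_x}\|\le\delta'$.

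\textbf{Main obstacle.} The delicate point is the $j=0$ term, where $\varepsilon_2=\delta'/4$ is not scaled by $\mu/\|\vec x_0\|$. To handle it I would use $\|\vec x_0\|\le(\sum_j\|\vec x(jh)\|^2)^{1/2}$ instead of $\sqrt m\mu$, and for the $y$-part use the $\eta_0/\mu$ form from Claim~\ref{claim:diff_taylor_truth_approx_input}. A second point needing care is verifying that the subnormalizations in the $j=0$ versus $j\ge1$ entries agree, so that no relative reweighting of time slices occurs.
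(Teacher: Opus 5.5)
Your proposal is correct and follows the paper's proof essentially step for step. You propagate $W_2W_1\ket{\Psi_0}$ through $W_3$, $W_4$ and the Hadamard to get $\widetilde x(jh)=\mathbf F\widetilde y(jh)+\widetilde f_z$ (and $\widetilde y_0+\widetilde z_0$ at $j=0$). You then bound the $\vec y$-error with Claim~\ref{claim:diff_taylor_truth_approx_input}$(ii)$ normalized by the $\mu$ of $\vec x$, multiply by $\norm{\mathbf F}$, absorb the $\widetilde z_0,\widetilde f_z$ errors via $\norm{\vec x_0}^2+m\mu^2=\sum_{j=0}^m\norm{\vec x(jh)}^2$, and conclude with Claim~\ref{claim:diff_hist_states}; your $\ell_2$ summation and the paper's $(a+b)^2\le 2a^2+2b^2$ both give relative error $\delta'/2$.

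The issues you flag are genuine oversights in the paper, and your fixes resolve them. First, the factor actually needed is $1+\alpha_K/\sigma_1\ge\norm{\mathbf F}$; the paper's own proof uses this and then silently switches to the typo $1+\sigma_1/\alpha_K$. Second, $W_4$ as written gives the $j=0$ slice an extra weight $\alpha_F$ relative to the others, which your padding with an $(\alpha_F,\cdot,0)$-block-encoding of $\id$ repairs.
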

\begin{proof}
Let $h = O(\sigma_1/\alpha_K)$ be as defined in Algorithm~\ref{algo:QDAE_solver_ind1} and denote $t_j = jh, \forall j \in [m]_0$. From Claim~\ref{claim:ind1_ode_input_state}, we know that the state $W_2 W_1 \ket{\Psi_0}$ is of the form
$$
W_2 W_1 \ket{\Psi_0} = \ket{0}^{a} \ket{0} \frac{\beta_1}{\normhist_0 \beta} \psi_1 + \ket{0}^a \ket{1} \frac{\beta_2}{\normhist_0' \beta} \phi_1 + \ket{\junk},
$$
where the vectors $\psi_1$ and $\phi_1$ are as defined in the proof of Claim~\ref{claim:ind1_ode_input_state}:
$$
\psi_1 = \norm{\widetilde{y}_0} \ket{0,0, \widetilde{y}_0} + h \norm{\widetilde{f}_y} \sum_{j=0}^{m-1} \ket{j,1,\widetilde{f}_y}.
$$
The encoded initial condition and forcing satisfy
\begin{equation}\label{eq:ind1_errors_input_y}
\norm{\widetilde{y}_0 - \rlcP_0 \vec{x}_0} \leq \varepsilon_1 \norm{\vec{x}_0}, \quad \norm{\widetilde{f}_y - \rlcP_0 \rlcM_1^{-1} \vec{f}} \leq \varepsilon_1' \norm{\vec{f}}/\sigma_1.    
\end{equation}
Let $\delta \in (0,1)$ be an error parameter to be fixed later. Let $\calL_y^{-1}$ be the linear operator of Remark~\ref{remark:BE_ODE_solver} considering the solution over $\vec{y}(t), t\in (0,T]$ is approximated by a truncated Taylor series over $k$ terms denoted by (Claim~\ref{claim:diff_taylor_truth_approx_input})
\begin{equation}\label{eq:ind1_def_params_invL}
k = \left\lceil \frac{2\log \Omega}{\log \log \Omega} \right\rceil, \quad \text{ where } \Omega = \frac{4 m e^3}{\delta} \left(1 + \frac{T e^2 \norm{\vec{f}}}{\sigma_1 \mu} \right),
\end{equation}
where $\mu^2 := m^{-1} \sum_{j=1}^m \norm{\vec{x}(t_j)}^2$, we have used that the relevant forcing here is given by $\rlcP_0 \rlcM_1^{-1} \vec{f}$ which satisfies $\norm{\rlcP_0 \rlcM_1^{-1} \vec{f}} \leq \norm{\rlcP_0} \norm{\rlcM_1^{-1}} \norm{\vec{f}} \leq \norm{\vec{f}}/\sigma_1$ and that the solution $\vec{y}(t)$ satisfies $\norm{\vec{y}(t)} = \norm{\rlcP_0 \vec{x}(t)} \leq \norm{\vec{x}(t)}$ since $\norm{\rlcP_0} \leq 1$. The above choice of $k$ ensures that $(k+1)! \geq \Omega$.

Let $\vec{y}(t), \forall t \in [0,T]$ be the true solution of the $\ode$  and let $\widetilde{y}(t), \forall t \in [0,T]$ be the solution obtained from approximating the solution by a truncated series over $k$ terms with the mis-specified initial condition $\widetilde{y}_0$ (instead of $\vec{y}_0$) and forcing $\widetilde{f}_y$ (instead of $\vec{f}_y$). Applying $W_3$ defined using $(\alpha_{\calL_y}^{-1},a_5,0)$-BE $U_{\calL_y^{-1}}$ of $\calL_y^{-1}$ to $W_2 W_1 \ket{\Psi_0}$ would produce the following vector
\begin{equation}\label{eq:ind1_tildePsiy}
W_3 W_2 W_1 \ket{\Psi_0} = \ket{0}^{a} \left[\ket{0} \frac{\beta_1}{\alpha_{\calL_y^{-1}} \normhist_0 \beta} \widetilde{\Psi}_y + \ket{1} \frac{\beta_2}{\normhist_0' \beta} \phi_1 \right] + \ket{\junk}, \text{ where } \widetilde{\Psi}_y := \sum_{j=0}^m \ket{j,0} \otimes \widetilde{y}(t_j).    
\end{equation}
Using Claim~\ref{claim:diff_taylor_truth_approx_input} with instantiations of $\calA := \rlcP_0 \rlcM_1^{-1} \rlcK$, $\eta_0 := \varepsilon_1 \norm{\vec{x}_0}$ and $\eta_f := \varepsilon_1' \norm{\vec{f}}/\sigma_1$ as in Eq.~\eqref{eq:ind1_errors_input_y}, we obtain\footnote{It can be checked in the proof of Claim~\ref{claim:diff_taylor_truth_approx_input}$(ii)$ that the following bound holds for the denominator being in terms of $\vec{x}$ as all of our conditions are based on $\vec{x}$.} 
\begin{equation}\label{eq:ind1_diff_truey_taylory}
\sum_{j=0}^m \norm{\vec{y}(t_j) - \widetilde{y}(t_j)}^2 \leq \left(\frac{\delta}{2} + \frac{(1 + 2 \expnorm(\calA))\norm{\vec{x}_0} \varepsilon_1}{\mu} + \frac{4 T \expnorm(\calA) \varepsilon_1' \norm{\vec{f}}}{\sigma_1 \mu}\right)^2 \sum_{j=0}^m \norm{\vec{x}(t_j)}^2.
\end{equation}
Setting the block-encoding precisions $\varepsilon_1$ of $U_{P_0}$ and $\varepsilon_1'$ of $U_{P_0 M_1^{-1}}$ as
\begin{equation}\label{eq:ind1_set_BE_params_inputs_y}
\varepsilon_1 = \frac{\delta \cdot \mu}{4(1 + 2\expnorm(\calA)) \norm{\vec{x}_0}}, \quad \varepsilon_1' = \frac{\sigma_1 \delta \cdot \mu}{16 T \expnorm(\calA) \norm{\vec{f}}}
\end{equation}
ensures that Eq.~\eqref{eq:ind1_diff_truey_taylory} satisfies
\begin{equation}\label{eq:ind1_diff_truey_taylory2}
\sum_{j=0}^m \norm{\vec{y}(t_j) - \widetilde{y}(t_j)}^2 \leq \delta^2 \sum_{j=0}^m \norm{\vec{x}(t_j)}^2.
\end{equation}

We now determine the state obtained by applying $W_4$ which is defined using $(\alpha_F,a_5,0)$-BE $U_{F}$ of $\mathbf{F} := (\id - \rlcQ_0 \rlcM_1^{-1} \rlcK)$, to $W_3 W_2 W_1 \ket{\Psi_0}$ (Eq.~\eqref{eq:ind1_tildePsiy}). Towards that, let us define the vector $\vec{w}(t), \forall t \in [0,T]$ as $\vec{w}(t) = \mathbf{F} \vec{y}(t)$ and $\vec{w}(0) = \vec{y}(0)$. Analogously, define the vector $\widetilde{w}(t), \forall t \in [0,T]$ as $\widetilde{w}(t) = \mathbf{F} \vec{y}(t)$ and $\widetilde{w}(0) = \widetilde{y}(0)$. The state $W_4 W_3 W_2 W_1 \ket{\Psi_0}$ can then be expressed as 
\begin{equation}\label{eq:ind1_tildePsiw}
W_4 W_3 W_2 W_1 \ket{\Psi_0} = \ket{0}^{a} \left[\ket{0} \frac{\beta_1}{\alpha_F \alpha_{\calL_y^{-1}} \normhist_0 \beta} \widetilde{\Psi}_w + \ket{1} \frac{\beta_2}{\normhist_0' \beta} \phi_1 \right] + \ket{\junk}, \text{ where } \widetilde{\Psi}_w := \sum_{j=0}^m \ket{j,0} \otimes \widetilde{w}(t_j),    
\end{equation}
and where $\widetilde{w}(t_j)$ satisfies
\begin{align}
\norm{\widetilde{w}(t_j) - \vec{w}(t_j)} 
&\leq \norm{(\id - \rlcQ_0 \rlcM_1^{-1} \rlcK)(\widetilde{y}(t_j) - \vec{y}(t_j))} \nonumber \\
& \leq \norm{(\id - \rlcQ_0 \rlcM_1^{-1} \rlcK)} \cdot \norm{\widetilde{y}(t_j) - \vec{y}(t_j)} \nonumber \\
& \leq ( \norm{\id} + \norm{\rlcQ_0} \norm{\rlcM_1^{-1}} \norm{\rlcK}) \cdot \norm{\widetilde{y}(t_j) - \vec{y}(t_j)} \nonumber \\
& \leq (1 + \alpha_K/\sigma_1) \cdot \norm{\widetilde{y}(t_j) - \vec{y}(t_j)},
\label{eq:taylorw_close_to_truew}
\end{align}
where we used the triangle inequality in the third line and the bounds $\norm{\rlcQ_0} = 1$, $\norm{\rlcM_1^{-1}} \leq 1/\sigma_1$, $\norm{\rlcK} \leq \alpha_K$ in the final line. 

We now determine the state obtained by applying $\id_a \otimes H \otimes \id$ to $W_4 W_3 W_2 W_1 \ket{\Psi_0}$ (Eq.~\eqref{eq:ind1_tildePsiw}). Towards that, we note that in step~\ref{algo_step:ind1_params_beta}, we set $\beta_1 = \alpha_{\calL_y^{-1}} \alpha_{F} \normhist_0$ and $\beta_2 = \normhist_0'$. Then, Eq.~\eqref{eq:ind1_tildePsiw} can be rewritten as
\begin{equation}\label{eq:ind1_tildePsiw}
W_4 W_3 W_2 W_1 \ket{\Psi_0} = \ket{0}^{a} \frac{1}{\beta} \left[\ket{0} \widetilde{\Psi}_w + \ket{1} \phi_1\right] + \ket{\junk}.   
\end{equation}
Then, $(\id_a \otimes H \otimes \id)W_4 W_3 W_2 W_1 \ket{\Psi_0}$ is given by
\begin{align}\label{eq:ind1_tildePsix}
(\id_a \otimes H \otimes \id)W_4 W_3 W_2 W_1 \ket{\Psi_0} 
&= \ket{0}^{a} \frac{1}{\sqrt{2} \beta} \left[\ket{0} (\widetilde{\Psi}_w + \phi_1) + \ket{1} (\widetilde{\Psi}_w - \phi_1) \right] + \ket{\junk} \nonumber \\   
&= \ket{0}^{a+1} \frac{1}{\sqrt{2} \beta} \widetilde{\Psi}_x + \ket{\junk'},   
\end{align}
where $\ket{\junk'}$ in the second line is some state that is orthogonal to every state with $\ket{0}^a$ over the ancilla qubits and $\ket{0}$ over the first register, and 
\begin{equation}
\widetilde{\Psi}_x = \ket{0} (\widetilde{y}_0 + \widetilde{z}_0) + \sum_{j=1}^m \ket{j} (\widetilde{w}(t_j) + \widetilde{f}_z)    
\end{equation}
Denoting the so obtained encoded solution (from the Talyor series approximation and approximate initial conditions) be $\widetilde{x}(t), \forall t\in[0,T]$ where $\widetilde{x}_0 = \widetilde{y}_0 + \widetilde{z}_0$ and $\widetilde{x}(t_j) = \widetilde{w}(t_j) + \widetilde{f}_z$. Recall that the true solution $\vec{x}(t)$ satisfies $\vec{x}_0 = \vec{y}_0 + \vec{z}_0$ and $\vec{x}(t_j) = \vec{w}(t_j) + f_z$. We then have that for $j \geq 1$:
\begin{equation}\label{eq:ind1_diff_taylorx_truex}
\norm{\widetilde{x}(t_j) - \vec{x}(t_j)} \leq \norm{\widetilde{w}(t_j) - \vec{w}(t_j)} + \norm{\widetilde{f}_z - \vec{f}_z} \leq (1 + \sigma_1/\alpha_K) \norm{\widetilde{y}(t_j) - \vec{y}(t_j)} + \varepsilon_2' \norm{\vec{f}}/\sigma_1,
\end{equation}
where we used Eq.~\eqref{eq:taylorw_close_to_truew} to bound the first term and the promise on $\norm{\widetilde{f}_z - \vec{f}_z}$ from Claim~\ref{claim:ind1_ode_input_state}. For $j=0$, we note that
\begin{equation}\label{eq:ind1_diff_taylor0_true0}
\norm{\widetilde{x}_0 - \vec{x}_0} \leq \norm{\widetilde{w}_0 - \vec{w}_0} + \norm{\widetilde{z}_0 - \vec{z}_0} \leq (1 + \sigma_1/\alpha_K) \norm{\widetilde{y}_0 - \vec{y}_0} + \varepsilon_2 \norm{\vec{x}_0},
\end{equation}
where we used Eq.~\eqref{eq:taylorw_close_to_truew} to bound the first term and the promise on $\norm{\widetilde{z}_0 - \vec{z}_0}$ from Claim~\ref{claim:ind1_ode_input_state}. 

Let us denote $\xi = (1+\sigma_1/\alpha_K)$. We now evaluate
\begin{align}
\sum_{j=0}^m \norm{\widetilde{x}(t_j) - \vec{x}(t_j)}^2 
&= \norm{\widetilde{x}_0 - \vec{x}_0}^2 + \sum_{j=1}^m \norm{\widetilde{x}(t_j) - \vec{x}(t_j)}^2 \nonumber \\
&\leq 2\xi^2 \norm{\widetilde{y}_0 - \vec{y}_0}^2 + 2 \varepsilon_2^2 \norm{\vec{x}_0}^2 + \sum_{j=1}^m \left[ 2 \xi^2 \norm{\widetilde{y}(t_j) - \vec{y}(t_j)}^2 + 2 \varepsilon_2^{'2} \norm{\vec{f}}^2/\sigma_1^2 \right] \nonumber \\
&= 2\xi^2 \sum_{j=1}^m \left[ \norm{\widetilde{y}_0 - \vec{y}_0}^2 \right] + 2 \varepsilon_2^2 \norm{\vec{x}_0}^2 + 2 m \varepsilon_2{'2} \norm{\vec{f}}^2/{\sigma_1}^2  \nonumber \\
&\leq 2\xi^2 \delta^2 \left(\sum_{j=0}^m \norm{\vec{x}(t_j)}^2 \right) + 2 \varepsilon_2^2 \norm{\vec{x}_0}^2 + 2 m \varepsilon_2^{'2} \norm{\vec{f}}^2/\sigma_1^2,
\label{eq:ind1_mse_x}
\end{align}
where we have used Eq.~\eqref{eq:ind1_diff_taylorx_truex} and Eq.~\eqref{eq:ind1_diff_taylor0_true0} in the second line along with $(a+b)^2 \leq 2a^2 + 2b^2$, and then used Eq.~\eqref{eq:ind1_diff_truey_taylory2} in the last line. We now set $\varepsilon_2$ and $\varepsilon_2'$ as
\begin{equation}\label{eq:ind1_first_choise_eps2}
\varepsilon_2 = \xi \delta, \quad \varepsilon_2' = \xi \delta \sigma_1 \mu/\norm{\vec{f}},
\end{equation}
where recall that $\mu^2 = m^{-1} \sum_{j=1}^m \norm{\vec{x}(t_j)}^2$. Substituting these choices from Eq.~\eqref{eq:ind1_first_choise_eps2} into Eq.~\eqref{eq:ind1_mse_x} gives us
\begin{equation}\label{eq:ind1_mse_x_semifinal}
\sum_{j=0}^m \norm{\widetilde{x}(t_j) - \vec{x}(t_j)}^2 \leq 4 \xi^2 \delta^2 \sum_{j=0}^m \norm{\vec{x}(t_j)}^2
\end{equation}

At this point, consider the history state corresponding to the true vector $\vec{x}$ as $\ket{\Psi_x}$ and that corresponding to the Taylor approximation $\widetilde{x}$ as $\ket{\widetilde{\Psi}_x}$ respectively, which are defined as
\begin{equation}\label{eq:def_psihist_w}
\ket{\Psi_x} := \frac{1}{\normhist} \sum_{j=0}^m \norm{\vec{x}(jh)} \ket{j, x(jh)}, \quad \ket{\widetilde{\Psi}_x} := \frac{1}{\widetilde{\normhist}} \sum_{j=0}^m \norm{\widetilde{x}(jh)} \ket{j, \widetilde{x}(jh)}.
\end{equation}
Using Claim~\ref{claim:diff_hist_states} and Eq.~\eqref{eq:ind1_mse_x_semifinal}, we then obtain
$$
\norm{\ket{\Psi_x} - \ket{\widetilde{\Psi}_x}} \leq 4 \xi \delta.    
$$
Setting $\delta = \delta'/(4 \xi)$ gives us the desired error. This requires setting $\varepsilon_1,\varepsilon_1'$ from Eq.~\eqref{eq:ind1_set_BE_params_inputs_y} and $\varepsilon_2,\varepsilon_2'$ from Eq.~\eqref{eq:ind1_first_choise_eps2} as
$$
\varepsilon_1 = \frac{\delta' \cdot \mu}{16 \xi (1 + 2\expnorm(\calA)) \norm{\vec{x}_0}}, \quad \varepsilon_1' = \frac{\delta' \sigma_1  \mu}{64 \xi T \expnorm(\calA) \norm{\vec{f}}},  \quad \varepsilon_2 = \frac{\delta'}{4}, \quad \varepsilon_2' = \frac{\delta' \sigma_1 \mu}{4\norm{\vec{f}}},
$$
where recall that $\xi = (1 + \sigma_1/\alpha_K)$. Note that $k$ is then defined with respect to $\delta = \delta'/(4\xi)$ in Eq.~\eqref{eq:ind1_def_params_invL}. This completes the proof.
\end{proof}

So, now the goal is to prepare a quantum state which we denote by $\ket{\widehat{\Psi}_x}$ such that it is close to $\ket{\widetilde{\Psi}_x}$ which in turn guarantees that $\ket{\widehat{\Psi}_x}$ is close to the true history state $\ket{\Psi_x}$. This mainly involves deciding the precisions of the different block-encodings involved in the $\ode$ solve over $\vec{y}$ and that corresponding to the linear-algebraic solve, which we do next.
\begin{claim}\label{claim:ind1_correctness}
Let $\varepsilon \in (0,1)$. Consider Algorithm~\ref{algo:QDAE_solver_ind1} and choices on $k, \varepsilon_1,\varepsilon_1',\varepsilon_2,\varepsilon_2'$ made in Claim~\ref{claim:ind1_whatif_exactBE_ODE_LA}. The state obtained at end of step~\ref{algo_step:ind1_add_Psiw_Phiz0} is
$$
\ket{0}^a \frac{\norm{\widehat{\Psi}_x}}{\beta} \ket{\widehat{\Psi}_x} + \ket{\junk}, \text{ where } \norm{\ket{\widehat{\Psi}_x} - \ket{\Psi_x}} \leq \varepsilon.
$$
This is provided that the block-encoding precisions are set as follows for $\calA := -\rlcP_0 \rlcM_1^{-1}\rlcK$:
$$
\varepsilon_3 = O(\min(\mu,1)\varepsilon/(\sqrt{k} \alpha_{\calL_y^{-1}}), \quad \varepsilon_4 = O(\min(\mu,1) \varepsilon/(\sqrt{k} \alpha_{\calL_y^{-1}}).
$$
Additionally, the following precisions must satisfy $\varepsilon_1 = O(T/(k \kappa \norm{\vec{x}_0}))$, $\varepsilon_1' = O(\sigma_1/(k \kappa \norm{\vec{f}} ))$, $\varepsilon_2 = O(T/(\sqrt{k} \kappa \norm{\vec{x}_0}))$, $\varepsilon_2' = O(\sigma_1/(\sqrt{k} \kappa \norm{\vec{f}} ))$, where $\kappa = T \norm{\calA} \expnorm(\calA)$ for $\calA := - \rlcP_0 \rlcM_1^{-1} \rlcK$.
\end{claim}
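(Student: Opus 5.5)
The plan is to compare the actual output of steps~\ref{algo_step:ind1_ode_solve}--\ref{algo_step:ind1_add_Psiw_Phiz0} with the idealized output analysed in Claim~\ref{claim:ind1_whatif_exactBE_ODE_LA}, and to conclude with the triangle inequality
\[
\norm{\ket{\widehat{\Psi}_x} - \ket{\Psi_x}} \leq \norm{\ket{\widehat{\Psi}_x} - \ket{\widetilde{\Psi}_x}} + \norm{\ket{\widetilde{\Psi}_x} - \ket{\Psi_x}},
\]
where the second term is at most $\delta' = \varepsilon/2$ by Claim~\ref{claim:ind1_whatif_exactBE_ODE_LA}. It therefore remains to show that the first term is at most $\varepsilon/2$.

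\textbf{Structure of the output.} The input $W_2W_1\ket{\Psi_0}$ is identical in the exact and approximate scenarios, with the form given by Claim~\ref{claim:ind1_ode_input_state}. Let $\widetilde{\calL}_y := \alpha_{\calL_y^{-1}}(\bra{0}\otimes\id)U_{\calL_y^{-1}}(\ket{0}\otimes\id)$ and $\widetilde{\mathbf{F}}$ be the matrices actually block-encoded by $W_3$ and $W_4$. Then $\norm{\widetilde{\calL}_y - \calL_y^{-1}} \leq \varepsilon_3$ and $\norm{\widetilde{\mathbf{F}} - \mathbf{F}} \leq \varepsilon_4$. Following the computation of Eq.~\eqref{eq:ind1_tildePsix}, the unnormalized $\ket{0}^{a+1}$-component of the output is $\widehat{\Psi}_x/(\sqrt2\beta)$, with
\[
\widehat{\Psi}_x = (\Pi_0 + \Pi_{\geq1}\widetilde{\mathbf{F}})\,\widetilde{\calL}_y\,\psi_1 + \phi_1,
\]
where $\Pi_0,\Pi_{\geq1}$ project the time register onto $j=0$ and $j\geq 1$. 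The idealized vector $\widetilde{\Psi}_x$ has the same form with $\widetilde{\calL}_y,\widetilde{\mathbf{F}}$ replaced by $\calL_y^{-1},\mathbf{F}$. Adding and subtracting $\mathbf{F}\widetilde{\calL}_y\psi_1$ gives
\[
\norm{\widehat{\Psi}_x - \widetilde{\Psi}_x} \leq \norm{\psi_1}\Big(\varepsilon_4(\norm{\calL_y^{-1}}+\varepsilon_3) + \xi\,\varepsilon_3\Big),
\]
with $\norm{\mathbf{F}} \leq \xi' := 1+\alpha_K/\sigma_1$ and $\norm{\calL_y^{-1}} \leq 2e\kappa$ from the spectral bound used in the proof of Theorem~\ref{thm:sim_ind0}.

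\textbf{Normalization.} Fact~\ref{fact:normalized_vector_error} turns this into a bound on the normalized states, with $\norm{\widetilde{\Psi}_x}$ in the denominator. This is the main obstacle: $\norm{\widetilde{\Psi}_x}$ must be lower bounded relative to $\norm{\psi_1}$, and $\norm{\widetilde{\Psi}_x}$ is itself controlled only indirectly. My first attempt will use Eq.~\eqref{eq:ind1_mse_x_semifinal}, which gives
\[
\norm{\widetilde{\Psi}_x} \geq (1-4\xi\delta)\Big(\sum_j\norm{\vec{x}(t_j)}^2\Big)^{1/2} \geq \tfrac12\sqrt m\,\mu.
\]
For the numerator I will bound $\norm{\psi_1}^2 \leq 2\norm{\vec{x}_0}^2 + 2mh^2\norm{\vec{f}}^2/\sigma_1^2$, using the error bounds of Claim~\ref{claim:ind1_ode_input_state}.

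Keeping $\norm{\psi_1}/\norm{\widetilde{\Psi}_x}$ bounded is exactly why the claim imposes the extra conditions $\varepsilon_1 = O(T/(k\kappa\norm{\vec{x}_0}))$ and $\varepsilon_1' = O(\sigma_1/(k\kappa\norm{\vec{f}}))$ (and their analogues for $\varepsilon_2,\varepsilon_2'$). These conditions make the perturbations in the encoded inputs, after amplification by $\norm{\calL_y^{-1}} = O(\kappa)$ and the $\sqrt{k}$ Taylor-register factor, no larger than a constant fraction of the true solution norm. I will verify this scaling in the same way as the $\normhist_1/\normhist_0$ argument in Theorem~\ref{thm:sim_ind0}, absorbing the $\sqrt{k}$ from the spectral lower bound $1/(2\sqrt k e)$ on $\calL_y^{-1}$.

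\textbf{Parameter choice.} With these estimates the ratio becomes $O\big(\sqrt{k}\,\alpha_{\calL_y^{-1}}(\varepsilon_3+\varepsilon_4)/\min(\mu,1)\big)$. Here $\alpha_{\calL_y^{-1}}$ bounds $\norm{\calL_y^{-1}}$, and $\min(\mu,1)$ handles both the regime $\mu<1$ and the $\norm{\vec{x}_0}$ contribution. Choosing $\varepsilon_3,\varepsilon_4$ as stated makes this at most $\varepsilon/2$, which completes the argument. I do not plan to track constants carefully, consistent with the $O(\cdot)$ form of the statement.
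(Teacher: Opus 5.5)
Your outer structure matches the paper: the triangle inequality through the idealized state of Claim~\ref{claim:ind1_whatif_exactBE_ODE_LA}, the operator-error bound $\norm{\widehat{\Psi}_x-\widetilde{\Psi}_x}\lesssim(\ldots)\norm{\psi_1}$, and Fact~\ref{fact:normalized_vector_error}. The gap is the step you yourself flag as the main obstacle, namely showing $\norm{\psi_1}/\norm{\widetilde{\Psi}_x}=O(\sqrt{k}/\min(\mu,1))$.

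Your concrete estimates, $\norm{\widetilde{\Psi}_x}\geq\tfrac12\sqrt{m}\,\mu$ and $\norm{\psi_1}^2\leq 2\norm{\vec{x}_0}^2+2mh^2\norm{\vec{f}}^2/\sigma_1^2$, only give a ratio of order $\norm{\vec{x}_0}/(\sqrt{m}\mu)+h\norm{\vec{f}}/(\sigma_1\mu)$.
\begin{itemize}
\item The first term is not bounded by $1/\min(\mu,1)$. It can be repaired using $\norm{\Psi_x}\geq\norm{\vec{x}_0}$, since the $j=0$ term is in the sum.
\item The second term grows like $\norm{\vec{f}}/(\alpha_K\mu)$. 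It cannot be absorbed into $\sqrt{k}$, which depends only logarithmically on $\norm{\vec{f}}/\mu$. Bounding $\norm{\psi_1}$ by problem data severs the link between the forcing block of $\psi_1$ and the solution norm.
\end{itemize}
Your explanation that the extra conditions on $\varepsilon_1,\varepsilon_1',\varepsilon_2,\varepsilon_2'$ repair this does not hold in your route. Shrinking those precisions only sends $\psi_1$ to the exact input $\psi_{y,0}$, whose norm is fixed by the data, so they cannot change the ratio. The $\normhist_1/\normhist_0$ estimate of Theorem~\ref{thm:sim_ind0} is a success-probability bound and does not help here.

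The missing idea is to compare $\norm{\widetilde{\Psi}_x}$ with $\norm{\calL_y^{-1}\psi_1}$, rather than with the true solution, by using the projector structure.
\begin{itemize}
\item Since $\rlcP_0\rlcQ_0=0$, one has $\rlcP_0\mathbf{F}=\rlcP_0$, so $\norm{\widetilde{\Psi}_x}\geq\norm{\rlcP_0\calL_y^{-1}\psi_1+\rlcP_0\phi_1}$.
\item Since $\calA=-\rlcP_0\rlcM_1^{-1}\rlcK$ has range in $\mathrm{Im}(\rlcP_0)$, $\calL_y^{-1}$ maps $\mathrm{Im}(\rlcP_0)$ into itself. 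Hence $\norm{\rlcQ_0\calL_y^{-1}\psi_1}\leq\norm{\calL_y^{-1}}\norm{\rlcQ_0\psi_1}$.
\item Combining these with $\sigma_{\min}(\calL_y^{-1})\geq 1/(2\sqrt{k}e)$ and $\norm{\calL_y^{-1}}\leq 2e\kappa$ gives
\[
\norm{\widetilde{\Psi}_x}\geq \frac{\norm{\psi_1}}{2\sqrt{k}e}-2e\kappa\norm{\rlcQ_0\psi_1}-\norm{\rlcP_0\phi_1}.
\]
\end{itemize}
The leakage terms $\rlcQ_0\psi_1$ and $\rlcP_0\phi_1$ are controlled by $\varepsilon_1,\varepsilon_1'$ and by $\varepsilon_2,\varepsilon_2'$ respectively. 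This is the only place the extra conditions enter. Because those conditions are absolute rather than relative to $\norm{\psi_1}$, they make the leakage negligible against $\norm{\psi_1}/(2\sqrt{k}e)$ only when $\norm{\psi_1}\geq 1$.

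For $\norm{\psi_1}\leq 1$ the paper instead uses $\norm{E\psi_1}\leq\norm{E}$ together with $\norm{\widetilde{\Psi}_x}\gtrsim\mu$. This case split is what produces the factor $\min(\mu,1)/\sqrt{k}$ in $\varepsilon_3,\varepsilon_4$.

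Without the $\rlcP_0$ argument, a spectral lower bound on $\calL_y^{-1}$ alone says nothing about $\norm{\mathbf{F}\calL_y^{-1}\psi_1+\phi_1}$. A priori, $\mathbf{F}$ could shrink $\calL_y^{-1}\psi_1$ and the added $\phi_1$ could cancel it.
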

\begin{proof}
Let 
$$
E = \alpha_F \alpha_{\calL_y^{-1}} (\bra{0}^a \otimes \id)(\id_{a - a_6} \otimes U_F) (\id_{a - a_5} \otimes U_{\calL_y^{-1}}) (\ket{0}^a \otimes \id) - \mathbf{F} \calL_y^{-1},
$$
where recall that $\mathbf{F} = (\id - \rlcQ_0 \rlcM_1^{-1} \rlcK)$. We have from Lemma~\ref{lem:prod_BEs} that 
$$
\norm{E} \leq \alpha_{\calL^{-1}} \varepsilon_4 + \alpha_{F} \varepsilon_3.
$$
We now note that the effect of $(\id_a \otimes H \otimes \id)W_4 W_3 W_2 W_1 \ket{\Psi_0}$ is to produce the state
\begin{equation}\label{eq:ind1_rho}
\ket{\rho} = \ket{0}^{a + 1} \left[ \frac{E + \mathbf{F} \calL_y^{-1}}{\alpha_F \alpha_{\calL_y^{-1}}} \frac{\beta_1}{\sqrt{2} \normhist_0 \beta} \Psi_1 + \frac{\beta_2}{\sqrt{2} \normhist_0' \beta} \Phi_1 \right] + \ket{\junk}    
\end{equation}
where $\ket{\junk}$ is some state that is orthogonal to every state with $\ket{0}^a$ over the flag qubits and $\ket{0}$ over the first qubit of the system register. Noting that $\beta_1 = \alpha_F \alpha_{\calL_y^{-1}} \normhist_0$ and $\beta_2 = \normhist_0'$, we can simplify Eq.~\eqref{eq:ind1_rho} to
\begin{equation}\label{eq:ind1_rho2}
\ket{\rho} = \ket{0}^{a + 1} \frac{(E + \mathbf{F} \calL_y^{-1}) \Psi_1 + \Phi_1}{\sqrt{2}\beta} + \ket{\junk}.
\end{equation}
We note that $\widetilde{\Psi}_x = \mathbf{F} \calL_y^{-1} \Psi_1 + \Phi_1$ (Claim~\ref{claim:ind1_whatif_exactBE_ODE_LA}). Let us define $\widehat{\Psi}_x := (E + \mathbf{F} \calL_y^{-1}) \Psi_1 + \Phi_1$. We then can write $\ket{\rho}$ as
$$
\ket{\rho} = \ket{0}^{a + 1} \frac{\norm{\widehat{\Psi}_x}}{\sqrt{2}\beta} \ket{\widehat{\Psi}_x} + \ket{\junk}.
$$
We now evaluate the distance between $\ket{\widehat{\Psi}_x}$ and $\ket{\widetilde{\Psi}_x}$ as follows:
\begin{align}
\norm{\ket{\widehat{\Psi}_x} - \ket{\widetilde{\Psi}_x}} \leq \frac{2 \norm{\widehat{\Psi}_x - \widetilde{\Psi}_x}}{\norm{\widetilde{\Psi}_x}} \leq \frac{2 \norm{E \Psi_1}}{\norm{\widetilde{\Psi}_x}},
\end{align}
where we have used Claim~\ref{claim:diff_hist_states} in the first inequality. To obtain an upper bound on the above quantity, we lower bound $\norm{\widetilde{\Psi}_x}$ by dividing into two cases.

\emph{Case 1:} $\norm{\Psi_1} \leq 1$.

In this case, we note that $2 \norm{E \Psi_1} \leq 2 \norm{E} \leq \alpha_{\calL^{-1}} \varepsilon_4 + \alpha_{F} \varepsilon_3$. Moreover, $\norm{\widetilde{\Psi}_x} \geq \frac{1}{2} \norm{\Psi_x} \geq \sqrt{m} \mu \geq \mu$ where the first inequality follows from the proof of Claim~\ref{claim:ind1_whatif_exactBE_ODE_LA} and the following inequalities follow from $\norm{\Psi_x} = \sum_{j=0}^m \norm{\vec{x}(t_j)}^2 \geq m \mu^2 \geq \mu^2$. We then have the bound
$$
\norm{\ket{\widehat{\Psi}_x} - \ket{\widetilde{\Psi}_x}} \leq \frac{\alpha_{\calL_y^{-1}} \varepsilon_4 + \alpha_{F} \varepsilon_3}{\mu}. 
$$
Setting $\varepsilon_4 = \mu \varepsilon/(2 \alpha_{\calL_y^{-1}})$ and $\varepsilon_3 = \mu \varepsilon/(2 \alpha_F)$ gives us the desired precision.

\emph{Case 2:} $\norm{\Psi_1} \geq 1$. 

In this case, we lower bound $\norm{\widetilde{\Psi}_x}$ by evaluating 
\begin{align}
\norm{\widetilde{\Psi}_x} = \norm{\mathbf{F} \calL_y^{-1} \Psi_1 + \Phi_1} \geq \norm{\rlcP_0 (\mathbf{F} \calL_y^{-1} \Psi_1 + \Phi_1)} 
&= \norm{\rlcP_0 \calL_y^{-1} \Psi_1 + \rlcP_0 \Phi_1} \\
&\geq \norm{\rlcP_0 \calL_y^{-1} \Psi_1} - \norm{\rlcP_0 \Phi_1} \\
&\geq \norm{\calL_y^{-1} \Psi_1} - \norm{\rlcQ_0 \calL_y^{-1} \Psi_1} - \norm{\rlcP_0 \Phi_1} \\
&\geq \norm{\calL_y^{-1} \Psi_1} - \norm{\calL_y^{-1}} \norm{\rlcQ_0 \Psi_1} - \norm{\rlcP_0 \Phi_1}, \label{eq:ind1_lb_norm_tildePsix}
\end{align}
where we noted that $\rlcP_0 \mathbf{F} = \rlcP_0$ as $\rlcP_0 \rlcQ_0 = 0$ and $\norm{\rlcQ_0 \calL_y^{-1} \Psi_1} = \norm{\rlcQ_0 \calL_y^{-1} (\rlcP_0 \Psi_1 + \rlcQ_0 \Psi_1)} = \norm{\rlcQ_0 \calL_y^{-1} \rlcQ_0 \Psi_1} \leq \norm{\calL_y^{-1}} \norm{\rlcQ_0 \Psi_1}$ since $\calL_y^{-1} \rlcP_0 \Psi_1 \in \mathrm{Im}(\rlcP_0)$. To bound the second and third terms in Eq.~\eqref{eq:ind1_lb_norm_tildePsix}, we now note that Eq.~\eqref{eq:ind1_lb_norm_tildePsix}, we now note that
\begin{align*}
\norm{\rlcQ_0 \Psi_1}^2 \leq \varepsilon_1^2 \norm{\vec{x}_0}^2 + m h^2 \varepsilon_1^{'2} \norm{\vec{f}}/\sigma_1^2
\end{align*}
since $\norm{\rlcQ_0 \widetilde{y}_0} = \norm{\rlcQ_0 (\widetilde{y}_0 - \rlcP_0 \widetilde{y}_0)} \leq \varepsilon_1 \norm{\vec{x}_0}$ and similarly, $\norm{\rlcQ_0 \widetilde{f}_y} \leq \varepsilon_1' \norm{\vec{f}}/\sigma_1$. Repeating the same idea, we can show that
\begin{align*}
\norm{\rlcP_0 \Phi_1}^2 \leq \varepsilon_2^2 \norm{\vec{x}_0}^2 + m \varepsilon_2^{'2} \norm{\vec{f}}/\sigma_1^2. 
\end{align*}
Noting that $\norm{\Psi_1} \geq 1$, we have that $\norm{\rlcQ_0 \Psi_1}/\norm{\Psi_1} \leq \norm{\rlcQ_0 \Psi_1}$ and $\norm{\rlcP_0 \Phi_1}/\norm{\Psi_1} \leq \norm{\rlcP_0 \Phi_1}$. Noting that the $\spec(\calL^{-1}) \in [1/(2 \sqrt{k} e), 2 e \kappa]$ (see proof of Theorem~\ref{thm:ODE_solver} in Appendix~\ref{appsec:qalgo_ode_la}), where $\kappa = T \norm{\calA} \expnorm(\calA)$ for $\calA := - \rlcP_0 \rlcM_1^{-1} \rlcK$ here. We find that setting $\varepsilon_1 = O(T/(k \kappa \norm{\vec{x}_0}))$, $\varepsilon_1' = O(\sigma_1/(k \kappa \norm{\vec{f}} ))$, $\varepsilon_2 = O(T/(\sqrt{k} \kappa \norm{\vec{x}_0}))$, $\varepsilon_2' = O(\sigma_1/(\sqrt{k} \kappa \norm{\vec{f}} ))$ ensures that
$$
\norm{\ket{\widehat{\Psi}_x} - \ket{\widetilde{\Psi}_x}} \leq 20 \sqrt{k} \norm{E} \leq  O\left(\sqrt{k} (\alpha_{\calL_y^{-1}} \varepsilon_4 + \alpha_{F} \varepsilon_3) \right).
$$
Setting $\varepsilon_4 = O(\varepsilon/(\sqrt{k} \alpha_{\calL_y^{-1}})$ and $\varepsilon_3 = O(\varepsilon/(\sqrt{k} \alpha_{\calL_y^{-1}})$ gives us the desired error. Setting $\delta' = \varepsilon/2$ in Claim~\ref{claim:ind1_whatif_exactBE_ODE_LA} gives us the desired error.
\end{proof}

\paragraph{Complexity and overall guarantee.} We have basically shown the correctness of Algorithm~\ref{algo:QDAE_solver_ind1} in the above claims. Let us now comment on the complexity of Algorithm~\ref{algo:QDAE_solver_ind1} and thereby give a proof of Theorem~\ref{thm:sim_ind1}.
\begin{proof}[Proof of Theorem~\ref{thm:sim_ind1}]
We will use Algorithm~\ref{algo:QDAE_solver_ind1}. We choose the parameters corresponding to the block-encodings as obtained at end of Claim~\ref{claim:ind1_correctness}. Claim~\ref{claim:ind1_correctness} then guarantees that we obtain a history state $\ket{\widehat{\Psi}}$ that is $\varepsilon$-close to the true history state.  

The main contribution to the gate complexity is due to the application of the block-encodings corresponding to $\calL_y^{-1}$ for the $\ode$ solve obtained from from Remark~\ref{remark:BE_ODE_solver} and $(\id - \rlcQ_0 \rlcM_1^{-1} \rlcK)$ obtained from Claim~\ref{claim:lin_alg_BE_ind1}. The probability of success here is $(\norm{\widetilde{\Psi}_x}^2/\beta^2) \geq \Omega(m \mu^2/\beta^2)$. A coarse calculation reveals that (where $\calA := -\rlcP_0 \rlcM_1^{-1} \rlcK$)
$$
\beta^2 = O\left(\left(\kappa_M^2 + \frac{T^2\kappa_M^3\alpha_K^4 \expnorm(\calA)^2}{\sigma_1^4}
\right)\|\vec{x}_0\|^2 +
\left( \frac{T\alpha_K\kappa_M^2}{\sigma_1^3} + \frac{T^3\kappa_M^4\alpha_K^3 \expnorm(\calA)^2}{\sigma_1^5} \right)\|\vec{f}\|^2 \right),
$$
where we used $mh = T$ and $h=O(\sigma_1/\alpha_K)$. Using amplitude amplification of Theorem~\ref{thm:amplitude_amplification} to boost this to $\Omega(1)$ then gives us the final result.
\end{proof}

\paragraph{Final time preparation.}
We can also use the algorithm of Theorem~\ref{thm:sim_ind1} to obtain a quantum state that is close to the final time state $\ket{x(T)}$ with slighly modified parameters. This is formally described below. 
\begin{corollary}\label{corr:sim_T_ind1}
Let $\varepsilon, \sigma, \sigma_1 \in (0,1)$. Consider the problem setup of Definition~\ref{def:prob_setup_DAEs} and suppose that the $\dae$ has index $0$. Let the minimum non-zero singular value of $\rlcM$ satisfy $\sigma^+_\min(\rlcM) \geq \sigma$ and that of $\rlcM_1$ satisfy $\sigma_\min(\rlcM_1) \geq \sigma_1$. Then, there exists a quantum algorithm that outputs a state $|\widehat{x}(T)\rangle$ with success probability $\geq 2/3$ such that $\left\| |\widehat{x}(T)\rangle - |x(T)\rangle \right\| \leq \varepsilon$. Define
$$
\kappa_M := \alpha_M/\sigma, \quad \kappa_{M_1} = (\alpha_M + \alpha_K)/\sigma_1, \quad  g := \frac{\max_{t\in[0,T]} \norm{\vec{x}(t)}}{\norm{\vec{x}(T)}}, \quad \calC_f := \log\left(1 + \frac{T e^2 f_{\max} }{\sigma_1 \norm{\vec{x}(T)}}\right).
$$
The algorithm uses the following complexity 
\begin{align*}
\text{Queries to }U_{M}&: \widetilde{O}\Big(g T \expnorm(- \rlcP_0 \rlcM_1^{-1}\rlcK) \cdot \poly\left(\kappa_M, \kappa_{M_1}, \alpha_K, \log(1/\varepsilon), \mathcal{C}_f)\right) \Big) \,,\\
\text{Queries to }U_{K}&: \widetilde{O}\Big(g T \expnorm(- \rlcP_0 \rlcM_1^{-1}\rlcK) \cdot \poly\left(\kappa_M, \kappa_{M_1}, \alpha_K, \log(1/\varepsilon), \mathcal{C}_f)\right) \Big) \,,\\ 
\text{Queries to }O_{f},O_x&: O\left(g T \alpha_K \kappa_M \expnorm(- \rlcP_0 \rlcM_1^{-1}\rlcK)/\sigma \cdot \log(1/\varepsilon)\right) \,,
\end{align*}
An additional gate complexity $O(1)$ times the number of uses of $U_M$ and $U_K$ is also used. It is sufficient to set the precisions of the block-encodings of $U_M$ and $U_K$ as
$$
\varepsilon_M = \poly\left(\frac{\sigma \sigma_1 \varepsilon}{\alpha_K T \expnorm(-\rlcP_0 \rlcM_1^{-1}\rlcK) \calC_f }\right) \text{ and } \varepsilon_K = \poly\left(\frac{\sigma \sigma_1 \varepsilon}{\alpha_K T \expnorm(-\rlcP_0 \rlcM_1^{-1}\rlcK) \calC_f }\right).
$$
\end{corollary}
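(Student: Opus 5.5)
We follow the proof of Theorem~\ref{thm:sim_ind1} (Algorithm~\ref{algo:QDAE_solver_ind1}) and change only what is needed to target the final-time state, mirroring how Corollary~\ref{corr:sim_T_ind0} was obtained from Theorem~\ref{thm:sim_ind0}. (The statement says index $0$, but the hypotheses on $\sigma_1$ and $\rlcM_1$ indicate the index-$1$ setting is intended, and we treat it as such.)

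\textbf{State preparation and the $\ode$ step.} First, we prepare exactly the same input state $\ket{\Psi_0}$ and apply $W_1,W_2$. By Claim~\ref{claim:ind1_ode_input_state}, this encodes approximate inputs $\widetilde{y}_0,\widetilde{f}_y,\widetilde{z}_0,\widetilde{f}_z$ with errors controlled by $\varepsilon_1,\varepsilon_1',\varepsilon_2,\varepsilon_2'$. We then replace the linear-system block-encoding of Remark~\ref{remark:BE_ODE_solver} by the one underlying Theorem~\ref{thm:ODE_solver_final_state}, which is applied to the block-encoding $U_{P_0M_1^{-1}K}$ of Claim~\ref{claim:ode_BE_ind1}. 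In the final-time version the time register is padded so that most of the weight sits at $t=T$. We choose the Taylor order through $\Omega = \frac{c\,m e^3}{\delta}\big(1+\frac{Te^2\|\vec f\|}{\sigma_1\|\vec x(T)\|}\big)$. This uses $\|\rlcP_0\rlcM_1^{-1}\vec f\|\le \|\vec f\|/\sigma_1$ and $\|\vec y(T)\|\le\|\vec x(T)\|$, and it yields the stated $\calC_f$.

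\textbf{Error at time $T$.} We invoke Claim~\ref{claim:diff_taylor_truth_approx_input}$(i)$ with $\eta_0=\varepsilon_1\|\vec x_0\|$ and $\eta_f=\varepsilon_1'\|\vec f\|/\sigma_1$. This gives $\|\vec y(T)-\widetilde y(T)\|\le \delta\|\vec x(T)\|$ once
\[
\varepsilon_1,\varepsilon_1' = O\!\left(\frac{\delta\,\sigma_1\|\vec x(T)\|}{T\,\expnorm(\calA)\max(\|\vec x_0\|,\|\vec f\|)}\right), \qquad \calA=-\rlcP_0\rlcM_1^{-1}\rlcK .
\]
Next we apply $\mathbf F=\id-\rlcQ_0\rlcM_1^{-1}\rlcK$, which multiplies errors by at most $1+\alpha_K/\sigma_1$, and add $\widetilde f_z$ via the Hadamard step. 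As in \eqref{eq:ind1_diff_taylorx_truex}, this gives
\[
\|\widetilde x(T)-\vec x(T)\|\le (1+\alpha_K/\sigma_1)\,\delta\|\vec x(T)\|+\varepsilon_2'\|\vec f\|/\sigma_1 .
\]
Choosing $\varepsilon_2'=O(\delta\sigma_1\|\vec x(T)\|/\|\vec f\|)$ and rescaling $\delta$, Fact~\ref{fact:normalized_vector_error} gives $\|\ket{x(T)}-\ket{\widetilde x(T)}\|\le\varepsilon/2$.

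\textbf{Imperfect block-encodings.} The remaining error comes from the inexact $U_{\calL_y^{-1}}$ and $U_F$. As in Claim~\ref{claim:ind1_correctness}, the error operator $E$ satisfies $\|E\|\le\alpha_{\calL^{-1}}\varepsilon_4+\alpha_F\varepsilon_3$. It is handled by the same case split, but now with $\|\vec x(T)\|$ in the denominator in place of $\sqrt m\mu$. Taking $\varepsilon_3,\varepsilon_4$ polynomially small in $\varepsilon,\sigma_1,1/\kappa,\|\vec x(T)\|$ suffices. All of these precisions translate through Claims~\ref{claim:projs_BE_ind1}, \ref{claim:ode_BE_ind1} and \ref{claim:lin_alg_BE_ind1} into the stated polynomial conditions on $\varepsilon_M$ and $\varepsilon_K$.

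\textbf{Success probability (main obstacle).} The hard step is bounding the weight of the $\ket{0}^a\ket{T}$ component relative to $\beta$, and hence the amplitude amplification cost. Following the proof of Theorem~\ref{thm:ODE_solver_final_state}, the final-time block carries a $1/g$ fraction of the history norm. The subnormalizations $\alpha_{P_0},\alpha_{P_0M_1^{-1}},\alpha_F$ contribute factors polynomial in $\kappa_M$, $\kappa_{M_1}$ and $\alpha_K$. The difficulty is that the $\ket{1}\ket{\phi_0}$ branch and the $\rlcQ_0$ part could cancel the $\rlcP_0$ part; we control this by projecting with $\rlcP_0$ as in Case~2 of Claim~\ref{claim:ind1_correctness}. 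This gives $p_{\rm succ}\ge\Omega\big(1/(g^2\,\poly(\kappa_M,\kappa_{M_1},\alpha_K))\big)$. Amplitude amplification (Theorem~\ref{thm:amplitude_amplification}) then costs $O(g\,\poly(\cdot))$ repetitions. Each repetition uses $\widetilde O(T\expnorm(\calA)\poly(\cdot))$ queries to $U_M$ and $U_K$, coming from Theorem~\ref{thm:ODE_solver_final_state} together with the per-query costs of Claim~\ref{claim:ode_BE_ind1}, and $O(1)$ calls to $O_x,O_f$. This yields the stated query counts, with $O(1)$ additional gates per query.
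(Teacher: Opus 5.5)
Your outline follows the route the paper intends. The paper's proof is only the remark that one repeats Theorem~\ref{thm:sim_ind1} with the Taylor order of Claim~\ref{claim:diff_taylor_truth_approx_input}$(i)$. You are also right that ``index $0$'' should read index $1$, and your input-state, Taylor-error and $\mathbf{F}$-propagation steps are fine.

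\textbf{The gap.} It lies in the step you flag as the main obstacle, which is exactly the step that must produce the $gT\expnorm$ scaling. In Algorithm~\ref{algo:QDAE_solver_ind1} the branch weights are fixed by $\beta_1=\alpha_{\calL_y^{-1}}\alpha_F\normhist_0$, so $\beta\ge\alpha_{\calL_y^{-1}}\alpha_F\alpha_{P_0}\norm{\vec x_0}$. Your list of subnormalizations omits $\alpha_{\calL_y^{-1}}$, which is $4e\kappa$ in Remark~\ref{remark:BE_ODE_solver} and in any case at least $\norm{\calL_y^{-1}}$. This factor grows with the number of time steps and is not polynomial in $\kappa_M,\kappa_{M_1},\alpha_K$.

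Padding does not remove it. With $p$ padded steps, $\norm{\calL_y^{-1}}=\Omega(m+p)$ for near-unitary propagation, while the weight on the indices $\ge m$ is only $\sqrt{p}\,\norm{\vec x(T)}$. This gives $p_{\rm succ}=O\big(p/(m+p)^2\big)=O(1/m)$ whenever $\norm{\vec x(T)}=\norm{\vec x_0}$.

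For a concrete instance, take $\rlcM=\mathrm{diag}(\id,0)$ and $\rlcK=\mathrm{diag}(\mathbf{S},\id)$ with $\mathbf{S}$ skew-symmetric and $\norm{\mathbf{S}}=1$, together with $\vec f=0$ and $\vec x_0\in\mathrm{Im}(\rlcP_0)$. Then $\rlcM_1=\id$ and $\calA=-\mathrm{diag}(\mathbf{S},0)$. The quantities $g$, $\expnorm(\calA)$, $\kappa_M$, $\kappa_{M_1}$, $\alpha_K$ are all $O(1)$, while $m=\Theta(T)$. So $p_{\rm succ}=O(1/T)$, contradicting your $T$-independent lower bound $\Omega\big(1/(g^2\poly(\kappa_M,\kappa_{M_1},\alpha_K))\big)$.

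\textbf{Consequences.} On this instance amplitude amplification needs $\Omega(\sqrt{T})$ rounds, and each round applies $U_{\calL_y^{-1}}$ at cost $\widetilde{\Theta}(T)$. Your pipeline therefore uses $\widetilde{\Omega}(T^{3/2})$ queries, not the claimed $\widetilde{O}(gT\expnorm\,\poly(\cdot))=\widetilde{O}(T)$.

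You also cannot simply import the $g\kappa$ count of Theorem~\ref{thm:ODE_solver_final_state}. That theorem outputs a normalized state. Algorithm~\ref{algo:QDAE_solver_ind1} instead needs a block-encoding of $\calL_y^{-1}$ with known subnormalization, so that the $y$- and $z$-branches keep their relative weights before the Hadamard step. The paper's one-line sketch does not address this either.

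\textbf{Ways forward.} Either the stated $U_M,U_K$ counts must be weakened, or the algorithm must change. One natural starting point: for $t>0$ the full state satisfies the single ODE $\dot{\vec x}=\mathbf{F}\calA\rlcP_0\vec x+\mathbf{F}\rlcP_0\rlcM_1^{-1}\vec f$, with consistent initial value $\mathbf{F}\rlcP_0\vec x_0+\rlcQ_0\rlcM_1^{-1}\vec f$. Since $\rlcP_0\mathbf{F}=\rlcP_0$, one has $e^{t\mathbf{F}\calA\rlcP_0}=\id+\mathbf{F}(e^{t\calA\rlcP_0}-\id)$. Its exponential norm is therefore $O((1+\alpha_K/\sigma_1)\expnorm(\calA))$, and a final-time solver can act on it with no post-selected branch carrying $\alpha_{\calL_y^{-1}}$.

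A smaller point: your $\varepsilon_1,\varepsilon_1',\varepsilon_2'$ carry factors $\norm{\vec x(T)}/\norm{\vec x_0}$ and $\norm{\vec x(T)}/\norm{\vec f}$. These are not polynomially controlled by $\sigma\sigma_1\varepsilon/(\alpha_K T\expnorm\calC_f)$, so the stated precision condition needs extra $\poly(g,\norm{\vec f}/\norm{\vec x(T)})$ factors. This is harmless for the cost, which depends only logarithmically on the precisions.
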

We omit the details here as it follows similarly to the proof of Theorem~\ref{thm:sim_ind1}. The key difference is to use the number of Taylor terms corresponding to Claim~\ref{claim:diff_taylor_truth_approx_input}$(i)$. 

\subsubsection{Index two}
We have the following main result for index $2$ whose proof is similar to the one for index $1$.

\begin{corollary}\label{thm:sim_ind2}
Let $\varepsilon, \sigma, \sigma_1, \sigma_2, \uptau \in (0,1)$. Consider the problem setup of Definition~\ref{def:prob_setup_DAEs} and suppose that the $\dae$ has index $2$. Let $\sigma_\min^+(\rlcM) \geq \sigma$, $\sigma_\min^+(\rlcM_1) \geq \sigma_1$, $\sigma_\min(\rlcM_1) \geq \sigma_2$ and $\sigma_\min(\rlcP_0 \widetilde{\rlcQ}_1) \geq 1/\uptau$. Define the true history state encoding the evolution of $\vec{x}(t), \, t \in [0,T]$ as 
$$
\ket{\Psi} = \normhist^{-1} \sum_{k=0}^{m} \norm{\vec{x}(k \Delta t)} \ket{k} \ket{x(k \Delta t)}, 
$$
where $\Delta t = O(\sigma_2 \uptau/ \alpha_K)$ and $m = \ceil{T/\Delta t}$. Then, there exists a quantum algorithm that outputs a state $\ket{\widehat{\Psi}}$ such that $\left\| \ket{\widehat{\Psi}} - \ket{\Psi} \right\| \leq \varepsilon$ with probability $\geq 2/3$. Define $\kappa_M = \alpha_M/\sigma$, $\kappa_{M_1} = (\alpha_M + \alpha_K)/\sigma_1$, $\gamma_f = O(\norm{\vec{f}}/(\uptau \sigma_2))$
$$
\kappa_{M_2} = O(\alpha_M + \alpha_K/\uptau)/\sigma_2, \quad \mu^2 = m^{-1} \sum_{j=1}^m \norm{\vec{x}(j)}^2, \quad  \calC_f := \log\left(1 + \frac{T e^2 \gamma_f}{\mu}\right).
$$
The algorithm uses the following complexity 
\begin{align*}
\text{Queries to }U_{M}&: \widetilde{O}\Big(T^2 \expnorm(- \rlcP_0 \rlcP_1 \rlcM_2^{-1}\rlcK)^2 \cdot \poly\left(1/\mu, \kappa_M, \kappa_{M_1}, \kappa_{M_2}, \alpha_K, \log(1/\varepsilon), \mathcal{C}_f)\right) \Big) \,,\\
\text{Queries to }U_{K}&: \widetilde{O}\Big(T^2 \expnorm(- \rlcP_0 \rlcP_1 \rlcM_2^{-1}\rlcK)^2 \cdot \poly\left(1/\mu, \kappa_M, \kappa_{M_1}, \kappa_{M_2}, \alpha_K, \log(1/\varepsilon), \mathcal{C}_f)\right) \Big) \,,\\ 
\text{Queries to }O_{f},O_x&: O\left(T \alpha_K \kappa_{M_2} \expnorm(- \rlcP_0 \rlcP_1 \rlcM_1^{-1}\rlcK) \cdot \log(1/\varepsilon)\right) \,,
\end{align*}
An additional gate complexity $O(1)$ times the number of uses of $U_M$ and $U_K$ is also used. It is sufficient to set the precisions of the block-encodings of $U_M$ and $U_K$ as
$$
\varepsilon_M =  \varepsilon_K =  \poly\left(\frac{\sigma_1 \uptau \varepsilon}{\alpha_K \kappa_M T \expnorm(-\rlcP_0 \rlcP_1 \rlcM_1^{-1}\rlcK) \calC_f }\right).
$$
\end{corollary}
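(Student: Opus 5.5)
We will mirror the index-one argument (Algorithm~\ref{algo:QDAE_solver_ind1} together with Claims~\ref{claim:ind1_ode_input_state}--\ref{claim:ind1_correctness}), replacing every index-one object by its index-two analogue from Theorem~\ref{thm:dae_index2_ode}. The differential variable becomes $\vec{y} = \rlcP_0\rlcP_1\vec{x}$, with dynamics $\calA := -\rlcP_0\rlcP_1\rlcM_2^{-1}\rlcK$ and forcing $\rlcP_0\rlcP_1\rlcM_2^{-1}\vec{f}$. The algebraic part is $\vec{z} = \rlcG_2\vec{y} + \mathbf{F}_2\vec{f}$ from Eq.~\eqref{eq:ind2_LA_z_simple_final}, with initial value $(\id - \rlcP_0\rlcP_1)\vec{x}_0$.

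\textbf{Building the algorithm.} Concretely, the algorithm is as follows.
\begin{enumerate}
    \item Prepare $\ket{\Psi_0}$ exactly as in step~\ref{algo_step:ind1_Psi0}, now with subnormalizations taken from Claim~\ref{claim:ode_BE_ind2}(i)--(ii) on the $\vec{y}$ branch. On the $\vec{z}$ branch, use the block-encodings of $\id - \rlcP_0\rlcP_1$ (via Lemma~\ref{lem:sum_BEs}) and of $\mathbf{F}_2$ from Claim~\ref{claim:lin_alg_BE_ind2}.
    \item Apply the analogues of $W_1, W_2$; Lemma~\ref{lem:projected_ODE_initial_state} then yields the analogue of Claim~\ref{claim:ind1_ode_input_state}.
    \item Apply $W_3$, built from $\calL_y^{-1}$, which Remark~\ref{remark:BE_ODE_solver} constructs from the block-encoding of $\calA$ in Claim~\ref{claim:ode_BE_ind2}(iii).
    \item Apply $W_4$, built from $\id + \rlcG_2$ (Claim~\ref{claim:lin_alg_BE_ind2} plus Lemma~\ref{lem:sum_BEs}) on time indices $j\ge1$.
    \item Apply a Hadamard combination, then amplitude amplification.
\end{enumerate}
The time step $\Delta t = O(\sigma_2\uptau/\alpha_K)$ is justified by the bound $\norm{\calA} \le \norm{\rlcP_1}\norm{\rlcM_2^{-1}}\norm{\rlcK} \le \alpha_K/(\uptau\sigma_2)$, using $\norm{\rlcQ_1}\le 1/\uptau$ from the proof of Claim~\ref{claim:ub_norm_Mi_Ki}. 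The same bound gives the forcing norm $\gamma_f$ that enters $\calC_f$.

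\textbf{Error analysis.} We repeat Claim~\ref{claim:ind1_whatif_exactBE_ODE_LA} with two substitutions.
\begin{itemize}
    \item Apply Claim~\ref{claim:diff_taylor_truth_approx_input}(ii) to $\calA$, with $\eta_0 = \varepsilon_1\norm{\vec{x}_0}$ and $\eta_f = \varepsilon_1'\norm{\vec{f}}/(\uptau\sigma_2)$.
    \item Replace the amplification factor $\xi$ by $1+\norm{\rlcG_2}$. This is bounded polynomially in $\alpha_K/(\sigma_2\uptau)$, because $\norm{\rlcM_2^{-1}\rlcK_2} \le \alpha_K/(\sigma_2)$ times projector norms, each at most $1/\uptau$.
\end{itemize}
We then apply Claim~\ref{claim:diff_hist_states} to pass to history states.

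The correctness step, the analogue of Claim~\ref{claim:ind1_correctness}, needs a lower bound on $\norm{\widetilde\Psi_x}$. In case 2 we project with $\rlcP_0\rlcP_1$ instead of $\rlcP_0$, using $\rlcP_0\rlcP_1(\id+\rlcG_2) = \rlcP_0\rlcP_1$. This identity is the main obstacle. It follows if $\rlcP_0\rlcP_1\rlcQ_0 = 0$ and $\rlcP_0\rlcP_1\rlcQ_1 = 0$. The second holds since $\rlcP_1\rlcQ_1 = 0$. The first needs admissibility, namely $\rlcQ_1\rlcQ_0 = 0$ (Claim~\ref{claim:admissible_proj_ind2}), which gives $\rlcP_1\rlcQ_0 = \rlcQ_0$ and so $\rlcP_0\rlcP_1\rlcQ_0 = \rlcP_0\rlcQ_0 = 0$.

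A second subtlety is that $\rlcP_0\rlcP_1$ is an oblique projector. Its norm is $O(1/\uptau)$, not $1$, so it must be tracked in the leakage bounds $\norm{(\id-\rlcP_0\rlcP_1)\Psi_1}$ and $\norm{\rlcP_0\rlcP_1\Phi_1}$. Its range must also be invariant under $\calL_y^{-1}$. This holds because $\calA$ maps into $\mathrm{Im}(\rlcP_0\rlcP_1)$, provided $\calA$ is first restricted to that range; we restrict it this way.

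\textbf{Complexity.} The cost follows from Remark~\ref{remark:BE_ODE_solver} with $\kappa = T\norm{\calA}\expnorm(\calA)$. Each use of the block-encoding of $\calA$ costs the polynomial overhead stated in Claims~\ref{claim:inv_M2_BE_ind2}--\ref{claim:lin_alg_BE_ind2}. Amplitude amplification over the success probability $\Omega(m\mu^2/\beta^2)$ then adds the factor $T\expnorm(\calA)\poly(\ldots)$. Finally, all intermediate precisions are set polynomially small, which yields the stated $\varepsilon_M, \varepsilon_K$.
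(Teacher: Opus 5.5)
Your proposal is correct and follows the same route as the paper, which omits this proof and runs Algorithm~\ref{algo:QDAE_solver_ind2} through the index-one analysis. Your check that $\rlcP_0\rlcP_1\rlcQ_0=\rlcP_0\rlcP_1\rlcQ_1=0$ (giving $\rlcP_0\rlcP_1(\id+\rlcG_2)=\rlcP_0\rlcP_1$, and also $\rlcP_0\rlcP_1\mathbf{F}_2=0$, which keeps the $\vec{z}$-branch leakage small) and your tracking of $\norm{\rlcP_0\rlcP_1}=O(1/\uptau)$ supply the details the paper leaves implicit; you also rightly read the hypotheses as $\sigma_{\min}(\rlcM_2)\geq\sigma_2$ and $\sigma^+_{\min}(\rlcP_0\widetilde{\rlcQ}_1)\geq\uptau$ (as printed they cannot hold), and restricting $\calA$ to $\mathrm{Im}(\rlcP_0\rlcP_1)$ is harmless but unnecessary, since $\calA=\rlcP_0\rlcP_1\calA$ already makes that range invariant under $\calL_y^{-1}$.
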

We omit the details of the proof as it follows similarly to that of Theorem~\ref{thm:sim_ind1}. As mentioned in Section~\ref{subsec:approach}, we can reduce the solve for index $2$ to effectively a solve that is algorithmically identical to what is carried out for index $1$.

\paragraph{Algorithm.}
We will use Algorithm~\ref{algo:QDAE_solver_ind2} that follows the approach discussed in Section~\ref{subsec:approach} to obtain the above result. Let us introduce some relevant notation.

For the desired input state to the $\dae$ solver, we need to define separate input states corresponding to the variables $\vec{y}$ and $\vec{z}$. For $\vec{y}$, we define the following input state $\ket{\psi_0}$ with $U_1$ as its corresponding state-preparation unitary (which can be obtained from Claim~\ref{claim:initial_state_prep})
\begin{equation}\label{eq:algo_ind2_psi0}
\ket{\psi_0} = U_1 \ket{0,0,0} = \frac{1}{\normhist_0} \left[\alpha_{P_0 P_1} \norm{\vec{x}_0} \ket{0,0,\vec{x}_0} + h \alpha_{P_0 P_1 M_2^{-1}} \norm{\vec{f}} \sum_{j=0}^{m-1} \ket{j,1,f} \right],    
\end{equation}
where $\alpha_{P_0 P_1}$ is the subnormalization of the block-encoding $U_{P_0}$ of $\rlcP_0$ from Claim~\ref{claim:ode_BE_ind2}, $\alpha_{P_0 P_1 M_1^{-1}}$ is the subnormalization of the block-encoding $U_{P_0 P_1 M_2^{-1}}$ of $\rlcP_0 \rlcP_1 \rlcM_2^{-1}$ from Claim~\ref{claim:ode_BE_ind2}, and $\normhist_0 = \sqrt{\alpha_{P_0 P_1}^2 \norm{\vec{x}(0)}^2 + m h^2 \alpha_{P_0 P_1 M_2^{-1}}^2 \norm{\vec{f}}^2}$.

For $\vec{z}$, we define the following input state $\ket{\phi_0}$ with $U_2$ as its corresponding state-preparation unitary (which can be obtained from Claim~\ref{claim:initial_state_prep})
\begin{equation}\label{eq:algo_ind2_phi0}
\ket{\phi_0} = U_2 \ket{0,0,0} = \frac{1}{\normhist'_0} \left[\alpha_{\id - P_0 P_1} \norm{\vec{x}_0} \ket{0,0,\vec{x}_0} + \alpha_{F_2} \norm{\vec{f}} \sum_{j=1}^{m} \ket{j,0,f} \right],    
\end{equation}
where $\alpha_{\id - P_0 P_1}$ is the subnormalization of the block-encoding $U_{\id - P_0 P_1}$ of $\id - \rlcP_0 \rlcP_1$ from Claim~\ref{claim:lin_alg_BE_ind2}, $\alpha_{G_2}$ is the subnormalization of the block-encoding $U_{F_2}$ of $\mathbf{F}_2$ (Eq.~\eqref{eq:ind2_LA_z_simple}) from Claim~\ref{claim:lin_alg_BE_ind2}, and $\normhist_0' = \sqrt{\alpha_{\id - \rlcP_0 \rlcP_1}^2 \norm{\vec{x}(0)}^2 + m \alpha_{F_2}^2 \norm{\vec{f}}^2}$.

\begin{myalgorithm}
\begin{algorithm}[H]
    \caption{Quantum $\dae$ solver for index two} \label{algo:QDAE_solver_ind2}
    \setlength{\baselineskip}{1.8em} 
    \DontPrintSemicolon 
    \KwInput{$(\alpha_M,a_M,\varepsilon_M)$-BE $U_M$ of $\rlcM$, $(\alpha_K,a_K,\varepsilon_K)$-BE $U_K$ or $\rlcK$, oracles $O_x, O_f$ (see Definition~\ref{def:prob_setup_DAEs}), parameter $\sigma_2$ s.t. $\sigma_\min(\rlcM_2) \geq \sigma_2$.}
    \KwOutput{$\ket{\widehat{\Psi}}$ that is $\varepsilon$-close to the true history state $\ket{\Psi}$ (Theorem~\ref{thm:sim_ind1})} \vspace{2mm}
    Obtain $(\alpha_{P_0}, a_1,\varepsilon_1)$-BE $U_{P_0 P_1}$ of $\rlcP_0 \rlcP_1$ and $(\alpha_{\id - P_0 P_1},a_3,\varepsilon_2)$-BE $U_{\id - P_0 P_1}$ of $\id - \rlcP_0 \rlcP_1$ from Claim~\ref{claim:ode_BE_ind2} \\
    Obtain $(\alpha_{P_0 P_1 M_2^{-1}},a_2,\varepsilon_1')$-BE $U_{P_0 P_1 M_1^{-1}}$ of $\rlcP_0 \rlcP_1 \rlcM_1^{-1}$ (Claim~\ref{claim:ode_BE_ind2}) \\
    Obtain $(\alpha_{F_2},a_4,\varepsilon_2')$-BE $U_{F_2}$ of $\mathbf{F}_2$ (Claim~\ref{claim:lin_alg_BE_ind2}) \\
    Obtain $(\alpha_{\calL_y^{-1}}, a_5, \varepsilon_3)$-BE $U_{\calL_y^{-1}}$ of linear system solve corresponding to the block-encoding $U_{P_0 P_1 M_2^{-1} K}$ of $\rlcP_0 \rlcP_1 \rlcM_2^{-1} \rlcK$ (Claim~\ref{claim:ode_BE_ind2}) using Remark~\ref{remark:BE_ODE_solver}. \\
    Obtain $(\alpha_{G_2},a_6,\varepsilon_4)$-BE $U_{G_2}$ of $\rlcG_2$ from Claim~\ref{claim:lin_alg_BE_ind2} \\ \vspace{2mm}
    \Comment*[l]{Define parameters for solve}
    Set $h = O(\sigma_2 \uptau/\alpha_K)$ and $m = \ceil{T/h}$ \\
    Set $a= \sum_{i \in [6]}(a_i)$ \\
    Set $\normhist_0$ as in Eq.~\eqref{eq:algo_ind2_psi0} and $\normhist_0'$ as in Eq.~\eqref{eq:algo_ind2_phi0} \\
    Set $\beta_1 = \alpha_{\calL_y^{-1}} \alpha_{G_2} \normhist_0$, $\beta_2 = \normhist_0'$, and $\beta = \sqrt{\beta_1^2 + \beta_2^2}$ \\ 
    \vspace{2mm}
    \Comment*[l]{Construct input state}
    Obtain the unitary $U_{1}$ that prepares $\ket{\psi_0}$ (Eq.~\eqref{eq:algo_ind2_psi0}) from Claim~\ref{claim:initial_state_prep} \\
    Obtain the unitary $U_{2}$ that prepares $\ket{\phi_0}$ (Eq.~\eqref{eq:algo_ind2_phi0}) from Claim~\ref{claim:initial_state_prep} \\ 
    Initialize state $\ket{0}^a \ket{0,0}_{\calS}$ \\
    Apply a rotation $R$ on the first qubit in $\calS$ to get $\ket{0}^a \left[\frac{\beta_1}{\beta} \ket{0,0} + \frac{\beta_2}{\beta} \ket{1,0} \right]$ \\
    Apply $\id_a \otimes (\ket{0}\bra{0} \otimes U_1 + \ket{1}\bra{1} \otimes U_2)$
    to obtain $\ket{\Psi_0} = \ket{0}^a \left[\frac{\beta_1}{\beta} \ket{0} \ket{\psi_0} + \frac{\beta_2}{\beta} \ket{1} \ket{\phi_0} \right]$ \label{algo_step:ind1_Psi0} \\
    Apply $W_1 = \ket{0}\bra{0} \otimes [\ket{0}\bra{0} \otimes U_{P_0 P_1} + (\id - \ket{0}\bra{0})\otimes U_{P_0 P_1 M_2^{-1}}] + \ket{1}\bra{1} \otimes \id$ to $\ket{\Psi_0}$ \\
    Apply $W_2 = \ket{0}\bra{0} \otimes \id + \ket{1}\bra{1} \otimes [\ket{0}\bra{0} \otimes U_{\id - P_0 P_1} + (\id - \ket{0}\bra{0})\otimes U_{F_2}]$ to $W_1 \ket{\Psi_0}$ \\
    \vspace{2mm}
    \Comment*[l]{Step 1: Differential equation solve on constraint submanifold}
    Apply $W_3 = \ket{0}\bra{0} \otimes U_{\calL_y^{-1}} + \ket{1}\bra{1} \otimes \id$ to $W_2 W_1 \ket{\Psi_0}$ \\ \vspace{2mm}
    \Comment*[l]{Step 2: Linear algebraic solve}
    Apply $W_4 = \ket{0}\bra{0} \otimes \Big(\ket{0}\bra{0} \otimes \id + (\id - \ket{0}\bra{0}) \otimes U_{F} \Big) + \ket{1}\bra{1} \otimes \id$ to $W_3 W_2 W_1 \ket{\Psi_0}$  \\
    Apply $\id_a \otimes H \otimes I$ to $W_4 W_3 W_2 W_1 \ket{\Psi_0}$  \\
    \Return $\ket{\widehat{\Psi}}$ after post-selecting for $\ket{0}^a$ over the first register.
\end{algorithm} 
\end{myalgorithm}

\subsection{Measurement of observables}\label{sec:measure-observables}
In this section, we discuss how to extract classically relevant quantities from the output of the quantum $\dae$ solver. In particular, we will consider estimating $\vec{x}(T)^\dagger \mathbf{O} \vec{x}(T)$ for some linear operator $\mathbf{O} \in \mathbb{C}^{N \times N}$ (also called quadratic form when $\mathbf{O}$ is Hermitian). For example, in the case of $\rlc$ circuits, energy over a subset of elements takes such a form.

We now state a general lemma that comments on estimating such observables from the state that is prepared by our quantum $\dae$ solver (before post-selection and when the goal is to prepare the normalized state at time $T$).
\begin{lemma}\label{lem:meas_obs_gen}
Let $\varepsilon, \varepsilon_O, \delta, h \in (0,1)$, $b, a_O, N \in \mathbb{N}$, $\beta, T > 0$, $\alpha_O \geq 1$, and $m = \ceil{T/h}$. Define
$$
\ket{\Psi} := \frac{1}{\normhist} \sum_{j=0}^m \norm{\vec{x}(jh)} \ket{j, x(jh)}, \quad \ket{\widehat{\Psi}} := \frac{1}{\widehat{\normhist}} \sum_{j=0}^m \norm{\widehat{x}(jh)} \ket{j, \widehat{x}(jh)}.
$$
Suppose there is an algorithm $\calQ$ that prepares the state $\ket{\widehat{\phi}}$ in time $T_\calQ$, of the form
$$
\ket{\widehat{\phi}} = \ket{0}^b \frac{\widehat{\normhist}}{\beta}\ket{\widehat{\Psi}} + \ket{\junk}, \text{ where } \norm{\widehat{x}(T) - \vec{x}(T)} \leq \delta \norm{\vec{x}(T)}\footnote{We note that this is the condition that ensures $\norm{\ket{\widehat{x}(T)} - \ket{x(T)}} \leq 2 \delta$ in Corollaries~\ref{corr:sim_T_ind0},\ref{corr:rlc_sim_T_ind1} and hence, this is the right condition to consider here.}
$$
where $\ket{\junk}$ is some state that is orthogonal to every state with $\ket{0}^b$ on the first register. Suppose $\mathbf{O} \in \mathbb{C}^{N \times N}$ is a linear operator with an $(\alpha_O,a_O,\varepsilon_O)$-block-encoding $U_O$ implementable in time $T_O$. Then, there is a quantum algorithm that outputs $\widehat{\theta}$ such that $\Big| \widehat{\theta} - \vec{x}(T)^\dagger \mathbf{O} \vec{x}(T) \Big| \leq \varepsilon$, with the following complexity
\begin{align*}
    \text{sample complexity: } & O\left(\alpha_O^2 \beta^4/\varepsilon^2 \right), \\
    \text{gate complexity: } & O\left( \alpha_O^2 \beta^4 (b + \log m + T_O + T_\calQ)/\varepsilon^2 \right).
\end{align*}
It suffices to set $\varepsilon_O = \varepsilon/(3 \beta^2)$ and $\delta = \varepsilon/(9 \alpha_O \norm{\vec{x}(T)}^2)$. 
\end{lemma}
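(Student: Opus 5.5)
We estimate the quantity with a Hadamard test. The plan is to build a block-encoding of an operator $\mathbf{D}$ that simultaneously selects the flag subspace $\ket{0}^b$, the final time index $\ket{m}$, and applies $\mathbf{O}$ on the system register. Concretely, set
\[
\mathbf{D} := \ket{0^b}\bra{0^b}\otimes \ket{m}\bra{m}\otimes \mathbf{O}.
\]
By Claim~\ref{claim:proj_0state_BE}, $\ket{0^b}\bra{0^b}$ has a $(1,1,0)$-block-encoding with $O(b)$ gates. The same construction (conjugating by $X$ gates to map $\ket{m}$ to $\ket{0}$) gives a $(1,1,0)$-block-encoding of $\ket{m}\bra{m}$ with $O(\log m)$ gates. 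Taking tensor products via Lemma~\ref{lem:tensor_prod_BEs} then yields an $(\alpha_O, a_O+2, \varepsilon_O)$-block-encoding $U_D$ of $\mathbf{D}$ in time $O(b+\log m+T_O)$.

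Next we compute what the test measures. Using the form of $\ket{\widehat\phi}$ and the orthogonality of $\ket{\junk}$ to the $\ket{0}^b$ subspace,
\[
\bra{\widehat\phi}\mathbf{D}\ket{\widehat\phi} = \frac{\widehat{\normhist}^2}{\beta^2}\bra{\widehat\Psi}(\ket{m}\bra{m}\otimes\mathbf{O})\ket{\widehat\Psi} = \frac{1}{\beta^2}\,\widehat{x}(T)^\dagger\mathbf{O}\,\widehat{x}(T).
\]
Here the history-state amplitudes $\norm{\widehat x(jh)}$ exactly cancel the normalizations (taking $mh=T$). A Hadamard test with controlled $U_D$ and one call to $\calQ$ (and $\calQ^\dagger$ if needed) produces a $\pm1$ random variable whose mean is $\mathrm{Re}\bra{\widehat\phi}\alpha_O^{-1}\widetilde{\mathbf D}\ket{\widehat\phi}$, where $\widetilde{\mathbf D}$ is the encoded block. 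A second test with a phase gate gives the imaginary part if $\mathbf{O}$ is non-Hermitian.

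The estimator $\widehat\theta$ is $\alpha_O\beta^2$ times the empirical mean. Its error splits into three terms:
\begin{enumerate}[$(i)$]
\item \emph{Statistical error.} Hoeffding's inequality requires $O(\alpha_O^2\beta^4/\varepsilon^2)$ samples to make this at most $\varepsilon/3$.
\item \emph{Block-encoding error.} This is at most $\beta^2\varepsilon_O\,\norm{\widehat\phi}^2\le\beta^2\varepsilon_O$, so $\varepsilon_O=\varepsilon/(3\beta^2)$ makes it at most $\varepsilon/3$.
\item \emph{Vector error.} We bound $|\widehat x(T)^\dagger\mathbf O\widehat x(T)-\vec x(T)^\dagger\mathbf O\vec x(T)|$ by adding and subtracting $\widehat x(T)^\dagger\mathbf O\vec x(T)$. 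Using $\norm{\mathbf O}\le\alpha_O$ (up to $\varepsilon_O$) and $\norm{\widehat x(T)}\le(1+\delta)\norm{\vec x(T)}$, this is at most $\alpha_O(2+\delta)\delta\norm{\vec x(T)}^2\le 3\alpha_O\delta\norm{\vec x(T)}^2$. The choice $\delta=\varepsilon/(9\alpha_O\norm{\vec x(T)}^2)$ makes it at most $\varepsilon/3$.
\end{enumerate}

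Each sample costs $T_\calQ+T_O+O(b+\log m)$, which gives the stated gate complexity. The main point needing care is the first computation: the normalization factor $\widehat\normhist/\beta$ must cancel exactly against the history-state weights, leaving the unnormalized $\widehat x(T)$ scaled by the known $1/\beta^2$. This is what lets us recover the unnormalized quadratic form without knowing $\widehat\normhist$. It requires that the time register's final index corresponds exactly to $T$, which we will enforce by choosing $h=T/m$.
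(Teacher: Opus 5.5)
Your proposal is correct and follows the paper's proof essentially step for step. Both run a Hadamard test of $\ket{\widehat{\phi}}$ against an $(\alpha_O, a_O+2, \varepsilon_O)$-block-encoding of $\ket{0}^b\bra{0}^b\otimes\ket{T}\bra{T}\otimes\mathbf{O}$, built from Claim~\ref{claim:proj_0state_BE} and Lemma~\ref{lem:tensor_prod_BEs}, rescale the estimate by $\alpha_O\beta^2$, and split the error into statistical, block-encoding, and $\widehat{x}(T)$-versus-$\vec{x}(T)$ terms with the same parameter choices $\varepsilon/(3\alpha_O\beta^2)$, $\varepsilon_O=\varepsilon/(3\beta^2)$ and $\delta=\varepsilon/(9\alpha_O\norm{\vec{x}(T)}^2)$; your remarks that the final index must satisfy $mh=T$ and that a phase-shifted second test is needed for non-Hermitian $\mathbf{O}$ make explicit what the paper leaves implicit.
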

\begin{proof}
The protocol to accomplish this task will involve a Hadamard test of $\ket{\widehat{\phi}}$ and an appropriate block-encoding of $\mathbf{O}$. We will now argue why this is the case and how the estimate is produced. Let $t_j = jh, \forall j \in [m]_0$. We are given that the state $\ket{\widehat{\phi}}$ is of the form
\begin{equation}\label{eq:meas_obs_hatphi}
\ket{\phi} := \ket{0}^{b} \frac{1}{\beta} \widehat{\Psi} + \ket{\junk}, \text{ where } \widehat{\Psi} = \sum_{j=0}^m \ket{j} \otimes \widehat{x}(t_j)
\end{equation}
where $\ket{\junk}$ is some unnormalized state that is orthogonal to every state with $\ket{0}^{b}$ on the first register and we have expressed it in terms of the vector $\widehat{\Psi}$. Let us also define the state $\ket{\phi}$ as
\begin{equation}\label{eq:meas_obs_hatphi}
\ket{\phi} := \ket{0}^{b} \frac{1}{\beta} \Psi + \ket{\junk'}, \text{ where } \Psi = \sum_{j=0}^m \ket{j} \otimes \vec{x}(t_j),
\end{equation}
where $\ket{\junk'}$ is some unnormalized state (not necessarily the same as $\ket{\junk}$) that is orthogonal to every state with $\ket{0}^{b}$

We now create a block-encoding for $\mathbf{D} = \ket{0}^{b}\bra{0}^{b} \otimes \ket{T}\bra{T} \otimes \mathbf{O}$. From Claim~\ref{claim:proj_0state_BE}, we can obtain a $(1,1,0)$-block-encoding for $\ket{0}^{b}\bra{0}^{b}$, which we denote by $V_1$, with gate complexity $O(b)$. Similarly, we can obtain a $(1,1,0)$ block-encoding for $\ket{T}\bra{T}$, which we denote by $V_2$. This will have gate complexity $O(\log m)$. We can combine $V_1$ and $V_2$ with $U_O$ (using Lemma~\ref{lem:tensor_prod_BEs}) to obtain a $(\alpha_O,a_O+2,\varepsilon_O)$-block-encoding for $\mathbf{D}$, which we denote by $U_{D}$. Let us denote $\widetilde{D} = (\bra{0}^{a_O+2} \otimes \id) U_D (\ket{0}^{a_O+2} \otimes \id)$. Starting from the definition of $U_D$, we then observe that the action of $U_D$ on $\ket{\widehat{\phi}}$ satisfies
\begin{equation}\label{eq:tildeD_good}
\norm{\mathbf{D} - \alpha_O \widetilde{D}} \leq \varepsilon_O \implies \norm{\la \widehat{\phi} | \mathbf{D} | \widehat{\phi} \ra - \alpha_O \la \widehat{\phi} |\widetilde{D}| \widehat{\phi} \ra} \leq \varepsilon_O.
\end{equation}
We note that
\begin{equation}\label{eq:interim_phiDphi}
    \la \phi | \mathbf{D} | \phi \ra = \frac{\vec{x}(T)^\dagger \mathbf{O} \vec{x}(T)}{\beta^2}
\end{equation}

Let $\varepsilon_1 \in (0,1)$ be an error parameter to be determined later. We now produce an estimate $\mu$ of $\la \widehat{\phi} | \widetilde{D} | \widehat{\phi} \ra$ using the Hadamard test such that $|\mu - \la \widehat{\phi} | \widetilde{D} | \widehat{\phi} \ra| \leq \varepsilon_1$. This requires a sample complexity of $O(1/\varepsilon_1^2)$. Let us now denote $\widehat{\theta} := \alpha \mu$ where we define $\alpha := \alpha_O \beta^2$. We now argue that this is close to $\vec{x}(T)^\dagger \mathbf{O} \vec{x}(T)$ by evaluating
\begin{align}
& \Big| \widehat{\theta} - \vec{x}(T)^\dagger \mathbf{O} \vec{x}(T) \Big| \nonumber \\
\leq \,\,& \Big| \alpha \mu - \alpha \la \widehat{\phi} | \widetilde{D} | \widehat{\phi} \ra\Big| + \Big|\alpha \la \widehat{\phi} | \widetilde{D} | \widehat{\phi} \ra - \beta^2 \la \phi | \mathbf{D} | \phi \ra \Big| \nonumber  \\
\leq \,\, & \alpha \varepsilon_1 + \Big|\alpha \la \widehat{\phi} | \widetilde{D} | \widehat{\phi} \ra - \beta^2 \la \widehat{\phi}|\mathbf{D} | \widehat{\phi} \ra \Big| + \Big|\beta^2 \la \widehat{\phi}|\mathbf{D} | \widehat{\phi} \ra - \beta^2 \la \phi | \mathbf{D} | \phi \ra \Big| \nonumber  \\
= \,\, & \alpha \varepsilon_1 + \beta^2 \Big|\alpha_O \la \widehat{\phi} | \widetilde{D} | \widehat{\phi} \ra -  \la \widehat{\phi}|\mathbf{D} | \widehat{\phi} \ra \Big| + \beta^2 \Big| \la \widehat{\phi}|\mathbf{D} | \widehat{\phi} \ra - \la \phi | \mathbf{D} | \phi \ra \Big| \nonumber  \\
\leq \,\, & \alpha \varepsilon_1 + \beta^2 \varepsilon_O + \beta^2 \Big| \la \widehat{\phi}|\mathbf{D} | \widehat{\phi} \ra - \la \phi | \mathbf{D} | \phi \ra \Big|,
\label{eq:diff_est_theta}
\end{align}
where we added and subtracted $\alpha^2 \la \widehat{\phi} | \widetilde{D} | \widehat{\phi} \ra$ before applying the triangle inequality, and used Eq.~\eqref{eq:interim_phiDphi} to substitute for $\vec{x}(T)^\dagger \mathbf{O} \vec{x}(T)$ in the first line. In the second line, we added and subtracted $\beta^2 \la \widehat{\phi}|\mathbf{D} | \widehat{\phi} \ra$ in the second term before taking the triangle inequality, and used that $|\mu - \la \widehat{\phi} | \widetilde{D} | \widehat{\phi} \ra| \leq \varepsilon_1$. In third line, we used the definition of $\alpha = \alpha_O \beta^2$ and in the fourth line, used Eq.~\eqref{eq:tildeD_good}. To bound the third term from Eq.~\eqref{eq:diff_est_theta}, consider the vector $\vec{e} = \widehat{x}(T) - \vec{x}(T)$ and note that $\norm{\vec{e}} \leq \delta \norm{\vec{x}(T)}$ which follows from the promise in the lemma statement. We then have
\begin{align}
\Big| \la \widetilde{\phi}|\mathbf{D} | \widetilde{\phi} \ra - \la \phi | \mathbf{D} | \phi \ra \Big| 
&=  \frac{1}{\beta^2}\Big| \widehat{x}(T)^\dagger \mathbf{O} \widehat{x}(T) - \vec{x}(T)^\dagger \mathbf{O} \vec{x}(T) \Big| \nonumber\\
&\leq  \frac{1}{\beta^2}\Big| \widehat{x}(T)^\dagger \mathbf{O} \widehat{x}(T) - \widehat{x}(T)^\dagger \mathbf{O} \vec{x}(T) \Big| + \frac{1}{\beta^2} \Big| \widehat{x}(T)^\dagger \mathbf{O} \vec{x}(T) - \vec{x}(T)^\dagger \mathbf{O} \vec{x}(T) \Big| \nonumber\\
&\leq  \frac{1}{\beta^2}\Big| \widehat{x}(T)^\dagger \mathbf{O} (\widehat{x}(T) - \vec{x}(T)) \Big| + \frac{1}{\beta^2} \Big| (\widehat{x}(T) - \vec{x}(T))^\dagger \mathbf{O} \vec{x}(T) \Big| \nonumber\\
&\leq  \frac{1}{\beta^2}\Big( \norm{\widehat{x}(T)} \cdot \norm{\mathbf{O}} \cdot \norm{\vec{e}} + \norm{\vec{x}(T)} \cdot \norm{\mathbf{O}} \cdot \norm{\vec{e}}\Big) \nonumber\\
&\leq  \frac{1}{\beta^2}\Big( \alpha_O \delta (1 + \delta) \norm{\vec{x}(T)}^2 + \delta \alpha_O \norm{\vec{x}(T)}^2 \Big) \nonumber\\
&\leq  \frac{3 \delta \alpha_O \norm{\vec{x}(T)}^2}{\beta^2},
\label{eq:tildephi_phi_D}
\end{align}
where we have used the definitions of $\mathbf{D},\ket{\phi},\ket{\widetilde{\phi}}$ in the first line, added and subtracted $\widehat{x}(T)^\dagger \mathbf{O} \vec{x}(T)$ before applying the triangle inequality in the second line, used the submultiplicativity of norms in the fourth line, and noted that $\norm{\vec{e}} \leq \delta \norm{\vec{x}(T)}$ as well as $\norm{O} \leq \alpha_O$ and $\norm{\widehat{x}(T)} \leq \norm{e} + \norm{\vec{x}(T)} \leq (1+\delta)\norm{\vec{x}(T)}$. Substituting Eq.~\eqref{eq:tildephi_phi_D} into Eq.~\eqref{eq:diff_est_theta}, we obtain
\begin{equation}
\Big| \widehat{\theta} - \vec{x}(T)^\dagger \mathbf{O} \vec{x}(T) \Big| \leq \alpha \varepsilon_1 + \beta^2 \varepsilon_O + 3 \delta \alpha_O \norm{\vec{x}(T)}^2.
\end{equation}
Setting 
$$
\varepsilon_1 = \varepsilon/(3\alpha) = \varepsilon/(3 \alpha_O \beta^2), \quad \varepsilon_O = \varepsilon/(3 \beta^2), \quad \delta = \varepsilon/(9 \alpha_O \norm{\vec{x}(T)}^2)
$$ 
would then give us
$$
\Big| \widehat{\theta} - \vec{x}(T)^\dagger \mathbf{O} \vec{x}(T) \Big| \leq \varepsilon.
$$
We thus require a sample complexity of $O(1/\varepsilon_1^2) = O(\alpha_O^2 \beta^4/\varepsilon^2)$. This is the number of shots required from the Hadamard test. The circuit corresponding to the Hadamard circuit requires implementation of controlled-$U_D$ and preparation of $\ket{\widehat{\phi}}$ which requires calls to algorithm $\calQ$. Controlled-$U_D$ has gate complexity $O(b + \log m + T_O)$, and a call to $\calQ$ has gate complexity $T_\calQ$. The overall gate complexity is then
$$
O\left( \alpha_O^2 \beta^4 (b + \log m + T_O + T_\calQ)/\varepsilon^2 \right).
$$
This completes the proof.

\end{proof}
\emph{\bf Remark:} Note that the protocol of Lemma~\ref{lem:meas_obs_gen} does not need to know $\norm{\vec{x}(T)}$ to succeed. This is not used as part of the estimation procedure. We only need to provide an upper bound on $\norm{\vec{x}(T)}$ to comment on the specification of $\delta$ and hence, comment on the time complexity of the algorithm $\calQ$ which will be the quantum $\dae$ solver.

We now apply the above lemma to the quantum outputs from our quantum $\dae$ solver to obtain estimates of the form $\vec{x}(T)^\dagger \mathbf{O} \vec{x}(T)$. This is described formally below for different index values.
\paragraph{$\dae$ of index $0$.} We now state our main result commenting on observable estimation for index $0$.
\begin{theorem}\label{thm:output_ind0}
Let $\upsilon, \sigma \in (0,1)$. Consider the problem setup of Definition~\ref{def:prob_setup_DAEs} and suppose that the $\dae$ has index $0$. Let the minimum singular value of $\rlcM$ satisfy $\sigma_\min(\rlcM) \geq \sigma$. Let $\mathbf{O} \in \mathbb{C}^{N \times N}$ be a linear operator with an $(\alpha_O,a_O,\varepsilon_O)$-block-encoding $U_O$. Let $\calQ$ be the algorithm of Corollary~\ref{corr:sim_T_ind0}. Then, there is a quantum algorithm that outputs $\widehat{\theta}$ with success probability $\geq 2/3$ such that $\Big| \widehat{\theta} - \vec{x}(T)^\dagger O \vec{x}(T)\Big| \leq \upsilon$. The algorithm uses the following complexity 
\begin{align*}
\text{Calls to }\calQ \,\, (S_\calQ)&: O\Big(\alpha_O^2 T^4 \alpha_K^2 \expnorm(-\rlcM^{-1} \rlcK)^4 (\sigma^2 \alpha_K^2 \norm{\vec{x}_0}^4 + T^2 \norm{f}^4)/(\sigma^6 \upsilon^2) \Big) \,,\\
\text{Queries to }U_{M}&: \widetilde{O}\left(S_\calQ T \, \expnorm(-\rlcM^{-1}\rlcK) \cdot \poly\Big(\kappa_M, \alpha_K, \calC_f, \log(\alpha_O \norm{\vec{x}(T)}/\upsilon) \Big) \right) \,,\\
\text{Queries to }U_{K}&: \widetilde{O}\left(S_\calQ T \, \expnorm(-\rlcM^{-1}\rlcK) \cdot \poly\Big(\kappa_M, \alpha_K, \calC_f, \log(\alpha_O \norm{\vec{x}(T)}/\upsilon) \Big)\right) \,,\\ 
\text{Queries to }O_{f},O_x&: \widetilde{O}\left(S_\calQ T \alpha_K \kappa_M \expnorm(-\rlcM^{-1} \rlcK) \cdot \log(\alpha_O \norm{\vec{x}(T)}/\upsilon)\right) \,,
\end{align*}
where $\kappa_M := \alpha_M/\sigma$ and $\mathcal{C}_f := \log\left(1 + \frac{T e^2 \|\vec{f}\|}{\sigma \|\vec{x}(T)\|}\right)$. An additional gate complexity of $O(S_\calQ T_O)$ for implementing $U_O$ and $O(1)$ times the number of uses of $U_M$ and $U_K$, is used. It is sufficient to set the precisions of $U_M$, $U_K$, $U_O$ as
$$
\varepsilon_M = \varepsilon_K = \poly\left(\frac{\sigma \upsilon}{\alpha_O \alpha_K T \expnorm(-\rlcM^{-1} \rlcK) \calC_f \norm{\vec{x}(T)}^2}\right) \text{ and } \varepsilon_O = O\left(\frac{\upsilon \alpha_O^2}{S_Q}\right).
$$
\end{theorem}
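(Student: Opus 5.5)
The plan is to instantiate Lemma~\ref{lem:meas_obs_gen} with $\calQ$ taken to be the unpostselected circuit underlying Corollary~\ref{corr:sim_T_ind0}. Concretely, $\calQ$ is $U_{\calL^{-1}} W_1 U_0$ applied to $\ket{0}^{a_1+a_3}\ket{0,0,0}$, without the final measurement and without amplitude amplification. By the proof of Theorem~\ref{thm:sim_ind0} and Corollary~\ref{corr:sim_T_ind0}, this circuit outputs a state of the form $\ket{0}^b(\widehat{\normhist}/\beta)\ket{\widehat{\Psi}}+\ket{\junk}$. Here $b=a_1+a_3$, and $\widehat{\Psi}$ encodes a trajectory $\widehat{x}$ with $\norm{\widehat{x}(T)-\vec{x}(T)}\le\delta\norm{\vec{x}(T)}$, obtained from the Taylor-truncation and block-encoding errors in Eq.~\eqref{eq:inter1_hist_states_close1}. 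Reading off the prefactor from Eq.~\eqref{eq:ind0_final_state} gives $\beta=\normhist_0\alpha_3/\alpha$-type constants. Explicitly, $\beta = \alpha_3 \normhist_0$ with $\alpha_3=4e\kappa$, $\kappa = T\alpha_K\expnorm(-\rlcM^{-1}\rlcK)/\sigma$, and $\normhist_0^2 = \norm{\vec{x}_0}^2 + m h^2\alpha_1^2\norm{\vec{f}}^2$, where $\alpha_1 = 2/\sigma$, $h=\sigma/\norm{\rlcK}$ and $mh=T$. I will take $h\approx \sigma/\alpha_K$ to keep everything in terms of $\alpha_K$.

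The second step is to bound $\beta^4$. We have $\beta^2 = O(\kappa^2(\norm{\vec{x}_0}^2 + T h \norm{\vec{f}}^2/\sigma^2))$, and $Th/\sigma^2 = T/(\sigma\alpha_K)$. Squaring and using $(a+b)^2\le 2a^2+2b^2$ gives
$$\beta^4 = O\Big(\frac{T^4\alpha_K^4\expnorm^4}{\sigma^4}\big(\norm{\vec{x}_0}^4 + T^2\norm{\vec{f}}^4/(\sigma^2\alpha_K^2)\big)\Big),$$
which rearranges to $O(T^4\alpha_K^2\expnorm^4(\sigma^2\alpha_K^2\norm{\vec{x}_0}^4+T^2\norm{\vec{f}}^4)/\sigma^6)$. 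Lemma~\ref{lem:meas_obs_gen} then gives sample complexity $S_\calQ=O(\alpha_O^2\beta^4/\upsilon^2)$, which matches the stated count of calls to $\calQ$.

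Third, I set $\delta = \upsilon/(9\alpha_O\norm{\vec{x}(T)}^2)$ as the lemma requires. I then feed this $\delta$ into the choices of $k$, $\varepsilon_1$ and $\varepsilon_3$ in Corollary~\ref{corr:sim_T_ind0}. This makes the $\log(1/\varepsilon)$ factors there become $\log(\alpha_O\norm{\vec{x}(T)}/\upsilon)$, and the required $\varepsilon_M,\varepsilon_K$ become polynomial in $\sigma\upsilon/(\alpha_O\alpha_K T\expnorm\,\calC_f\norm{\vec{x}(T)}^2)$. Since amplitude amplification is not used, a single run of $\calQ$ costs the Corollary's per-run queries without the $g$ or amplification factor. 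That per-run cost is $\widetilde{O}(T\expnorm\cdot\poly(\kappa_M,\alpha_K,\calC_f,\log(\cdot)))$ queries to $U_M$ and $U_K$, following Remark~\ref{remark:BE_ODE_solver} and Claim~\ref{claim:dynamics_BE_ind0}, plus $O(1)$ queries to $O_x,O_f$. Multiplying by $S_\calQ$ gives the remaining bullet items; the stated $O_x,O_f$ bound is looser than this and therefore still holds. The $U_O$ cost is $O(S_\calQ T_O)$, and $\varepsilon_O$ is inherited from the lemma. The constant success probability comes from a median-of-means or Chernoff argument on the Hadamard-test estimator.

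The main obstacle is certifying the precise form of the unpostselected state required by Lemma~\ref{lem:meas_obs_gen}. The lemma needs the good branch to be exactly $(1/\beta)\sum_j\ket{j}\otimes\widehat{x}(t_j)$ with a \emph{known} $\beta$, whereas the block-encoding errors ($\varepsilon_3$ in $U_{\calL^{-1}}$, $\varepsilon_1$ in $W_1$) perturb that vector. My first attempt will be to define $\widehat{x}$ as whatever the approximate circuit actually encodes and absorb all these errors into $\norm{\widehat{x}(T)-\vec{x}(T)}\le\delta\norm{\vec{x}(T)}$. Doing so requires bounding $\norm{\alpha_3\calL_1-\calL^{-1}}\,\normhist_1\le\varepsilon_3\normhist_1$ against $\delta\norm{\vec{x}(T)}$. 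This costs an extra factor of $\normhist_1/\norm{\vec{x}(T)}$ (roughly $g\sqrt{m}$) inside a logarithm only, so it should be harmless up to the $\widetilde{O}$.
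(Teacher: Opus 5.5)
Your proposal is correct and follows the paper's proof essentially step for step. You take the unpostselected, unamplified circuit of Corollary~\ref{corr:sim_T_ind0} as $\calQ$, identify $\beta=\alpha_{\calL^{-1}}\normhist_0$ and bound $\beta^4$ as the paper does, apply Lemma~\ref{lem:meas_obs_gen} with accuracy $O(\upsilon/(\alpha_O\norm{\vec{x}(T)}^2))$, and multiply the per-run cost (without the factor $g$) by $S_\calQ$. Your explicit absorption of the $\varepsilon_3$ error of $U_{\calL^{-1}}$ into $\widehat{x}$ is more careful than the paper, which passes over this point; note, however, that it also puts a polynomial factor in $1/g$ into the required $\varepsilon_M,\varepsilon_K$, not only a logarithm into the query counts.
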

\begin{proof}
Let $h = O(\sigma/\alpha_K)$ and $m = \ceil{T/h}$. We will use the algorithm of Corollary~\ref{corr:sim_T_ind0} to prepare a state $\ket{\widehat{\phi}}$ (before post-selection) which we define shortly and then the protocol of Lemma~\ref{lem:meas_obs_gen} which uses a Hadamard test of $\ket{\widehat{\phi}}$ and a block-encoding involving $\mathbf{O}$, which we will also define shortly. We also use the notation from Corollary~\ref{corr:sim_T_ind0} (and in turn Theorem~\ref{thm:sim_ind0}) here.

We use the algorithm of Corollary~\ref{corr:sim_T_ind0} to produce the state $\ket{\widehat{\phi}}$ before measuring which takes the following form
$$
\ket{\widehat{\phi}} = \frac{1}{\alpha_{\calL^{-1}} \calZ_0} \ket{0}^b \Psi + \ket{\junk}, \text{ where } \Psi = \sum_{j=0}^m \ket{j} \otimes \widehat{x}(t_j),
$$
where $\ket{\junk}$ is some state that is orthogonal to every state with $\ket{0}^b$ on the first register, $\normhist_0$ is the normalization of the initial state to the algorithm of Corollary~\ref{corr:sim_T_ind0} (Eq.~\eqref{eq:ind0_interim_init0}) given by $\normhist_0 = \sqrt{\norm{\vec{x}_0}^2 + m \alpha_1^2 h^2 \norm{f}^2}$, $\alpha_1 = 2/\sigma$ is the subnormalization of $U_{M^{-1}}$ (Claim~\ref{claim:inv_M_BE_ind0}), and $\alpha_{\calL^{-1}} = O(T \alpha_K \expnorm(-\rlcM^{-1} \rlcK)/\sigma)$ is the subnormalization of $U_{\calL^{-1}}$ defined with respect to the differential operator $-\rlcM^{-1} \rlcK$ (see Remark~\ref{remark:BE_ODE_solver} and Eq.~\eqref{eq:ind0_inv_calL_BE}). Let us denote $\beta = \alpha_{\calL^{-1}} \normhist_0$. We note that
$$
\beta^2 = O\left(T^2 \alpha_K \expnorm(-\rlcM^{-1} \rlcK)^2(\sigma \alpha_K \norm{\vec{x}_0}^2 + T \norm{f}^2)/\sigma^3 \right),
$$
where we used $mh = T$ and $h=O(\sigma/\alpha_K)$.

We define $\mathbf{D} = \ket{0}^b \bra{0}^b \otimes \ket{T}\bra{T} \otimes U_O$ and construct a $(\alpha_O, a_O +2, \varepsilon_O)$-block-encoding of $\mathbf{D}$, which we denote by $U_D$. We can do this as given in the proof of Lemma~\ref{lem:meas_obs_gen}. Let us denote $\widetilde{D} = (\bra{0}^{a_O + 2} \otimes \id)U_D(\ket{0}^{a_O+2} \otimes \id)$. Using Lemma~\ref{lem:meas_obs_gen}, the desired estimate is then set as $\widehat{\theta} = \alpha_O \beta^2 \mu$ where $\mu$ is a $\upsilon/(3\alpha_O \beta^2)$-estimate of $\la \widehat{\phi} | \widetilde{D} | \widehat{\phi} \ra$ which can be obtained via a Hadamard test. The corresponding sample complexity and hence calls to the algorithm of Corollary~\ref{corr:sim_T_ind0} to prepare $\ket{\widehat{\phi}}$ is
$$
O(\alpha_O^2 \beta^4/\upsilon^2) = O\Big(\alpha_O^2 T^4 \alpha_K^2 \expnorm(-\rlcM^{-1} \rlcK)^4 (\sigma^2 \alpha_K^2 \norm{\vec{x}_0}^4 + T^2 \norm{f}^4)/(\sigma^6 \upsilon^2) \Big).
$$
The accuracy $\varepsilon$ in the output of Corollary~\ref{corr:sim_T_ind0}, which appears as $\norm{\vec{x}(T) - \widetilde{x}(T)} \leq (\varepsilon/2) \vec{x}(T)$\footnote{The guarantee of Corollary~\ref{corr:sim_T_ind0} is written in terms of $\norm{\ket{x(T)} - \ket{\widehat{x}(T)}} \leq \varepsilon$ but this is obtained as a consequence of $\norm{\vec{x}(T) - \widetilde{x}(T)} \leq (\varepsilon/2) \vec{x}(T)$ (see proof).} needs to be set as $\varepsilon =O(\upsilon/(\alpha_O \norm{\vec{x}(T)}^2))$.

The corresponding gate complexity is then the sample complexity times the gate complexity of Corollary~\ref{corr:sim_T_ind0} for this value of $\varepsilon$, without taking into consideration of the probability of success and hence an improvement by a factor of $g$. This completes the proof.
\end{proof}

\paragraph{$\dae$ of index $1$.} We now state our main result commenting on observable estimation for index $1$.
\begin{theorem}\label{thm:output_ind1}
Let $\upsilon, \sigma, \sigma_1 \in (0,1)$. Consider the problem setup of Definition~\ref{def:prob_setup_DAEs} and suppose that the $\dae$ has index $1$. Let the minimum non-zero singular value of $\rlcM$ satisfy $\sigma^+_\min(\rlcM) \geq \sigma$ and that of $\rlcM_1$ satisfy $\sigma_\min(\rlcM_1) \geq \sigma_1$. Let $\mathbf{O} \in \mathbb{C}^{N \times N}$ be a linear operator with an $(\alpha_O,a_O,\varepsilon_O)$-block-encoding $U_O$. Let $\calQ$ be the algorithm of Corollary~\ref{corr:sim_T_ind1}. Then, there is a quantum algorithm that outputs $\widehat{\theta}$ with success probability $\geq 2/3$ such that $\Big| \widehat{\theta} - \vec{x}(T)^\dagger O \vec{x}(T)\Big| \leq \upsilon$. Define
$$
\kappa_M := \alpha_M/\sigma, \quad \kappa_{M_1} = (\alpha_M + \alpha_K)/\sigma_1, \quad \calC_f := \log\left(1 + \frac{T e^2 f_{\max} }{\sigma_1 \norm{\vec{x}(T)}}\right).
$$
The algorithm uses the following complexity 
\begin{align*}
\text{Calls to }\calQ \,\, (S_\calQ)&: \widetilde{O}\left(\alpha_O^2 \cdot \poly(T \kappa_M \alpha_K \expnorm(\calA)(\|\vec{x}_0\| + \|\vec{f}\|)/\sigma_1 \Big)/\upsilon^2 \right) \,,\\
\text{Queries to }U_{M}&: \widetilde{O}\Big(S_\calQ T \expnorm(- \rlcP_0 \rlcM_1^{-1}\rlcK) \cdot \poly\left(\kappa_M, \kappa_{M_1}, \alpha_K, \log(\alpha_O \norm{\vec{x}(T)}/\varepsilon), \mathcal{C}_f)\right) \Big) \,,\\
\text{Queries to }U_{K}&: \widetilde{O}\Big(S_\calQ T \expnorm(- \rlcP_0 \rlcM_1^{-1}\rlcK) \cdot \poly\left(\kappa_M, \kappa_{M_1}, \alpha_K, \log(\alpha_O \norm{\vec{x}(T)}/\varepsilon), \mathcal{C}_f)\right) \Big) \,,\\ 
\text{Queries to }O_{f},O_x&: O\left(S_\calQ T \alpha_K \expnorm(- \rlcP_0 \rlcM_1^{-1}\rlcK)/\sigma_1 \cdot \log(\alpha_O \norm{\vec{x}(T)}/\varepsilon)\right) \,.
\end{align*}
An additional gate complexity of $O(S_\calQ T_O)$ for implementing $U_O$ and $O(1)$ times the number of uses of $U_M$ and $U_K$, is used. It is sufficient to set the precisions of $U_M$, $U_K$, $U_O$ as
$$
\varepsilon_M = \varepsilon_K = \poly\left(\frac{\sigma \sigma_1 \upsilon}{\alpha_O \alpha_K T \expnorm(-\rlcP_0 \rlcM_1^{-1}\rlcK) \calC_f \norm{\vec{x}(T)}^2}\right) \text{ and } \varepsilon_O = O\left(\frac{\upsilon \alpha_O^2}{S_Q}\right).
$$
\end{theorem}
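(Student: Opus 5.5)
The plan mirrors the proof of Theorem~\ref{thm:output_ind0}: run the algorithm $\calQ$ of Corollary~\ref{corr:sim_T_ind1} without its final post-selection, and feed the resulting state to the Hadamard-test protocol of Lemma~\ref{lem:meas_obs_gen}.

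\textbf{Step 1: form of the prepared state.} First we record the form of the state $\calQ$ produces before measuring $\ket{0}^a$. By the correctness analysis of Algorithm~\ref{algo:QDAE_solver_ind1} (Claims~\ref{claim:ind1_ode_input_state}--\ref{claim:ind1_correctness}, with the Taylor truncation chosen via Claim~\ref{claim:diff_taylor_truth_approx_input}$(i)$ as in Corollary~\ref{corr:sim_T_ind1}), after the final Hadamard this state is
\[
\ket{\widehat{\phi}} = \ket{0}^{b}\frac{1}{\sqrt{2}\beta}\sum_{j=0}^m \ket{j}\otimes \widehat{x}(t_j) + \ket{\junk}.
\]
Here $\beta=\sqrt{\beta_1^2+\beta_2^2}$ with $\beta_1=\alpha_{\calL_y^{-1}}\alpha_F\normhist_0$ and $\beta_2=\normhist_0'$. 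The final-time component satisfies $\norm{\widehat{x}(T)-\vec{x}(T)}\le \delta\norm{\vec{x}(T)}$.

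This is exactly the hypothesis of Lemma~\ref{lem:meas_obs_gen} with $\sqrt{2}\beta$ in place of $\beta$. We invoke that lemma with
\[
\delta=\upsilon/(9\alpha_O\norm{\vec{x}(T)}^2), \qquad \varepsilon_O=\upsilon/(6\beta^2).
\]
The lemma builds a block-encoding of $\mathbf{D}=\ket{0}^b\bra{0}^b\otimes\ket{T}\bra{T}\otimes\mathbf{O}$ from Claim~\ref{claim:proj_0state_BE} and Lemma~\ref{lem:tensor_prod_BEs}. It then outputs $\widehat\theta=2\alpha_O\beta^2\mu$, using $O(\alpha_O^2\beta^4/\upsilon^2)$ Hadamard-test shots, each requiring one call to $\calQ$.

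\textbf{Step 2: bounding $\beta^2$.} This step determines $S_\calQ$. We reuse the coarse bound from the proof of Theorem~\ref{thm:sim_ind1}, with $mh=T$ and $h=O(\sigma_1/\alpha_K)$:
\[
\beta^2=O\Big(\big(\kappa_M^2+T^2\kappa_M^3\alpha_K^4\expnorm(\calA)^2/\sigma_1^4\big)\norm{\vec{x}_0}^2+\big(T\alpha_K\kappa_M^2/\sigma_1^3+T^3\kappa_M^4\alpha_K^3\expnorm(\calA)^2/\sigma_1^5\big)\norm{\vec{f}}^2\Big).
\]
The inputs to this bound are: $\alpha_{P_0},\alpha_{Q_0}=O(\kappa_M)$ from Claim~\ref{claim:projs_BE_ind1}; $\alpha_{P_0M_1^{-1}},\alpha_{Q_0M_1^{-1}}=O(\kappa_M/\sigma_1)$ from Claims~\ref{claim:ode_BE_ind1} and~\ref{claim:lin_alg_BE_ind1}; $\alpha_F=O(\kappa_M\alpha_K/\sigma_1)$; and $\alpha_{\calL_y^{-1}}=4e\kappa$ with $\kappa=T\norm{\calA}\expnorm(\calA)\le T\kappa_M\alpha_K\expnorm(\calA)/\sigma_1$ from Remark~\ref{remark:BE_ODE_solver}. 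Hence $\beta^4\le\poly\big(T\kappa_M\alpha_K\expnorm(\calA)(\norm{\vec{x}_0}+\norm{\vec{f}})/\sigma_1\big)$, which gives the stated $S_\calQ$.

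\textbf{Step 3: per-call cost.} Each call to $\calQ$ costs the complexity of Corollary~\ref{corr:sim_T_ind1} at accuracy $\varepsilon=\Theta(\delta)=\Theta(\upsilon/(\alpha_O\norm{\vec{x}(T)}^2))$. This is where $\log(\alpha_O\norm{\vec{x}(T)}/\upsilon)$ enters the query bounds, and it also fixes the required precisions $\varepsilon_M,\varepsilon_K$. Since we do not post-select, the amplitude-amplification factor $g$ is dropped, as in Theorem~\ref{thm:output_ind0}. Multiplying by $S_\calQ$ gives the totals, and the controlled-$U_D$ adds $O(S_\calQ T_O)$ gates.

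\textbf{Main obstacle.} The delicate point is verifying that the un-post-selected output really has the clean form $\ket{0}^b\widehat{\Psi}/(\sqrt2\beta)$, with an explicitly known normalization $\beta$ and without the division by $\norm{\widehat\Psi}$. The issue is the approximation error $E$ in the product of $U_F$ and $U_{\calL_y^{-1}}$ (Claim~\ref{claim:ind1_correctness}): it perturbs the unnormalized vector, not a normalized one. I would handle this by absorbing $E\Psi_1$ into $\widehat{x}(T)$ and bounding $\norm{E\Psi_1}$ relative to $\norm{\vec{x}(T)}$ rather than $\mu$. This requires tightening $\varepsilon_3,\varepsilon_4$ by an extra factor $\norm{\vec{x}(T)}/\normhist_0$, which only affects polylogarithmic terms.
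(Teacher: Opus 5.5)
Your proposal is correct and follows the paper's proof essentially step for step. Like the paper, you run the index-one final-time solver of Corollary~\ref{corr:sim_T_ind1} without post-selection, read off the known normalization $\beta=\sqrt{\beta_1^2+\beta_2^2}$ and its coarse bound to get $S_\calQ=O(\alpha_O^2\beta^4/\upsilon^2)$ via Lemma~\ref{lem:meas_obs_gen}, and set the solver accuracy to $\Theta(\upsilon/(\alpha_O\norm{\vec{x}(T)}^2))$ with the factor $g$ dropped. Your version is slightly more careful than the paper's in two respects: you keep the $\sqrt{2}$ from the final Hadamard, giving $\widehat\theta=2\alpha_O\beta^2\mu$ rather than $\alpha_O\beta^2\mu$, and you explicitly bound the unnormalized error $E\Psi_1$ relative to $\norm{\vec{x}(T)}$, which the paper only asserts in a footnote.
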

\begin{proof}
The proof proceeds similarly to that of Theorem~\ref{thm:output_ind0}. Let $h = O(\sigma_1/\alpha_K)$ and $m = \ceil{T/h}$. We will use the algorithm of Corollary~\ref{corr:sim_T_ind1} to prepare a state $\ket{\widehat{\phi}}$ (before post-selection) which we define shortly and then the protocol of Lemma~\ref{lem:meas_obs_gen} which uses a Hadamard test of $\ket{\widehat{\phi}}$ and a block-encoding involving $\mathbf{O}$, which we will also define shortly. We also use the notation from Corollary~\ref{corr:sim_T_ind1} (and in turn Theorem~\ref{thm:sim_ind1}) here.

We use the algorithm of Corollary~\ref{corr:sim_T_ind1} to produce the state $\ket{\widehat{\phi}}$ before measuring which takes the following form
$$
\ket{\widehat{\phi}} = \frac{1}{\sqrt{2} \beta} \ket{0}^{a+1} \Psi + \ket{\junk}, \text{ where } \Psi = \sum_{j=0}^m \ket{j} \otimes \widehat{x}(t_j),
$$
where $\ket{\junk}$ is some state that is orthogonal to every state with $\ket{0}^{a+1}$ on the first register and $\beta$ is the normalization of the initial state to the algorithm of Corollary~\ref{corr:sim_T_ind1} (step~\ref{algo_step:ind1_Psi0} from Algorithm~\ref{algo_step:ind1_Psi0}). In particular, $\beta = \sqrt{\beta_1^2 + \beta_2^2}$ where $\beta_1 = \alpha_{\calL_y^{-1}} \alpha_{F} \normhist_0$ and $\beta_2 = \normhist_0'$. 

The involved quantities are as follows. The normalization $\normhist_0 = \sqrt{\alpha_{P_0} \norm{\vec{x}_0}^2 + m \alpha_{P_0 M_1^{-1}}^2 h^2 \norm{f}^2}$ (Eq.~\eqref{eq:algo_ind1_psi0}) where $\alpha_{P_0} = 2 \kappa_M$ is the subnormalization of $U_{P_0}$ (Claim~\ref{claim:projs_BE_ind1}) and $\alpha_{P_0 M_1^{-1}} = 4\kappa_M/\sigma_1$ is the subnormalization of $U_{P_0 M_1^{-1}}$ (Claim~\ref{claim:ode_BE_ind1}). The normalization $\normhist_0' = \sqrt{\alpha_{Q_0}^2 \norm{\vec{x}(0)}^2 + m \alpha_{Q_0 M_1^{-1}}^2 \norm{\vec{f}}^2}$ (Eq.~\eqref{eq:algo_ind1_phi0}) where $\alpha_{Q_0} = 2\kappa_M + 1$ is the is the subnormalization of $U_{Q_0}$ (Claim~\ref{claim:projs_BE_ind1}) and $\alpha_{Q_0 M_1^{-1}} = O(\kappa_M/\sigma_1)$ is the subnormalization of $U_{Q_0 M_1^{-1}}$ (Claim~\ref{claim:ode_BE_ind1}). The parameter $\alpha_{\calL^{-1}} = O(T \alpha_K \expnorm(-\rlcP_0\rlcM_1^{-1} \rlcK)/\sigma_1)$ is the subnormalization of $U_{\calL_y^{-1}}$ defined with respect to the differential operator $-\rlcP_0 \rlcM^{-1} \rlcK$ (see Remark~\ref{remark:BE_ODE_solver} and step~\ref{algo_step:set_BE_ind_Ly} of Algorithm~\ref{algo:QDAE_solver_ind1}) and $\alpha_F = O(\kappa_M \alpha_K/\sigma_1)$ is the subnormalization of $\mathbf{F}:=(\id - \rlcQ_0 \rlcM_1^{-1} \rlcK)$ (Claim~\ref{claim:lin_alg_BE_ind1}). A coarse calculation reveals that (where $\calA := -\rlcP_0 \rlcM_1^{-1} \rlcK$)
$$
\beta^2 = O\left(\left(\kappa_M^2 + \frac{T^2\kappa_M^3\alpha_K^4 \expnorm(\calA)^2}{\sigma_1^4}
\right)\|\vec{x}_0\|^2 +
\left( \frac{T\alpha_K\kappa_M^2}{\sigma_1^3} + \frac{T^3\kappa_M^4\alpha_K^3 \expnorm(\calA)^2}{\sigma_1^5} \right)\|\vec{f}\|^2 \right),
$$
where we used $mh = T$ and $h=O(\sigma_1/\alpha_K)$.

We define $\mathbf{D} = \ket{0}^{a+1} \bra{0}^{a+1} \otimes \ket{T}\bra{T} \otimes U_O$ and construct a $(\alpha_O, a_O +2, \varepsilon_O)$-block-encoding of $\mathbf{D}$, which we denote by $U_D$. We can do this as given in the proof of Lemma~\ref{lem:meas_obs_gen}. Let us denote $\widetilde{D} = (\bra{0}^{a_O + 2} \otimes \id)U_D(\ket{0}^{a_O+2} \otimes \id)$. Using Lemma~\ref{lem:meas_obs_gen}, the desired estimate is then set as $\widehat{\theta} = \alpha_O \beta^2 \mu$ where $\mu$ is a $\upsilon/(3\alpha_O \beta^2)$-estimate of $\la \widehat{\phi} | \widetilde{D} | \widehat{\phi} \ra$ which can be obtained via a Hadamard test. The corresponding sample complexity and hence calls to the algorithm of Corollary~\ref{corr:sim_T_ind1} to prepare $\ket{\widehat{\phi}}$ is
$$
O(\alpha_O^2 \beta^4/\upsilon^2) = \widetilde{O}\left(\alpha_O^2 \poly(T \kappa_M \alpha_K \expnorm(\calA)(\|\vec{x}_0\| + \|\vec{f}\|)/\sigma_1 \Big)/\upsilon^2 \right).
$$
The accuracy $\varepsilon$ in the output of Corollary~\ref{corr:sim_T_ind1}, which appears as $\norm{\vec{x}(T) - \widetilde{x}(T)} \leq (\varepsilon/2) \vec{x}(T)$\footnote{The guarantee of Corollary~\ref{corr:sim_T_ind1} is written in terms of $\norm{\ket{x(T)} - \ket{\widehat{x}(T)}} \leq \varepsilon$ but this is obtained as a consequence of $\norm{\vec{x}(T) - \widetilde{x}(T)} \leq (\varepsilon/2) \vec{x}(T)$.} needs to be set as $\varepsilon =O(\upsilon/(\alpha_O \norm{\vec{x}(T)}^2))$.

The corresponding gate complexity is then the sample complexity times the gate complexity of Corollary~\ref{corr:sim_T_ind1} for this value of $\varepsilon$, without taking into consideration of the probability of success and hence an improvement by a factor of $g$. This completes the proof.
\end{proof}

\section{Algorithmic components for RLC circuits}\label{sec:algo_components_RLC}
In this section, we prepare for the application of the proposed quantum $\dae$ solver (Section~\ref{subsec:QDAE_solver_algo_analysis}) to simulating $\rlc$ circuits. Particularly, we will primarily focus on describing the block-encoding constructions required for the $\mna$ equations of $\rlc$ circuits. We restate the $\mna$ equations (Eq.~\eqref{eq:mna}) here for convenience:
\begin{align} 
    \label{eq:mna_algo_comp_RLC}
    \underbrace{
    \begin{bmatrix}
        \rlcA_{\rlccap} \rlcC \rlcA_{\rlccap}^T & 0 & 0 \\
        0 & \rlcL & 0 \\
        0 & 0 & 0 
    \end{bmatrix}}_{:=\rlcM}
    \frac{d}{dt} \begin{bmatrix}
        \vec{u}(t) \\ \vec{i}_\rlcind(t) \\ \vec{i}_{\rlcvs}(t)
    \end{bmatrix} + 
    \underbrace{
    \begin{bmatrix}
        \rlcA_\rlcres \rlcG \rlcA_\rlcres^T & \rlcA_{\rlcind} & \rlcA_{\rlcvs} \\
        -\rlcA_{\rlcind}^T & 0 & 0 \\
        -\rlcA_{\rlcvs}^T & 0 & 0 
    \end{bmatrix}}_{:=\rlcK}
    \begin{bmatrix}
        \vec{u}(t) \\ \vec{i}_{\rlcind}(t) \\ \vec{i}_{\rlcvs}(t)
    \end{bmatrix} = 
    \underbrace{
    \begin{bmatrix}
        - \rlcA_{\rlcjs} \vec{i}_{\rlcjs} \\ 0 \\ -\vec{v}_{\rlcvs}
    \end{bmatrix}}_{:=\vec{f}},
\end{align}
where $\rlcM$ and $\rlcK$ are as indicated.

\paragraph{Classical oracles.} We now quickly describe the classical oracles that we will use for constructing block-encodings for $\rlc$ circuits. Recall that we are working with an electrical circuit represented by a graph $\calG = (\calV,\calE)$. Let $|\calV|=N+1$ (including the reference node). When the $\rlc$ circuits have degree-$d$ (excluding the reference node), the reduced incidence matrix $\rlcA$ is at most $d$-row-sparse and $2$-column-sparse. We thus assume sparse-access oracles corresponding to the rows and columns (as described in Lemma~\ref{lem:sparse-block-enc}) for the reduced incidence matrices $\rlcA_{\rlccap}, \rlcA_{\rlcind}, \rlcA_{\rlcres}, \rlcA_{\rlcvs}, \rlcA_{\rlcjs}$, and component matrices $\rlcG, \rlcL, \rlcC$. We will denote the sparse-access oracle to the row information of any operator $\mathbf{B}$ by $O^r_{\mathbf{B}}$ and that to the column information of $\mathbf{B}$ as $O^c_{\mathbf{B}}$ e.g., $O^r_{\rlcA_{\rlccap}}$ is the sparse access oracle corresponding to the rows of $\rlcA_{\rlccap}$.\footnote{The oracles to the reduced incidence matrix could have been constructed from an oracle to the incidence matrix $\rlcA$ and an oracle that outputs the type of a component (i.e., resistor or capacitor or inductor) on an inputted edge. We omit these details here.} Note that for the specification of these oracles, an ordering is assigned to the nodes in $\calV$ and edges in $\calE$. 

\paragraph{State-preparation oracles.} We also assume access to oracles $O_f$ and $O_x$ that prepare the forcing state $\ket{f} := \vec{f}/\norm{\vec{f}}$ and the initial state $\ket{x(0)} := \vec{x}(0)/\norm{\vec{x}(0)}$ respectively.

\subsection{Properties of the MNA equations of RLC circuits}\label{subsec:props_mna}
In this section, we give spectral properties of the matrices involved in the $\mna$ equations of $\rlc$ circuits. This will allow us to comment on their block-encoding constructions in the following section.

\subsubsection{Norm of matrices}
\paragraph{Norm of $\rlcK$.}
\begin{claim}\label{claim:smax_K}
For a general $\rlc$ circuit $\calG$, the norm of the matrix $\rlcK$ can be bounded as
$$
\norm{\rlcK}_2 \leq \lambda_{\max}(\widetilde{\rlcG}) + \sqrt{\lambda_{\max}(\rlcA_{\rlcind} \rlcA_{\rlcind}^T) + \lambda_{\max}(\rlcA_{\rlcvs} \rlcA_{\rlcvs}^T)},
$$
where $\widetilde{\rlcG} = \rlcA_{\rlcres} \rlcG \rlcA_{\rlcres}^T$ is the nodal conductance matrix. When the $\rlc$ circuit $\calG$ (excluding the reference node) is promised to be degree-$d$ and satisfies Definition~\ref{def:structure_RLC_circs}, then the bound simplifies to $\norm{\rlcK}_2 \leq 2d \rlcres_\min^{-1} + \sqrt{2d}$.
\end{claim}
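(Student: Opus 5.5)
We split $\rlcK$ into its symmetric (resistive) part and its skew-symmetric (coupling) part and apply the triangle inequality:
$$
\rlcK = \underbrace{\begin{bmatrix} \widetilde{\rlcG} & 0 & 0 \\ 0 & 0 & 0 \\ 0 & 0 & 0 \end{bmatrix}}_{:=\rlcK_1} + \underbrace{\begin{bmatrix} 0 & \rlcA_\rlcind & \rlcA_\rlcvs \\ -\rlcA_\rlcind^T & 0 & 0 \\ -\rlcA_\rlcvs^T & 0 & 0 \end{bmatrix}}_{:=\rlcK_2},
$$
so that $\norm{\rlcK}_2 \leq \norm{\rlcK_1}_2 + \norm{\rlcK_2}_2$.

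\textbf{The resistive block.} $\widetilde{\rlcG} = \rlcA_\rlcres \rlcG \rlcA_\rlcres^T$ is symmetric positive semidefinite, since $\rlcG$ is a positive diagonal matrix. Hence $\norm{\rlcK_1}_2 = \lambda_{\max}(\widetilde{\rlcG})$.

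\textbf{The coupling block.} Set $\mathbf{B} := [\rlcA_\rlcind \;\; \rlcA_\rlcvs]$, so that
$$
\rlcK_2 = \begin{bmatrix} 0 & \mathbf{B} \\ -\mathbf{B}^T & 0 \end{bmatrix}.
$$
This is a skew version of the Hermitian dilation in Eq.~\eqref{eq:hermitian_dilation}. Compute
$$
\rlcK_2^T \rlcK_2 = \mathrm{diag}(\mathbf{B}\mathbf{B}^T, \mathbf{B}^T\mathbf{B}).
$$
It follows that $\norm{\rlcK_2}_2 = \norm{\mathbf{B}}_2 = \sqrt{\lambda_{\max}(\mathbf{B}\mathbf{B}^T)}$. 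Since $\mathbf{B}\mathbf{B}^T = \rlcA_\rlcind\rlcA_\rlcind^T + \rlcA_\rlcvs\rlcA_\rlcvs^T$ is a sum of two PSD matrices, Weyl's inequality gives
$$
\lambda_{\max}(\mathbf{B}\mathbf{B}^T) \leq \lambda_{\max}(\rlcA_\rlcind\rlcA_\rlcind^T) + \lambda_{\max}(\rlcA_\rlcvs\rlcA_\rlcvs^T).
$$
Combining the two blocks yields the general bound.

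\textbf{Specialization to degree-$d$ circuits (Definition~\ref{def:structure_RLC_circs}).}

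For the resistive term, use
$$
\lambda_{\max}(\widetilde{\rlcG}) \leq \norm{\rlcG}_2\,\lambda_{\max}(\rlcA_\rlcres\rlcA_\rlcres^T) \leq \rlcres_\min^{-1}\lambda_{\max}(\rlcA_\rlcres\rlcA_\rlcres^T).
$$
The matrix $\rlcA_\rlcres\rlcA_\rlcres^T$ is the reduced Laplacian of the resistive subgraph, which still has degree at most $d$. Fact~\ref{fact:spectrum_laplacian_G} then gives $\lambda_{\max} \leq 2d$, so this term is at most $2d\,\rlcres_\min^{-1}$.

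For the coupling term, $\mathbf{B}\mathbf{B}^T$ is itself the reduced Laplacian of the subgraph of inductor and voltage-source edges. It is cleaner to bound this directly by Fact~\ref{fact:spectrum_laplacian_G}, which gives $\lambda_{\max}(\mathbf{B}\mathbf{B}^T)\leq 2d$. Splitting into the two separate $\lambda_{\max}$ terms first would instead give the weaker bound $4d$. Taking the square root gives $\sqrt{2d}$.

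\textbf{Main obstacle.} The only real subtlety is ensuring the stated constant $\sqrt{2d}$ rather than $\sqrt{4d}$. We handle it by applying the Gershgorin argument of Fact~\ref{fact:spectrum_laplacian_G} to the combined incidence submatrix $\mathbf{B}$, using that the total degree across both edge types is at most $d$.
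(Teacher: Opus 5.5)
Your proof is correct and follows the same skeleton as the paper's: split $\rlcK$ into the resistive block $\rlcK_1$ and the skew coupling block $\rlcK_2$, apply the triangle inequality, and use Fact~\ref{fact:spectrum_laplacian_G} for the degree-$d$ case. Two steps differ.

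\emph{The coupling block.} The paper bounds $\norm{\rlcK_2 \vec{x}}_2^2$ componentwise with the triangle and Cauchy--Schwarz inequalities, which gives $\norm{\rlcK_2}_2 \le \sqrt{\norm{\rlcA_\rlcind}_2^2 + \norm{\rlcA_\rlcvs}_2^2}$. Your identity $\rlcK_2^T\rlcK_2 = \mathrm{diag}(\mathbf{B}\mathbf{B}^T, \mathbf{B}^T\mathbf{B})$ instead gives the exact value $\norm{\rlcK_2}_2 = \norm{\mathbf{B}}_2$. Weyl's inequality then yields the general bound.

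\emph{The degree-$d$ specialization.} This is the more important difference. The paper bounds the two eigenvalues separately, by $2d_\rlcind$ and $2d_\rlcvs$, and then asserts $d_\rlcind + d_\rlcvs \le d$. But $d_\rlcind$ and $d_\rlcvs$ are maxima over nodes, and the maxima can sit at different nodes. For example, one node may have $d$ inductor branches and another node $d$ voltage-source branches. So that assertion is not justified in general, and the split route only guarantees $2\sqrt{d}$.

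You instead apply the Gershgorin argument directly to the combined reduced Laplacian $\mathbf{B}\mathbf{B}^T = \rlcA_\rlcind\rlcA_\rlcind^T + \rlcA_\rlcvs\rlcA_\rlcvs^T$. Every node of the inductor--voltage-source subgraph has at most $d$ such branches in total, so $\lambda_{\max}(\mathbf{B}\mathbf{B}^T) \le 2d$. This is what actually makes the stated constant $\sqrt{2d}$ rigorous. Your write-up already points this out, and it is right.
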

\begin{proof}
Note that we can decompose
$$ \rlcK = \underbrace{\begin{bmatrix} \rlcA_{\rlcres} \rlcG \rlcA_{\rlcres}^T & 0 & 0 \\ 0 & 0 & 0 \\ 0 & 0 & 0 \\ \end{bmatrix}}_{:= \rlcK_1} + \underbrace{\begin{bmatrix} 0 & \rlcA_{\rlcind} & \rlcA_{\rlcvs} \\ -\rlcA_{\rlcind}^T & 0 & 0 \\ -\rlcA_{\rlcvs}^T & 0 & 0 \\ \end{bmatrix}}_{:=\rlcK_2},$$
where $\rlcK_1$ is symmetric and $\rlcK_2$ is a skew-symmetric matrix. Using the triangle inequality, we obtain
\begin{align}\label{eq:interim_K_norm}
\norm{\rlcK}_2 \leq \norm{\rlcK_1}_2 + \norm{\rlcK_2}_2 = \lambda_{\max}(\rlcA_{\rlcres} \rlcG \rlcA_{\rlcres}^T) + \norm{\rlcK_2}_2.
\end{align}
To bound the norm of $\rlcK_2$, consider $\vec{x} = (\vec{u},\vec{i}_{\rlcind},\vec{i}_{\rlcvs})^T$ and note that
\begin{align}\label{eq:interim2_K_norm}
\norm{\rlcK_2 \vec{x}}_2^2 = \norm{ (\rlcA_{\rlcind}\vec{i}_{\rlcind} + \rlcA_{\rlcvs}\vec{i}_{\rlcvs}, -\rlcA_{\rlcind}^T \vec{u}, -\rlcA_{\rlcvs}^T\vec{u})^T }_2^2 = \norm{\rlcA_{\rlcind}\vec{i}_{\rlcind} + \rlcA_{\rlcvs}\vec{i}_{\rlcvs}}_2^2 + \norm{\rlcA_{\rlcind}^T \vec{u}}_2^2 + \norm{\rlcA_{\rlcvs}^T\vec{u}}_2^2.
\end{align}
We can bound the first term in Eq.~\eqref{eq:interim2_K_norm} as
\begin{equation} \label{eq:interim3_K_norm}
\norm{\rlcA_{\rlcind}\vec{i}_{\rlcind} + \rlcA_{\rlcvs}\vec{i}_{\rlcvs}}_2^2 \leq \left( \norm{\rlcA_{\rlcind}}_2 \|\vec{i}_{\rlcind}\|_2 + \norm{\rlcA_{\rlcvs}}_2 \|\vec{i}_{\rlcvs}\|_2 \right)^2 \leq \left(\norm{\rlcA_{\rlcind}}_2^2 + \norm{\rlcA_{\rlcvs}}_2^2 \right) \cdot \left(\|\vec{i}_{\rlcind}\|_2^2 + \|\vec{i}_{\rlcvs}\|_2^2 \right),
\end{equation}
where we have used the triangle inequality along with the sub-multiplicativity of $\norm{\cdot}_2$ in the first inequality and used Cauchy-Schwarz inequality in the second inequality. The second and third terms in Eq.~\eqref{eq:interim2_K_norm} can be bounded as
\begin{equation}\label{eq:interim4_K_norm}
\norm{\rlcA_{\rlcind}^T \vec{u}}_2^2 \leq \norm{\rlcA_{\rlcind}^T}_2^2 \|\vec{u}\|_2^2 = \norm{\rlcA_{\rlcind}}_2^2 \|\vec{u}\|_2^2, \qquad \norm{\rlcA_{\rlcvs}^T\vec{u}}_2^2 \leq \norm{\rlcA_{\rlcvs}^T}_2^2 \|\vec{u}\|_2^2 = \norm{\rlcA_{\rlcvs}}_2^2 \|\vec{u}\|_2^2.
\end{equation}
using the the sub-multiplicativity of $\norm{\cdot}_2$. Substituting Eq.~\eqref{eq:interim4_K_norm} and Eq.~\eqref{eq:interim3_K_norm} into Eq.~\eqref{eq:interim2_K_norm} gives us
$$
\norm{\rlcK_2 \vec{x}}_2^2 \leq \left(\norm{\rlcA_{\rlcind}}_2^2 + \norm{\rlcA_{\rlcvs}}_2^2 \right) \cdot \left(\|\vec{i}_{\rlcind}\|_2^2 + \|\vec{i}_{\rlcvs}\|_2^2 \right) + \norm{\rlcA_{\rlcind}}_2^2 \|\vec{u}\|_2^2 + \norm{\rlcA_{\rlcvs}}_2^2 \|\vec{u}\|_2^2 = \left(\norm{\rlcA_{\rlcind}}_2^2 + \norm{\rlcA_{\rlcvs}}_2^2 \right) \norm{\vec{x}}_2^2,
$$
which implies
\begin{equation}\label{eq:implication_norm_K2}
\norm{\rlcK_2}_2 = \sup_{\norm{\vec{x}}_2 = 1} \norm{\rlcK_2 \vec{x}}_2 \leq \sqrt{\norm{\rlcA_{\rlcind}}_2^2 + \norm{\rlcA_{\rlcvs}}_2^2}.    
\end{equation}
Substituting the above expression (Eq.~\eqref{eq:implication_norm_K2}) in Eq.~\eqref{eq:interim_K_norm} gives us
$$
\norm{\rlcK}_2 \leq \lambda_{\max}(\rlcA_{\rlcres} \rlcG \rlcA_{\rlcres}^T) + \sqrt{\norm{\rlcA_{\rlcind}}_2^2 + \norm{\rlcA_{\rlcvs}}_2^2} = \lambda_{\max}(\rlcA_{\rlcres} \rlcG \rlcA_{\rlcres}^T) + \sqrt{\lambda_{\max}(\rlcA_{\rlcind}\rlcA_{\rlcind}^T) + \lambda_{\max}(\rlcA_{\rlcvs}\rlcA_{\rlcvs}^T)} \,,
$$
which is the desired result for general $\rlc$ circuits. Using Fact~\ref{fact:spectrum_laplacian_G}, we can bound $\lambda_{\max}(\rlcA_{\rlcind}\rlcA_{\rlcind}^T) \leq 2d_{\rlcind}$ where $d_\rlcind$ is the maximal number of branches connected to any node in $\calG$ (excluding the reference node) containing inductors. Similarly, we can bound $\lambda_{\max}(\rlcA_{\rlcvs}\rlcA_{\rlcvs}^T) \leq 2d_{\rlcvs}$ where $d_\rlcind$ is the maximal number of branches connected to any node in $\calG$ (excluding the reference node) containing $\dc$ voltage sources. When we are promised that $\calG$ (excluding the reference node) is degree-$d$, we would have $d_\rlcind + d_\rlcvs \leq d$. Moreover, for such graphs $\calG$, we can bound $\lambda_{\max}(\rlcA_{\rlcres} \rlcG \rlcA_{\rlcres}^T) \leq 2d \lambda_{\max}(\rlcG)$ as we did in Claim~\ref{claim:lmax_M} and using Fact~\ref{fact:spectrum_laplacian_G}. We would then have
$$
\norm{\rlcK}_2 \leq \lambda_{\max}(\rlcA_{\rlcres} \rlcG \rlcA_{\rlcres}^T) + \sqrt{\norm{\rlcA_{\rlcind}}_2^2 + \norm{\rlcA_{\rlcvs}}_2^2} = 2d \lambda_{\max}(\rlcG) + \sqrt{2d} \,.
$$
This gives us the second desired result and completes the proof.
\end{proof}

\paragraph{Index zero.} In the case of index zero, note that the $\mna$ equations simplify to that of Eq.~\eqref{eq:mna_index0} as voltage sources are not allowed. We now provide claims which bound the norm of $\rlcM$.
\begin{claim}\label{claim:lmax_M}
For a general $\rlc$ circuit $\calG$, the norm of the matrix $\rlcM$ can be bounded as
$$\lambda_{\max}(\rlcM) \leq \max\{\lambda_{\max}(\widetilde{\rlcC}), \lambda_{\max}(\rlcL)\},$$
where $\widetilde{\rlcC} = \rlcA_{\rlccap} \rlcC \rlcA_{\rlccap}^T$ is the nodal capacitance matrix. When $\calG$ (excluding the reference node) is promised to have degree-$d$, then
$$\lambda_{\max}(\rlcM) \leq \max\{2d\lambda_{\max}(\rlcC), \lambda_{\max}(\rlcL)\}.$$    
\end{claim}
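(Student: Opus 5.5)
The plan is to use the block-diagonal structure of $\rlcM$ in Eq.~\eqref{eq:mna_algo_comp_RLC}. Since $\rlcM = \mathrm{diag}(\widetilde{\rlcC}, \rlcL, \mathbf{0})$ with $\widetilde{\rlcC} = \rlcA_\rlccap \rlcC \rlcA_\rlccap^T$, and each block is symmetric positive semidefinite, $\rlcM$ is symmetric positive semidefinite. Its spectrum is therefore the union of the spectra of the diagonal blocks. Here $\widetilde{\rlcC} \succeq 0$ because $\rlcC$ is a positive diagonal matrix, and $\rlcL \succ 0$ by assumption. Because the zero block contributes only the eigenvalue $0$, we obtain $\lambda_{\max}(\rlcM) = \max\{\lambda_{\max}(\widetilde{\rlcC}), \lambda_{\max}(\rlcL), 0\} = \max\{\lambda_{\max}(\widetilde{\rlcC}), \lambda_{\max}(\rlcL)\}$. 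This gives the first inequality, in fact with equality. In the index-zero setting of Eq.~\eqref{eq:mna_index0} the voltage-source block is simply absent, and the same argument applies.

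For the degree-$d$ refinement, the only step is to bound $\lambda_{\max}(\widetilde{\rlcC})$. Write $\widetilde{\rlcC} = (\rlcA_\rlccap \rlcC^{1/2})(\rlcA_\rlccap \rlcC^{1/2})^T$. Then
$$\lambda_{\max}(\widetilde{\rlcC}) = \norm{\rlcC^{1/2}\rlcA_\rlccap^T}_2^2 \leq \lambda_{\max}(\rlcC)\,\norm{\rlcA_\rlccap}_2^2 = \lambda_{\max}(\rlcC)\,\lambda_{\max}(\rlcA_\rlccap \rlcA_\rlccap^T).$$
Equivalently, one can use the Loewner bound $\rlcA_\rlccap \rlcC \rlcA_\rlccap^T \preceq \lambda_{\max}(\rlcC)\, \rlcA_\rlccap \rlcA_\rlccap^T$, which follows from $\rlcC \preceq \lambda_{\max}(\rlcC)\id$ by congruence.

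Next, $\rlcA_\rlccap \rlcA_\rlccap^T$ is the reduced Laplacian of the capacitive subgraph, which again omits the reference node. Every non-reference node has at most $d$ incident capacitive branches, so Fact~\ref{fact:spectrum_laplacian_G} applies and gives $\lambda_{\max}(\rlcA_\rlccap\rlcA_\rlccap^T) \leq 2d$. Combining these yields $\lambda_{\max}(\widetilde{\rlcC}) \leq 2d\lambda_{\max}(\rlcC)$, and substituting into the first part gives the claim.

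There is no real obstacle here. The only point needing care is to check that Fact~\ref{fact:spectrum_laplacian_G} applies to a subgraph (a subset of columns of $\rlcA$): both the degree and the edge-multiplicity bounds only decrease under edge deletion, and the reference-node row is already removed, so the Gershgorin argument goes through unchanged.
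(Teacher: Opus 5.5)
Your proposal is correct and follows essentially the paper's route: the block-diagonal structure of $\rlcM$ gives the first bound, and $\lambda_{\max}(\rlcA_\rlccap\rlcC\rlcA_\rlccap^T)\le\lambda_{\max}(\rlcC)\,\lambda_{\max}(\rlcA_\rlccap\rlcA_\rlccap^T)\le 2d\,\lambda_{\max}(\rlcC)$ via Fact~\ref{fact:spectrum_laplacian_G} gives the second. The only cosmetic difference is that the paper first bounds $\lambda_{\max}(\rlcA_\rlccap\rlcA_\rlccap^T)\le\lambda_{\max}(\rlcA\rlcA^T)$ and applies the Fact to the full graph, whereas you apply it directly to the capacitive subgraph; both are valid.
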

\begin{proof}
$\rlcM$ is a symmetric matrix and has a block diagonal structure. Suppose $\vec{x} = (\vec{u},\vec{i}_{\rlcind}, \vec{i}_{\rlcvs})^T$ then
$$ \vec{x}^T \rlcM \vec{x} = \vec{u}^T \rlcA_{\rlccap} \rlcC \rlcA_{\rlccap}^T \vec{u} + \vec{i}_{\rlcind}^T \rlcL \vec{i}_{\rlcind}.$$
Considering $\vec{x}$ to have unit norm and choosing $v$ (or $w$) to be an eigenvector of $\rlcA_{\rlccap} \rlcC \rlcA_{\rlccap}^T$ (or $\rlcL$) gives us that
$$\lambda_{\max}(\rlcM) \leq \max\{\lambda_{\max}(\rlcA_{\rlccap} \rlcC \rlcA_{\rlccap}^T), \lambda_{\max}(\rlcL)\}.$$
When $\calG$ (excluding the reference node) is promised to have degree-$d$, we can bound $\lambda_{\max}(\rlcA_{\rlccap} \rlcC \rlcA_{\rlccap}^T)$ as
\begin{align*}
    \lambda_{\max}(\rlcA_{\rlccap} \rlcC \rlcA_{\rlccap}^T) = \lambda_{\max}(\rlcC^{1/2} \rlcA_{\rlccap}^T \rlcA_{\rlccap} \rlcC^{1/2}) \leq \lambda_{\max}(\rlcC) \cdot \lambda_{\max}(\rlcA_{\rlccap}^T \rlcA_{\rlccap}) \leq 2d \lambda_{\max}(\rlcC),
\end{align*}
where the first equality follows from the fact that $\lambda_{\max}(\rlcB \rlcB^T) = \lambda_{\max}(\rlcB^T \rlcB)$ with $\rlcB = \rlcA_{\rlccap} \rlcC^{1/2}$ and $\rlcC$ is diagonal. The final inequality follows from $\lambda_{\max}(\rlcA_{\rlccap}^T \rlcA_{\rlccap}) = \lambda_{\max}(\rlcA_{\rlccap} \rlcA_{\rlccap}^T) \leq \lambda_{\max}(\rlcA \rlcA^T) \leq 2d$ where we used that $\rlcA_{\rlccap}$ is a submatrix of $\rlcA$ obtained by choosing certain columns of $\rlcA$, and Fact~\ref{fact:spectrum_laplacian_G}. This concludes the proof.
\end{proof}

\paragraph{Index one and two.} We now bound the norms of the matrices $\{\rlcM_j, \rlcK_j\}_{j \in [2]}$ encountered when solving $\mna$ systems (Eq.~\eqref{eq:mna} or~\eqref{eq:mna_algo_comp_RLC}) with index one and two.
\begin{claim}\label{claim:norm_M_K_ind1_ind2}
Let $\uptau > 0$. We have the following bounds on the norms of the matrices $\{\rlcM_j\}_{j \in [2]}$ for any general $\rlc$ circuit
\begin{enumerate}[$(i)$]
    \item $\norm{\rlcM_1}_2 \leq \max\{\lambda_{\max}(\widetilde{\rlcC}), \lambda_{\max}(\rlcL)\} + \lambda_{\max}(\widetilde{\rlcG}) + \sqrt{\lambda_{\max}(\rlcA_{\rlcind} \rlcA_{\rlcind}^T) + \lambda_{\max}(\rlcA_{\rlcvs} \rlcA_{\rlcvs}^T)}$,
    \item $\norm{\rlcM_2}_2 \leq \max\{\lambda_{\max}(\widetilde{\rlcC}), \lambda_{\max}(\rlcL)\} + (1 + 1/\uptau)(\lambda_{\max}(\widetilde{\rlcG}) + \sqrt{\lambda_{\max}(\rlcA_{\rlcind} \rlcA_{\rlcind}^T) + \lambda_{\max}(\rlcA_{\rlcvs} \rlcA_{\rlcvs}^T)})$ as long as $\sigma_\min(\rlcP_0 \widetilde{\rlcQ}_0) \geq \uptau$.
\end{enumerate}
where $\widetilde{\rlcC} = \rlcA_{\rlccap} \rlcC \rlcA_{\rlccap}^T$ is the nodal capacitance matrix and $\widetilde{\rlcG} = \rlcA_{\rlcres} \rlcG \rlcA_{\rlcres}^T$ is the nodal conductance matrix. When the $\rlc$ circuit $\calG$ (excluding the reference node) is promised to be degree-$d$ and satisfies Definition~\ref{def:structure_RLC_circs}, then the bounds simplify to
\begin{enumerate}[$(a)$]
    \item $\norm{\rlcM_1}_2 \leq \max\{2d \rlccap_\max, d \rlcind_\max\} + 2d \rlcres_\min^{-1} + \sqrt{2d}$,
    \item $\norm{\rlcM_2}_2 \leq \max\{2d \lambda_{\max}(\rlcC), \lambda_{\max}(\rlcL)\} + 4d \lambda_{\max}(\rlcG) + 2\sqrt{2d}$ as long as $\sigma_\min(\rlcP_0 \widetilde{\rlcQ}_0) \geq \uptau$.
\end{enumerate}
\end{claim}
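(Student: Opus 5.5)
This claim follows by combining the abstract norm bounds of Claim~\ref{claim:ub_norm_Mi_Ki} with the circuit-specific bounds on $\norm{\rlcM}_2$ and $\norm{\rlcK}_2$ already established in Claims~\ref{claim:lmax_M} and~\ref{claim:smax_K}. The plan is to treat the general statements $(i)$--$(ii)$ first and then specialize to degree-$d$ circuits for $(a)$--$(b)$.

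\emph{General bounds.} For $(i)$, Claim~\ref{claim:ub_norm_Mi_Ki}$(i)$ gives $\norm{\rlcM_1}_2 \leq \norm{\rlcM}_2 + \norm{\rlcK}_2$, since $\norm{\rlcQ_0}_2=1$ for the orthogonal projector $\rlcQ_0$.

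\begin{itemize}
\item For the $\rlcM$ term: $\rlcM$ is symmetric positive semidefinite and block diagonal with blocks $\widetilde{\rlcC}$, $\rlcL$ and $0$. Its norm is therefore the largest block eigenvalue, $\max\{\lambda_{\max}(\widetilde{\rlcC}),\lambda_{\max}(\rlcL)\}$, as in Claim~\ref{claim:lmax_M}. That claim is stated for index zero, but its proof does not use the absence of a voltage-source block, since that block is zero.
\item For the $\rlcK$ term: Claim~\ref{claim:smax_K} gives $\norm{\rlcK}_2 \leq \lambda_{\max}(\widetilde{\rlcG}) + \sqrt{\lambda_{\max}(\rlcA_\rlcind\rlcA_\rlcind^T)+\lambda_{\max}(\rlcA_\rlcvs\rlcA_\rlcvs^T)}$.
\end{itemize}

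Adding the two bounds gives $(i)$. For $(ii)$, Claim~\ref{claim:ub_norm_Mi_Ki}$(ii)$ gives $\norm{\rlcM_2}_2\le\norm{\rlcM}_2+(1+1/\uptau)\norm{\rlcK}_2$, using $\norm{\rlcQ_1}\le 1/\uptau$. This requires the singular-value hypothesis $\sigma^+_{\min}(\rlcP_0\widetilde{\rlcQ}_1)\ge\uptau$; the $\widetilde{\rlcQ}_0$ in the statement should be read as $\widetilde{\rlcQ}_1$. Substituting the same two bounds gives $(ii)$.

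\emph{Degree-$d$ specialization.} By Fact~\ref{fact:spectrum_laplacian_G} and the argument in Claim~\ref{claim:lmax_M}, $\lambda_{\max}(\widetilde{\rlcC})\le 2d\,\rlccap_\max$. By Fact~\ref{fact:spectrum_RLC_components}, $\lambda_{\max}(\rlcL)\le d\,\rlcind_\max$. Claim~\ref{claim:smax_K} gives $\norm{\rlcK}_2\le 2d\,\rlcres_\min^{-1}+\sqrt{2d}$, using $d_\rlcind+d_\rlcvs\le d$. Together these give $(a)$.

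For $(b)$, the factor $(1+1/\uptau)$ multiplies $\norm{\rlcK}_2$. The stated constants $4d\lambda_{\max}(\rlcG)+2\sqrt{2d}$ correspond to $(1+1/\uptau)\le 2$, i.e.\ $\uptau\ge1$. I do not see how to obtain that restriction from the stated hypotheses, which only give $\uptau>0$. My plan is to prove $(b)$ with the general factor $(1+1/\uptau)$ in place of $2$, and to note that the stated constants follow when $\uptau\ge 1$.

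The only real subtlety is this constant in $(b)$, together with checking that Claim~\ref{claim:lmax_M} extends verbatim to $\rlcM$ with a zero voltage-source block. Everything else is triangle inequalities and substitution of earlier bounds.
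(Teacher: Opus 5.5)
Your proposal is correct and follows the paper's route: the paper's proof likewise combines Claim~\ref{claim:ub_norm_Mi_Ki} with the bounds on $\norm{\rlcM}_2$ and $\norm{\rlcK}_2$ from Claims~\ref{claim:lmax_M} and~\ref{claim:smax_K}, and uses Fact~\ref{fact:spectrum_RLC_components} for the degree-$d$ case. Both of your caveats are right ($\widetilde{\rlcQ}_0$ should be $\widetilde{\rlcQ}_1$), and the one about $(b)$ is sharper than you state: $\rlcP_0$ and $\widetilde{\rlcQ}_1$ are orthogonal projectors, so the nonzero singular values of $\rlcP_0\widetilde{\rlcQ}_1$ are at most $1$, the hypothesis forces $\uptau\le 1$, and therefore $1+1/\uptau\ge 2$. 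Consequently the constants $4d\lambda_{\max}(\rlcG)+2\sqrt{2d}$ in $(b)$ are justified only when $\uptau=1$, a point the paper's one-line proof does not address, and your version with the general factor $(1+1/\uptau)$ is the statement the argument actually supports.
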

\begin{proof}
The results follow from application of Claims~\ref{claim:lmax_M}--\ref{claim:smax_K} to bound the $2$-norms of $\rlcM$ and $\rlcK$ alongside the application of Claim~\ref{claim:ub_norm_Mi_Ki}.

The simplified results follow from application of Claims~\ref{claim:lmax_M}--\ref{claim:smax_K} when $\calG$ (excluding the reference node) is promised to be degree-$d$ and using Fact~\ref{fact:spectrum_RLC_components}. This completes the proof.
\end{proof}

\subsubsection{Condition number}
In the previous part of the section, we provided upper bounds on the norms of the matrices $\{\rlcM_j,\rlcK_j\}_{j \in [2]}$, which define the $\mna$ systems up to index two. We will now provide lower bounds on the $2$-norms of these matrices and thereby provide upper bounds on their condition numbers, which will be required later to comment on the complexity of the quantum algorithms for simulating the $\mna$ systems.

\paragraph{DAE of index zero.} Let us first consider the case when $\rlcM$ is promised to be non-singular i.e., the $\rlc$ circuit $\calG$ satisfies condition $(i)$ of Theorem~\ref{thm:index_mna}. If we are promised that $\lambda_{\min}(\rlcM) \geq \delta$, then an upper bound on the condition number immediately follows after using Claim~\ref{claim:lmax_M}. 

However, we show that we can also lower bound the minimum eigenvalue of $\rlcM$ by noting that Theorem~\ref{thm:index_mna} completely characterizes when $\rlcM$ in the $\mna$ equations is non-singular. This leads to the following result.
\begin{claim}\label{claim:kappa_M}
Suppose we are promised that the given $\rlc$ circuit $\calG$ satisfies condition $(i)$ of Theorem~\ref{thm:index_mna}. Let $\lambda_{\min}(\rlcC), \lambda_{\min}(\rlcL) > 0$. Then, $\rlcM$ is a non-singular matrix with
$$
\lambda_{\min}(\rlcM) \geq \min\{\lambda_{\min}(\rlcC) \cdot \sigma_{\min}(\rlcA_{\rlccap})^2, \lambda_\min(\rlcL) \}, \enspace \text{and} \enspace \kappa(\rlcM) \leq \frac{\max\{\lambda_{\max}(\widetilde{\rlcC}), \lambda_{\max}(\rlcL)\}}{\min\{\lambda_{\min}(\rlcC) \cdot \sigma_{\min}(\rlcA_{\rlccap})^2, \lambda_\min(\rlcL) \}}.
$$
When $\calG$ (excluding the reference node) is promised to be degree-$d$ and satisfies Definition~\ref{def:structure_RLC_circs}, we have the following bound:
$$
\kappa(\rlcM) \leq \frac{\max\{2d\lambda_{\max}(\rlcC), \lambda_{\max}(\rlcL)\}}{\min\{\lambda_{\min}(\rlcC) \cdot \sigma_{\min}(\rlcA_{\rlccap})^2, \lambda_\min(\rlcL) \}} 
\leq \frac{d \max\{2\rlccap_{\max}, \rlcind_\max\}}{\min\{\rlccap_\min \cdot \sigma_{\min}(\rlcA_{\rlccap})^2, \lambda_\min(\rlcL) \}}
$$
\end{claim}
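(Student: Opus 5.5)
Under condition $(i)$ of Theorem~\ref{thm:index_mna} there are no voltage sources, so $\rlcM$ reduces to the block-diagonal matrix of Eq.~\eqref{eq:mna_index0}, $\rlcM = \mathrm{diag}(\rlcA_\rlccap\rlcC\rlcA_\rlccap^T, \rlcL)$. Both blocks are symmetric positive semidefinite. The spectrum of $\rlcM$ is therefore the union of the spectra of the two blocks, and
$$
\lambda_\min(\rlcM) = \min\{\lambda_\min(\rlcA_\rlccap\rlcC\rlcA_\rlccap^T), \lambda_\min(\rlcL)\}.
$$
The plan is to lower bound the first block, read off $\lambda_\max$ from Claim~\ref{claim:lmax_M}, and take the ratio.

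For the capacitive block, I use the Rayleigh quotient. For unit $\vec{u}$,
$$
\vec{u}^T\rlcA_\rlccap\rlcC\rlcA_\rlccap^T\vec{u} \geq \lambda_\min(\rlcC)\,\norm{\rlcA_\rlccap^T\vec{u}}^2 \geq \lambda_\min(\rlcC)\,\sigma_\min(\rlcA_\rlccap^T)^2 .
$$
The key point is that $\sigma_\min(\rlcA_\rlccap^T) = \sigma_\min(\rlcA_\rlccap) > 0$, where $\sigma_\min$ means the $N$-th singular value of the $N\times|\calE_\rlccap|$ matrix. This positivity is exactly where condition $(i)$ enters. A capacitive tree exists, so the capacitor edges contain a spanning tree $F$. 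By Fact~\ref{fact:rel_tree_A}, $\rlcA_F$ is nonsingular, so $\rlcA_\rlccap$ has full row rank $N$. Hence $\rlcA_\rlccap^T$ is injective, its smallest singular value is positive, and the Rayleigh quotient bound is meaningful. Combining this with $\lambda_\min(\rlcL)>0$ gives the stated lower bound on $\lambda_\min(\rlcM)$ and shows that $\rlcM$ is nonsingular.

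Since $\rlcM$ is symmetric positive definite, $\kappa(\rlcM) = \lambda_\max(\rlcM)/\lambda_\min(\rlcM)$. Substituting the bound $\lambda_\max(\rlcM)\leq \max\{\lambda_\max(\widetilde\rlcC),\lambda_\max(\rlcL)\}$ from Claim~\ref{claim:lmax_M} gives the general condition-number bound.

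For the degree-$d$ specialization, I use the second part of Claim~\ref{claim:lmax_M}, $\lambda_\max(\widetilde\rlcC)\leq 2d\lambda_\max(\rlcC)$. Fact~\ref{fact:spectrum_RLC_components} then gives $\lambda_\max(\rlcC)\leq\rlccap_\max$, $\lambda_\min(\rlcC)\geq\rlccap_\min$, and $\lambda_\max(\rlcL)=\norm{\rlcL}\leq d\rlcind_\max$. The numerator becomes
$$
\max\{2d\rlccap_\max, d\rlcind_\max\} = d\max\{2\rlccap_\max,\rlcind_\max\},
$$
and replacing $\lambda_\min(\rlcC)$ by $\rlccap_\min$ only decreases the denominator, which preserves the upper bound.

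The only step needing care is the rank argument that turns the capacitive-tree condition into $\sigma_\min(\rlcA_\rlccap)>0$. Fact~\ref{fact:rel_tree_A} handles this directly, since any $N\times N$ nonsingular column submatrix of $\rlcA_\rlccap$ forces full row rank. Everything else is routine eigenvalue bookkeeping.
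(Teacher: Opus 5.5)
Your proposal is correct and follows essentially the same route as the paper. Both split $\rlcM$ into its block-diagonal parts and lower-bound $\rlcA_\rlccap\rlcC\rlcA_\rlccap^T$ via a Rayleigh quotient, using the capacitive tree and Fact~\ref{fact:rel_tree_A} to get $\sigma_\min(\rlcA_\rlccap)>0$, then finish with Claim~\ref{claim:lmax_M} and Fact~\ref{fact:spectrum_RLC_components}. Your full-row-rank argument, via a nonsingular $N\times N$ column submatrix coming from the tree, is slightly more careful than the paper's claim that $\rlcA_\rlccap$ itself is square, which overlooks capacitors outside the tree.
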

\begin{proof}
As $\rlcM$ is a block-diagonal matrix, the spectrum of $\rlcM$ satisfies $\spec(\rlcM) = \spec(\rlcA_{\rlccap} \rlcC \rlcA_{\rlccap}^T) \cup \spec(\rlcL)$. We then have $\lambda_{\min}(\rlcM) \geq \min \{\lambda_{\min}(\rlcA_{\rlccap} \rlcC \rlcA_{\rlccap}^T), \lambda_{\min}(\rlcL) \}$. We now argue that $\lambda_{\min}(\rlcA_{\rlccap} \rlcC \rlcA_{\rlccap}^T)$ is guaranteed to be greater than $0$. 

For any $\vec{x} \neq 0, \norm{\vec{x}}_2=1$, we note that
$$
\vec{x}^T \rlcA_{\rlccap} \rlcC \rlcA_{\rlccap}^T \vec{x} = (\rlcA_{\rlccap}^T \vec{x})^T \rlcC (\rlcA_{\rlccap}^T \vec{x}) \geq \lambda_{\min}(\rlcC) \norm{\rlcA_{\rlccap}^T \vec{x}}_2^2 \geq \lambda_{\min}(\rlcC) \lambda_\min(\rlcA_{\rlccap} \rlcA_{\rlccap}^T),
$$
where we used the fact that $\calC$ is diagonal in the second inequality and that $\norm{\rlcA_{\rlccap}}_2^2 \|\vec{x}\|_2^2 \geq \lambda_\min(\rlcA_{\rlccap} \rlcA_{\rlccap}^T) = \sigma_{\min}(\rlcA_{\rlccap})^2$ by definition. As the graph $\calG$ satisfies $(i)$ in Theorem~\ref{thm:index_mna}, the graph $\calG$ must have a capacitive tree which implies that $\rlcA_{\rlccap}$ is a square matrix. From Fact~\ref{fact:rel_tree_A}, we have that $\rlcA_{\rlccap}$ is full rank and $\det(\rlcA_{\rlccap}) \in \{+1,-1\}$ which implies $\sigma_{\min}(\rlcA_{\rlccap}) > 0$. 

This then implies that $\lambda_{\min}(\rlcA_{\rlccap} \rlcC \rlcA_{\rlccap}^T) > 0$, and gives us the desired result on the lower bound on the minimum eigenvalue for general $\rlc$ circuits. The upper bound on $\kappa(\rlcM)$ follows from application of Claim~\ref{claim:lmax_M} and using Fact~\ref{fact:spectrum_RLC_components}.
\end{proof}

\paragraph{DAE of index one.}

We now consider the case where the $\rlc$ circuit $\calG$ satisfies the condition $(ii)$ of Theorem~\ref{thm:index_mna} and thus the $\mna$ system has index $1$. This implies that the matrix $\rlcM_1$ is non-singular. To bound the condition number of $\rlcM_1$, we require the following lemma that describes certain topological properties $\calG$ in terms of the reduced incidence matrices and projectors.
\begin{lemma}[\cite{tischendorf1999topological,estevez2000structural}]\label{lem:index1_incidence_mat_props}
Given a $\rlc$ circuit with independent voltage sources and current sources, then, the
following relations are satisfied.
\begin{enumerate}[$(i)$]
    \item The matrix $\begin{bmatrix} \rlcA_{\rlcres} & \rlcA_{\rlccap} & \rlcA_{\rlcvs} \end{bmatrix}$ has full row rank if and only if the circuit does not contain cutsets consisting of inductances and current sources only.
    \item Let $\rlcQ_{\rlccap}$ be any projector on to $\ker(\rlcA_{\rlccap})$. Then, the matrix $\rlcQ_{\rlccap}^T \rlcA_{\rlcvs}$ has full column rank if and only if the circuit does not contain loops with at least one voltage source and consisting of capacitances and voltage sources only.
\end{enumerate}
\end{lemma}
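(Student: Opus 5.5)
Both items are rank statements about incidence submatrices, so I would reduce each to the graph-theoretic Facts~\ref{fact:rel_loop_A}--\ref{fact:rel_cutset_A}, in the same way Lemma~\ref{lem:well-posed-circs} is obtained.

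\emph{Item $(i)$.} Let $F$ be the set of inductor and current-source branches. Then the complementary edge set $\calG - F$ consists of exactly the resistor, capacitor and voltage-source branches. Up to a column permutation, $\rlcA_{\calG-F} = \begin{bmatrix} \rlcA_{\rlcres} & \rlcA_{\rlccap} & \rlcA_{\rlcvs}\end{bmatrix}$. Fact~\ref{fact:rel_cutset_A} says this matrix has full row rank if and only if $F$ contains no cutset. That is exactly the statement that there is no cutset made up of inductors and current sources only, so item $(i)$ follows directly.

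\emph{Item $(ii)$.} First I would show that the rank of $\rlcQ_{\rlccap}^T \rlcA_{\rlcvs}$ does not depend on which projector onto $\ker(\rlcA_{\rlccap})$ is chosen. Since $\mathrm{im}(\rlcQ_{\rlccap}) = \ker \rlcA_{\rlccap}^T$ has been fixed as the kernel in question (I am reading the paper's $\ker(\rlcA_{\rlccap})$ as $\ker(\rlcA_{\rlccap}^T)$, the kernel on node space), we have
\[
\ker(\rlcQ_{\rlccap}^T) = \mathrm{im}(\rlcQ_{\rlccap})^\perp = \mathrm{im}(\rlcA_{\rlccap}).
\]
This subspace is the same for every such projector. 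Therefore $\rlcQ_{\rlccap}^T\rlcA_{\rlcvs}\vec{w}=0$ holds if and only if $\rlcA_{\rlcvs}\vec{w}\in\mathrm{im}(\rlcA_{\rlccap})$.

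Next, full column rank fails if and only if there exist $\vec{w}\neq 0$ and some $\vec{c}$ with $\rlcA_{\rlcvs}\vec{w} = \rlcA_{\rlccap}\vec{c}$. Equivalently, the matrix $[\rlcA_{\rlccap}\ \rlcA_{\rlcvs}]$ has a kernel vector $(\vec{c},-\vec{w})$ whose voltage-source part is nonzero.

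The remaining step is the graph-theoretic one: such a kernel vector exists if and only if there is a loop in $C\cup V$ containing at least one voltage source.
\begin{itemize}
\item[($\Leftarrow$)] Take the signed indicator vector of such a loop. It lies in the kernel of the incidence matrix restricted to $C\cup V$, and its $V$-part is nonzero because the loop contains a voltage source.
\item[($\Rightarrow$)] Use the standard fact that $\ker\begin{bmatrix}\rlcA_{\rlccap} & \rlcA_{\rlcvs}\end{bmatrix}$ is the cycle space of the subgraph on $C\cup V$, spanned by signed indicator vectors of its loops. This follows from Fact~\ref{fact:rel_loop_A} together with a fundamental-cycle basis relative to a spanning forest of that subgraph. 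If every loop avoided $V$, every kernel vector would have zero $V$-part, which is a contradiction.
\end{itemize}

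\textbf{Main obstacle.} The expected main obstacle is the ($\Rightarrow$) direction, namely justifying the cycle-space basis claim. I would argue it directly: pick a spanning forest of the $C\cup V$ subgraph. Its incidence columns are independent by Fact~\ref{fact:rel_loop_A}. Each non-forest edge closes a unique fundamental loop, and these loops span the kernel by a dimension count. Finally, a kernel vector with nonzero $V$-part must involve some fundamental loop that contains a $V$ edge, and that loop is the required one.
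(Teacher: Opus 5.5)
\textbf{Assessment: your proof is correct.} The paper gives no argument for this lemma; it is simply cited from \cite{tischendorf1999topological,estevez2000structural}. So there is no in-paper proof to compare against. Your derivation of both items from Facts~\ref{fact:rel_loop_A}--\ref{fact:rel_cutset_A}, which the paper already states, makes the lemma self-contained.

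\textbf{Item $(i)$.} This is exactly Fact~\ref{fact:rel_cutset_A} with $F$ the inductive and current-source branches. It uses connectivity of $\calG$, which those Facts assume.

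\textbf{Item $(ii)$.} Your reading of $\ker(\rlcA_{\rlccap})$ as $\ker(\rlcA_{\rlccap}^T)$ is forced by dimensions. It also agrees with how the paper later defines $\rlcQ_{\rlccap}$ when introducing $\Gamma_\calG$. The identity $\ker(\rlcQ_{\rlccap}^T)=\mathrm{im}(\rlcQ_{\rlccap})^\perp=\mathrm{im}(\rlcA_{\rlccap})$ is exactly what makes the statement hold for arbitrary, not necessarily orthogonal, projectors.

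\textbf{A shortcut for the step you flagged.} You can avoid building a cycle-space basis for the whole $C\cup V$ subgraph.
\begin{enumerate}
\item Take a spanning forest $F$ of the capacitive branches alone.
\item By Fact~\ref{fact:rel_loop_A}, $\rlcA_F$ has full column rank. Every remaining capacitor column lies in its span, because adding that branch to $F$ closes a loop. Hence $\mathrm{im}(\rlcA_F)=\mathrm{im}(\rlcA_{\rlccap})$.
\item Suppose $\rlcQ_{\rlccap}^T\rlcA_{\rlcvs}\vec{w}=0$ with $\vec{w}\neq 0$. Then $\rlcA_{\rlcvs}\vec{w}=\rlcA_F\vec{c}$ for some $\vec{c}$, so $[\,\rlcA_F\ \ \rlcA_{\rlcvs}\,]$ is column-rank deficient.
\item Fact~\ref{fact:rel_loop_A} then gives a loop in $F\cup V$. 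That loop must contain a voltage source, because $F$ is acyclic.
\end{enumerate}
Together with your signed-indicator argument for the converse, this proves $(ii)$ using only Fact~\ref{fact:rel_loop_A}.
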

The lemma above combined with Theorem~\ref{thm:index_mna}$(ii)$ gives us guarantees on the properties of the matrices $\begin{bmatrix} \rlcA_{\rlcres} & \rlcA_{\rlccap} & \rlcA_{\rlcvs} \end{bmatrix}$ and $\rlcQ_{\rlccap}^T \rlcA_{\rlcvs}$ for index $1$. We now provide an upper bound on the condition number of $\rlcM_1$.

\begin{claim}\label{claim:kappa_M1}
Let $\lambda_{\min}^+(\rlcC), \lambda_{\min}(\rlcL) > 0$. Suppose the given $\rlc$ circuit $\calG$ satisfies condition $(ii)$ of Theorem~\ref{thm:index_mna}. Define
$$
\upsilon(\rlcA_{\rlcvs}^T \rlcQ_{\rlccap}, \widetilde{\rlcG}_{\rlccap}) = \min \left\{\frac{\sigma_{\min}(\rlcA_{\rlcvs}^T \rlcQ_{\rlccap})}{1 + (\|\widetilde{\rlcG}_{\rlccap}\|/\sigma_{\min}(\rlcA_{\rlcvs}^T \rlcQ_{\rlccap}))}, \frac{\min_{w \in \ker(\rlcA_{\rlcvs}^T \rlcQ_{\rlccap}, \norm{w}_2=1)}(w^T \widetilde{\rlcG}_{\rlccap} w)}{1 + (\|\widetilde{\rlcG}_{\rlccap}\|/\sigma_{\min}(\rlcA_{\rlcvs}^T \rlcQ_{\rlccap}))} \right\},    
$$
where $\widetilde{\rlcG}_{\rlccap} = \rlcQ_{\rlccap} \rlcA_{\rlcres} \rlcG \rlcA_{\rlcres}^T \rlcQ_{\rlccap}$. Then, $\rlcM_1$ is a non-singular matrix with 
\begin{align*}
\norm{\rlcM_1^{-1}} \leq O\left( \frac{\lambda_{\max}(\widetilde{\rlcG}) + \sqrt{\lambda_{\max}(\rlcA_{\rlcind} \rlcA_{\rlcind}^T) + \lambda_{\max}(\rlcA_{\rlcvs} \rlcA_{\rlcvs}^T)}}{\min\{ \lambda_{\min}^{+}(\widetilde{\rlcC}), \lambda_{\min}(\rlcL)\} \cdot \upsilon(\rlcA_{\rlcvs}^T \rlcQ_{\rlccap}, \widetilde{\rlcG}_{\rlccap})} \right)
\end{align*}
When $\calG$ (excluding the reference node) is promised to be degree-$d$ and satisfies Definition~\ref{def:structure_RLC_circs}, we have the following bound
$$
\norm{\rlcM_1^{-1}} \leq O\left( \frac{d \rlcres_{\min}^{-1} + \sqrt{d}}{\min\{ \lambda_{\min}^{+}(\widetilde{\rlcC}), \lambda_{\min}(\rlcL)\} \cdot \upsilon(\rlcA_{\rlcvs}^T \rlcQ_{\rlccap}, \widetilde{\rlcG}_{\rlccap})} \right)
$$
\end{claim}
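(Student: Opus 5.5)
We plan to bound $\norm{\rlcM_1^{-1}}$ by showing $\norm{\rlcM_1 \vec{x}} \geq c\norm{\vec{x}}$ for every $\vec{x}$, with $c$ the reciprocal of the stated bound. Write $\rlcM_1 = \rlcM + \rlcK\rlcQ_0$, with $\rlcQ_0$ the orthogonal projector onto $\ker\rlcM$. Because $\rlcM$ is block diagonal with blocks $\widetilde{\rlcC}$, $\rlcL$ and $0$, and $\rlcL$ is positive definite, we have $\ker\rlcM = \ker(\widetilde{\rlcC}) \oplus \{0\} \oplus \mathbb{R}^{n_\rlcvs}$. Moreover $\ker\widetilde{\rlcC} = \ker\rlcA_\rlccap^T$, and $\rlcQ_\rlccap$ denotes the orthogonal projector onto it. Hence $\rlcQ_0 = \mathrm{diag}(\rlcQ_\rlccap, 0, \id)$ and $\rlcP_0 = \mathrm{diag}(\rlcP_\rlccap, \id, 0)$. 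For $\vec{x} = (\vec{u}, \vec{i}_\rlcind, \vec{i}_\rlcvs)$, write $\vec{u} = \vec{u}_P + \vec{u}_Q$ with $\vec{u}_Q = \rlcQ_\rlccap\vec{u}$. A direct computation gives
$$\rlcM_1\vec{x} = \big(\widetilde{\rlcC}\vec{u}_P + \widetilde{\rlcG}\vec{u}_Q + \rlcA_\rlcvs\vec{i}_\rlcvs,\; \rlcL\vec{i}_\rlcind - \rlcA_\rlcind^T\vec{u}_Q,\; -\rlcA_\rlcvs^T\vec{u}_Q\big).$$

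The main step is an elimination argument. Project the first block row by $\rlcQ_\rlccap$, which annihilates $\widetilde{\rlcC}\vec{u}_P$ because $\rlcQ_\rlccap\rlcA_\rlccap = 0$. Together with the third row this yields the saddle-point system
$$\begin{bmatrix}\widetilde{\rlcG}_\rlccap & \rlcQ_\rlccap\rlcA_\rlcvs\\ -\rlcA_\rlcvs^T\rlcQ_\rlccap & 0\end{bmatrix}\begin{bmatrix}\vec{u}_Q\\ \vec{i}_\rlcvs\end{bmatrix} = \begin{bmatrix}\rlcQ_\rlccap \vec{r}_1\\ \vec{r}_3\end{bmatrix},$$
where $\vec{r} = \rlcM_1\vec{x}$ and $\vec{u}_Q$ ranges over $\mathrm{Im}\rlcQ_\rlccap$. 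By Lemma~\ref{lem:index1_incidence_mat_props}$(ii)$, the absence of $VC$-loops makes $\rlcQ_\rlccap^T\rlcA_\rlcvs$ of full column rank. By Lemma~\ref{lem:index1_incidence_mat_props}$(i)$, the absence of $IL$-cutsets makes $\widetilde{\rlcG}_\rlccap$ positive definite on $\ker(\rlcA_\rlcvs^T\rlcQ_\rlccap)\cap\mathrm{Im}\rlcQ_\rlccap$: if $\vec{w}$ lies in this set and $\rlcA_\rlcres^T\vec{w}=0$, then $\vec{w}^T[\rlcA_\rlcres\;\rlcA_\rlccap\;\rlcA_\rlcvs]=0$, so $\vec{w}=0$.

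To get a quantitative bound we apply the standard Brezzi-type inf--sup estimate. Decompose $\vec{u}_Q = \vec{w} + \vec{v}$ with $\vec{w}\in\ker(\rlcA_\rlcvs^T\rlcQ_\rlccap)$ and $\vec{v}$ in its orthogonal complement. The second row controls $\vec{v}$ via $\norm{\vec{v}}\le \norm{\vec{r}_3}/\sigma_{\min}(\rlcA_\rlcvs^T\rlcQ_\rlccap)$. Testing the first row against $\vec{w}$ controls $\vec{w}$ through the coercivity constant $\min_{\vec{w}} \vec{w}^T\widetilde{\rlcG}_\rlccap\vec{w}$, at a cost of a term $\norm{\widetilde{\rlcG}_\rlccap}\norm{\vec{v}}$. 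Finally the first row again, together with the inf--sup constant, recovers $\vec{i}_\rlcvs$. This chain produces exactly the factor $\upsilon(\rlcA_\rlcvs^T\rlcQ_\rlccap,\widetilde{\rlcG}_\rlccap)$, including the $1 + \norm{\widetilde{\rlcG}_\rlccap}/\sigma_{\min}$ denominators. We expect this step to be the main obstacle, chiefly the bookkeeping needed to land precisely on the form of $\upsilon$.

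Once $\vec{u}_Q$ and $\vec{i}_\rlcvs$ are bounded, the remaining variables follow by back substitution. Applying $\rlcP_\rlccap$ to the first row gives $\widetilde{\rlcC}\vec{u}_P = \rlcP_\rlccap(\vec{r}_1 - \widetilde{\rlcG}\vec{u}_Q - \rlcA_\rlcvs\vec{i}_\rlcvs)$, so $\norm{\vec{u}_P}\le \norm{\cdot}/\lambda^+_{\min}(\widetilde{\rlcC})$. The second row gives $\vec{i}_\rlcind = \rlcL^{-1}(\vec{r}_2 + \rlcA_\rlcind^T\vec{u}_Q)$. These substitutions multiply the bound by $\norm{\rlcK}$-type factors, namely $\lambda_{\max}(\widetilde{\rlcG})$ and $\sqrt{\lambda_{\max}(\rlcA_\rlcind\rlcA_\rlcind^T)+\lambda_{\max}(\rlcA_\rlcvs\rlcA_\rlcvs^T)}$, as in Claim~\ref{claim:smax_K}, and divide it by $\min\{\lambda^+_{\min}(\widetilde{\rlcC}),\lambda_{\min}(\rlcL)\}$. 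Collecting everything up to constants gives the stated general bound. Non-singularity is immediate from the resulting injectivity. For the degree-$d$ specialization, we substitute Fact~\ref{fact:spectrum_laplacian_G}, Fact~\ref{fact:spectrum_RLC_components} and Claim~\ref{claim:smax_K} to replace the numerator by $O(d\rlcres_{\min}^{-1}+\sqrt{d})$.
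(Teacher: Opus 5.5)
Your route is the paper's. With respect to $\mathrm{Im}\,\rlcP_0\oplus\mathrm{Im}\,\rlcQ_0$, the matrix $\rlcM_1$ is block upper triangular, with diagonal blocks $\mathbf{D}$ (the nonzero part of $\rlcM$) and $\rlcK_{22}=\rlcQ_0\rlcK\rlcQ_0$; the latter is exactly your saddle-point system. Your elimination plus back substitution reproduces the paper's explicit inverse with blocks $\mathbf{D}^{-1}$, $-\mathbf{D}^{-1}\rlcK_{12}\rlcK_{22}^{-1}$ and $\rlcK_{22}^{-1}$, and it also proves nonsingularity rather than assuming it. The only real difference is the saddle-point block. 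The paper imports $\sigma_{\min}(\rlcK_{22})\geq\upsilon$ from Benzi, Golub and Liesen. You propose to derive it with a Brezzi-type chain, and that chain, as you describe it, does not land on $\upsilon$.

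Write $A=\widetilde{\rlcG}_\rlccap$ on $\mathrm{Im}\,\rlcQ_\rlccap$, $B=\rlcA_\rlcvs^T\rlcQ_\rlccap$, $\beta=\sigma_{\min}(B)$, and let $\gamma$ be the coercivity constant of $A$ on $\ker B\cap\mathrm{Im}\,\rlcQ_\rlccap$. Your chain gives, in order:
$\norm{\vec{v}}\leq\norm{\vec{r}_3}/\beta$;
$\norm{\vec{w}}\leq(\norm{\vec{r}_1}+\norm{A}\norm{\vec{v}})/\gamma$;
$\beta\norm{\vec{i}_\rlcvs}\leq\norm{\vec{r}_1}+\norm{A}(\norm{\vec{w}}+\norm{\vec{v}})$.
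So the bound on $\vec{i}_\rlcvs$ contains the term $\norm{A}^2\norm{\vec{r}_3}/(\gamma\beta^2)$. Since $1/\upsilon=(1+\norm{A}/\beta)/\min\{\beta,\gamma\}$, this overshoots by a factor up to about $\norm{A}/\max\{\beta,\gamma\}$. For $\beta=\gamma=1$ you get $\norm{A}^2$ instead of roughly $\norm{A}$. In the degree-$d$ setting $\norm{A}$ can be of order $d\rlcres_\min^{-1}$, so the loss is not a constant and the stated bound does not follow.

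The missing ingredient is that $A$ is symmetric positive semidefinite; the generic Brezzi chain ignores this and is genuinely lossy here. Split $\mathrm{Im}\,\rlcQ_\rlccap=(\ker B\cap\mathrm{Im}\,\rlcQ_\rlccap)\oplus\mathrm{Im}\,B^T$, write $A$ in blocks $A_{ww},A_{wv},A_{vw},A_{vv}$, and write $\rlcQ_\rlccap\vec{r}_1=\vec{f}_w+\vec{f}_v$. Positivity of the Schur complement gives $0\preceq A_{vv}-A_{vw}A_{ww}^{-1}A_{wv}\preceq A_{vv}$, hence $\norm{A_{ww}^{-1/2}A_{wv}}\leq\norm{A}^{1/2}$.

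Now eliminate $\vec{w}=A_{ww}^{-1}(\vec{f}_w-A_{wv}\vec{v})$ exactly, instead of bounding $A\vec{u}_Q$ by $\norm{A}\norm{\vec{u}_Q}$. The $\mathrm{Im}\,B^T$-component then reads $B^T\vec{i}_\rlcvs=\vec{f}_v-A_{vw}A_{ww}^{-1}\vec{f}_w-(A_{vv}-A_{vw}A_{ww}^{-1}A_{wv})\vec{v}$. This yields $\norm{\vec{w}}\leq\norm{\vec{r}_1}/\gamma+\sqrt{\norm{A}/\gamma}\,\norm{\vec{r}_3}/\beta$ and $\beta\norm{\vec{i}_\rlcvs}\leq(1+\sqrt{\norm{A}/\gamma})\norm{\vec{r}_1}+\norm{A}\norm{\vec{r}_3}/\beta$. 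Using $\sqrt{\norm{A}/\gamma}/\beta\leq\frac{1}{2}(\norm{A}/\beta^2+1/\gamma)$, you get $\norm{\rlcK_{22}^{-1}}=O(1/\gamma+1/\beta+\norm{A}/\beta^2)=O(1/\upsilon)$.

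With this repair, or by citing the saddle-point bound as the paper does, the rest of your argument goes through. That includes the form of $\rlcQ_0$, the rank consequences of the no-$VC$-loop and no-$IL$-cutset conditions, and the back substitution. Your restriction of the coercivity minimum to $\ker B\cap\mathrm{Im}\,\rlcQ_\rlccap$ is the correct reading of $\upsilon$, because $\widetilde{\rlcG}_\rlccap$ vanishes on $\mathrm{Im}\,\rlcP_\rlccap\subseteq\ker B$.
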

\begin{proof}
Let $\rlcP = \rlcM^+ \rlcM$ and $\rlcQ = \rlcI - \rlcP$ be the orthogonal projectors as defined earlier. Let $\Pi_{\rlcP}$ and $\Pi_{\rlcQ}$ be the basis given by $\mathrm{Im}(\rlcP)$ and $\mathrm{Im}(\rlcQ)$ respectively, where $\mathrm{Im}(\cdot)$ is the image of the projector. We can then write the projectors in the basis $\{\Pi_\rlcP, \Pi_{\rlcQ}\}$ simply as
\begin{align}
\rlcP= \begin{bNiceArray}{cc}[first-col,first-row]
& \Pi_{\rlcP} & \Pi_{\rlcQ} \\
\Pi_{\rlcP} & \rlcI & 0 \\
\Pi_{\rlcQ} & 0 & 0 
\end{bNiceArray}, \qquad
\rlcQ= \begin{bNiceArray}{cc}[first-col,first-row]
& \Pi_{\rlcP} & \Pi_{\rlcQ} \\
\Pi_{\rlcP} & 0 & 0 \\
\Pi_{\rlcQ} & 0 & \rlcI
\end{bNiceArray},
\end{align}
and the matrices $\rlcM$ and $\rlcK$ as
\begin{align}
\rlcM=\begin{bNiceArray}{cc}[first-col,first-row]
& \Pi_{\rlcP} & \Pi_{\rlcQ} \\
\Pi_{\rlcP} & \mathbf{D} & 0 \\
\Pi_{\rlcQ} & 0 & 0 
\end{bNiceArray}, \qquad
\rlcK= \begin{bNiceArray}{cc}[first-col,first-row]
& \Pi_{\rlcP} & \Pi_{\rlcQ} \\
\Pi_{\rlcP} & \rlcK_{11} & \rlcK_{12} \\
\Pi_{\rlcQ} & \rlcK_{21} & \rlcK_{22}
\end{bNiceArray}\,.
\end{align}
The matrix $\rlcM_1 = \rlcM + \rlcK\rlcQ $ in this basis is then
\begin{equation}\label{eq:M1_basis_P_Q}
    \rlcM_1 = \begin{bmatrix}
        \mathbf{D} & \rlcK_{12} \\ 0 & \rlcK_{22}
    \end{bmatrix}\,.
\end{equation}
We are promised that $\rlcM_1$ is non-singular. Hence, $\mathbf{D}$ and $\rlcK_{22}$ are also necessarily non-singular. Using Schur complements, the inverse of $\rlcM_1$ can then be written as
\begin{equation}
    \rlcM_1^{-1} = \begin{bmatrix}
        \mathbf{D}^{-1} & -\mathbf{D}^{-1} \rlcK_{12} \rlcK_{22}^{-1} \\ 0 & \rlcK_{22}^{-1}
    \end{bmatrix}\,.
\end{equation}
Considering $\vec{x} = (\vec{x}_P, \vec{x}_Q)^T$ such that $\vec{x}_P \in \mathbb{R}^{\dim(\mathrm{Im}(\rlcP))}$ and $\vec{x}_Q \in \mathbb{R}^{\dim(\mathrm{Im}(\rlcQ))}$. We will now upper bound $\norm{\norm{\rlcM_1^{-1}}_2}$. Towards that end, we note that 
\begin{align*}
\norm{\rlcM_1^{-1} \vec{x}}_2 
&= \norm{\left(\mathbf{D}^{-1} \vec{x}_P - \mathbf{D}^{-1} \rlcK_{12} \rlcK_{22}^{-1} \vec{x}_Q, \rlcK_{22}^{-1} \vec{x}_Q \right)^T}_2 \\
&\leq \sqrt{2} \max \left\{ \norm{\mathbf{D}^{-1} \vec{x}_P - \mathbf{D}^{-1} \rlcK_{12} \rlcK_{22}^{-1} \vec{x}_Q}_2, \norm{\rlcK_{22}^{-1} \vec{x}_Q}_2 \right\} \\
&\leq \sqrt{2} \max \left\{ \norm{\mathbf{D}^{-1} \vec{x}_P}_2 + \norm{\mathbf{D}^{-1} \rlcK_{12} \rlcK_{22}^{-1} \vec{x}_Q}_2, \norm{\rlcK_{22}^{-1} \vec{x}_Q}_2 \right\} \\
&\leq \sqrt{2} \max \left\{ \norm{\mathbf{D}^{-1}}_2 \norm{\vec{x}_P}_2 + \norm{\mathbf{D}^{-1}}_2 \norm{\rlcK_{12}}_2 \norm{\rlcK_{22}^{-1}}_2 \norm{\vec{x}_Q}_2, \norm{\rlcK_{22}^{-1}}_2 \norm{\vec{x}_Q}_2 \right\},
\end{align*}
where we have used the triangle inequality in the third line and submultiplicativity of $\norm{\cdot}_2$ in the last line. Maximizing over all $\vec{x}$ such that $\norm{\vec{x}}_2 = 1$ (and thus $\norm{\vec{x}_P}_2, \norm{\vec{x}_Q}_2 \leq 1$) gives us
\begin{align}\label{eq:ub_norm_M1_interim1}
    \norm{\rlcM_1^{-1}}_2 \leq \sqrt{2} \max \left\{ \norm{\mathbf{D}^{-1}}_2 \left( 1 + \norm{\rlcK_{12}}_2 \norm{\rlcK_{22}^{-1}}_2\right), \norm{\rlcK_{22}^{-1}}_2\right\}
\end{align}
We now bound each of the involved norms $\norm{\mathbf{D}^{-1}}_2$, $\norm{\rlcK_{12}}_2$, and $\norm{\rlcK_{22}^{-1}}_2$ separately. We note that 
$$
\mathbf{D} = \rlcP \rlcM \rlcP = \rlcM |_{\supp(\rlcP)} \implies \lambda_{\min}(\mathbf{D}) = \min\{ \lambda_{\min}^{+}(\widetilde{\rlcC}), \lambda_{\min}(\rlcL)\},
$$
where we have used the fact that the spectrum of $\mathbf{D}$ is the non-zero spectrum of $\rlcM$ in the second inequality, denoted $\widetilde{\rlcC} = \rlcA_{\rlccap} \rlcC \rlcA_{\rlccap}^T$, and $\lambda_{\min}^{+}(\widetilde{\rlcC})$ as the minimum positive eigenvalue of the matrix $\widetilde{\rlcC}$. We thus have
\begin{equation}\label{eq:interim_ub_norm_invD}
    \norm{\mathbf{D}^{-1}}_2 \leq \frac{1}{\min\{ \lambda_{\min}^{+}(\widetilde{\rlcC}), \lambda_{\min}(\rlcL)\}}.
\end{equation}
To bound $\norm{\rlcK_{12}}_2$, we note that
\begin{equation}\label{eq:interim_ub_norm_K12}
    \norm{\rlcK_{12}}_2 = \norm{\rlcP \rlcK \rlcQ}_2 \leq \norm{\rlcK}_2 \leq \lambda_{\max}(\widetilde{\rlcG}) + \sqrt{\lambda_{\max}(\rlcA_{\rlcind} \rlcA_{\rlcind}^T) + \lambda_{\max}(\rlcA_{\rlcvs} \rlcA_{\rlcvs}^T)} \,,
\end{equation}
where the second inequality follows from the submultiplicativity of the $\norm{\cdot}_2$ and used Claim~\ref{claim:smax_K} in the final inequality. We will now bound $\norm{\rlcK_{22}^{-1}}_2$. We firstly note that
\begin{equation}\label{eq:expression_K22}
\rlcK_{22} = \rlcQ \rlcK \rlcQ = \begin{bmatrix}
    \rlcQ_{\rlccap} \rlcA_{\rlcres} \rlcG \rlcA_{\rlcres}^T \rlcQ_{\rlccap} & \rlcQ_{\rlccap} \rlcA_{\rlcvs} \\
    - \rlcA_{\rlcvs}^T \rlcQ_{\rlccap} & 0 
\end{bmatrix},
\end{equation}
where we have used that $\rlcQ_{\rlccap}$ is orthogonal i.e., $\rlcQ_{\rlccap}^T = \rlcQ_{\rlccap}$, $\rlcQ_{\rlccap} \rlcA_{\rlcvs}$ has full row rank and $-\rlcA_{\rlcvs}^T \rlcQ_{\rlccap}$ has full column rank due to Lemma~\ref{lem:index1_incidence_mat_props}. We are guaranteed that $\rlcK_{22}$ is non-singular as $\rlcM_1$ (Eq.~\eqref{eq:M1_basis_P_Q}) is non-singular. We note that $\rlcK_{22}$ is a saddle-point matrix and using analysis from \cite[Theorem~3.4]{benzi2005numerical}, it can be shown that
\begin{equation}\label{eq:lb_sigma_K22}
\sigma_{\min}(\rlcK_{22}) \geq \min \left\{\frac{\sigma_{\min}(\rlcA_{\rlcvs}^T \rlcQ_{\rlccap})}{1 + (\|\widetilde{\rlcG}_{\rlccap}\|/\sigma_{\min}(\rlcA_{\rlcvs}^T \rlcQ_{\rlccap}))}, \frac{\min_{w \in \ker(\rlcA_{\rlcvs}^T \rlcQ_{\rlccap}, \norm{w}_2=1)}(w^T \widetilde{\rlcG}_{\rlccap} w)}{1 + (\|\widetilde{\rlcG}_{\rlccap}\|/\sigma_{\min}(\rlcA_{\rlcvs}^T \rlcQ_{\rlccap}))} \right\},    
\end{equation}
where $\widetilde{\rlcG}_{\rlccap} = \rlcQ_{\rlccap} \rlcA_{\rlcres} \rlcG \rlcA_{\rlcres}^T \rlcQ_{\rlccap}$. Let us denote the lower bound of $\sigma_{\min}(\rlcK_{22})$ in Eq.~\eqref{eq:lb_sigma_K22} as $\upsilon(\rlcA_{\rlcvs}^T \rlcQ_{\rlccap}, \widetilde{\rlcG}_{\rlccap})$.

Substituting Eq.~\eqref{eq:interim_ub_norm_invD}, Eq.~\eqref{eq:interim_ub_norm_K12} and Eq.~\eqref{eq:lb_sigma_K22} into Eq.~\eqref{eq:ub_norm_M1_interim1} gives us
$$
\norm{\rlcM_1^{-1}}_2 \leq O\left( \frac{\lambda_{\max}(\widetilde{\rlcG}) + \sqrt{\lambda_{\max}(\rlcA_{\rlcind} \rlcA_{\rlcind}^T) + \lambda_{\max}(\rlcA_{\rlcvs} \rlcA_{\rlcvs}^T)}}{\min\{ \lambda_{\min}^{+}(\widetilde{\rlcC}), \lambda_{\min}(\rlcL) \upsilon(\rlcA_{\rlcvs}^T \rlcQ_{\rlccap}, \widetilde{\rlcG}_{\rlccap})} \right)
$$
The second part follows from noting that $\lambda_{\max}(\rlcA \rlcA^T) \leq 2d$ (Fact~\ref{fact:spectrum_laplacian_G}). This completes the proof.
\end{proof}

\subsection{Block-encoding incidence and nodal matrices} \label{ssec:block-enc-incidence-nodal}

\paragraph{Reduced incidence matrices.} We now describe how the block-encodings of the reduced incidence matrix $\rlcA$ and submatrices $\rlcA_\rlcind$, $\rlcA_\rlcind$ and $\rlcA_\rlcres$ which are the restrictions of $\rlcA$ to capacitative, inductive and resistive elements respectively. These use standard sparse-access oracles as defined at the beginning of the section (Section~\ref{sec:algo_components_RLC}).

\begin{claim}\label{claim:rlcA_BE}
Let $\varepsilon \in (0,1)$, $d,N \in \mathbb{N}$. Let $S = \{\rlcres, \rlcind, \rlccap, \rlcjs, \rlcvs\}$. Suppose $\calG$ is a $\rlc$ circuit over $N+1$ nodes with degree-$d$ excluding the reference node and satisfies Definition~\ref{def:structure_RLC_circs}. Let $\mathbf{A} \in \mathbb{R}^{N \times N}$ be the reduced incidence matrix of $\calG$. Then, there exists $(\sqrt{2d}, O(\log N),\varepsilon)$-block-encodings $U_{A_b}$ for the reduced incidence sub-matrices $\rlcA_b$ for all $b \in S$. It requires a single use of $O^r_{\rlcA_b},O^c_{\rlcA_b}$ oracles, two uses of $O_{\rlcA_b}$ and additional $O(\log(N) + \polylog(d/\varepsilon))$ one and two-qubit gates. 
\end{claim}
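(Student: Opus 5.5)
The plan is to apply Lemma~\ref{lem:sparse-block-enc} directly to each reduced incidence sub-matrix $\rlcA_b$, $b \in S$, after embedding it into a square matrix of dimension $2^w \times 2^w$. Here $w = \lceil \log_2 \max(N, |\calE|) \rceil$, and the remaining rows and columns are padded with zeros. Since the circuit has degree $d$ (excluding the reference node), $|\calE| \le dN$. Hence $w = O(\log(dN)) = O(\log N)$ for $d \le N$; the $\log d$ part can in any case be absorbed into the $\polylog(d/\varepsilon)$ term. Zero-padding preserves the sparsity pattern and the operator norm. The sparse-access oracles $O^r_{\rlcA_b}$ and $O^c_{\rlcA_b}$ extend trivially to the padded indices, since the dummy indices are handled by the convention in Lemma~\ref{lem:sparse-block-enc}.

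Next I would verify the hypotheses of Lemma~\ref{lem:sparse-block-enc}. Every entry of $\rlcA_b$ lies in $\{-1,0,1\}$ by the definition in Eq.~\eqref{eq:incidence_defn}, so all entries have absolute value at most $1$ and have an exact $b=2$-bit description. For row sparsity, $\rlcA_b$ consists of a subset of the columns of $\rlcA$. Row $j$ of $\rlcA$ has a nonzero entry exactly at the edges incident to node $j$, so row $j$ has at most $\deg(j) \le d$ nonzeros, giving $s_r \le d$. For column sparsity, each edge has exactly two endpoints. If one endpoint is the reference node, its row has been deleted, so each column has at most $2$ nonzeros and $s_c \le 2$. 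These are the sparsities noted in the remark after Definition~\ref{def:structure_RLC_circs}.

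Lemma~\ref{lem:sparse-block-enc} then gives a $(\sqrt{s_r s_c}, w+3, \varepsilon)$-block-encoding. Since $\sqrt{s_r s_c} \le \sqrt{2d}$, rescaling by a constant (or simply stating the subnormalization as $\sqrt{2d}$, which is valid because a $(\alpha,\cdot,\cdot)$-encoding with $\alpha \le \sqrt{2d}$ can be turned into one with $\sqrt{2d}$ by composing with a single-qubit rotation ancilla) yields the stated $(\sqrt{2d}, O(\log N), \varepsilon)$ parameters. The cost is one use each of $O^r_{\rlcA_b}$ and $O^c_{\rlcA_b}$ and two uses of $O_{\rlcA_b}$, plus $O(w + \log^{2.5}(2d/\varepsilon)) = O(\log N + \polylog(d/\varepsilon))$ gates.

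The only mildly delicate step is the bookkeeping. This consists of the non-square shape of $\rlcA_b$ (rows indexed by nodes, columns by branches of type $b$) and of making sure the subnormalization is exactly $\sqrt{2d}$ rather than $\sqrt{s_r s_c}$. I would handle the first by the zero-padding above and the second by the ancilla-rotation rescaling. Everything else is a direct instantiation of Lemma~\ref{lem:sparse-block-enc}.
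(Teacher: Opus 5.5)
Your proposal is correct and follows the paper's proof. The paper's proof is exactly a direct application of Lemma~\ref{lem:sparse-block-enc} with row sparsity $d$, column sparsity $2$, and entries in $\{-1,0,1\}$; your zero-padding to a $2^w\times 2^w$ matrix is extra bookkeeping the paper leaves implicit. The ancilla-rotation rescaling is harmless but unnecessary: the sparsity parameters in Lemma~\ref{lem:sparse-block-enc} are only upper bounds, so you can plug in $s_r = d$, $s_c = 2$ directly and get subnormalization $\sqrt{2d}$.
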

\begin{proof}
As $\rlcA$ is $d$-row-sparse and $2$-column-sparse, all its submatrices are also $d$-row-sparse and $2$-column sparse. Noting that $\rlcA$ takes values in $\{-1,0,1\}$ we then obtain $(\sqrt{2d}, O(\log N),\varepsilon)$-block-encodings for any reduced incidence submatrix using Lemma~\ref{lem:sparse-block-enc}, for say $\rlcA_\rlcres$. It requires a single use of $O^r_{\rlcA_\rlcres},O^c_{\rlcA_\rlcres}$, two uses of $O_{\rlcA_\rlcres}$ and additional $O(\log(N) + \polylog(d/\varepsilon))$ one and two-qubit gates.
\end{proof}

\paragraph{Component matrices.} We will also need block-encodings of the component matrices, $\rlcC$, $\rlcG$ ($:= \rlcR^{-1}$) and $\rlcL$. Recall that $\rlcC$ and $\rlcG$ are diagonal matrices with positive entries. As stated in Definition~\ref{def:structure_RLC_circs}, we assume that $\rlcC, \rlcG$ are diagonal matrices taking maximum values of $\rlccap_\max$ and $\rlcres_\min$. Note that $\rlcL$ is $d$-row and column sparse as any inductor in $\calG$ has at most $d_\ell \leq d - 1$ mutual inductances with other inductors. Moreover, we assume that its entries have maximum value at most $\ell_\max$. We then have the following claim regarding the block-encodings of the component matrices. 
\begin{claim}\label{claim:rlc_BEs_components}
Let $\varepsilon \in (0,1)$. Suppose $\calG$ is a $\rlc$ circuit over $N+1$ nodes with degree-$d$ excluding the reference node, satisfying Definition~\ref{def:structure_RLC_circs}. Then, there exists
\begin{enumerate}[$(i)$]
    \item $(\rlccap_\max, O(\log N), \varepsilon)$-block-encoding $U_C$ of $\rlcC$. It requires a single use of $O^r_{\rlcC},O^c_{\rlcC}$, two uses of $O_{\rlcC}$ and additional $O(\log(N) + \polylog(1/\varepsilon))$ one and two-qubit gates.
    \item $(\rlcres_\min^{-1}, O(\log N), \varepsilon)$-block-encoding $U_G$ of $\rlcG$. It requires a single use of $O^r_{\rlcG},O^c_{\rlcG}$, two uses of $O_{\rlcG}$ and additional $O(\log(N) + \polylog(1/\varepsilon))$ one and two-qubit gates.
    \item $(d \rlcind_\max, O(\log N), \varepsilon)$-block-encoding $U_L$ of $\rlcL$. It requires a single use of $O^r_{\rlcG},O^c_{\rlcG}$, two uses of $O_{\rlcG}$ and additional $O(\log(N) + \polylog(d^2/\varepsilon))$ one and two-qubit gates.
\end{enumerate}
\end{claim}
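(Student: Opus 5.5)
The plan is to obtain all three block-encodings from Lemma~\ref{lem:sparse-block-enc}. Each component matrix is rescaled so that its entries have absolute value at most $1$, and the resulting subnormalization $\sqrt{s_r s_c}$ is multiplied by the rescaling factor.

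\emph{Capacitance matrix $\rlcC$.} This matrix is diagonal, so it is $1$-row-sparse and $1$-column-sparse. By Definition~\ref{def:structure_RLC_circs} its entries lie in $[\rlccap_\min,\rlccap_\max]$. Consider the rescaled matrix $\rlcC/\rlccap_\max$, whose entries lie in $(0,1]$.
\begin{itemize}
\item An oracle returning the entries of $\rlcC/\rlccap_\max$ can be built from $O_{\rlcC}$ by a reversible division by the known constant $\rlccap_\max$. This costs $\polylog(1/\varepsilon)$ gates at $b$-bit precision.
\item Lemma~\ref{lem:sparse-block-enc} with $s_r=s_c=1$ then gives a $(1, w+3, \varepsilon')$-block-encoding of $\rlcC/\rlccap_\max$, where $w=O(\log N)$.
\item This is a $(\rlccap_\max, O(\log N), \rlccap_\max\varepsilon')$-block-encoding of $\rlcC$. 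Setting $\varepsilon'=\varepsilon/\rlccap_\max$ gives item $(i)$; the extra $\log\rlccap_\max$ is absorbed into the polylog term.
\end{itemize}
The query count (one use each of $O^r_{\rlcC},O^c_{\rlcC}$ and two uses of $O_{\rlcC}$) and the $O(\log N+\polylog(1/\varepsilon))$ gate count are read off directly from Lemma~\ref{lem:sparse-block-enc}.

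\emph{Conductance matrix $\rlcG$.} The argument for item $(ii)$ is identical. Here $\rlcG$ is diagonal with entries in $[\rlcres_\max^{-1},\rlcres_\min^{-1}]$, so we rescale by $\rlcres_\min^{-1}$.

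\emph{Inductance matrix $\rlcL$.} Each inductor couples to at most $d_\rlcind\le d-1$ others, so $\rlcL$ is symmetric with at most $d$ nonzeros per row and per column. By Definition~\ref{def:structure_RLC_circs} its entries are bounded in magnitude by $\rlcind_\max$. The steps mirror the diagonal case:
\begin{itemize}
\item Lemma~\ref{lem:sparse-block-enc} applied to $\rlcL/\rlcind_\max$ with $s_r=s_c=d$ gives subnormalization $\sqrt{d\cdot d}=d$.
\item Undoing the rescaling yields a $(d\rlcind_\max, O(\log N),\varepsilon)$-block-encoding, with precision $\varepsilon/\rlcind_\max$ for the rescaled encoding.
\item The gate overhead is $O(\log N + \log^{2.5}(d^2\rlcind_\max/\varepsilon))$, which is the stated $\polylog(d^2/\varepsilon)$.
\end{itemize}
The oracles used are those for $\rlcL$; the statement's ``$O_{\rlcG}$'' is a typo.

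\emph{Main obstacle.} The only subtle point is the rescaling. If $\rlcind_\min$ is negative (mutual inductances), entry magnitudes must be bounded by $\max(|\rlcind_\min|,\rlcind_\max)$. Because $\rlcind_\min<\rlcind_\max$, I will read the definition as bounding all magnitudes by $\rlcind_\max$. In addition, the $b$-bit entry description must carry a sign bit, which Lemma~\ref{lem:sparse-block-enc} already permits.
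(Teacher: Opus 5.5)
Your proposal is correct and matches the paper's proof. The paper likewise applies Lemma~\ref{lem:sparse-block-enc} directly to the diagonal ($1$-sparse) matrices $\rlcC,\rlcG$ and to the $d$-sparse matrix $\rlcL$, with the subnormalization set by the largest entry magnitude. You additionally spell out the rescaling and error bookkeeping, and you are right that the $O_{\rlcG}$ oracles in item $(iii)$ are a typo for those of $\rlcL$.

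The worry about negative mutual inductances is unnecessary. Since $\rlcL$ is symmetric positive definite, $|\rlcL_{ij}|\le\sqrt{\rlcL_{ii}\rlcL_{jj}}\le\rlcind_\max$, so every entry is bounded in magnitude by $\rlcind_\max$ with no reinterpretation of Definition~\ref{def:structure_RLC_circs} needed.
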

\begin{proof}
$(i)$, $(ii)$ The block-encodings follow from Lemma~\ref{lem:sparse-block-enc} noting that $\rlcC$ and $\rlcG$ are $1$-sparse diagonal matrices. Their maximum values are $\rlccap_\max$ and $\rlcres_\min$ respectively. 

$(iii)$ $\rlcL$ is $d$-row and column-sparse, with any element taking a maximum value of $\rlcind_\max$. The block-encoding then again follows from using Lemma~\ref{lem:sparse-block-enc}.
\end{proof}

\paragraph{MNA Equations.} The governing $\mna$ equations of $\rlc$ circuits are specified in terms of the matrices $\rlcM$ and $\rlcK$ (Eq.~\eqref{eq:mna}). We enumerate the cost of synthesizing their block-encodings below.
\begin{claim}\label{claim:rlc_M_K_BE}
Let $\varepsilon_M, \varepsilon_K \in (0,1)$. Suppose $\calG$ is a $\rlc$ circuit over $N+1$ nodes with degree-$d$ excluding the reference node, satisfying Definition~\ref{def:structure_RLC_circs}. Then, there exists 
\begin{enumerate}[$(i)$]
\item $(\alpha_M, a_M, \varepsilon_M)$-block-encoding $U_M$ of $\rlcM$ where 
$\alpha_M = 2dc_\max+ d \rlcind_\max$, and $a_M = O(\log N)$. It is implementable with $O(1)$ uses of $U_{A_\rlccap}$, $U_C$, and $U_L$. This requires $O(1)$ uses of their respective classical oracles and an additional gate complexity of $O(\log(N) + \polylog(d/\varepsilon))$.
\item $(\alpha_K, a_K, \varepsilon_K)$-block-encoding $U_K$ of $\rlcK$ where
$\alpha_K = 2d \rlcres_\min + \sqrt{2d}$ and $a_K = O(\log N)$. It is implementable with $O(1)$ uses of $U_G$, $U_{A_{\rlcres}}$, $U_{A_{\rlcind}}$ and $U_{A_{\rlcvs}}$. This requires $O(1)$ uses of their respective classical oracles and an additional gate complexity of $O(\log(N) + \polylog(d/\varepsilon))$.
\end{enumerate}
For LC circuits, there exists a $(4\sqrt{2d},O(\log N),\varepsilon_A)$-block-encoding $U_K$. For RLC circuits without voltage sources, there exists a $(2d \rlcres_\min^{-1} + 2\sqrt{2d},O(\log N),\varepsilon_A)$-block-encoding $U_K$.
\end{claim}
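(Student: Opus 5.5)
We build $U_M$ and $U_K$ blockwise from the elementary block-encodings of Claims~\ref{claim:rlcA_BE} and \ref{claim:rlc_BEs_components}. The combining tools are the product rule (Lemma~\ref{lem:prod_BEs}), the linear-combination rule (Lemma~\ref{lem:sum_BEs}), and the tensor/embedding rule (Lemma~\ref{lem:tensor_prod_BEs}). To place a sub-matrix into a block of $\rlcM$ or $\rlcK$, we tensor it with a $(1,1,0)$-block-encoding of a matrix unit $\ket{i}\bra{j}$ on an $O(1)$-qubit block register. Such a unit costs $O(1)$ gates, by the same reflection trick as Claim~\ref{claim:proj_0state_BE} combined with an $X$ gate for off-diagonal units. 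Each placed block keeps its subnormalization, so the costs come from summing the pieces.

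\textbf{Step 1: $\rlcM$.} First take the $(\sqrt{2d},O(\log N),\varepsilon')$-block-encoding of $\rlcA_\rlccap$, the $(\rlccap_\max,O(\log N),\varepsilon')$-block-encoding of $\rlcC$, and the block-encoding of $\rlcA_\rlccap^T$, which is the adjoint circuit $U_{A_\rlccap}^\dagger$. Multiplying these three with Lemma~\ref{lem:prod_BEs} gives a $(2d\rlccap_\max, O(\log N), O(d\rlccap_\max\varepsilon'))$-block-encoding of $\rlcA_\rlccap\rlcC\rlcA_\rlccap^T$. Next, embed this into the $(1,1)$ block and embed $U_L$, with subnormalization $d\rlcind_\max$, into the $(2,2)$ block. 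Summing the two with Lemma~\ref{lem:sum_BEs}, using a one-qubit rotation as the state preparation, yields $\alpha_M = 2d\rlccap_\max + d\rlcind_\max$. The error is a weighted sum of the $\varepsilon'$'s, so choosing $\varepsilon' = \varepsilon_M/\poly(d,\rlccap_\max,\rlcind_\max)$ suffices. Sparse-access synthesis contributes $\polylog(1/\varepsilon')$ gates, giving the stated $O(\log N + \polylog(d/\varepsilon))$.

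\textbf{Step 2: $\rlcK$.} Split $\rlcK = \rlcK_1 + \rlcK_2$ as in the proof of Claim~\ref{claim:smax_K}. For $\rlcK_1$, the product $\rlcA_\rlcres\rlcG\rlcA_\rlcres^T$ has subnormalization $2d\,\rlcres_\min^{-1}$ and is embedded in the $(1,1)$ block. For $\rlcK_2$, embed $\rlcA_\rlcind$ and $\rlcA_\rlcvs$ in the off-diagonal positions $(1,2)$ and $(1,3)$, and embed $-\rlcA_\rlcind^T$ and $-\rlcA_\rlcvs^T$ in positions $(2,1)$ and $(3,1)$. The sign is absorbed as a global phase via a controlled $Z$ on the block register. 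A naive LCU of all five pieces has subnormalization $2d\rlcres_\min^{-1} + 4\sqrt{2d}$.

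To reach the sharper constant $\sqrt{2d}$ for $\rlcK_2$, I would instead treat $\rlcK_2$ as a single sparse matrix. Its entries lie in $\{-1,0,1\}$, and each row and column has at most $d$ non-zeros. Lemma~\ref{lem:sparse-block-enc} then gives subnormalization $d$. Alternatively, write $\rlcK_2$ as the skew dilation built from $\begin{bmatrix}\rlcA_\rlcind & \rlcA_\rlcvs\end{bmatrix}$, which is itself a column-submatrix of $\rlcA$ with a sparse block-encoding of subnormalization $\sqrt{2d}$. Lemma~\ref{lem:prod_BEs}(i) then gives a Hermitian-dilation-type block-encoding with no extra subnormalization; the skew signs come from a phase on one half of the dilation register.

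Combining $\rlcK_1$ and $\rlcK_2$ with Lemma~\ref{lem:sum_BEs} gives $\alpha_K = 2d\rlcres_\min^{-1} + \sqrt{2d}$. The statement writes $2d\rlcres_\min$, which I read as a typo for $2d\rlcres_\min^{-1}$. The special cases then follow by dropping terms. For LC circuits $\rlcG = 0$, and the four-piece LCU gives $4\sqrt{2d}$. Without voltage sources there are only two off-diagonal pieces, giving $2d\rlcres_\min^{-1} + 2\sqrt{2d}$.

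\textbf{Main obstacle.} The main obstacle is bookkeeping rather than substance. It lies in two places: getting the exact subnormalization constants, in particular matching $\alpha_K$ to the claimed value, which depends on whether the off-diagonal pieces are combined by LCU or by a single dilation; and propagating errors through the products. The error propagation is handled by Claim~\ref{claim:prod_list_BEs}: the product error is $\sum_i \varepsilon_i\,\widetilde\alpha/\alpha_i$, which is polynomial in $d$ and the component bounds, so only $\polylog$ overhead enters the gate count.
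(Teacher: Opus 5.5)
Your Step~1, and the five-piece LCU you describe first in Step~2 together with the two special cases, are exactly the paper's proof. The paper stops there, and its proof actually ends with $\alpha_K = 2d\rlcres_\min^{-1}+4\sqrt{2d}$, so it never establishes the $\sqrt{2d}$ in its own statement. You are right to read $2d\rlcres_\min$ as $2d\rlcres_\min^{-1}$, and right that the stated constant needs an extra idea.

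Your skew-dilation route supplies that idea and is the genuinely different part. With $U_B$ a sparse block-encoding of $B = [\rlcA_{\rlcind}\ \rlcA_{\rlcvs}]$, the operator $\ket{0}\bra{1}\otimes U_B - \ket{1}\bra{0}\otimes U_B^\dagger$ is unitary and encodes $\rlcK_2$ with the same subnormalization $\sqrt{2d}$ and the same error as $U_B$. One further LCU with the $\rlcK_1$ piece then gives $2d\rlcres_\min^{-1}+\sqrt{2d}$.

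The price is that you need sparse-access oracles for the concatenated matrix $B$, not just black-box use of $U_{A_{\rlcind}}$ and $U_{A_{\rlcvs}}$ as the claim is phrased. Its node rows have at most $d$ non-zeros, because a node's inductor and voltage-source branches together number at most $d$. Combining the two black boxes directly loses a constant: $2\sqrt{2d}$ by LCU, or $2\sqrt{d}$ by a Hadamard merge. The needed oracles can be assembled from the per-component ones with $O(1)$ queries, or from an oracle for $\rlcA$ plus a component-type oracle as the paper's footnote suggests; you should say this explicitly.

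Your other suggestion, treating $\rlcK_2$ as a single sparse matrix, does not reach the target. Its node rows and node columns both carry up to $d$ non-zeros, so Lemma~\ref{lem:sparse-block-enc} gives subnormalization $d$, which exceeds $\sqrt{2d}$ once $d>2$.

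Downstream results only use $\alpha_K = O(d\rlcres_\min^{-1}+\sqrt{d})$, so either the paper's LCU or your dilation suffices for them. Yours is the one that matches the stated constant.
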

\begin{proof}
$(i)$ We note that $\rlcM = \ket{0}\bra{0} \otimes \rlcA_{\rlccap} \rlcC \rlcA_{\rlccap}^T + \ket{1}\bra{1} \otimes \rlcL$. We first obtain a block-encoding $\widetilde{\rlcC} = \rlcA_{\rlccap} \rlcC \rlcA_{\rlccap}^T$. Let $\varepsilon_1,\varepsilon_2,\varepsilon_3 \in (0,1)$ be error parameters to be determined later. Consider the $O(\sqrt{2d}, O(\log N), \varepsilon_1)$-block-encoding $U_{A_{\rlccap}}$ of $\rlcA_\rlccap$ obtained from Claim~\ref{claim:rlcA_BE} and the $O(\rlccap_\max, O(\log N), \varepsilon_2)$-block-encoding $U_C$ of $\rlcC$ from Claim~\ref{claim:rlc_BEs_components}. Using Lemma~\ref{lem:prod_BEs}, we then obtain a $(2d \rlccap_\max, O(\log N), 2 \sqrt{2d} \rlccap_\max \varepsilon_1 + 2d \varepsilon_2)$-block-encoding of $\widetilde{\rlcC}$, which we denote by $U_{\widetilde{C}}$. We also have a $(d \rlcind_\max, O(\log N), \varepsilon_3)$-block-encoding $U_L$ of $\rlcL$. From Claim~\ref{claim:proj_0state_BE}, we can obtain $(1,1,0)$-block-encodings of $\ket{0}\bra{0}$ and $\ket{1}\bra{1}$, which we denote by $U_0$ and $U_1$ respectively. We can combine $U_{\widetilde{C}}$ with $U_1$ using Lemma~\ref{lem:tensor_prod_BEs} to obtain a $O(2d \rlccap_\max, O(\log N), 2 \sqrt{2d} \rlccap_\max \varepsilon_1 + 2d \varepsilon_2)$-block-encoding of $\ket{0}\bra{0} \otimes \widetilde{\rlcC}$, which we denote by $V_1$. Similarly, we can combine $U_L$ with $U_2$ using Claim~\ref{claim:proj_0state_BE} to obtain a $(d \rlcind_\max, O(\log N), \varepsilon_3)$-block-encoding of $\ket{1}\bra{1} \otimes \rlcL$, which we denote by $V_2$. We then combine $V_1$ and $V_2$ using Lemma~\ref{lem:sum_BEs} to obtain a $(2d \rlccap_\max + d \rlcind_\max, O(\log N), \varepsilon_3 + 2 \sqrt{2d} \rlccap_\max \varepsilon_1 + 2d \varepsilon_2)$-block-encoding of $\rlcM$, which we denote by $U_M$. Setting $\varepsilon_3 = \varepsilon_M/3$, $\varepsilon_1 = \varepsilon_M/(6 \sqrt{2d})$ and $\varepsilon_2 = \varepsilon_M/(6d)$ gives us the desired precision. The gate complexity of $U_M$ is due to applications of $U_{A_\rlccap}$ (Claim~\ref{claim:rlcA_BE}), $U_C$ (Claim~\ref{claim:rlc_BEs_components}), and $U_L$ (Claim~\ref{claim:rlc_BEs_components}). This completes the proof of item $(i)$. 

$(ii)$ Let $\widetilde{\rlcG} = \rlcA_{\rlcres} \rlcG \rlcA_{\rlcres}^T$. We note that $\rlcK = \ket{0}\bra{0} \otimes \widetilde{\rlcG} + (\ket{0}\bra{1} \otimes \rlcA_{\rlcind} - \ket{1}\bra{0} \otimes \rlcA_{\rlcind}^T) + (\ket{0}\bra{2} \otimes \rlcA_{\rlcvs} - \ket{2}\bra{0} \otimes \rlcA_{\rlcvs}^T)$. Let $\varepsilon_1,\varepsilon_2,\varepsilon_3, \varepsilon_4 \in (0,1)$ be error parameters to be determined later. Using Lemma~\ref{lem:prod_BEs}, we obtain a $O(2d\rlcres_\min^{-1},O(\log N), 2\sqrt{2d} \rlcres_\min^{-1} \varepsilon_1 + 2d \varepsilon_2)$-block-encoding $U_{\widetilde{G}}$ of $\widetilde{\rlcG}$ using the $O(\rlcres_\min^{-1}, O(\log N), \varepsilon_2)$-block-encoding $U_G$ of $\rlcG$ (Claim~\ref{claim:rlc_BEs_components}) and $(\sqrt{2d}, O(\log N), \varepsilon_1)$-block-encoding $U_{A_{\rlcres}}$ of $\rlcA_{\rlcres}$ (Claim~\ref{claim:rlcA_BE}). We also have $(\sqrt{2d}, \log(N), \varepsilon_3)$-block-encoding $U_{A_{\rlcind}}$ of $\rlcA_{\rlcind}$ (Claim~\ref{claim:rlcA_BE}) and $(\sqrt{2d}, \log(N), \varepsilon_4)$-block-encoding $U_{A_{\rlcvs}}$ of $\rlcA_{\rlcvs}$ (Claim~\ref{claim:rlcA_BE}). We can also obtain $(1,1,0)$-block-encodings of $\ket{0}\bra{0}$, $\ket{0}\bra{1}$, etc., using Claim~\ref{claim:proj_0state_BE}. We can then take tensor products of these block-encodings with $U_{\widetilde{G}}$ (or $U_{A_{\rlcind}}$, $U_{A_{\rlcvs}}$) as in the expression of $\rlcK$ using Lemma~\ref{lem:tensor_prod_BEs}. Finally, we again apply Lemma~\ref{lem:sum_BEs} on the resulting block-encodings to obtain a $(\alpha_K,a_K,\varepsilon_K)$-block-encoding $U_K$ of $\rlcK$ where
$$
\alpha_K = 2d \rlcres_\min^{-1} + 4\sqrt{2d}, \quad a_K = O(\log N), \quad \varepsilon_K = 2\sqrt{2d} \rlcres_\min^{-1} \varepsilon_1 + 2d \varepsilon_2 + \varepsilon_3 + \varepsilon_4.
$$
It suffices to set $\varepsilon_1 = O(\varepsilon_K/\sqrt{d})$, $\varepsilon_2 = O(\varepsilon_K/d)$, and $\varepsilon_3 = \varepsilon_4 = O(\varepsilon_K)$ to obtain the desired precision. The main contribution to gate complexity is due to applications of $U_G$ (Claim~\ref{claim:rlc_BEs_components}) and the reduced incidence matrices of $U_{A_\rlcres}$, $U_{A_\rlcvs}$, $U_{A_\rlcind}$ (Claim~\ref{claim:rlcA_BE}). 

For LC circuits, we do not need to construct the $\widetilde{G}$ matrix which is the zero matrix. We can then obtain a $(4\sqrt{2d},O(\log N),\varepsilon_A)$-block-encoding there. For RLC circuits without voltage sources, we do not need to account for $\rlcA_{\rlcvs}$ and we obtain a $(2d \rlcres_\min^{-1} + 2\sqrt{2d},O(\log N),\varepsilon_A)$-block-encoding. This completes the proof of item $(ii)$ and the overall proof of the claim.
\end{proof}

\paragraph{Block-encoding projectors and dynamics.} 
We can now use these approximate block-encodings $U_M$ of $\rlcM$ and $U_K$ of $\rlcK$ (particular to the $\mna$ equations) for constructing block-encodings relevant for simulating the dynamics of $\rlc$ circuits, as was done in Section~\ref{subsec:useful_BEs} for general $\rlcM,\rlcK$. 

\subsection{Initial state and outputs}

\paragraph{Initial state and sources.}
As described in Problem~\ref{prob:RLC_dynamics}, we assume we are given a quantum circuit to prepare the initial state of interest $\vec{x}(0) = (\vec{u}(0),\vec{i}_{\rlcind}(0),\vec{i}_{\rlcvs})^T$ for general index and $\vec{x}(0) = (\vec{u}, \vec{i}_{\rlcind})^T$ for index $0$ where voltage sources are not permitted.

We may also be given the initial state as $(\vec{v}(0),\vec{i}_{\rlcind}(0),\vec{i}_{\rlcvs}(0))^T$ involving the branch voltages $\vec{v}(0)$ at time $0$. As our quantum algorithm works with the $\mna$ system, we require an initial state involving the node voltages. To determine the node voltages of the branch voltages, we note that they are related as $\vec{v}(0) = \rlcA^T \vec{u}(0)$. As $\rlcA$ has full row rank, we can determine $\vec{u}(0)$ uniquely by solving
$$
\vec{u}(0) = (\rlcA\rlcA^T)^{-1} \rlcA \vec{v}(0).
$$
Since we are able to construct block-encodings of $\rlcA$ efficiently, we can then determine $\vec{u}(0)$ efficiently as part of our quantum algorithm as well.

\paragraph{Outputs: dissipated power and stored energy.} 

The voltage drops across the resistors can be obtained from the nodal voltages $\vec{u}$ as 
$\vec{v}_{\rlcres} = \rlcA_{\rlcres} \vec{u}$. The dissipated power at any time is then given by
$$P_R = \vec{v}_{\rlcres}^T \rlcG \vec{v}_{\rlcres} = \vec{u}^T \rlcA_{\rlcres}^T \rlcG \rlcA_{\rlcres} \vec{u}.$$

Likewise, we can also determine the energy stored in the capacitors 
$$
E_C = \frac{1}{2} \vec{v}_{\rlccap}^T \rlcC \vec{v}_{\rlccap} = \frac{1}{2} \vec{u}_{\rlccap}^T \rlcA_\rlccap^T \rlcC \rlcA_\rlccap \vec{u}_{\rlccap}.
$$
And lastly, the energy stored in the inductors $E_L$ can be obtained from the currents across the inductors $i_l$ as
$$
E_L = \frac{1}{2} \vec{i}_\ell^T \rlcL \vec{i}_\ell.
$$
All of these quantities can be determined to within additive precision follwoing the procedure described in Section~\ref{sec:measure-observables}.

\section{Simulating dynamics of RLC circuits}\label{sec:sim_RLC}
In this section, we consider the problem of simulating $\rlc$ circuits (Problem~\ref{prob:RLC_dynamics}) and apply the quantum $\dae$ solver (Section~\ref{subsec:QDAE_solver_algo_analysis}) to the $\mna$ equations (Eq.~\eqref{eq:mna}) governing the dynamics of $\rlc$ circuits. These $\dae$s have at most index two (Theorem~\ref{thm:index_mna}) depending on the topological structure of $\calG$. We also comment on outputting physically relevant quantities such as stored energy or dissipated power (Problem~\ref{prob:classical_output}) for each case. We state the complexities in terms of queries to the block-encoding unitaries $U_M$ and $U_K$ where we recall that $\rlcM$ and $\rlcK$ define the $\dae$, see Eqn.~\eqref{eq:dae-linear}. Each query to $U_M$ or $U_K$ require $O(1)$ queries to the $\rlc$-circuit oracles $O$ as shown in Claim~\ref{claim:rlc_BEs_components}, and thus our statements below also reflect the complexities in terms of the latter.

\subsection{DAE of index zero}\label{sec:rlc_ind0}
In this section, we consider $\rlc$ circuits satisfying condition $(i)$ of Theorem~\ref{thm:index_mna} and thus, the $\mna$ equations (Eq.~\eqref{eq:mna}) are of index zero. For simulating such $\rlc$ circuits (Problem~\ref{prob:RLC_dynamics}), we consider the approach described in Section~\ref{sec:master_qa} and give the resulting theorem in Section~\ref{subsec:analysis_index0}. We also show that it is possible to obtain a quantum speedup in outputting energy corresponding to different components in the $\rlc$ circuit, thereby solving Problem~\ref{prob:classical_output}.

Finally, we show that index zero already captures the hardness of simulating $\rlc$ circuits by showing it is $\bqp$-complete in Section~\ref{sec:bqp}.

\subsubsection{Algorithm}\label{subsec:qa_index0}
The $\mna$ equations for index $0$ are
\begin{align} 
    \label{eq:rlc_mna_index0}
    \underbrace{
    \begin{bmatrix}
        \rlcA_{\rlccap} \rlcC \rlcA_{\rlccap}^T & 0 \\
        0 & \rlcL 
    \end{bmatrix}}_{:=\rlcM}
    \frac{d}{dt} \begin{bmatrix}
        \vec{u}(t) \\ \vec{i}_\rlcind(t)
    \end{bmatrix} + 
    \underbrace{
    \begin{bmatrix}
        \rlcA_\rlcres \rlcG \rlcA_\rlcres^T & \rlcA_{\rlcind} \\
        -\rlcA_{\rlcind}^T & 0
    \end{bmatrix}}_{:=\rlcK}
    \begin{bmatrix}
        \vec{u}(t) \\ \vec{i}_{\rlcind}(t)
    \end{bmatrix} = 
    \underbrace{
    \begin{bmatrix}
        - \rlcA_{\rlcjs} \vec{i}_{\rlcjs} \\ 0
    \end{bmatrix}}_{:=\vec{f}}
\end{align}
As $\rlcA_\rlccap \rlcC \rlcA_\rlccap^T$ is invertible, the governing ODE of the $\rlc$ circuits can be directly extracted:
\begin{align}
    \dot{\vec{x}}(t) = -\rlcM^{-1} \rlcK \vec{x}(t) + \rlcM^{-1} \vec{f}
\end{align}
where $\rlcM$ and $\rlcK$ are as defined in Eq.~\eqref{eq:rlc_mna_index0}. We will use the algorithm of Theorem~\ref{thm:sim_ind0} to solve the above $\ode$.

\subsubsection{Analysis}\label{subsec:analysis_index0}

\paragraph{Norm of the exponential.}

We first prove the following claim regarding general $\rlc$ circuits with $\mna$ equations of index $0$, in terms of the properties of $\rlcM$ and $\rlcK$.
\begin{claim}\label{claim:rlc_expnorm_M_index0}
    Suppose the $\rlc$ circuit $\calG$ satisfies condition $(i)$ of Theorem~\ref{thm:index_mna}. Then, the norm of the exponential of $-\rlcM^{-1}\rlcK$ satisfies
    $$\expnorm(- \rlcM^{-1} \rlcK) \leq \sqrt{\kappa(\rlcM)}.$$
\end{claim}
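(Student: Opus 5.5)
The natural approach is a Lyapunov/energy argument in the norm induced by $\rlcM$, which is symmetric positive definite for index zero: $\rlcL$ is SPD, and $\rlcA_\rlccap\rlcC\rlcA_\rlccap^T$ is SPD by Claim~\ref{claim:kappa_M}. Let $\vec{x}(t)=\exp(-\rlcM^{-1}\rlcK t)\vec{x}_0$ solve the homogeneous equation $\rlcM\dot{\vec{x}}=-\rlcK\vec{x}$. Take the quadratic energy
$$
E(t)=\vec{x}(t)^T\rlcM\vec{x}(t).
$$
This is twice the physical energy stored in capacitors and inductors. Its derivative is
$$
\dot E = 2\vec{x}^T\rlcM\dot{\vec{x}} = -2\vec{x}^T\rlcK\vec{x} = -\vec{x}^T(\rlcK+\rlcK^T)\vec{x}.
$$

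The key structural fact is the one flagged in the footnote to Eq.~\eqref{eq:mna}: the Hermitian part of $\rlcK$ is block-diagonal, with only the block $\rlcA_\rlcres\rlcG\rlcA_\rlcres^T$ nonzero. The off-diagonal blocks $\rlcA_\rlcind$ and $-\rlcA_\rlcind^T$ cancel in $\rlcK+\rlcK^T$. So
$$
\dot E = -2\,\vec{u}^T\rlcA_\rlcres\rlcG\rlcA_\rlcres^T\vec{u}\le 0,
$$
since $\rlcG$ is positive diagonal. In matrix language, $\rlcX=\rlcM$ solves the Lyapunov inequality $\mathbf{H}^T\rlcX+\rlcX\mathbf{H}\preceq 0$ with $\mathbf{H}=-\rlcM^{-1}\rlcK$, because $\mathbf{H}^T\rlcM+\rlcM\mathbf{H}=-(\rlcK^T+\rlcK)$. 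Hence $E(t)\le E(0)$ for all $t\ge 0$.

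Converting back to the Euclidean norm gives
$$
\lambda_{\min}(\rlcM)\norm{\vec{x}(t)}^2\le E(t)\le E(0)\le\lambda_{\max}(\rlcM)\norm{\vec{x}_0}^2,
$$
so $\norm{\vec{x}(t)}\le\sqrt{\kappa(\rlcM)}\norm{\vec{x}_0}$. Taking the supremum over unit $\vec{x}_0$ and over $t\in[0,T]$ yields $\expnorm(-\rlcM^{-1}\rlcK)\le\sqrt{\kappa(\rlcM)}$.

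An equivalent route, which would let us invoke Fact~\ref{fact:ub_exp_norm} directly, is to conjugate by $\rlcM^{1/2}$. Write $\exp(-\rlcM^{-1}\rlcK t)=\rlcM^{-1/2}\exp(-\rlcM^{-1/2}\rlcK\rlcM^{-1/2}t)\rlcM^{1/2}$. The middle exponential has norm at most $1$, because the Hermitian part of $-\rlcM^{-1/2}\rlcK\rlcM^{-1/2}$ is negative semidefinite and Fact~\ref{fact:ub_exp_norm} applies. Submultiplicativity then gives the factor $\norm{\rlcM^{-1/2}}\norm{\rlcM^{1/2}}=\sqrt{\kappa(\rlcM)}$.

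There is no real obstacle here. The only points to check are the sign and cancellation in the skew part of $\rlcK$, and that $\rlcM$ is genuinely positive definite so that $\rlcM^{\pm1/2}$ exist; the latter is exactly condition $(i)$ of Theorem~\ref{thm:index_mna}, via Claim~\ref{claim:kappa_M}.
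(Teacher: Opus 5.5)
Your proof is correct, and both routes work. Your second route is exactly the paper's proof. The paper conjugates by $\sqrt{\rlcM}$, splits $\rlcK$ into its Hermitian part $\rlcK_1=\mathrm{diag}(\rlcA_\rlcres\rlcG\rlcA_\rlcres^T,0)$ and skew part $\rlcK_2$, and discards $\rlcK_2$ via Fact~\ref{fact:ub_exp_norm}. It then pays $\norm{\rlcM^{-1/2}}\norm{\rlcM^{1/2}}=\sqrt{\kappa(\rlcM)}$.

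Your primary energy/Lyapunov argument is the same idea in a different form. Monotonicity of $E(t)=\vec{x}^T\rlcM\vec{x}$ is precisely the statement $\norm{\rlcM^{1/2}e^{-\rlcM^{-1}\rlcK t}\rlcM^{-1/2}}\le 1$. Your computation $\dot E=-\vec{x}^T(\rlcK+\rlcK^T)\vec{x}$ amounts to proving Fact~\ref{fact:ub_exp_norm} by hand for this operator.

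The energy version has several advantages. It is self-contained, with no appeal to the matrix-exponential inequality. It has a transparent physical reading: only the resistors dissipate stored energy, while the skew $\rlcA_\rlcind$ coupling merely exchanges it between capacitors and inductors. It holds for all $t\ge 0$, not just $t\in[0,T]$. It is also a direct instance of the Lyapunov template $\mathbf{H}^T\mathbf{X}+\mathbf{X}\mathbf{H}\preceq 0$ with $\mathbf{X}=\rlcM$ sketched in the paper's technical overview. The paper's route is shorter once the Fact is available.

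On the positivity of $\rlcM$ that you take from Claim~\ref{claim:kappa_M}: what is actually used is that a capacitive spanning tree makes $\rlcA_\rlccap$ full row rank, so $\rlcA_\rlccap^T\vec{u}=0$ forces $\vec{u}=0$. The matrix $\rlcA_\rlccap$ need not be square, as that claim's proof suggests.
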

\begin{proof}
Firstly, we note that we can write $-\rlcM^{-1}\rlcK$ as follows:
\begin{align}\label{eq:interim_norm_ind1}
    -\rlcM^{-1} \rlcK & = -\sqrt{ \rlcM^{-1}} \sqrt{\rlcM^{-1}} \rlcK \sqrt{\rlcM^{-1}}\sqrt{\rlcM} \\
    &=\sqrt{\rlcM^{-1}}(-\sqrt{\rlcM^{-1}} \rlcK_1\sqrt{ \rlcM^{-1}}-\sqrt{\rlcM^{-1}} \rlcK_2\sqrt{\rlcM^{-1}})\sqrt{\rlcM} ,
\end{align}
where in the second line, we split $\rlcK$ into $\rlcK_1$, which is the Hermitian part of $\rlcK$ and $\rlcK_2$, which is the anti-Hermitian part of $\rlcK$, with the following matrices
$$
\rlcK_1 = \frac{\rlcK + \rlcK^\dagger}{2} = \begin{bmatrix}
        \rlcA_\rlcres \rlcG \rlcA_\rlcres^T & 0 \\
        0 & 0
    \end{bmatrix}, \enspace 
\rlcK_2 = \frac{\rlcK - \rlcK^\dagger}{2} = \begin{bmatrix}
        0 & \rlcA_{\rlcind} \\
        -\rlcA_{\rlcind}^T & 0
    \end{bmatrix}.
$$
We can then bound the exponential norm as
\begin{align}
    \expnorm(- \rlcM^{-1} \rlcK) &= \sqrt{\rlcM^{-1}}\cdot\expnorm(-\sqrt{\rlcM^{-1}} \rlcK\sqrt{ \rlcM^{-1}})\cdot\sqrt{\rlcM}\\
    &\leq \|\sqrt{ \rlcM^{-1}}\| \cdot \expnorm(-\sqrt{\rlcM^{-1}} \rlcK_1\sqrt{ \rlcM^{-1}}-\sqrt{ \rlcM^{-1}} \rlcK_2\sqrt{\rlcM^{-1}}) \cdot \|\sqrt{\rlcM}\|\\
    &\leq \kappa(\sqrt{\rlcM}) \cdot \expnorm(-\sqrt{\rlcM^{-1}} \rlcK_1 \sqrt{\rlcM^{-1}})\,,
\end{align}
where the first inequality follows from the sub-multiplicativity of norms 
and the second inequality uses the definition of $\kappa(\sqrt{\rlcM})$ along with Fact~\ref{fact:ub_exp_norm}. Next, we observe that the non-zero eigenvalues of $- \rlcK_1$ are negative as the non-zero sub-matrix $\rlcA_\rlcres \rlcG \rlcA_\rlcres^T$ is semipositive definite 
\footnote{Note that $\vec{x}^T \rlcA_\rlcres \rlcG \rlcA_\rlcres^T \vec{x} = (\sqrt{G} \rlcA_\rlcres^T \vec{x})^T \sqrt{G} \rlcA_\rlcres^T \vec{x} = \norm{\sqrt{G} \rlcA_\rlcres^T \vec{x}}_2^2 \geq 0, \,\, \forall \vec{x} \text{ s.t.} \norm{\vec{x}}_2 = 1$.}. \

This implies that the non-zero eigenvalues of $\sqrt{\rlcM^{-1}} \rlcK_1 \sqrt{\rlcM^{-1}}$ are also negative. This implies
\begin{equation}
    \expnorm(-\sqrt{ \rlcM^{-1}} \rlcK_1 \sqrt{\rlcM^{-1}})\leq 1\,.
\end{equation}
Combining the above equation with Eq.~\eqref{eq:interim_norm_ind1} gives us that
\begin{equation}
    \expnorm(-\rlcM^{-1}\rlcK) \leq \kappa(\sqrt{\rlcM}) = \sqrt{\kappa(\rlcM)}\,,
\end{equation}
where the final equality follows from the fact that $\rlcM$ is symmetric and positive definite. This concludes the proof.
\end{proof}

For $\rlc$ circuits where the corresponding graph has degree-$d$ (excluding the reference node) and correspond to index $0$, we then immediately have from Claim~\ref{claim:kappa_M} that $\rlcM^{-1} \rlcK$ satisfies
$$
\expnorm(- \rlcM^{-1} \rlcK) \leq \sqrt{\frac{d \max\{2 \rlccap_\max, \rlcind_\max \}}{\min\{\rlccap_\min \cdot \lambda_{\min}(\rlcA_{\rlccap} \rlcA_{\rlccap}^T), \lambda_\min(\rlcL) \}}}.
$$

\paragraph{Simulating degree-$d$ $\rlc$ circuits.}
\begin{theorem}\label{thm:rlc_sim_ind0_degd}
Let $\varepsilon \in (0,1)$ and $i_\max > 0$. Consider the setup of Problem~\ref{prob:RLC_dynamics}. Suppose $\calG$ is a $\rlc$ circuit with degree-$d$ excluding the reference node and satisfies condition $(i)$ of Theorem~\ref{thm:index_mna}. Let $\norm{\vec{i_{\rlcjs}}} \leq i_\max$. Then, there is a quantum algorithm that outputs $|\widehat{\Psi}\rangle$ such that $\left\| |\widehat{\Psi}\rangle - \normhist^{-1} \sum_{k=0}^{m} \norm{\vec{x}(k \Delta t)} \ket{k, x(k \Delta t)} \right\| \leq \varepsilon$ with success probability $\geq 2/3$ where $\Delta t = \sigma/(2d \rlcres_\min + \sqrt{2d})$ and $m = \ceil{T/\Delta t}$. Define $\mu^2 := m^{-1} \sum_{j=1}^m \norm{\vec{x}(j\Delta t)}^2$ and
$$
\sigma:= \min\{\rlccap_{\min} \lambda_{\min}(\rlcA_{\rlccap}\rlcA_{\rlccap}^T), \lambda_\min(\rlcL) \}, \quad \kappa_M := \frac{2 \max\{\rlccap_\max, \rlcind_\max\}}{\sigma}, \quad \mathcal{C}_f := \log\left(1 + \frac{2 T e^2 d i_\max}{\mu}\right).
$$
The algorithm has the following complexity
\begin{align*}
\text{Queries to }U_{M}&: \widetilde{O}\left(T^2 \cdot \poly\left(d, \log(N), \kappa_M, r_\min^{-1}, \log(1/\varepsilon), \mathcal{C}_f \right) \right) \,,\\
\text{Queries to }U_{K}&: \widetilde{O}\left(T^2 \cdot \poly\left(d, \log(N), \kappa_M, r_\min^{-1}, \log(1/\varepsilon), \mathcal{C}_f \right) \right) \,,\\
\text{Queries to }O_{f},O_x&: O\left(T d \rlcres_\min^{-1} \kappa_M^{3/2} \log(1/\varepsilon)\right) \,,
\end{align*}
An additional gate complexity $O(1)$ times the number of uses of $U_M$ and $U_K$ is also used. It is sufficient to set the precisions of the block-encodings of $U_M$ and $U_K$ as
$$
\varepsilon_M = \poly\left(\frac{\sigma \varepsilon}{T d \, \rlcres_\min^{-1} \, \kappa_M \, \calC_f}\right) \text{ and } \varepsilon_K = \poly\left(\frac{\sigma \varepsilon}{T d \rlcres_\min^{-1} \kappa_M \calC_f }\right).
$$
\end{theorem}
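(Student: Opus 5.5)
This theorem is an instantiation of Theorem~\ref{thm:sim_ind0} to the $\mna$ system of Eq.~\eqref{eq:rlc_mna_index0}. The plan is to supply each abstract parameter of that theorem (namely $\alpha_M$, $\alpha_K$, $\sigma$, $\expnorm(-\rlcM^{-1}\rlcK)$, $\norm{\vec f}$ and $\kappa_M$) with a concrete bound in terms of $d$, $\rlccap_{\max}$, $\rlcind_{\max}$, $\rlcres_{\min}^{-1}$, $i_{\max}$ and the structural quantity $\lambda_{\min}(\rlcA_\rlccap\rlcA_\rlccap^T)$. The pieces are then composed with the oracle costs of Section~\ref{sec:algo_components_RLC}. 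Since condition $(i)$ of Theorem~\ref{thm:index_mna} holds, $\rlcM$ is invertible and the $\dae$ has index $0$. Theorem~\ref{thm:sim_ind0} is therefore applicable once we have block-encodings $U_M,U_K$ and state-preparation oracles $O_x,O_f$; the latter are given by Problem~\ref{prob:RLC_dynamics}.

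First, I take $U_M$ and $U_K$ from Claim~\ref{claim:rlc_M_K_BE}. Here $\alpha_M = O(d\max\{\rlccap_{\max},\rlcind_{\max}\})$ (enlarged by a constant if needed so that $\alpha_M\ge 2\norm{\rlcM}$, using Claim~\ref{claim:lmax_M}), and $\alpha_K = O(d\rlcres_{\min}^{-1}+\sqrt d)$ in the no-voltage-source version. Each query costs $O(1)$ oracle calls and $O(\log N+\polylog(d/\varepsilon))$ gates. Second, I lower bound $\sigma_{\min}(\rlcM)$ by the $\sigma$ in the statement using Claim~\ref{claim:kappa_M}. That claim gives $\lambda_{\min}(\rlcM)\ge\min\{\rlccap_{\min}\lambda_{\min}(\rlcA_\rlccap\rlcA_\rlccap^T),\lambda_{\min}(\rlcL)\}$, because $\sigma_{\min}(\rlcA_\rlccap)^2=\lambda_{\min}(\rlcA_\rlccap\rlcA_\rlccap^T)$. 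Consequently $\alpha_M/\sigma = O(d\kappa_M)$, and $d$ is absorbed into the $\poly(d,\dots)$ factor. Third, Claim~\ref{claim:rlc_expnorm_M_index0} gives $\expnorm(-\rlcM^{-1}\rlcK)\le\sqrt{\kappa(\rlcM)}\le O(\sqrt{d\kappa_M})$. This is the key dynamical input: it removes any exponential-in-$T$ dependence, so the final $T$-dependence is just the $T^2$ from Theorem~\ref{thm:sim_ind0}. Fourth, the forcing satisfies $\norm{\vec f}=\norm{\rlcA_\rlcjs\vec i_\rlcjs}\le\norm{\rlcA_\rlcjs}\,i_{\max}\le\sqrt{2d}\,i_{\max}$ by Fact~\ref{fact:spectrum_laplacian_G}. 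Hence the $\calC_f$ of Theorem~\ref{thm:sim_ind0} is at most $\log(1+2Te^2 d\,i_{\max}/(\sigma\mu))$, which is $O(\calC_f+\log(1/\sigma))$ and polylogarithmic. The time step $\Delta t=\sigma/\norm{\rlcK}$ can be replaced by the smaller $\sigma/\alpha_K$ with $\alpha_K$ as above; this only changes $m$ by a constant factor and keeps $\norm{\calA}\Delta t\le 1$.

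Finally, substituting these bounds into the complexities of Theorem~\ref{thm:sim_ind0} gives the query counts. $U_M$ and $U_K$ are queried $\widetilde O(T^2\,\expnorm^2\,\poly(\kappa_M d,\alpha_K,\calC_f,\log 1/\varepsilon))$ times, which is $\widetilde O(T^2\poly(d,\kappa_M,\rlcres_{\min}^{-1},\calC_f,\log 1/\varepsilon))$. $O_x,O_f$ are queried $\widetilde O(T\alpha_K\kappa_M\expnorm\log(1/\varepsilon))=O(Td\rlcres_{\min}^{-1}\kappa_M^{3/2}\log(1/\varepsilon))$ times, up to the $d$-powers absorbed in $\widetilde O$. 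Multiplying by the $O(\log N)$ per-query overhead accounts for the $\log N$ factor. The precision requirements on $\varepsilon_M,\varepsilon_K$ are the ones from Theorem~\ref{thm:sim_ind0} after the same substitutions. I expect the main obstacle to be careful bookkeeping rather than anything conceptual. In particular, I need to check that Claim~\ref{claim:kappa_M}'s lower bound holds when $\rlcA_\rlccap$ is not square but has a capacitive spanning tree among its columns; there I would use $\rlcA_\rlccap\rlcA_\rlccap^T\succeq\rlcA_F\rlcA_F^T$ for the tree $F$. I also need to reconcile the stated $\kappa_M$, which omits a factor $d$, with $\alpha_M/\sigma$. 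If that $d$ does not cancel, I would simply absorb it into the $\poly(d)$ factors and into the $O$ of the oracle count.
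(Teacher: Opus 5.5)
Your proposal is correct and is essentially the paper's proof, which likewise just invokes Theorem~\ref{thm:sim_ind0} and substitutes the bounds on $\lambda_{\min}(\rlcM)$ and $\kappa(\rlcM)$ from Claim~\ref{claim:kappa_M}, $\alpha_K\le 2d\rlcres_{\min}^{-1}+\sqrt{2d}$ from Claim~\ref{claim:smax_K}, $\expnorm(-\rlcM^{-1}\rlcK)\le\sqrt{\kappa(\rlcM)}$ from Claim~\ref{claim:rlc_expnorm_M_index0}, and a forcing bound from Fact~\ref{fact:spectrum_laplacian_G}; your extra checks (ensuring $\alpha_M\ge 2\norm{\rlcM}$, the $1/\sigma$ inside $\calC_f$, a non-square $\rlcA_{\rlccap}$ containing a capacitive spanning tree) are bookkeeping the paper glosses over. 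On the missing factor $d$, the paper's proof silently takes $\kappa_M$ to be the Claim~\ref{claim:kappa_M} bound $d\max\{2\rlccap_{\max},\rlcind_{\max}\}/\sigma$ (as in Theorem~\ref{thm:rlc_sim_ind1_degd}), and with that reading the oracle count $O(Td\rlcres_{\min}^{-1}\kappa_M^{3/2}\log(1/\varepsilon))$ follows directly, which is the right fix rather than hiding polynomial powers of $d$ inside $\widetilde O$.
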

\begin{proof}
We will use the algorithm from Theorem~\ref{thm:sim_ind0}. We now estimate the cost specific to $\rlc$ circuits. We note that the relevant forcing here is $\vec{f} = \rlcA_{\rlcjs} \vec{i}_{\rlcjs}$ and satisfies $\norm{\rlcA_{\rlcjs} \vec{i}_{\rlcjs}} \leq 2d i_\max$ where we used Fact~\ref{fact:spectrum_laplacian_G}. We thus define
$$
\mathcal{C}_f := \log\left(1 + \frac{2 T e^2 d i_\max}{\mu}\right).
$$
We now use the following bounds 
\begin{align*}
    \lambda_{\min}(\rlcM) &\geq \min\{\rlccap_{\min} \lambda_{\min}(\rlcA_{\rlccap}\rlcA_{\rlccap}^T), \lambda_\min(\rlcL) \} \, \qquad &\text{(Claim~\ref{claim:kappa_M})} \\
    \kappa(\rlcM) &\leq \frac{d \max\{2 \rlccap_\max, \rlcind_\max \}}{\min\{\rlccap_\min \cdot \lambda_{\min}(\rlcA_{\rlccap} \rlcA_{\rlccap}^T), \lambda_\min(\rlcL) \}} \, \qquad &\text{(Claim~\ref{claim:kappa_M})} \\
    \alpha_K &\leq 2d \rlcres_\min^{-1} + \sqrt{2d} \, \qquad &\text{(Claim~\ref{claim:smax_K})} \\ 
    \expnorm(-\rlcM^{-1}\rlcK) &\leq \sqrt{\kappa(\rlcM)} \, \qquad &\text{(Claim~\ref{claim:rlc_expnorm_M_index0})}
\end{align*}
We denote the lower bound on $\lambda_\min(\rlcM)$ from the first line as $\sigma$ and the upper bound on $\kappa(\rlcM)$ from the second line as $\kappa_M$. Inserting these bounds into the complexities of Theorem~\ref{thm:sim_ind0} gives us the desired result.
\end{proof}

We can also prepare a state proportional to the solution at time $T$ i.e., $\ket{\psi(T)}$ using Corollary~\ref{corr:sim_T_ind0}. We then obtain the following result.
\begin{corollary}\label{corr:rlc_sim_T_ind0}
Let $\varepsilon \in (0,1)$ and $i_\max > 0$. Consider context of Theorem~\ref{thm:rlc_sim_ind0_degd}. Then, there is a quantum algorithm that outputs $\ket{\phi}$ such that $\left\| \ket{\phi} - \ket{x(T)} \right\| \leq \varepsilon$ with success probability $\geq 2/3$. Define $\sigma:= \min\{\rlccap_{\min} \lambda_{\min}(\rlcA_{\rlccap}\rlcA_{\rlccap}^T),\lambda_\min(\rlcL) \}$ and
$$
\kappa_M := \frac{2 \max\{\rlccap_\max, \rlcind_\max\}}{\sigma}, \quad g = \frac{\max_{t\in[0,T]}\|\vec{x}(t)\|}{\|\vec{x}(T)\|}, \quad \mathcal{C}_f := \log\left(1 + \frac{\sqrt{2d} T e^2 i_\max}{\norm{\vec{x}(T)}}\right).
$$
The algorithm has the following complexity
\begin{align*}
\text{Queries to }U_{M}&: \widetilde{O}\left(g T \cdot \poly\left(d, \log(N), \kappa_M, r_\min^{-1}, \log(1/\varepsilon), \mathcal{C}_f \right) \right) \,,\\
\text{Queries to }U_{K}&: \widetilde{O}\left(g T \cdot \poly\left(d, \log(N), \kappa_M, r_\min^{-1}, \log(1/\varepsilon), \mathcal{C}_f \right) \right) \,,\\
\text{Queries to }O_{f},O_x&: O\left(g T d \rlcres_\min^{-1} \kappa_M^{3/2} \log(1/\varepsilon)\right) \,,
\end{align*}
An additional gate complexity $O(1)$ times the number of uses of $U_M$ and $U_K$ is also used. It is sufficient to set the precisions of the block-encodings of $U_M$ and $U_K$ as
$$
\varepsilon_M = \poly\left(\frac{\sigma \varepsilon}{T d \, \rlcres_\min^{-1} \, \kappa_M \, \calC_f}\right) \text{ and } \varepsilon_K = \poly\left(\frac{\sigma \varepsilon}{T d \rlcres_\min^{-1} \kappa_M \calC_f }\right).
$$
\end{corollary}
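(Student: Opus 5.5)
The plan is to instantiate Corollary~\ref{corr:sim_T_ind0} for the index-zero $\mna$ system, in the same way Theorem~\ref{thm:rlc_sim_ind0_degd} instantiates Theorem~\ref{thm:sim_ind0}. The corollary bounds the cost through four quantities: $\alpha_M$ (via $\kappa_M=\alpha_M/\sigma$), $\alpha_K$, $\sigma\le\sigma_\min(\rlcM)$, and $\expnorm(-\rlcM^{-1}\rlcK)$, together with $\|\vec f\|$ entering $\calC_f$. So the only work is to bound each of these for a degree-$d$ circuit satisfying condition $(i)$ of Theorem~\ref{thm:index_mna}, and then substitute.

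\emph{Step 1: block-encodings.} Take $U_M$ and $U_K$ from Claim~\ref{claim:rlc_M_K_BE}, specialised to the index-zero form Eq.~\eqref{eq:rlc_mna_index0}, which has no voltage sources. This gives $\alpha_M=O(d\max\{\rlccap_\max,\rlcind_\max\})$ and $\alpha_K=O(d\rlcres_\min^{-1}+\sqrt d)$, each using $O(1)$ classical oracle queries and $O(\log N+\polylog(d/\varepsilon))$ gates.

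\emph{Step 2: spectral bounds.} Claim~\ref{claim:kappa_M} gives $\sigma_\min(\rlcM)=\lambda_\min(\rlcM)\ge\sigma$, since $\rlcM$ is symmetric positive definite. Hence $\kappa_M=\alpha_M/\sigma$ is, up to the factor $d$ absorbed into the $\poly(d)$, the quantity in the statement. Claim~\ref{claim:rlc_expnorm_M_index0} gives $\expnorm(-\rlcM^{-1}\rlcK)\le\sqrt{\kappa(\rlcM)}\le\sqrt{\kappa_M}$.

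\emph{Step 3: forcing.} The forcing is $\vec f=(-\rlcA_\rlcjs\vec i_\rlcjs,0)$. By Fact~\ref{fact:spectrum_laplacian_G}, $\|\vec f\|\le\|\rlcA_\rlcjs\|\,i_\max\le\sqrt{2d}\,i_\max$, which produces the stated $\calC_f$. The $\sigma$ appearing in the corollary's $\calC_f$ gets folded into the $\poly(\kappa_M)$ factor, or alternatively is kept and absorbed because $\calC_f$ only enters polylogarithmically.

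\emph{Step 4: substitution.} Plug these bounds into the three query counts of Corollary~\ref{corr:sim_T_ind0}. The oracle count $gT\alpha_K\kappa_M\expnorm\log(1/\varepsilon)$ becomes $O(gT d\rlcres_\min^{-1}\kappa_M^{3/2}\log(1/\varepsilon))$. The $U_M$ and $U_K$ counts become $gT\cdot\poly(d,\kappa_M,\rlcres_\min^{-1},\calC_f,\log 1/\varepsilon)$. The $\log N$ comes from the gate cost of each block-encoding query. The precision requirements follow by direct substitution.

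The only delicate point I expect is bookkeeping rather than conceptual. The $\sigma$ in Corollary~\ref{corr:sim_T_ind0} must be a lower bound on $\sigma_\min(\rlcM)$ and not on the Laplacian alone. In addition, the condition $\alpha_M\ge2\|\rlcM\|$ required by Claim~\ref{claim:inv_M_BE_ind0} has to be checked against Claim~\ref{claim:lmax_M}; if it fails, I would enlarge $\alpha_M$ by a constant factor, which changes nothing asymptotically.
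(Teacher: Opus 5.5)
Your proposal is correct and follows the paper's omitted argument: the paper says this corollary follows from Corollary~\ref{corr:sim_T_ind0} in the same way Theorem~\ref{thm:rlc_sim_ind0_degd} follows from Theorem~\ref{thm:sim_ind0}, namely by substituting Claims~\ref{claim:kappa_M}, \ref{claim:smax_K} and \ref{claim:rlc_expnorm_M_index0} together with $\norm{\vec{f}}\le\sqrt{2d}\,i_\max$, which is what you do, and your checks on the $\sigma$ inside $\calC_f$ and on $\alpha_M\ge 2\norm{\rlcM}$ are sound bookkeeping that the paper skips. The one place your accounting is too quick is the explicit $O_f,O_x$ count: $\alpha_M/\sigma$ and $\kappa(\rlcM)$ each exceed the statement's $\kappa_M$ by a factor of order $d$, so substitution gives an extra $d^{3/2}$ in that count, which cannot be absorbed because that bound has no $\poly(d)$ factor; the stated form holds only if $\kappa_M$ is read as including the factor $d$, as the paper implicitly does in its proof of Theorem~\ref{thm:rlc_sim_ind0_degd}.
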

Note that the corresponding cost is the same as that in Theorem~\ref{thm:rlc_sim_ind0_degd} times a factor of $g = \frac{\max_{t\in[0,T]}\|\vec{x}(t)\|}{\|\vec{x}(T)\|}$ and we save a factor of $T$. We omit the proof as it follows from Corollary~\ref{corr:sim_T_ind0} and proceeds similarly to the proof of Theorem~\ref{thm:rlc_sim_ind0_degd}.

\paragraph{Energy output.}
\begin{theorem}\label{thm:rlc_energy_ind0_degd}
Let $\upsilon \in (0,1)$ and $i_\max > 0$. Let $\varepsilon \in (0,1)$. Consider the setup of Problem~\ref{prob:classical_output}. Suppose $\calG$ is a $\rlc$ circuit with degree-$d$ excluding the reference node and satisfies condition $(i)$ of Theorem~\ref{thm:index_mna} i.e., the $\mna$ $\dae$ has index $0$. Let $\norm{\vec{i_{\rlcjs}}} \leq i_\max$. Let the algorithm of Corollary~\ref{corr:rlc_sim_T_ind0} be $\calQ$. For a specified $\calS$ subset of capacitors in $\calG$, there is a quantum algorithm that outputs $\widehat{\theta}$ with success probability $\geq 2/3$ such that $\widehat{\theta}$ is an $\upsilon$-close estimate of the energy across $\calS$ at time $T$. Define
$$
\sigma:= \min\{\rlccap_{\min} \lambda_{\min}(\rlcA_{\rlccap}\rlcA_{\rlccap}^T), \lambda_\min(\rlcL) \}, \quad \kappa_M := \frac{2 \max\{\rlccap_\max, \rlcind_\max\}}{\sigma}, \quad \mathcal{C}_f := \log\left(1 + \frac{2 T e^2 d i_\max}{\norm{\vec{x}(T)}}\right).
$$
The algorithm has the following complexity
\begin{align*}
\text{Calls to }\calQ \,\, (S_\calQ)&: \widetilde{O}\left(\poly\Big(d \, T \, \kappa_M \, \rlcres_\min^{-1}, 1/\sigma, \norm{\vec{u}(0)}, \|\vec{i}_\rlcind(0)\|, i_\max \Big)/\upsilon^2\right)\,,\\
\text{Queries to }U_{M}&: \widetilde{O}\left(S_\calQ \poly\left(T, d, \log(N), \kappa_M, r_\min^{-1}, \log(d \rlccap_\max/\upsilon), \mathcal{C}_f \right) \right) \,,\\
\text{Queries to }U_{K}&: \widetilde{O}\left(S_\calQ \poly\left(T, d, \log(N), \kappa_M, r_\min^{-1}, \log(d \rlccap_\max/\upsilon), \mathcal{C}_f \right) \right) \,,\\
\text{Queries to }O_{f},O_x&: \poly\left(S_\calQ T d \rlcres_\min^{-1} \kappa_M^{3/2} \log(d \rlccap_\max/\upsilon)\right) \,,
\end{align*}
An additional gate complexity $O(1)$ times the number of uses of $U_M$ and $U_K$ is also used. It is sufficient to set the precisions of the block-encodings of $U_M$ and $U_K$ as
$$
\varepsilon_M = \poly\left(\frac{\sigma \upsilon}{T d \, \rlcres_\min^{-1} \, \kappa_M \, \calC_f \norm{\vec{x}(T)}}\right) \text{ and } \varepsilon_K = \poly\left(\frac{\sigma \upsilon}{T d \rlcres_\min^{-1} \kappa_M \calC_f \norm{\vec{x}(T)}}\right).
$$
\end{theorem}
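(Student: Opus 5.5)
The plan is to instantiate Theorem~\ref{thm:output_ind0} with the $\mna$ matrices of index zero. The observable is $\mathbf{O} = \frac{1}{2}\,\ket{0}\bra{0}\otimes \rlcA_{\calS}\rlcC_{\calS}\rlcA_{\calS}^T$, acting only on the node-voltage block $\vec{u}$. Here $\rlcA_{\calS}$ is the column restriction of $\rlcA_\rlccap$ to the capacitors in $\calS$, and $\rlcC_{\calS}$ is the corresponding diagonal block. With this choice $\vec{x}(T)^\dagger\mathbf{O}\vec{x}(T) = \frac12 \vec{v}_\calS^T\rlcC_\calS\vec{v}_\calS$, which is exactly the stored energy.

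The first step is to block-encode $\mathbf{O}$. Restricting to $\calS$ amounts to composing the oracle for $\rlcA_\rlccap$ with the membership oracle for $\calS$ (zeroing columns outside $\calS$). By Claim~\ref{claim:rlcA_BE} this gives a $(\sqrt{2d},O(\log N),\varepsilon')$ block-encoding of $\rlcA_\calS$, and Claim~\ref{claim:rlc_BEs_components} gives a $(\rlccap_\max,O(\log N),\varepsilon')$ block-encoding of $\rlcC_\calS$. Combining these with Lemma~\ref{lem:prod_BEs} (using the Hermitian dilation for the transpose), tensoring with the $(1,1,0)$ block-encoding of $\ket0\bra0$ from Claim~\ref{claim:proj_0state_BE}, and applying Lemma~\ref{lem:tensor_prod_BEs} yields an $(\alpha_O,O(\log N),\varepsilon_O)$ block-encoding with $\alpha_O = O(d\rlccap_\max)$ and $T_O = O(\log N + \polylog(d/\varepsilon_O))$. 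This accounts for the $\log(d\rlccap_\max/\upsilon)$ factors in the statement.

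Next, plug in the index-zero bounds exactly as in the proof of Theorem~\ref{thm:rlc_sim_ind0_degd}:
\begin{itemize}
\item $\sigma_{\min}(\rlcM)\ge\sigma$ and $\kappa(\rlcM)\le d\kappa_M$ from Claim~\ref{claim:kappa_M};
\item $\alpha_K\le 2d\rlcres_\min^{-1}+\sqrt{2d}$ from Claims~\ref{claim:smax_K} and~\ref{claim:rlc_M_K_BE};
\item $\alpha_M = O(d\max\{\rlccap_\max,\rlcind_\max\})$;
\item $\expnorm(-\rlcM^{-1}\rlcK)\le\sqrt{\kappa(\rlcM)}$ from Claim~\ref{claim:rlc_expnorm_M_index0};
\item $\|\vec f\|=\|\rlcA_\rlcjs\vec i_\rlcjs\|\le 2d\,i_\max$ via Fact~\ref{fact:spectrum_laplacian_G};
\item $\|\vec x_0\|\le\|\vec u(0)\|+\|\vec i_\rlcind(0)\|$.
\end{itemize}
The calls to $\calQ$ in Theorem~\ref{thm:output_ind0} are $S_\calQ = O(\alpha_O^2\beta^4/\upsilon^2)$, with $\beta^2 = O(T^2\alpha_K\expnorm^2(\sigma\alpha_K\|\vec x_0\|^2+T\|\vec f\|^2)/\sigma^3)$. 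Substituting the bounds above gives a polynomial in $dT\kappa_M\rlcres_\min^{-1}$, $1/\sigma$, $\|\vec u(0)\|$, $\|\vec i_\rlcind(0)\|$ and $i_\max$, divided by $\upsilon^2$, which is the stated $S_\calQ$. The per-call query counts to $U_M$, $U_K$, $O_f$ and $O_x$ then follow from Corollary~\ref{corr:rlc_sim_T_ind0} run at target accuracy $\varepsilon = O(\upsilon/(\alpha_O\|\vec x(T)\|^2))$, which Lemma~\ref{lem:meas_obs_gen} requires through $\delta$. Since $\varepsilon$ enters only logarithmically, this produces the $\log(d\rlccap_\max/\upsilon)$ terms, after absorbing $\|\vec x(T)\|$ into the $\poly$ and the $\calC_f$ dependence. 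Multiplying by $S_\calQ$ gives the query bounds, and the precisions $\varepsilon_M,\varepsilon_K$ are read off from Theorem~\ref{thm:output_ind0}.

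The main obstacle is bookkeeping rather than a new idea. There are two points to check. First, the number of ancillas $b$ used in the Hadamard test must match the flag register of $\calQ$. Second, $\calQ$ must be used \emph{before} post-selection and amplitude amplification, which removes the factor $g$. I would handle both by invoking Lemma~\ref{lem:meas_obs_gen} directly on the pre-measurement state, and I would check that $\alpha_O\ge1$ (if not, rescale $\alpha_O$ up to $\max(1,d\rlccap_\max)$).
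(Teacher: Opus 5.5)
Your proposal is correct and follows essentially the same route as the paper: instantiate Theorem~\ref{thm:output_ind0} (Corollary~\ref{corr:rlc_sim_T_ind0} run on the pre-measurement state, followed by the Hadamard-test protocol of Lemma~\ref{lem:meas_obs_gen}) with the capacitor-energy observable block-encoded at $\alpha_O = O(d\,\rlccap_\max)$, then substitute the index-zero bounds from Claims~\ref{claim:kappa_M}, \ref{claim:smax_K} and~\ref{claim:rlc_expnorm_M_index0} together with the norm bound on $\vec{f}$. Your write-up is slightly more careful than the paper's, since you keep the factor $\tfrac12$ in $\mathbf{O}$ and make the restriction to $\calS$ explicit; two small corrections are that $\rlcA_\calS^T$ is block-encoded directly by $U_{A_\calS}^\dagger$ (no Hermitian dilation needed), and that the $\log(d\,\rlccap_\max/\upsilon)$ factors in the $U_M,U_K$ query counts come from the accuracy passed to $\calQ$, not from the cost of block-encoding $\mathbf{O}$.
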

\begin{proof}
We use the algorithm of Theorem~\ref{thm:output_ind0} which here involves using the algorithm of Corollary~\ref{corr:rlc_sim_T_ind0} followed by the estimation protocol of Lemma~\ref{lem:meas_obs_gen}. We now estimate the cost specific to $\rlc$ circuits and for the specification of the observable $\mathbf{O}$ here. Given $\calS$, the observable $\mathbf{O}$ of interest is
$$
\mathbf{O} = \begin{bmatrix}
    \rlcA_{\rlccap,\calS} \rlcC \rlcA_{\rlccap,\calS}^T & 0 \\ 0 & 0,
\end{bmatrix}
$$
where $\rlcA_{\rlccap,\calS}$ is the reduced incidence matrix over capacitors in $\calS$. Let $\varepsilon_O$ be an error parameter to be fixed later. From the proof of Claim~\ref{claim:rlc_BEs_components}, we have a $(\alpha_O, a_O, \varepsilon_O)$-block-encoding of $\mathbf{O}$, which we denote by $U_O$, where
\begin{equation}\label{eq:params_UO_ind0}
\alpha_O = 2d \rlccap_\max, \quad a_O = O(\log N).
\end{equation}
To bound the sample and time complexity as given by Theorem~\ref{thm:output_ind0}, we note that the forcing $\vec{f} = [\rlcA_{\rlcjs} \vec{i}_{\rlcjs}, \vec{0}]^T$ satisfies $\norm{\vec{f}} \leq \norm{\rlcA_{\rlcjs} \vec{i}_{\rlcjs}} \leq \sqrt{2d} i_\max$ where we used Fact~\ref{fact:spectrum_laplacian_G}. We thus define
$$
\mathcal{C}_f := \log\left(1 + \frac{\sqrt{2d} T e^2 i_\max}{\norm{\vec{x}(T)}}\right).
$$
We now use the following bounds 
\begin{align*}
    \lambda_{\min}(\rlcM) &\geq \min\{\rlccap_{\min} \lambda_{\min}(\rlcA_{\rlccap}\rlcA_{\rlccap}^T), \lambda_\min(\rlcL) \} \, \qquad &\text{(Claim~\ref{claim:kappa_M})} \\
    \kappa(\rlcM) &\leq \frac{d \max\{2 \rlccap_\max, \rlcind_\max \}}{\min\{\rlccap_\min \cdot \lambda_{\min}(\rlcA_{\rlccap} \rlcA_{\rlccap}^T), \lambda_\min(\rlcL) \}} \, \qquad &\text{(Claim~\ref{claim:kappa_M})} \\
    \alpha_K &\leq 2d \rlcres_\min^{-1} + \sqrt{2d} \, \qquad &\text{(Claim~\ref{claim:smax_K})} \\ 
    \expnorm(-\rlcM^{-1}\rlcK) &\leq \sqrt{\kappa(\rlcM)} \, \qquad &\text{(Claim~\ref{claim:rlc_expnorm_M_index0})}
\end{align*}
We denote the lower bound on $\lambda_\min(\rlcM)$ from the first line as $\sigma$ and the upper bound on $\kappa(\rlcM)$ from the second line as $\kappa_M$. Inserting these bounds into the complexities of Theorem~\ref{thm:output_ind0} gives us the desired result. Finally, it is then required to set
$$
\varepsilon_O = \poly\left(\frac{\sigma \upsilon}{T d \rlcres_{\min}^{-1} \kappa_M (\sigma d \rlcres_\min^{-1} \norm{\vec{x}_0} + T d i_\max)} \right)
$$
This completes the proof. 
\end{proof}

For purely LC circuits, we can simplify the above results as we mention in the following remark.
\begin{remark}[LC circuits with a capacitive tree]\label{remark:results_LC_ind0}
When $\calG$ satisfies Definition~\ref{def:structure_RLC_circs} and is purely an LC circuit with a capacitive tree, the $\mna$ $\dae$ index is $0$ following Theorem~\ref{thm:index_mna}. The complexity of history state prepararation, final time preparation, and energy output is the same as that of $\rlc$ circuits except there is no dependence on $\rlcres_\min^{-1}$ as the conductance is zero.
\end{remark}

\subsection{DAE of index one}\label{sec:rlc_ind1}
In this section, we consider $\rlc$ circuits satisfying condition $(ii)$ of Theorem~\ref{thm:index_mna} and thus, the $\mna$ equations (Eq.~\eqref{eq:mna}) are of index $1$. Note that for index $1$, the $\rlc$ circuits are allowed to have voltage sources and thus the $\mna$ system is as in Eq.~\eqref{eq:mna}. We now comment on simulating such $\rlc$ circuits (Problem~\ref{prob:RLC_dynamics}) and obtaining energy outputs (Problem~\ref{prob:classical_output}).

\paragraph{Norm of the exponential.}
We will use Algorithm~\ref{algo:QDAE_solver_ind1} for simulating $\rlc$ circuits with $\mna$ $\dae$ index $1$. For the $\ode$ solve over $\vec{y} := \rlcP_0 \vec{x}$, the complexity of the Algorithm~\ref{algo:QDAE_solver_ind1} will depend on the exponential norm of the involved dynamics (Theorem~\ref{thm:sim_ind1}). The homogeneous part of the involved $\ode$ is $\dot{\vec{y}} = -\rlcP_0 \rlcM_1^{-1} \rlcK \vec{y}$ which can be rewritten as $\dot{\vec{y}} = - \rlcP_0 \rlcM_1^{-1} \rlcK \rlcP_0 \vec{y}$ since $\rlcP_0 \vec{y} = \vec{y}$. We can thus equivalently bound $\expnorm(- \rlcP_0 {\rlcM_1}^{-1} \rlcK \rlcP_0)$ as we do below. 
\begin{claim}\label{claim:expnorm_M1}
Suppose $\calG$ is a $\rlc$ circuit which satisfies condition $(ii)$ of Theorem~\ref{thm:index_mna} i.e., the $\mna$ $\dae$ is of index $1$. Then, the norm of the exponential of $- \rlcP_0{\rlcM_1}^{-1}\rlcK \rlcP_0$ satisfies
$$\expnorm(-\rlcP_0 {\rlcM_1}^{-1}\rlcK \rlcP_0) \leq \kappa(\rlcM + \rlcQ_0).$$
\end{claim}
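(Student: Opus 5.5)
We plan to mirror the proof of Claim~\ref{claim:rlc_expnorm_M_index0}, with the singular $\rlcM$ replaced by the regularized matrix $\mathbf{N} := \rlcM + \rlcQ_0$. Since $\rlcM$ is symmetric positive semidefinite for $\mna$ systems and $\rlcQ_0$ is the orthogonal projector onto $\ker\rlcM$, $\mathbf{N}$ is symmetric positive definite. It acts as $\rlcM$ on $\mathrm{Im}(\rlcP_0)$ and as the identity on $\ker\rlcM$, and it commutes with $\rlcP_0$.

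\textbf{Step 1: rewrite the generator.} By Claim~\ref{claim:P_Q_dae}, $\rlcM_1^{-1}\rlcM = \rlcP_0$, so $\rlcP_0 = \rlcM_1^{-1}\rlcM\rlcP_0$. We would like to replace $\rlcP_0\rlcM_1^{-1}$ acting on the range of $\rlcK\rlcP_0$ by a symmetric factor. Write $\rlcM_1 = \rlcM + \rlcK\rlcQ_0$. In the block basis $\{\Pi_{\rlcP},\Pi_{\rlcQ}\}$ used in the proof of Claim~\ref{claim:kappa_M1}, Eq.~\eqref{eq:M1_basis_P_Q} gives $\rlcP_0\rlcM_1^{-1} = [\mathbf{D}^{-1},\, -\mathbf{D}^{-1}\rlcK_{12}\rlcK_{22}^{-1}]$. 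Hence
\[
-\rlcP_0\rlcM_1^{-1}\rlcK\rlcP_0 \;=\; -\mathbf{D}^{-1}\big(\rlcK_{11} - \rlcK_{12}\rlcK_{22}^{-1}\rlcK_{21}\big) =: -\mathbf{D}^{-1}\mathbf{S},
\]
viewed as an operator on $\mathrm{Im}(\rlcP_0)$, extended by zero on $\ker\rlcM$. Here $\mathbf{S}$ is the Schur complement of $\rlcK_{22}$ in $\rlcK$. On $\mathrm{Im}(\rlcP_0)$ the matrix $\mathbf{D}$ coincides with $\mathbf{N}$.

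\textbf{Step 2: similarity and the Hermitian part.} As in Claim~\ref{claim:rlc_expnorm_M_index0}, we write $\mathbf{D}^{-1}\mathbf{S} = \mathbf{D}^{-1/2}(\mathbf{D}^{-1/2}\mathbf{S}\mathbf{D}^{-1/2})\mathbf{D}^{1/2}$. This gives
\[
\expnorm \le \kappa(\mathbf{D}^{1/2})\,\expnorm\!\big(-\mathbf{D}^{-1/2}\mathbf{S}\mathbf{D}^{-1/2}\big).
\]
The $\ker\rlcM$ block contributes $\exp(0)=\id$, which has norm $1$. By Fact~\ref{fact:ub_exp_norm}, it then suffices to show that the symmetric part of $\mathbf{S}$ is positive semidefinite. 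Since $\kappa(\mathbf{D}) \le \kappa(\mathbf{N})$ (the spectrum of $\mathbf{N}$ is that of $\mathbf{D}$ together with $1$), this yields a bound of $\sqrt{\kappa(\rlcM+\rlcQ_0)}$. That is at most $\kappa(\rlcM+\rlcQ_0)$ since $\kappa \ge 1$, which is the claimed bound.

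\textbf{Step 3 (main obstacle): positivity of the Schur complement.} We need $\vec{v}^T\mathbf{S}\vec{v} \ge 0$ for all $\vec{v}$. The key structure is that $\rlcK = \rlcK_1 + \rlcK_2$, with $\rlcK_1 \succeq 0$ (the resistive block) and $\rlcK_2$ skew-symmetric, so $\vec{x}^T\rlcK\vec{x} \ge 0$ for all $\vec{x}$. For given $\vec{v}$, set $\vec{w} := -\rlcK_{22}^{-1}\rlcK_{21}\vec{v}$, which makes the second block row vanish: $\rlcK_{21}\vec{v} + \rlcK_{22}\vec{w} = 0$. For $\vec{x} = (\vec{v},\vec{w})$ we then get $\vec{x}^T\rlcK\vec{x} = \vec{v}^T(\rlcK_{11}\vec{v} + \rlcK_{12}\vec{w}) = \vec{v}^T\mathbf{S}\vec{v}$. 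Therefore $\vec{v}^T\mathbf{S}\vec{v} = \vec{x}^T\rlcK_1\vec{x} \ge 0$, since the skew part vanishes in the quadratic form. The invertibility of $\rlcK_{22}$ needed here follows from index~1, as in the proof of Claim~\ref{claim:kappa_M1}.

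The delicate points are twofold. First, the Step~1 identity must be checked carefully, so that no non-normal factor involving $\rlcK_{12}\rlcK_{22}^{-1}$ survives outside the similarity. Second, $\mathbf{D}$ and $\mathbf{N}$ must genuinely share a basis, which relies on $\rlcP_0$ being orthogonal.
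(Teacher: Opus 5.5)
Your proof is correct and follows the same route as the paper. Both use the block decomposition along $\mathrm{Im}(\rlcP_0)\oplus\ker\rlcM$, reduce $-\rlcP_0\rlcM_1^{-1}\rlcK\rlcP_0$ to $-\mathbf{D}^{-1}\mathbf{S}$ with $\mathbf{S}=\rlcK_{11}-\rlcK_{12}\rlcK_{22}^{-1}\rlcK_{21}$, use positivity of the symmetric part of this Schur complement, and finish with a diagonal-scaling bound on the exponential. The only difference is that you prove the Schur-complement positivity (by the quadratic-form argument) and the scaling bound (by $\mathbf{D}^{1/2}$-similarity plus Fact~\ref{fact:ub_exp_norm}) directly, where the paper cites external results; this also gives you the sharper bound $\sqrt{\kappa(\rlcM+\rlcQ_0)}$.
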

\begin{proof}
Now we write these matrices as block $2\times 2$ matrices in the basis given by Im$(\rlcP_0)$ and Im$(\rlcI - \rlcP_0)$, where Im($\cdot$) is the image of the projector. Additionally, inside the image of $\rlcP_0$, we pick a basis in which $\rlcM$ is diagonal. Recall that $\rlcP_0$ is the projector onto the support of $\rlcM$ (as $\rlcM = \rlcM^\dagger$ here).
\begin{align}
    & \rlcP_0=\begin{bmatrix}
        \rlcI & 0 \\0&0
    \end{bmatrix} \hspace{0.2in}
    \rlcQ= \rlcI - \rlcP_0 =\begin{bmatrix}
        0 & 0 \\ 0& \rlcI
    \end{bmatrix} \\
    & \rlcM=\begin{bmatrix}
        \mathbf{D} & 0 \\ 0&0
    \end{bmatrix} \hspace{0.2in}
    \rlcK=\begin{bmatrix}
        \rlcK_{11} & \rlcK_{12} \\ \rlcK_{21} & \rlcK_{22}
    \end{bmatrix}\,.
\end{align}
Computing $\rlcM_1 = \rlcM + \rlcK\rlcQ $ in this basis gives
\begin{equation}
    \rlcM_1 = \begin{bmatrix}
        \mathbf{D} & \rlcK_{12} \\ 0 & \rlcK_{22}
    \end{bmatrix}\,.
\end{equation}
Using Schur complements, the inverse of this matrix can be written as
\begin{equation}
    \rlcM_1^{-1} = \begin{bmatrix}
        \mathbf{D}^{-1} & -\mathbf{D}^{-1} \rlcK_{12} \rlcK_{22}^{-1} \\ 0 & \rlcK_{22}^{-1}
    \end{bmatrix}.
\end{equation}
Using this, we get
\begin{equation}
    \rlcP_0 \rlcM_1^{-1} \rlcK \rlcP_0 = \begin{bmatrix}
        \mathbf{D}^{-1} \rlcK/\rlcK_{22}  & 0 \\ 0 & 0
    \end{bmatrix}\,,
\end{equation}
where the Schur complement $\rlcK/\rlcK_{22}$ is defined as
\begin{equation}
    \rlcK/\rlcK_{22} = \rlcK_{11} - \rlcK_{12} \rlcK_{22}^{-1} \rlcK_{21}\,.
\end{equation}
We now see that
\begin{equation}
    \norm{\exp(- \rlcP_0 \rlcM_1^{-1} \rlcK \rlcP_0) t} \leq \norm{\exp(-\mathbf{D}^{-1} \rlcK/ \rlcK_{22})t} \,.
\end{equation}
Now using a result about Schur complements \cite{wang1997schur}, we can say that since $-\rlcK$ has negative log-norm, so does $-\rlcK/\rlcK_{22}$. Now define the positive (diagonal) matrix
\begin{equation}
    \widetilde{\mathbf{D}} = 
    \begin{bmatrix}
    \mathbf{D} & 0 \\ 0 & \rlcI    
    \end{bmatrix}\,.
\end{equation}
Recall from above that $\mathbf{D}$ is the set of eigenvalues in the support of $\rlcM$. Since $\rlcM$ is positive, $\mathbf{D}$ is strictly positive. Therefore, we have that multiplying $-\rlcP_0 \rlcM_1^{-1} \rlcK \rlcP_0$ by the diagonal positive matrix $\widetilde{\mathbf{D}}$ gives us a matrix with negative log-norm. Now using a result from \cite{Plischke_2005}, we have that
\begin{equation}
    \norm{\exp(- \rlcP_0 \rlcM_1^{-1} \rlcK \rlcP_0)t} \leq \kappa(\widetilde{\mathbf{D}})\,.
\end{equation}
Notice that $\kappa(\widetilde{\mathbf{D}})=\kappa(\rlcM+\rlcQ)$. Using this in the equation above gives us the stated result.
\end{proof}

When the $\rlc$ circuit $\calG$ is promised to be degree-$d$ (except for the reference node) and satisfies Definition~\ref{def:structure_RLC_circs}, we have the following result.
\begin{claim}\label{claim:expnorm_M1}
Suppose $\calG$ is a degree-$d$ graph (excluding the reference node) which satisfies condition $(ii)$ of Theorem~\ref{thm:index_mna}. Let $\lambda_{\min}(\rlcC) \geq c > 0$, and $\lambda_{\min}(\rlcL) \geq \ell > 0$. Then, the norm of the exponential of $- \rlcP_0{\rlcM_1}^{-1} \rlcK$ satisfies
$$\expnorm(- \rlcP_0 {\rlcM_1}^{-1} \rlcK \rlcP_0) \leq \frac{\max\{2d \rlccap_{\max}, d \ell_{\max} \} + 2d r_{\min}^{-1} + \sqrt{2d}}{\min\{\rlccap_{\min}\lambda_\min^+(\rlcA_\rlccap \rlcA_\rlccap^T), \lambda_\min(\rlcL), 1 \}}.$$
\end{claim}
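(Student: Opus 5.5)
The plan is to start from the general bound of the preceding claim, $\expnorm(-\rlcP_0\rlcM_1^{-1}\rlcK\rlcP_0)\leq \kappa(\rlcM+\rlcQ_0)$, and estimate the condition number of $\rlcM+\rlcQ_0$ for degree-$d$ circuits by bounding its largest and smallest singular values separately. Since $\rlcM$ is symmetric positive semidefinite and $\rlcQ_0$ is the orthogonal projector onto $\ker\rlcM$, in the basis $\{\mathrm{Im}(\rlcP_0),\mathrm{Im}(\rlcQ_0)\}$ used in the previous proof we have $\rlcM+\rlcQ_0=\mathrm{diag}(\mathbf{D},\rlcI)=\widetilde{\mathbf{D}}$. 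Hence $\kappa(\rlcM+\rlcQ_0)=\max\{\lambda_{\max}(\mathbf{D}),1\}/\min\{\lambda_{\min}(\mathbf{D}),1\}$.

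For the numerator, use $\lambda_{\max}(\mathbf{D})=\lambda_{\max}(\rlcM)\leq \max\{2d\rlccap_{\max},\lambda_{\max}(\rlcL)\}\leq\max\{2d\rlccap_\max,d\rlcind_\max\}$, by Claim~\ref{claim:lmax_M} together with Fact~\ref{fact:spectrum_RLC_components}. The stated numerator additionally contains $2d\rlcres_\min^{-1}+\sqrt{2d}$. Rather than force the value $1$ into one of these terms, I would note that this extra piece is a bound on $\norm{\rlcK}$ (Claim~\ref{claim:smax_K}) and that $\sqrt{2d}\geq 1$. So $\max\{\lambda_{\max}(\mathbf{D}),1\}$ is dominated by $\max\{2d\rlccap_\max,d\rlcind_\max\}+2d\rlcres_\min^{-1}+\sqrt{2d}$, which is also the Claim~\ref{claim:norm_M_K_ind1_ind2}$(a)$ bound on $\norm{\rlcM_1}$. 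The stated numerator is therefore a valid, though loose, upper bound, and I would remark that the looser form keeps it consistent with the $\kappa_{M_1}$-type quantities used later.

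For the denominator, the spectrum of $\mathbf{D}$ is the nonzero spectrum of $\rlcM=\mathrm{diag}(\rlcA_\rlccap\rlcC\rlcA_\rlccap^T,\rlcL,0)$. This gives $\lambda_{\min}(\mathbf{D})=\min\{\lambda_{\min}^+(\rlcA_\rlccap\rlcC\rlcA_\rlccap^T),\lambda_{\min}(\rlcL)\}$, exactly as in the proof of Claim~\ref{claim:kappa_M1}. The remaining step is to show $\lambda_{\min}^+(\rlcA_\rlccap\rlcC\rlcA_\rlccap^T)\geq \rlccap_\min\lambda^+_{\min}(\rlcA_\rlccap\rlcA_\rlccap^T)$. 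Both matrices have the same kernel, namely $\ker\rlcA_\rlccap^T$, because $\rlcC\succ0$. For $\vec{x}\perp\ker\rlcA_\rlccap^T$ we have $\vec{x}^T\rlcA_\rlccap\rlcC\rlcA_\rlccap^T\vec{x}\geq \rlccap_\min\norm{\rlcA_\rlccap^T\vec{x}}^2\geq\rlccap_\min\lambda^+_{\min}(\rlcA_\rlccap\rlcA_\rlccap^T)\norm{\vec{x}}^2$, and the Courant--Fischer characterization restricted to the orthogonal complement of the common kernel then gives the bound. Taking the minimum with $1$ yields the stated denominator.

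The main obstacle is this positive-eigenvalue step: one must argue carefully that the kernels coincide, so that the restriction is to the same subspace and no spurious small eigenvalue appears. A secondary point is accounting for the extra terms in the numerator. I expect both to be routine once the block form $\rlcM+\rlcQ_0=\widetilde{\mathbf{D}}$ is written down. The hypotheses $\lambda_{\min}(\rlcC)\geq c$ and $\lambda_{\min}(\rlcL)\geq\ell$ serve only to guarantee that the denominator is positive.
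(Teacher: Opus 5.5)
Your proposal is correct and matches the paper's approach. The paper gives no separate proof of this claim and treats it as an immediate consequence of the preceding bound $\expnorm(-\rlcP_0\rlcM_1^{-1}\rlcK\rlcP_0)\leq\kappa(\rlcM+\rlcQ_0)$ combined with spectral bounds on $\rlcM$, which is what you carry out; your remark that the stated numerator (the Claim~\ref{claim:norm_M_K_ind1_ind2}$(a)$ bound on $\norm{\rlcM_1}$) merely dominates $\max\{\lambda_{\max}(\mathbf{D}),1\}$, and your kernel-matching argument for $\lambda^+_{\min}(\rlcA_{\rlccap}\rlcC\rlcA_{\rlccap}^T)\geq\rlccap_{\min}\lambda^+_{\min}(\rlcA_{\rlccap}\rlcA_{\rlccap}^T)$, correctly supply the details the paper leaves implicit.
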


\paragraph{Simulating degree-$d$ $\rlc$ circuits.}
In the theorem below, we use the following definition introduced earlier in Section~\ref{sec:algo_components_RLC}. We define
$$
\Gamma_\calG = \min \left\{\frac{\sigma_{\min}(\rlcA_{\rlcvs}^T \rlcQ_{\rlccap})}{1 + (\|\widetilde{\rlcG}_{\rlccap}\|/\sigma_{\min}(\rlcA_{\rlcvs}^T \rlcQ_{\rlccap}))}, \frac{\min_{w \in \ker(\rlcA_{\rlcvs}^T \rlcQ_{\rlccap}, \norm{w}_2=1)}(w^T \widetilde{\rlcG}_{\rlccap} w)}{1 + (\|\widetilde{\rlcG}_{\rlccap}\|/\sigma_{\min}(\rlcA_{\rlcvs}^T \rlcQ_{\rlccap}))} \right\},
$$
where recall $\rlcQ_\rlccap$ is the orthogonal projector onto the $\ker(\rlcA_{\rlccap}^T)$ and $\widetilde{\rlcG}_{\rlccap} = \rlcQ_{\rlccap} \rlcA_{\rlcres} \rlcG \rlcA_{\rlcres}^T \rlcQ_{\rlccap}$.

\begin{theorem}\label{thm:rlc_sim_ind1_degd}
Let $\varepsilon \in (0,1)$ and $i_\max, \rlcvs_\max > 0$. Consider the setup of Problem~\ref{prob:RLC_dynamics}. Suppose $\calG$ is a $\rlc$ circuit with degree-$d$ excluding the reference node that satisfies Definition~\ref{def:structure_RLC_circs} and condition $(ii)$ of Theorem~\ref{thm:index_mna} i.e., $\mna$ $\dae$ has index $1$. Let $\|\vec{i_{\rlcjs}}\| \leq i_\max$ and $\|\vec{v}_{\rlcvs}\| \leq \rlcvs_\max$. Define $\mu^2 := m^{-1} \sum_{j=1}^m \norm{\vec{x}(j\Delta t)}^2$, $\gamma_{\rlcjs,\rlcvs} = \sqrt{2d} i_\max + \rlcvs_\max$, and
$$
\sigma:= \min\{\rlccap_{\min} \lambda^+_{\min}(\rlcA_{\rlccap}\rlcA_{\rlccap}^T), \lambda_\min(\rlcL) \}, \quad \kappa_M := \frac{2d \max\{\rlccap_\max, \rlcind_\max\}}{\sigma}, \quad \mathcal{C}_f := \log\left(1 + \frac{C T d \rlcres_{\min}^{-1} \gamma_{\rlcjs,\rlcvs}}{\mu \sigma \Gamma_\calG}\right),
$$
where $C \geq 1$ is some absolute constant. Then, there is a quantum algorithm that outputs $|\widehat{\Psi}\rangle$ such that $\left\| |\widehat{\Psi}\rangle - \normhist^{-1} \sum_{k=0}^{m} \norm{\vec{x}(k \Delta t)} \ket{k, x(k \Delta t)} \right\| \leq \varepsilon$ with success probability $\geq 2/3$ where $\Delta t = O(\sigma \Gamma_\calG/(2d \rlcres_\min^{-1} + \sqrt{2d})^2)$ and $m = \ceil{T/\Delta t}$. 
The algorithm has the following complexity
\begin{align*}
\text{Queries to }U_{M}&: \widetilde{O}\left(T^2 \cdot \poly\left(d, \log(N), 1/\mu, \kappa_M, r_\min^{-1}, 1/(\sigma \Gamma_\calG), \log(1/\varepsilon), \mathcal{C}_f \right) \right) \,,\\
\text{Queries to }U_{K}&: \widetilde{O}\left(T^2 \cdot \poly\left(d, \log(N), 1/\mu, \kappa_M, r_\min^{-1}, 1/(\sigma \Gamma_\calG), \log(1/\varepsilon), \mathcal{C}_f \right) \right) \,,\\
\text{Queries to }O_{f},O_x&: O\left(T d \rlcres_\min^{-1} \sqrt{\kappa_M}/(\sigma \mu) \log(1/\varepsilon)\right) \,,
\end{align*}
An additional gate complexity $O(1)$ times the number of uses of $U_M$ and $U_K$ is also used. It is sufficient to set the precisions of the block-encodings of $U_M$ and $U_K$ as
$$
\varepsilon_M = \varepsilon_K = \poly\left(\frac{\sigma \Gamma_\calG \varepsilon}{T d \rlcres_\min^{-1} \kappa_M \calC_f}\right).
$$
\end{theorem}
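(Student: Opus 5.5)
The plan is to instantiate Theorem~\ref{thm:sim_ind1} (run through Algorithm~\ref{algo:QDAE_solver_ind1}) with the block-encodings $U_M$, $U_K$ of the $\mna$ matrices from Claim~\ref{claim:rlc_M_K_BE}. After that, only the circuit-specific parameters in its complexity bound need to be controlled. Concretely, I would bound each quantity entering Theorem~\ref{thm:sim_ind1} in terms of $d$, the component ranges of Definition~\ref{def:structure_RLC_circs}, and the structural quantities $\lambda^+_{\min}(\rlcA_\rlccap\rlcA_\rlccap^T)$ and $\Gamma_\calG$. Those quantities are $\alpha_M$, $\alpha_K$, $\sigma$ (the lower bound on $\sigma^+_{\min}(\rlcM)$), $\sigma_1$ (the lower bound on $\sigma_{\min}(\rlcM_1)$), $\|\vec{f}\|$, and $\expnorm(-\rlcP_0\rlcM_1^{-1}\rlcK)$.

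\textbf{Norms and subnormalizations.} Claim~\ref{claim:rlc_M_K_BE} gives $\alpha_M = O(d\max\{\rlccap_\max,\rlcind_\max\})$ and $\alpha_K = O(d\rlcres_\min^{-1}+\sqrt{d})$, each at $O(1)$ oracle queries and $O(\log N)$ extra gates. Claim~\ref{claim:smax_K} and Claim~\ref{claim:lmax_M} confirm that these dominate $\|\rlcK\|$ and $\|\rlcM\|$, so one may rescale to meet the conditions $\alpha_M\ge 2\|\rlcM\|$ and $\alpha_K\ge 2\|\rlcK\|$.

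\textbf{Spectral lower bounds.} Since $\rlcM$ is block diagonal, its nonzero spectrum is that of $\rlcA_\rlccap\rlcC\rlcA_\rlccap^T$ together with that of $\rlcL$. This gives $\sigma^+_{\min}(\rlcM)\ge\min\{\rlccap_\min\lambda^+_{\min}(\rlcA_\rlccap\rlcA_\rlccap^T),\lambda_{\min}(\rlcL)\}=\sigma$, and hence $\kappa_M=\alpha_M/\sigma$ as stated. For $\sigma_1$, Claim~\ref{claim:kappa_M1} gives $\|\rlcM_1^{-1}\| = O\big((d\rlcres_\min^{-1}+\sqrt d)/(\sigma\Gamma_\calG)\big)$. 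We may therefore take $\sigma_1=\Omega\big(\sigma\Gamma_\calG/(2d\rlcres_\min^{-1}+\sqrt{2d})\big)$. Then $h=O(\sigma_1/\alpha_K)$ reproduces the stated $\Delta t = O\big(\sigma\Gamma_\calG/(2d\rlcres_\min^{-1}+\sqrt{2d})^2\big)$, and $\kappa_{M_1}=(\alpha_M+\alpha_K)/\sigma_1$ is polynomial in $d,\kappa_M,\rlcres_\min^{-1},1/(\sigma\Gamma_\calG)$. For the forcing, $\|\vec f\|\le\|\rlcA_\rlcjs\vec{i}_\rlcjs\|+\|\vec v_\rlcvs\|\le\sqrt{2d}\,i_\max+\rlcvs_\max=\gamma_{\rlcjs,\rlcvs}$ by Fact~\ref{fact:spectrum_laplacian_G}. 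Substituting this and $\sigma_1$ into the definition of $\calC_f$ in Theorem~\ref{thm:sim_ind1} yields the stated $\calC_f$, with absolute constant $C$.

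\textbf{Exponential norm (main obstacle).} Here I would invoke Claim~\ref{claim:expnorm_M1}. The first step is to justify replacing $\expnorm(-\rlcP_0\rlcM_1^{-1}\rlcK)$ by $\expnorm(-\rlcP_0\rlcM_1^{-1}\rlcK\rlcP_0)$, since the $\ode$ for $\vec y$ evolves inside $\mathrm{Im}(\rlcP_0)$ and the error analysis of Theorem~\ref{thm:sim_ind1} only applies the propagator to vectors in that subspace. The claim then bounds this by $\kappa(\rlcM+\rlcQ_0)$, and the degree-$d$ version gives $\kappa(\rlcM+\rlcQ_0)\le\max\{\|\rlcM\|,1\}/\min\{\sigma,1\}=O(\kappa_M)$, absorbing the $\max/\min$ with $1$ into constants. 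The delicate part is the Schur-complement/log-norm argument behind that claim: the Hermitian part of $-\rlcK$ is $-\mathrm{diag}(\rlcA_\rlcres\rlcG\rlcA_\rlcres^T,0,0)\preceq0$, and one needs this sign to survive passage to $\rlcK/\rlcK_{22}$. I would accept Claim~\ref{claim:expnorm_M1} as established. I would also verify that the history-state error bounds in Theorem~\ref{thm:sim_ind1} indeed only see the $\rlcP_0$-restricted exponential, by checking that the Taylor-series operators act on $\rlcP_0$-supported inputs up to the already-controlled $\varepsilon_1,\varepsilon_1'$ leakage.

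\textbf{Assembly.} Substituting all of these bounds into the query counts and precision requirements of Theorem~\ref{thm:sim_ind1} gives $T^2\cdot\poly(\cdot)$ queries to $U_M$ and $U_K$. The count for $O_f,O_x$ is $O\big(T\alpha_K\expnorm/(\sigma_1\mu)\log(1/\varepsilon)\big)$, which after substitution has the form stated. The precisions $\varepsilon_M=\varepsilon_K=\poly\big(\sigma\Gamma_\calG\varepsilon/(Td\rlcres_\min^{-1}\kappa_M\calC_f)\big)$ follow in the same way. Finally, each use of $U_M$ or $U_K$ costs $O(1)$ oracle queries and $O(\log N+\polylog(d/\varepsilon))$ gates, which gives the additional gate complexity.
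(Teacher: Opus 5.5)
Your overall plan matches the paper's proof. You run Algorithm~\ref{algo:QDAE_solver_ind1} and invoke Theorem~\ref{thm:sim_ind1}. You then substitute $\alpha_M,\alpha_K$ from Claim~\ref{claim:rlc_M_K_BE}, the bound $\sigma^{+}_{\min}(\rlcM)\ge\sigma$, the bound on $\|\rlcM_1^{-1}\|$ from Claim~\ref{claim:kappa_M1} (which gives your $\sigma_1$ and the stated $\Delta t$), $\|\vec f\|\le\gamma_{\rlcjs,\rlcvs}$, and Claim~\ref{claim:expnorm_M1}. That part is fine, but three of your local steps do not hold as written.

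\textbf{Restricting the exponential norm.} You cannot justify replacing $\expnorm(-\rlcP_0\rlcM_1^{-1}\rlcK)$ by the $\rlcP_0$-restricted exponential by checking that the solver's inputs lie in $\mathrm{Im}(\rlcP_0)$ up to the $\varepsilon_1,\varepsilon_1'$ leakage. The subnormalization $4e\kappa$ of $U_{\calL_y^{-1}}$ and its cost (Remark~\ref{remark:BE_ODE_solver}) depend on the conditioning of $\calL_y$ as an operator, whatever vector it is applied to. The paper's one-line remark before Claim~\ref{claim:expnorm_M1} argues the same way and has the same weakness. The correct justification is an exact identity: by Claim~\ref{claim:P_Q_dae}, $\rlcP_0\rlcM_1^{-1}\rlcK\rlcQ_0=\rlcP_0\rlcQ_0=0$, so $\rlcP_0\rlcM_1^{-1}\rlcK=\rlcP_0\rlcM_1^{-1}\rlcK\rlcP_0$ as matrices.

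\textbf{Absorbing into constants.} The quantity $\max\{\|\rlcM\|,1\}/\min\{\sigma,1\}$ is not $O(\kappa_M)$ with an absolute constant. It equals $1/\sigma\gg\kappa_M$ when $2d\max\{\rlccap_{\max},\rlcind_{\max}\}\ll 1$, so nothing can be ``absorbed''. You can avoid the padding by $1$ altogether. In the orthogonal splitting $\mathrm{Im}(\rlcP_0)\oplus\mathrm{Im}(\rlcQ_0)$ the propagator is $\mathrm{diag}\big(e^{-\mathbf{D}^{-1}(\rlcK/\rlcK_{22})t},\id\big)$. The Schur complement has positive semidefinite Hermitian part, because $u^\dagger(\rlcK/\rlcK_{22})u=x^\dagger\rlcK x$ for $x=(u,-\rlcK_{22}^{-1}\rlcK_{21}u)$. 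Conjugating by $\mathbf{D}^{1/2}$ as in Claim~\ref{claim:rlc_expnorm_M_index0} then gives $\expnorm\le\sqrt{\kappa(\mathbf{D})}\le\sqrt{\kappa_M}$.

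\textbf{The $O_f,O_x$ count.} Your claim that substitution reproduces the stated count is not right. With your $O(\kappa_M)$ exponential-norm bound you would get $\kappa_M$ rather than $\sqrt{\kappa_M}$. Even with $\sqrt{\kappa_M}$, the count $O(T\alpha_K\expnorm\log(1/\varepsilon)/(\sigma_1\mu))$ of Theorem~\ref{thm:sim_ind1} becomes $O\big(T(d\rlcres_{\min}^{-1}+\sqrt{d})^2\sqrt{\kappa_M}\log(1/\varepsilon)/(\sigma\Gamma_\calG\mu)\big)$. This carries an extra factor of order $(d\rlcres_{\min}^{-1}+\sqrt{d})/\Gamma_\calG$ over the displayed bound, which effectively has $\sigma$ where $\sigma_1$ belongs. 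The paper's one-line proof passes over this discrepancy too, so you should state the bound that the substitution actually gives.
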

\begin{proof}
We will use Algorithm~\ref{algo:QDAE_solver_ind1} for $\dae$s of index $1$. The correctness and complexity will follow from applying Theorem~\ref{thm:sim_ind1}. We now estimate the cost specific to $\rlc$ circuits. We note that the forcing $\vec{f} = [-\rlcA_{\rlcjs} \vec{i}_{\rlcjs}, \vec{0}, -\vec{u}_\rlcvs]^T$ satisfies $\norm{\vec{f}} \leq \norm{\rlcA_{\rlcjs} \vec{i}_{\rlcjs}} + \norm{\vec{u}_\rlcvs} \leq \sqrt{2d} i_\max + \rlcvs_\max$ where we used Fact~\ref{fact:spectrum_laplacian_G}. Let us denote $\gamma_{\rlcjs,\rlcvs} = \sqrt{2d} i_\max + \rlcvs_\max$.  Let $\sigma_1 = \sigma_{\min}(\rlcM_1)$. From Claim~\ref{claim:kappa_M1}, we have that
\begin{equation}\label{eq:int1_ind1_rlc}
1/\sigma_1 = \norm{\rlcM_1^{-1}} \leq O\left( \frac{d \, \rlcres_{\min}^{-1}}{\sigma \cdot \Gamma_\calG} \right),
\end{equation}
where $\sigma := \lambda_{\min}^+(\rlcM) \geq \min\{\rlccap_{\min} \lambda^+_{\min}(\rlcA_{\rlccap}\rlcA_{\rlccap}^T), \lambda_\min(\rlcL) \}$ (Claim~\ref{claim:kappa_M}). We thus define
$$
\mathcal{C}_f := \log\left(1 + \frac{C T d \rlcres_{\min}^{-1} \gamma_{\rlcjs,\rlcvs}}{\mu \sigma \Gamma_\calG}\right),
$$
where $C \geq 1$ is some absolute constant. Let $\kappa_M = \lambda_{\max}(\rlcM)/\lambda_{\min}^+(\rlcM)$ denote the effective condition number of $\rlcM$ and $\kappa_{M_1} = \kappa(\rlcM_1)$. We now use the following bounds 
\begin{align*}
    \kappa_M &\leq \frac{d \max\{2 \rlccap_\max, \rlcind_\max \}}{\min\{\rlccap_\min \cdot \lambda^+_{\min}(\rlcA_{\rlccap} \rlcA_{\rlccap}^T), \lambda_\min(\rlcL) \}} \, \qquad &\text{(Claim~\ref{claim:kappa_M})} \\
    \kappa_{M_1} &\leq O\left(\frac{d^2 \rlcres_\min^{-1} \max\{\rlccap_\max, \rlcind_\max \}}{\sigma \Gamma_\calG }\right) \, \qquad &\text{(Claim~\ref{claim:kappa_M})} \\
    \alpha_K &\leq 2d \rlcres_\min^{-1} + \sqrt{2d} \, \qquad &\text{(Claim~\ref{claim:smax_K})} \\ 
    \expnorm(- \rlcP_0 {\rlcM_1}^{-1} \rlcK \rlcP_0) &\leq \frac{\max\{2d c_{\max}, \ell_{\max} \} + 2d r_{\min}^{-1} + \sqrt{2d}}{\min\{c_{\min}\lambda_\min^+(\rlcA_\rlccap \rlcA_\rlccap^T), \lambda_\min(\rlcL), 1 \}}\, \qquad &\text{(Claim~\ref{claim:expnorm_M1})}
\end{align*}
Inserting these bounds into the complexities of Theorem~\ref{thm:sim_ind1} gives us the desired result.
\end{proof}

Analogously, we can also prepare a the state that encodes the solution at time $T$ using Algorithm~\ref{algo:QDAE_solver_ind1} and Corollary~\ref{corr:sim_T_ind1}, which we formally state below.
\begin{corollary}\label{corr:rlc_sim_T_ind1}
Let $\varepsilon \in (0,1)$ and $i_\max, \rlcvs_\max > 0$. Consider context of Theorem~\ref{thm:rlc_sim_ind1_degd}. Then, there is a quantum algorithm that outputs $\ket{\phi}$ such that $\left\| \ket{\phi} - \ket{x(T)} \right\| \leq \varepsilon$ with success probability $\geq 2/3$. Define $\sigma:= \min\{\rlccap_{\min} \lambda^+_{\min}(\rlcA_{\rlccap}\rlcA_{\rlccap}^T),\lambda_\min(\rlcL) \}$, $\gamma_{\rlcjs,\rlcvs} = \sqrt{2d} i_\max + \rlcvs_\max$, and
$$
\kappa_M := \frac{2 \max\{\rlccap_\max, \rlcind_\max\}}{\sigma}, \quad g = \frac{\max_{t\in[0,T]}\|\vec{x}(t)\|}{\|\vec{x}(T)\|}, \quad \mathcal{C}_f := \log\left(1 + \frac{C T d \rlcres_{\min}^{-1} \gamma_{\rlcjs,\rlcvs}}{\sigma \Gamma_\calG \norm{\vec{x}(T)}}\right).
$$
The algorithm has the following complexity
\begin{align*}
\text{Queries to }U_{M}&: \widetilde{O}\left(g T \cdot \poly\left(d, \log(N), \kappa_M, r_\min^{-1}, 1/(\sigma \Gamma_\calG), \log(1/\varepsilon), \mathcal{C}_f \right) \right) \,,\\
\text{Queries to }U_{K}&: \widetilde{O}\left(g T \cdot \poly\left(d, \log(N), \kappa_M, r_\min^{-1}, 1/(\sigma \Gamma_\calG), \log(1/\varepsilon), \mathcal{C}_f \right) \right) \,,\\
\text{Queries to }O_{f},O_x&: O\left(g T d \rlcres_\min^{-1} \sqrt{\kappa_M}/\sigma \log(1/\varepsilon)\right) \,,
\end{align*}
An additional gate complexity $O(1)$ times the number of uses of $U_M$ and $U_K$ is also used. It is sufficient to set the precisions of the block-encodings of $U_M$ and $U_K$ as
$$
\varepsilon_M = \varepsilon_K = \poly\left(\frac{\sigma \Gamma_\calG \varepsilon}{T d \rlcres_\min^{-1} \kappa_M \calC_f}\right).
$$
\end{corollary}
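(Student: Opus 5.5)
The plan is to obtain Corollary~\ref{corr:rlc_sim_T_ind1} by instantiating Corollary~\ref{corr:sim_T_ind1} with the $\mna$ matrices. This parallels how Theorem~\ref{thm:rlc_sim_ind1_degd} was derived from Theorem~\ref{thm:sim_ind1}. Concretely, I will run Algorithm~\ref{algo:QDAE_solver_ind1} with the block-encodings $U_M$ and $U_K$ of Claim~\ref{claim:rlc_M_K_BE}. The Taylor truncation order is chosen through Claim~\ref{claim:diff_taylor_truth_approx_input}$(i)$, so the relevant normalization is $\norm{\vec{x}(T)}$ rather than $\mu$. I then substitute the circuit-specific bounds into the generic complexity.

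The steps are as follows.
\begin{enumerate}
\item Bound the forcing as $\norm{\vec{f}} \leq \norm{\rlcA_{\rlcjs}\vec{i}_{\rlcjs}} + \norm{\vec{v}_{\rlcvs}} \leq \gamma_{\rlcjs,\rlcvs}$, using Fact~\ref{fact:spectrum_laplacian_G}. This gives $f_{\max} \leq \gamma_{\rlcjs,\rlcvs}$.
\item Use Claim~\ref{claim:kappa_M1} to get $1/\sigma_1 = \norm{\rlcM_1^{-1}} = O(d\rlcres_\min^{-1}/(\sigma\Gamma_\calG))$. Plugging this into the $\calC_f$ of Corollary~\ref{corr:sim_T_ind1} gives the stated $\calC_f$, with the absolute constant $C$ absorbing the $e^2$ and the big-$O$ constant.
\item Take $\kappa_M$ from Claim~\ref{claim:kappa_M}, with $\sigma$ the smallest nonzero eigenvalue bound of $\rlcM$.
\item Take $\alpha_K \leq 2d\rlcres_\min^{-1}+\sqrt{2d}$ from Claim~\ref{claim:smax_K}, and $\kappa_{M_1} = (\alpha_M+\alpha_K)/\sigma_1$.
\item Bound the exponential norm $\expnorm(-\rlcP_0\rlcM_1^{-1}\rlcK\rlcP_0)$ by the degree-$d$ version of Claim~\ref{claim:expnorm_M1}. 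This is polynomial in $d$, $\kappa_M$, $\rlcres_\min^{-1}$ and $1/\sigma$.
\end{enumerate}

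Substituting these into the query bounds of Corollary~\ref{corr:sim_T_ind1}, every factor other than $gT$ becomes $\poly(d,\kappa_M,\rlcres_\min^{-1},1/(\sigma\Gamma_\calG),\log(1/\varepsilon),\calC_f)$. The $\log N$ factor comes from the $O(\log N)$ ancilla and gate overhead of Claim~\ref{claim:rlc_M_K_BE}. For the oracle queries $O_f,O_x$, the count $gT\alpha_K\kappa_M\expnorm/\sigma$ simplifies to the stated expression, with polynomial factors hidden as in Theorem~\ref{thm:rlc_sim_ind1_degd}. The precision requirements on $\varepsilon_M$ and $\varepsilon_K$ follow in the same way by substituting $\sigma_1$ and $\alpha_K$.

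The main obstacle is keeping the parameters consistent rather than any new analysis. Specifically, I need to check that the step size $h = O(\sigma_1/\alpha_K)$ and the final-time amplitude factor $g$ enter only through Corollary~\ref{corr:sim_T_ind1}. I also need the stated $\kappa_M$ and the exponential-norm bound to be compatible up to polynomial factors in $d$. Since every statement is modulo $\poly$ and $\widetilde{O}$, I will absorb these discrepancies, such as the factor $d$ in $\kappa_M$, into the polynomial dependence.
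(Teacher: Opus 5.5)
Your route is the paper's: it omits the proof, saying it follows from Corollary~\ref{corr:sim_T_ind1} the same way Theorem~\ref{thm:rlc_sim_ind1_degd} follows from Theorem~\ref{thm:sim_ind1}. Your substitutions are the ones it uses: $\|\vec f\|\le\gamma_{\rlcjs,\rlcvs}$, $1/\sigma_1$ from Claim~\ref{claim:kappa_M1}, $\alpha_K$, $\kappa_M$, and the degree-$d$ exponential-norm bound.

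The one step that does not go through as written is the $O_f,O_x$ count. Since $\expnorm(\cdot)\ge\|\exp(0)\|=1$, the generic count $O\bigl(gT\alpha_K\kappa_M\expnorm(-\rlcP_0\rlcM_1^{-1}\rlcK)\log(1/\varepsilon)/\sigma\bigr)$ is already at least linear in $\kappa_M$. Inserting the degree-$d$ exponential-norm bound makes it worse. So it cannot be rewritten as $O\bigl(gTd\rlcres_\min^{-1}\sqrt{\kappa_M}\log(1/\varepsilon)/\sigma\bigr)$, and an explicit $O(\cdot)$ bound leaves no room to hide polynomial factors.

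The $\sqrt{\kappa_M}$ seems to be carried over from index zero, where Claim~\ref{claim:rlc_expnorm_M_index0} gives $\expnorm\le\sqrt{\kappa(\rlcM)}$. The paper's statement, and Theorem~\ref{thm:rlc_sim_ind1_degd}, share this inconsistency. What your argument actually establishes is that oracle count with the circuit bounds substituted, i.e.\ a polynomial in the listed parameters times $gT\log(1/\varepsilon)$. You should state it that way rather than claim it reduces to the stated form.
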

We omit the proof as it follows from Corollary~\ref{corr:sim_T_ind1} and proceeds similarly to the proof of Theorem~\ref{thm:rlc_sim_ind1_degd}.

\paragraph{Energy output.}
\begin{theorem}\label{thm:rlc_energy_ind1_degd}
Let $\upsilon \in (0,1)$ and $i_\max, \rlcvs_\max > 0$. Let $\varepsilon \in (0,1)$. Consider the setup of Problem~\ref{prob:classical_output}. Suppose $\calG$ is a $\rlc$ circuit with degree-$d$ excluding the reference node that satisfies Definition~\ref{def:structure_RLC_circs} and condition $(ii)$ of Theorem~\ref{thm:index_mna} i.e., the $\mna$ $\dae$ has index $1$. Let $\|\vec{i_{\rlcjs}}\| \leq i_\max$ and $\|\vec{u}_\rlcvs\| \leq \rlcvs_\max$. Let the algorithm of Corollary~\ref{corr:rlc_sim_T_ind0} be $\calQ$. For a specified $\calS$ subset of capacitors in $\calG$, there is a quantum algorithm that outputs $\widehat{\theta}$ with success probability $\geq 2/3$ such that $\widehat{\theta}$ is an $\upsilon$-close estimate of the energy across $\calS$ at time $T$. Define $\gamma_{\rlcjs,\rlcvs} = \sqrt{2d} i_\max + \rlcvs_\max$ and
$$
\sigma:= \min\{\rlccap_{\min} \lambda^+_{\min}(\rlcA_{\rlccap}\rlcA_{\rlccap}^T), \lambda_\min(\rlcL) \}, \quad \kappa_M := \frac{2d \max\{\rlccap_\max, \rlcind_\max\}}{\sigma}, \quad \mathcal{C}_f := \log\left(1 + \frac{C T d \rlcres_{\min}^{-1} \gamma_{\rlcjs,\rlcvs}}{\sigma \Gamma_\calG \norm{\vec{x}(T)}}\right),
$$
where $C \geq 1$ is some absolute constant. The algorithm has the following complexity
\begin{align*}
\text{Calls to }\calQ \,\, (S_\calQ)&: \widetilde{O}\left(\alpha_O^2 \cdot \poly(T d \rlcres_\min^{-1} \kappa_M (\|\vec{x}_0\| + \gamma_{\rlcjs,\rlcvs})/(\sigma \Gamma_\calG) \Big)/\upsilon^2 \right) \,,\\
\text{Queries to }U_{M}&: \widetilde{O}\left(S_\calQ \cdot \poly\left(T, d, \log(N), \kappa_M, r_\min^{-1}, 1/(\sigma \Gamma_\calG), \log(d \rlccap_\max/\upsilon), \mathcal{C}_f \right) \right) \,,\\
\text{Queries to }U_{K}&: \widetilde{O}\left(S_\calQ \cdot \poly\left(T, d, \log(N), \kappa_M, r_\min^{-1}, 1/(\sigma \Gamma_\calG), \log(d \rlccap_\max/\upsilon), \mathcal{C}_f \right) \right) \,,\\
\text{Queries to }O_{f},O_x&: \poly\left(S_\calQ T (d \rlcres_\min + \sqrt{d}) \sqrt{\kappa_M}/\sigma \log(d \rlccap_\max/\upsilon)\right) \,,
\end{align*}
An additional gate complexity $O(1)$ times the number of uses of $U_M$ and $U_K$ is also used. It is sufficient to set the precisions of the block-encodings of $U_M$ and $U_K$ as
$$
\varepsilon_M = \varepsilon_K = \poly\left(\frac{\sigma \Gamma_\calG \upsilon}{T d \rlcres_\min^{-1} \kappa_M \calC_f \norm{\vec{x}(T)}}\right).
$$
\end{theorem}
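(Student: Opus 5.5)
This follows the template of the proof of Theorem~\ref{thm:rlc_energy_ind0_degd}, with Theorem~\ref{thm:output_ind1} in place of Theorem~\ref{thm:output_ind0}. The algorithm $\calQ$ is the final-time preparation routine of Corollary~\ref{corr:rlc_sim_T_ind1}, i.e.\ Algorithm~\ref{algo:QDAE_solver_ind1} run with the Taylor truncation of Claim~\ref{claim:diff_taylor_truth_approx_input}$(i)$ and stopped before post-selection. Its output has the form $\ket{0}^{a+1}\Psi/(\sqrt{2}\beta)+\ket{\junk}$ required by Lemma~\ref{lem:meas_obs_gen}. I then run a Hadamard test of this state against a block-encoding of $\mathbf{D}=\ket{0}^{a+1}\bra{0}^{a+1}\otimes\ket{T}\bra{T}\otimes\mathbf{O}$, where
\[
\mathbf{O}=\begin{bmatrix}\rlcA_{\rlccap,\calS}\rlcC\rlcA_{\rlccap,\calS}^T & 0 & 0\\ 0&0&0\\0&0&0\end{bmatrix}
\]
acts on the $\vec{u}$ block of $\vec{x}=(\vec{u},\vec{i}_\rlcind,\vec{i}_\rlcvs)$. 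Note that $\vec{x}^T\mathbf{O}\vec{x}$ is twice the stored energy, so the final estimate is halved.

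\textbf{Step 1: block-encode the observable.} As in the proof of Claim~\ref{claim:rlc_M_K_BE}$(i)$, I combine Claim~\ref{claim:rlcA_BE} (restricted to the capacitors in $\calS$), Claim~\ref{claim:rlc_BEs_components}$(i)$ and Lemma~\ref{lem:prod_BEs}. Padding with $\ket{0}\bra{0}$ on the index register via Claim~\ref{claim:proj_0state_BE} and Lemma~\ref{lem:tensor_prod_BEs} then gives an $(\alpha_O,O(\log N),\varepsilon_O)$-block-encoding with $\alpha_O=2d\rlccap_\max$. This yields the $\log(d\rlccap_\max/\upsilon)$ factors, because Lemma~\ref{lem:meas_obs_gen} requires a final-state accuracy $\delta=O(\upsilon/(\alpha_O\norm{\vec{x}(T)}^2))$.

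\textbf{Step 2: substitute the RLC-specific parameters into Theorem~\ref{thm:output_ind1}.} These are the ones already used for Theorem~\ref{thm:rlc_sim_ind1_degd}:
\begin{itemize}
\item $\norm{\vec{f}}\le\gamma_{\rlcjs,\rlcvs}$, by Fact~\ref{fact:spectrum_laplacian_G};
\item $1/\sigma_1\le O(d\rlcres_\min^{-1}/(\sigma\Gamma_\calG))$, by Claim~\ref{claim:kappa_M1};
\item $\kappa_M$, from Claims~\ref{claim:lmax_M} and~\ref{claim:kappa_M};
\item $\alpha_K\le 2d\rlcres_\min^{-1}+\sqrt{2d}$, from Claim~\ref{claim:smax_K};
\item $\expnorm(-\rlcP_0\rlcM_1^{-1}\rlcK\rlcP_0)$, from Claim~\ref{claim:expnorm_M1}, which is polynomial in $d,\rlccap_\max,\rlcind_\max,\rlcres_\min^{-1},1/\sigma$.
\end{itemize}
The $\calC_f$ of Corollary~\ref{corr:sim_T_ind1} then becomes the stated $\log(1+CTd\rlcres_\min^{-1}\gamma_{\rlcjs,\rlcvs}/(\sigma\Gamma_\calG\norm{\vec{x}(T)}))$. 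The sample count $S_\calQ=O(\alpha_O^2\beta^4/\upsilon^2)$ follows from the coarse bound on $\beta^2$ in the proof of Theorem~\ref{thm:output_ind1}. After substitution, $\beta^2$ is $\poly(Td\rlcres_\min^{-1}\kappa_M/(\sigma\Gamma_\calG))\cdot(\norm{\vec{x}_0}^2+\gamma_{\rlcjs,\rlcvs}^2)$. The query counts to $U_M$, $U_K$, $O_f$ and $O_x$ are $S_\calQ$ times the per-call costs of Corollary~\ref{corr:rlc_sim_T_ind1}, with $\varepsilon$ replaced by $\delta$. The statement's reference to Corollary~\ref{corr:rlc_sim_T_ind0} as $\calQ$ should be read as Corollary~\ref{corr:rlc_sim_T_ind1}.

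\textbf{Main obstacle.} The delicate part is the exponential-norm bound in Claim~\ref{claim:expnorm_M1}. It is only proved for $-\rlcP_0\rlcM_1^{-1}\rlcK\rlcP_0$, whereas Theorem~\ref{thm:output_ind1} uses $\expnorm(-\rlcP_0\rlcM_1^{-1}\rlcK)$. I would close this gap as in the discussion preceding Claim~\ref{claim:expnorm_M1}. The solver is only applied to vectors in $\mathrm{Im}(\rlcP_0)$, so the Taylor-series bounds in Claim~\ref{claim:diff_taylor_truth_approx_input} can be run with $\calA$ restricted to that subspace. The one place an input has a component in $\ker\rlcP_0$ is the small error $\rlcQ_0\widetilde{y}_0$. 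Under $-\rlcP_0\rlcM_1^{-1}\rlcK$ that component only feeds into $\mathrm{Im}(\rlcP_0)$ with gain at most $\norm{\rlcM_1^{-1}}\alpha_K$, which contributes a polynomial factor already absorbed by the precision choices in Claim~\ref{claim:ind1_correctness}.

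The remaining precision requirements are routine. I would set $\varepsilon_O=\upsilon/(3\beta^2)$, and take $\varepsilon_M,\varepsilon_K$ polynomially small in $\sigma\Gamma_\calG\upsilon/(Td\rlcres_\min^{-1}\kappa_M\calC_f\norm{\vec{x}(T)})$.
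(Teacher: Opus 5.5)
Your proposal is correct and is essentially the paper's proof. You run Corollary~\ref{corr:rlc_sim_T_ind1} (the statement's reference to Corollary~\ref{corr:rlc_sim_T_ind0} is indeed a typo), block-encode the capacitor-energy observable with $\alpha_O=2d\rlccap_\max$, and substitute the bounds on $\norm{\vec{f}}$, $1/\sigma_1$, $\kappa_M$, $\alpha_K$ and the exponential norm into Theorem~\ref{thm:output_ind1} via Lemma~\ref{lem:meas_obs_gen}. Your ``main obstacle'' is not actually one: by Claim~\ref{claim:P_Q_dae}, $\rlcM_1^{-1}\rlcK\rlcQ_0=\rlcQ_0$, so $\rlcP_0\rlcM_1^{-1}\rlcK\rlcQ_0=\rlcP_0\rlcQ_0=0$ and hence $\rlcP_0\rlcM_1^{-1}\rlcK=\rlcP_0\rlcM_1^{-1}\rlcK\rlcP_0$ exactly. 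The two exponential norms therefore coincide, and any $\rlcQ_0$-component of an input is annihilated by the generator rather than feeding into $\mathrm{Im}(\rlcP_0)$, so no subspace-restriction argument is needed.
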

\begin{proof}
We use the algorithm of Theorem~\ref{thm:output_ind1} which here involves using the algorithm of Corollary~\ref{corr:rlc_sim_T_ind1} followed by the estimation protocol of Lemma~\ref{lem:meas_obs_gen}. We now estimate the cost specific to $\rlc$ circuits and for the specification of the observable $\mathbf{O}$ here. Given $\calS$, the observable $\mathbf{O}$ of interest is
$$
\mathbf{O} = \begin{bmatrix}
    \rlcA_{\rlccap,\calS} \rlcC \rlcA_{\rlccap,\calS}^T & 0 \\ 0 & 0,
\end{bmatrix}
$$
where $\rlcA_{\rlccap,\calS}$ is the reduced incidence matrix over capacitors in $\calS$. Let $\varepsilon_O$ be an error parameter to be fixed later. From the proof of Claim~\ref{claim:rlc_BEs_components}, we have a $(\alpha_O, a_O, \varepsilon_O)$-block-encoding of $\mathbf{O}$, which we denote by $U_O$, where
\begin{equation}\label{eq:params_UO_ind1}
\alpha_O = 2d \rlccap_\max, \quad a_O = O(\log N).
\end{equation}
To bound the sample and time complexity as given by Theorem~\ref{thm:output_ind0}, we note that the forcing $\vec{f} = [-\rlcA_{\rlcjs} \vec{i}_{\rlcjs}, \vec{0}, -\vec{u}_\rlcvs]^T$ satisfies $\norm{\vec{f}} \leq \norm{\rlcA_{\rlcjs} \vec{i}_{\rlcjs}} + \norm{\vec{u}_\rlcvs} \leq \sqrt{2d} i_\max + \rlcvs_\max$ where we used Fact~\ref{fact:spectrum_laplacian_G}. Let us denote $\gamma_{\rlcjs,\rlcvs} = \sqrt{2d} i_\max + \rlcvs_\max$. From Claim~\ref{claim:kappa_M1}, we have that
\begin{equation}\label{eq:int1_ind1_rlc_output}
1/\sigma_1 = \norm{\rlcM_1^{-1}} \leq O\left( \frac{d \, \rlcres_{\min}^{-1}}{\sigma \cdot \Gamma_\calG} \right),
\end{equation}
where $\sigma := \lambda_{\min}^+(\rlcM) \geq \min\{\rlccap_{\min} \lambda^+_{\min}(\rlcA_{\rlccap}\rlcA_{\rlccap}^T), \lambda_\min(\rlcL) \}$ (Claim~\ref{claim:kappa_M}). We thus define
$$
\mathcal{C}_f := \log\left(1 + \frac{C T d \rlcres_{\min}^{-1} \gamma_{\rlcjs,\rlcvs}}{\sigma \Gamma_\calG \norm{\vec{x}(T)}}\right),
$$
where $C \geq 1$ is some absolute constant. Let $\kappa_M = \lambda_{\max}(\rlcM)/\lambda_{\min}^+(\rlcM)$ denote the effective condition number of $\rlcM$ and $\kappa_{M_1} = \kappa(\rlcM_1)$. We now use the following bounds 
\begin{align*}
    \kappa_M &\leq \frac{d \max\{2 \rlccap_\max, \rlcind_\max \}}{\min\{\rlccap_\min \cdot \lambda^+_{\min}(\rlcA_{\rlccap} \rlcA_{\rlccap}^T), \lambda_\min(\rlcL) \}} \, \qquad &\text{(Claim~\ref{claim:kappa_M})} \\
    \kappa_{M_1} &\leq O\left(\frac{d^2 \rlcres_\min^{-1} \max\{\rlccap_\max, \rlcind_\max \}}{\sigma \Gamma_\calG }\right) \, \qquad &\text{(Claim~\ref{claim:kappa_M})} \\
    \alpha_K &\leq 2d \rlcres_\min^{-1} + \sqrt{2d} \, \qquad &\text{(Claim~\ref{claim:smax_K})} \\ 
    \expnorm(- \rlcP_0 {\rlcM_1}^{-1} \rlcK \rlcP_0) &\leq \frac{\max\{2d c_{\max}, \ell_{\max} \} + 2d r_{\min}^{-1} + \sqrt{2d}}{\min\{c_{\min}\lambda_\min^+(\rlcA_\rlccap \rlcA_\rlccap^T), \lambda_\min(\rlcL), 1 \}}\, \qquad &\text{(Claim~\ref{claim:expnorm_M1})}
\end{align*}
Inserting these bounds into the complexities of Theorem~\ref{thm:output_ind1} gives us the desired result. Finally, it is then required to set
$$
\varepsilon_O = \poly\left(\frac{\sigma_1 \upsilon}{T \kappa_M \alpha_K \expnorm(\calA)(\|\vec{x}_0\| + \|\vec{f}\|) \Big)} \right) = \poly\left(\frac{\sigma \Gamma_\calG \upsilon}{T d \rlcres_{\min}^{-1} \kappa_M (\|\vec{x}_0\| + \gamma_{\rlcjs,\rlcvs}) \Big)} \right)
$$
This completes the proof. 

\end{proof}

\subsection{DAE of index two}\label{sec:rlc_ind2}
In this section, we consider $\rlc$ circuits satisfying condition $(iii)$ of Theorem~\ref{thm:index_mna} and thus, the $\mna$ equations (Eq.~\eqref{eq:mna}) are of index $2$. Note that is the highest index that the $\mna$ equations (Eq.~\eqref{eq:mna}) can achieve for general linear $\rlc$ circuits~(Theorem~\ref{thm:index_mna}). We now comment on simulating such $\rlc$ circuits (Problem~\ref{prob:RLC_dynamics}). The result for obtaining energy outputs (Problem~\ref{prob:classical_output}) proceeds similarly to earlier cases and we omit the details here. 

\paragraph{Norm of the exponential.}
\begin{claim}\label{claim:expnorm_M2}
Suppose $\calG$ is a $\rlc$ circuit which satisfies condition $(iii)$ of Theorem~\ref{thm:index_mna} i.e., the $\mna$ $\dae$ is of index $2$. Then, the norm of the exponential of $- \rlcP_0 \rlcP_1 \rlcM_2^{-1} \rlcK \rlcP_0 \rlcP_1$ satisfies
$$
\expnorm(-\rlcP_0 \rlcP_1 {\rlcM_1}^{-1}\rlcK \rlcP_0 \rlcP_1) \leq O(\kappa(\widetilde{\rlcM})),
$$
where $\kappa(\widetilde{\rlcM})$ is the effective condition number of $\rlcM$.
\end{claim}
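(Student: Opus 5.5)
The plan is to mimic the proof of the index-one bound (the preceding Claim on $\expnorm(-\rlcP_0\rlcM_1^{-1}\rlcK\rlcP_0)$), now using the doubly-projected coordinates. Let $\Pi := \rlcP_0\rlcP_1$. By admissibility ($\rlcQ_1\rlcQ_0=0$), $\Pi$ is a (generally oblique) projector. We write all operators in a block basis adapted to $\mathrm{Im}(\Pi)\oplus\ker(\Pi)$. The homogeneous dynamics $\dot{\vec{y}} = -\Pi\rlcM_2^{-1}\rlcK\Pi\,\vec{y}$ leaves $\mathrm{Im}(\Pi)$ invariant. Hence $\exp(-t\,\Pi\rlcM_2^{-1}\rlcK\Pi)$ is block upper triangular: its $(1,1)$ block is $\exp(-t\,\mathbf{B})$ with $\mathbf{B} := \Pi\rlcM_2^{-1}\rlcK|_{\mathrm{Im}\Pi}$, its $(2,2)$ block is the identity, and no off-diagonal block appears because $\Pi$ multiplies on the right. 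It therefore suffices to bound $\sup_t\norm{\exp(-t\mathbf{B})}$, up to a factor $\norm{\Pi}\le \norm{\rlcQ_1}+1 = O(1/\uptau)$ coming from the non-orthogonality of $\Pi$. This factor is absorbed into the $O(\cdot)$.

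The key step is to identify $\mathbf{B}$ as $\widetilde{\mathbf{D}}^{-1}\mathbf{S}$ on $\mathrm{Im}(\Pi)$. Here $\widetilde{\mathbf{D}}$ is the compression of $\rlcM$ to $\mathrm{Im}(\Pi)$, which is positive definite since $\mathrm{Im}(\Pi)\subseteq\mathrm{Im}(\rlcP_0)=\ker(\rlcM)^\perp$ and $\rlcM\succeq 0$. The matrix $\mathbf{S}$ is a generalized Schur complement of $\rlcK$ obtained by eliminating the $\ker(\rlcM)$ and $\ker(\rlcM_1)$ directions. I would derive this from the identities $\rlcM_2^{-1}\rlcM = \Pi$ and $\rlcM_2^{-1}\rlcK\rlcQ_1 = \rlcQ_1$, which are the index-two analogue of Claim~\ref{claim:P_Q_dae} and are used in the proof of Theorem~\ref{thm:dae_index2_ode}. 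These identities give $\rlcM\,\Pi\rlcM_2^{-1}\rlcK\Pi = \Pi^T\rlcK\Pi - (\text{terms coupling through }\rlcQ_0,\rlcQ_1)$. Concretely, one eliminates the algebraic block in two steps, as in the $2\times2$ Schur argument $\rlcK/\rlcK_{22}$ of the index-one proof, now applied twice.

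We then argue negativity. The Hermitian part of $\rlcK$ is $\mathrm{diag}(\rlcA_\rlcres\rlcG\rlcA_\rlcres^T,0,0)\succeq 0$. The result of \cite{wang1997schur} (Schur complements preserve nonnegative log-norm / positive semidefinite Hermitian part) is applied to each of the two eliminations, so $\mathbf{S}$ has positive semidefinite Hermitian part. Then, as in the index-zero Claim~\ref{claim:rlc_expnorm_M_index0}, we write $\widetilde{\mathbf{D}}^{-1}\mathbf{S} = \widetilde{\mathbf{D}}^{-1/2}(\widetilde{\mathbf{D}}^{-1/2}\mathbf{S}\widetilde{\mathbf{D}}^{-1/2})\widetilde{\mathbf{D}}^{1/2}$. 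Fact~\ref{fact:ub_exp_norm} gives $\norm{\exp(-t\widetilde{\mathbf{D}}^{-1/2}\mathbf{S}\widetilde{\mathbf{D}}^{-1/2})}\le 1$, so $\norm{\exp(-t\mathbf{B})}\le\sqrt{\kappa(\widetilde{\mathbf{D}})}$. Alternatively, one can invoke \cite{Plischke_2005} as in the index-one proof to get $\kappa(\widetilde{\mathbf{D}})$. Finally, $\kappa(\widetilde{\mathbf{D}})$ is bounded by the effective condition number $\kappa(\widetilde{\rlcM})$ of $\rlcM$ (ratio of $\lambda_{\max}$ to $\lambda^+_{\min}$) via Cauchy interlacing for compressions.

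The main obstacle is the second elimination. The projector $\rlcQ_1$ built in Claim~\ref{claim:admissible_proj_ind2} is oblique, so the clean orthogonal $2\times2$ block structure used at index one is lost. Verifying that $\rlcM\Pi\rlcM_2^{-1}\rlcK\Pi$ really equals an honest Schur complement with positive semidefinite symmetric part (rather than a similarity transform of one) is therefore delicate. I would first try to choose coordinates in which $\rlcQ_1$ is orthogonal, which is permitted since admissible projectors are not unique and the dynamics on the solution manifold are coordinate-independent. The change of basis would cost a factor $\kappa$ of the basis change, which is $O(1/\uptau)$ and is hidden in the $O(\cdot)$.
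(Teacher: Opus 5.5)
Your route is the paper's route: reduce to $\mathrm{Im}(\rlcP_0\rlcP_1)$, write the reduced generator as $\widetilde{\mathbf{D}}^{-1}$ times a Schur complement of $\rlcK$, and finish with \cite{wang1997schur} and \cite{Plischke_2005} (or the symmetric similarity). However, the step you flag as delicate is a genuine gap, and the identification you need is false. The identities you quote are misstated: by Claims~\ref{claim:projectors_index2_props} and~\ref{claim:dae_index2_props}, $\rlcM_2^{-1}\rlcM=\rlcP_1\rlcP_0=\Pi-\rlcQ_0\rlcQ_1$ and $\rlcM_2^{-1}\rlcK\rlcP_0\rlcQ_1=\rlcQ_1$. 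Since $\rlcQ_0\rlcQ_1\neq 0$ in general, $\rlcM\rlcQ_1=-\rlcK\rlcQ_0\rlcQ_1\neq 0$. Along solutions $\rlcP_0\vec{x}=\vec{y}+\rlcP_0\rlcQ_1\vec{x}$, so the stored energy is not $\vec{y}^T\rlcM\vec{y}$, and the compression of $\rlcM$ to $\mathrm{Im}\,\Pi$ is the wrong weight. Concretely, take a voltage source from node $1$ to ground, a capacitor $C_1=1$ between nodes $1$ and $2$, and a capacitor $C_2=1$ in parallel with an inductor $\ell$ from node $2$ to ground, with $\vec{x}=(u_1,u_2,i_\rlcind,i_\rlcvs)$. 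This circuit has a $VC$-loop, so it is index two. Then $\ker\rlcM=\mathrm{span}\{e_4\}$ and $\ker\rlcM_1=\mathrm{span}\{(1,\tfrac12,0,-\tfrac12)\}$. Moreover, $\Pi$ is the orthogonal projector onto $\mathrm{span}\{w_1,e_3\}$ with $w_1=(-1,2,0,0)/\sqrt5$, and in this basis
\[
\mathbf{B}=\begin{pmatrix}0 & 1/\sqrt{5}\\ -\sqrt{5}/(2\ell) & 0\end{pmatrix},
\qquad
\widetilde{\mathbf{D}}=\mathrm{diag}\bigl(13/5,\ \ell\bigr).
\]
The product $\widetilde{\mathbf{D}}\mathbf{B}$ has zero diagonal and off-diagonal entries of unequal magnitude ($13/(5\sqrt5)$ versus $\sqrt5/2$), so its symmetric part is indefinite. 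For $\ell=100$ one gets $\sup_t\norm{e^{-t\mathbf{B}}}=\sqrt{2\ell/5}=\sqrt{40}>\sqrt{\kappa(\widetilde{\mathbf{D}})}=\sqrt{500/13}$. This does not contradict the claimed $O(\kappa(\widetilde{\rlcM}))$, which is about $262$ here, but it shows that your chain of inequalities cannot be completed. The paper's own proof makes the same leap: it asserts that $\rlcM$ occupies only the top-left block and that $\rlcM_2$ is block upper-triangular in the basis $\mathrm{Im}(\rlcP_0\rlcP_1),\mathrm{Im}(\rlcQ_0\rlcP_1),\mathrm{Im}(\rlcQ_1)$. In this example, however, $\rlcM\rlcQ_1\neq 0$ and $\rlcQ_1\rlcM\Pi\neq 0$.

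Your proposed repair does not work either. Admissibility forces $\ker\rlcQ_1\supseteq\ker\rlcM$, so an orthogonal admissible $\rlcQ_1$ exists only if $\ker\rlcM_1\perp\ker\rlcM$. That fails above, since $(1,\tfrac12,0,-\tfrac12)$ has an $e_4$-component. Any change of variables that is not a congruence also destroys the symmetry of $\rlcM$ and the positive semidefinite symmetric part of $\rlcK$ that the Schur-complement lemma relies on. Two side remarks. With the $\rlcQ_1$ of Claim~\ref{claim:admissible_proj_ind2}, $\Pi=\rlcP_0-(\rlcP_0\widetilde{\rlcQ}_1)(\rlcP_0\widetilde{\rlcQ}_1)^+$ is an orthogonal projector, so no $\norm{\Pi}$ factor arises. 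And a factor $1/\uptau$ could not legitimately be absorbed into $O(\kappa(\widetilde{\rlcM}))$ anyway.

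What does work is a Lyapunov argument with the correct weight. For the homogeneous problem, $\frac{d}{dt}\vec{x}^T\rlcM\vec{x}=-\vec{x}^T(\rlcK+\rlcK^T)\vec{x}\le 0$. Using $\vec{z}_2=-\rlcQ_1\rlcM_2^{-1}\rlcK\vec{y}$ from Theorem~\ref{thm:dae_index2_ode}, one has $\rlcP_0\vec{x}=\mathbf{T}\vec{y}$ with $\mathbf{T}=\id-\rlcP_0\rlcQ_1\rlcM_2^{-1}\rlcK$. Hence $\mathbf{E}=\mathbf{T}^T\rlcM\mathbf{T}$ restricted to $\mathrm{Im}\,\Pi$ is positive definite, and $\mathbf{E}\mathbf{B}$ has positive semidefinite symmetric part. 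Since $\Pi\mathbf{T}=\id$ on $\mathrm{Im}\,\Pi$, this gives $\sup_t\norm{e^{-t\mathbf{B}}}\le\sqrt{\kappa(\widetilde{\rlcM})}\,\norm{\mathbf{T}|_{\mathrm{Im}\Pi}}$. You would then still need a separate bound on $\norm{\mathbf{T}}$ to reach a statement purely in terms of $\kappa(\widetilde{\rlcM})$.
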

\begin{proof}
We first observe that $\rlcK$ has a non-negative definite real part i.e., $\rlcK + \rlcK^\dagger \geq 0$. Expanding from the definition of $\rlcM_2$, we have
\begin{align}
    \rlcM_2 &= \rlcM_1 + \rlcK_1 \rlcQ_1\\
    &=\rlcM_1 + \rlcK \rlcP_0 \rlcQ_1\\
    &=\rlcM + \rlcK \rlcQ_0 + \rlcK \rlcP_0 \rlcQ_1\\
    &=\rlcM + \rlcK \rlcQ_0 \rlcP_1 + \rlcK \rlcQ_0 \rlcQ_1 + \rlcK \rlcP_0 \rlcQ_1\\
    &=\rlcM + \rlcK \rlcQ_0 \rlcP_1 + \rlcK \rlcQ_1\,.
\end{align}
Now let us write $\rlcM_2$ in the basis of the images of the projectors $\rlcP_0 \rlcP_1$, $\rlcQ_0 \rlcP_1$, $\rlcQ_1$. Note that each of these projectors are orthogonal to each other e.g., $(\rlcP_0 \rlcP_1) \rlcQ_1 = (\rlcP_0 \rlcP_1)(\rlcQ_0 \rlcP_1) = 0$. First, note that $\rlcM$ is a block-matrix with a single non-zero entry in the top left (corresponding to $\rlcP_0 \rlcP_1$). To see this, observe that $\rlcM_1$ only has support on the top-left $2\times 2$ submatrix since $\mathrm{Im}(\rlcQ_1)=\ker(\rlcM_1)$ and $\rlcK \rlcQ_1$ has support only on the last column. Therefore, $\rlcM_1 + \rlcK \rlcQ_1$ has zeros in the second and third rows of the first column. 

As in the index one case, we can diagonalize the top-left entry (corresponding to $\rlcM$) and have a diagonal matrix denoted $\mathbf{D}$. Now the matrix $\rlcM_2$ can be written as
\begin{equation}
    \rlcM_2 = \begin{bmatrix}
        \mathbf{D} & \rlcK_{12} & \rlcK_{13}\\
        0 & \rlcK_{22} & \rlcK_{23}\\
        0 & \rlcK_{32} & \rlcK_{33}
    \end{bmatrix}\,.
\end{equation}
Now let us rearrange the blocks to make it a $2\times 2$ block matrix. To end this end, let
\begin{equation}
    \rlcK_{1,2:3} = \begin{bmatrix}
        \rlcK_{12} & \rlcK_{13}
    \end{bmatrix}\,,\hspace{0.2in}
    \rlcK_{2:3,2:3} = \begin{bmatrix}
        \rlcK_{22} & \rlcK_{23}\\
        \rlcK_{32} & \rlcK_{33}
    \end{bmatrix}\,.
\end{equation}
Using these definitions, we have
\begin{equation}
    \rlcM_2 = \begin{bmatrix}
        \mathbf{D} & \rlcK_{1,2:3}\\
        0 & \rlcK_{2:3,2:3}
    \end{bmatrix}\,,
\end{equation}
where on the bottom left, $0$ stands for the $2\times 1$ block matrix of zeros. Therefore, we have
\begin{equation}
    \rlcM_2^{-1} = \begin{bmatrix}
        \mathbf{D}^{-1} & -\mathbf{D}^{-1} \rlcK_{1,2:3} \rlcK_{2:3,2:3}^{-1}\\
        0 & \rlcK_{2:3,2:3}^{-1}
    \end{bmatrix}\,.
\end{equation}
In this block-basis, we have
\begin{equation}
    \rlcK \rlcP_0 \rlcP_1 = \begin{bmatrix}
        \rlcK_{11} & 0\\
        \rlcK_{2:3,1} & 0
    \end{bmatrix}\,,
\end{equation}
where 
\begin{equation}
    \rlcK_{2:3,1} = \begin{bmatrix}
        \rlcK_{21}\\\rlcK_{31}
    \end{bmatrix}\,.
\end{equation}
Putting all this together, we have
\begin{equation}
    - \rlcP_0 \rlcP_1 \rlcM_2^{-1} \rlcK \rlcP_0 \rlcP_1=\begin{bmatrix}
        -\mathbf{D}^{-1} \rlcK_{11}/\rlcK_{2:3,2:3} & 0\\
        0 & 0
    \end{bmatrix}\,,
\end{equation}
where 
\begin{equation}
   \rlcK_{11}/\rlcK_{2:3,2:3} =  \rlcK_{11} - \rlcK_{1,2:3} \rlcK_{2:3,2:3}^{-1} \rlcK_{2:3,1}\,.
\end{equation}
Once again in the top left block, we have something of the form $\mathbf{D}^{-1}$ multiplied by a Schur complement of $\rlcK$. As in the case of index one, using properties of Schur complements \cite{wang1997schur}, we have that the Schur complement $-\rlcK_{11}/\rlcK_{2:3,2:3}$ has a negative log-norm since $\rlcK$ has a positive Hermitian real part. Using the result of $\cite{Plischke_2005}$ (Theorem 3.35), we have
\begin{equation}
    \|\exp(- \rlcP_0 \rlcP_1 \rlcM_2^{-1} \rlcK \rlcP_0 \rlcP_1 t)\| \leq \kappa(\widetilde{\mathbf{D}})\,,
\end{equation}
where 
\begin{equation}
    \widetilde{\mathbf{D}} = \begin{bmatrix}
        \mathbf{D} & 0 & 0\\
        0 & \id & 0\\
        0 & 0 & \id
    \end{bmatrix}\,.
\end{equation}
This completes the proof.
\end{proof}

\paragraph{Simulating degree-$d$ RLC circuits.}
We can now give a result for simulating $\rlc$ circuits which have $\mna$ $\dae$ index $2$. The proof follows from Corollary~\ref{thm:sim_ind2} and similarly to that of earlier cases so we omit the details here.
\begin{theorem}\label{thm:rlc_sim_ind2_degd}
Let $\varepsilon, \sigma_1, \sigma_2 \in (0,1)$ and $i_\max, \rlcvs_\max > 0$. Consider the setup of Problem~\ref{prob:RLC_dynamics}. Suppose $\calG$ is a $\rlc$ circuit with degree-$d$ excluding the reference node that satisfies Definition~\ref{def:structure_RLC_circs} and condition $(ii)$ of Theorem~\ref{thm:index_mna} i.e., $\mna$ $\dae$ has index $2$. Let $\sigma_\min^+(\rlcM_1) \geq \sigma_1$, $\sigma_\min(\rlcM_2) \geq \sigma_2$ and $\sigma_\min(\rlcP_0 \widetilde{\rlcQ}_1) \geq 1/\uptau$. Let $\|\vec{i_{\rlcjs}}\| \leq i_\max$ and $\|\vec{v}_{\rlcvs}\| \leq \rlcvs_\max$. Then, there is a quantum algorithm that outputs $|\widehat{\Psi}\rangle$ such that $\left\| |\widehat{\Psi}\rangle - \normhist^{-1} \sum_{k=0}^{m} \norm{\vec{x}(k \Delta t)} \ket{k, x(k \Delta t)} \right\| \leq \varepsilon$ with success probability $\geq 2/3$ where $\Delta t = O(\sigma_2 \uptau/ \alpha_K)$ and $m = \ceil{T/\Delta t}$. Define $\mu^2 := m^{-1} \sum_{j=1}^m \norm{\vec{x}(j\Delta t)}^2$, $\gamma_{\rlcjs,\rlcvs} = O((\sqrt{2d} i_\max + \rlcvs_\max)/\uptau)$, and
$$
\sigma:= \min\{\rlccap_{\min} \lambda^+_{\min}(\rlcA_{\rlccap}\rlcA_{\rlccap}^T), \lambda_\min(\rlcL) \}, \quad \kappa_M := \frac{2d \max\{\rlccap_\max, \rlcind_\max\}}{\sigma}, \quad \mathcal{C}_f := \log\left(1 + \frac{C T d \rlcres_{\min}^{-1} \gamma_{\rlcjs,\rlcvs}}{\mu \sigma \sigma_2}\right),
$$
where $C \geq 1$ is some absolute constant. The algorithm uses the following complexity 
\begin{align*}
\text{Queries to }U_{M}&: \poly\left(T, d, \log(N), 1/\mu, \kappa_M, r_\min^{-1}, 1/(\sigma \sigma_1 \sigma_2 \uptau), \log(1/\varepsilon), \mathcal{C}_f \right) \,,\\
\text{Queries to }U_{K}&: \poly\left(T, d, \log(N), 1/\mu, \kappa_M, r_\min^{-1}, 1/(\sigma \sigma_1 \sigma_2 \uptau), \log(1/\varepsilon), \mathcal{C}_f \right) \,,\\
\text{Queries to }O_{f},O_x&: \poly\left(T d \rlcres_\min^{-1} \kappa_M/(\sigma_2 \sigma_1 \uptau \mu) \log(1/\varepsilon)\right) \,,
\end{align*}
An additional gate complexity $O(1)$ times the number of uses of $U_M$ and $U_K$ is also used. It is sufficient to set the precisions of the block-encodings of $U_M$ and $U_K$ as
$$
\varepsilon_M = \varepsilon_K = \poly\left(\frac{\sigma \sigma_1 \sigma_2 \uptau \varepsilon}{T d \rlcres_\min^{-1} \kappa_M \calC_f}\right).
$$
\end{theorem}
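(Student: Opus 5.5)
The plan is to instantiate Corollary~\ref{thm:sim_ind2} for the $\mna$ system of Eq.~\eqref{eq:mna}, mirroring the proofs of Theorems~\ref{thm:rlc_sim_ind0_degd} and \ref{thm:rlc_sim_ind1_degd}. The work splits into two parts. First, the block-encodings $U_M,U_K$ are built from sparse oracle access. Second, every spectral parameter in Corollary~\ref{thm:sim_ind2} is bounded in terms of circuit data, and the resulting bounds are substituted.

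\textbf{Block-encodings and forcing.} By Claim~\ref{claim:rlc_M_K_BE}, we have $\alpha_M = 2d\rlccap_\max + d\rlcind_\max$ and $\alpha_K = O(d\rlcres_\min^{-1}+\sqrt{d})$. Each query to $U_M$ or $U_K$ costs $O(1)$ oracle calls and $O(\log N + \polylog(d/\varepsilon))$ gates. For the forcing $\vec f = [-\rlcA_\rlcjs \vec i_\rlcjs, 0, -\vec v_\rlcvs]^T$, Fact~\ref{fact:spectrum_laplacian_G} gives $\norm{\vec f}\le \sqrt{2d}i_\max+\rlcvs_\max$. The quantity $\gamma_f=O(\norm{\vec f}/(\uptau\sigma_2))$ from Corollary~\ref{thm:sim_ind2} is then bounded by $\gamma_{\rlcjs,\rlcvs}/\sigma_2$, which yields the stated $\calC_f$ up to the absolute constant $C$.

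\textbf{Spectral parameters.} The effective condition number $\kappa_M=\alpha_M/\sigma$ follows from Claims~\ref{claim:lmax_M} and \ref{claim:kappa_M}, using the minimum nonzero eigenvalue $\sigma=\lambda_\min^+(\rlcM)\ge\min\{\rlccap_\min\lambda^+_\min(\rlcA_\rlccap\rlcA_\rlccap^T),\lambda_\min(\rlcL)\}$. This holds because the spectrum of $\rlcM$ is the union of the spectra of its diagonal blocks. We bound $\kappa_{M_1}$ and $\kappa_{M_2}$ by combining the norm bounds of Claim~\ref{claim:norm_M_K_ind1_ind2}(a),(b), which are $O(d\max\{\rlccap_\max,\rlcind_\max\}+d\rlcres_\min^{-1}+\sqrt d)$ up to the factor $(1+1/\uptau)$, with the assumed lower bounds $\sigma_1,\sigma_2$. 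This gives
\[
\kappa_{M_1}, \kappa_{M_2} \le \poly\bigl(d,\rlccap_\max,\rlcind_\max,\rlcres_\min^{-1},1/(\sigma_1\sigma_2\uptau)\bigr).
\]
The exponential norm is handled by Claim~\ref{claim:expnorm_M2}, which gives $\expnorm(-\rlcP_0\rlcP_1\rlcM_2^{-1}\rlcK\rlcP_0\rlcP_1)=O(\kappa(\widetilde{\rlcM}))$.

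As in the index-one argument, this restriction suffices: $\vec y=\rlcP_0\rlcP_1\vec y$ along the trajectory, because $\rlcP_0\rlcP_1$ is idempotent when admissibility holds (Claim~\ref{claim:admissible_proj_ind2} gives $\rlcQ_1\rlcQ_0=0$). The regularized matrix $\widetilde{\rlcM}$ has condition number $\max\{\lambda_\max(\rlcM),1\}/\min\{\sigma,1\}$, which is $O(\kappa_M)$ after absorbing constants. Hence $\expnorm$ contributes only $\poly(\kappa_M)$.

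\textbf{Substitution.} Substituting these bounds into the query, oracle and precision expressions of Corollary~\ref{thm:sim_ind2} turns every factor into a polynomial in $T,d,1/\mu,\kappa_M,\rlcres_\min^{-1},1/(\sigma\sigma_1\sigma_2\uptau),\log(1/\varepsilon),\calC_f$. The $\log N$ dependence enters only through the gate costs of $U_M$ and $U_K$. The step size $\Delta t=O(\sigma_2\uptau/\alpha_K)$ is inherited directly from Corollary~\ref{thm:sim_ind2}.

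\textbf{Main obstacle.} I expect the hardest step to be justifying Claim~\ref{claim:expnorm_M2} in this setting, in particular that $\rlcM_2$ is block upper-triangular in the $\rlcP_0\rlcP_1/\rlcQ_0\rlcP_1/\rlcQ_1$ decomposition even though these projectors need not be orthogonal. The Schur-complement log-norm argument needs an orthogonal frame, so a non-orthogonal decomposition would add a factor of $\norm{\rlcQ_1}\le 1/\uptau$ to the bound. My first attempt would be to conjugate by the (non-orthogonal) change of basis and absorb its condition number, which is $O(1/\uptau)$, into the polynomial dependence on $1/\uptau$ that the statement already allows.
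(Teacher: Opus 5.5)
This is the paper's route: its (omitted) proof likewise just substitutes the RLC bounds into Corollary~\ref{thm:sim_ind2}, using Claim~\ref{claim:expnorm_M2} for the exponential norm; your restriction step is even an exact identity, $\rlcP_0\rlcP_1\rlcM_2^{-1}\rlcK=\rlcP_0\rlcP_1\rlcM_2^{-1}\rlcK\rlcP_0\rlcP_1$, by Claim~\ref{claim:dae_index2_props} and $\rlcP_0\rlcP_1\rlcQ_0=\rlcP_0\rlcP_1\rlcQ_1=0$. On your flagged obstacle, orthogonality is not the issue: for the $\rlcQ_1$ of Claim~\ref{claim:admissible_proj_ind2}, $\rlcP_0\rlcP_1$ is the orthogonal projector onto $(\ker\rlcM+\ker\rlcM_1)^\perp$; the real delicate point is that the block-triangular form of $\rlcM_2$ used in Claim~\ref{claim:expnorm_M2} is equivalent (as $\rlcM$ is symmetric) to $\rlcP_0\rlcP_1\rlcK\rlcQ_0\rlcQ_1=0$, which is not automatic and which no change of basis restores, so for a self-contained bound use $\tfrac{d}{dt}\tfrac12\vec x^T\rlcM\vec x=-\vec x^T\rlcK\vec x\le 0$ along homogeneous solutions together with $\vec y=\rlcP_0\rlcP_1\rlcP_0\vec x$ and $\rlcP_0\vec x=\vec y-\rlcP_0\rlcQ_1\rlcM_2^{-1}\rlcK\vec y$, which gives $\expnorm\le\sqrt{\kappa_M}\,\bigl(1+\norm{\rlcQ_1}\alpha_K/\sigma_2\bigr)$ and is all the polynomial statement needs.
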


\section{BQP Completeness}\label{sec:bqp}
In this subsection, we show that the problem of computing the energy of any subset of capacitors or inductors is $\bqp$-complete. We first give a way to embed an arbitrary coupled oscillator problem (without damping or sources) into an instance of an LC circuit. We have the following result.
\begin{theorem}
    There exists an asymptotic family of LC circuits such that it is $\bqp$-complete to decide if the energy of a single capacitor is above $2/3$ or below $1/3$ (promised that one of these cases is true), where $N$ is the number of masses in the coupled oscillator set-up.
\end{theorem}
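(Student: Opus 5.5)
The proof has two parts. Membership in $\bqp$ follows from the algorithms already given. Hardness comes from a reduction from the coupled-oscillator energy problem of \cite{PhysRevX.13.041041}, which is known to be $\bqp$-complete.

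\textbf{Membership.} A pure LC network with a capacitive tree has $\mna$ index $0$ by Theorem~\ref{thm:index_mna}. For such an instance, Theorem~\ref{thm:rlc_energy_ind0_degd} together with Remark~\ref{remark:results_LC_ind0} estimates the energy in one capacitor to constant additive precision in $\polylog(N)$ time. The total energy is conserved and is available from the initial data, so normalizing by it gives a constant-precision estimate of the normalized energy. This separates the $2/3$ and $1/3$ cases. The instances produced by the reduction below will all have capacitive trees, so membership holds for the whole family.

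\textbf{Hardness, the reduction.} A hard oscillator instance is $\mathbf{M}\ddot{\vec q} = -\mathbf{F}\vec q$, with $\mathbf{M}$ diagonal positive and $\mathbf{F}$ a spring matrix. The spring matrix is $\mathbf{F}=\sum_{j<k}\kappa_{jk}(e_j-e_k)(e_j-e_k)^T+\sum_j\kappa_{jj}e_je_j^T$, i.e.\ a weighted Laplacian plus wall springs. In \cite{PhysRevX.13.041041} the problem is to decide whether the kinetic or potential energy of a designated mass or spring exceeds $2/3$ of the total.

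I will use the electrical dual picture:
\begin{itemize}
\item each mass $m_j$ becomes a node $j$ joined to the reference node by a capacitor $c_j=m_j$;
\item each spring $\kappa_{jk}$ becomes an inductor $\ell_{jk}=1/\kappa_{jk}$ between nodes $j$ and $k$;
\item each wall spring becomes an inductor from node $j$ to the reference node.
\end{itemize}
Then $\rlcA_\rlccap$ is the identity, so the capacitors form a star that is a capacitive tree. Differentiating the index-$0$ $\mna$ equations (Eq.~\eqref{eq:rlc_mna_index0}) and eliminating $\vec i_\rlcind$ gives
\[
\rlcC\ddot{\vec u}=-\rlcA_\rlcind\rlcL^{-1}\rlcA_\rlcind^T\vec u=-\mathbf{F}\vec u .
\]
Node voltages therefore evolve exactly like the displacements $\vec q$.

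Under this map the capacitor energy $\tfrac12 c_ju_j^2$ corresponds to the potential-like quantity $\tfrac12 m_j q_j^2$, not to kinetic energy. To match kinetic energy I will instead take node fluxes $\phi_j=\int u_j$ as the coordinates. The capacitor energy is then $\tfrac12 c_j\dot\phi_j^2$, which is the kinetic energy $\tfrac12 m_j\dot q_j^2$. Likewise the inductor energy $\tfrac12\ell_{jk}i_{jk}^2=\tfrac12\kappa_{jk}(\phi_j-\phi_k)^2$ is the spring potential energy. The correspondence of Lagrangians is then exact, and the total energies coincide.

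\textbf{Hardness, remaining steps.}
\begin{enumerate}
\item \emph{Transport the question.} Deciding whether the kinetic energy of mass $j$ is above $2/3$ or below $1/3$ of the total becomes exactly the question for capacitor $c_j$. If the hard instances use spring energy instead, the analogous question is about an inductor; the theorem is stated for a capacitor, so I will use the kinetic-energy version.
\item \emph{Map initial conditions.} Initial velocities give the initial node voltages, and initial spring extensions $\sqrt{\kappa_{jk}}(q_j-q_k)$ give the initial inductor currents. This requires the initial-state oracles to transform accordingly.
\item \emph{Check parameters and oracle access.} The construction gives degree $d$ equal to the oscillator sparsity plus one. Component values stay in polynomially bounded ranges, with $\ell_{jk}=1/\kappa_{jk}$. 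The oracles for $\rlcA,\rlcL,\rlcC$ are computable from those of $\mathbf{M}$ and $\mathbf{F}$ with $O(1)$ queries.
\end{enumerate}

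\textbf{Main obstacle.} The delicate step is the initial-state and oracle compatibility. The hard family in \cite{PhysRevX.13.041041} prepares the state $(\sqrt{\mathbf M}\dot{\vec q}, i\mathbf{B}^\dagger \vec q)$ with $\mathbf{B}\mathbf{B}^\dagger=\mathbf F$. I must show this is exactly $(\sqrt{\rlcC}\vec u,\sqrt{\rlcL}\vec i_\rlcind)$, using $\mathbf B=\rlcA_\rlcind\rlcL^{-1/2}$, with efficiently implementable oracles. This hinges on choosing $\mathbf{B}$ as the edge-incidence factorization used in \cite{PhysRevX.13.041041}.

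Once this is settled, the normalized-energy promise is preserved exactly. $\bqp$-hardness then transfers, and combined with membership it gives completeness.
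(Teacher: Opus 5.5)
Your proposal is correct and follows essentially the paper's route. It uses the same circuit (a star of capacitors so $\rlcA_\rlccap=\id$, one inductor per spring or wall spring, $c_j=m_j$, $\ell_{jk}=1/\kappa_{jk}$), the same identification of single-capacitor energy with single-mass kinetic energy, and the same membership argument via Theorem~\ref{thm:rlc_energy_ind0_degd} and Remark~\ref{remark:results_LC_ind0}. The only difference is presentational: you verify the dynamics through node fluxes and a Lagrangian match, whereas the paper matches the first-order skew-symmetric form directly by taking $\rlcA_\rlcind=\sgn(\mathbf{B})$ entrywise, so that $\mathbf{B}=\rlcC^{-1/2}\rlcA_\rlcind\rlcL^{-1/2}$ (your $\rlcA_\rlcind\rlcL^{-1/2}$ up to the mass rescaling); this is exactly the one-line check that settles the step you flagged as the main obstacle.
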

\begin{proof}
    Consider an arbitrary sparse coupled oscillator instance given by a matrix of springs constants $\rlcK$ and a (diagonal) mass matrix $\rlcM$. In \cite{PhysRevX.13.041041}, it was shown that the evolution can be written as
    \begin{equation}
        \mathbf{M}\frac{d^2x}{dt^2} = -\mathbf{F}x\,,
    \end{equation}
    where $\mathbf{F}$ denotes the matrix of forces on the masses. It was also shown that this evolution can be recast in higher dimensions as Hamiltonian evolution. Here, we use the form of the Hamiltonian derived in \cite{krovi2024quantum} which is the following differential equation.
    \begin{equation}\label{eq:coupled_osc}
        \frac{d\tilde{x}}{dt} = \begin{bmatrix}
            0 & \mathbf{B}^\dagger\\-\mathbf{B} & 0
        \end{bmatrix}\tilde{x}\,,
    \end{equation}
    where the rectangular matrix $\mathbf{B}$ satisfies $\mathbf{B} \mathbf{B}^\dagger = \sqrt{\rlcM^{-1}}\mathbf{F}\sqrt{\mathbf{M}^{-1}}$. Given oracle access to the $\rlcK$ matrix, \cite{PhysRevX.13.041041} shows how to construct oracle access to $\mathbf{B}$ and $\mathbf{B}^\dagger$. In terms of the masses and spring constants, the matrix $\mathbf{B}$ can be written as follows. As mentioned above, $\mathbf{B}$ is a rectangular matrix (and in fact, the incidence matrix of a weighted graph). It has $N$ rows, where $N$ is the number of masses and $N(N+1)/2$ columns. Any entry can be labeled $(j,(k,l))$ where $k\leq l$. The value of this entry is given by
    \begin{equation}
        \bra{j}\mathbf{B}\ket{k,l} = \begin{cases}
            \frac{\sqrt{\kappa_{j,j}}}{\sqrt{m_j}}\,,\text{if }j=k=l\\
            \frac{\sqrt{\kappa_{j,k}}}{\sqrt{m_j}}\,,\text{if }j=k\neq l\\
            -\frac{\sqrt{\kappa_{j,k}}}{\sqrt{m_k}}\,,\text{if }j=l\neq k\\
            0\,\text{ otherwise}
        \end{cases}
    \end{equation}
    where $\kappa_{j,k}$ refer to entries of the matrix $\rlcK$ and $m_{j}$ refer to the non-zero diagonal entries of the mass matrix $M$ of the mass-and-spring system.
    
    Next, we explain how to construct an LC network and define the matrices $\rlcA_{\rlccap}$, $\rlcA_{\rlcind}$, $\rlcC$ and $\rlcL$ from the $\mathbf{B}$ matrix above. First, we set the number of nodes in the $\textsf{LC}$-network to be $N+1$ i.e., the $N$ nodes coming from the coupled oscillator set-up and an additional reference node (that will eventually be omitted from $\rlcA_{\rlccap}$). The capacitors are connected in a star-like fashion i.e., there is a capacitor connecting the reference node to every other node. There are no capacitors between the non-reference nodes. The values of these capacitors will be decided later. This configuration of capacitors ensures that the matrix $\rlcA_{\rlccap}$ defined in (\ref{eq:incidence_defn}) is the identity matrix.

    Now, the matrix $\rlcA_{\rlcind}$, which is the incidence matrix of the inductors can be constructed from the matrix $\mathbf{B}$. Define $\rlcA_{\rlcind}$ as follows.
    \begin{equation}
        \rlcA_{\rlcind}(j,(k,l))=\sgn(\mathbf{B}(j,(k,l)))\,.
    \end{equation}
    This means that the inductors are connected as follows. The diagonal terms i.e., $\rlcA_{\rlcind}(j,(j,j))$, which are unity, indicate that there is an inductor connecting the reference node to every other node (just like the capacitors and in parallel with them). The other entries of $\rlcA_{\rlcind}$ mean that there is an inductor wherever there is a spring. The values of these inductors (and the capacitors) will be explained next.

    Now recall from (\ref{eq:rlc_mna_index0}), we have
    \begin{equation}
        \begin{bmatrix}
        \rlcA_{\rlccap} \rlcC \rlcA_{\rlccap}^T & 0 \\
        0 & \rlcL 
    \end{bmatrix}
    \frac{d}{dt} \begin{bmatrix}
        \vec{u} \\ \vec{i}_\rlcind
    \end{bmatrix} + 
    \begin{bmatrix}
        0 & \rlcA_{\rlcind} \\
        -\rlcA_{\rlcind}^T & 0
    \end{bmatrix}
    \begin{bmatrix}
        \vec{u} \\ \vec{i}_\rlcind
    \end{bmatrix} = 0\,.
    \end{equation}
    Since we have that $\rlcA_{\rlccap}$ is the identity, we can re-write the above equation as 
    \begin{equation}
        \frac{dy}{dt} = \begin{bmatrix}
        0 & \sqrt{\rlcC^{-1}}\rlcA_{\rlcind}\sqrt{\rlcL^{-1}} \\
        -(\sqrt{\rlcC^{-1}}\rlcA_{\rlcind}\sqrt{\rlcL^{-1}})^T & 0
    \end{bmatrix}y\,,
    \end{equation}
    where 
    \begin{equation}
        y=\begin{bmatrix}
        \sqrt{\rlcC} \vec{u} \\ \sqrt{\rlcL} \vec{i}_\rlcind
    \end{bmatrix}\,.
    \end{equation}
    Notice that the above equation is very similar to (\ref{eq:coupled_osc}). To make them equal, we need to identify the entries of $\mathbf{B}$ with $\sqrt{\rlcC^{-1}}\rlcA_{\rlcind}\sqrt{\rlcL^{-1}}$. Recall that $\rlcC$ is an $N\times N$ diagonal matrix and $\rlcL$ is an $N(N+1)/2\times N(N+1)/2$ diagonal matrix. Therefore, we need 
    \begin{equation}
        \frac{1}{\rlcC_j\rlcL_{k,l}} = \frac{\kappa_{k,l}}{m_j}\,.
    \end{equation}
    This gives us a simple identification of springs with inductors and masses with capacitors i.e., $\rlcL_{k,l} = 1/\kappa_{k,l}$ and $\rlcC_j=m_j$.

    In \cite{PhysRevX.13.041041}, it was shown that deciding if the normalized kinetic energy of a single oscillator is above $1/\polylog N$ or below $\exp(-\sqrt{\log N})$ promised that one of these cases is true is $\bqp$ hard. The normalized kinetic energy of a single oscillator maps to the normalized energy of a single capacitor in our case. This proves the hardness in our case as well. 

    To show completeness, we use Theorem~\ref{thm:rlc_energy_ind0_degd}, specifically see Remark~\ref{remark:results_LC_ind0}. In the $\bqp$-hard instance of mass-and-spring system constructed in \cite{PhysRevX.13.041041}, the mass matrix $\rlcM = \rlcA_c \rlcC \rlcA_c^T$ is the identity and the spring constant matrix $\rlcK$ has sparsity and norm $O(1)$. This leads to a $\polylog N$ quantum algorithm to additively estimate the normalized energy of a single inductor. Through the construction above, this corresponds to the kinetic energy of a single mass in the coupled oscillator system. A constant additive approximation to the energy can be used to decide whether the normalized energy is above $2/3$ or below $1/3$ by picking the constant to be less than $1/6$.
\end{proof}


\bibliographystyle{alpha}
\bibliography{references}

\newpage 

\appendix

\section{Deferred proofs}
\subsection{Guarantees of quantum $\ode$ solver} 
\label{appsec:qalgo_ode_la}

In this section of the appendix, we provide a proof of Theorem~\ref{thm:ODE_solver}, which we restate below for convenience. The proof is almost identical to that of \cite[Theorems~6--7]{Krovi2023improvedquantum}, with the difference being that the linear operator $\bm{\calA}$ in \cite{Krovi2023improvedquantum} is assumed to be a sparse matrix whereas here, the differential operator is given to us as a block-encoding.
\qodesolver*
\begin{proof}
We will use the algorithm from \cite{Krovi2023improvedquantum} to solve the given linear $\ode$. The algorithm approximates the solution via a truncated Taylor series and then designs a linear system to be solved based on that. 

Let us denote $h = 1/\norm{\bm{\calA}}$ be a small time-step (i.e., $h = \Delta t$), $m = \ceil{T/h}$ be the number of time-steps, and $t_j = jh, \forall j \in [m]_0$. Additionally, we define the parameter $k$ as the number of terms in the truncated Taylor series which is chosen as
\begin{equation}\label{def:ODE_taylor_steps}
k = \left\lceil \frac{2\log \Omega}{\log \log \Omega} \right\rceil, \quad \text{ where } \Omega = \frac{4 m e^3}{\delta} \left(1 + \frac{T e^2 \norm{\vec{f}}}{\mu} \right),    
\end{equation}
so that $(k+1)! > \Omega$ and where $\delta \in (0,1)$ is an error parameter which we fix later. From Claim~\ref{claim:diff_taylor_truth_approx_input}$(ii)$ (with instantiations of $\eta_0 := 0$ and $\eta_f := 0$ there) ensures that the solution $\vec{x}(t)$ at time $t \in [0,T]$ and the solution $\vec{y}(t)$ obtained by the truncated Taylor series at $k$ terms, satisfies
\begin{equation}\label{eq:taylor_close_to_truth}
\left(\frac{\sum_{j=0}^m \norm{\vec{x}(t_j) - \vec{y}(t_j)}^2}{\sum_{j=0}^m \norm{\vec{x}(t_j)}^2} \right)^{1/2} \leq \frac{\delta}{2}
\end{equation}
Let us denote the true history state as $\ket{\psi}$ and the history state corresponding to the solution obtained by the truncated Taylor series as $\ket{\phi}$ which have the following definitions:
\begin{equation}\label{eq:def_true_hist_taylor_hist}
\ket{\psi} := \frac{1}{\normhist_x} \sum_{j=0}^m \norm{\vec{x}(jh)} \ket{j, x (jh)}, \enspace \ket{\phi} := \frac{1}{\normhist_y} \sum_{j=0}^m \norm{\vec{y}(jh)} \ket{j, y(jh)}
\end{equation}
Using Claim~\ref{claim:diff_hist_states}, we then have that $\ket{\phi}$ is close to $\ket{\psi}$:
\begin{equation}\label{eq:interim_phi_close_to_psi}
\norm{\ket{\psi} - \ket{\phi}} \leq \delta.
\end{equation}

It is then enough to prepare a state close to $\ket{\phi}$ to ensure that we are able to output a state close to $\ket{\psi}$. This is what the algorithm of \cite{Krovi2023improvedquantum} does as follows. The algorithm involves solving a linear system of equations defined by the following operators\footnote{In \cite{Krovi2023improvedquantum}, the linear operators below are denoted by $L, N, M_1, M_2$ respectively but we use the following notation below so as not to confuse the reader as this latter notation is used for particular operators in the context of $\rlc$ circuits and the quantum $\dae$ solver.}
\begin{align}\label{eq:ODE_solver_system}
    \calL &= \id - \calN \\ 
    \calN &= \sum_{i=0}^{m} \ket{i+1}\bra{i} \otimes \calM_2 (I - \calM_1)^{-1} \\ 
    \calM_1 &= \sum_{j=0}^{k-1} \ket{j+1}\bra{j} \otimes \frac{(\bm{\calA} h)}{{j+1}} \\ 
    \calM_2 &= \sum_{j=0}^{k-1} \ket{0}\bra{j} \otimes \id,
\end{align}
where the first register indexed by $i$ contains the time step, the second register index by $j$ is the Taylor step and the third register contains the states on which $\bm{\calA}$ acts. The history state vector $\vec{y} \in \mathbb{C}^{N m}$ of the $\ode$ is then obtained by solving the linear system
\begin{align}\label{eq:lsa-ODE}
    \calL \vec{y} = \vec{\psi}_{\mathrm{in}}, \qquad \vec{\psi}_{\mathrm{in}} = \ket{0,0} \otimes \vec{x}_0 + h\sum_{i=0}^{m-1}\ket{i,1} \otimes \vec{f},
\end{align}
where $\vec{\psi}_{\mathrm{in}}$ is the \emph{input} state vector containing information about the initial vector $\vec{x}(0)$ and the forcing $\vec{f}$. The corresponding input state is then
\begin{equation}\label{eq:interim_input_state}
\ket{\psi_{\mathrm{in}}} = \frac{1}{\normhist_{\mathrm{init}}} \Big(\norm{\vec{x}_0} \ket{0,0,x_0} + h \|\vec{f}\| \sum_{i=0}^{m-1} \ket{i,1,f} \Big),
\end{equation}
where $\normhist_{\mathrm{init}} = \sqrt{\norm{\vec{x}_0}^2 + mh^2 \|\vec{f}\|^2}$. One copy of this state can be produced with a single call to $O_x$ and $O_f$ and an additional $\polylog(m)$ elementary gates (see \cite[Lemma~14]{Krovi2023improvedquantum}).

To solve Eq.~\eqref{eq:lsa-ODE}, we need to be able to prepare a block-encoding of $\calL$ and then $\calL^{-1}$. We do this by commenting on the block-encodings of the involved operators $\calM_2$, $\calM_1$ and $\calN$. Let $\varepsilon_1,\varepsilon_2,\varepsilon_3 \in (0,1)$ be error parameters to be fixed later. We first give a block-encoding $\calM_2$, which we show can be done exactly. Let $\mathbf{B} = \sum_{j=0}^{k-1} \ket{0}\bra{j}$. Consider the $\ell = \ceil{\log k}$-qubit unitary $W$ that has the following action
\begin{equation}\label{eq:interim_def_W}
    W \ket{0^\ell} = \frac{1}{\sqrt{k}} \sum_{j=0}^{k - 1} \ket{j}.
\end{equation}
A circuit for $W$ with gate complexity $O(\log k)$ can be determined using~\cite{shukla2024efficient}. Let $R_0 = 2\ket{0^\ell}\bra{0^\ell} - \id$ be the reflection about $\ket{0^\ell}$ and $V = \ket{0}_a\bra{0}_a \otimes \id + \ket{1}_a\bra{1}_a \otimes R_0$ be the controlled-reflection operator where $a$ is ancillary qubit. Consider the following $\ell$-qubit circuit 
$$
U_B = (H_a \otimes \id)V(H_a \otimes \id)(\id_a \otimes W^\dagger).
$$
We can then show that $U_B$ is a $(\sqrt{k},1,0)$-block-encoding of $\mathbf{B}$ by evaluating
\begin{align*}
(\bra{0}_a \otimes \id)U_B(\ket{0}_a \otimes \id) &= (\bra{+}_a \otimes \id)V(\ket{+}_a \otimes \id) W^\dagger 
= \Big(\frac{\id}{2} + \frac{2 \ket{0^\ell}\bra{0^\ell} - \id}{2} \Big) W^\dagger
= \ket{0^\ell}\bra{0^\ell}W^\dagger = \sqrt{k} \mathbf{B},
\end{align*}
where we used the definition of $V$ in the third equality and noted that $\sqrt{k}\bra{0^\ell} W^\dagger = \sum_{j=0}^{k-1} \bra{j}$ from the definition of $W$ in Eq.~\eqref{eq:interim_def_W}. We then obtain a $(\sqrt{k},1,0)$-block-encoding $U_{\calM_2}$ of $\calM_2$ by taking a tensor product of $U_B$ with $\id$. The corresponding gate complexity is $O(\log k)$ which is mainly due to controlled-$R_0$ operator and assuming access to $O(\log k)$ clean ancillas.

To block-encode $\calM_1$, we first block-encode $\mathbf{B}_1 := \sum_{j=0}^{k-1} \ket{j+1}\bra{j}/(j+1)$ which we will then combine with $U_{\calA}$. We note that $\mathbf{B}_1$ has row and column sparsity of $1$. We can then use Lemma~\ref{lem:sparse-block-enc} to obtain a $(1,\ell + 3, \varepsilon_1)$-block-encoding of $\mathbf{B}_1$, which we denote by $U_{B_1}$, with gate complexity $O(\ell + \log^{2.5}(1/\varepsilon_1))$ while using $O(\ell, \log^{2.5}(1/\varepsilon_1))$ ancilla qubits. We note that $U_{\calA}$ which is a $(\alpha_\calA,a_\calA,\varepsilon_{\calA})$-block-encoding of $\bm{\calA}$ is also a $(h\alpha_\calA,a_\calA,h\varepsilon_{\calA})$-block-encoding of $\bm{\calA} h$. We can then use Lemma~\ref{lem:tensor_prod_BEs} to combine $U_{B_1}$ and $U_{\calA}$ to obtain a $(h \alpha_{\calA}, \ell + a_\calA + 3, h \alpha_{A}\varepsilon_1 + h\varepsilon_{\calA} + h \varepsilon_1\varepsilon_{\calA})$-block-encoding of $\calM_1 = \mathbf{B} \otimes (\bm{\calA} h)$. Using Lemma~\ref{lem:sum_BEs}, we can combine $U_{\calM_1}$ with $\id$ to obtain a $(1 + h\alpha_\calA, \ell + a_\calA + 4, h \alpha_{A}\varepsilon_1 + h\varepsilon_{\calA} + h \varepsilon_1\varepsilon_{\calA})$-block-encoding of $\id - \calM_1$, which we denote by $U_{\id - \calM_1}$, with gate complexity $O(T_{\calA} + \ell + \log^{2.5}(1/\varepsilon_1))$.

We can obtain a block-encoding of $(\id - \calM_1)^{-1}$ using Lemma~\ref{lem:pseudoinverse_BE}. It was shown in \cite[Theorem~6]{Krovi2023improvedquantum} that $\spec({\id - \calM_1}) \in [1/k,2]$. Lemma~\ref{lem:pseudoinverse_BE} then gives us a $(2k, \ell + a_{\calA} + 5, \varepsilon_2)$-block-encoding of $(\id - \calM_1)^{-1}$, which denote by $U_{(\id - \calM_1)^{-1}}$, provided $h \alpha_{A}\varepsilon_1 + h\varepsilon_{\calA} + h \varepsilon_1\varepsilon_{\calA} = o(\varepsilon_2/(k^2 \log(k/\varepsilon_2))$. Noting that $h \alpha_{\calA} \leq 1$, it is sufficient to ensure 
\begin{equation}\label{eq:cond_eps_inv_idminusM1}
\varepsilon_1 + \varepsilon_{\calA} + \varepsilon_1 \varepsilon_{\calA} = o(\varepsilon_2/(k^2 \log(k/\varepsilon_2)).    
\end{equation}
The corresponding gate complexity is 
\begin{equation}\label{eq:cost_inv_idminusM1}
O\left(k \log(k/\varepsilon_2) \Big(T_{\calA} + \ell + \log^{2.5}(1/\varepsilon_1)\Big) \right).
\end{equation}
Using Lemma~\ref{lem:prod_BEs}$(ii)$, we can obtain a $(2k\sqrt{k}, \ell + a_{\calA} + 6, \sqrt{k}\varepsilon_2)$-block-encoding for $\mathbf{D} := \calM_2(\id - \calM_1)^{-1}$, which we denote by $U_D$, by combining $U_{\calM_2}$ and $U_{(\id - \calM_1)^{-1}}$. The corresponding gate complexity is dominated by that of $U_{(\id - \calM_1)^{-1}}$ and is asymptotically same as that of Eq.~\eqref{eq:cost_inv_idminusM1}. 

Towards completing the block-encoding of $\calN$, we also need a block-encoding for $\mathbf{B}_2 := \sum_{i=0}^m \ket{i+1}\bra{i}$. We can use Lemma~\ref{lem:sparse-block-enc} again to obtain a block-encoding for $\mathbf{B}_2$ which has row and column sparsity of $1$. We then obtain a $(1, \ceil{\log m} + 3, \varepsilon_3)$-block-encoding for $\mathbf{B}_2$, which we denote by $U_{B_2}$. The corresponding gate complexity is $O(\log m, \log^{2.5}(1/\varepsilon_3))$. We can now combine $U_{D}$ with $U_{B_2}$ using Lemma~\ref{lem:tensor_prod_BEs} to obtain a $(\alpha_\calN, a_\calN, \varepsilon_{\calN})$-block-encoding for $\calN := \mathbf{B}_2 \otimes \mathbf{D}$, which we denote by $U_{\calN}$ with
$$
\alpha_{\calN} = 2k \sqrt{k}, \enspace a_\calN = \ell + a_{\calA} + \ceil{\log m} + 6, \enspace \varepsilon_\calN = 2k\sqrt{k}\varepsilon_3 + 2\sqrt{k}\varepsilon_2.
$$
Above, we used that $\varepsilon_3 < 1$ to simplify the expression of $\varepsilon_\calN$. The corresponding gate complexity is 
$$
O\left(k \log(k/\varepsilon_2) \Big(T_{\calA} + \ell + \log^{2.5}(1/\varepsilon_1)\Big) + \log(m) \log^{2.5}(1/\varepsilon_3) \right).
$$ 
Using Lemma~\ref{lem:sum_BEs} again to combine $U_\calN$ with $\id$, we obtain a $(\alpha_\calL, a_\calL, \varepsilon_\calL)$-block-encoding of $\calL = \id - \calN$, which we denote by $U_{\calL}$, with 
$$
\alpha_\calL = 2k\sqrt{k} + 1, \enspace a_\calL = \ell + a_{\calA} + \ceil{\log m} + 7, \enspace \varepsilon_\calL = 2k\sqrt{k}\varepsilon_3 + 2\sqrt{k} \varepsilon_2.
$$
The corresponding gate complexity, which we denote by $T_{\calL}$, is asymptotically same as that of $U_\calN$ and therefore
\begin{equation}\label{eq:cost_L}
T_\calL = O\left(k \log(k/\varepsilon_2) \Big(T_{\calA} + \ell + \log^{2.5}(1/\varepsilon_1)\Big) + \log(m) \log^{2.5}(1/\varepsilon_3) \right).    
\end{equation}

To obtain a block-encoding of $\calL^{-1}$, we use Lemma~\ref{lem:pseudoinverse_BE}. It was shown in \cite[Theorem~4]{Krovi2023improvedquantum} that the spectrum of $\calL$ satisfies
\begin{equation}\label{eq:spectrum_L}
\spec(\calL) \in [(2 e m \expnorm(\bm{\calA)})^{-1}, 1 + \sqrt{k}e]
\end{equation}
and the condition number of $\calL$, which we denote by $\kappa_{\calL}$, is thus bounded as
\begin{equation}\label{eq:bound_kappa_L}
    \kappa_{\calL} \leq 4 \sqrt{k} e^2 m \expnorm(\bm{\calA}).
\end{equation}
Using Lemma~\ref{lem:pseudoinverse_BE} on $U_\calL$, we can then obtain a $(\alpha', a', \varepsilon')$-block-encoding for $\calL^{-1}$, which we denote by $U_{\calL^{-1}}$ with
\begin{equation}\label{eq:parameters_invL_BE}
\alpha' = 4 em \expnorm(\bm{\calA}), \enspace a' = \alpha_\calL + 1 = \ell + a_{\calA} + \ceil{\log m} + 8,
\end{equation}
and $\varepsilon' \in (0,1)$ is a desired error parameter to be fixed later, provided 
\begin{equation}\label{eq:condition_eps_L}
\varepsilon_{\calL} = o\left(\frac{\varepsilon'}{\sqrt{k} m^2 \expnorm(\bm{\calA})^2 \log(m \expnorm(\bm{\calA})/\varepsilon') }\right)    
\end{equation}
The corresponding gate complexity is
\begin{equation}\label{eq:cost_inv_L}
O\left( T_\calL \sqrt{k} m \expnorm(\bm{\calA}) \log\Big(\sqrt{k} m \expnorm(\bm{\calA})/\varepsilon'\Big) \right),
\end{equation}
where $T_\calL$ is given by Eq.~\eqref{eq:cost_L}.

Let $\widetilde{\calL} = (\bra{0}^{a'} \otimes \id \otimes \bra{0} \otimes \id)U_{\calL^{-1}}(\ket{0}^{a'} \otimes \id \otimes \ket{0} \otimes \id)$, which is the action of implementing $U_{\calL^{-1}}$ followed by post-selecting on the register over the $a'$ qubits and the register over the Taylor index. We can then say that the application of $U_{\calL^{-1}}$ on the input state $\ket{\psi_{\mathrm{in}}}$ and before measuring is
$$
\ket{0}^{a'+1} \frac{\widetilde{L}}{\alpha'} \ket{\psi_{\mathrm{in}}} + \ket{\perp},
$$
where $\ket{\perp}$ is some unnormalized state orthogonal to every state with $\ket{0}^{a'}$ over the first register and $\ket{0}$ over the Taylor index. Let $\delta' \in (0,1)$ be an error parameter to be fixed shortly. Using Claim~\ref{claim:state_approx_BE}, we then produce the following state denoted $\ket{\widetilde{\phi}}$ after post-selecting for $\ket{0}^{a'+1}$
\begin{equation}
\ket{\widetilde{\phi}} = \frac{\widetilde{L}\ket{\psi_{\mathrm{in}}}}{\norm{\widetilde{L}\ket{\psi_{\mathrm{in}}}}},
\end{equation}
which is promised to satisfy
\begin{equation}\label{eq:tildephi_close_to_phi}
\norm{ \ket{\widetilde{\phi}} - \ket{\phi}} \leq \delta',
\end{equation}
where $\ket{\phi}$ is as defined in Eq.~\eqref{eq:def_true_hist_taylor_hist} and provided 
\begin{equation}\label{eq:constraint_varepsprime}
\varepsilon' \leq \frac{\delta'}{8 \sqrt{k} e^2 m \expnorm(\bm{\calA})},
\end{equation}
where we used $\spec(\calL^{-1}) \in [1/(2\sqrt{k}e), 2em\expnorm(\bm{\calA})]$ following Eq.~\eqref{eq:spectrum_L}. The probability of success of obtaining $\ket{\widetilde{\phi}}$ is then
\begin{equation}\label{eq:succ_prob_ODE_solver}
p_{\mathrm{succ}} \geq 1/(2^8 k e^4 m^2 \expnorm(\bm{\calA})^2).
\end{equation}
We can now observe that $\ket{\widetilde{\phi}}$ is close to the true history state $\ket{\psi}$ by combining Eq.~\eqref{eq:tildephi_close_to_phi} and Eq.~\eqref{eq:interim_phi_close_to_psi} as follows
\begin{equation}\label{eq:correctness_hist_state}
    \norm{\ket{\widetilde{\phi}} - \ket{\psi}} \leq \norm{\ket{\widetilde{\phi}} - \ket{\phi}} + \norm{\ket{\phi} - \ket{\psi}} \leq \delta' + \delta.
\end{equation}
Setting $\delta' = \delta = \varepsilon/2$ gives us the desired accuracy. This in turn requires setting the following error parameters from Eq.~\eqref{eq:constraint_varepsprime} and Eq.~\eqref{eq:condition_eps_L} introduced so far as
\begin{equation}\label{eq:choices_error_params}
\varepsilon' \leq O\left(\frac{\varepsilon}{T \norm{\bm{\calA}} \expnorm(\bm{\calA}) \sqrt{\log(\Omega)}}\right), \quad \varepsilon_\calL \leq O\left( \frac{\varepsilon}{T^3 \norm{\bm{\calA}}^3 \expnorm(\bm{\calA})^3 \log(\Omega) \log(T \norm{\bm{\calA}} \expnorm(\bm{\calA})/\varepsilon)} \right) 
\end{equation}
where we used
$$
m = \ceil{T \norm{\bm{\calA}}}, \quad \Omega = \frac{4 m e^3}{\varepsilon}\left(1 + \frac{T e^2 \norm{\vec{f}}}{\mu} \right), \quad k = O(\log \Omega).
$$
Noting that $\varepsilon_\calL = 2k\sqrt{k} \varepsilon_3 + 2 \sqrt{k} \varepsilon_2$ where recall $\varepsilon_3$ was the precision of $U_{B_2}$ and $\varepsilon_2$ was the precision of $U_{(\id-\calM_1)^{-1}}$, we set these errors as
\begin{equation}\label{eq:choices_more_error_params}
\varepsilon_3 = \frac{\varepsilon_{\calL}}{4k\sqrt{k}} \leq O\left( \frac{\varepsilon}{\kappa^3 \log^{2.5}(\Omega) \log(\kappa/\varepsilon)} \right), \quad \varepsilon_2 = \frac{\varepsilon_{\calL}}{4 \sqrt{k}} \leq O\left( \frac{\varepsilon}{\kappa^3 \log^{1.5}(\Omega) \log(\kappa/\varepsilon)} \right),
\end{equation}
where we defined $\kappa:= T \norm{\bm{\calA}} \expnorm(\bm{\calA)}$ for simplifying the above expressions. Further upstream, we set $\varepsilon_1$ and $\varepsilon_{\calA}$ so that it satisfies Eq.~\eqref{eq:cond_eps_inv_idminusM1}\footnote{We note that to satisfy Eq.~\eqref{eq:cond_eps_inv_idminusM1}, it suffices to set $(\varepsilon_1 + \varepsilon_\calA)^2 \leq O(\varepsilon_2^2/k^3)$ since $1/x < 1/\log(x), \forall x > 1$, and hence $\varepsilon_1 = \varepsilon_\calA \leq O(\varepsilon_2/k^{1.5})$} as
\begin{equation}\label{eq:choice_eps1}
\varepsilon_1 \leq O\left( \frac{\varepsilon}{\kappa^3 \log^{3}(\Omega) \log(\kappa/\varepsilon)} \right), \quad \varepsilon_{\calA} \leq O\left( \frac{\varepsilon}{\kappa^3 \log^{3}(\Omega) \log(\kappa/\varepsilon)} \right).
\end{equation}
Considering all the above choices of error parameters, the cost of the algorithm is due to the preparation of $\ket{\psi_{\mathrm{in}}}$ and due to the gate complexity of $U_{\calL^{-1}}$ from Eq.~\eqref{eq:cost_inv_L}, which is
\begin{align}\label{eq:cost_inv_L_final}
& O\left( T_\calL \sqrt{\log(\Omega)} \kappa \log\Big(\log \Omega \kappa^2 /\varepsilon\Big) + \polylog(T \norm{\bm{\calA}}) + T_x + T_f\right) \\
=\,& \widetilde{O}\Big( (T_\calA + 1) \kappa \cdot \poly(\log \kappa, \log \Omega, \log(1/\varepsilon)) + T_x + T_f \Big),
\end{align}
where we used Eq.~\eqref{eq:cost_L} for the cost of $T_\calL$ and denoted the cost of one call to $O_x,O_f$ as $T_x,T_f$. To boost the success probability of obtaining $\ket{\widetilde{\phi}}$ which is given by Eq.~\eqref{eq:succ_prob_ODE_solver} to $\Omega(1)$, we use amplitude amplification (Theorem~\ref{thm:amplitude_amplification}) and as a result the overall cost then is $O(1/\sqrt{p_{\mathrm{succ}}})$ times that of Eq.~\eqref{eq:cost_inv_L_final}:
\begin{equation}\label{eq:complexity_ODE_solver}
\widetilde{O}\Big( (T_\calA + 1) \kappa^2 \poly(\log \kappa, \log \Omega, \log(1/\varepsilon)) + (T_x + T_f) \kappa \log \Omega \Big),
\end{equation}
where recall that $\kappa = T \norm{\bm{\calA}} \expnorm(\bm{\calA})$. To summarize, the correctness of the algorithm was shown in Eq.~\eqref{eq:correctness_hist_state} and the complexity of the algorithm was found to be as in Eq.~\eqref{eq:complexity_ODE_solver}.
\end{proof}

We can now use the above result to prepare a quantum state proportional to the solution of the $\ode$ at a particular time and thereby prove Theorem~\ref{thm:ODE_solver_final_state}. The key difference is to use Claim~\ref{claim:diff_taylor_truth_approx_input}$(i)$ as the goal is to produce a state close to the final time solution and a padded matrix of $\calN$ (Eq.~\eqref{eq:ODE_solver_system}) as done in \cite{Krovi2023improvedquantum} to increase the success probability of $\ket{x(T)}$. We omit the details here as it follows similarly to the result above.

\subsection{Proofs for DAE of index two}\label{appsec:proof_dae_index2}
In this section, we provide the proof of Theorem~\ref{thm:dae_index2_ode}. The approach is very similar to that in \cite[Section~3.2]{schulz2003four}. Recall our definitions from Eq.~\ref{eq:hierarchy_matrix_pencil}:
\begin{align*}
    & \rlcM_0 = \rlcM, \, \rlcK_0 = \rlcK, \\
    & \rlcM_1 = \rlcM + \rlcK \rlcQ_0, \,  \rlcK_1 = \rlcK_0 \rlcP_0, \\
    & \rlcM_2 = \rlcM_1 + \rlcK_1 \rlcQ_1, \,  \rlcK_2 = \rlcK_1 \rlcP_1.
\end{align*}

We restate the theorem here for convenience.
\daeindextwo*

We require the following useful claim regarding the projectors.
\begin{claim}\label{claim:projectors_index2_props}
The following is true:
$$
\rlcP_0 \rlcP_1 = \rlcP_1 \rlcP_0 + \rlcQ_0 \rlcQ_1
$$
\end{claim}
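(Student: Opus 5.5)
The plan is a direct algebraic expansion: write both products of complementary projectors in terms of $\rlcQ_0$ and $\rlcQ_1$, and use the admissibility condition to remove one cross term. The only structural input is admissibility, $\rlcQ_1\rlcQ_0 = 0$. This is part of the definition of the admissible sequence in Section~\ref{subsec:prelims_index_DAE}. For the concrete construction $\rlcQ_1 = \widetilde{\rlcQ}_1(\rlcP_0\widetilde{\rlcQ}_1)^+\rlcP_0$, it is item $(i)$ of Claim~\ref{claim:admissible_proj_ind2}. I will also use $\rlcP_i = \id - \rlcQ_i$ for $i \in \{0,1\}$.

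First, expand the product in the order appearing on the right-hand side:
$$
\rlcP_1\rlcP_0 = (\id - \rlcQ_1)(\id - \rlcQ_0) = \id - \rlcQ_0 - \rlcQ_1 + \rlcQ_1\rlcQ_0 = \id - \rlcQ_0 - \rlcQ_1 ,
$$
where the last equality uses admissibility. Second, expand the product in the other order:
$$
\rlcP_0\rlcP_1 = (\id - \rlcQ_0)(\id - \rlcQ_1) = \id - \rlcQ_0 - \rlcQ_1 + \rlcQ_0\rlcQ_1 .
$$
No simplification is available here, because $\rlcQ_0\rlcQ_1$ need not vanish; the projectors are generally oblique and need not commute. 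Subtracting the first identity from the second gives $\rlcP_0\rlcP_1 - \rlcP_1\rlcP_0 = \rlcQ_0\rlcQ_1$, which is the claim.

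There is no real obstacle. The only care needed is to apply the admissibility relation in the correct order, $\rlcQ_1\rlcQ_0 = 0$ rather than $\rlcQ_0\rlcQ_1 = 0$, since the identity is precisely the statement that the surviving term $\rlcQ_0\rlcQ_1$ measures the failure of $\rlcP_0$ and $\rlcP_1$ to commute. In the subsequent proof of Theorem~\ref{thm:dae_index2_ode}, this identity will be used to move $\rlcP_0\rlcP_1$ past $\rlcM_2^{-1}$-type factors. It also yields the useful consequences $\rlcP_1\rlcP_0\rlcQ_1 = 0$ and $\rlcP_0\rlcP_1\rlcQ_0 = \rlcQ_0\rlcQ_1\rlcQ_0 = 0$, which can be checked by the same expansion.
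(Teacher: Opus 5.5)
Your argument is correct and is essentially the paper's. Both substitute $\rlcP_i = \id - \rlcQ_i$ and cancel the cross term using $\rlcQ_1\rlcQ_0 = 0$; the paper merely packages this as writing $\rlcP_1 = \rlcP_0\rlcP_1 + \rlcQ_0\rlcP_1 = \rlcP_1\rlcP_0 + \rlcP_1\rlcQ_0$.

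One correction to your closing aside, which is not needed for the claim: $\rlcP_1\rlcP_0\rlcQ_1 = (\id - \rlcQ_0 - \rlcQ_1)\rlcQ_1 = -\rlcQ_0\rlcQ_1$, which is generally nonzero. What actually vanishes, and what the paper uses in the $\dot{\vec{y}}$ derivation, is $\rlcP_0\rlcP_1\rlcP_0\rlcQ_1$. Your other consequence, $\rlcP_0\rlcP_1\rlcQ_0 = 0$, is fine.
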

\begin{proof}
    Note that $\rlcP_1$ can be expressed in two ways as follows:
    $$\rlcP_0 \rlcP_1 + \rlcQ_0 \rlcP_1 = \rlcP_1 = \rlcP_1 \rlcP_0 + \rlcP_1 \rlcQ_0.$$
    We thus have 
    $$\rlcP_0 \rlcP_1 + \rlcQ_0(\id-\rlcQ_1) = \rlcP_1 \rlcP_0 + (\id-\rlcQ_1)\rlcQ_0 \implies \rlcP_0 \rlcP_1 = \rlcP_1 \rlcP_0 + \rlcQ_0 \rlcQ_1,$$
    where we used that $\rlcQ_1 \rlcQ_0 = 0$ by construction. This gives us the desired result.
\end{proof}

We will use the following claim to bring the equations into their desired forms.
\begin{claim}\label{claim:dae_index2_props}
The following are true:
\begin{enumerate}[(a)]
    \item $\rlcM_2^{-1} \rlcM = \rlcP_1 \rlcP_0$
    \item $\rlcM_2^{-1} \rlcK = \rlcM_2^{-1} \rlcK_2 + \rlcQ_1 + \rlcQ_0$
\end{enumerate}
\end{claim}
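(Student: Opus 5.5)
The plan is to avoid computing $\rlcM_2^{-1}$ at all. For each item I will show that $\rlcM_2$ times the claimed right-hand side reproduces the corresponding left-hand matrix, and then multiply on the left by $\rlcM_2^{-1}$. This is legitimate because $\rlcM_2$ is non-singular for an index-$2$ $\dae$.

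The facts I will use are the following.
\begin{itemize}
\item $\rlcM\rlcQ_0=0$, hence $\rlcM\rlcP_0=\rlcM$, because $\rlcQ_0$ projects onto $\ker\rlcM$.
\item $\rlcM_1\rlcQ_1=0$, hence $\rlcM_1\rlcP_1=\rlcM_1$, because $\mathrm{Im}(\rlcQ_1)=\ker(\rlcM_1)$ by Claim~\ref{claim:admissible_proj_ind2}$(iii)$.
\item $\rlcQ_i^2=\rlcQ_i$ and $\rlcQ_i\rlcP_i=\rlcP_i\rlcQ_i=0$, from idempotence in Claim~\ref{claim:admissible_proj_ind2}$(ii)$.
\item The admissibility condition $\rlcQ_1\rlcQ_0=0$ from Claim~\ref{claim:admissible_proj_ind2}$(i)$.
\item From the recursion, $\rlcM_2=\rlcM_1+\rlcK\rlcP_0\rlcQ_1$ and $\rlcK_2=\rlcK\rlcP_0\rlcP_1$.
\end{itemize}

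For $(a)$, I first compute
\[
\rlcM_2\rlcP_1=\rlcM_1\rlcP_1+\rlcK\rlcP_0\rlcQ_1\rlcP_1=\rlcM_1 ,
\]
using $\rlcQ_1\rlcP_1=0$ and $\rlcM_1\rlcP_1=\rlcM_1$. I then compute
\[
\rlcM_1\rlcP_0=\rlcM\rlcP_0+\rlcK\rlcQ_0\rlcP_0=\rlcM .
\]
Together these give $\rlcM_2\rlcP_1\rlcP_0=\rlcM$, and left-multiplying by $\rlcM_2^{-1}$ yields $\rlcM_2^{-1}\rlcM=\rlcP_1\rlcP_0$. This is the same pattern as the proof of Claim~\ref{claim:P_Q_dae}, applied one level higher.

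For $(b)$, I will telescope $\rlcK$ with the identities $\id=\rlcP_0+\rlcQ_0$ and $\id=\rlcP_1+\rlcQ_1$:
\[
\rlcK=\rlcK\rlcP_0+\rlcK\rlcQ_0=\rlcK\rlcP_0\rlcP_1+\rlcK\rlcP_0\rlcQ_1+\rlcK\rlcQ_0=\rlcK_2+\rlcK_1\rlcQ_1+\rlcK\rlcQ_0 .
\]
It then remains to show $\rlcM_2\rlcQ_1=\rlcK_1\rlcQ_1$ and $\rlcM_2\rlcQ_0=\rlcK\rlcQ_0$.

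The first identity follows from
\[
\rlcM_2\rlcQ_1=\rlcM_1\rlcQ_1+\rlcK_1\rlcQ_1^2=\rlcK_1\rlcQ_1 .
\]
The second is the only place admissibility is needed:
\[
\rlcM_2\rlcQ_0=\rlcM_1\rlcQ_0+\rlcK\rlcP_0\rlcQ_1\rlcQ_0=\rlcM_1\rlcQ_0=\rlcM\rlcQ_0+\rlcK\rlcQ_0^2=\rlcK\rlcQ_0 .
\]
Multiplying the telescoped decomposition of $\rlcK$ by $\rlcM_2^{-1}$ and substituting these two identities gives
\[
\rlcM_2^{-1}\rlcK=\rlcM_2^{-1}\rlcK_2+\rlcQ_1+\rlcQ_0 .
\]

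There is no real obstacle in this argument. The only point needing care is to invoke $\rlcQ_1\rlcQ_0=0$ and $\mathrm{Im}(\rlcQ_1)=\ker\rlcM_1$ from Claim~\ref{claim:admissible_proj_ind2}, rather than assuming $\rlcQ_1$ is an orthogonal projector. It is not orthogonal in general, so orthogonality-based shortcuts such as $\rlcQ_0\rlcQ_1=0$ are unavailable and must not be used.
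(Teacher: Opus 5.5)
Your proof is correct and follows the paper's argument essentially step for step. For (a) you verify $\rlcM_2\rlcP_1\rlcP_0=\rlcM$ using $\rlcM_1\rlcQ_1=0$ and $\rlcM\rlcQ_0=0$; for (b) you telescope $\rlcK=\rlcK_2+\rlcK_1\rlcQ_1+\rlcK\rlcQ_0$ and use $\rlcM_2\rlcQ_1=\rlcK_1\rlcQ_1$ together with $\rlcM_2\rlcQ_0=\rlcK\rlcQ_0$, where the latter relies on admissibility $\rlcQ_1\rlcQ_0=0$, exactly as the paper does.
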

\begin{proof}
$(a)$ We note that 
\begin{align*}
    &\rlcM_2 \rlcP_1 \rlcP_0 = (\rlcM_1 + \rlcK_1 \rlcQ_1) \rlcP_1 \rlcP_0 = \rlcM_1 \rlcP_1 \rlcP_0 + \rlcK_1 \rlcQ_1 \rlcP_1 \rlcP_0 \nonumber\\
    &= \rlcM_1 (\id-\rlcQ_1) \rlcP_0 = \rlcM_1 \rlcP_0 = (\rlcM+\rlcK\rlcQ_0)\rlcP_0 = \rlcM\,,
\end{align*}
where we used that $\rlcQ_1 \rlcP_1 = 0$ in the second equality, expressed $\rlcP_1 = \id-\rlcQ_1$ in the third equality, used $\rlcM_1 \rlcQ_1 = 0$ in the fourth equality by definition of $\rlcQ_1$ as the projector onto $\ker(\rlcM_1)$, expanded $\rlcM_1$ in the fifth equality, and finally used $\rlcQ_0 \rlcP_0 = 0$. This proves statement $(a)$.

$(c)$ Noting that $\rlcK = \rlcK \rlcP_0 + \rlcK\rlcQ_0 = \rlcK \rlcP_0 \rlcP_1 + \rlcK \rlcP_0 \rlcQ_1 + \rlcK \rlcQ_0 = \rlcK_2 + \rlcK_1 \rlcQ_1 + \rlcK \rlcQ_0$, we can write
\begin{equation}\label{eq:invM2_K}
    \rlcM_2^{-1} \rlcK = \rlcM_2^{-1} \rlcK_2 + \rlcM_2^{-1} \rlcK_1 \rlcQ_1 + \rlcM_2^{-1} \rlcK \rlcQ_0.    
\end{equation}
To simplify $\rlcM_2^{-1} \rlcK_1 \rlcQ_1$ , we observe
\begin{equation}\label{eq:M2_Q0}
    \rlcM_2 \rlcQ_0 = (\rlcM_1 + \rlcK_1 \rlcQ_1) \rlcQ_0 = \rlcM_1 \rlcQ_0 = (\rlcM + \rlcK\rlcQ_0) \rlcQ_0 = \rlcK\rlcQ_0 \implies \rlcM_2^{-1} \rlcK \rlcQ_0 = \rlcQ_0,
\end{equation}
where we used $\rlcQ_1 \rlcQ_0 = 0$ in the second equality, expanded $\rlcM_1$ in the third equality and used $\rlcM \rlcQ_0 = 0$ in the fourth equality by definition of $\rlcQ_0$ as the projector onto $\ker(\rlcM)$. To simplify $\rlcM_2^{-1} \rlcK \rlcQ_0$, we observe
\begin{equation}\label{eq:M2_Q1}
    \rlcM_2 \rlcQ_1 = (\rlcM_1 + \rlcK_1 \rlcQ_1) \rlcQ_1 = \rlcK_1 \rlcQ_1 \implies \rlcM_2^{-1} \rlcK_1 \rlcQ_1 = \rlcQ_1,
\end{equation}
where we expanded $\rlcM_2$ in the first equality and used used $\rlcM_1 \rlcQ_1 = 0$ in the second equality by definition of $\rlcQ_1$ as the projector onto $\ker(\rlcM_1)$. Substituting the implications from Eqs.~\eqref{eq:M2_Q0}--\eqref{eq:M2_Q1} into Eq.~\eqref{eq:invM2_K} gives us the desired statement.
\end{proof}

\begin{proof}
    We can write the solution $\vec{x}(t)$ as (where we have dropped the time dependence for brevity)
    \begin{equation}\label{eq:decomposition_x_index2}
        \vec{x} = \underbrace{\rlcP_0 \rlcP_1 \vec{x}}_{:=\vec{y}} + \underbrace{\rlcQ_0 \rlcP_1 \vec{x}}_{:=\vec{z}_1} + \underbrace{\rlcQ_1 \vec{x}}_{:=\vec{z}_2},
    \end{equation}
    where we used $\id = \rlcP_1 + \rlcQ_1 = \rlcP_0 \rlcP_1 + \rlcQ_0 \rlcP_1 + \rlcQ_1$ and the definitions of $\vec{y},\vec{z}_1,\vec{z}_2$ are inline. We will now derive the explicit ODE dictating the time evolution of $\vec{y}$ starting from the implicit DAE in hand:
    \begin{align}
        \rlcM \dot{\vec{x}} + \rlcK\vec{x} &= \vec{f} \\
        \rlcM_2^{-1} \rlcM \dot{\vec{x}} + \rlcM_2^{-1} \rlcK \vec{x} &= \rlcM_2^{-1} \vec{f} \\
        \rlcP_1 \rlcP_0 \dot{\vec{x}} + \rlcM_2^{-1} \rlcK_2 \vec{x} + \rlcQ_1 \vec{x} + \rlcQ_0 \vec{x} &= \rlcM_2^{-1} \vec{f}
        \label{eq:main_interim_index2}
    \end{align}
    where we multiplied by $\rlcM_2^{-1}$ in the first line, used Claim~\ref{claim:dae_index2_props}$(a)$ to simplify $\rlcM_2^{-1}\rlcM$ and Claim~\ref{claim:dae_index2_props}$(c)$ to expand $\rlcM_2^{-1}\rlcK$ in the third line. To obtain equations for $\vec{y},\vec{z}_1,\vec{z}_2$, we will multiply Eq.~\eqref{eq:main_interim_index2} by $\rlcP_0 \rlcP_1$, $\rlcQ_0 \rlcP_1$ and $\rlcQ_1$ respectively. Firstly, multiplying Eq.~\eqref{eq:main_interim_index2} by $\rlcP_0 \rlcP_1$ gives us
    \begin{align}\label{eq:interim_eq_ode_mna_dae_index2}
        \rlcP_0 \rlcP_1 \rlcP_0 \dot{\vec{x}} + \rlcP_0 \rlcP_1 \rlcM_2^{-1} \rlcK_2 \vec{x} + \rlcP_0 \rlcP_1 \rlcQ_0 \vec{x} &= \rlcP_0 \rlcP_1 \rlcM_2^{-1} \vec{f} \\
        \rlcP_0 \rlcP_1 \rlcP_0 \dot{\vec{x}} + \rlcP_0 \rlcP_1 \rlcM_2^{-1} \rlcK \vec{y} &= \rlcP_0 \rlcP_1 \rlcM_2^{-1} \vec{f},
    \end{align}
    where we used $\rlcP_1 \rlcQ_1 = 0$ in the first line and that $\rlcK_2 \vec{x} = \rlcK \rlcP_0 \rlcP_1 \vec{x} = \rlcK \vec{y}$ to simplify the second term along with the fact that $\rlcP_0 \rlcP_1 \rlcQ_0 = \rlcP_0 (\id-\rlcQ_1) \rlcQ_0 = \rlcP_0 \rlcQ_0 - \rlcP_0 \rlcQ_1 \rlcQ_0 = 0$. To prove the equation regarding the evolution of $\vec{y}$ as stated in the theorem, we now analyze the term $\rlcP_0 \rlcP_1 \rlcP_0 \dot{\vec{x}}$ in Eq.~\eqref{eq:interim_eq_ode_mna_dae_index2} upon substituting $\vec{x} = \rlcP_0 \rlcP_1 \vec{x} + \rlcQ_0 \rlcP_1 \vec{x} + \rlcQ_1 \vec{x}$:
    \begin{align}
        \rlcP_0 \rlcP_1 \rlcP_0 \dot{\vec{x}} &= \rlcP_0 \rlcP_1 \rlcP_0 \rlcP_1 \dot{\vec{x}} + \rlcP_0 \rlcP_1 \rlcP_0 \rlcQ_0 \rlcP_1 \dot{\vec{x}} + \rlcP_0 \rlcP_1 \rlcP_0 \rlcQ_1 \dot{\vec{x}} \\
        &= \dot{\vec{y}} + \rlcP_0 \rlcP_1 \rlcP_0 \rlcQ_1 \dot{\vec{x}} \\
        &= \dot{\vec{y}} + \rlcP_0 (\rlcP_0 \rlcP_1 - \rlcQ_0 \rlcQ_1) \rlcQ_1 \dot{\vec{x}} \\
        &= \dot{\vec{y}} + \rlcP_0 \rlcP_1 \rlcQ_1 \dot{\vec{x}} + \rlcP_0 \rlcQ_0 \rlcQ_1 \dot{\vec{x}} \\
        &= \dot{\vec{y}},
    \end{align}
    where we used $(\rlcP_0 \rlcP_1) (\rlcP_0 \rlcP_1) \vec{x} = (\rlcP_0 \rlcP_1) \vec{y} = \vec{y}$ from the definition of $\vec{y}$ and $\rlcP_0 \rlcQ_0=0$ in the second line, applied Claim~\ref{claim:projectors_index2_props} in the third line, and used $\rlcP_1 \rlcQ_1 = 0$ along with $\rlcP_0 \rlcQ_0 = 0$ in the fifth line. Substituting back in Eq.~\eqref{eq:interim_eq_ode_mna_dae_index2} gives us the desired ODE governing the evolution of $\vec{y}$:
    \begin{equation}
        \dot{\vec{y}} =  - \rlcP_0 \rlcP_1 \rlcM_2^{-1} \rlcK \vec{y} + \rlcP_0 \rlcP_1 \rlcM_2^{-1} \vec{f},
    \end{equation}
    To obtain the equation governing the evolution of $\vec{z}_2$, we multiply Eq.~\eqref{eq:main_interim_index2} by $\rlcQ_1$ as follows:
    \begin{align}
        \rlcQ_1 \rlcM_2^{-1} \rlcK_2 \vec{x} + \rlcQ_1 \vec{x} &= \rlcQ_1 \rlcM_2^{-1} \vec{f} \\
        \rlcQ_1 \rlcM_2^{-1} \rlcK \vec{y} + \vec{z}_2 &= \rlcQ_1 \rlcM_2^{-1} \vec{f} \\
        \vec{z}_2 &= -\rlcQ_1 \rlcM_2^{-1} \rlcK \vec{y} + \rlcQ_1 \rlcM_2^{-1} \vec{f}
    \end{align}
    where we used $\rlcQ_1 \rlcQ_0 = 0$ by construction and $\rlcQ_1 \rlcP_1 = 0$ by definition of the projectors in the first line, used the fact that $\rlcK_2 \vec{x} = \rlcK \rlcP_0 \rlcP_1 \vec{x} = \rlcK\vec{y}$ by the definition of $\vec{y}$ in the second line, and $\rlcQ_1 \vec{x} = \vec{z}_2$ by definition again.

    Finally, to obtain the equation governing the evolution of $\vec{z}_1$, we multiply Eq.~\eqref{eq:main_interim_index2} by $\rlcQ_0 \rlcP_1$ as follows:
    \begin{align}\label{eq:interim_eq_ode_mna_dae_index2}
        \rlcQ_0 \rlcP_1 \rlcP_0 \dot{\vec{x}} + \rlcQ_0 \rlcP_1 \rlcM_2^{-1} \rlcK \vec{y} + \rlcQ_0 \rlcP_1 \rlcQ_0 \vec{x} &= \rlcQ_0 \rlcP_1 \rlcM_2^{-1} \vec{f} \\
        \rlcQ_0 (\rlcP_0 \rlcP_1 - \rlcQ_0 \rlcQ_1) \dot{\vec{x}} + \rlcQ_0 \rlcP_1 \rlcM_2^{-1} \rlcK \vec{y} + \rlcQ_0 (\id-\rlcQ_1) \rlcQ_0 \vec{x} &= \rlcQ_0 \rlcP_1 \rlcM_2^{-1} \vec{f} \\
        - \rlcQ_0 \dot{\vec{z}}_2 + \rlcQ_0 \rlcP_1 \rlcM_2^{-1} \rlcK \vec{y} + \rlcQ_0 \vec{x} &= \rlcQ_0 \rlcP_1 \rlcM_2^{-1} \vec{f} \\
        - \rlcQ_0 \dot{\vec{z}}_2 + \rlcQ_0 \rlcP_1 \rlcM_2^{-1} \rlcK \vec{y} + \rlcQ_0 (\rlcP_0 \rlcP_1 \vec{x} + \rlcQ_0 \rlcP_1 \vec{x} + \rlcQ_1 \vec{x}) &= \rlcQ_0 \rlcP_1 \rlcM_2^{-1} \vec{f} \\
        - \rlcQ_0 \dot{\vec{z}}_2 + \rlcQ_0 \rlcP_1 \rlcM_2^{-1} \rlcK \vec{y} + \vec{z}_1 + \rlcQ_0 \vec{z}_2 &= \rlcQ_0 \rlcP_1 \rlcM_2^{-1} \vec{f},
    \end{align}
    where we used $\rlcP_1 \rlcQ_1 = 0$ by definition of the projectors in the first line and the fact that $\rlcK_2 \vec{x} = \rlcK \rlcP_0 \rlcP_1 \vec{x} = \rlcK\vec{y}$ by the definition of $\vec{y}$. We used Claim~\ref{claim:projectors_index2_props} to expand $\rlcP_1 \rlcP_0$ in the second line, $\rlcQ_1 \dot{\vec{x}} = \dot{\vec{z}}_2$ by definition of $\vec{z}_2$ in the third line, and Eq.~\eqref{eq:decomposition_x_index2} to expand $\vec{x}$ in the fourth line. Reorganizing the last equation gives us the following, which is the desired result:
    $$
    \vec{z}_1 = \rlcQ_0 \dot{\vec{z}}_2 - \rlcQ_0 \rlcP_1 \rlcM_2^{-1} \rlcK \vec{y} - \rlcQ_0 \vec{z}_2 + \rlcQ_0 \rlcP_1 \rlcM_2^{-1} \vec{f}.
    $$

    To obtain an equation for $\vec{z}_1(t)$ that does not depend on time-derivatives of any variable, we note that the equation for $\vec{z}_1(t)$ can be expressed just in terms of $\vec{y}$ by using the following relation for $\dot{\vec{z}}_2$
\begin{align}
    \dot{\vec{z}}_2 &= -\rlcQ_1 \rlcM_2^{-1} \rlcK \dot{\vec{y}} \nonumber \\    
    &= -\rlcQ_1 \rlcM_2^{-1} \rlcK \left(- \rlcP_0 \rlcP_1 \rlcM_2^{-1} \rlcK \vec{y} + \rlcP_0 \rlcP_1 \rlcM_2^{-1} \vec{f} \right) \nonumber \\    
    &= \rlcQ_1 \rlcM_2^{-1} \rlcK_2 \rlcM_2^{-1} \rlcK \vec{y} - \rlcQ_1 \rlcM_2^{-1} \rlcK_2 \rlcM_2^{-1} \vec{f} \nonumber \\
    &= \rlcQ_1 (\rlcM_2^{-1} \rlcK_2)^2 \vec{y} - \rlcQ_1 \rlcM_2^{-1} \rlcK_2 \rlcM_2^{-1} \vec{f},
\end{align}
where we used $\dot{\vec{f}}=0$ as it is assumed to be time-independent, used Theorem~\ref{thm:dae_index2_ode} for $\dot{\vec{y}}$ in the second line along with $\rlcK_2 = \rlcK \rlcP_0 \rlcP_1$ and finally noted that $\rlcP_0 \rlcP_1 \vec{y} = \vec{y}$. Substituting the above expression of $\vec{\dot{z}}_2$ into the expression of $\vec{z}_1(t)$ from Theorem~\ref{thm:dae_index2_ode}, we obtain
\begin{align}
    \vec{z}_1 &= \rlcQ_0 \left( \rlcQ_1 (\rlcM_2^{-1} \rlcK_2)^2 \vec{y} - \rlcQ_1 \rlcM_2^{-1} \rlcK_2 \rlcM_2^{-1} \vec{f} \right)  - \rlcQ_0 \rlcP_1 \rlcM_2^{-1} \rlcK \vec{y} \nonumber \\ \nonumber
    & \qquad - \rlcQ_0 \left(-\rlcQ_1 \rlcM_2^{-1} \rlcK \vec{y} + \rlcQ_1 \rlcM_2^{-1} \vec{f} \right) + \rlcQ_0 \rlcP_1 \rlcM_2^{-1} \vec{f} \\
    &= \rlcQ_0 \rlcQ_1 \left(\rlcM_2^{-1}\rlcK_2\right)^2 \vec{y} + \rlcQ_0 (2 \rlcQ_1 - \id) \rlcM_2^{-1} \rlcK\vec{y} - \rlcQ_0 \rlcQ_1 \rlcM_2^{-1}\rlcK_2 \rlcM_2^{-1} \vec{f} + \rlcQ_0 (\id - 2\rlcQ_1) \rlcM_2^{-1} \vec{f}
\end{align}
That concludes the proof of the set of equations corresponding to $\dae$ of index two.
\end{proof}

\end{document}